\newif\iffull%
\newif\ifwithappendix\withappendixtrue
\let\oldsfdefault\sfdefault
\renewcommand{\sfdefault}{\oldsfdefault}
\newcommand{\subparagraph}[1]{{\em #1}\hspace{0.5em plus 0.25em minus 0.25em}}
\setlist[itemize]{label=-}
\setlist[enumerate]{label=\small(\roman*), labelwidth=3pt}
\newlist{asparaenum}{enumerate}{3}
\setlist[asparaenum]{label=(\roman*), align=left, leftmargin=0pt, labelindent=!, listparindent=\parindent, labelwidth=0pt, itemindent=!}
\newlist{inparaenum}{enumerate*}{3}
\setlist[inparaenum]{label=(\roman*)}
\newlist{asparaitem}{enumerate}{3}
\setlist[asparaitem]{label=-, align=left, leftmargin=0pt, labelindent=!, listparindent=\parindent, labelwidth=0pt, itemindent=!}
\newlist{inparaitem}{itemize*}{3}
\theoremstyle{definition}
\newtheorem{definition}{Definition}
\newtheorem*{definition*}{Definition}
\newtheorem{proposition}{Proposition}
\newtheorem{corollary}{Corollary}
\newtheorem{lemma}{Lemma}
\newtheorem*{lemma*}{Lemma}
\newtheorem*{decisionproblem*}{Decision Problem}
\theoremstyle{plain}
\newtheorem{theorem}{Theorem}
\newcommand{\customthmno}{}
\newtheorem*{customthm}{Theorem \customthmno}
\newenvironment{customtheorem}[1][]{\renewcommand{\customthmno}{#1}\begin{customthm}}{\end{customthm}}
\newcommand{\customlemno}{}
\newtheorem*{customlem}{Lemma\customlemno}
\newenvironment{customlemma}[1][]{\renewcommand{\customlemno}{ #1}\begin{customlem}}{\end{customlem}}
\newcommand{\customcorno}{}
\newtheorem*{customcor}{Corollary \customcorno}
\newenvironment{customcorollary}[1][]{\renewcommand{\customcorno}{ #1}\begin{customcor}}{\end{customcor}}
\newcommand{\custompropno}{}
\newtheorem*{customprop}{Proposition \custompropno}
\newenvironment{customproposition}[1][]{\renewcommand{\custompropno}{ #1}\begin{customprop}}{\end{customprop}}
\newtheorem*{claim*}{Claim}
\theoremstyle{remark}
\newtheorem*{remark*}{Remark}
\newtheorem{example}{Example}
\newenvironment{proof*}
    {\begin{proof}[Proof \textup(Sketch\textup)]}
    {\end{proof}}
\newcounter{custlabelcount}
\newcommand{\customref}[2][\empty]{%
\ifx\empty#1%
\hyperref[#1]{#1}%
\else%
\hyperref[#2]{#1}%
\fi%
}
\colorlet{defaultcol}{black!90!yellow}
\newif\ifhighlight%
\newcommand{\leqMinner}{\ensuremath{\leq_{\scriptscriptstyle \Minner}}}
\newcommand{\greaterMinner}{\ensuremath{>_{\scriptscriptstyle \Minner}}}
\newcommand{\leqconfig}[1][]{\ensuremath{\leq_{\scriptscriptstyle \mathit{Config}#1}}}
\newcommand{\lessconfig}[1][]{\ensuremath{<_{\scriptscriptstyle \mathit{Config}#1}}}
\newcommand{\leqACPS}[1][]{\ensuremath{\leq_{\scriptscriptstyle \mathit{ACPS}#1}}}
\newcommand{\leqAPCPS}[1][]{\ensuremath{\leq_{\scriptscriptstyle \mathit{APCPS}#1}}}
\newcommand{\ConfigSimple}{\ensuremath{\Configuration_{\S}}}
\newcommand{\BTS}[3]{\ensuremath{\B_{#3}[#1]}}
\newcommand{\nsimple}{\ensuremath{n_{\S}}}
\newcommand{\ninner}{\ensuremath{n_{\I}}}
\newcommand{\ncomplex}{\ensuremath{n_{\C}}}
\newcommand{\enumelem}[2]{\ensuremath{#1_{\paren{#2}}}}
\newcommand{\Expspace}[0]{{\scshape Expspace}}
\newcommand{\Exptime}[0]{{\scshape Exptime}}
\newcommand{\Elementary}[0]{{\scshape El\-e\-men\-ta\-ry}}
\newcommand{\AltExpspace}[0]{{\scshape AltExpspace}}
\newcommand{\Tower}[0]{{\scshape Tower}}
\newcommand{\Ack}[0]{{\scshape Ack}}
\newcommand{\HAck}[0]{{\scshape HAck}}
\newcommand{\FastGrowing}[1]{\ensuremath{\mathbf{F}_{#1}}}
\newcommand{\FOmegaUpThree}{\ensuremath{\FastGrowing{\omega^{\omega^{\omega}}}}}
\newcommand{\IofG}[1][\GofP]{I\!\paren{#1}}
\newcommand{\GofP}[1][\calP]{\calG\!\paren{#1}}
\newcommand{\SigmaofP}[1][\calP]{\Sigma\paren{#1}}
\newcommand{\NonTofP}[1][\calP]{\NonT\!\paren{#1}}
\newcommand{\RofP}[1][\calP]{\calR\paren{#1}}
\newcommand{\NonT}{\mathcal{N}}
\newcommand{\PPL}{\mathcal{L}}
\newcommand{\Rules}{\ensuremath{\mathcal{R}}}
\newcommand{\eqvI}[1][I]{\ensuremath{\simeq_{#1}}}
\newcommand{\ComSigma}[0]{{\ensuremath{\Sigma^{\text{com}}}}}
\newcommand{\NComSigma}[0]{{\ensuremath{\Sigma^{\neg\text{com}}}}}
\newcommand{\ComN}[0]{{\ensuremath{\NonT^{\text{com}}}}}
\newcommand{\NComN}[0]{{\ensuremath{\NonT^{\neg\text{com}}}}}
\newcommand\calG{\mathcal{G}}
\newcommand\calV{\mathcal{V}}
\newcommand\calM{\mathcal{M}}
\newcommand{\toSF}[0]{\ensuremath{\to_{\text{seq}}}} 
\newcommand{\toCF}[0]{\ensuremath{\to_{\text{con}}}} 
\newcommand{\toCM}[0]{\ensuremath{\to_{\text{con}'}}} 
\newcommand{\snd}[2]{{{#1} \mathbin{\text{!}} {#2}}}
\newcommand{\rec}[2]{{{#1} \mathbin{\text{?}} {#2}}}
\newcommand{\spn}[1]{\ensuremath{\nu #1}}
\newcommand{\cproc}[1]{{#1}}
\newcommand{\cchan}[2]{\mcchan{\mset{#1}}{#2}} 
\newcommand{\mcchan}[2]{{#1}^{#2}} 
\newcommand{\APCPSConfig}{\ensuremath{%
    \mathit{Config}^{\mathit{APCPS}}%
    }}
\newcommand{\Config}{\ensuremath{\mathit{Config}}}
\newcommand{\Chan}{\ensuremath{\mathit{Chan}}}
\newcommand{\Channels}{\ensuremath{\mathit{Chans}}}
\newcommand{\TermCache}[0]{\ensuremath{\mathit{TermCache}}}
\newcommand{\MixedCache}[0]{\ensuremath{\mathit{MixedCache}}}
\newcommand{\MMsg}{\ensuremath{\mathit{Msg}}}
\newcommand{\ChanPar}{\ensuremath{\mathop{\,\lhd \,}}}
\newcommand{\Parallel}{\ensuremath{\mathop{\,\big|\big| \,}}}
\newcommand{\Pred}[0]{\ensuremath{\mathit{Pred}}}
\newcommand{\calP}[0]{\ensuremath{\mathcal{P}}}
\newcommand{\calF}[0]{\ensuremath{\mathcal{F}}}
\newcommand{\length}[3][]{\ensuremath{\rho^{#1}_{#2}\!(#3)}}
\newcommand{\dist}[0]{\ensuremath{d}}
\newcommand{\B}[0]{\ensuremath{\mathbb{B}}}
\newcommand{\budget}[0]{\text{budget}}
\newcommand{\activ}[0]{\text{active}}
\newcommand{\mode}[0]{\text{mode}}
\newcommand{\spawns}[0]{\text{spawns}}
\newcommand{\action}[0]{}
\newcommand{\eval}[0]{\text{ev}}
\newcommand{\shaperegion}[0]{\text{csh}}
\newcommand{\processctrl}[0]{\text{cpr}}
\newcommand{\channel}[0]{\text{chn}}
\newcommand{\ncom}[0]{\neg\text{com}}
\newcommand{\notation@place}[3][]{p^{#2\raisebox{2pt}{\ensuremath{\scriptstyle #1}}}_{#3}}
\newcommand{\budgetplace}[1][]{\notation@place[#1]{}{\budget}}
\newcommand{\evalplace}[2][]{\notation@place[#1]{\eval}{#2}}
\newcommand{\multisetplace}[3][]{\notation@place[#1]{\action}{#2,#3}}
\newcommand{\activeplace}[3][]{\notation@place[#1]{[#2]}{#3}}
\newcommand{\activeshapeplace}[2][]{\notation@place[#1]{[#2]}{\shaperegion}}
\newcommand{\activencomplace}[3][]{\notation@place[#1]{#2}{\ncom(#3)}}
\newcommand{\passiveplace}[3][]{\notation@place[#1]{[#2]}{#3}}
\newcommand{\passiveshapeplace}[2][]{\notation@place[#1]{[#2]}{\shaperegion}}
\newcommand{\passivencomplace}[3][]{\notation@place[#1]{[#2]}{\ncom(#3)}}
\newcommand{\channelplace}[2][]{\notation@place[#1]{\channel}{#2}}
\newcommand{\regionctrlplace}[2][]{\notation@place[#1]{\shaperegion}{#2}}
\newcommand{\processctrlplace}[2][]{\notation@place[#1]{\processctrl}{#2}}
\newcommand{\ncomplace}[3][]{\notation@place[#1]{[#2]}{\ncom(#3)}}
\newcommand{\activemodeplace}[1][]{\notation@place[#1]{\mode}{\activ}}
\newcommand{\evalmodeplace}[1][]{\notation@place[#1]{\mode}{\eval}}
\newcommand{\spawnsvarplace}[1][]{\notation@place[#1]{\spawns}{}}
\newcommand{\Lang}{%
	\texorpdfstring%
        {${{\ifx\f@series\my@test@bf\boldsymbol{\lambda}\else\lambda\fi}{\textsc{\scriptsize Actor}}}$}%
		{Lambda-Actor}%
}
\newcommand{\ecsrule}[4][\pid]{#1\colon #2\xrightarrow{#3}#4}
\def\eTau#1[#2-->#3]{\ecsrule[\ifx!#1!\pid\else #1\fi]{#2}{\tau}{#3}}
\def\eSend#1[#2--#3!#4-->#5]{\ecsrule[\ifx!#1!\pid\else #1\fi]{#2}{#3\mathbin{!}#4}{#5}}
\def\eRec#1[#2--?#3-->#4]{\ecsrule[\ifx!#1!\pid\else #1\fi]{#2}{?#3}{#4}}
\def\eSpawn#1[#2--v#3.#4-->#5]{\ecsrule[\ifx!#1!\pid\else #1\fi]{#2}{\nu#3.#4}{#5}}
\newcommand{\pid}{\iota}
\newcommand{\Procs}[1][\dom]{#1{\mathit{Procs}}}
\newcommand{\Msg}[1][\dom]{\ensuremath{#1{\mathit{Msg}}}}
\newcommand{\cfaTo}{\leadsto}
\gdef\abs#1{\ifx\to#1\cfaTo\else\widehat#1\fi}
\newlength{\colwd}
\newcounter{RULE}
\gdef\@redefbar{%
    \catcode`\|=\active%
    \def|{\hspace{3pt}\mid\hspace{3pt}}
}
\colorlet{keyword}{blue!50!black}
\colorlet{atom}{red!50!black}
\colorlet{module}{green!30!black}
\colorlet{comment}{black!70}
\colorlet{coderules}{black!50}
\colorlet{lineno}{black!50}
\newcommand{\srcsize}{\@setfontsize{\srcsize}{7.625pt}{7.625pt}}
\newcommand{\srcinsize}{\@setfontsize{\srcsize}{8.5pt}{8.5pt}}
\lstdefinelanguage{CoreErlang}
  {morekeywords={fun,and,case,letrec,let,in,catch,div,end,exit,export,halt,%
      if,import,link,make_ref,module,monitor,of,or,receive,self,send,spawn,throw,to,%
      unlink%
      },%
   morekeywords={[2]error,false,nil,ok,true,undefined,pid},%
   otherkeywords={!},%
   morecomment=[l]\%,%
   morestring=[b]"%
  }[keywords,comments,strings]%
\lstdefinestyle{sans}{
    xleftmargin=20pt,
    tabsize=4,
    showstringspaces=false,
    columns=[l]fullflexible,
    breaklines,
    fontadjust,
    numbers=left,
    numberstyle={\tiny\color{lineno}},
    literate={
        {->}{{$\hspace{-1pt}\rightarrow\hspace{1ex}$}}2
        {...}{{$\ldots$}}3
    },
    basicstyle={\sffamily\srcsize}, 
    keywordstyle={\color{keyword}\bf\srcsize},
    keywordstyle={[2]\color{atom}\srcsize},
    commentstyle={\scriptsize\color{comment}},
    emphstyle={\color{atom}\srcsize},
    emphstyle={[2]\color{module}},
    moredelim=[is][emphstyle]{\#}{\#},
    moredelim=[is][emphstyle2]{@}{@},
    escapechar=§,mathescape
}
\lstdefinestyle{boxed}{frame=single,frameround=tttt,backgroundcolor=\color{keyword!5}}
\lstdefinestyle{head}{style=boxed, frame=tlr, frameround=tfft}
\lstdefinestyle{middle}{style=boxed, frame=lr, firstnumber=last}
\lstdefinestyle{tail}{style=boxed, frame=lrb, frameround=fttf, firstnumber=last}
\lstdefinestyle{inl}{columns=[c]flexible, basicstyle={\sffamily\srcinsize}, emph={stop,ready,task_bag,despatcher_done,system}}
\newcommand{\erl}[2][]{\ifmmode\expandafter\text\fi{\lstinline[style=inl,#1]{#2}}}
\newcommand{\erlpds}[1]{\textsf{\scalebox{0.85}{#1}}}
\newcommand{\erlpdsatom}[1]{\textsf{\scalebox{0.85}{\textcolor{atom}{#1}}}}
\let\oldciteauthor\citeauthor
\renewcommand{\citeauthor}[1]{%
\hypersetup{citecolor=black}%
\oldciteauthor{#1}%
\hypersetup{ citecolor=brown }%
}
\newcommand{\executeiffilenewer}[3]{%
\ifnum\pdfstrcmp{\pdffilemoddate{#1}}%
{\pdffilemoddate{#2}}>0%
{\immediate\write18{#3}}\fi%
}
\newif\ifsvg%
\newcommand{%
\ifsvg%
\executeiffilenewer{.svg}{.pdf}%
{/Applications/Inkscape.app/Contents/Resources/bin/inkscape -z -D --file=.svg %
--export-pdf=.pdf --export-latex}%
\addFileDependency{.pdf_tex}%
\addFileDependency{.aux}%
 \input{.pdf_tex}
\fi
}[1]{%
\ifsvg%
\executeiffilenewer{#1.svg}{#1.pdf}%
{/Applications/Inkscape.app/Contents/Resources/bin/inkscape -z -D --file=#1.svg %
--export-pdf=#1.pdf --export-latex}%
\addFileDependency{#1.pdf_tex}%
\addFileDependency{#1.aux}%
 \input{#1.pdf_tex}
\fi
}
\newcommand*{\addFileDependency}[1]{
  \typeout{(#1)}
  \IfFileExists{#1}{}{\typeout{No file #1.}}
}
\begin{document}

\title{{\fontsize{21.56}{26.7}\selectfont 
Decidable Models of Recursive Asynchronous Concurrency}}
\author{Jonathan Kochems \qquad \qquad C.-H.~Luke Ong}
\maketitle 

\begin{abstract}
Asynchronously communicating pushdown systems (ACPS) that satisfy the empty-stack constraint 
(a pushdown process may receive only when its stack is empty) 
are a popular decidable model for recursive programs with asynchronous atomic procedure calls. 
We study a relaxation of the empty-stack constraint for ACPS that permits concurrency and communication actions at any stack height, called 
the \emph{shaped stack} constraint, \changed[lo]{thus enabling a larger class of concurrent programs to be modelled.} 
\changed[lo]{We establish a close connection between ACPS with shaped stacks and a novel extension of Petri nets:} 
\emph{Nets with Nested Coloured Tokens} (NNCTs).
Tokens in NNCTs are of two types: \emph{simple} and \emph{complex}. 
Complex tokens carry an arbitrary number of coloured tokens. The rules of NNCT can synchronise 
complex and simple tokens, inject coloured tokens into a complex token, and eject all tokens of a 
specified set of colours to predefined places.
We show that the coverability problem for NNCTs is \Tower-complete. 
To our knowledge, NNCT is the first extension of Petri nets---
in the class of nets with an infinite set of token \emph{types}---that \changed[lo]{is proven to} have primitive recursive coverability.
This result implies \Tower-completeness of coverability for ACPS with shaped stacks.
\end{abstract} 






\comment[jk]{
Page Breakdown:
\begin{itemize}
	\item Abstract (4 sentences, cur.: $\sim$1/6 page, 162 words)
	\item Introduction (1 page, cur.: $\sim$2 1/4 pages)
	\item ACPS (3.5 pages, cur.: $\sim$4 1/4 pages)
	\begin{itemize}
		\item ACPS vs APCPS (2 page, cur. $\sim$1 1/2 page)
		\item APCPS standard semantics vs alternative Reduction to NNCT (1.5 pages, cur. $\sim$1 2/3 pages)
	\end{itemize}
	\item NNCT Definition, Coverability Problems and Connection to APCPS (2 pages, cur. $\sim$ 2 pages)
	\item Upper Bound (3 pages, cur. $\sim$ 3 pages)
	\item Lower Bound (1.5 pages, cur.$\sim$ 2 pages)
	\item Related work (1 pages, cur.$\sim$1 1/4 page)
	\item Conclusions, further work (1/6 pages, cur. $\sim$1/6 pages)
	\item References (2/3 pages, cur. $\sim$ 1 1/4 page)
\end{itemize}
Total:12, cur.$\sim$ 15 pages
}

\jk{
	Auto-criticism:
	\begin{itemize}
		\item \done{How relevant are the results in a PL context?}
		\item \done{We need to commit to studying only coverability or coverability, boundedness and termination}
		\item \done{What is the shaped stack constraint for ACPS}
		\item \done{What is the precise relationship of coverability for shaped ACPS, shaped APCPS and NNCT}
		\item \done{Colours/inner places (inner places feature little in upper bound, but in lower bound)}
		\item Flow of paper is improvable
		\item \done{Counter abstraction uses poor notation and is not explained as well as it could}
		\item Notation is not globally consistent enough to be a help to the reader
		\item Section titles should be improved: more concise and space saving
		\item Treatment of Complexity, \Tower
	\end{itemize}
}

\section{Introduction}\noindent
In recent years the study of decision procedures for concurrent pushdown systems has proved immensely fruitful.
Substantial advances have been made 
in the algorithmic verification of \emph{asynchronous programs}, i.e.~recursive programs with asynchronous atomic procedure calls \cite{Sen:2006,Jhala:2007,Ganty:2012}. 
Asynchronous programs can be 
modelled naturally by \emph{asynchronously communicating pushdown systems} (ACPS)---a dynamic network of concurrent pushdown systems that communicate via a fixed, finite set of unbounded and unordered channels---subject to the  ``empty-stack restriction'', which means that a pushdown process cannot receive messages unless its stack is empty. 

The empty-stack restriction prohibits arbitrary synchronisations between processes, 
thus ruling out classes of interesting programs for analysis by ACPS.
For example, the server program in \Cref{fig:task_server} gives rise to an ACPS. The program spawns two processes, one running \erl{server}, the other {\erl{despatcher}.}\linebreak[4] 
The \erl{server} process posts tasks to the channel \erl{task_bag} which are continually removed and executed by the \erl{despatcher} process (possibly posting further tasks to \erl{task_bag} or spawning new processes). The \erl{despatcher} process non-deterministically chooses to wait for a \erl{stop} message from the \erl{server}, or calls itself recursively. 
An interesting question for this program is whether the messages \erl{ready} and \erl{despatcher_done}~can erroneously reside in the channels \erl{task_bag} and \erl{system} resp. at the same time. Such a question may be formulated as a \emph{coverability} problem: is it possible to reach a configuration $s$ that \emph{covers} a configuration $s_{\text{cov}}$ i.e.~$s_{\text{cov}} \leq s$ where $\leq$ is a preorder on the configuration-space. Unfortunately coverability is undecidable for ACPS in general \cite{Ramalingam:2000}, however it is decidable for ACPS with the empty-stack restriction. 
Notice that the \erl{server} process increases its call-stack at every recursive call while executing receives and sends: it does not satisfy the empty-stack restriction.
\begin{figure}[bt]
\vspace{-0.75ex}
\input{figures/server_ex}
\vspace{-10pt}
\caption{Server in asynchronous programming style. The procedures \erl{do_server} and \erl{do_task} may be arbitrary terminating recursive procedures that are allowed to send messages and spawn new 
processes.}
\label{fig:task_server}
 \vspace{-2ex} 
\end{figure} %
%
\newcommand{\noncn}[1]{\textcolor{BrickRed}{#1}}
\newcommand{\cn}[1]{\textcolor{OliveGreen}{#1}}
Is it possible to relax the empty-stack restriction on ACPS while preserving the decidability of coverability? Fortunately, the answer is yes. In previous work \cite{KochemsO:2013}, we introduced \emph{asynchronous partially commutative pushdown systems} (APCPS), a model of recursive asynchronous concurrency in the form of a class of (partially commutative) \emph{context free grammar}. An APCPS process is an equivalence class of words (over a set of non-terminal and terminal symbols) that allow the commutation of certain commutative non-terminals; the terminals determine the concurrency and communication side-effects, and the transition relation is essentially the leftmost derivation of the grammar. 
Intuitively, a non-terminal (or procedure) is \emph{non-commutative} just if a ``blocking operation'', such as receive, may be invoked when running it. In the APCPS setting, the empty-stack restriction is replaced
by the \emph{$K$-shape constraint} where $K \geq 0$, which limits the number of \emph{non-commutative non-terminals} that may occur in any reachable process to no more than $K$. Processes may perform concurrent actions, message-send / receive or spawns at any time, provided all reachable processes fit the $K$-shape constraint. 
Consider a receive transition of the \erl{server} process:
\begin{align*}
(&\rec{\erlpdsatom{task\_bag}}{\erlpdsatom{ok}}, \cn{\erlpds{do\_server()}} 
\cdot \noncn{\erlpds{post\_task()}} \cdot \noncn{\text{L}_4} \cdot \cn{\text{L}_8} \cdots \cn{\text{L}_8}) \parallel \cdots \\
&\;\;\;\;\;\;\;\;\to[effect=\rec{\textit{task\_bag}}{\textit{ok}}] (\erlpds{do\_server()}, \noncn{\erlpds{post\_task()}} \cdot \noncn{\text{L}_4} \cdot \cn{\text{L}_8} \cdots \cn{\text{L}_8}) \parallel \cdots
\end{align*}
where non-commutative and commutative procedures are marked in {\color{BrickRed} red} and {\color{OliveGreen} green} resp. 
This transition is impossible under the empty-stack constraint; to satisfy the latter the \erl{server} process is forced to empty its call-stack before it can make a receive transition: remembering to execute {\erl{do_server()},}\linebreak[4]\erl{post_task()} and to return to line 4 (L$_4$) and eventually to perform all recursive executions of line 8 (L$_8$) is not possible.  
The state of the \erl{server} process in this transition is representative; they always fit a suffix of the shape 
$(q, \noncn{\blacksquare} \cn{\blacksquare}^* \noncn{\blacksquare} \cn{\blacksquare}^* \noncn{\blacksquare} \cn{\blacksquare}^*)$. This is in fact true for all processes in the program in \Cref{fig:task_server}; thus the program fits the shape constraint. 
In general, the $K$-shaped constraint allows a pushdown process to remember \emph{infinite state} information along a receive transition, whereas the empty-stack constraint limits it to be \emph{finite state}.
\begin{figure*}[tbh]
\newcommand{\ctModel}[1]{\textbf{\textit{#1}}}
\newcommand{\ctCoverability}{Cov}
\newcommand{\ctSimpleCoverability}{SCov}
\newcommand{\ctBoundedness}{Bound}
\newcommand{\ctTermination}{Term}
\newcommand{\ctStandardSemantics}{Std Sem}
\newcommand{\ctAlternativeSemantics}{Alt Sem}
\newcommand{\ctNormalform}{NF}
\newcommand{\ctPolytime}{PT}
\newcommand{\ctConsttime}{CT}
\newcommand{\ctExptime}{ExpT}
\newcommand{\ctElemtime}{ET}
\newcommand{\ctConjecture}{(?)}
\newcommand{\ctPrimitiveRecursive}{PR}
\centering

\newcommand{\contribtableheading}[1]{\textbf{#1}}
\newcommand{\contribtablesubheading}[1]{\emph{#1}}
\newcommand{\ttm}[1]{$\displaystyle #1$}
\newcommand{\rulespace}{\hspace{0.885ex}}
\gdef\contribtablesep{\hspace{1eM}}
\gdef\cts{\contribtablesep}
\newcommand{\place}[1]{\multicolumn{2}{c}{#1}}
\newcommand{\placesep}[1]{\multicolumn{3}{c}{#1}}
\newcommand{\interreduce}[2][]{$\xleftrightarrow[\text{#1}]{\text{#2}}$}
\newcommand{\reduce}[2][]{$\xrightarrow[\text{#1}]{\text{#2}}$}
\tikzstyle{line} = [draw, 
                    dashed, black!50]
  \begin{tikzpicture}
  [
   lbl/.style={font=\scriptsize},
   every node/.style={font=\small}
  ]
    \node at (0,0) (ACPSCoverability) {\ctCoverability};
    \node[right=1.25cm of ACPSCoverability]         (ACPSSimpleCoverability) {\ctSimpleCoverability};
    \node[right=1.25cm of ACPSSimpleCoverability,align=center]   (ACPSNFSimpleCoverability) {\ctSimpleCoverability\\ \ctNormalform};
    \node[right=2cm of ACPSNFSimpleCoverability,align=center] (APCPSSimpleCoverabilityStd) 
          {\ctSimpleCoverability\\ \ctStandardSemantics};
    \node[right=1.5cm of APCPSSimpleCoverabilityStd,align=center] (APCPSSimpleCoverabilityAlt) 
          {\ctSimpleCoverability\\ \ctAlternativeSemantics};
    \node[right = 3.5cm of APCPSSimpleCoverabilityAlt,align = center] (NNCTCoverability) 
          {Coverability\\\Tower\ Thm~\ref{thm:covering_radius:bound}};
    \node[below = 0.475cm of NNCTCoverability,align=center] (NNCTTTSimpleCoverability) 
          {\ctSimpleCoverability\ / \ctBoundedness\ / \ctTermination \\for Total Transfer\\\Tower-hard\ Thm~\ref{thm:nnct:coverability:tower_hard}};

    \node[below right=1cm and -1.25cm of ACPSCoverability,align=center]   (ACPSBoundTerm) {\ctBoundedness\ / \ctTermination};
    \node[below=0.645cm of ACPSNFSimpleCoverability,align=center]   (ACPSNFBoundTerm) {\ctBoundedness\ / \ctTermination\\
    \ctNormalform};
    \node[below=0.645cm of APCPSSimpleCoverabilityStd,align=center] (APCPSBoundTermStd) 
          {\ctBoundedness\ / \ctTermination\\ \ctStandardSemantics};
    \node[below=0.645cm of APCPSSimpleCoverabilityAlt,align=center] (APCPSBoundTermAlt) 
          {\ctBoundedness\ / \ctTermination\\ \ctAlternativeSemantics};

    \node[above= 0.51cm of ACPSSimpleCoverability] (ACPS)  {\ctModel{$K$-shaped ACPS}};
    \node[above right= 0.35cm and -0.5cm of APCPSSimpleCoverabilityStd] (APCPS) {\ctModel{$K$-shaped APCPS}};
    \node[above= 0.35cm of NNCTCoverability] (NNCT)  {\ctModel{NNCT}};

    \draw[thick, black] (-1   ,0.75) edge (17 ,0.75);
    \draw[thick, black] (-1   ,-2)   edge (17   ,-2);

    \draw[dashed, black!50] (5.6  ,0.75)    edge (5.6  ,0.4);
    \draw[dashed, black!50] (5.6  ,-0.4)    edge (5.6  ,-1);
    \draw[dashed, black!50] (5.6  ,-1.50)   edge (5.6  ,-2);
    \draw[dashed, black!50] (11.8 ,0.75)    edge (11.8 ,0.1);
    \draw[dashed, black!50] (11.8  ,-1.00)  edge (11.8  ,-1.35);
    \draw[dashed, black!50] (11.8  ,-1.50)  edge (11.8  ,-2);

    \draw[<->] (ACPSCoverability) edge 
                    node[below,lbl]{Lem~\ref{lem:acps:cov:eq:simplecov}} 
                    node[above,lbl]{\ctPolytime} (ACPSSimpleCoverability);
    \draw[<->] (ACPSSimpleCoverability) edge 
                  node[below,lbl]{Prop~\ref{prop:acps:normalform}} 
                  node[above,lbl]{\ctPolytime} 
               (ACPSNFSimpleCoverability);
    \draw[<->] (ACPSNFSimpleCoverability) edge 
                  node[below,lbl]{Prop~\ref{prop:acps:apcps:interreduction}}
                  node[above,lbl]{\ctPolytime} 
               (APCPSSimpleCoverabilityStd);
    \draw[<->] (APCPSSimpleCoverabilityStd) edge 
                  node[below,lbl]{\cite{KochemsO:2013}} 
                  node[above,lbl]{\ctConsttime} 
               (APCPSSimpleCoverabilityAlt);               
    \draw[->] (APCPSSimpleCoverabilityAlt) edge 
                  node[below,lbl]{\hspace{2eM}Thm~\ref{thm:APCPS-to-NNCT}}
                  node[above,lbl]{\hspace{2eM}\ctElemtime} 
               (NNCTCoverability);    
    \draw[->] (NNCTTTSimpleCoverability) edge 
                  node[below,sloped,lbl]{Thm~\ref{thm:NNCT-to-APCPS}}
                  node[above,sloped,lbl]{\ctElemtime} 
               (APCPSSimpleCoverabilityAlt);      
    \draw[<->] (ACPSBoundTerm) edge 
                  node[below,lbl]{Prop~\ref{prop:acps:normalform}} 
                  node[above,lbl]{\ctPolytime} 
               (ACPSNFBoundTerm);
    \draw[<->] (ACPSNFBoundTerm) edge 
                  node[above,lbl]{\ctPolytime \ctConjecture} 
               (APCPSBoundTermStd);
    \draw[<->] (APCPSBoundTermStd) edge 
                  node[above,lbl]{\ctPrimitiveRecursive \ctConjecture} 
               (APCPSBoundTermAlt);
    \draw[->] (NNCTTTSimpleCoverability) edge 
                  node[below,lbl]{\hspace{2eM}Thm~\ref{thm:NNCT-to-APCPS}}
                  node[above,lbl]{\hspace{2eM}\ctElemtime} 
               (APCPSBoundTermAlt);      
  \end{tikzpicture}
  \vspace{-10pt}
\caption{A ``map'' of this paper's results. ``$P_1 \stackrel{\theta}{\longrightarrow} P_2$'' means that (decision problem) $P_1$ $\theta$-reduces to $P_2$. 
If the $K$-shaped is dropped the ACPS and APCPS part of the graph remains valid.
\emph{Legend:
\ctPolytime = polynomial time, 
\ctConsttime = constant time, 
\ctElemtime = exponential time, 
\ctPrimitiveRecursive = primitive recursive time, 
\ctNormalform = normal form,
\ctStandardSemantics = standard semantics,
\ctAlternativeSemantics = alternative semantics,
\ctConjecture = conjecture,
\ctCoverability = coverability, 
\ctSimpleCoverability = simple coverability, 
\ctBoundedness = boundedness, and 
\ctTermination = termination.}}
\label{table:contributions}
\vspace{-1ex}
\end{figure*}

The main result of \emph{op.~cit.~}is that coverability is decidable for APCPS that satisfy the $K$-shape constraint. Though ``semantic'' in nature, the shape constaint follows from an easy-to-check syntactic condition \cite{KochemsO:2013} 
which seems natural and readily satisfied by recursive programs humans write. 

\paragraph*{Our contributions}
The APCPS model is a hybrid model. On the one hand, it has the form of a partially commutative context-free grammar (in the sense of \cite{Czerwinski:09}) equipped with an operational semantics that specifies the behaviour of the concurrency and communication side-effects, such as send, receive and spawn. On the other, an APCPS determines a transition system which is very similar to that of an ACPS (the main difference is that APCPS processes are defined modulo a commutation relation). Our first contribution is to clarify the connections between APCPS and the standard and much studied ACPS. We show that there is a corresponding \emph{$K$-shape constraint} for ACPS, which limits the number of non-commutative stack symbols that may occur in the ``reachable'' stacks. We prove that coverability for $K$-shaped ACPS is polynomial-time inter-reducible with (a simplified version of) coverability for $K$-shaped APCPS, which is decidable \cite{KochemsO:2013}; see Figure~\ref{table:contributions}. Notice that the $K$-shaped ACPS model strictly extends the ACPS model with empty-stack restriction; in fact, the latter satisfies the 1-shape constraint by definition.

What is the complexity of coverability for $K$-shaped ACPS? We know that ACPS satisfying the empty-stack restriction are closely related to Petri nets: for example, their respective coverability problems are inter-reducible \cite{Ganty:2012}. However the $K$-shape constraint captures a larger class of models than the empty-stack restriction. Is there an extension of Petri nets that corresponds to $K$-shaped ACPS? Our second, major, contribution, are answers to these questions. (See Figure~\ref{table:contributions} for an overview of the technical results.)
\begin{asparaenum}[(i)]
\item We introduce a non-trivial extension of Petri nets: \emph{nested nets with coloured tokens} (NNCT). As the name suggests, NNCT feature, 
\changed[lo]{in addition to ordinary tokens (called \emph{simple})},
\emph{complex tokens} that carry \emph{coloured tokens}. Transitions may inject coloured tokens into a complex token; or {transfer certain coloured tokens---those whose colour is from a specified set of \emph{active colours}---from a complex token to predefined places}. 


\item We show that coverability for NNCT is \Exptime\ inter-reducible with simple coverability for APCPS (via the alternative semantics), and hence also inter-reducible with coverability for $K$-shaped ACPS.

\item We prove that coverability for NNCT is \Tower-com\-plete, in the sense of \citeauthor{Schmitz:2013} \cite{Schmitz:2013}. To our knowledge, NNCT is the first extension of Petri nets
---in the class of nets with an infinite set of token types---that is proven to have primitive recursive coverability. To prove \Tower-membership of coverability for NNCT, we devise a geometrically inspired version of the \citeauthor{Rackoff:78} technique \cite{Rackoff:78}, which was originally used to prove the \Expspace-coverability for Petri nets. We obtain \Tower-hardness of coverability, boundedness and termination by adapting \citeauthor{Stockmeyer:1974}'s ruler construction \cite{Stockmeyer:1974} to NNCT. We also establish the decidability of \emph{boundedness} and \emph{termination} for NNCT.
Transfering our complexity analysis on NNCT implies, surprisingly, that the bound $K$ on the number of ``non-commutative procedure calls'' in the shaped stack constraint is not the expensive resource.
In fact, $K$ influences only the number of colours $\ninner$ and complex places $\ncomplex$ of the simulating NNCT $\calN$; coverability is then decidable in space bounded by an exponential tower of height $O(\nsimple + \slog(\nsimple \cdot \ninner \cdot \ncomplex))$ where $\nsimple$ is the number of simple places of $\calN$ which is independent of $K$. \end{asparaenum}
%
%
\paragraph*{Notation}
Let us write $\N^\infty = \N \union \set{\infty}$, $\range{n} = \set{1,...,n}$, and $\M[U]$ for the set of multisets over $U$. 
Explicit multisets are enclosed in $\mset{\cdot}$; e.g.~we write $\mset{u,u,v^2}$ for the multiset containing two occurrences each of $u$ and $v$. We write $\emptyset$ for both the empty set and the empty multiset. 
We say that $u$ is an element of the multiset $M \in \M[U]$, written $u \in M$, if $M(u) \geq 1$. 
If $M_1, M_2 \in \M[U]$, we write $M_1 \oplus M_2$ for the multiset union of $M_1$ and $M_2$. If $M = M_1 \oplus M_2$ then we define $M_1 = M \ominus M_2$. 
Let $M \in \M[U]$ and $U_0 \subseteq U$. We define $M \restriction U_0$ to be the multiset $M$ restricted to $U_0$ 
i.e.~$(M \restriction U_0) : u \mapsto M(u)$ if $u \in U_0$, and $0$ otherwise. 
We write $U^*$ for the set of finite sequences over $U$, i.e.~maps from $\N \to U$, 
let $\beta, \gamma, \ldots$ range over $U^{*}$ and we denote the \emph{length} of sequence $\beta$ by $|\beta|$. 
We define the \emph{Parikh image} of $\beta \in U^*$ by 
$\M(\beta) : u \mapsto |\{i \mid \beta(i) = u \}|$. 
%
%
Let $(U,\leq_U)$ be a preordered set; we extend $\leq_U$ to $\M[U]$ and $U^*$ as usual:
\begin{inparaenum}[(i)]
\item $M_1 \leq_{\M[U]} M_2$ just if $M_1\!=\!\mset{u_1,..., u_n}$, $M_2\!=\!\mset{u'_1,..., u'_m}$ and there is an injective map $h\!:\!\range{n}\!\to\!\range{m}$ such that for all $i\!\in\!\range{n}$ we have $u_i\!\leq_U\!u'_{h(i)}$; 
\item $\beta \higleq[U] \beta'$ if there is a strictly monotone function $h : \range{|\beta|} \to \range{|\beta'|}$ and $\beta(i) \leq_{U} \beta'(h(i))$ for all $i \in \range{|\beta|}$.
\end{inparaenum}
We write $U_1 \disjointunion U_2$ for the disjoint union of sets $U_1$ and $U_2$. 
Let $f_1 : U_1 \to V_1$ and $f_2 : U_2 \to V_2$ be maps, then we define the map $f_1 + f_2 : U_1 + U_2 \to V_1 + V_2$ by
\(
(f_1 + f_2) (\mathit{in_i}(u)) := \mathit{in}_i (f_i (u)),
\)
for $i \in \makeset{1, 2}$, where $\mathit{in}_i$ is the canonical injection of $U_i$ into $U_1 + U_2$. Henceforth, to save writing, we elide $\mathit{in}_i$. 
We extend the operators $\oplus$, $\ominus$ to functions, i.e.,~if $h_1, h_2 : V \to \M[U]$ then $(h_1 \oplus h_2)(v) := h_1(v) \oplus h_2(v)$ and $(h_1 \ominus h_2)(v) := h_1(v) \ominus h_2(v)$. For all sets $U$, $V$ we define the map $\vec{0} : U \to \M[V]$ by $\vec{0}(u) = \emptyset$ for all $u \in U$.
We write $\preupdate{f}{u_1 \mapsto u'_1, \ldots, u_n \mapsto u'_n}$ (we omit $f$ if $f = \vec{0}$) for the function $f'$ such that $f'(u) = f(u)$ for $u \neq u_i$, and $f'(u_i) = u'_i$ for all $i \in \range{n}$.

\section{Recursive Asynchronous Concurrency}\label{sec:acps}\noindent
ACPS are prevalent as concurrent systems and their algorithmic verification
is a central problem in verification. 
A variety of verification problems for asynchronous programs, and thus ACPS satisfying the empty-stack constraint, are polynomial-time inter-reducible to decision problems on Petri nets \cite{Ganty:2012}.
Due to this connection, we know that e.g.~verification of safety properties is \Expspace-complete.
It is our goal to exhibit a similar connection between ACPS satisfying the shape constraint and an extension of Petri nets. 

\begin{definition}
An \emph{asynchronously communicating pushdown system} (ACPS) $\calP$ is a quintuple $\calP = (\calQ,\calA,\Chan,\Msg[],\calR)$ composed of \emph{control states} $\calQ$, a \emph{stack alphabet} $\calA$, \emph{channel} names $\Chan$, \emph{messages} $\Msg[]$ and \emph{rules} $\calR$, all finite sets;
$\calR$ is a subset of $\calQ \times \calA^* \times \Lambda \times \calQ \times \calA^*$
where $\Lambda$ is the set of \emph{communication side-effects}: $\Lambda := \varset{\rec{c}{m},\snd{c}{m},\spn{(q,\beta)}\!:\!c\!\in \Chan, m\!\in\!\Msg[], q \in \calQ,\beta \in \calA^*} \union \set{\epsilon}$.
\end{definition}\noindent
We use the notation $(q,\beta) \to[effect=\lambda] (q',\beta')$ for rules.
An action $\lambda=\snd{c}{m}$ denotes the sending of the message $m$ to channel $c$, $\rec{c}{m}$ denotes the {retrieval} 
of $m$ from $c$, and $\spn(q,\beta)$ denotes the spawning of a new process in state $(q,\beta)$. 
An ACPS $\calP$ gives rise to an infinite state transition system over $\M[\calQ \times \calA^*] \times \varparen{\Chan \to \M[\Msg[]]}$. For simplicity, we write a configuration (say) $(\mset{(q,\beta), (q',\beta')}, \{ c_1 \mapsto \mset{m_a,m_b,m_b}, c_2 \mapsto \mset{}\})$ as ${\cproc{(q,\beta)} \parallel \cproc{(q',\beta')} \ChanPar \cchan{m_a,m_b,m_b}{c_1}, \cchan{}{c_2}}$.  We abbreviate a set of processes running in parallel as $\Pi$ and a set of channels by $\Gamma$ with names in $\Chan$. 
The transition relation $\to[TS=\calP]$ is defined as follows: suppose $(q,\beta) \to[effect=\lambda] (q',\beta') \in \calR$ and $\beta_0 \in \calA^*$ then 
\begingroup\makeatletter\def\f@size{9}\check@mathfonts
\begin{align*}
	&(q,\beta\beta_0)\!\parallel\!\Pi\!\ChanPar\!\Gamma \to[TS=\calP] (q',\beta'\beta_0)\!\parallel\!\Pi\!\ChanPar\!\Gamma 
	&\hspace{-2ex}(\lambda = \epsilon)
	\\
	&(q,\beta\beta_0)\!\parallel\!\Pi\!\ChanPar\!\Gamma \to[TS=\calP] (q',\beta'\beta_0)\!\parallel\!\Pi\!\ChanPar\!\Gamma\!\oplus\!\varupdate{}{c}{\mset{m}}
	&\hspace{-2ex}(\lambda = \snd{c}{m})
	\\
	&(q,\beta\beta_0)\!\parallel\!\Pi\!\ChanPar\!\Gamma\!\oplus\!\varupdate{}{c}{\mset{m}} \to[TS=\calP] (q',\beta'\beta_0)\!\parallel\!\Pi\!\ChanPar\!\Gamma
	&\hspace{-2ex}(\lambda = \rec{c}{m})
	\\ 
	&(q,\beta\beta_0)\!\parallel\!\Pi\!\ChanPar\!\Gamma \to[TS=\calP] (q',\beta'\beta_0)\!\parallel\!(q_0,\beta_1) \parallel\!\Pi\!\ChanPar\!\Gamma 
	&\hspace{-2ex}(\lambda = \spn{(q_0,\beta_1)})
\end{align*}%
\endgroup%
Many interesting verification problems can be expressed in terms of transition systems that are endowed with a preorder on the state space which we call \emph{pre-structured transition systems} (PSTS).
Formally, a PSTS is a triple $\calS = (S,\to[TS=\calS],\leq_{\calS})$ such that $\leq_{\calS}$ is a preorder on $S$, ${\to[TS=\calS]} \subseteq S \times S$ a transition relation and we denote its transitive closure as $\to[TS=\calS,*]$. 

\begin{definition}
Let $\calS\!=\!(S,\to[TS=\calS],\leq_{\calS})$ be a PSTS and ${s_0,s_{\text{cov}} \in S}$ be configurations.  
The triple $\mathpzc{Q} = \paren{\calS,s_0,s_{\text{cov}}}$ is a \emph{coverability query} which is said to be a \emph{yes-instance} for \emph{coverability} 
if there is a reachable configuration $s'$ that covers $s_{\text{cov}}$, i.e., $s_0 \to[TS=\calS,*] s'$ and $s_{\text{cov}} \leq_{\calS} s'$.
The \emph{boundedness problem} asks whether the set $\varset{s : s_0 \to[TS=\calS,*] s}$ is finite and the
\emph{termination problem} is to decide whether there exists an infinite path from $s_0$ in $\calS$.
\end{definition}\noindent
%
We augment ACPS with an order to yield a PSTS as follows: we order two processes $(q,\beta) \leq_{\calQ\times \calA^*} (q',\beta')$ just if $q = q'$ and either $\beta = \epsilon$ or $\beta = A\cdot\beta_0$, $\beta' = A\cdot\beta'_0$ and $\beta_0 \higleq \beta'_0$.
Configurations are then ordered using the usual multiset and function extension:
$\Pi \ChanPar \Gamma \leqACPS \Pi \ChanPar \Gamma$ just if $\Pi \leq_{\M[\calQ\times \calA^*]} \Pi'$ and
if $\Gamma(c) \leq_{\M[\Msg[]]} \Gamma'(c)$ for all $c \in \Chan$.

An ACPS process $(q, A\,\beta)$ may process its own stack, test whether $\beta \higgeq \beta_{cov}$, and record this in its local state $q$. We may thus assume that in a coverability query 
$(\calP,\Pi_0 \ChanPar \Gamma_0,\Pi \ChanPar \Gamma)$ all processes $(q,\beta)$ of $\Pi$ and $\Pi_0$ satisfy $\beta \in \set{\epsilon, A \in \calA}$ :
we call such a query \emph{simple} and the coverability problem restricted to simple queries 
\emph{simple coverability}.
\begin{lemma}\label{lem:acps:cov:eq:simplecov}
Coverability and simple coverability for ACPS poly\-no\-mi\-al-time inter-reduce.
\end{lemma}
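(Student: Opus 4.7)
\begin{proof*}
The plan is to establish inter-reducibility in both directions. Since every simple coverability query is, by definition, a coverability query, one direction is immediate (the identity reduction). For the substantive direction, given a coverability query $\mathpzc{Q} = (\calP, \Pi_0 \ChanPar \Gamma_0, \Pi \ChanPar \Gamma)$ with target $\Pi = \mset{(q_i, \beta_i) : i \in \range{n}}$, I will augment $\calP$ so that each target process can ``self-verify'' its stack, and query simple coverability of the verification witnesses. The idea is the one hinted at just above: a process $(q_i, A^{(i)}_1 \gamma)$ can test $\gamma \higgeq A^{(i)}_2 \cdots A^{(i)}_{k_i}$ by silently walking down its stack, recording success by advancing to a fresh local state.

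Concretely, for each $i$ with $\beta_i = A^{(i)}_1 \cdots A^{(i)}_{k_i}$ and $k_i \geq 1$, introduce fresh control states $p^{(i)}_2, \ldots, p^{(i)}_{k_i}, v_i$ (with the convention $p^{(i)}_{k_i+1} := v_i$) and silent rules acting on the top two stack positions only: a launch rule $(q_i, A^{(i)}_1) \to[effect=\epsilon] (p^{(i)}_2, A^{(i)}_1)$; for each $j \in \{2, \ldots, k_i\}$ and each $A \in \calA$ a skip rule $(p^{(i)}_j, A^{(i)}_1 A) \to[effect=\epsilon] (p^{(i)}_j, A^{(i)}_1)$; and for each such $j$ a match rule $(p^{(i)}_j, A^{(i)}_1 A^{(i)}_j) \to[effect=\epsilon] (p^{(i)}_{j+1}, A^{(i)}_1)$. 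The new simple target is $\Pi' = \mset{(v_i, A^{(i)}_1) : k_i \geq 1} \oplus \mset{(q_i, \epsilon) : k_i = 0}$ paired with $\Gamma$; this is a simple query, and the construction is clearly polynomial in $|\mathpzc{Q}|$.

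For correctness, the forward direction is direct: a $\calP$-reachable configuration covering $\Pi \ChanPar \Gamma$ admits an injective assignment of target processes to reached processes; each chosen process has the form $(q_i, A^{(i)}_1 \gamma_i)$ with $A^{(i)}_2 \cdots A^{(i)}_{k_i} \higleq \gamma_i$, and can be driven in isolation via launch, skip and match steps to a process $(v_i, A^{(i)}_1 \ldots)$, yielding a $\calP'$-reachable configuration covering $\Pi' \ChanPar \Gamma$. For the reverse direction, I exploit the fact that each verification rule involves only the verifying process, uses a fresh state on at least one side, and performs no channel or spawn side-effect: verification steps therefore commute with every rule applied to another process, and no $\calP$-rule applies to a process once its launch has fired. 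Given a $\calP'$-trace witnessing simple coverability, I permute it so that every verification step is deferred to the end; the preceding prefix uses only $\calP$-rules, and by inspection of launch, skip and match, its final configuration contains, for each $i$, a distinct process in state $q_i$ with stack $A^{(i)}_1 \gamma_i$ such that $A^{(i)}_2 \cdots A^{(i)}_{k_i}$ embeds into $\gamma_i$ as a scattered subword. Thus this prefix configuration covers $\Pi \ChanPar \Gamma$ in $\calP$. The main obstacle is the local-commutativity argument underlying the permutation, but this is routine given the freshness of the added states and the absence of shared side-effects.
\end{proof*}
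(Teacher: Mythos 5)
Your verification gadget and the deferral argument are sound, and for the target side they are essentially the paper's own device: fresh control states in which a designated process silently scans its own stack to certify that $A^{(i)}_2\cdots A^{(i)}_{k_i}$ embeds into the rest of its stack, followed by a reordering of locally independent transitions so that the scan can be assumed to happen at the end of the run. (The paper's variant pops the stack while carrying the still-unmatched suffix of $\beta_i$ in the control state and covers empty-stack targets, whereas you keep the matched top symbol in place; this is an inessential difference, and your commutation argument — freshness of the added states, no channel or spawn side-effects, no return from the fresh states to original ones — is exactly the "reordering of locally-independent transitions" the paper invokes.)

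There is, however, a genuine gap: your reduction does not output an instance of simple coverability. In this paper a query $(\calP,\Pi_0\ChanPar\Gamma_0,\Pi\ChanPar\Gamma)$ is \emph{simple} only if all processes of $\Pi$ \emph{and of $\Pi_0$} have stacks of length at most one; you simplify the target $\Pi$ but leave the initial configuration untouched, so whenever some process of $\Pi_0$ has a stack of length two or more, the query you produce is not simple and the reduction is incomplete. The paper's proof spends half of its construction on precisely this point: for each initial process $(q^0_j,\beta^0_j)$ it introduces fresh control states indexed by the suffixes of $\beta^0_j$ together with a fresh marker symbol $D_0$, and $\epsilon$-rules that, starting from the single-symbol process $(q^0_{(j,\beta^0_j)},D_0)$, rebuild $\beta^0_j$ one symbol at a time and then hand control to $q^0_j$; the new initial configuration consists of these single-symbol processes. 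You need an analogous setup phase (its correctness is routine: the setup states and $D_0$ are fresh, the phase is side-effect free, and a run of the new system reaches a covering configuration iff the corresponding run of $\calP$ from $\Pi_0\ChanPar\Gamma_0$ does). With that added, your argument goes through and matches the paper's.
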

%
%
%
%
\newcommand{\Fprenormalform}[1]{\ensuremath{\calF^{\text{pnf}}\!\paren{#1}}}
\newcommand{\FpnfTonormalform}[1]{\ensuremath{\calF^{\text{nf'}}\!\paren{#1}}}
\newcommand{\Fnormalform}[2][]{\ensuremath{\calF#1\!\paren{#2}}}
\noindent
Control-states may be encoded in an enlarged stack-alphabet and hence we may w.l.o.g.~consider ACPS in \emph{normal form}: for all $(q,\beta) \to[effect=\lambda,inl] (q',\beta') \in \calR$
\begin{inparaenum}[(i)]
\item(acps:nf:i) $q = q'$,
\item(acps:nf:ii) $\beta = A \in \calA$,
\linebreak[4]\item(acps:nf:v) if $\lambda = \spn{(q'',\beta'')}$ then $q'' = q$ and $\beta'' \in \calA$, and
\item(acps:nf:iii) $\beta' = \epsilon$ unless $\lambda=\epsilon$,
in which case: \item(acps:nf:iv) $\beta' \in \set{\epsilon, B\,C : B, C \in \calA}$.
\end{inparaenum}
\begin{proposition}\label{prop:acps:normalform}
Given an ACPS $\calP$, a simple coverability query $\mathpzc{Q}$ and a $\Pi^0 \ChanPar \Gamma^0$ there exist ACPS $\Fnormalform{\calP}$ in normal form, a simple coverability query $\Fnormalform{\mathpzc{Q}}$, and $\Fnormalform{\Pi^0 \ChanPar \Gamma^0}$ --- all poly\-no\-mi\-al-time computable --- such 
that:
$\mathpzc{Q}$ is a yes-instance if, and only if, $\Fnormalform{\mathpzc{Q}}$ is a yes-instance; and
$\calP$ is bounded (terminating) from $\Pi^0 \ChanPar \Gamma^0$ if, and only if, $\Fnormalform{\calP}$ is bounded (terminating respectively) from 
$\Fnormalform{\Pi^0 \ChanPar \Gamma^0}$.
\end{proposition}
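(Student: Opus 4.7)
Plan: I would prove the proposition by a sequence of polynomial-time transformations that together: (a) encode control states into the top-of-stack symbol, (b) break every rule into a \emph{gadget} of normal-form rules connected by rule-specific markers, (c) separate non-$\eseq$ actions from pushes, and (d) normalise spawns to create single-symbol processes in the same state. I would take the new stack alphabet to be $\calA' := \calA \cup (\calQ \times \calA) \cup M$ where $M$ is a set of fresh markers, and use a singleton control state $\set{\star}$; the invariant maintained is that every ``resting'' stack has the form $(q,A) \cdot A_2 \cdot A_3 \cdots$ with annotated top and bare symbols below. The initial data $\Fnormalform{\Pi^0 \ChanPar \Gamma^0}$ and the target of $\Fnormalform{\mathpzc{Q}}$ are then just the $\pi$-preimages of the original data in which every top symbol is annotated with the original control-state.

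The main obstacle is condition (i), $q = q'$, which prohibits the classical trick of recording multi-pop progress in the control state: the gadget for a rule with LHS of length $k \geq 2$ has to transmit the rule context across intermediate pops purely through the stack. I would handle this by indexing each intermediate marker $m \in M$ by the rule identifier, the current step number, and the expected next stack symbol, so that the marker sequence encodes the gadget's program counter entirely on the stack. The residual difficulties are routine: RHS pushes of length greater than two are decomposed using secondary push-markers (respecting (iv)); actions with $\beta' \neq \eseq$ are split into a pop-only action step followed by an $\eseq$-push gadget (respecting (iii)); a spawn of $(q'',\beta'')$ is implemented by spawning a fresh singleton \emph{launcher} symbol $L_{(q'',\beta'')}$ in state $\star$, and an $\eseq$-gadget internal to the spawnee then unfolds $L_{(q'',\beta'')}$ to the intended initial stack (respecting (v)).

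Correctness is established via a projection $\pi$ mapping each $\Fnormalform{\calP}$-configuration onto the $\calP$-configuration obtained by collapsing any mid-gadget state onto the unique pre-gadget $\calP$-configuration from which it was reached. Every $\calP$-transition lifts to a full-gadget execution in $\Fnormalform{\calP}$ (completeness), and conversely every marker-free $\Fnormalform{\calP}$-reachable configuration is the $\pi$-image of a $\calP$-reachable one (soundness); an interleaving in which two different gadgets race against each other leaves a marker on top of some process's stack, yielding a stuck non-$\pi$-image configuration. Since every simple target is marker-free and $\pi$ preserves $\leqACPS$, simple coverability is preserved in both directions. Boundedness and termination are likewise preserved because each gadget is of linear length in $|\calP|$ and deterministic up to the choice of which rule-gadget to begin, so infinite (resp.~unbounded) runs correspond across the two models. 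Polynomial-time computability of $\calF$ follows because the construction introduces $O\paren{|\calA| \cdot \sum_r (|\beta_r| + |\beta'_r|)}$ new symbols and rules.
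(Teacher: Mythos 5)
There is a genuine gap, and it sits exactly at the point you flag as the ``main obstacle'' but then dismiss too quickly: pops. Your representation invariant is ``annotated top symbol $(q,A)$, bare symbols below''. In normal form every rule has a \emph{single} stack symbol as its left-hand side and the single control state carries no information, so the rule that consumes the annotated top (or any marker standing in for it) cannot see, let alone rewrite, the symbol exposed beneath it. Consequently, after simulating even a plain one-symbol pop $(q,A)\to(q',\epsilon)$ the new top is a bare symbol that carries neither the successor state $q'$ nor your gadget's program counter, and there is no normal-form rule that can re-annotate it: a rule with left-hand side ``marker on top of $A_2$'' is forbidden by condition (ii), and a rule with left-hand side the bare $A_2$ alone cannot know which rule/step/state it should continue. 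Indexing the markers by rule identifier, step number and \emph{expected} next symbol does not help, because the moment the marker is removed to expose that symbol, all of that context is gone from the only place normal form lets you keep it (the top of the stack); guess-and-check variants just push the same problem one level down. So the invariant ``bare symbols below the top'' is not maintainable across pops, and both completeness and soundness of your $\pi$-projection break there.

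The paper's construction resolves precisely this by annotating \emph{every} stack symbol, not just the top: after first massaging $\calP$ into a pre-normal form (single-symbol pushes/pops, using auxiliary control states), each symbol is replaced by $A^{(q,q')}$, where $q$ is the state in force while $A$ is on top and $q'$ is a \emph{guessed} state to be in force when $A$ is popped, with a fresh bottom symbol $\Theta^{(q,q')}$ for empty stacks. The successor-state information is thus already sitting underneath before any pop occurs, so no re-annotation is ever needed. The price is nondeterministic guessing: wrong guesses deadlock, and the paper handles this by proving a bisimulation not with the transition system itself but with a co-universal powerset lifting of it, relating a $\calP$-configuration to the \emph{set} of all its annotated representatives; the coverability query is then implemented with dedicated $\Theta_{\text{cov}}$-style symbols and non-reversible rules. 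If you want to salvage your proof, you should replace your top-only annotation by this pairwise annotation (or an equivalent device that places the needed state information below the popped symbol in advance) and adapt your correctness argument to account for the stuck wrongly-guessed branches, e.g.\ via the paper's powerset/weak-bisimulation formulation; your treatment of spawns, of long right-hand sides, and of boundedness/termination can then go through essentially as you sketched.
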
 \noindent
Henceforth we shall elide the single state $q$ and write a rule simply as $\beta \to[effect=\lambda,inl]\beta'$.
There is a well-known connection between pushdown systems and \emph{context-free grammars} (CFG). It is trivial to see that transitions of a single ACPS process in normal form are (essentially) left-most derivations of a CFG.

\begin{definition}
Let $\Sigma$ be an alphabet of terminal symbols. A \emph{context-free grammar} (CFG) 
is a triple $\calG = (\Sigma, \NonT, \Rules)$ where $\Rules$ comprises rewrite rules of the form:
\begin{inparaenum}[(i)]
\item \label{item:rec} $X \rightarrow Y \, Z$ where $X,Y,Z \in \NonT$ and
\item \label{item:singleton} $X \rightarrow a$ where $a \in \Sigma \cup \makeset{\epsilon}$.
\end{inparaenum}
\end{definition} \noindent
Fix an ACPS $\calP = (\set{q}, \calA, \Chan, \Msg[], \calR)$ in normal form.
From $\calP$ we obtain a CFG $\GofP=(\SigmaofP, \NonTofP, \RofP)$ whose non-terminals $\NonTofP$ are the stack alphabet $\calA$, 
the set of terminals $\SigmaofP$ is trivially obtained from $\Lambda$ (we only replace
$\spn{(q,A)}$ by $\spn{A}$), and its rules $\RofP$ are derived by
\begin{align*}
	A \to[effect=\spn{A'}] \epsilon     &\;\mapsto\; A  \to \spn{A'}	&
	A \to[effect=\snd{c}{m}] \epsilon 	&\;\mapsto\; A  \to \snd{c}{m}	\\
	A \to[effect=\epsilon] \epsilon		&\;\mapsto\; A  \to \epsilon 	&
	A \to[effect=\rec{c}{m}] \epsilon 	&\;\mapsto\; A  \to \rec{c}{m}	\\
	A \to[effect=\epsilon] B\,C 		&\;\mapsto\; A  \to B\, C	
\end{align*} 
In moving from the operational view of pushdown systems to CFGs it is possible to build upon known results on how CFGs interact with reorderings in the words they produce.
In our setting, channels are unordered and a new process may start at its own leisure.
 Hence, the execution order of concurrency side-effects such as send and spawn is immaterial. However, the sequencing of other side-effects, notably message retrieval which requires synchronisation, is observable.

\subsubsection*{Exploiting Commutativity}

The use of an \emph{independence relation} is a common technique to formalise such sensitivity to order.
An independence relation $I$ over a set $U$ is a symmetric irreflexive relation over $U$. 
It induces a congruence relation $\eqvI$ on $U^*$ defined as the least equivalence relation $\mathpzc{R}$ such that: $(\beta, \beta') \in \mathpzc{R} \; \implies \; \forall \gamma, \gamma' \in U^{\ast} : (\gamma \, \beta \, \gamma', \gamma \, \beta' \, \gamma) \in \mathpzc{R}$ and $I \subseteq \mathpzc{R}$.
An element $u \in U$ is said to be \emph{non-commutative wrt to $I$} if $(u,u') \notin I$ for all $u' \in U$ and we denote the set of non-commutative elements by $U^{\neg\text{com}}$. Similarly we call a $u \in U$ \emph{commutative wrt to $I$} if $(u,u') \in I$ for all $u' \in U \setminus U^{\neg\text{com}}$, i.e.~commutative elements commute with all elements except non-commutative ones, and write $U^{\text{com}}$ for the set of commutative elements. 
An independence relation $I$ that partitions $U$ into commutative and non-commutative elements is said to be \emph{unambiguous}. Let $\Xi \subseteq U$, then we define the independence relation generated by $\Xi$ as
\[
\mathit{IndRel}_{U}(\Xi) := \makeset{(a, a'), (a', a) \mid a, a' \in \Xi, a \neq a'}.
\]

Since receive is the only blocking concurrency action, we specify the actions that we want to commute as $\ComSigma := \SigmaofP \setminus \varset{\rec{c}{m} \mid c \in \Chan, m \in \Msg[]}$. Then the independence relation $\mathit{IndRel}_{\SigmaofP}(\ComSigma)$ 
 allows us to commute all concurrency actions \emph{except} receive. Further, $\mathit{IndRel}_{\SigmaofP}(\ComSigma)$ 
is unambiguous and partitions $\SigmaofP$ into the commutative and non-commutative actions $\ComSigma$ and $\NComSigma$ respectively.

We expect an independence relation over non-terminals, that is consistent with $\mathit{IndRel}_{\SigmaofP}(\ComSigma)$, to classify a non-terminal as commutative if it is productive and rewrites only to commutative symbols. 
This intuition can be captured in terms of a suitable monotone function $F : \P[\NonTofP] \to \P[\NonTofP]$.
For a subset $U \subseteq \NonTofP$, a non-terminal $N$ is an element of the set $F(U)$ just if 
\begin{inparaenum} 
\item $\calL(N) \neq \emptyset$, and
\item for each $w$ such that $N \to w$, $\Sigma(w) \subseteq \ComSigma$ and $\NonT(w) \subseteq U$, writing $\Sigma(w)$ (respectively $\NonT(w)$) for the set of terminal (respectively non-terminal) symbols that occur in the word $w$.
\end{inparaenum}
We define $\ComN$ as the greatest fixpoint of $F$. Similarly to the case of concurrency actions,
this choice of commutative non-terminals gives rise to the independence relation $\IofG$ defined by 
$$\IofG \is \mathit{IndRel}_{\NonTofP \union \SigmaofP}(\ComSigma \union \ComN).$$ 
Again, the independence relation $\IofG$ yields a partition of $\NonTofP \union \SigmaofP$ into the commutative ($\ComSigma \union \ComN$) and non-commutative symbols $\NComSigma \union \NComN$. The set $\NComN$ denotes the set of 
 non-commutative non-terminals and it is easy to see that $\NComN = \NonTofP \setminus \ComN$.
%
Commutative non-terminals are productive and only produce commutative symbols. Rewriting non-commutative non-terminals may block.
For example, a non-terminal is an element of $\ComN$ if it produces only finite words of sends and spawns (possibly infinitely many), or infinite words of sends and spawns including all their prefixes. By contrast, a non-terminal that produces a receive terminal or no finite word is categorised as non-commutative.

A CFG $(\Sigma,\NonT,\calR)$ endowed with an independence relation on $\NonT \union \Sigma$ is called a \emph{partially commutative context-free grammar} (PCCFG) \cite{Czerwinski:09}.
In previous work, we introduced
a class of PCCFG, \emph{asynchronous partially commutative pushdown systems} (APCPS), defined over $\SigmaofP$ and endowed with $\IofG[\calG]$ \cite{KochemsO:2013}. Even though APCPS originate from a grammar they induce a
transition system semantics that is suitable for the analysis of asynchronous concurrent pushdown systems.

\begin{definition} \label{def:APCPS}
An \emph{asynchronous partially commutative pushdown system} (APCPS) is a PCCFG $\calG = (\SigmaofP, \IofG[\calG], \NonT, \Rules)$.
\end{definition}
\noindent In particular endowing $\GofP$ with $\IofG$ yields an APCPS.
Given an APCPS $\calG$ the \emph{standard semantics} gives rise to a transition system very similar to an ACPS. The main difference is that processes are equivalence classes induced by $\eqvI[{\IofG[\calG]}]$. 

\begin{table*}[tb]
\centering
\newcounter{stdrulecounter}
\creflabelformat{stdruletype}{#1#2#3}
\crefformat{stdruletype}{#2{(R-#1)}#3}
\Crefformat{stdruletype}{#2{(R-#1)}#3}
\crefalias{stdrulecounter}{stdruletype}
\newcommand{\stdrule}[1][]{\refstepcounter{stdrulecounter}(R-\thestdrulecounter)\label[stdruletype]{#1}}
\newcounter{altrulecounter}
\creflabelformat{altruletype}{#1#2#3}
\crefformat{altruletype}{#2{(R$'$-#1)}#3}
\Crefformat{altruletype}{#2{(R$'$-#1)}#3}
\crefalias{altrulecounter}{altruletype}
\newcommand{\altrule}[1][]{\refstepcounter{altrulecounter}(R$'$-\thealtrulecounter)\label[altruletype]{#1}}

\newcommand{\tableheading}[1]{\textbf{#1}}
\newcommand{\tablesubheading}[1]{\textbf{\emph{#1}}}
\newcommand{\ttm}[1]{$\displaystyle #1$}
\newcommand{\rulespace}{\hspace{0.885ex}}

\begin{tabular}{@{}l@{}p{0.40\textwidth}|@{}p{0.5455\textwidth}@{}}
\toprule
\multicolumn{3}{l}{
\begin{tabular}{@{}p{0.44\textwidth}@{}p{0.53\textwidth}@{}}
  \centering\tablesubheading{Standard semantics}     & \centering\tablesubheading{Alternative semantics}
\end{tabular}
}  \\
\midrule
\multicolumn{3}{c}{\fontsize{9.25}{11.46}\selectfont 
\begin{tabular}{@{}l@{\rulespace}r@{}l|l@{\rulespace}r@{}l@{}}
  \stdrule[stdrule:interleave:std]: &
  \ttm{A\, \gamma \,}     & \ttm{\parallel \Pi \ChanPar \Gamma \toCF \beta \, \gamma  \parallel \Pi \ChanPar \Gamma} &
  \altrule[altrule:interleave:std]: &
  \ttm{A\, M \, \gamma'\,} & \ttm{\parallel \Pi' \ChanPar \Gamma \toCM \beta \, M\, \gamma' \parallel \Pi' \ChanPar \Gamma} \\
  \stdrule[stdrule:interleave:com]: &
  \ttm{A\, \gamma \,}     & \ttm{\parallel \Pi \ChanPar \Gamma \toCF B\, C \, \gamma  \parallel \Pi \ChanPar \Gamma} &
  \altrule[altrule:interleave:com]: &
  \ttm{A\, M \, \gamma'\,} & \ttm{\parallel \Pi' \ChanPar \Gamma \toCM B \, (\M(w)\, \oplus\, M)\, \gamma' \parallel \Pi' \ChanPar \Gamma} \\
\stdrule[stdrule:receive]: &
  \ttm{\cproc{(\rec{c}{m})\,\gamma}\,} &
  \ttm{\parallel \Pi \ChanPar 
        \mcchan{(\mset{m} \oplus q)}{c}, \Gamma 
        \toCF
        \cproc{\gamma}                 \parallel \Pi \ChanPar \mcchan{q}{c}, \Gamma} &
\altrule[altrule:receive]: &
  \ttm{\cproc{(\rec{c}{m})\, \gamma'}\,} &
  \ttm{\parallel \Pi' \ChanPar  \mcchan{(\mset{m} \oplus q)}{c}, \Gamma\toCM 
       \cproc{\gamma'}                  \parallel \Pi' \ChanPar  \mcchan{q}{c}, \Gamma}\\
\stdrule[stdrule:send]: &
    \ttm{\cproc{(\snd{c}{m})\,\gamma}\,}&
    \ttm{\parallel \Pi \ChanPar \mcchan{q}{c}, \Gamma \toCF
    \cproc{\gamma}                \parallel \Pi \ChanPar \mcchan{(\mset{m} \oplus q)}{c},\Gamma}
    &
\altrule[altrule:send]: &
    \ttm{\cproc{(\snd{c}{m})\, \gamma'}\,}&
    \ttm{\parallel \Pi' \ChanPar  \mcchan{q}{c}, \Gamma\toCM 
    \cproc{\gamma'}                  \parallel \Pi' \ChanPar  \mcchan{(\mset{m} \oplus q)}{c}, \Gamma}\\
\stdrule[stdrule:spawn]: &
    \ttm{\cproc{(\nu X)\,\gamma\,}  }&
    \ttm{\parallel \Pi \ChanPar \Gamma \toCF
        \cproc{\gamma}         \parallel \cproc{X} \parallel \Pi \ChanPar \Gamma}
    &
\altrule[altrule:spawn]: &
    \ttm{\cproc{(\nu X)\, \gamma'}\, }&
    \ttm{\parallel \Pi'               \ChanPar \Gamma \toCM 
        \cproc{\gamma'}           \parallel \cproc{X} \parallel \Pi'  \ChanPar \Gamma }\\
 & & &
 \altrule[altrule:dispatch:term]: &
    \ttm{\cproc{M\, X \,\gamma'} \,}&
    \ttm{\parallel \Pi'  \ChanPar \Gamma \toCM 
        \cproc{X\, \gamma'}     \parallel \Pi' \parallel \Pi(M) \ChanPar \Gamma \oplus \Gamma(M)} \\
& & &
\altrule[altrule:dispatch:nonterm]: &
    \ttm{\cproc{M'\,  \gamma'}  \,}&
    \ttm{\parallel \Pi'  \ChanPar \Gamma \toCM 
        \cproc{M''\, \gamma'}  \parallel \Pi' \parallel \Pi(M') \ChanPar \Gamma \oplus \Gamma(M')}
\end{tabular}
}\\
\midrule
\multicolumn{3}{p{0.99\textwidth}}{\small 
\vspace{-9pt}
\changed[jk]{Let $(\dagger)$ be a condition on a $\beta \in \NonT^*$: $\beta = B\, C$ and $C$ is commutative.}
Rules \Cref{stdrule:interleave:com} and \Cref{altrule:interleave:com} have a side condition: \changed[jk]{$A \to B\,C \in \calG$}, $B\,C$ satisfies $(\dagger)$, and $C \to[*] w$.
 Rules \Cref{stdrule:interleave:std} and \Cref{altrule:interleave:std} have a side condition: $A \to \beta \in \calG$ and $\beta$ does \emph{not} satisfy $(\dagger)$.
In rule \Cref{altrule:dispatch:term} we require the multiset $M \in \M[\ComSigma]$ whereas in rule
\Cref{altrule:dispatch:nonterm} $M'$ may be an element of $\M[\ComSigma \union \NonT]$ and $M'' = M' \restriction \NonT$.
\changed[jk]{We use the abbreviations: $\Pi(M) \is \parallel_{A \in \NonT}\parallel_1^{M(\spn{A})} A$, and
$\Gamma(M) \is \Oplus_{c \in \Chan}\Oplus_{m \in \Msg[]} \varupdate{}{c}{\varmset{m^{M(\snd{c}{m})}}}$.}
} \\
\bottomrule
\end{tabular}

\caption{Transition semantics for APCPS}
\label{fig:standard_vs_alternative:semantics}
\vspace{-\baselineskip}
\end{table*}

\paragraph*{Standard Semantics} 
In the standard semantics a process, denoted $\gamma$, is a word $\delta\, \beta\, X_1\, \beta_1 \cdot\cdot\cdot X_n \beta_n$ ({modulo} $\eqvI[{\IofG[\calG]}]$) where $X_i \in \NComN$ and $\beta, \beta_i \in \ComN$ and $\delta \in \Sigma \cup \NonT \cup \set{\epsilon}$. We call the set of such processes $\Procs[]$. The standard semantics of an APCPS is a transition system over 
$\M[\Procs[]] \times \Channels$ where $\Channels = (\Chan \rightarrow \M[\MMsg])$. We order processes 
{$\delta \pi_0 \leq_{\Procs[]} \delta \pi_1$ if $\delta \pi'_0 \higleq[(\NonT\union\Sigma)] \delta \pi'_1$} for some $\pi_0'$ and $\pi_1'$ such that both $\delta \, \pi_i \eqvI[{\IofG[\calG]}] \delta \, \pi'_i$. We lift $\leq_{\Procs[]}$ to a {preorder} $\leqAPCPS$ on configurations, using the multiset and function extension, and obtain a PSTS $(\M[\Procs[]] \times \Channels,\toCF,\leqAPCPS)$ {where the transition relation $\toCF$ is defined in the left column of} \Cref{fig:standard_vs_alternative:semantics}. 
Processes change state by performing a \changed[lo]{leftmost} CFG derivation of $\calG$ until an action appears in the leftmost position (Rules \Cref{stdrule:interleave:std} and \Cref{stdrule:interleave:com}); the type of action then determines a concurrency side-effect that interacts with the rest of the configuration (Rules \Cref{stdrule:receive}--\Cref{stdrule:spawn}) causing the action to be consumed and enabling further leftmost reductions of $\calG$.
We say an APCPS coverability query $(\calG,\Pi_0 \ChanPar \Gamma_0,\Pi \ChanPar \Gamma)$ is simple if $\pi \eqvI[{\IofG[\calG]}] A \in \NonT$ for all $\pi$ in $\Pi$ and $\Pi_0$.

\subparagraph{The Shaped Constraint.} 
If $\beta_i\!\in\!\ComN^*$, $X_i\!\in\!\NComN$, $\delta\!\in\!\NonT$, $\bar{\delta}\!\in\!\NonT \union \Sigma \union \varset{\epsilon}$, and $n \leq K$,
we say an ACPS process $\pi = \delta  \beta_0 X_1 \beta_1 \cdot\cdot\cdot X_n \beta_n$ (an APCPS process $\bar{\pi} \eqvI[{\IofG[\calG]}]$\linebreak[4] $\bar{\delta}\beta X_1 \beta_1 \cdot\cdot\cdot X_n \beta_n$) is \emph{$K$-shaped}. An ACPS (APCPS) configuration $\Pi \ChanPar \Gamma$ is $K$-shaped if all processes in $\Pi$ are $K$-shaped and we say an ACPS $\calP$ (an APCPS $\calG$) has \emph{$K$-shaped stacks from $\Pi_0 \ChanPar \Gamma_0$} just if all reachable configurations from $\Pi_0 \ChanPar \Gamma_0$ are $K$-shaped.
Intuitively, the shaped constraint requires that, at all times, at most an \emph{a priori} fixed number $K$ of non-commutative non-terminals $K$ may reside in the stack. Because the restriction does not apply to commutative non-terminals, stacks can grow to arbitrary heights.

Given a simple coverability query $\mathpzc{Q}\!=\!(\calP, \Pi_0\!\ChanPar\!\Gamma_0, \Pi\!\ChanPar\!\Gamma)$ for an ACPS in normal form $\calP$ we may analyse $\mathpzc{Q}'= \commabr{(\GofP, \Pi_0\ChanPar\Gamma_0, \Pi\ChanPar\Gamma)}$ a simple coverability query for the APCPS $\GofP$ instead; the reduction preserves the shape constraint:
\begin{proposition}\label{prop:acps:apcps:interreduction}
$\mathpzc{Q}$ is a yes-instance, if and only if, $\mathpzc{Q}'$ is a yes-instance.
Hence simple coverability for ACPS and APCPS poly\-no\-mi\-al-time inter-reduce. 
Further $\calP$ is $K$-shaped from $\Pi_0\ChanPar\Gamma_0$ if, and only if, $\GofP$ is $K$-shaped from $\Pi_0\ChanPar\Gamma_0$.
\end{proposition}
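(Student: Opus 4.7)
The plan is to establish a tight step-level correspondence between the transitions of $\calP$ and those of $\GofP$ via the canonical map $\beta \in \calA^* \mapsto [\beta]_{\eqvI}$ on processes, extended pointwise to configurations. Because $\GofP$ is constructed so that each ACPS normal-form rule has a direct grammar counterpart---$A \to \lambda$ for a single-action rule, $A \to BC$ for a recursive-call rule, and $A \to \epsilon$ for a pure return---the translation is a linear-time syntactic operation and the claimed polynomial-time inter-reduction will follow once the covering iff is in place.

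First I would prove a forward simulation lemma: every ACPS step is matched by either one or two steps of $\toCF$, the second being a communication rule when the ACPS rule carries a non-trivial $\lambda$. This is essentially by case analysis on the shape of the normal-form rule, using the rules that rewrite the leftmost non-terminal of a process together with the send/receive/spawn rules of \Cref{fig:standard_vs_alternative:semantics}. I would then prove a backward simulation: any APCPS transition from $[\beta]_{\eqvI}$ to $[\beta']_{\eqvI}$ can be matched by an execution of $\calP$ starting from $\beta$. The subtle point is that APCPS processes are equivalence classes, so the ``leftmost'' symbol targeted by the grammar-derivation rule need not be the ACPS top-of-stack; however, by the fixpoint construction of $\ComN$, any symbol that $\eqvI$ moves past another is commutative, hence it eventually produces only sends and spawns, which affect only the unordered channel and process multisets and can therefore be rearranged into an ACPS-compatible schedule.

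For the covering iff, I would observe that in a simple query every process of $\Pi_0$ and $\Pi$ is a single stack symbol, so its APCPS equivalence class is a singleton and $\leqAPCPS$ restricted to such configurations coincides with $\leqACPS$; thus any ACPS covering run lifts to an APCPS covering run, and any APCPS covering run can be normalised via the backward simulation to an ACPS covering run. For the shape clause, the key invariant is that $\eqvI$ preserves the subsequence of non-commutative symbols in a word: every elementary commutation involves at least one element of $\ComN \union \ComSigma$. Consequently $\beta$ contains at most $K$ elements of $\NComN$ exactly when $[\beta]_{\eqvI}$ does, and $K$-shapedness transfers verbatim across the correspondence.

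The main obstacle is the backward simulation. Concretely, one needs a scheduling lemma stating that any finite APCPS run from $[\beta]_{\eqvI}$ can be permuted into one in which each grammar derivation rewrites the top of the ACPS stack of some representative, at which point the APCPS trace projects stepwise onto an ACPS trace over the same channels and process multisets. This lemma rests entirely on the commutativity of send and spawn with respect to the unordered channel and multiset semantics---precisely what motivated the definition of $\ComN$ as the greatest fixpoint of the monotone operator $F$ above.
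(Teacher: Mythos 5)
Your forward simulation and your observation that $\eqvI$ preserves the number of non-commutative symbols (every elementary commutation involves two commutative symbols, so the $\NComN$-count of a process is an invariant of its $\eqvI$-class) are sound and agree with the paper. The genuine gap is the backward direction, and specifically the ``scheduling lemma'' on which you rest it. As stated --- that any finite APCPS run can be \emph{permuted} into one in which each grammar derivation rewrites the top of the ACPS stack --- it is false. If the ACPS has pushed $B\,C$ with $B,C \in \ComN$, the APCPS may commute $C$ to the front and rewrite only $C$ (possibly only partially), never touching $B$; no permutation of those steps makes the ACPS top $B$ the rewritten symbol, because $B$ is never rewritten in the run at all, and a partially evaluated commutative symbol cannot be projected stepwise onto a stack-disciplined ACPS trace either. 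What is needed is not reordering but \emph{completion}: since every $N \in \ComN$ is productive and emits only sends and spawns, the ACPS can run the blocking commutative symbols to termination, and the quantitative fact that makes this sound is that for $\beta \eqvI \bar{\beta}$ with $\beta,\bar{\beta} \in \ComN^*$ the sets of achievable effect deltas (Parikh images of the emitted sends/spawns) are the \emph{same} in $\calP$ and in $\GofP$. Accordingly, the paper does not match transitions step by step: it proves a weak simulation of $\calP$ by $\GofP$, and a weak \emph{bisimulation} only at synchronization points where every process is headed by a non-commutative non-terminal; the covering argument splits an APCPS run at these points and finishes the commutative prefixes on both sides with identical Parikh effect. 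Your proposal gestures at the right ingredients (commutativity of send/spawn, productivity of $\ComN$) but packages them as a permutation claim that does not hold; the weak-bisimulation-at-sync-points plus acceleration formulation is the step you are missing.

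Two smaller points. First, your remark that for simple queries $\leqAPCPS$ and $\leqACPS$ coincide only concerns the (simple) target side; covering is checked against reached configurations, which are arbitrary words, and if a target non-terminal is commutative an APCPS configuration can cover it by commuting it to the head while the literal ACPS word does not. The paper removes this mismatch up front by a gadget that makes all target non-terminals non-commutative w.l.o.g.\ (at the cost of possibly raising the shape bound by one); without that, you must additionally argue that the ACPS can expose the target head by executing off its commutative prefix. Second, the transfer of the $K$-shape property itself also relies on the backward correspondence (a non-$K$-shaped reachable $\GofP$-configuration must yield a non-$K$-shaped reachable $\calP$-configuration); this works by first executing off commutative prefixes --- which leaves the non-commutative count unchanged --- and then invoking the weak bisimulation, so it goes through as you claim only once the backward machinery is repaired as above.
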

\noindent Deciding coverability, boundedness or termination on the standard semantics looks daunting. 
Even a $K$-shaped process is infinite state, follows a stack discipline and may synchronise with an unbounded number of other processes. Since APCPS subsume ordinary concurrent pushdown systems, simple coverability is undecidable in general for the standard semantics.
However, the independence relation $\eqvI[{\IofG[\calG]}]$ enables a simplification which is formalised in the \emph{alternative semantics}.
The key idea is to summarise the effects of commutative non-terminals.
In the alternative semantics, rather than keeping track of the contents of the stack, 
we precompute the actions produced by commutative non-terminals
and store them in summaries on the stack. The non-com\-mu\-ta\-tive procedure calls, which are left on the stack, then act as separators for these summaries. 

\paragraph*{Alternative Semantics} 
In the alternative semantics a process, which we denote by $\gamma'$, has the shape 
$\delta M X_1 M_1\!\cdots\!X_n M_n$, with $X_i\!\in\!\NComN$, $M, M_i\!\in\!\M[\NonT\!\union\!\ComSigma]$ and $\delta \in \Sigma \cup \NonT \cup \set{\epsilon}$, and is said to be $K$-shaped if $n \leq K$. 
We denote the set of such processes by $\Procs[]'$.
Then the alternative concurrent semantics of an APCPS is a transition system over elements of $\M[\Procs[]'] \times \Channels$. We abbreviate a set of alternative processes running in parallel as $\Pi'$.

The right column of 
Table~\ref{fig:standard_vs_alternative:semantics} shows the alternative semantics along-side the standard semantics. Most transition rules, barring the different shape of processes, are essentially the same as the standard semantics.
The rules that are different implement and manage the summary of commutative non-terminals.
Rule \Cref{altrule:interleave:com} executes a rule $A \to B\, C$ by precomputing the actions $w$ of the  commutative non-terminal $C$ and inserting $w$'s \emph{Parikh image} $\M(w)$ into the summary $M$. This is the counterpart of a push in the alternative semantics.
The rules \Cref{altrule:dispatch:term} and \Cref{altrule:dispatch:nonterm} are the pop counterparts; they ensure that the precomputed actions are rendered effective at the appropriate moment. 
Rule \Cref{altrule:dispatch:term} is applicable when the summary $M$ contains exclusively commutative actions; such a summary denotes a sequence of commutative non-terminals whose computation terminates and generates concurrency actions. Rule \Cref{altrule:dispatch:nonterm} handles the case where the summary $M'$ contains non-terminals. Such a summary represents a partial computation of a sequence of commutative non-terminals. In this\linebreak case rule \Cref{altrule:dispatch:nonterm} dispatches all commutative actions and then blocks. It is necessary to consider this case since not all non-terminals have terminating computations. Thus rule \Cref{altrule:interleave:com} may non-deterministically abandon the pre-computation of actions. 

Again we can turn $\calG$ with $\toCM$ into a PSTS $(\M[\Procs[]']\times\Channels,\toCM,\leqAPCPS['])$ by endowing it with a preorder $\leqAPCPS[']$.
We order elements of $\M[\Sigma \union \NonT]$ with the usual multiset ordering $\leq_{\M[\Sigma \union \NonT]}$ and elements of $\Procs[]'$ are ordered by $\leq_{\Procs[]'} \is \higleq[{(\Sigma \union \NonT \union \M[\Sigma \union \NonT])}]$ which is lifted to configurations as before.
We say a coverability query $(\calG,\Pi_0 \ChanPar \Gamma_0,\Pi' \ChanPar \Gamma)$ for the alternative semantics is \emph{simple} if $\pi' = A \in \NonT$ for all $\pi' \in \Pi'$ and $\Pi_0$.
The standard and alternative semantics give rise to the same yes-instances of simple coverability:
\begin{theorem}[\cite{KochemsO:2013}]\label{thm:reduction_coverability}
A query $\mathpzc{Q}$ is a yes-instance for simple coverabililty on the standard semantics if, and only if, $\mathpzc{Q}$ is a yes-instance on the alternative semantics.
\end{theorem}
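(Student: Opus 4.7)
The plan is to establish a weak bisimulation-style simulation $\sim$ between standard and alternative configurations that preserves the initial configuration and the covering preorder, so that any covering run in one semantics transports to a covering run in the other. Because we are restricted to \emph{simple} queries, both initial and target configurations are multisets of single non-terminals $A \in \NonT$ together with a channel valuation $\Gamma$; such configurations are related trivially by $\sim$, so the entire work lies in transporting runs in each direction.

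Concretely, I would relate a standard process $\pi = \delta\,\beta_0\,X_1\,\beta_1\cdots X_n\,\beta_n$, with $X_i \in \NComN$ and $\beta_i \in \ComN^*$, to an alternative process $\pi' = \delta\,M_0\,X_1\,M_1\cdots X_n\,M_n$ exactly when the non-commutative skeleton $\delta\,X_1\cdots X_n$ agrees and each multiset $M_i \in \M[\NonT \cup \ComSigma]$ encodes a \emph{partial precomputation} of $\beta_i$: there exists a derivation $\beta_i \to^* w_i\,\beta'_i$ with $M_i \restriction \ComSigma = \M(w_i)$ and with $M_i \restriction \NonT$ being the Parikh image of the still-undeveloped tail $\beta'_i$, which is legitimate since the symbols in $\beta'_i$ all lie in $\ComN$ and commute pairwise under $\IofG$. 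This relation is then lifted pointwise to configurations via the multiset extension on processes and equality on $\Gamma$.

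For the forward direction, I would simulate each $\toCF$ step by a (possibly empty) sequence of $\toCM$ steps. Leftmost CFG rewrites internal to a commutative region become either an invocation of \Cref{altrule:interleave:com}, which enlarges the adjacent summary $M_i$ with more of $C$'s Parikh image, or a delayed update to that summary realised via \Cref{altrule:dispatch:term} or \Cref{altrule:dispatch:nonterm} once the relevant $X_i$ reaches the top of the stack. Because $\IofG$ makes every pair drawn from $\ComSigma \cup \ComN$ independent, the Parikh image alone determines the observable effect of $\beta_i$ on channels and on newly spawned processes, so these bookkeeping moves commute with the ``genuine'' receive, send, spawn, and dispatch steps that act on the non-commutative skeleton; this commutation is what ensures $\sim$ is preserved under composition of steps.

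The main obstacle is the backward direction, where given an alternative run one must manufacture an equivalent standard run. The awkward points are \Cref{altrule:interleave:com}, which non-deterministically chooses any $w$ with $C \to^* w$ and injects its Parikh image in one shot, and \Cref{altrule:dispatch:nonterm}, which fires even when $M'$ still contains non-terminals. Neither has a direct counterpart in the standard semantics, which only performs leftmost derivations and never discards unfinished derivations. The fix is a just-in-time scheduling argument: when the alternative dispatches a prefix of $M_i$, one performs the corresponding leftmost derivation in the standard semantics up to the point where the required actions have been produced, and then uses $\eqvI[{\IofG[\calG]}]$ to permute those actions into the positions prescribed by the alternative schedule, a permutation that is legal precisely because all the symbols involved lie in the independence class $\ComSigma \cup \ComN$. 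Any commutative non-terminals left unevaluated in $M''$ can be kept dormant in the standard process's stack without harming the eventual cover, since the target configuration is simple and $\leqAPCPS$ is monotone in the commutative suffix. Composing the two transport arguments and observing that the cover on the non-commutative skeleton plus channels is identical in the two preorders yields the theorem.
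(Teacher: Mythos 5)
First, note that this paper does not actually prove the statement: Theorem~\ref{thm:reduction_coverability} is imported from \cite{KochemsO:2013}, so there is no in-paper proof to compare against. Your high-level plan --- summaries as Parikh images of partial derivations of the commutative blocks, reconciled with leftmost derivation via the independence relation and a monotonicity argument --- is the right kind of idea and is in the spirit of the cited work. But as written the argument has a concrete gap: the relation you define cannot be a (weak) simulation invariant, because you insist on \emph{equality of} $\Gamma$. In the standard semantics the sends and spawns of a commutative block $\beta_0$ enter $\Gamma$ and $\Pi$ one at a time, interleaved with other processes, whereas in the alternative semantics the whole summary is flushed in a \emph{single} step by \Cref{altrule:dispatch:term} or \Cref{altrule:dispatch:nonterm}. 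So after any individual send-emission step $\toCF$ on the standard side, the alternative side either has not yet emitted that message or has already emitted strictly more; in both cases $\Gamma$-equality is broken and your relation is not preserved. The repair is not a commutation of ``bookkeeping moves'' but a change of proof shape: either synchronise the two runs only at flush points (e.g.\ configurations where every process is headed by a non-commutative symbol, with acceleration/deferral lemmas for the commutative parts, exactly as this paper does for the analogous Proposition~\ref{prop:acps:apcps:interreduction}), or, in the standard-to-alternative direction, give up exact matching altogether and argue by domination: dispatching early (and possibly dispatching a full terminal word $w$ with $C \to^* w$ even when the standard run abandons $C$) yields an alternative configuration that is $\geq$ the standard one, and then monotonicity of $\toCM$ carries the rest of the covering run. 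Your exact-Parikh, equal-$\Gamma$ relation supports neither of these as stated, and the eager choice of $w$ in \Cref{altrule:interleave:com} needs either lookahead into the standard run or this domination argument --- a point your step-by-step framing glosses over.

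A second, smaller but genuine, problem is the treatment of the residue left by \Cref{altrule:dispatch:nonterm}. The abandoned non-terminals $M''$ correspond on the standard side to commutative non-terminals sitting as a \emph{prefix} of the process (they are at the head, since the standard semantics only rewrites the leftmost symbol), not ``dormant in the suffix''. Whether they ``harm the eventual cover'' depends on the head-matching built into the standard order on processes: to cover a simple target $A$ the standard process must expose $A$ at its head modulo $\eqvI[{\IofG[\calG]}]$, and a non-commutative $A$ cannot be commuted past leftover commutative symbols. The correct argument is not that the residue can be ignored, but that every symbol of $\ComN$ is productive, so the standard run can always derive the residue to a word of sends/spawns and execute it off, exposing the head at the cost of extra (harmless, monotone) steps. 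Without this productivity step your final-coverage claim does not go through.
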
 \noindent
The alternative semantics for shape-constrained APCPS gives rise to a \emph{well-structured transition system} (WSTS) \cite{Finkel:01}, which implies the decidability of simple coverability for both alternative and standard semantics \cite{KochemsO:2013}. However, this result yields only coarse complexity results.
To study the complexity of simple coverability, we introduce a non-trivial extension of Petri nets, \emph{nets with nested coloured tokens}.
\newcommand{\upwardclosure}[1]{\mathord{\uparrow} #1}
\section{Nets with Nested Coloured Tokens}\label{sec:NNCT}\noindent
The alternative operational semantics for APCPS requires the ability to model configurations
that contain multisets of multisets --- a capability that appears to be beyond Petri nets.
Rules \Cref{altrule:dispatch:term} and \Cref{altrule:dispatch:nonterm} seem to require
a feature that allows the transfer or ``ejection'' of elements from inside a nested multiset. Fortunately, we know from
the literature \cite{Esparza:1997,Ganty:2012,EsparzaGKL:2011} that computing the summary of a commutative non-terminal,
as performed by rule \Cref{altrule:interleave:com}, can be achieved by a Petri net.

Inspired by \emph{nested Petri nets} \cite{LomazovaS:99}, which give rise to configurations of nested structures of multisets, we introduce \emph{nets with nested coloured tokens} (NNCT) which feature multisets of multisets and vertical transfers. 
NNCT is designed with the necessary features to implement the alternative semantics for APCPS, while also allowing APCPS to simulate NNCT.

\begin{definition}
A \emph{net with nested coloured tokens} (NNCT) is a quintuple $\calN = (\Psimple,\Pcomplex,\Pinner,\Rules,\colmap)$ where 
$\Psimple, \Pcomplex, \Pinner$ and $\Rules$ are the finite sets of \emph{simple places}, 
\emph{complex places}, \emph{colours}, and\hspace{-0.1mm} \emph{rules};\hspace{-0.1mm} and\hspace{-0.1mm} the\hspace{-0.1mm} \emph{colour\hspace{-0.1mm} mapping}\hspace{-0.1mm} $\colmap$\hspace{-0.1mm} is\hspace{-0.1mm} a\hspace{-0.1mm} map\hspace{-0.1mm} from\hspace{-0.1mm} $\Pinner$\hspace{-0.1mm} to\hspace{-0.1mm} $\Psimple$.

\subparagraph{Markings.}
We define \emph{markings} for $\Psimple, \Pinner$ and $\Pcomplex$ as
\begin{align*}
\Msimple &\is \set{m \mid m : \Psimple \to \M[\set{\bullet}]}\\
\Minner &\is \set{m \mid m : \Pinner \to \M[\set{\bullet}]}\\
\Mcomplex &\is \set{m \mid m : \Pcomplex \to \M[\Minner]}.
\end{align*} 
We also call a marking for $\Pinner$ (i.e.~an element of $\Minner$) a \emph{complex token}, which we view as a multiset of \emph{coloured tokens}. Thus a marking for $\Pcomplex$ fills each complex place with a multiset of complex tokens. A \emph{configuration}
of a NNCT is the disjoint union of a marking for $\Psimple$ and a marking for $\Pcomplex$, i.e.
$$\Configuration \is \set{m + m' \mid m \in \Msimple, m' \in \Mcomplex}. $$

\subparagraph{Rules.}
We partition $\Rules$ into 
$\SimpleRules$, $\ComplexRules$ and $\TransferRules$.
A rule $r\!\in\!\SimpleRules$ is a pair $(I,O)$ 
 such that $I,O \in \Configuration$ where
$I(p)(m) = 0$ if both ${p \in \Pcomplex}$ and $m \neq \vec{0}$, 
where, as a reminder, $\vec{0}(x) = \emptyset$ for any $x$. 
A rule $r \in \ComplexRules$ is a pair $((p,I),(p',\calc,O))$ such that $I,O \in \ConfigSimple \is \varset{m + \vec{0} \mid m \in \Msimple}$; $p, p' \in \Pcomplex$ and ${\calc \in \Minner}$. 
A rule $r \in \TransferRules$ is a pair {$((p,I),(p',P,O))$} such that $p, p' \in \Pcomplex$; $I,O \in \ConfigSimple$ and
$P \subseteq \Pinner$ such that $\colmap \restriction P : P \to \colmap(P)$ is bijective, i.e., $\colmap^{-1}$ is well-defined on $\colmap(P)$.
We sometimes need to refer to the mappings $I, O$ of a rule $r$, for which we write $I_r$ and $O_r$.

\subparagraph{Operational Semantics.}
Let $s \in \Configuration$.
\begin{asparaenum}[(R1),topsep=0pt,partopsep=0pt,parsep=0pt]
\item  Suppose $r = (I,O) \in \SimpleRules$. If 
rule $r$ is \emph{enabled} at $s$ 
i.e. $s = s_0 \oplus I$ for some $s_0 \in \Config$, then {$s \to[rule=r,TS=\calN] s'$}
where \changed[jk]{$s' := s_0 \oplus O$.} 

\item  Suppose $r = ((p,I),(p',\calc,O)) \in \ComplexRules$. If rule $r$ is \emph{enabled} at $s$ 
i.e. $s = s_0 \oplus I \oplus \update{}{p}{\mset{m}}$ for some $s_0 \in \Config$,
then {$s \to[rule=r,TS=\calN] s'$} where 
$s' = s_0 \oplus O \oplus \update{}{p'}{\mset{m \oplus \calc}}$.

\item Suppose $r = 
((p, I), ({p'}, P, O)) \in \TransferRules$. If rule $r$ is \emph{enabled} at $s$ 
i.e. $s = s_0 \oplus I \oplus \update{}{p}{\mset{m}}$ for some $s_0 \in \Config$,
then {$s \to[rule=r,TS=\calN] s'$} such that 
\[
s' = s_0 \oplus O \oplus \paren{m_{P} \compose \colmap^{-1}} \oplus \update{}{p'}{\mset{m_{\overline{P}}}}
\]
where $m_{P} = m \restriction P$ and $m_{\overline{P}} = m \restriction (\Pinner \setminus P)$.
\end{asparaenum}
\end{definition}\noindent
A simple rule may insert new complex tokens into complex places; 
it may also remove empty complex tokens. A complex rule removes a complex token $m$ from a complex place $p$ and inserts a new complex token $m \oplus \calc$ to a complex place $p'$. A set of \emph{active colours} $P \subseteq \Pinner$ is associated to each transfer rule, which removes a complex token $m$ from a complex place $p$, and inserts into $p'$ the complex token $m_{\overline{P}}$ which is obtained from $m$ \emph{less} all tokens with an active colour $c$; for each such $c \in P$, these tokens are transferred to the simple place $\colmap(c)$ which corresponds 
to a \emph{multiset-addition} of $m_{P} \compose \colmap^{-1}$.

\subparagraph{Subclasses of NNCT.}
Let $\calN = (\Psimple,\Pcomplex,\Pinner,\Rules,\colmap)$ be an NNCT. If a transfer rule $r=((p,I),(p',P,O))$
is such that $P = \Pinner$ then we say $r$ is a \emph{total} transfer rule. If all transfer rules of $\calN$
are total then we say $\calN$ is a \emph{total transfer} NNCT.

\newcommand\redPl{\mathit{red}}
\newcommand\greenPl{\mathit{green}}
\newcommand\bluePl{\mathit{blue}}
\newcommand\blackPl{\mathit{black}}
\begin{figure}[t]
\def\svgwidth{\columnwidth}
\ifsvg%
\executeiffilenewer{example_net.svg}{example_net.pdf}%
{/Applications/Inkscape.app/Contents/Resources/bin/inkscape -z -D --file=example_net.svg %
--export-pdf=example_net.pdf --export-latex}%
\addFileDependency{example_net.pdf_tex}%
\addFileDependency{example_net.aux}%
 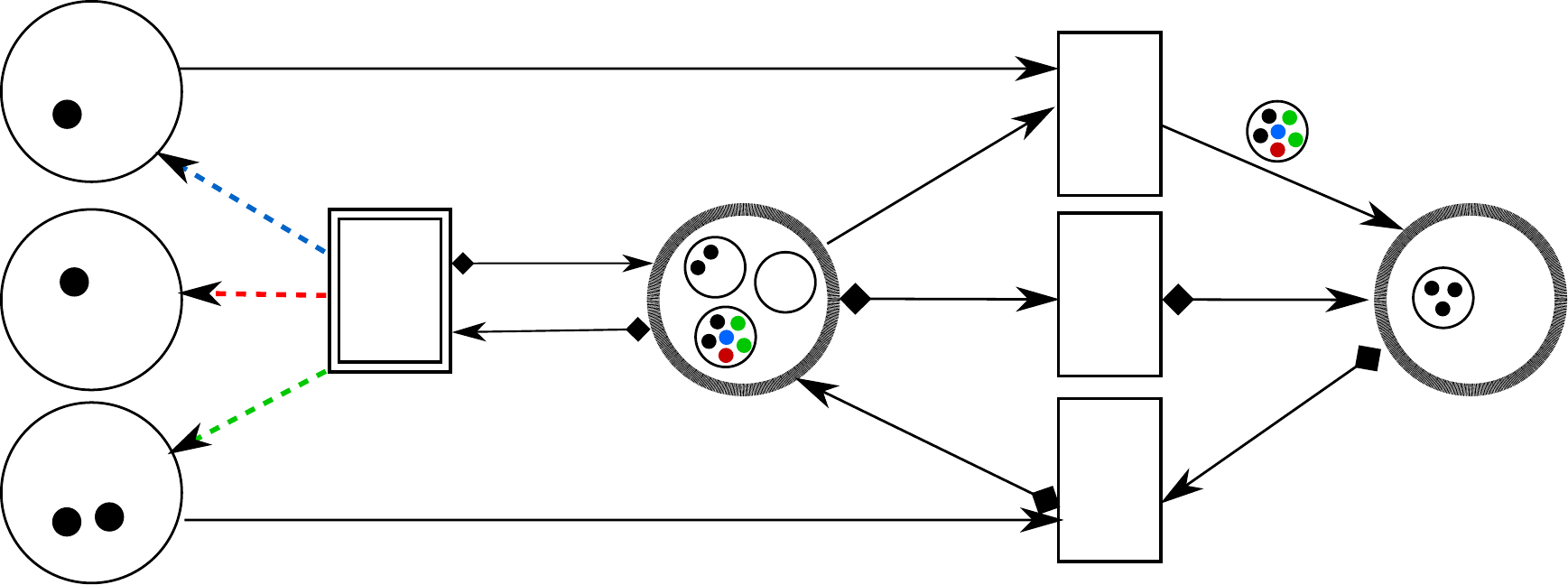
\fi

\vspace{-10pt}
\caption{The net of Example~\ref{eg:net} \label{fig:eg-net}}
\vspace{-5.29ex}%
\end{figure}
\begin{example}\label{eg:net}
We define a NNCT $\calN$ with places: $\Pcomplex=\makeset{p_1, p_2}, \discn \Psimple=\makeset{p_3, p_4, p_5}$, and 
$\Pinner = \makeset{\redPl, \greenPl, \bluePl, \blackPl}$, and colour mapping: $\colmap: \redPl \mapsto p_4, \greenPl \mapsto p_5, \bluePl \mapsto p_3$. 
The rule $r_1 = (\varset{p_1 \mapsto \mset{\vec{0}}, p_3 \mapsto \mset{\bullet}}, \varset{p_2 \mapsto \mset{m_1}})$ is simple where the complex token $m_1 = \varset{\blackPl \mapsto \varmset{\bullet^2}, \bluePl \mapsto \varmset{\bullet},\discn \greenPl \mapsto \varmset{\bullet^2},\redPl \mapsto \varmset{\bullet}}$.
The complex rules are:
\begin{align*}
r_2 &= ((p_1, \emptyset), (p_2, \varset{\blackPl \mapsto \mset{\bullet}}, \emptyset))\\
r_3 &= ((p_2, \varset{p_5 \mapsto \mset{\bullet}}), (p_1, \emptyset, \emptyset))
\end{align*}
The transfer rule is $r_4 = ((p_1,\emptyset), (p_1, \makeset{\redPl, \bluePl, \greenPl},\emptyset))$. 
Note that $\calN$ is \emph{not} a total transfer NNCT.
We graphically represent NNCT similarly to Petri nets.
In \Cref{fig:eg-net} we show a configuration of $\calN$. The complex place $p_1$ contains the complex token $m_1$, the empty complex token $\vec{0}$ (displayed as an empty circle) and the complex token $m_2 = \varset{\blackPl \mapsto \varmset{\bullet^2}}$.
The complex token $m_3 = \varset{\blackPl \mapsto \varmset{\bullet^3}}$ is located in complex place $p_2$. We  distinguish transfer rules by using double edged boxes and we display $\colmap$ and the set of active colours using dashed arrows. We indicate the origin and destination for complex tokens moved by complex and transfer rules with additional arcs that have a $\vardiamond$ end.
Boxes for complex rules are labelled with the colour marking $\calc$ to inject. Simple rules may have arcs
\emph{from} complex places labelled with $\vec{0}$ (indicating the removal of $\vec{0}$) and arcs \emph{to} complex places labelled with the complex token to be added. 
Rule $r_1$ removes an empty complex token from $p_1$ and a simple token from $p_3$, and adds the complex token $m_1$ to $p_2$. The complex rule $r_2$ non-deterministically selects a complex token e.g. $m_2$ and moves it to $p_2$ while
inserting one $\blackPl$-coloured token, i.e.~$m_2$ becomes $m_3$. 
%
Let us imagine for a moment that the simple places $p_3$, $p_4$, and $p_5$ are all empty. 
Rule $r_4$ non-deterministically selects a complex token in $p_1$, $m_1$ say, and distributes
its $\bluePl$, $\greenPl$ and $\redPl$-coloured tokens as indicated by the coloured dashed arrows to the simple
places $p_3$, $p_4$ and $p_5$ and turns them into simple (black) tokens. The result is that $p_3$ and $p_4$ contain a simple token each and $p_5$ two as displayed in \Cref{fig:eg-net}; the token $m_1$, less its $\bluePl$, $\greenPl$ and $\redPl$-coloured tokens, remains in place $p_1$ and becomes $m_2$.
\end{example} 

\subparagraph{NNCT are WSTS.}
We recall a few definitions:
let $(U,\leq)$ be a preordered set; we say $\leq$ is a \emph{well-quasi-order} (WQO) if for all infinite sequences $u_1,u_2,...$ there exist $i < j$ such that\linebreak[4] $u_i \leq u_j$. A WSTS is a PSTS $\calS=(S,\to[TS=\calS],\leq_{\calS})$ such that $\leq_{\calS}$ is a WQO and ${\to[TS=\calS]}$ is monotone with respect to $\leq_{\calS}$; 
a WSTS is said to be \emph{strict} if $\to[TS=\calS]$ is strictly monotone: 
we say $\to[TS=\calS]$ is \emph{(strictly) monotone} if $s \to[TS=\calS] s'$ and $s <_{\calS} t$ implies that there exists $t'$ such that $t \to[TS=\calS] t'$ and $s' \leq_{\calS} t'$ ($s' <_{\calS} t'$ respectively).

\noindent
We equip $\Configuration$ and $\Minner$ with preorders: let ${m, m'\!\in\!\Minner}$, we define $m \leqMinner m'$ if for all $p^{\I} \in \Pinner$ we have either $0 = |m(p^{\I})| = |m'(p^{\I})|$, or $0 < |m(p^{\I})| \leq |m'(p^{\I})|$. For $s,s'\!\in\!\Configuration$
we define $s\!\leqconfig\!s'$ if for all $p\!\in\!\Psimple$ we have $s(p)\!\leq_{\M[\set{\bullet}]}\!s(p')$
and $\forall p'\!\in\!\Pcomplex$ we have $s(p')\!\leq_{\M[\Minner]}\!s(p')$. 

\begin{lemma}\label{lem:config:wqo}
$(\commabr{\Minner,\leqMinner})$ and $(\commabr{\Configuration,\leqconfig})$ are WQO.
\end{lemma}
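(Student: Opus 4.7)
\begin{proof*}
The plan is to reduce both claims to well-known closure properties of WQOs: Dickson's lemma for $\N^k$, the multiset extension preserving WQO, and the finite union of WQOs being WQO.

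\textbf{For $(\Minner, \leqMinner)$.} The subtlety is that $m \leqMinner m'$ \emph{forces} $m$ and $m'$ to have the same support in $\Pinner$; on shared support the order is just the componentwise ordering on the positive integers. I will exploit that $\Pinner$ is finite. Given any infinite sequence $m_1, m_2, \ldots \in \Minner$, define the support $\operatorname{supp}(m_i) \is \set{p \in \Pinner \mid m_i(p) \neq \emptyset}$. Since $\Pinner$ is finite, there are only finitely many possible supports, so by pigeonhole there is an infinite subsequence $m_{i_1}, m_{i_2}, \ldots$ all sharing a common support $S \subseteq \Pinner$. Restricted to this subsequence the order $\leqMinner$ coincides with the componentwise order on $\N_{>0}^{S}$, which is a WQO by Dickson's lemma. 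Hence some $j < k$ satisfies $m_{i_j} \leqMinner m_{i_k}$, establishing the WQO property.

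\textbf{For $(\Configuration, \leqconfig)$.} A configuration splits into its simple component (a map $\Psimple \to \N$) and its complex component (a map $\Pcomplex \to \M[\Minner]$), and the ordering is the product of these two orderings. For the simple component, $\N^{\Psimple}$ with componentwise order is WQO by Dickson's lemma (since $\Psimple$ is finite). For the complex component, the first step shows $(\Minner, \leqMinner)$ is a WQO; the multiset extension $\leq_{\M[\Minner]}$ of a WQO is again a WQO (this is the standard fact that Higman's/Dickson's theorem lifts along the $\M[\cdot]$ functor). Since $\Pcomplex$ is finite, $(\Pcomplex \to \M[\Minner])$ with pointwise order is a finite product of WQOs, hence WQO. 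Finally, the product of two WQOs is a WQO, which gives $\leqconfig$.

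The only nonroutine step is the first one, because of the unusual same-support requirement built into $\leqMinner$; once that is unpacked via the finiteness of $\Pinner$ and pigeonhole, the remainder is a direct application of standard WQO closure properties.
\end{proof*}
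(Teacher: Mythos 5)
Your proof is correct and follows essentially the same route as the paper: your pigeonhole argument on supports is just the sequence-level phrasing of the paper's decomposition of $\Minner$ into finitely many same-support classes (each a WQO by Dickson), using closure of WQOs under finite unions. The treatment of $\Configuration$ via Dickson for the simple places, the multiset extension for $\M[\Minner]$, and finite products matches the paper's argument as well.
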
\noindent
Thanks to the use of the refined order $\leqMinner$ 
the transition relation $\to[TS=\calN]$ for a NNCT $\calN$ is strictly monotone: 

\begin{proposition}\label{prop:nnct:wsts}
$(\Configuration,\to[TS={\calN}],\leqconfig)$ is a strict WSTS.
\end{proposition}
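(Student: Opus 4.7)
The plan is to combine Lemma~\ref{lem:config:wqo} (for the WQO part) with a case-analysis proof of strict monotonicity over the three kinds of rules. Given $s \to[TS=\calN] s'$ via rule $r$ and $s \lessconfig t$, my goal is to exhibit $t'$ with $t \to[TS=\calN] t'$ via the same rule $r$ and $s' \lessconfig t'$.

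As preparation I would first record three stability properties of $\leqMinner$: (A) $m \leqMinner m'$ implies $m \oplus \calc \leqMinner m' \oplus \calc$ for any $\calc \in \Minner$; (B) $m \leqMinner m'$ implies $m \restriction Q \leqMinner m' \restriction Q$ for any $Q \subseteq \Pinner$; and (C) since $m \leqMinner m'$ forces $m$ and $m'$ to share their support, the pointwise inequality $m(p^{\I}) \leq m'(p^{\I})$ in fact holds at \emph{every} $p^{\I} \in \Pinner$. Then I would handle the cases. For a simple rule $r = (I, O)$, observation (C) is what makes monotonicity go through: by definition of $\SimpleRules$ the complex part of $I$ contains only $\vec{0}$ tokens, and support-matching forces any token matched to a $\vec{0}$ on the $t$-side to also be $\vec{0}$, so $t$ decomposes as $t_0 \oplus I$ with $s_0 \leqconfig t_0$, and firing yields $t' = t_0 \oplus O$. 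For a complex rule $r = ((p, I), (p', \calc, O))$ consuming some $m \in s(p)$, I would use the multiset embedding $s(p) \leq_{\M[\Minner]} t(p)$ to pick a witness $m'' \in t(p)$ with $m \leqMinner m''$, fire at $m''$, and close the diagram with (A). For a transfer rule $r = ((p, I), (p', P, O))$, the analogous choice of $m''$ enables $r$ at $t$; the residual complex tokens then satisfy $m_{\overline{P}} \leqMinner m''_{\overline{P}}$ by (B), while (C) combined with the bijection $\colmap \restriction P$ sends the pointwise order cleanly into the simple places $\colmap(P)$.

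The main obstacle will be routing \emph{strict} inequalities through the transfer rule. Strictness in $s \lessconfig t$ can originate from (i) a strict difference in the ``unused'' part, $s_0 \lessconfig t_0$, (ii) an extra unmatched complex token in $t(p)$, or (iii) the chosen pairing being strict, $m <_{\Minner} m''$. Sources (i) and (ii) survive into $t'$ immediately by construction. For (iii), the refined order supplies a coordinate $p^{\I}$ with $0 < m(p^{\I}) < m''(p^{\I})$, and this surplus is routed either into the simple place $\colmap(p^{\I})$ if $p^{\I} \in P$, or into the residual token $m''_{\overline{P}}$ if $p^{\I} \in \Pinner \setminus P$. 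This dichotomy is exactly the role of the refined order: without it, a support-strict coordinate (where $m(p^{\I}) = 0 < m''(p^{\I})$) could land inside $m''_{\overline{P}}$ and leave $m_{\overline{P}}$, $m''_{\overline{P}}$ $\leqMinner$-incomparable, so that even plain monotonicity would fail.
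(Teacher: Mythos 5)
Your proposal is correct and follows essentially the same route as the paper: WQO via Lemma~\ref{lem:config:wqo}, then a three-case strict-monotonicity argument resting on monotonicity of $\oplus$ and $\restriction$ and on the fact that under $\leqMinner$ an empty complex token can only be matched to an empty one (the paper packages the token-removal step as a separate lemma, choosing a $\leqMinner$-minimal witness where you take the one matched by the embedding). The only point to make explicit is that ``pick a witness $m'' \in t(p)$ with $m \leqMinner m''$'' must mean the token paired with $m$ by the witnessing injection (or a minimal such token, as in the paper): an arbitrary dominating token can leave the residual multisets $s(p)\ominus\mset{m}$ and $t(p)\ominus\mset{m''}$ incomparable, so the closing step $s'\leqconfig t'$ would fail for that choice.
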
\noindent
Let $\upwardclosure{S}\!=\!\varset{s : \exists s_0\!\in\!S, s_0\!\leq_{\calS}\!s}$ and $\Pred(S)\!=\!\varset{s : {s\!\to[TS=\calS]\!s'},\linebreak[4] s'\!\in\!S}$.
For a WSTS $\calS$, coverability, termination and bound-\linebreak[4]ed\-ness are decidable \cite{Finkel:01} provided that $\upwardclosure{\Pred(\upwardclosure{\{s\}})}$ is effectively computable for any $s \in S$, the WQO $\leq_\calS$ is decidable and (for bound\-ed\-ness:) $\calS$ is a strict WSTS.

\begin{theorem}\label{thm:nnct:cov:term:bound:decidable}
Coverability, termination and boundedness are decidable for NNCT.
\end{theorem}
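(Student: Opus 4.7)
The plan is to invoke the Finkel--Schnoebelen decidability framework cited immediately above the theorem. By the preceding proposition, $(\Configuration,\to[TS={\calN}],\leqconfig)$ is already known to be a strict WSTS, and by \Cref{lem:config:wqo} the WQO $\leqconfig$ is decidable: comparing two configurations reduces to pointwise comparison on $\Psimple$ and, on each $p\in\Pcomplex$, a bipartite matching of complex tokens under $\leqMinner$, which is itself a support-equality check followed by a pointwise comparison of finitely many nonzero entries. Therefore it suffices to show that for every $s\in\Config$ a finite basis of the minimal elements of $\upwardclosure{\Pred(\upwardclosure{\{s\}})}$ is effectively computable; this I would do by a case analysis on the three rule classes and then take the union over finitely many rules.

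For a simple rule $r=(I,O)\in\SimpleRules$ the backward step is the standard Petri-net one: starting from any target $t\in\upwardclosure{\{s\}}$, pre-condition the output $O$ against $t$, subtract it, and add $I$. The only new wrinkle is that $O$ may create/destroy \emph{empty} complex tokens $\vec{0}$; these can be accommodated by taking the minimal $t$ that dominates both $s$ and $O$, which produces finitely many candidates. For a complex rule $r=((p,I),(p',\calc,O))\in\ComplexRules$ I would observe that firing $r$ replaces a complex token $m$ in $p$ by $m\oplus \calc$ in $p'$. To cover $s$ afterwards one must match, in $s(p')$, a complex token $m^\star$ against $m\oplus \calc$; given such a choice, $m$ is uniquely determined by componentwise subtraction $m=m^\star\ominus \calc$ whenever well-defined, and the rest of the minimal predecessor is obtained by simple-place back-firing as above. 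Since $|s(p')|$ is finite, only finitely many matchings arise.

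The main obstacle, as anticipated, is the transfer rule $r=((p,I),(p',P,O))\in\TransferRules$, where the ``vertical'' movement between levels must be reversed. Firing $r$ removes a complex token $m$ from $p$, leaves $m_{\overline P}=m\restriction(\Pinner\setminus P)$ in $p'$, and deposits $m_{P}\compose\colmap^{-1}$ into the simple places $\colmap(P)$. To invert this, given a target $t\in\upwardclosure{\{s\}}$, I would guess a token $m^\star\in t(p')$ that will play the role of $m_{\overline P}$ and read off, from the contents of $t$ on the simple places $\colmap(P)$, the obligatory $P$-coloured part $m_P$ via the inverse of the bijection $\colmap\restriction P$. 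Because $\colmap^{-1}$ is well-defined on $\colmap(P)$ by hypothesis on transfer rules, the reconstruction $m=m^\star\oplus(m_P)$ is unique once $m^\star$ is fixed, and the only nondeterminism is the finite choice of $m^\star$ among the complex tokens already required at $p'$. The minimal predecessor is then $I$ plus the $(p\mapsto\mset{m})$ entry plus whatever simple-place slack $s$ demands outside $\colmap(P)$; this yields a finite set of minimal configurations.

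Taking the union, across all $r\in\Rules$ and all the finitely many matchings above, of the minimal configurations so produced gives a finite basis of $\upwardclosure{\Pred(\upwardclosure{\{s\}})}$, which is exactly what the Finkel--Schnoebelen result requires. Decidability of coverability and termination follow; decidability of boundedness follows from the strictness of the WSTS established in the proposition.
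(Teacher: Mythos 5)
Your route is exactly the paper's: the paper obtains the theorem by combining \Cref{prop:nnct:wsts} (strict WSTS), \Cref{lem:config:wqo} (decidable WQO $\leqconfig$) and the effective computability of $\upwardclosure{\Pred(\upwardclosure{\{s\}})}$, and then citing \cite{Finkel:01}; the pred-basis computation itself is left implicit there, and that is the part you try to spell out.

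Your case analysis is the right idea and the conclusion stands, but several of your ``uniqueness'' claims are false because of the refined order $\leqMinner$, which compares a colour only when both sides are nonzero (so a token with zero tokens of colour $c$ is \emph{not} below one with a positive count, and $\vec{0}$ lies below no complex token but itself). For a complex rule, $m=m^\star\ominus\calc$ is not ``the'' minimal choice: whenever $\calc(c)\geq m^\star(c)>0$, both $m(c)=0$ and $m(c)=1$ are $\leqMinner$-minimal and incomparable, and you also omit the case where the moved token covers no token of $s(p')$ at all, for which the minimal choices of $m$ range over the $2^{\ninner}$ support patterns with entries in $\set{0,1}$. For a transfer rule, the tokens that $s$ demands on $\colmap(P)$ need not all be ``read off'' into $m_P$: minimal predecessors also arise from every split of those tokens between the simple places themselves and the $P$-coloured content of the transferred token, and these splits are pairwise $\leqconfig$-incomparable, so $m$ is not determined once $m^\star$ is fixed. (Similarly, for simple rules $O$ may deposit arbitrary, not just empty, complex tokens, so the same matching enumeration is needed there.) As literally stated, your basis under-approximates $\Pred(\upwardclosure{\{s\}})$ and the backward algorithm could wrongly report non-coverability; the repair is routine, though: in each case the missing candidates form a finite, effectively enumerable set (matchings, $\set{0,1}$ support patterns, and splits bounded by $s$ and the rule), so effectiveness of the pred-basis, and hence decidability via \cite{Finkel:01}, goes through.
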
\noindent
By construction, NNCT can implement the alternative semantics of APCPS and \emph{vice versa}. We encode processes as complex tokens and allocate a colour for each pair in $\Sigma \times \varset{0,...,K}$ in order to encode summaries as coloured tokens.
Channels are encoded in simple places and rule \Cref{altrule:interleave:com} is implemented by a slightly modified CCFG widget \emph{\`a la} \citeauthor{Ganty:2012}~\cite{Ganty:2012}.
The resulting NNCT has $O(n\cdot|\Sigma||\NonT|)$ simple places, $O((|\NonT||\Sigma|)^{O(K)})$ complex places, 
$O(K\cdot|\Sigma|)$ colours and $O(n \cdot |\calR| \cdot (|\NonT||\Sigma|)^{O(K)})$ rules where $A_1 \parallel \cdots \parallel A_n \ChanPar \Gamma$ is the configuration to be covered.

\begin{theorem}\label{thm:APCPS-to-NNCT}
Simple coverability for $K$-shaped APCPS in the al\-ter\-na\-tive semantics \Exptime-time reduces to NNCT coverability.
\end{theorem}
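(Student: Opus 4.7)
The plan is to construct, for a given $K$-shaped APCPS $\calG$ and simple coverability query $\mathpzc{Q} = (\calG, \Pi_0 \ChanPar \Gamma_0, A_1 \parallel \cdots \parallel A_n \ChanPar \Gamma)$, an NNCT $\calN$ together with a coverability query $\mathpzc{Q}'$ on $\calN$ such that $\mathpzc{Q}$ is a yes-instance in the alternative semantics iff $\mathpzc{Q}'$ is. First I would fix the encoding. Each alternative process $\pi' = \delta M_0 X_1 M_1 \cdots X_m M_m$ with $m \leq K$ is represented by a single complex token. Since the shape constraint bounds the number of admissible ``heads''---the non-terminal skeleton $(\delta, X_1, \ldots, X_m)$ together with any finite information that must accompany it---by $O((|\NonT||\Sigma|)^{O(K)})$, I would allocate one complex place per admissible head, so that a complex token residing in $p_h$ encodes a process with head $h$. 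The unbounded summary content is stored inside the complex token using $O(K \cdot |\Sigma|)$ colours indexed by symbol and slot $(a,i)$, the slot index $i \leq K$ recording the depth of the summary. Channels are encoded by simple places $p_{c,m}$ for $(c,m) \in \Chan \times \Msg[]$, together with auxiliary simple places for pending spawns.

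Next I would simulate each rule of $\toCM$. Rules \Cref{altrule:interleave:std}, \Cref{altrule:receive}, \Cref{altrule:send} and \Cref{altrule:spawn} merely change the head and/or touch a channel or spawn place; they become combinations of complex and simple rules that move the complex token between head-places while adjusting simple-place counts. The pop rules \Cref{altrule:dispatch:term} and \Cref{altrule:dispatch:nonterm} are realised by transfer rules: \Cref{altrule:dispatch:term} uses a transfer whose active colours are exactly the terminal-symbol colours at the currently active slot $i$, with the colour mapping sending each $(a,i)$ either to the channel place $p_{c,m}$ (if $a = \snd{c}{m}$) or to a pending-spawn place (if $a = \spn{A}$), the latter being subsequently converted into a new complex token by simple rules; \Cref{altrule:dispatch:nonterm} uses an analogous transfer but routes the residual complex token to a blocked head-place that admits no further progress, faithful to the semantics.

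The main obstacle will be simulating rule \Cref{altrule:interleave:com}, which for a rule $A \to B\, C$ with $C$ commutative must inject the Parikh image $\M(w)$ of some $w$ with $C \to^* w$ into the active slot of the summary. I would reuse the CCFG Parikh widget of \cite{Ganty:2012,EsparzaGKL:2011}, which realises such Parikh images in a flat Petri net of polynomial size per commutative non-terminal. The widget must be spliced into $\calN$ so that each output symbol $a$ produced by the widget is immediately injected as a coloured token $(a,i)$ into the complex token of the caller. My plan is to park the caller's complex token in an auxiliary head-place while the widget runs on shared simple places, and to route every widget emission through a complex rule that consumes the widget-output simple token and injects the corresponding coloured token. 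Because the widget's computation is causally independent of other processes' state, interleaving remains sound, and non-deterministic abandonment of the widget realises the branch of \Cref{altrule:interleave:com} in which the precomputation is not carried through to completion.

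Finally, I would establish a step-by-step correspondence between $\toCM$-steps of $\calG$ and maximal simulation sequences in $\calN$, and verify that the refined process order $\leq_{\Procs[]'}$ matches $\leqMinner$ on complex tokens with identical heads and that $\leqAPCPS[']$ matches $\leqconfig$ on configurations; the ``0-or-present'' clause in $\leqMinner$ is precisely what is needed for this match and for the strict monotonicity of \Cref{prop:nnct:wsts}. Summing the sizes of $\Psimple, \Pcomplex, \Pinner$ and $\Rules$ then matches the bounds quoted in the paper, and the whole construction is computable in time exponential in the size of $\mathpzc{Q}$, yielding the claimed \Exptime\ reduction.
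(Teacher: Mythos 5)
Your overall architecture coincides with the paper's: one complex token per alternative process, the head skeleton folded into the name of the complex place, colours indexed by (commutative action, summary level), simple places for channels and pending spawns, transfer rules for the two dispatch rules, and a Ganty--Majumdar CCFG widget whose emissions are injected as coloured tokens to realise \Cref{altrule:interleave:com}. The genuine gap is your final step. The orders do \emph{not} match, so the APCPS query cannot be cast directly as an NNCT coverability query: under $\leq_{\Procs[]'}$ a simple target $A_i$ is covered by any reachable process in which $A_i$ occurs as the head, as one of the $X_j$, or \emph{inside one of the summaries} $M_j$ (summaries are multisets over $\NonT \cup \ComSigma$, and \Cref{altrule:interleave:com} places non-terminals into them via partial derivations). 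In your encoding summary non-terminals have no colours at all, processes with different skeletons sit in different complex places where $\leqconfig$ makes them incomparable, and the zero-or-positive clause of $\leqMinner$ means that even $\emptyset \leq_{\M} M$ does not transfer to the tokens. The paper therefore does not rely on any order correspondence: it enriches each complex-place name with a ``character'' per summary level drawn from $\{+,-\}\cup\{A^{\text{cov}}_1,\ldots,A^{\text{cov}}_n\}$, lets the CCFG widget nondeterministically expose a query non-terminal in that character, and adds a mode switch together with query widgets that consume a complex token from any place exposing $A^{\text{cov}}_i$ (as head or as character) and deposit a plain token in a dedicated simple place; the NNCT query then asks only for these simple places plus the channel places. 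Without a mechanism of this kind your reduction misses covering witnesses whose $A_i$ is buried in a summary or sits under a longer skeleton, so the ``if'' direction fails.

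Two further steps in your sketch are unsupported. First, because summary non-terminals are invisible, the choice between \Cref{altrule:dispatch:term} and \Cref{altrule:dispatch:nonterm} (i.e.\ $\TermCache$ versus $\MixedCache$) must be recorded as finite information in the place name and must be set correctly when the widget exits; in the paper this is the $+/-$ character, and the exit rule for $+$ is guarded by a full-budget check so that an abandoned precomputation can never later be treated as a terminated one --- otherwise the NNCT could pop past a summary that should block, which is unsound. Second, ``interleaving remains sound'' for the shared widget is not automatic: if two processes run the widget concurrently on the same non-terminal counter places, their emissions can be cross-injected, yielding summaries that are not Parikh images of any single derivation; the paper prevents this by letting the simulation/CCFG mode token act as a mutex, and your construction needs an analogous serialisation.
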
\noindent
Given a total-transfer NNCT $\calN$, we can construct a simulating APCPS $\calG$ that uses channels to encode simple places. 
A process with a single summary simulates a complex token and the coloured tokens it carries. 
$\calN$ runs
 a control process that executes rules by manipulating the state of processes and channels by communication through auxiliary channels. 
 An injection of a multiset of coloured tokens $\calc$ by a complex rule is simulated by forcing a process $\pi$ to add the summary of a non-terminal representing $\calc$ to $\pi$'s summary. 
 A transfer rule is simulated by\linebreak[4] forcing a process encoding a complex token to dispatch its summary. 
 Simple coverability may then be decided on $\calG$. A\linebreak[4] NNCT coverability query $\mathpzc{Q}\!=\!(\calN\!,s_0,s_{\text{cov}})$ is \emph{simple}, if $s_0$ and\linebreak[4] $s_{\text{cov}}$ contain \emph{no} complex tokens: $\forall p\!\in\!\Pcomplex\,\,\,s_0(p)\!=\!s_{\text{cov}}(p)\!=\!\emptyset$.

\begin{theorem}\label{thm:NNCT-to-APCPS}
Simple coverability, boundedness and termination for a total-transfer NNCT
\Exptime\ reduces to simple coverability, boundedness and termination respectively for a $\mathit{4}$-shaped APCPS in the alternative semantics.
\end{theorem}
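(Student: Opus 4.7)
My plan is to build, given a total-transfer NNCT $\calN = (\Psimple, \Pcomplex, \Pinner, \Rules, \colmap)$, an APCPS $\calG$ whose alternative-semantics runs are bisimilar to those of $\calN$ at a sequence of ``quiescent'' points. The encoding I will use is: each simple place $p \in \Psimple$ becomes a channel $c_p$ carrying a single $\bullet$-message; each colour $c \in \Pinner$ becomes a commutative non-terminal $A_c$ with the single rule $A_c \to \snd{c_{\colmap(c)}}{\bullet}$; and each complex place $p \in \Pcomplex$ becomes a non-commutative ``place symbol'' $X_p$. A complex token $m$ in $p$ is represented, in canonical form, by a process $X_p \cdot M_m \cdot X_\bot$, where $M_m = \bigoplus_{c \in \Pinner} \mset{(\snd{c_{\colmap(c)}}{\bullet})^{m(c)}}$ is a commutative summary encoding the coloured contents of $m$, and $X_\bot$ is a non-commutative ``bottom'' marker placed below the summary precisely to license the dispatch rule later. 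A single auxiliary control process orchestrates the firing of NNCT rules via distinct command messages $\mathrm{cmd}_r$ broadcast on a dedicated $\mathrm{ctrl}$ channel.

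I would then simulate each class of rule in turn. Simple rules are immediate: the control process performs the receives/sends on the $c_p$-channels dictated by $I_r$ and $O_r$ and, if $r$ introduces an empty complex token at $p'$, spawns a fresh $X_{p'} \cdot \emptyset \cdot X_\bot$. A complex rule $((p,I),(p',\calc,O))$ is simulated by the control performing its $I/O$ channel manipulations and emitting $\mathrm{cmd}_r$; one canonical process at $p$ consumes the command by $X_p \to (\rec{\mathrm{ctrl}}{\mathrm{cmd}_r}) \cdot Y_r$ and then fires $Y_r \to X_{p'} \cdot C_\calc$, where $C_\calc$ is a purpose-built commutative non-terminal with a derivation $C_\calc \to[*] w_\calc$ satisfying $\M(w_\calc) = \bigoplus_c \mset{(\snd{c_{\colmap(c)}}{\bullet})^{\calc(c)}}$. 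Since $X_{p'} C_\calc$ satisfies condition $(\dagger)$, rule \Cref{altrule:interleave:com} applies and yields $X_{p'} \cdot (M_m \oplus \M(w_\calc)) \cdot X_\bot$, exactly the canonical form of the updated token at $p'$.

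The total transfer rule $((p,I),(p',\Pinner,O))$ is the one that truly exploits the alternative semantics' dispatch machinery. After the control handles $I,O$ and emits $\mathrm{cmd}_r$, the chosen canonical process executes $X_p \to (\rec{\mathrm{ctrl}}{\mathrm{cmd}_r}) \cdot Z_{p'}$ and then $Z_{p'} \to \epsilon$ via \Cref{altrule:interleave:std}, leaving the stack as $\epsilon \cdot M_m \cdot X_\bot$. Rule \Cref{altrule:dispatch:term} is now applicable (since $M_m \in \M[\ComSigma]$) and ejects $M_m$ as $\bullet$-messages deposited into the channels $c_{\colmap(c)}$ in quantities matching $m$, which is precisely the total-transfer behaviour. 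A closing rewrite $X_\bot \to X_{p'} \cdot X_\bot$ restores the canonical shape at $p'$. Inspecting the stack through each administrative sequence shows that the non-commutative count never exceeds $4$, so the produced APCPS meets the $K=4$ bound of the statement.

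The hard part will be formalising the bisimulation. I will call an APCPS configuration \emph{quiescent} when the control sits in its idle state and every complex-token process is in its canonical $X_p \cdot M \cdot X_\bot$ form; quiescent configurations are in natural bijection with NNCT configurations (simple-place contents read off the $c_p$-channels, complex-place contents read off the processes). I would then show that each NNCT step is realised by a short trajectory between adjacent quiescent configurations and, conversely, that every maximal trajectory between two quiescent configurations factors through a finite sequence of such rule simulations. The subtle point is that rule \Cref{altrule:interleave:com} may non-deterministically pick any $w$ with $C_\calc \to[*] w$, including words containing commutative non-terminals, which would render the subsequent dispatch only partial via \Cref{altrule:dispatch:nonterm}; I would design $C_\calc$ so that any such ``premature'' $w$ leaves the process permanently stuck at the next attempted dispatch, so these runs cover no strictly larger configuration than the faithful ones do. Once this is established, simple NNCT coverability, boundedness, and termination pull back immediately to the corresponding problems on $\calG$ via the bijection, and the construction is elementary in $|\Psimple| + |\Pcomplex| + |\Pinner| + |\Rules|$, comfortably within the \Exptime\ budget.
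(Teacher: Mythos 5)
Your construction is, in all essentials, the one the paper itself uses: channels hold $\bullet$-messages for simple places, each complex token is a separate process whose single commutative summary records its coloured contents as pending sends to the $\colmap$-designated channels, injection is realised by appending a commutative non-terminal whose Parikh image is the injected multiset (so that \Cref{altrule:interleave:com} folds it into the summary), total transfer is realised by exposing and dispatching the summary, and a single control process drives rule firings through auxiliary channels, with the same $K=4$ shape bound and the same treatment of the spurious partial derivations of the injection non-terminals.

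There is, however, a concrete soundness gap in your transfer simulation. After the dispatch, the head of the token process is the generic bottom marker $X_\bot$, and your restoring rewrite $X_\bot \to X_{p'} \cdot X_\bot$ is unguarded: since $X_\bot$ is one symbol shared by all token processes, the grammar must contain such a rule for every target place of every transfer rule, so the emptied token may re-materialise in an arbitrary complex place rather than the place $p'$ prescribed by the rule being simulated. This yields APCPS behaviours with no NNCT counterpart (a transfer that also teleports the token), and in general it changes the answers to coverability, boundedness and termination. The paper closes exactly this hole with a handshake: after issuing the transfer, the control sends a message naming $p'$ (its $m_{p',\emptyset}$) on a dedicated channel, and the bottom marker's rewrite is guarded by receiving that message before producing the new head for place $p'$. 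Relatedly, your control never waits for an acknowledgement after emitting $\mathrm{cmd}_r$, so command messages can linger in the control channel across your ``quiescent'' points: a configuration can have an idle control and all token processes in canonical form while an unconsumed command still encodes half of an unfinished rule, which breaks the claimed bijection between quiescent APCPS configurations and NNCT configurations and lets distinct rule simulations interleave incorrectly. Both defects are repairable with the request/acknowledge discipline the paper adopts (every guarded rewrite of a token process ends by sending an acknowledgement on which the control blocks); with that fix your quiescent-point weak-bisimulation argument goes through essentially as in the paper's proof.
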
\noindent
We show in the next two sections that NNCT coverability is \Tower-complete.
\Tower\ was recently introduced by \citeauthor{Schmitz:2013} to provide a meaningful complexity class for decision problems of non-elementary complexity \cite{Schmitz:2013}. The notion of reduction used in \Tower\ are \Elementary-time reductions which give rise to \Tower-complete problems. The \Exptime\ reductions of \Cref{thm:APCPS-to-NNCT} and \Cref{thm:NNCT-to-APCPS} are thus sufficient to infer that coverability for APCPS is \Tower-complete.


\newcommand{\vsimrule}{\sim}
\section{Upper Bound: A Nested Rackoff Argument.}\label{sec:nnct:upperbound}\noindent
Let us fix a NNCT $\calN = (\Psimple,\Pcomplex,\Pinner,\Rules,\colmap)$ with a coverability query $(\calN,s_0,s_\text{cov})$ throughout this section. We enumerate $\calN$'s simple places $\Psimple = \varset{\enumelem{p}{1}, \ldots, \enumelem{p}{\nsimple}}$, complex places $\Pcomplex = \varset{\enumelem{p'}{1}, \ldots, \enumelem{p'}{\ncomplex}}$ and
 colours $\Pinner = \varset{\enumelem{p^{\I}}{1}, \ldots, \enumelem{p^{\I}}{\ninner}}$. 
 
Our proof of \Tower-membership for NNCT coverability is inspired by 
\citeauthor{Rackoff:78}'s method \cite{Rackoff:78}. 
%
The \citeauthor{Rackoff:78} method constructs a bound on the \emph{covering radius} of a given target state $s_\text{cov}$, which is the greatest \emph{covering distance} to $s_\text{cov}$ from an arbitrary state $s$, where the covering distance is the length of the shortest covering path. Typically the covering radius can then be used to establish a bound $\calB$ on the space required for a machine representation of any state along a covering path for $s_\text{cov}$. Depending on the flavour of VAS, it is then easy to see that a $\calB$-space bounded non-deterministic/alternating Turing machine can find a covering path, if there is one, and reject a path if its length exceeds the covering radius.

\begin{definition}
\label{def:covering-radius}
Suppose $\calS = (S,\to[TS=\calS],\leq_{\calS})$ is a PSTS.
We say $\vec{s} \in S^*$ is a \emph{path} if for all $1 \leq i < |\vec{s}|$, $\vec{s}(i) \to[TS=\calS] \vec{s}(i+1)$. A path $\vec{s}$ is \emph{covering} for $s'$ from $s$ if 
$s' \leq_{\calS} \vec{s}(|\vec{s}|)$ and $\vec{s}(1) = s$.

We define $\dist{s}{s'}{\calS}$, the \emph{covering distance} between $s$ and $s'$, 
as follows: 
if there exists a covering path for $s'$ from $s$ 
then 
$$\dist{s}{s'}{\calS} 
:= \min\set{|\vec{s}| \mid \vec{s} \text{
covering path for $s'$ from $s$}};$$
otherwise set $\dist{s}{s'}{\calS}\!:=\!0$.
We define the \emph{covering radius} for $s'$ as $\length{\calS}{s'} \is \sup\set{ \dist{s}{s'}{\calS} \mid s \in S_{\calS}}$, 
and
the \emph{covering diameter} of a subset $S' \subseteq S_{\calS} \times S_{\calS}$,
$\diameter[\calS]{S'}\!\is\!\sup\set{ \dist{s}{s'}{\calS} \mid (s,s') \in S'}$, i.e.~the maximum covering distance between any pair in $S'$.
\end{definition}\noindent
To determine a bound on the covering radius a more general, relativised problem is considered: 
suppose the contents of the last $n-i$ places are \emph{ignored} and the first $i$ places
are \emph{not ignored}, how can we bound the covering radius $\rho_i$ for $s_\text{cov}$? 

%
One way to ignore the contents of simple places in $\calN$ is to let their contents be ``unbounded'': define ${\Msimple}^\infty = \Psimple \to \M^{\infty}[\set{\bullet}]$
where we write $\M^\infty[U]$ for multisets over elements in $U$ with possibly infinite multiplicity, i.e.~the set of functions $U \to \N^\infty$.
Configurations with possibly infinite simple markings can be defined as $\Config^\infty = \varset{m + m' \mid m \in {\Msimple}^{\infty},m' \in \Mcomplex}$. We extend the transition relation $\to[TS=\calN]$ in the obvious way to $\Config^\infty$.
For each $i \leq \nsimple$ we then define a new transition relation $\to[TS=\calN_i]$ that formalises that we 
ignore the last $\nsimple-i$ simple places i.e.~$s \to[TS=\calN_i] s'$ just if $s \to[TS=\calN] s'$ and for all 
$i < j \leq \nsimple$ $s(\enumelem{p}{j})(\bullet) = s'(\enumelem{p}{j})(\bullet) = \infty$.
Writing $\calN_i = (\Config^\infty, \to[TS=\calN_i],\discn\leqconfig)$ the relativised problem is then to determine the 
covering radius $\length{\calN_i}{s_\text{cov}}$.

The core argument underlying the general \citeauthor{Rackoff:78} method establishes a recurrence relation that relates $\rho_{i+1}$ to $\rho_{i}$. 
This is achieved by a careful analysis of the size or \emph{norm} of con\-fig-u\-ra\-tions that occur along covering paths for $s_\text{cov}$. 
In general, we say $\norm{-}$ is a \emph{norm} for a set $S$ if $\norm{-}$ is a function from $S$ to $\N^{\infty}$ and we extend $\norm{-}$ to a norm $\norm{-}^*$ on sequences $S^*$ in the usual way $\norm{\vec{s}}^* = \max\set{\norm{\vec{s}} : 1 \leq i \leq |\vec{s}|} $. 
The general \citeauthor{Rackoff:78} method introduces a family of norms $\varparen{\norm[i]{-}}_{i=0}^{n}$ where each 
$\norm[i]{-}$ ignores the contents of the last $(n-i)$ places. 
Two facts are then established:
\begin{inparaenum}[(i)]
\item there exists an upper bound $B(\rho_i)$ on how many tokens can be removed in a path of length $\rho_i$; and
\item for any covering path $\vec{s}$ for $s_\text{cov}$ one of two cases apply: either
\begin{inparaenum}[(C$_1$)]
\item(path:case:1) $\norm[i+1]{\vec{s}}^* < B(\rho_i)$; or
\item(path:case:2)
$\vec{s}$ can be split at a \emph{pivot} $s_p$, i.e.~$\vec{s} = \vec{s}_1 \cdot s_p \cdot \vec{s}_2$, such that:
\begin{inparaenum}[(P$_1$)]
\item(pivot:property:1) $\norm[i+1]{\vec{s}_1}^* < B(\rho_i)$; and
\item(pivot:property:2) one \emph{not} ignored place of $s_p$, the $i+1$'s say, contains more than $B(\rho_i)$ tokens.
\end{inparaenum}
\end{inparaenum}
\end{inparaenum}
As a consequence of \ref{pivot:property:2} we may ignore the contents of the $i+1$'s place for any covering path
 for $s_{\text{cov}}$ from $s_p$ and thus we can replace $\vec{s}_2$ by a path $\vec{s}'_2$ with length at most $\rho_i$. 
 The recurrence relation is then established by noting that a path $\vec{s}$ to which case \ref{path:case:1} applies
 and a path $\vec{s}_1$ of case \ref{path:case:2} (which satisfies property \ref{pivot:property:1}) may be replaced
 by a path of length no more than the covering diameter $\mathpzc{d}_{i+1}$ of the set 
 $\set{(\vec{s}(1),\vec{s}(|\vec{s}|)) : \vec{s} \text{ path}, \norm[i+1]{\vec{s}}^{*} < B(\rho_i)}$. 
 This yields 
 \emph{the \citeauthor{Rackoff:78} recurrence relation} $\rho_{i+1}\!\leq\!\mathpzc{d}_{i+1}\!+\!\rho_i$.

It is challenging to apply \citeauthor{Rackoff:78}'s method to NNCT 
and we are forced to adjust the above argument in a few technical details. 
In the setting of NNCT we introduce two families of norms $\norm[\calN_i]{-}$ and $\norm[\calN_i;C]{-}$ on 
$\Config^{\infty}$. The norm $\norm[\calN_i;C]{-}$ ignores the simple places $\enumelem{p}{i+1},\ldots,\enumelem{p}{\nsimple}$ and the norm $\norm[\calN_i]{-}$ also ignores coloured tokens and complex places, i.e. for $s \in \Config^{\infty}$ we define
$\norm[\calN_i]{s} = \max\varset{|s(\enumelem{p}{j})| : j \leq i}$ and 
$\norm[\calN_i;C]{s} = \max\varparen{\varset{|s(\enumelem{p'}{j})|, \max_{m \in s(\enumelem{p'}{j})} \norm[\I]{m}
 : j \in \range{\ncomplex}} \union \varset{\norm[\calN_i]{s}}}$ where we further define the norm $\norm[\I]{-}$ on complex tokens by $\norm[\I]{m} = \max\varset{|m(\enumelem{p^{\I}}{j})| : j \in \range{\ninner}}$.

\subsection{Establishing the Rackoff Recurrence Relation}\noindent
First, we define bounds that are our key ingredients of $B(\rho_i)$:
\begin{align*}
R &:= \max\paren{\set{|I_r(p)|,|O_r(p)| \mid r \in \Rules, p \in \Psimple \union \Pcomplex}\union\set{1}},\\
R' &:= R + 1 + \changed[jk]{\max(\Xi \union \varset{\norm[\calN_{\nsimple};C]{s_{\text{cov}}}})}\text{ where}\\
&\Xi := \set{\norm[\I]{\calc} \mid \calc \in O_r(p'), p' \in \Pcomplex, r \in \SimpleRules}
\end{align*}\noindent
\begin{lemma}\label{lemma:pivot_strengthening:tools:i}
Suppose $\vec{s}$ is a covering path for $s_{\text{cov}}$ in $\calN_{i+1}$ and $|\vec{s}(1)(\enumelem{p}{i+1})| \geq R' \cdot \length{\calN_i}{s_{\text{cov}}}$. Then $\exists \vec{s}'$ a covering path
	   for $s_{\text{cov}}$ in $\calN_{i+1}$ such that $\vec{s}'(1) = \vec{s}(1)$ and $|\vec{s}'| \leq  \length{\calN_i}{s_{\text{cov}}}$.
\end{lemma}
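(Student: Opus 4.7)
The plan is a token-flooding reduction: the hypothesis supplies so many tokens at $\enumelem{p}{i+1}$ that I can treat this place as effectively unbounded, pull a short path from $\calN_i$, and lift it back to $\calN_{i+1}$. Write $\rho_i := \length{\calN_i}{s_{\text{cov}}}$. First define $\widetilde{s} \in \Config^\infty$ to coincide with $\vec{s}(1)$ on every place except that $\widetilde{s}(\enumelem{p}{i+1})(\bullet) := \infty$. Lifting the whole path $\vec{s}$ by setting $\enumelem{p}{i+1}$ to $\infty$ at every configuration produces a covering path for $s_{\text{cov}}$ in $\calN_i$ from $\widetilde{s}$; hence by definition of the covering radius there exists a shortest covering path $\vec{t}$ for $s_{\text{cov}}$ in $\calN_i$ from $\widetilde{s}$ of some length $\ell \leq \rho_i$, with rule sequence $r_1,\ldots,r_{\ell-1}$ (I also record the specific complex token chosen at every complex or transfer firing in $\vec{t}$).

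Next I replay $r_1,\ldots,r_{\ell-1}$ in $\calN_{i+1}$ starting from $\vec{s}(1)$, picking the same complex tokens at each firing, to produce a candidate sequence $\vec{s}'$. By construction $\vec{s}'$ and $\vec{t}$ evolve identically on every simple place other than $\enumelem{p}{i+1}$ and on every complex place (hence on all coloured-token contents of every complex token); they can diverge only at $\enumelem{p}{i+1}$, which is $\infty$ in $\vec{t}$ but finite in $\vec{s}'$. So the only way $r_k$ could fail to fire in $\calN_{i+1}$ is if $|\vec{s}'(k)(\enumelem{p}{i+1})| < |I_{r_k}(\enumelem{p}{i+1})|$; I therefore need a uniform per-step drain bound at $\enumelem{p}{i+1}$. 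A case analysis on the three rule kinds gives $|I_{r_k}(\enumelem{p}{i+1})| \leq R$: for simple rules this is immediate from the definition of $R$; for complex and transfer rules $I_{r_k} \in \ConfigSimple$ still has per-place multiplicity $\leq R$; crucially, the transfer component of a transfer rule only \emph{pours} coloured tokens into simple places via $\colmap^{-1}$ and never withdraws from $\enumelem{p}{i+1}$. Cumulative drain after any prefix is hence at most $R\cdot \ell \leq R\cdot \rho_i < R'\cdot \rho_i$, and the initial stock $|\vec{s}(1)(\enumelem{p}{i+1})| \geq R'\rho_i$ keeps every $r_k$ enabled; so $\vec{s}'$ is a genuine path of $\calN_{i+1}$.

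It remains to verify $s_{\text{cov}} \leqconfig \vec{s}'(\ell)$. For simple places $\enumelem{p}{j}$ with $j \leq i$, every complex place, and every colour inside a complex token, $\vec{s}'(\ell)$ agrees with $\vec{t}(\ell)$ and coverage is inherited from $\vec{t}$; for $j > i+1$ both configurations are $\infty$ by the definition of $\calN_{i+1}$; for $\enumelem{p}{i+1}$ the remaining stock is at least
\[ R'\rho_i - R\rho_i \;=\; (R'-R)\rho_i \;\geq\; \bigl(1 + \norm[\calN_\nsimple;C]{s_{\text{cov}}}\bigr)\rho_i \;\geq\; |s_{\text{cov}}(\enumelem{p}{i+1})|, \]
using the definition of $R'$ and $\rho_i \geq 1$, which holds because a covering path from $\widetilde{s}$ exists. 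Thus $\vec{s}'$ is a covering path for $s_{\text{cov}}$ in $\calN_{i+1}$ from $\vec{s}(1)$ with $|\vec{s}'| = \ell \leq \length{\calN_i}{s_{\text{cov}}}$, as required.

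The one delicate step — and the main obstacle — is the uniform per-step drain bound on $\enumelem{p}{i+1}$; once transfer rules are seen to inject coloured tokens into simple places rather than extract from them, so that their simple-place consumption is still capped by $R$, the remainder of the argument is pure bookkeeping.
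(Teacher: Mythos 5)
Your proposal is correct and follows essentially the same route as the paper's proof: lift the hypothesis path to $\calN_i$ by setting $\enumelem{p}{i+1}$ to $\infty$, take a covering path of length at most $\length{\calN_i}{s_{\text{cov}}}$ guaranteed by the covering radius, replay it from $\vec{s}(1)$, and use the per-step bound $R$ on withdrawals from $\enumelem{p}{i+1}$ together with $R'-R \geq 1 + \norm[\calN_{\nsimple};C]{s_{\text{cov}}}$ to get both enabledness and coverage. The paper phrases the replay via an induction giving $|\vec{s}_0(j)(\enumelem{p}{i+1})| \geq R'\cdot\length{\calN_i}{s_{\text{cov}}} - j\cdot R$, which is the same bookkeeping you carry out with the explicit rule/token sequence.
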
%
%
\noindent%
We then establish that covering paths fall into two categories.
%
\begin{lemma}\label{lem:rackoff_case_split}
For all covering paths $\vec{s}$ for $s_{\text{cov}}$ in $\calN_{i+1}$ either:
\begin{asparaenum}[(C$_1$)]
\item(lem:rackoff_case_split:c-1) $\norm[\calN_{i+1}]{\vec{s}}^* < R' \cdot \length{\calN_i}{s_{\text{cov}}}$; or
\item(lem:rackoff_case_split:c-2)  $\vec{s} = \vec{s}_1 \cdot s_p \cdot \vec{s}_2$ such that
$\norm[\calN_{i+1}]{\vec{s}_1}^* < R' \cdot \length{\calN_i}{s_{\text{cov}}}$ and
$\norm[\calN_{i+1}]{s_p} \geq R' \cdot \length{\calN_i}{s_{\text{cov}}}$. 
\end{asparaenum}
\end{lemma}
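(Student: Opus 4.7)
The plan is to prove the dichotomy by a simple extremal argument: walk along the covering path $\vec{s}$ and look for the first state whose $\norm[\calN_{i+1}]{-}$ reaches the threshold $R' \cdot \length{\calN_i}{s_{\text{cov}}}$. The presence or absence of such a state yields the two cases of the lemma.

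More precisely, first I would consider the set $J := \{\,j \in \range{|\vec{s}|} \mid \norm[\calN_{i+1}]{\vec{s}(j)} \geq R' \cdot \length{\calN_i}{s_{\text{cov}}}\,\}$. If $J = \emptyset$, then for every $j$ we have $\norm[\calN_{i+1}]{\vec{s}(j)} < R' \cdot \length{\calN_i}{s_{\text{cov}}}$, and so by the definition of the sequence norm as a max, $\norm[\calN_{i+1}]{\vec{s}}^* < R' \cdot \length{\calN_i}{s_{\text{cov}}}$, giving case \ref{lem:rackoff_case_split:c-1}.

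Otherwise let $j^* := \min J$ and set $s_p := \vec{s}(j^*)$, $\vec{s}_1 := \vec{s}(1) \cdots \vec{s}(j^* - 1)$ and $\vec{s}_2 := \vec{s}(j^* + 1) \cdots \vec{s}(|\vec{s}|)$, so that $\vec{s} = \vec{s}_1 \cdot s_p \cdot \vec{s}_2$. By choice of $j^*$, $\norm[\calN_{i+1}]{s_p} \geq R' \cdot \length{\calN_i}{s_{\text{cov}}}$, and by minimality of $j^*$ every state of $\vec{s}_1$ has norm strictly below the threshold, whence $\norm[\calN_{i+1}]{\vec{s}_1}^* < R' \cdot \length{\calN_i}{s_{\text{cov}}}$; this is case \ref{lem:rackoff_case_split:c-2}. (In the degenerate subcase $j^* = 1$, $\vec{s}_1$ is the empty sequence and the strict inequality holds vacuously, since $R' \geq 2$ and $\length{\calN_i}{s_{\text{cov}}} \geq 1$.)

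There is no real obstacle here: the lemma is essentially the observation that any sequence of real numbers either stays below a given bound throughout or admits a first index at which it attains the bound. The genuine content of the Rackoff-style argument has already been packaged into Lemma~\ref{lemma:pivot_strengthening:tools:i}, which does the work of showing that once such a pivot is reached, the suffix $\vec{s}_2$ can be replaced by a short path; the current lemma is only the book-keeping step that produces the pivot in the first place.
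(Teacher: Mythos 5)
Your proof is correct and is essentially the same argument as the paper's: the paper also splits the covering path at the first configuration whose $\norm[\calN_{i+1}]{-}$ reaches the threshold $R' \cdot \length{\calN_i}{s_{\text{cov}}}$, with case \ref{lem:rackoff_case_split:c-1} when no such configuration exists. Your extra remark about the degenerate pivot at position $1$ is a harmless refinement the paper leaves implicit.
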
\noindent
Unfortunately, we need stronger guarantees on the pivot con-\linebreak[4]fig\-u\-ra\-tion $s_p$. 
We require
that there exists a ``small'' predecessor $s_{p'}$ of $s_p$, i.e.~
$\norm[\calN_{i+1};C]{s_{p'}} < R' \cdot (\length{\calN_i}{s_{\text{cov}}} + 1)$, that 
is\linebreak[4] covered by $\vec{s}_1$. Of course, this does not hold for all pivot configurations, however, we can construct a pivot with this property.
We exploit property \ref{pivot:property:2} and two observations: 
let $\vec{s}_0$ be a path with $|\vec{s}| \leq L$ then 
\begin{inparaenum}[(O$_1$)]
\item along $\vec{s}_0$ only $L$ complex tokens can be moved/removed; and
\item for any complex token $m$ and occurring in $\vec{s}_0(1)$ at most $L \cdot R'$ carried coloured tokens of a given colour can be ejected and removed along $\vec{s}_0$.
\end{inparaenum}
\begin{lemma}\label{lemma:pivot_strengthening:tools:ii}
Suppose $\vec{s}$ is a covering path for $s_{\text{cov}}$ in $\calN_{i+1}$.
If $L \in \N$ such that $|\vec{s}| \leq L$ and $\norm[\calN_{i+1}]{\vec{s}(1)} < L \cdot R'$ then there exists a covering path $\vec{s}''$ for $s_{\text{cov}}$ in $\calN_{i+1}$ such that $\vec{s}''(1) \leqconfig \vec{s}(1)$, $\norm[\calN_{i+1};C]{\vec{s}''(1)} < L \cdot R'$, and $|\vec{s}''| \leq L$.
\end{lemma}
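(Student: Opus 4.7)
The plan is to produce $\vec{s}''$ by restricting $\vec{s}(1)$ to its \emph{essential skeleton} and replaying the same rule sequence. First, I identify which parts of $\vec{s}(1)$ actually contribute to the cover. By observation~(O$_1$), at most $L$ of the non-empty complex tokens appearing in $\vec{s}(1)$ are ever consumed as input by a complex or transfer rule of $\vec{s}$; call this collection $\Omega$. Every other non-empty complex token of $\vec{s}(1)$ is inert throughout $\vec{s}$ and will be discarded. Empty ($\vec{0}$) complex tokens are consumed only by simple rules, and at most $(L-1)\cdot R$ of them ever appear as inputs, so only that many per complex place need be retained.

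Next, for each $m \in \Omega$ I construct $m' \leqMinner m$ by truncating its coloured content to what is actually needed. Set $m'(c) = 0$ whenever $m(c) = 0$. When $m(c) > 0$, let $\mathit{useful}_{m,c}$ be the number of $c$-coloured tokens that are transferred out of the lineage of $m$ along $\vec{s}$ and subsequently consumed by a later rule, and set $m'(c) = \min(m(c),\,\max(1,\,\mathit{useful}_{m,c}))$. The floor of~$1$ preserves the strict support condition that $\leqMinner$ demands, while observation~(O$_2$) caps $\mathit{useful}_{m,c}$ by $L\cdot R'$, so $\norm[\I]{m'} \leq L\cdot R'$. Build $\vec{s}''(1)$ from $\vec{s}(1)$ by substituting each $m \in \Omega$ with $m'$, dropping the inert non-empty complex tokens, keeping only the necessary $\vec{0}$-tokens, and leaving the simple-place contents of $\vec{s}(1)$ untouched (they are already bounded by hypothesis). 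By construction $\vec{s}''(1) \leqconfig \vec{s}(1)$.

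Replay the rule sequence of $\vec{s}$ from $\vec{s}''(1)$ to obtain $\vec{s}''$: simple rules fire because their inputs are retained; complex rules fire because their only precondition on complex tokens is a presence check, never on coloured content; transfer rules fire regardless of coloured content, producing at each step at least every simple token that some later rule will consume, because $\mathit{useful}_{m,c}$ is defined to capture precisely that usage. Hence $\vec{s}''$ is a covering path for $s_\text{cov}$ of length $|\vec{s}''| \leq L$. For the norm: the simple-place norm is identical to that of $\vec{s}(1)$ and hence $< L\cdot R'$; each complex place of $\vec{s}''(1)$ holds at most $|\Omega| + (L-1)\cdot R \leq (L-1)(R+1)+1$ complex tokens, which is $< L\cdot R'$ thanks to the $+1$ slack built into the definition of $R'$; and the inner norm of each surviving $m'$ is likewise absorbed by the same slack.

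The main obstacle is the replay step: every firing in $\vec{s}$ must be matched by an enabled firing on the truncated configuration $\vec{s}''$. This forces the definition of $\mathit{useful}_{m,c}$ via ``ejected and subsequently consumed downstream''---anything weaker risks starving a later rule of simple tokens---together with the floor of~$1$ to respect the strict support condition of $\leqMinner$. Once these are in place, the verification reduces to a routine induction on the length of $\vec{s}$, tracking the lineage of each $m \in \Omega$ through successive complex and transfer-rule firings and checking at each step that the truncated configuration still produces every simple token the downstream suffix will consume.
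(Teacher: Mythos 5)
Your overall strategy is the paper's: prune the complex tokens of $\vec{s}(1)$ that are never moved and the coloured tokens that are never used, then replay the same rule sequence, relying on the facts that complex and transfer rules only test the presence of a complex token and that pruning can only make tokens become empty earlier. However, there is a genuine gap: your retention criterion only accounts for what the \emph{transitions} of $\vec{s}$ consume, and forgets what the final covering $s_{\text{cov}} \leqconfig \vec{s}(|\vec{s}|)$ consumes. The ``inert'' non-empty complex tokens you discard, and the coloured content you truncate away (your $\mathit{useful}_{m,c}$ counts only tokens ejected \emph{and} consumed by a later rule), may be exactly the tokens that witness the cover at the end of the path: $s_{\text{cov}}$ may require complex tokens with prescribed coloured content in the complex places, and simple tokens in the not-ignored simple places whose origin is an ejected-but-never-consumed coloured token. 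After your pruning the replayed sequence is still executable, but its last configuration need no longer cover $s_{\text{cov}}$, so $\vec{s}''$ need not be a covering path. The floor of $1$ does not repair this, because covering a complex token of $s_{\text{cov}}$ under $\leqMinner$ (hence under $\leqconfig$) requires matching multiplicities per colour, not merely non-empty support.

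The fix is the step the paper makes explicit: besides the moved/consumed tokens, retain per complex place up to $\norm[\calN_{\nsimple};C]{s_{\text{cov}}}$ additional complex tokens, and per surviving token and colour up to $\norm[\calN_{\nsimple};C]{s_{\text{cov}}}$ additional coloured tokens, namely those used to justify $s_{\text{cov}} \leqconfig \vec{s}(|\vec{s}|)$ (equivalently, in your scheme, treat tokens needed for the final cover as if they were consumed). The arithmetic still closes because $R'$ was defined with exactly this slack, $R' \geq R + 1 + \norm[\calN_{\nsimple};C]{s_{\text{cov}}}$, giving $R\cdot L + \norm[\calN_{\nsimple};C]{s_{\text{cov}}} \leq R'\cdot L$; your bound $(L-1)(R+1)+1$ is simply replaced by this one. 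With that amendment your argument coincides with the paper's proof.
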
\noindent
Superfluous complex and coloured tokens can thus be removed along $\vec{s}_1(|\vec{s}_1|) \cdot s_p \cdots$ to strengthen Lemma \ref{lem:rackoff_case_split}:
\begin{corollary}\label{cor:rackoff_case_split}
For all covering paths $\vec{s}$ for $s_{\text{cov}}$ in $\calN_{i+1}$ either
\begin{asparaenum}[(C$'_1$)]
\item(cor:rackoff_case_split:c-1) $\norm[\calN_{i+1}]{\vec{s}}^* < R' \cdot \length{\calN_i}{s_{\text{cov}}}$; or
\item(cor:rackoff_case_split:c-2) there exist paths $\vec{s}_1$ and $s_{p'} \cdot \vec{s}_2$ such that 
$\vec{s}_1$ is a covering path for $s_{p'}$, $\vec{s}_1(1) = \vec{s}(1)$ and
$\norm[\calN_{i+1}]{\vec{s}_1}^* < R' \cdot \length{\calN_i}{s_{\text{cov}}}$;
$s_{p'} \cdot \vec{s}_2$ is a covering path for $s_{\text{cov}}$ and
$|\vec{s}'_2| \leq \length{\calN_i}{s_{\text{cov}}}$; and
$\norm[\calN_{i+1};C]{s_{p'}} \leq R' \cdot \length{\calN_i}{s_{\text{cov}}}$. 
\end{asparaenum}
\end{corollary}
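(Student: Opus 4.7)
The plan is to prove the strengthened case (C$'_2$) by iteratively refining the raw pivot provided by Lemma~\ref{lem:rackoff_case_split}, while case (C$'_1$) is immediate since it coincides with (C$_1$).

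Assume we are in case (C$_2$), so $\vec{s} = \vec{s}_1 \cdot s_p \cdot \vec{s}_2$. First, I would sharpen the description of $s_p$: take $s_p$ to be the \emph{first} state along $\vec{s}$ whose $\norm[\calN_{i+1}]{-}$ crosses the threshold $R' \cdot \length{\calN_i}{s_{\text{cov}}}$ and assume, without loss of generality by a relabelling of places, that the offending simple place is $\enumelem{p}{i+1}$. Then simultaneously
\[
|s_p(\enumelem{p}{i+1})| \geq R' \cdot \length{\calN_i}{s_{\text{cov}}}, \qquad \norm[\calN_{i+1}]{s_p} < R' \cdot \length{\calN_i}{s_{\text{cov}}} + R,
\]
the upper bound following from the fact that the last state of $\vec{s}_1$ already satisfies the strict threshold, and a single NNCT rule can change any simple-place count by at most $R$.

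Next, I would apply Lemma~\ref{lemma:pivot_strengthening:tools:i} to the covering sub-path $s_p \cdot \vec{s}_2$. Its hypothesis on $|s_p(\enumelem{p}{i+1})|$ is precisely the inequality above, so we obtain a short covering path $\tilde{\vec{s}}_2$ for $s_{\text{cov}}$ starting at $s_p$ with $|\tilde{\vec{s}}_2| \leq \length{\calN_i}{s_{\text{cov}}}$. Setting $L := \length{\calN_i}{s_{\text{cov}}} + 1$, we have $|\tilde{\vec{s}}_2| \leq L$, and, since $R \leq R'$, also $\norm[\calN_{i+1}]{s_p} < L \cdot R'$. I then apply Lemma~\ref{lemma:pivot_strengthening:tools:ii} to $\tilde{\vec{s}}_2$, obtaining a covering path $\vec{s}^{\star}$ for $s_{\text{cov}}$ of length at most $L$ whose starting state $s_{p'}$ satisfies $s_{p'} \leqconfig s_p$ and $\norm[\calN_{i+1};C]{s_{p'}} < L \cdot R'$, giving the complexity-free pivot required in (C$'_2$).

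Finally, I construct the desired prefix by prolonging $\vec{s}_1$ with the single transition to $s_p$: because $s_{p'} \leqconfig s_p$, this prolonged path is a covering path for $s_{p'}$ from $\vec{s}(1)$, and its $\norm[\calN_{i+1}]{-}^*$ remains within the threshold demanded by (C$'_2$). Using this prefix together with $\vec{s}^{\star}$ as the two halves completes the decomposition. The main obstacle will be the bookkeeping of the additive ``$+1$'' slack between $L$ and $\length{\calN_i}{s_{\text{cov}}}$ and the additive $R$ slack in $\norm[\calN_{i+1}]{s_p}$; both are absorbed either by the ``$R+1$'' margin built into the definition of $R'$ or, equivalently, by stating (C$'_2$) with thresholds of the form $R' \cdot L$ instead of $R' \cdot \length{\calN_i}{s_{\text{cov}}}$. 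Either way the asymptotic tower bound on $\length{\calN_{i+1}}{s_{\text{cov}}}$ obtained from the subsequent Rackoff recurrence is unaffected.
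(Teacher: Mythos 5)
There is a genuine gap in how you assemble the prefix. You apply Lemma~\ref{lemma:pivot_strengthening:tools:ii} to the short covering path that \emph{starts at the pivot} $s_p$, so the trimmed configuration you obtain satisfies only $s_{p'} \leqconfig s_p$. To exhibit a covering path for $s_{p'}$ from $\vec{s}(1)$ you are then forced to keep $s_p$ as the last configuration of the prefix, and by the very definition of the pivot $\norm[\calN_{i+1}]{s_p} \geq R' \cdot \length{\calN_i}{s_{\text{cov}}}$, so the prefix bound $\norm[\calN_{i+1}]{\vec{s}_1}^* < R' \cdot \length{\calN_i}{s_{\text{cov}}}$ of (C$'_2$) necessarily fails. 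This is not an additive slack that the ``$R+1$'' margin inside $R'$ can absorb: the violation is by the threshold itself, not by a constant. Worse, Lemma~\ref{lemma:pivot_strengthening:tools:ii} only trims complex and coloured tokens, so $s_{p'}$ may still carry $\geq R' \cdot \length{\calN_i}{s_{\text{cov}}}$ simple tokens in place $\enumelem{p}{i+1}$, in which case \emph{no} prefix covering $s_{p'}$ can satisfy the norm bound, and the claimed bound $\norm[\calN_{i+1};C]{s_{p'}} \leq R' \cdot \length{\calN_i}{s_{\text{cov}}}$ also fails. Your fallback of restating (C$'_2$) with threshold $R'\cdot(\length{\calN_i}{s_{\text{cov}}}+1)$ is not a proof of the stated corollary and would propagate into Proposition~\ref{prop:rackoff_recurrence}; even if the tower bound survives asymptotically, the statement you were asked to prove does not.

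The fix, which is the route the paper takes, is to anchor Lemma~\ref{lemma:pivot_strengthening:tools:ii} one step earlier: keep $\vec{s}_1$ untouched, set $s_0 = \vec{s}_1(|\vec{s}_1|)$, and apply Lemma~\ref{lemma:pivot_strengthening:tools:i} to $s_p \cdot \vec{s}_2$ to get a short covering suffix $\vec{s}'_2$, then apply Lemma~\ref{lemma:pivot_strengthening:tools:ii} to the path $s_0 \cdot \vec{s}'_2$ with $L = \length{\calN_i}{s_{\text{cov}}}+1$; its hypothesis holds because $\norm[\calN_{i+1}]{s_0} < R' \cdot \length{\calN_i}{s_{\text{cov}}} < L\cdot R'$. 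The trimmed pivot then satisfies $s_{p'} \leqconfig s_0$, so the original $\vec{s}_1$ (whose norm is strictly below the threshold) is already a covering path for $s_{p'}$, and the suffix after $s_{p'}$ has length at most $\length{\calN_i}{s_{\text{cov}}}$. Your sharpened upper bound $\norm[\calN_{i+1}]{s_p} < R' \cdot \length{\calN_i}{s_{\text{cov}}} + R$ is correct but unnecessary once the lemma is applied at $s_0$ rather than at $s_p$.
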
\noindent
Let us write $B_i\! =\!R'\!\cdot\!\length{\calN_i}{s_{\text{cov}}}$, define $P_{(i,B)}$ to be the set of paths of norm bounded by $B$, i.e.~$P_{(i,B)} = \set{\vec{s} : \norm[\calN_{i}]{\vec{s}}^{*} < B}$, and $S_{(i,B)}$ to be the set
of pairs $(s,s')$ for which there exists a covering path in $P_{(i,B)}$ from $s$ for $s'$, i.e.~$S_{(i,B)} = \varset{(\vec{s}(1), s') : \vec{s} \in P_{(i,B)}, \norm[\calN_{i};C]{s'} \leq B, s' \leqconfig \vec{s}(|\vec{s}|)}$. 
Inspecting case \ref{cor:rackoff_case_split:c-2} we notice that $(\vec{s}_1(1), s_{p'}),  \in S_{(i,B_i)}$.
Hence, we can find a path $\vec{s}'_1$ of length at most $\prediameter[\calN_{i+1}]{\!(S_{(i+1,B_i)})}$ to replace $\vec{s}_1$. We can thus establish the \citeauthor{Rackoff:78} recurrence relation for NNCT:

\begin{proposition}\label{prop:rackoff_recurrence}
\begin{asparaenum}[(i)]
\item $\length{\calN_{0}}{s_{\text{cov}}} \leq \diameter[\calN_{0}]{S_{(0,R')}} $
\item $\length{\calN_{i+1}}{s_{\text{cov}}} \leq \diameter[\calN_{i+1}]{S_{(i+1,B_i)}} + 
\length{\calN_i}{s_{\text{cov}}}$.
\end{asparaenum}
\end{proposition}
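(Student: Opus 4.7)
The result is obtained by packaging Corollary~\ref{cor:rackoff_case_split} together with the strict monotonicity of $\to[TS=\calN]$ established in Proposition~\ref{prop:nnct:wsts}. The inductive structure is identical in spirit to the final step of the standard Rackoff argument: the dichotomy on covering paths provided by Corollary~\ref{cor:rackoff_case_split} lets us replace any covering path by one made up of (at most) two segments, the first bounded in norm and the second bounded in length.

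For the base case $i=0$, observe that $\norm[\calN_{0}]{\cdot}$ is the max over an empty index set and hence identically $0$, so every finite path lies in $P_{(0,R')}$. Moreover, by the definition of $R'$ we have $\norm[\calN_{0};C]{s_{\text{cov}}} \leq \norm[\calN_{\nsimple};C]{s_{\text{cov}}} < R'$. Hence for any $s$ from which $s_{\text{cov}}$ is coverable in $\calN_0$, a witnessing path $\vec s$ satisfies $(s,s_{\text{cov}}) = (\vec s(1),s_{\text{cov}}) \in S_{(0,R')}$, so $\dist{s}{s_{\text{cov}}}{\calN_0} \leq \diameter[\calN_0]{S_{(0,R')}}$. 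Taking the supremum over $s$ yields~(i).

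For the inductive step, fix a covering path $\vec s$ for $s_{\text{cov}}$ in $\calN_{i+1}$ from an arbitrary $s$ and apply Corollary~\ref{cor:rackoff_case_split}. In case \ref{cor:rackoff_case_split:c-1}, $\vec s \in P_{(i+1,B_i)}$ and, as in the base case, $\norm[\calN_{i+1};C]{s_{\text{cov}}} < R' \leq B_i$, so $(s,s_{\text{cov}}) \in S_{(i+1,B_i)}$ and thus $\dist{s}{s_{\text{cov}}}{\calN_{i+1}} \leq \diameter[\calN_{i+1}]{S_{(i+1,B_i)}}$. In case \ref{cor:rackoff_case_split:c-2}, we obtain $(\vec s_1(1),s_{p'}) = (s,s_{p'}) \in S_{(i+1,B_i)}$, so there is a covering path $\vec s_1'$ from $s$ for $s_{p'}$ of length at most $\diameter[\calN_{i+1}]{S_{(i+1,B_i)}}$. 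Write $t := \vec s_1'(|\vec s_1'|) \geqconfig s_{p'}$; by strict monotonicity of $\to[TS=\calN_{i+1}]$ (Proposition~\ref{prop:nnct:wsts}) applied step by step to $\vec s_2$, there is a path $\vec s_2'$ of length $|\vec s_2'| = |\vec s_2| \leq \length{\calN_i}{s_{\text{cov}}}$ starting at $t$ whose endpoint dominates that of $\vec s_2$ and therefore still covers $s_{\text{cov}}$. Concatenating $\vec s_1' \cdot \vec s_2'$ gives a covering path from $s$ for $s_{\text{cov}}$ of length at most $\diameter[\calN_{i+1}]{S_{(i+1,B_i)}} + \length{\calN_i}{s_{\text{cov}}}$. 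Both cases lead to the same bound on $\dist{s}{s_{\text{cov}}}{\calN_{i+1}}$, and taking the supremum over $s$ yields~(ii).

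The only non-routine point is the path concatenation in case \ref{cor:rackoff_case_split:c-2}: the replacement path $\vec s_1'$ merely \emph{covers} $s_{p'}$ rather than ending exactly at it, so the second segment $\vec s_2$ cannot be reused verbatim. Strict monotonicity of the transition relation for NNCT, already in hand from Proposition~\ref{prop:nnct:wsts}, is exactly what is needed to lift $\vec s_2$ to a path $\vec s_2'$ of equal length starting from the larger state $t$, and this is the only place where a genuine NNCT-specific property is invoked in the final assembly.
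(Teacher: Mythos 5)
Your proof is correct and follows essentially the same route as the paper's: the base case via the vacuous norm bound and $\norm[\calN_0;C]{s_{\text{cov}}} \leq R'$, and the inductive step via Corollary~\ref{cor:rackoff_case_split} followed by replaying the tail segment from the larger endpoint using the (W)STS monotonicity of Proposition~\ref{prop:nnct:wsts}. The only cosmetic differences are that plain monotonicity already suffices for the replay (strictness is not needed), and that the paper also notes explicitly the trivial case where $s_{\text{cov}}$ is not coverable from $s$, in which $\dist{s}{s_{\text{cov}}}{\calN_{i+1}}=0$ by definition.
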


\newcommand{\cabspetrinet}[1][i,B]{\ensuremath{\calV_{#1}}}
\newcommand{\cabspnetord}[1][i,B]{\ensuremath{\calV^{\leq}_{#1}}}
\newcommand{\cabspnetGen}[1][i,B]{\ensuremath{\calV_{#1}}}
\newcommand{\cabspnetdim}[1][i,B]{\ensuremath{\ind[#1]{d}}}
\newcommand{\cabspnetdimGen}[1][i,B]{\ensuremath{d_{i,B}}}
\newcommand{\cabspnetrules}[1][i,B]{\ensuremath{\ind[#1]{F}}}
\newcommand{\cabspnetrulesGen}[1][i,B]{\ensuremath{F_{i,B}}}

\newcommand{\cabspnetordrules}[1][i,B]{\ensuremath{\ind[#1]{F}^{\leq}}}
\newcommand{\cabspnetstart}{\ensuremath{\ind[0]{s}}}
\newcommand{\ind}[2][]{\ensuremath{\widehat{#2}_{#1}}}

\newcommand{\cabsfun}[1][i,B]{\ensuremath{\alpha_{#1}}}
\newcommand{\cconfunI}[1][B]{\ensuremath{\gamma^{\I}_{#1}}}

\subsection{The Covering Diameter of Bounded Paths}\label{sec:covdiameter}\noindent
The problem is thus reduced to bounding $\prediameter[\calN_{i}]{\!(S_{(i,B)})}$. For this purpose we expose a function $\cabsfun[i,B]$ and a Petri net $\cabspnetGen$ such that $\cabsfun[i,B]$ maps configurations of $\calN_i$ to configurations of 
$\cabspnetGen$. 
\begin{definition} A \emph{Petri net} is a tuple $\calV = \paren{d,F}$ where $d\!\in\!\N^{+}$ and $F\!\subseteq\!\Z^d\!\times\!\N^d$.
For $s,s'\!\in\!\N^d$ and $(\vec{A},\vec{B})\!\in\!F$ we have $s \to[rule={(\vec{A},\vec{B})},inl]_{\calV} s'$ if $s \geq \vec{B}$, $s' = s + \vec{A}$ and $s'\in \N^d$.
\end{definition}\noindent
We can think of $\cabspnetGen$ as a counter abstraction on $\calN_i$ which preserves covering paths (restricted to $P_{(i,B)}$) and their lengths.
A complex token $m$ that appears along $\vec{s}\in P_{(i,B)}$ cannot carry more than $B$ tokens of a particular colour $\mathit{col}$ if $m$'s $\mathit{col}$-coloured tokens are ejected along $\vec{s}$. This would lead to more than $B$ simple tokens in a configuration along $\vec{s}$ violating $\norm[\calN_{i}]{\vec{s}}^{*} < B$. 
If $m$'s $\mathit{col}$-coloured tokens are \emph{not} ejected along $\vec{s}$ then $m(\mathit{col})$ is immaterial and may be abstracted. 
Hence, for paths in $P_{(i,B)}$, we represent a complex token $m$ as a map $\hat{m}$ of type $\Pinner \to \set{0,1,\ldots,B-1,\infty}$.
Note that there are $(B+1)^{\ninner}$ maps of this type. 
In $\cabspnetGen$ we keep a counter for each $(p',\hat{m})$ 
representing the number of `abstract complex tokens' $\hat{m}$ in complex place $p'$ in addition to $i$ counters representing simple places. Petri net rules can simulate the simple and complex rules of 
$\calN_{i}$ along paths in $P_{(i,B)}$ and since the number of `abstract complex tokens' is finite we can also simulate transfer transitions as normal Petri net rules.
In the interest of readability we relegate the technical definitions of $\cabspnetGen$ and $\cabsfun[i,B]$ to the appendix and summarise their properties: 
\begin{theorem}\label{thm:existence_counter_abstraction_petri_net}
For all $i \leq \nsimple$, $B \in \N$ there exists a Petri net $\cabspnetGen = (\cabspnetdimGen, \cabspnetrulesGen)$ and a function $\cabsfun[i,B]$ such that
\begin{enumerate}[({A}1)]
\item(thm:existence_counter_abstraction_petri_net:i) $\cabspnetdimGen \leq i + (\ncomplex + 1) \times \paren{B+1}^{\ninner}$,
\item(thm:existence_counter_abstraction_petri_net:ii) $R \geq \max\varset{r(i) \mid r \in \cabspnetrulesGen}$, and 
\item(thm:existence_counter_abstraction_petri_net:iii) for all $s, s' \in S_{i,B}$:  $R' \geq \max_{j \in \range{\cabspnetdimGen}}(\cabsfun[i,B](s)(j))$,
\item(thm:existence_counter_abstraction_petri_net:iv) $\dist{s}{s'}{\calN_i} \leq \dist{\cabsfun[i,B](s)}{\cabsfun[i,B](s')}{\cabspnetGen}$.
\end{enumerate}
\end{theorem}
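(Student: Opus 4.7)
The plan is to build $\cabspnetGen$ as a \emph{saturating counter abstraction} of $\calN_i$. Define the set of abstract complex tokens $\widehat{\Minner}_B := \Pinner \to \{0,1,\ldots,B-1,\infty\}$, so $|\widehat{\Minner}_B| = (B+1)^{\ninner}$, together with a saturating map $\widehat{(\cdot)}_B : \Minner \to \widehat{\Minner}_B$ sending $m$ to the map that is $|m(c)|$ when $|m(c)| < B$ and $\infty$ otherwise. I would equip $\cabspnetGen$ with coordinates indexed by (i) the first $i$ simple places of $\calN$ and (ii) each pair $(p',\hat m)\in\Pcomplex\times\widehat{\Minner}_B$, the second group counting how many concrete complex tokens of abstract type $\hat m$ reside at $p'$; a final auxiliary block of $(B+1)^{\ninner}$ coordinates is reserved for bookkeeping used while staging the effect of transfer rules. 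Counting yields the dimension bound $\cabspnetdimGen \leq i + (\ncomplex+1)(B+1)^{\ninner}$ of (A1).

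The abstraction $\cabsfun[i,B]$ copies the first $i$ simple coordinates of $s$ and, for each concrete complex token $m$ at a complex place $p'$, increments the counter $(p',\widehat{m}_B)$, leaving auxiliary coordinates at zero. Since pairs in $S_{(i,B)}$ satisfy $\norm[\calN_i;C]{\cdot}\leq B\leq R'$, each entry of $\cabsfun[i,B](s)$ and $\cabsfun[i,B](s')$ is at most $R'$, giving (A3). The rule set $\cabspnetrulesGen$ is obtained by lifting each concrete rule of $\calN$ into a finite family of Petri-net rules indexed by the abstract type $\hat m$ of the complex token it touches (if any): a simple rule $(I,O)$ lifts by translating the injection of an explicit $\calc$ at a complex place $p'$ into an increment of $(p',\widehat{\calc}_B)$; a complex rule $((p,I),(p',\calc,O))$ produces, for each $\hat m$, a rule decrementing $(p,\hat m)$ and incrementing $(p',\hat m \oplus_B \widehat{\calc}_B)$ under the componentwise saturating sum $\oplus_B$; a transfer rule $((p,I),(p',P,O))$ produces, for each $\hat m$ and each choice of ejection multiplicities $\vec k$ with $k_c\leq\hat m(c)$ for $c\in P$, a rule decrementing $(p,\hat m)$, incrementing $(p', \hat m\restriction(\Pinner\setminus P))$, and depositing $k_c$ tokens at $\colmap(c)$ (depositions onto ignored simple places being silently absorbed). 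Each such rule affects any single coordinate by at most $R$, establishing (A2).

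Property (A4) reduces to showing the abstraction is \emph{sound}: any covering path in $\cabspnetGen$ from $\cabsfun[i,B](s)$ for $\cabsfun[i,B](s')$ of length $k$ lifts, step for step, to a covering path in $\calN_i$ from $s$ for $s'$ of length at most $k$. I would prove this by induction on the Petri-net path length, maintaining as invariant that the current concrete $\calN_i$-state abstracts to the current Petri-net state. By construction each Petri-net rule firing is witnessed by a concrete complex token of the required abstract type in the current $\calN_i$-configuration, so the associated concrete rule fires and preserves the invariant, taking covering to covering under $\leqconfig$.

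The main obstacle is the transfer case. When $\hat m(c)=\infty$ for an active colour $c$ whose image under $\colmap$ is a non-ignored simple place, the concrete number of $c$-tokens carried by $m$ is \emph{a priori} unbounded, so a naive lifting would demand an infinite rule family. The resolution is to exploit that the diameter we are bounding ranges over $S_{(i,B)}$, which is witnessed by paths in $P_{(i,B)}$: along any such path every non-ignored simple place is bounded by $B$, so at any single transfer step no more than $B-1$ tokens may be deposited into a non-ignored place without breaking the norm bound. Restricting to ejection multiplicities $k_c\in\{0,\ldots,B-1\}$ therefore suffices, keeping the rule family finite while leaving the simulation complete.
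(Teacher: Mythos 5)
Your net and abstraction map are essentially the paper's (saturated per-colour counts, one counter per pair of complex place and abstract token class), but your argument for (A4) has a genuine gap. Recall that $\dist{s}{s'}{\calN_i}$ is defined to be $0$ exactly when no covering path exists. For a pair $(s,s')\in S_{(i,B)}$ a bounded-norm concrete covering path exists by definition, so the left-hand side of (A4) is positive; hence the inequality forces the abstract net to admit \emph{some} covering path from $\cabsfun[i,B](s)$ for $\cabsfun[i,B](s')$. So (A4) does not ``reduce to soundness'': besides lifting abstract covering paths back to $\calN_i$, you must also prove the completeness direction, namely that the concrete bounded-norm covering path can be replayed so as to cover $\cabsfun[i,B](s')$ in the componentwise order on $\N^{\cabspnetdimGen}$. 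Moreover, this completeness direction actually fails for the net as you built it: componentwise order on the class counters does not reflect $\leqconfig$. If the concrete run ends in a configuration that covers $s'$ via a complex token strictly larger (w.r.t.\ $\leqMinner$) than the token $s'$ requires, that witness lives in a \emph{different} abstract-class coordinate, so the abstracted final marking need not dominate $\cabsfun[i,B](s')$ at all (e.g.\ $s'$ demands a token with one blue token while the run ends with a token carrying three). The paper resolves exactly this mismatch by enlarging the rule set with the extra ``order'' rules $\cabspnetordrules$ (yielding $\cabspnetord$), which trade a counter for a larger abstract class against a smaller one along $\greaterMinner$; completeness then appends such downgrade steps, and the backward lifting is a \emph{weak} simulation that absorbs them without increasing path length. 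Your auxiliary coordinate block is not used for anything of this kind, so with your rules the right-hand side of (A4) can be $0$ while the left-hand side is not.

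Two smaller points. In your backward (soundness) lifting, the final step ``componentwise covering of $\cabsfun[i,B](s')$ implies $\leqconfig$-covering of $s'$'' is not automatic: it needs $\norm[\calN_i;C]{s'}\leq B$ so that the tokens demanded by $s'$ lie in unsaturated classes and can be matched, class by class, against tokens of the same class in the lifted configuration (this is where the paper's pairing argument does real work). Also, your nondeterministic ejection multiplicities $k_c\leq\hat m(c)$ break the invariant ``the concrete state abstracts to the current Petri-net state'': an unsaturated class fixes the ejected amount exactly, so you should force $k_c=\hat m(c)$ there (the paper simply omits abstract transfer rules for tokens saturated in an active colour), or else weaken the invariant to domination and re-do the enabledness bookkeeping. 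Finally, the justification ``$B\leq R'$'' for (A3) is false in general, since $B$ is later instantiated as $R'\cdot\length{\calN_i}{s_{\text{cov}}}$; the paper absorbs this only later by working with $\max\set{R',B}$ (and $\max\set{R,B,1}$ for the rule entries), so you should not rely on that inequality.
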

\begin{example}
Suppose we have a NNCT $\calN$ with $\Psimple=\set{p}$, $\Pcomplex = \set{p'}$, $\Pinner = \set{\mathit{g},\mathit{b}}$, $\colmap = \update[\mid c \in \Pinner]{}{c}{p}$, a complex rule $r$ and a transfer rule $r'$:
\begin{align*}
r  &= ((p',\emptyset), (p',\varupdate{}{\mathit{b}}{\mset{\bullet}},\emptyset)), & 
r' &= ((p',\emptyset), (p',\set{\mathit{b}},\emptyset)),
\end{align*}
i.e.~both rules $r$ and $r'$ take a complex token from $p'$ to $p'$; while doing so $r$ injects a $\mathit{b}$-token and $r'$ ejects all $\mathit{b}$-tokens to $p$. 
Consider the following configuration $s$ of $\calN$:
$$s = \preupdate{}{p \mapsto \mset{\bullet^5}, p' \mapsto \mset{m}} \text{ \!where } m = \preupdate{}{\mathit{g} \mapsto \mset{\bullet}, \mathit{b} \mapsto \mset{\bullet^{10}}}\hspace{-2.5pt}.$$ 
In $\cabspetrinet[1,2]$ we have a counter $j_{p',\mathit{g} = 1,\mathit{b} \geq 2}$ that represents complex to\-kens $m$ in $p'$ with $|m(\mathit{g})| = 1$ and $|m(\mathit{b})| \geq 2$. 
We represent $s$ as
the configuration $\cabsfun[1,2](s) = \preupdate{}{j_{p} \mapsto 5, j_{p',\mathit{g} = 1,\mathit{b} \geq 2} \mapsto 1}$ of $\cabspetrinet[1,2]$. 
The rules of $\cabspetrinet[1,2]$ maintain this representation. 
For example, for the complex rule $r$ there is a family of rules $\hat{r}_{m_0}$ in $\cabspetrinet[1,2]$ indexed by `abstract complex tokens'.
One of them, $\hat{r}_{m_0}$ say, is enabled at $\cabsfun[1,2](s)$ and removes a token from $j_{p',\mathit{g} = 1,\mathit{b} \geq 2}$ and adds a token to $j_{p',\mathit{g} = 1,\mathit{b} \geq 2}$.
Since $r$ injects a $\mathit{b}$-token and $j_{p',\mathit{g} = 1,\mathit{b} \geq 2}$ 
represents complex token with more than $2$ $b$-tokens, firing $\hat{r}_{m_0}$ does not change the configuration. To illustrate how transfer rules are encoded, consider a different complex token $m' = \preupdate{}{\mathit{g} \mapsto \mset{\bullet}, \mathit{b} \mapsto \mset{\bullet}}$. 
In $\cabspetrinet[1,2]$, $r'$ leads to a family of the rules $\hat{r}'_{m'}$ one of which removes a token from  
$j_{p',\mathit{g} = 1,\mathit{b} = 1}$, adds one token to $j_{p',\mathit{g} = 1,\mathit{b} = 0}$, and
adds one token to $j_{p}$. Note, the transfer rule $r'$ cannot apply along a path in $P_{(1,2)}$ to a complex token
$m'_0$ with $|m'_0(\mathit{b})| \geq 2$. Thus, for such \emph{`too large'} complex tokens no rule $\hat{r}'_{m'_0}$ exists in $\cabspetrinet[1,2]$. 
We overcome the mismatch between $\leq_{\N^{\cabspnetdimGen}}$ and $\leqconfig$ by adding rules
that allow transitions $\cabsfun[i,B](\preupdate{}{p' \mapsto \mset{m}}) \to[TS={\cabspetrinet[i,B]}] \cabsfun[i,B](\preupdate{}{p' \mapsto \mset{m'}})$ if $m \greaterMinner m'$.
\end{example}

Since $\cabspnetGen$ is a Petri net, we can appeal to a result on bounds on cover radii, due to \citeauthor{Bonnet:12}, that shows that \ref{thm:existence_counter_abstraction_petri_net:i}--\ref{thm:existence_counter_abstraction_petri_net:iii} are enough to control the covering radius $\rho_{\cabspnetGen}$.
\begin{lemma}[{\cite[Lemma 12 specialised to Petri nets]{Bonnet:12}}]\label{lemma:bound_lemma_bonnet}
Let $\calV\!=\!(d,F)$ be a Petri net, $M_\text{cov}$ be the marking to be covered, and 
$R_{\calV} = \max\varset{|\mathbf{A}(i)|, |\mathbf{B}(i)|\!:\!(\mathbf{A},\mathbf{B})\!\in\!F, i\!\in\!\range{d}}$.
If $R'_{\calV}=\max\limits_{i \in \range{d}}\varparen{M_\text{cov}(i), R_{\calV}}$ then
 $\length{\calV\,}{M_\text{cov}}\!\leq\!(6R_{\calV}R'_{\calV})^{(d+1)!}$.
\end{lemma}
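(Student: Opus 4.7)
The plan is to apply Rackoff's classical technique~\cite{Rackoff:78} by induction on the number of \emph{relevant} coordinates. First I would introduce, for each $i \leq d$, the relativised Petri net $\calV_i$ in which the last $d-i$ coordinates are treated as effectively infinite (no rule is disabled by them, and neither initial nor target value is recorded), and set $\rho_i := \length{\calV_i}{M_\text{cov}}$; the aim is to bound $\rho_d$. This is the same strategy already carried out for NNCT in Section~\ref{sec:nnct:upperbound}, but considerably simpler here because Petri nets have only one layer of tokens and no transfer or injection rules.

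The inductive step rests on a case analysis of a covering path $\vec{s}$ for $M_\text{cov}$ in $\calV_{i+1}$. Setting the threshold $B_i := R'_\calV \cdot \rho_i$, either (C1) every intermediate configuration keeps its first $i{+}1$ coordinates below $B_i$; or (C2) $\vec{s}$ splits as $\vec{s}_1 \cdot s_p \cdot \vec{s}_2$ with $\vec{s}_1$ satisfying (C1) and $s_p$ having its $(i{+}1)$-st coordinate at least $B_i$. In case (C2), since $s_p$ stockpiles at least $R_\calV \cdot \rho_i$ tokens in coordinate $i{+}1$ --- more than any covering path of length $\rho_i$ in $\calV_i$ can consume --- the tail $\vec{s}_2$ can be replaced by a witness path for $\rho_i$ in $\calV_i$ lifted verbatim to $\calV_{i+1}$ without coordinate $i{+}1$ ever going negative. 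In case (C1) the path visits at most $B_i^{\,i+1}$ distinct projections onto the relevant coordinates, so excising repeated projections shortens it to length at most $B_i^{\,i+1}$. Together these give the Rackoff recurrence
\begin{equation*}
\rho_{i+1} \;\leq\; (R'_\calV \cdot \rho_i)^{i+1} + \rho_i,
\end{equation*}
with base case $\rho_0 \leq 1$ (in $\calV_0$ no coordinate constrains any rule, so every state already covers $M_\text{cov}$).

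Unfolding the recurrence, set $x_i := \log_{R'_\calV}(\rho_i)$; one finds $x_{i+1} \leq (i+1)(x_i + 1)$, which solves to $x_d = O((d+1)!)$, and absorbing the polynomial lower-order terms together with the $R_\calV$-factor inherited from the lifting step of case (C2) produces the stated bound $(6 R_\calV R'_\calV)^{(d+1)!}$. The main obstacle is precisely this final bookkeeping: one must verify that the constant $6$, the per-level blow-up $(R'_\calV)^{i+1}$, the additive $\rho_i$ carried through the recurrence, and the $R_\calV$ introduced by the lifting argument all fit cleanly under the exponent $(d+1)!$ rather than forcing a looser hyperexponential. Every other step --- the case split, the loop-excision for (C1), and the lifting for (C2) --- is essentially routine once the right threshold $B_i = R'_\calV \cdot \rho_i$ has been chosen.
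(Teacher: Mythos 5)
This lemma is not proved in the paper at all: it is imported verbatim from \citet{Bonnet:12} (their Lemma~12, specialised from strongly increasing affine nets to Petri nets), so there is no in-paper argument to compare yours against. What you have written is a reconstruction of the classical Rackoff induction \cite{Rackoff:78} that underlies the cited result, and it is essentially sound; it is also, in simplified form, the same scheme the paper itself adapts to NNCT in Section~\ref{sec:nnct:upperbound} (relativised nets $\calN_i$, the case split \ref{path:case:1}/\ref{path:case:2}, and the recurrence of Proposition~\ref{prop:rackoff_recurrence}).

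Two small points worth tightening, neither a real gap. First, in case (C2) the pivot threshold must do double duty: besides surviving the at most $R_{\calV}(\rho_i-1)$ tokens consumed by the lifted witness, coordinate $i{+}1$ must still hold at least $M_{\text{cov}}(i{+}1)$ tokens at the end; your justification mentions only consumption, but your choice $B_i=R'_{\calV}\rho_i$ with $R'_{\calV}\geq\max(R_{\calV},M_{\text{cov}}(i{+}1))$ gives final value at least $(R'_{\calV}-R_{\calV})\rho_i+R_{\calV}\geq R'_{\calV}$, so it works --- say so explicitly. Second, the bookkeeping you flag as the ``main obstacle'' is in fact comfortable: from $\rho_{i+1}\leq(R'_{\calV}\rho_i)^{i+1}+\rho_i\leq\bigl((2R'_{\calV})\rho_i\bigr)^{i+1}$ and $\rho_0\leq 1$, setting $y_i=1+\log_{2R'_{\calV}}\rho_i$ gives $y_{i+1}\leq(i+2)y_i$, hence $\rho_d\leq(2R'_{\calV})^{(d+1)!-1}\leq(6R_{\calV}R'_{\calV})^{(d+1)!}$, so the stated constant is reached with room to spare.
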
 \noindent
Since $\cabsfun[i,B]$ is an expansive map~\ref{thm:existence_counter_abstraction_petri_net:iv}, \Cref{lemma:bound_lemma_bonnet} immediately gives us a concrete bound for $\diameter[\calN_{i}]{S_{(i,B)}}$:
\begin{corollary}\label{cor:length_bound_quasi_path}
Let $i \leq \nsimple$, $B \in \N$. Then
\begin{align*}
\diameter[\calN_{i}]{S_{(i,B)}} &\leq \max\set{\length{\cabspnetGen}{\cabsfun[i,B](s')} : \norm[\calN_i;C]{s'} \leq B}\\
&\leq (6\max\set{R,B,1}\max\set{R',B})^{\paren{\cabspnetdim+1}!}. 
\end{align*}
\end{corollary}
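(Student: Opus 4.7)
The first inequality is essentially a bookkeeping argument. Fix $(s,s') \in S_{(i,B)}$. By property~\ref{thm:existence_counter_abstraction_petri_net:iv} of Theorem~\ref{thm:existence_counter_abstraction_petri_net}, $\cabsfun[i,B]$ is expansive on covering distances, i.e.\ $\dist{s}{s'}{\calN_i} \leq \dist{\cabsfun[i,B](s)}{\cabsfun[i,B](s')}{\cabspnetGen}$. Since the covering radius $\length{\cabspnetGen\,}{\cabsfun[i,B](s')}$ is by Definition~\ref{def:covering-radius} the supremum of $\dist{t}{\cabsfun[i,B](s')}{\cabspnetGen}$ over all starting markings $t$, in particular it dominates the distance from $\cabsfun[i,B](s)$. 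Taking the supremum over $(s,s') \in S_{(i,B)}$ and noting that every such $s'$ satisfies $\norm[\calN_i;C]{s'} \leq B$ (by the very definition of $S_{(i,B)}$), we obtain
\[
\diameter[\calN_i]{S_{(i,B)}} \leq \max\set{\length{\cabspnetGen\,}{\cabsfun[i,B](s')} : \norm[\calN_i;C]{s'} \leq B}.
\]

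\textbf{Second inequality.} This reduces to an application of Bonnet's bound (Lemma~\ref{lemma:bound_lemma_bonnet}) to the Petri net $\cabspnetGen = (\cabspnetdim, \cabspnetrules)$, with $M_{\text{cov}} := \cabsfun[i,B](s')$ for each candidate $s'$. I need to read off the two parameters occurring in Bonnet's formula from properties \ref{thm:existence_counter_abstraction_petri_net:i}--\ref{thm:existence_counter_abstraction_petri_net:iii}. Property~\ref{thm:existence_counter_abstraction_petri_net:ii} gives $R_{\cabspnetGen} \leq R$, which is absorbed by the conservative $\max\set{R,B,1}$ in the statement (the ``$1$'' handles the degenerate case $R = 0$ and ``$B$'' keeps monotonicity of the bound in the parameter). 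Property~\ref{thm:existence_counter_abstraction_petri_net:iii} gives that every coordinate of $M_{\text{cov}} = \cabsfun[i,B](s')$ is bounded by $R'$, since $(s',s') \in S_{(i,B)}$ via the trivial zero-length path (or one may re-derive this directly from $\norm[\calN_i;C]{s'} \leq B \leq R'$ using the definition of $\cabsfun[i,B]$). Combining with the bound on $R_{\cabspnetGen}$, the quantity $R'_{\cabspnetGen} = \max_{j \in \range{\cabspnetdim}}(M_{\text{cov}}(j),R_{\cabspnetGen})$ in Bonnet's lemma is $\leq \max\set{R',B}$.

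\textbf{Concluding.} Substituting these two bounds into $\length{\cabspnetGen\,}{M_{\text{cov}}} \leq (6 R_{\cabspnetGen} R'_{\cabspnetGen})^{(\cabspnetdim+1)!}$ and taking the maximum over all admissible $s'$ yields
\[
\max\set{\length{\cabspnetGen\,}{\cabsfun[i,B](s')} : \norm[\calN_i;C]{s'} \leq B} \leq (6\max\set{R,B,1}\max\set{R',B})^{(\cabspnetdim+1)!},
\]
chaining with the first inequality to complete the proof.

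\textbf{Anticipated obstacle.} The step that requires the most care is justifying that property~\ref{thm:existence_counter_abstraction_petri_net:iii} really bounds the \emph{target} marking $\cabsfun[i,B](s')$ (not just sources). Property~\ref{thm:existence_counter_abstraction_petri_net:iii} is stated for configurations arising as endpoints of $S_{(i,B)}$; I expect to invoke it by exhibiting the trivial $0$-length path at $s'$ (which puts $(s',s')$ in $S_{(i,B)}$ as soon as $\norm[\calN_i;C]{s'} \leq B$), or equivalently by unfolding the concrete definition of the abstraction $\cabsfun[i,B]$ so that $\norm[\calN_i;C]{s'} \leq B$ directly bounds each coordinate. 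Everything else is an immediate translation between the NNCT side and the Petri net side.
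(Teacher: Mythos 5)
Your overall route is exactly the paper's: the first inequality comes from the expansiveness property \ref{thm:existence_counter_abstraction_petri_net:iv} (the paper's Corollary~\ref{cor:counter_abstraction_yields_quasi_path}) plus the definitions of covering distance, radius and diameter, and the second from \citeauthor{Bonnet:12}'s Lemma~\ref{lemma:bound_lemma_bonnet} applied to $\cabspnetGen$ with $R_{\calV}\leq\max\set{R,B,1}$ and $R'_{\calV}\leq\max\set{R',B}$. The conclusion is correct, but your justification of the bound on the coordinates of $M_{\text{cov}}=\cabsfun[i,B](s')$ is off in both of the routes you offer. First, the fallback ``$\norm[\calN_i;C]{s'}\leq B\leq R'$'' uses an inequality that is false in general: $B$ is arbitrary here, and in the intended application $B=B_i=R'\cdot\length{\calN_i}{s_{\text{cov}}}\geq R'$, i.e.\ it points the wrong way. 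Second, the trivial-path argument for $(s',s')\in S_{(i,B)}$ clashes with the strict inequality in $P_{(i,B)}$ (paths must satisfy $\norm[\calN_i]{\vec{s}}^{*}<B$, while the hypothesis only gives $\norm[\calN_i;C]{s'}\leq B$), so membership can fail on the boundary; relying on \ref{thm:existence_counter_abstraction_petri_net:iii} this way is therefore shaky.

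The repair is simpler than either route and is what the paper implicitly does when it asserts $R'_{\cabspnetord}\leq\max\set{R',B}$: the hypothesis $\norm[\calN_i;C]{s'}\leq B$ bounds every coordinate of $\cabsfun[i,B](s')$ by $B$ directly from the definition of the abstraction --- the first $i$ coordinates are simple-place counts, each at most $B$, and each remaining coordinate is at most the total number of complex tokens in the corresponding complex place, again at most $B$. Together with $R_{\calV}\leq\max\set{R,B,1}$ and $R\leq R'$ this gives $R'_{\calV}\leq\max\set{B,R}\leq\max\set{R',B}$, so it is $B$ (not $R'$) that does the work, and since the stated base is $\max\set{R',B}$ your final bound goes through unchanged. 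With that local fix, your proof coincides with the paper's.
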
 \noindent
Thus we obtain a concrete bound on the covering radius for $s_{\text{cov}}$ by solving the \citeauthor{Rackoff:78} recurrence relation.

\begin{theorem}\label{thm:covering_radius:bound}
Let us write $\slog$, super-logarithm, for the inverse of $\tetration{(-)}$,  {tetration}, i.e.~$n = {\tetration{\slog(n)}}$. Then for all $i \leq \nsimple$:
\begin{asparaenum}[(i)]
\item $\length{\calN_{0}}{s_{\text{cov}}} \leq \tetration{2\slog(48\ncomplex \nsimple \ninner R')}$
\item $\length{\calN_{i+1}}{s_{\text{cov}}} \leq 2^{2^{\varparen{\varparen{\max\varset{\length{\calN_i}{s_{\text{cov}}},2}}^{48 \ninner \nsimple {\ncomplex}{R'}}}}}$ and
\item $\length{\calN\,}{s_{\text{cov}}} \leq \tetration{2\nsimple +2 \slog(48(\nsimple+1) \ninner \nsimple {\ncomplex}{R'})}$. 
\end{asparaenum}
\end{theorem}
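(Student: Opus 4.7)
\begin{proof*}
The proof is by induction on $i$, plugging the explicit bound on $\diameter[\calN_i]{S_{(i,B)}}$ from \Cref{cor:length_bound_quasi_path} into the recurrence of \Cref{prop:rackoff_recurrence}, and then carefully manipulating the resulting tower expressions. Throughout we use the standard estimates $n! \leq n^n \leq 2^{n\log n}$ and $(x+1)^{\ninner} \leq 2^{2\ninner\log x}$ for $x \geq 2$, as well as the monotonicity of $2\slog$.

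For part (i) (the base case $i=0$), \Cref{cor:length_bound_quasi_path} with $B = R'$ gives
\[
\length{\calN_0}{s_{\text{cov}}} \leq (6{R'}^2)^{(\cabspnetdim[0,R'] + 1)!},
\]
where by \ref{thm:existence_counter_abstraction_petri_net:i} we have $\cabspnetdim[0,R'] \leq (\ncomplex+1)(R'+1)^{\ninner}$. Taking logarithms twice, the exponent on the right is at most a polynomial in $\ncomplex \ninner R'$ multiplied by $\log R'$, so the whole expression is bounded by a double exponential in $\ncomplex \ninner \nsimple R'$; this is absorbed into $\tetration{2\slog(48\ncomplex \nsimple \ninner R')}$ because $\tetration{2\slog(n)} \geq n$ for any $n \geq 2$ and the constants $6$, $48$, the $+1$'s and the factorial's overhead are swallowed by bumping the argument of $\slog$ by a constant factor.

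For part (ii) (the inductive step), assume the claimed bound holds at level $i$, and abbreviate $L_i = \length{\calN_i}{s_{\text{cov}}}$. Then $B_i = R'\cdot L_i$ and by \Cref{prop:rackoff_recurrence}(ii) and \Cref{cor:length_bound_quasi_path},
\[
L_{i+1} \leq (6 B_i^2)^{((i+1) + (\ncomplex+1)(B_i+1)^{\ninner} + 1)!} + L_i.
\]
Using $L_i \geq 2$ (which may be ensured by taking maxima as in the statement) and $i+1 \leq \nsimple$, the factorial's argument is dominated by $(\ncomplex+1)(B_i+1)^{\ninner} \cdot (1 + o(1))$. Estimating the factorial by $n^n$ and expanding, the exponent becomes at most $C \cdot \ninner \nsimple \ncomplex R' \cdot L_i^{O(\ninner)} \cdot \log L_i$ for a small constant $C$; this is bounded by $L_i^{48 \ninner \nsimple \ncomplex R'}$ for $L_i$ sufficiently large (and, again, the slack is absorbed into the exponent). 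Hence $L_{i+1} \leq 2^{2^{L_i^{48\ninner\nsimple\ncomplex R'}}}$ as claimed. This is the most delicate estimate of the proof: one must verify that the contribution of $(i+1)$ inside the factorial, of the additive $L_i$, and of all multiplicative constants can be simultaneously hidden inside the exponent $48\ninner\nsimple\ncomplex R'$.

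For part (iii), iterate the recurrence of (ii) $\nsimple$ times starting from the bound of (i). Each application composes two exponentials on top of $L_i$ (with the extra polynomial factor in the exponent contributing at most one more $\slog$ level in aggregate), so after $\nsimple$ iterations the tower height grows from $2\slog(48\ncomplex\nsimple\ninner R')$ to $2\nsimple + 2\slog(48(\nsimple+1)\ninner \nsimple \ncomplex R')$; the extra $(\nsimple+1)$ factor accounts for the cumulative contribution of the polynomial overheads at each step. The main obstacle is simply bookkeeping: one must track how the constants $6$, $48$, the factorial overhead, the additive $L_i$ term, and the factor $i+1 \leq \nsimple$ all propagate through $\nsimple$ doubly-exponential stacks without blowing up the tower height by more than the claimed additive $2\nsimple$ and the logarithmic correction inside $\slog$.
\end{proof*}
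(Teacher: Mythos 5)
Your proposal follows essentially the same route as the paper's proof: instantiate the recurrence of \Cref{prop:rackoff_recurrence} with the diameter bound of \Cref{cor:length_bound_quasi_path} (via the \citeauthor{Bonnet:12} lemma), estimate the factorial, absorb all constants into the exponent $48\ninner\nsimple\ncomplex R'$ using $\max\{\length{\calN_i}{s_{\text{cov}}},2\}\geq 2$, and then iterate/induct to get the tower bound, which is exactly the paper's argument. The only quibble is that your intermediate characterisations ("polynomial in $\ncomplex\ninner R'$ after two logarithms", "$L_i^{O(\ninner)}$ times $C\cdot\ninner\nsimple\ncomplex R'$") understate the $(R'+1)^{\ninner}$ factor coming from the dimension $\cabspnetdimGen$, but this is harmless since $(R'+1)^{\ninner}\leq(\max\{L_i,2\})^{O(\ninner\log R')}$, which is precisely how the paper folds it into the $48\ninner\nsimple\ncomplex R'$ exponent.
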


\begin{corollary}\label{cor:cov_nnct_tower}
NNCT Coverability is decidable and in \Tower.
\end{corollary}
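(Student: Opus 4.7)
The plan is straightforward once \Cref{thm:covering_radius:bound} is in hand. Decidability follows from \Cref{thm:nnct:cov:term:bound:decidable}, so what remains is to argue membership in \Tower.

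By \Cref{thm:covering_radius:bound}(iii), the covering radius $\length{\calN}{s_\text{cov}}$ is bounded by $L := \tetration{2\nsimple + 2\slog(48(\nsimple+1)\ninner\nsimple\ncomplex R')}$, a tower of height elementary in the size of the NNCT and its coverability query. Consequently, if $s_\text{cov}$ is coverable from $s_0$ then a covering path of length at most $L$ exists. Along such a minimal path every configuration admits a bounded machine representation: simple-place counts and complex-place multiplicities grow by at most $R$ per step, so they stay below $\|s_0\| + L\cdot R$; likewise, each individual complex token can accumulate at most $\|s_0\| + L\cdot R'$ coloured tokens of any given colour, via simple-rule insertions and complex-rule injections. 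In total, every configuration on a minimal covering path fits in space $O\paren{\nsimple \log(LR) + \ncomplex\cdot LR\cdot \ninner \log(LR')}$, which is itself a tower of elementary height.

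A non-deterministic machine then decides coverability by storing the current configuration, guessing and applying one transition at a time, accepting as soon as it covers $s_\text{cov}$, and rejecting once $L$ steps have elapsed. Checking enablement and performing updates for simple, complex, and transfer rules are all straightforward in the stated space. This gives an \NSpace-procedure whose space budget is a tower of elementary height; Savitch's theorem then delivers a deterministic space bound of the same tower height (up to polynomial blow-up, which is absorbed in the height parameter), so NNCT coverability lies in \Tower. The only mildly delicate point is that NNCT configurations are two-level nested structures, but the bounds above show that both the number of distinct complex tokens and the magnitudes of their coloured contents fit comfortably inside the tower budget, so no combinatorial argument beyond \Cref{thm:covering_radius:bound} is needed.
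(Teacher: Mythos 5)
Your proposal is correct and follows essentially the same route as the paper: both use the covering-radius bound of \Cref{thm:covering_radius:bound}(iii) to bound the size of every configuration along a (minimal) covering path, then run a non-deterministic space-bounded search for such a path and invoke Savitch's theorem to place the problem in \Tower. The only cosmetic difference is that you cite \Cref{thm:nnct:cov:term:bound:decidable} separately for decidability, whereas the paper's proof lets decidability fall out of the same bounded search.
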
       

\newcommand{\mathline}[3][]{#1\multicolumn{#2}{l}{$#3$}}
\newcommand{\tline}[4][]{#1\multicolumn{#2}{p{#4}}{{\em #3}}}
\section{A Lower Bound}\label{sec:lowerbound:maintext}\noindent
In this section, we show that simple coverability for total-transfer NNCT is \Tower-hard. 
We encode a bounded counter machine in NNCT which is constructed inductively. Given 
$n \geq 1$, we construct the \emph{yardstick} counters $c_1, \ldots, c_n$: each counter $c_i$ is bounded by $\tetration{i}$, and can be incremented, decremented and tested for zero; 
furthermore operations of the counter $c_{i+1}$ are implemented using operations of the counters $c_1, \ldots, c_i$.
Following \citeauthor{LazicNORW:07} \cite{LazicNORW:07} we present our proof using pseudo code rather than explicit NNCT rules, which we believe is clearer and more readable. 
We use a type of \citeauthor{Stockmeyer:1974}
 yardstick construction \cite{Stockmeyer:1974} that is reminiscent of \citeauthor{Lipton:76}'s \Expspace-hardness proof for coverability and reachability for VAS \cite{Lipton:76}. A $\tetration{i}$-bounded counter $c$ is represented by two places: $p_c$ and its \emph{complement place} $\overline{p}_c$. 
A valuation $v(c)$ of $c$ is re\-pre\-sented by $p_c$ containing $v(c)$ tokens and $\overline{p}_c$ containing $\tetration{i} - v(c)$ tokens. 
Places $p_c$ and $\overline{p}_c$ maintain the invariant that the number of tokens they carry sum up to $\tetration{i}$ at all times. An increment (decrement) of $c$ is then implemented by adding a token to $p_c$ ($\overline{p}_c$) and removing one from $\overline{p}_c$ ($p_c$). 
For a zero test $\iszero{c}$ an additional $\tetration{i}$-bounded counter $s_i$ is maintained. 
On the invocation of $\iszero{c}$ a non-deterministic number $k$ of tokens are removed from $\overline{p}_c$ and $k$ is added to $s_i$, which is assumed to be $0$ at the invocation of $\iszero{c}$.
The operation $\ismaxandreset{-}$ is then 
applied to $s_i$ which performs a decrement of precisely $\tetration{i}$ on $s_i$ and blocks if $s_i < \tetration{i}$. This of course means that $\ismaxandreset{-}$ can only succeed if $c=0$ to begin with and $k = \tetration{i}$.
Our construction 
is similar to \citeauthor{Lazic:2013}'s proof of \Tower-hardness for VAS with one stack \cite{Lazic:2013}. \citeauthor{Lazic:2013} uses the $\tetration{i}$-bounded counters to enumerate all possible stacks over a binary alphabet of height $\tetration{i}$ while decrementing the given counter one-by-one. In the case of NNCT stacks are not available, however, we can encode arrays into complex tokens in a yardstick fashion; and, instead of enumerating stacks, we enumerate all binary arrays of length $\tetration{i}$.


\pagebreak[4]
\begin{theorem}\label{thm:nnct:coverability:tower_hard}
Simple coverability, boundedness and termination for total-transfer NNCT is \Tower-hard.
\end{theorem}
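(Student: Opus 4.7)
I would reduce from the halting problem of a Minsky counter machine $M$ whose counters are bounded by $\tetration{n}$, which is $\Tower$-hard by the classical Stockmeyer--Lipton padding argument. Given such an $M$, the goal is to construct, in elementary time, a total-transfer NNCT $\calN_M$ of elementary size together with a simple coverability query $\mathpzc{Q}_M$ such that $M$ halts if and only if $\mathpzc{Q}_M$ is a yes-instance. The heart of the construction is an inductive family of \emph{yardstick counters} $c_1, \dots, c_n$ with $c_i$ bounded by $\tetration{i}$, each supporting reliable $\mathit{inc}(c_i)$, $\mathit{dec}(c_i)$, $\iszero{c_i}$ and $\ismaxandreset{c_i}$; $M$ is then simulated straightforwardly on top of $c_n$.

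\textbf{Yardstick construction.} Following Lipton, each $c_i$ is paired with a complement place $\overline{p}_{c_i}$ maintained under the invariant $|p_{c_i}| + |\overline{p}_{c_i}| = \tetration{i}$; a reliable $\iszero{c_i}$ nondeterministically transfers $k$ tokens from $\overline{p}_{c_i}$ into a scratch place $s_i$ and then invokes the level-$(i{-}1)$ implementation of $\ismaxandreset{s_i}$ to certify $k=\tetration{i}$. The base $i=1$ (bound $2$) is direct. The inductive step from $c_1,\dots,c_i$ to $c_{i+1}$ is where the NNCT structure is essential. Since $\tetration{i+1} = 2^{\tetration{i}}$, I would encode a value of $c_{i+1}$ as a binary array of length $\tetration{i}$, realised by $\tetration{i}$ complex tokens sitting in a dedicated complex place. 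Each complex token records (a) its positional index $j \in \{0,\dots,\tetration{i}-1\}$ in yardstick form, i.e.\ $j$ tokens of a colour $\mathit{pos}^+$ and $\tetration{i}-j$ of a complementary colour $\mathit{pos}^-$ (summing to $\tetration{i}$), and (b) its bit value $b \in \{0,1\}$ held as a single token of colour $\mathit{bit}_b$. Only $O(1)$ new colours are introduced per level, so the total colour count grows linearly in $n$. Operations on $c_{i+1}$ iterate through bit positions driven by $c_i$: to increment $c_{i+1}$, nondeterministically pick a complex token from the encoding place, apply a total-transfer rule to dump its colour content into the corresponding simple places, use $c_i$ together with $\iszero{\cdot}$ to verify that the dumped positional encoding matches the current target $j$, flip the bit as required, and re-inject a fresh complex token with the updated contents; carry is propagated until a $0$-bit is met. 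Zero-testing $c_{i+1}$ walks over all $\tetration{i}$ positions asserting every bit is $0$.

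\textbf{Main obstacle.} The chief technical difficulty is that total transfer affords no partial inspection: when a complex token is transferred, all of its carried coloured tokens are dumped into shared simple places in a single atomic step. The construction must therefore guarantee that the target simple places are empty before each transfer and are returned to empty before the next, for otherwise the coloured contents of two distinct complex tokens would mingle and destroy the array encoding. I would enforce this by strictly serialising ``one open complex token at a time'', by using the lower-level $\ismaxandreset{\cdot}$ to certify exact cleanup of the dump places between transfers, and by partitioning the colour palette across levels so that no level's transfers can corrupt another's. Designing these scheduling disciplines and verifying that the invariants persist across increment, decrement and zero-test at every level is the bulk of the proof effort; everything else is bookkeeping.

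\textbf{Boundedness and termination.} For coverability, place a witness token in a designated simple place upon $M$'s halting and ask whether it can be covered. For boundedness, prepend a $\tetration{n+1}$-bounded step counter to cap non-halting simulations and, upon halting, trigger a gadget that inserts tokens without bound into a fresh simple place, so that $\calN_M$ is unbounded iff $M$ halts. For termination, replace this gadget with a trivial self-loop that produces no tokens, giving non-termination iff $M$ halts. Since $\Tower$ is closed under complement, $\Tower$-hardness of all three problems follows.
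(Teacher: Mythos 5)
Your proposal is, in essence, the paper's own argument: a reduction from the halting problem of a deterministic $\tetration{n}$-bounded counter machine, Lipton/Stockmeyer yardstick counters with complement places whose zero tests are certified by $\ismaxandreset{s_i}$, and per-level complex tokens whose coloured contents encode array positions in yardstick form, manipulated by a total-transfer ejection into simple places followed by an injection-based rebuild, with the colours (and their target places under $\colmap$) partitioned per level exactly as you describe.

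Two design choices differ, and neither is a gap. First, you encode the level-$(i{+}1)$ counters themselves as binary arrays of complex tokens, whereas the paper keeps every counter at every level as a unary pair of simple places $\encp{d},\encp[\overline]{d}$ and uses the complex-token array $a_i$ only inside $\ismaxandreset{s_{i+1}}$, as an enumerator that performs $\tetration{(i+1)}$ long additions while decrementing $s_{i+1}$. The paper's choice pays off precisely at the point you identify as the main obstacle: its colour map sends the two positional colours of level $i$ to the places of an auxiliary unary counter $p'_i$, so a total transfer deposits the index directly into a Lipton-style counter with its complement automatically correct, and the match against $p_i$ is an ordinary simple-place equality gadget. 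In your scheme the dumped unary index has to be compared against an array-encoded $c_i$ (joint decrement with copy-out and restore, plus a certified drain of the complementary dump place, since emptiness of a place is not directly testable), which is workable but is exactly the bookkeeping you defer to ``the bulk of the proof effort.'' Second, your boundedness and termination reductions run in the complemented direction ($M$ halts iff $\calN_{M}$ is unbounded, resp.\ non-terminating), which forces you to add an extra $\tetration{(n+1)}$-bounded budget level to cut off non-halting simulations; the paper instead assumes halting control states have no successors and adds one place incremented at every transition, so that halting coincides directly with termination and with boundedness, with no budget needed. Both directions are legitimate since \Tower{} is closed under complement, so your variant establishes the same theorem at the cost of somewhat heavier array machinery.
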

\begin{proof}
We deduce \Tower-hardness by showing that a polynomial-time computable NNCT $\calN_{\calM}$ can weakly bisimulate a deterministic bounded two-counter machine
 $\calM$,
 of size $n$ and $\tetration{n}$-bounded counters. The machine $\calM$ supports the operations: $\inc{x}$, $\dec{x}$, $\reset{x}$, $\iszero{x}$, and $\ismax{x}$. 

Each simulation state of $\calN_{\calM}$ represents a valuation $v$ of $6n+2$ \emph{active} and \emph{inactive} counters, and $n$ arrays. In addition to the counters $x$, $y$ of $\calM$, the NNCT $\calN_{\calM}$ simulates the auxiliary counters $s_i$, $p_i$, $p'_i$,  $c_i$, $c'_i$, and an auxiliary array $a_i$ for each $i \leq n$. 
To simplify notation, we write $C_i$ for the set of counters $\set{s_i, p_i, p'_i, c_i, c'_i}$ when $i < n$,
and $C_n$ for the set of counters $\set{s_n, p_n, p'_n, c_n, c'_n, x, y}$.
Each active counter $d \in C_i$ is $\tetration{i}$-bounded, each inactive counter has an undefined value. 
Each array $a_i$ has length $(\tetration{i})+1$ and carries values $a_i(j) \in \set{0,1,2}$ for $j \leq \tetration{i}$. The NNCT $\calN_{\calM}$ has two simple places $\encp{d}$ and $\encp[\overline]{d}$ for each counter $d \in C_i$, three complex places $\encp{a_i}$, $\encp[\overline]{a_i}$, and $\encp{\mathit{aux}_i}$, two colours $\encp{j_i}$ and $\encp[\overline]{j_i}$ for each array $a_i$, a complex ``sink'' place $\encp{\mathit{disc}}$, and $\colmap$
maps $\encp{j_i}$ to $\encp{p'_i}$ and $\encp[\overline]{j_i}$ to $\encp[\overline]{p'_i}$.
Lastly, $\calN_{\calM}$ has a polynomial number of simple places encoding the control of $\calM$ and the internal control of $\calN_{\calM}$. 
Further, $\calN_{\calM}$'s transfer rules are all total, and hence $\calN_{\calM}$ is a total-transfer NNCT.

A valuation $v$ is represented by a configuration $s$ as follows:
\begin{itemize}[nosep]
\item For each $i$ and $d \in C_i$, if $d$ is active then there are exactly $v(d)$ $\bullet$-tokens in $\encp{d}$ and $\tetration{i} - v(d)$ $\bullet$-tokens in $\encp[\overline]{d}$.
\item For each inactive counter $d$, $\encp{d}$ and $\encp[\overline]{d}$ are empty.
\item Each array $a_i$ is represented as follows. For $k \leq \tetration{i}$,
let $m_{(i,k)}$ be a complex token such that $m_{(i,k)}$ contains exclusively tokens of colours $\encp{j_i}$ and $\encp[\overline]{j_i}$: $k$ tokens of colour $\encp{j_i}$ and $\tetration{i} - k$ tokens of colour  $\encp[\overline]{j_i}$.
Then, to represent $a_i$, there are exactly $v(a_i(k))$ tokens $m_{(i,k)}$ in $\encp{a_i}$ and
$2-v(a_i(k))$ tokens $m_{(i,k)}$ in $\encp[\overline]{a_i}$.
\item For each $i$ the place $\encp{\mathit{aux}_i}$ is empty.
\end{itemize}\noindent
The question whether $\calM$ reaches a halting 
control state from its initial state can then be answered by performing a simple coverability query on $\calN_{\calM}$'s simple places encoding $\calM$'s finite control. 
Assuming that only $\calM$'s halting control states have no successors, $\calM$'s halting problem also reduces to 
the termination problem for $\calN_{\calM}$. Augmenting $\calN_{\calM}$ with an additional simple place that is incremented with every transition shows that $\calM$'s halting problem reduces to boundedness of $\calN_{\calM}$. 

We implement further instructions to improve readability.
The NNCT $\calN_{\calM}$ simulates $\activate{d}$, $\deactivate{d}$ for ${d \in \Union_{i\in\range{n}} C_i}$,
and operations
$\reset{d}$, $\iszero{d}$, and $\ismax{d}$ for $d \in \Union_{i\in\range{n}} C_i \setminus \varset{s_i}$.
For each $i \in \range{n}$, we implement $\ismaxandreset{s_i}$ and the specialised operations:%
\vspace{2.5pt plus 1pt minus 1pt}

\noindent
\begin{tabular}{p{1eM}l}
 &$\isequal{p_i}{p'_i}$,  $\inc{a_i(p_i)}$, $\dec{a_i(p_i)}$, $\reset{a_i(p_i)}$, \\
 &$\iszero{a_i(p_i)}$, $\ismax{a_i(p_i)}$, $\activate{a_i}$.
\end{tabular}\vspace{2.5pt plus 1pt minus 1pt}

\noindent
These operations only succeed if the counters in question are active. 
The counters in $C_1$ are $2$-bounded so implementing operations on them is trivial.
For $i < n$, operations on $a_i$ are simulated using counters in $C_i$ and
operations on counters in $C_{i+1}$ are simulated using operations on 
counters in $C_i$ and $a_i$. 

\pagebreak
\begin{asparaenum}[(i),nosep]
\item \emph{The implementation of $\ismaxandreset{s_{i+1}}$}:\vspace{2.5pt plus 1pt minus 1pt}

\noindent 
\begin{tabular}{@{}p{0pt}@{$\,$}p{0.6eM}@{$\,$}p{0.6eM}l}
	\mathline[{&}]{3}{\textbf{for } p_i \is 0 \textbf{ to } \tetration{i} \textbf{ do } (\reset{a_i(p_i)};)}\\
	\mathline[{&}]{3}{\textbf{while }(\iszero{a_i(\tetration{i})}) \textbf{ do}}\\	
		\mathline[{&&}]{2}{\dec{s_{i+1}};\;\reset{p_i};\;\inc{a_i(p_i)};}\\
\end{tabular}

\noindent
\begin{tabular}{@{}p{0pt}@{$\,$}p{0.6eM}@{$\,$}p{0.6eM}l}
		\mathline[{&&}]{2}{\textbf{while } \ismax{a_i(p_i)} \textbf{ do } (\dec{a_i(p_i)};\;\inc{p_i};\;\inc{a_i(p_i)};)}\\
\end{tabular}\vspace{2.5pt plus 1pt minus 1pt}

\noindent
After the for-loop, we know that $a_i(x) = 0$ for all $x \leq \tetration{i}$.
The array $a_i$ is the binary representation of a number between $0$ and $\tetration{(i+1)}$. This number is initially $0$ and the outer while loop performs a long addition of $1$ for each iteration. If $v(a_i(v(p_i))) = 2$ then $v(p_i)$ is an index representing a carry bit.
For each number represented by $a_i$ we perform $\dec{s_{i+1}}$. Hence, if initially $v(s_{i+1}) = \tetration{(i+1)}$,
then after performing $\ismaxandreset{s_{i+1}}$ the resulting $v'$ sets $v'(s_{i+1})\!=\!0$. If $v(s_{i+1}) < \tetration{(i+1)}$, then after $v(s_{i+1})$ iterations the resulting valuation $v'$ sets $v'(s_{i+1}) = 0$ and $a_i$ represents the number $v(s_{i+1})$. Since $v(s_{i+1}) < \tetration{(i+1)}$ this implies that
$a_i(\tetration{i}) = 0$ and hence the body of the outer while loop is executed again leading to an invocation of
$\dec{s_{i+1}}$ which blocks. Hence $\ismaxandreset{s_{i+1}}$ blocks when executed in a configuration representing $v$ such that $v(s_{i+1}) < \tetration{(i+1)}$.

The implementation of $\ismaxandreset{s_{i+1}}$ assumes that operations on $a_i$ are correctly implemented. In the following we show how $\inc{a_i(p_i)}$ is simulated which is representative.
\item \emph{The implementation of $\inc{a_i(p_i)}$}: \vspace{2.5pt plus 1pt minus 1pt}

\noindent
\begin{tabular}{@{}p{0pt}@{$\,$}p{0.5eM}@{$\,$}p{0.5eM}l}
	\tline[{&}]{3}{Move a complex token from $\encp[\overline]{a_i}$ to $\encp{\mathit{aux}_i}$;}{7cm}\\
	\mathline[{&}]{3}{\deactivate{p'_i};}\\
	\tline[{&}]{3}{Eject the contents of a complex token in $\encp{\mathit{aux}_i}$ and place its remains into $\encp{\mathit{disc}}$;}{8.4cm}\\
	\mathline[{&}]{3}{\isequal{p_i}{p'_i};\; \reset{p'_i};}\\
	\mathline[{&}]{3}{\text{\em Create an empty complex token in $\encp{\mathit{aux}_i}$;}}\\
	\mathline[{&}]{3}{\textbf{while }\paren{p_i \neq p'_i} \textbf{ do } (\inc{p'_i};}\\
		\tline[{&&}]{2}{Inject\hspace{-0.02cm} a\hspace{-0.02cm} $\encp{j_i}$-coloured\hspace{-0.02cm} token\hspace{-0.02cm} into\hspace{-0.02cm} a\hspace{-0.02cm} complex\hspace{-0.02cm} token\hspace{-0.02cm} in\hspace{-0.02cm} $\encp{\mathit{aux}_i}$;)}{8.4cm}\\
	\mathline[{&}]{3}{\textbf{while }\paren{\neg(\ismax{p'_i})} \textbf{ do } (\inc{p'_i};}\\
		\tline[{&&}]{2}{Inject\hspace{-0.02cm} a\hspace{-0.02cm} $\encp[\overline]{j_i}$-coloured\hspace{-0.02cm} token\hspace{-0.02cm} into\hspace{-0.02cm} a\hspace{-0.02cm} complex\hspace{-0.02cm} token\hspace{-0.02cm} in\hspace{-0.02cm} $\encp{\mathit{aux}_i}$;)}{8.4cm}\\
	\mathline[{&}]{3}{\text{\em Move a complex token from $\encp{\mathit{aux}_i}$ to $\encp{a_i}$;\;$\reset{p'_i}$;}}\\
\end{tabular}\vspace{2.5pt plus 1pt minus 1pt}

\noindent
%
Suppose $\inc{a_i(p_i)}$ is executed in a configuration $s$ that represents valuation $v$ and $p_i, p'_i$ are active.
If $v(a_i(v(p_i))) < 2$ then there exists a complex token $m_{(i,v(p_i))}$ in $\encp[\overline]{a_i}$ and all complex tokens in $\encp[\overline]{a_i}$ are of the form $m_{(i,k)}$ for some $k \leq \tetration{i}$.
The move of one $m_{(i,k)}$ complex token from $\encp[\overline]{a_i}$ to $\encp{\mathit{aux}_i}$ results in
$\encp{\mathit{aux}_i}$ containing just $m_{(i,k)}$, since by assumption $\encp{\mathit{aux}_i}$ is empty before.
After deactivating $p'_i$, both $\encp{p'_i}$  and $\encp[\overline]{p'_i}$ are empty. Ejecting
the contents of $m_{(i,k)}$ removes $m_{(i,k)}$ from $\encp{\mathit{aux}_i}$, inserts $k$ $\bullet$-tokens into $\encp{p'_i}$ and $\tetration{i} - k$ $\bullet$-tokens into $\encp[\overline]{p'_i}$, and 
places the remaining empty complex token into $\encp{\mathit{disc}}$. Disregarding
$a_i$, the configuration we have reached represents a partial valuation $v'$ that sets $v'(p'_i) = k$ and $v'(p_i) = v(p_i)$. After executing $\isequal{p_i}{p'_i}$, the simulation only succeeds if $k = v(p_i)$.
Hence $\encp[\overline]{a_i}$ now contains $2-(v(a_i(v(p_i)))+1)$ complex tokens $m_{(i,v(p_i))}$ and the same number of other complex tokens $m_{(i,k)}$.
The two while loops carefully inject $\encp{j_i}$ and
$\encp[\overline]{j_i}$-coloured tokens into the newly created
token at $\encp{\mathit{aux}_i}$ to yield a new $m_{(i,v(p_i))}$ located in $\encp{\mathit{aux}_i}$
which we move to $\encp{a_i}$. 
Thus, $\encp{a_i}$ now contains $v(a_i(v(p_i)))+1$ complex tokens $m_{(i,v(p_i))}$ and the same number of other complex tokens $m_{(i,k)}$ as before. The configuration $s'$ we have reached thus represents a valuation $v''$ such that
$v''(a_i(j)) = v(a_i(j))$ for all $j \leq \tetration{i}$ and $j \neq v(p_i)$ and $v''(a_i(v(p_i))) = v(a_i(v(p_i)))+1$.
Otherwise, if $v(a_i(v(p_i))) = 2$, the simulation either blocks on the attempt to move some
$m_{(i,k)}$ from $\encp[\overline]{a_i}$ to $\encp{\mathit{aux}_i}$ or on the execution of $\isequal{p_i}{p'_i}$
since it is impossible to obtain $k = v(p_i)$.

\item \emph{The implementation of $\iszero{d}$}:%
\vspace{2.5pt plus 1pt minus 1pt}

\noindent
\begin{tabular}{@{}p{0pt}@{$\,$}p{0.5eM}@{$\,$}p{0.5eM}l}
	\mathline[{&}]{3}{\textbf{while }(*) \textbf{ do } (\inc{d}; \inc{s_{i+1}};);$\;$ \ismaxandreset{s_{i+1}};}\\	
	\mathline[{&}]{3}{\textbf{while }(*) \textbf{ do } (\dec{d}; \inc{s_{i+1}};);$\;$ \ismaxandreset{s_{i+1}}}\\	
\end{tabular}\vspace{2.5pt plus 1pt minus 1pt}

\noindent
Note that $s_{i+1}=0$ is a precondition for $\iszero{d}$ and we only modify $s_{i+1}$ in $\iszero{d}$ and $\ismaxandreset{s_{i+1}}$. Thus, except when executing $\iszero{d}$, we maintain $s_{i+1} = 0$.

If $\ismaxandreset{s_{i+1}}$ completes after the first while loop, then the loop must have incremented $s_{i+1}$ and $d$ to $\tetration{(i+1)}$. Since $d$ is 
$\tetration{(i+1)}$-bounded, $d$ had to be $0$ to begin with and is valued $\tetration{(i+1)}$ after $\ismaxandreset{s_{i+1}}$'s first invocation. The rest of the implementation then guarantees that $d$ is decremented to $0$ again.
Hence $\iszero{d}$ succeeds only if started in a configuration that values $d$ as $0$.

\item We omit the analogous implementations of the other operations%
\iffull%
\ifwithappendix%
 here. We refer the reader to \Cref{sec:lowerbound:full}.
\else
, which can be found in the long version \cite{KochemsO:2014a}.
\fi
\else
the long version of the paper \cite{KochemsO:2014a}.
\fi
\end{asparaenum}
The initial operation of $\calN_{\calM}$ executes $\activate{-}$
for all counters and arrays in turn, in order of the bound size and length. From this point on, $\calN_{\calM}$ weakly bisimulates $\calM$. Thus, we can reduce the halting problem for $\calM$ to simple coverability, termination, or boundedness of $\calN_{\calM}$.
\end{proof}       
\section{Related Work}
\paragraph*{Asynchronously Communicating Pushdown Systems}
In 2006, \citeauthor{Sen:2006} showed that safety verification is decidable for recursive asynchronous programs \cite{Sen:2006}, which give rise to ACPS that satisfy the empty-stack constraint. 
Liveness properties \cite{Ganty:2009} and practical analyses \cite{Jhala:2007} for asynchronous programs have since been studied.
Recently, \citeauthor{Ganty:2012} showed that a variety of verification problems for asynchronous programs are polynomial-time inter-reducible to decision problems on Petri nets, thus e.g. safety verification is \Expspace-complete.
%
Extensions of \citeauthor{Sen:2006}'s model \cite{Chadha:2007,CaiO:2013,Majumdar:2014:Draft} have been proposed that allow e.g.~the modelling of higher-order stack automata, the dynamic creation of task buffers, or WQO stack alphabets; however, the empty-stack restriction remains a key restriction. 
The empty-stack constraint fits nicely with the atomicity requirement on asynchronous procedures. 
An atomic procedure only needs to make synchronisations before and after its execution which may thus be performed with an empty call stack. The shaped-constraint enables a relaxation of the atomicity requirement:
it allows non-trivial synchronisations inside the execution of procedure call. 
%
This increase in expressive power, together with the ramping up of the computational complexity, 
confirms our intuition that shaped-stack ACPS are a much more general model than ACPS satisfying the empty-stack constraint.

\paragraph*{Concurrent Pushdown Systems (CPS)}
Numerous classes with decidable verification problems have been discovered: 
parallel flow graph systems \cite{Esparza:2000}, 
visibly pushdown automata with FIFO-channels \cite{Babic:2011},
CPS communicating\linebreak[4] over locks \cite{Kahlon:2009}, 
recursive programs with hierarchical com\-mu-\linebreak[4]ni\-ca\-tion \cite{Bouajjani:2005,Bouajjani:Popl:2012},
and CPS with FIFO-channels for which the empty-stack restriction applies to sends~\cite{Heussner:2010}.
Further, over-approximation \cite{Flanagan:2003,Henzinger:2003} and
under-ap\-prox\-i\-ma\-tion techniques \cite{Esparza:2011,QadeerR:2005,Bouajjani:Tacas:2012,Torre:2011} have been studied.
\citeauthor{Czerwinski:09} introduced PCCFG to study bisimulation for BPC~\cite{Czerwinski:09}, a process algebra
extending BPA and BPP \cite{Esparza:1997}. However, synchronisation between processes (which transforms PCCFG to APCPS) is not a feature considered by \citeauthor{Czerwinski:09}

\paragraph*{Extensions of Petri nets}
Coverability is a central decision problem
in the vast literature of Petri net extensions. However, any non-trivial extension, 
 such as \emph{reset arcs} or \emph{transfer arcs} \cite{Dufourd:98,Schnoebelen:10}, typically renders coverability non-primitive recursive.
\emph{Nested Petri nets} may appear closely related to NNCT, 
they are however much more expressive and coverability is Ackermann-hard \cite{LomazovaS:99}. Nested Petri nets allow arbitrary nesting of tokens, and nested layers of a token can synchronise. By contrast, internal synchronisation is not possible in NNCT: coloured tokens can only be ejected to simple places and then inspected. Our proof exploits this fact and it seems to 
explain the \Tower-membership of NNCT coverability. 

\emph{Data nets} \cite{LazicNORW:07} allow tokens to be drawn from an arbitrary linearly ordered set.
Recently, coverability and termination for data nets and a subclass, Petri data nets (PDN), were shown to be $\FOmegaUpThree$-complete \cite{HaddadSS:2012,Schmitz:2013}.
%
A more restricted subclass of PDN studied by \citeauthor{LazicNORW:07} gives rise to a \Tower-hard coverability problem, namely, \emph{unordered} PDN (UPDN) which features an equality check on tokens. 
\citeauthor{LazicNORW:07} show that coverability, termination and boundedness are all \Tower-hard, but no upper-bound is available. Unfortunately, the equality check on tokens 
makes it unclear whether coverability of UPDN reduces to NNCT coverability which is why we opted for a \Tower-hardness proof from first principles. 
Adding the ability to create fresh tokens to UPDNs yields $\nu$-Petri nets ($\nu$-PN)
for which 
coverability is \Ack-hard 
\cite{Rosa-VelardoF11}.

\emph{Rackoff technique.}
Originally introduced to show the \Expspace-membership of coverability and boundedness for VAS~\cite{Rackoff:78}, the \emph{Rackoff technique} has recently become popular. It has been used to establish \Expspace\ upper bounds for coverability and boundedness of \emph{strongly increasing affine nets} (SIAN) \cite{Bonnet:12}, selective unboundedness of VAS with states (VASS) \cite{Demri:2013}, model-checking Petri nets \cite{BlockeletS:2011},
an \AltExpspace\ upper bound for coverability and boundedness for branching VAS \cite{DemriJLL:2009}, and
a \Tower-upper bound for a coverability problem for alternating BVAS with states (ABVASS) \cite{LazicS:2014}.
Even though NNCT coverability and ABVASS coverability are \Tower-complete, it is not 
obvious how to inter-reduce the coverability problems between NNCT and ABVASS. 
It is hard to see how one can simulate in an ABVASS 
the ability of NNCT to model 
complex tokens, carrying an unbounded number of coloured tokens, that can interact via ejection with other complex tokens. 
In the other direction there is no clear counterpart to the tree-like runs of ABVASS in NNCT. 

\emph{Vector addition systems with one stack (SVAS).}
Recent work 
has shown that boundedness and termination are decidable for SVAS and that the problem lies in \HAck\ \cite{Leroux:2014}. The decidability of coverability and reachability is still an open question but are known to be \Tower-hard \cite{Lazic:2013}. 
 
\paragraph*{Future Directions}
%
We intend to identify 
a practical coverability algorithm for shaped ACPS, 
as they arise 
from the abstract interpretation of realistic Erlang programs \cite{Soter:2012,DOsualdoKO:2013}.

\paragraph*{Acknowledgments}
Financial support by EPSRC (grant EP/F036\-361/1 and
DTG doctoral studentship for the first author) is gratefully acknowledged. 
We thank Matthew Hague, Stefan Kiefer, James Worrell, Javier Esparza, Sylvain Schmitz, Ranko Lazic, and Alain Finkel for insightful comments.


\bibliographystyle{abbrvnat}
{\small
\bibliography{bibliography/pccfg,bibliography/ref,bibliography/vass,bibliography/asyncpc,bibliography/concpds,bibliography/complexity,bibliography/datanets}

\begin{thebibliography}{44}
\providecommand{\natexlab}[1]{#1}
\providecommand{\url}[1]{\texttt{#1}}
\expandafter\ifx\csname urlstyle\endcsname\relax
  \providecommand{\doi}[1]{doi: #1}\else
  \providecommand{\doi}{doi: \begingroup \urlstyle{rm}\Url}\fi

\bibitem[Babic and Rakamaric(2013)]{Babic:2011}
D.~Babic and Z.~Rakamaric.
\newblock Asynchronously communicating visibly pushdown systems.
\newblock In \emph{FMOODS/FORTE}, pages 225--241, 2013.

\bibitem[Blockelet and Schmitz(2011)]{BlockeletS:2011}
M.~Blockelet and S.~Schmitz.
\newblock {Model Checking Coverability Graphs of Vector Addition Systems}.
\newblock In \emph{MFCS}, pages 108--119, 2011.

\bibitem[Bonnet et~al.(2012)Bonnet, Finkel, and Praveen]{Bonnet:12}
R.~Bonnet, A.~Finkel, and M.~Praveen.
\newblock {Extending the Rackoff technique to Affine nets}.
\newblock In \emph{FSTTCS}, pages 301--312, 2012.

\bibitem[Bouajjani and Emmi(2012{\natexlab{a}})]{Bouajjani:Popl:2012}
A.~Bouajjani and M.~Emmi.
\newblock {Analysis of Recursively Parallel Programs}.
\newblock In \emph{POPL}, pages 203--214, 2012{\natexlab{a}}.

\bibitem[Bouajjani and Emmi(2012{\natexlab{b}})]{Bouajjani:Tacas:2012}
A.~Bouajjani and M.~Emmi.
\newblock {Bounded Phase Analysis of Message-Passing Programs}.
\newblock In \emph{TACAS}, pages 451--465, 2012{\natexlab{b}}.

\bibitem[Bouajjani et~al.(2005)Bouajjani, M{\"u}ller-Olm, and
  Touili]{Bouajjani:2005}
A.~Bouajjani, M.~M{\"u}ller-Olm, and T.~Touili.
\newblock {Regular Symbolic Analysis of Dynamic Networks of Pushdown Systems}.
\newblock In \emph{CONCUR}, pages 473--487, 2005.

\bibitem[Cai and Ogawa(2013)]{CaiO:2013}
X.~Cai and M.~Ogawa.
\newblock {Well-Structured Pushdown Systems}.
\newblock In \emph{CONCUR}, pages 121--136, 2013.

\bibitem[Chadha and Viswanathan(2007)]{Chadha:2007}
R.~Chadha and M.~Viswanathan.
\newblock {Decidability Results for Well-Structured Transition Systems with
  Auxiliary Storage}.
\newblock In \emph{CONCUR}, pages 136--150, 2007.

\bibitem[Czerwinski et~al.(2009)Czerwinski, Fr{\"o}schle, and
  Lasota]{Czerwinski:09}
W.~Czerwinski, S.~B. Fr{\"o}schle, and S.~Lasota.
\newblock {Partially-Commutative Context-Free Processes}.
\newblock In \emph{CONCUR}, pages 259--273, 2009.

\bibitem[Demri(2013)]{Demri:2013}
S.~Demri.
\newblock {On Selective Unboundedness of VASS}.
\newblock \emph{J. Comput. Syst. Sci.}, 79\penalty0 (5):\penalty0 689--713,
  2013.

\bibitem[Demri et~al.(2009)Demri, Jurdzinski, Lachish, and
  Lazic]{DemriJLL:2009}
S.~Demri, M.~Jurdzinski, O.~Lachish, and R.~Lazic.
\newblock {The Covering and Boundedness Problems for Branching Vector Addition
  Systems}.
\newblock In \emph{FSTTCS}, pages 181--192, 2009.

\bibitem[D'Osualdo et~al.(2012)D'Osualdo, Kochems, and Ong]{Soter:2012}
E.~D'Osualdo, J.~Kochems, and C.-H.~L. Ong.
\newblock {Soter: An Automatic Safety Verifier for {Erlang}}.
\newblock In \emph{AGERE! '12}, pages 137--140, 2012.

\bibitem[D'Osualdo et~al.(2013)D'Osualdo, Kochems, and Ong]{DOsualdoKO:2013}
E.~D'Osualdo, J.~Kochems, and C.-H.~L. Ong.
\newblock {Automatic Verification of Erlang-Style Concurrency}.
\newblock In \emph{SAS}, pages 454--476, 2013.

\bibitem[Dufourd et~al.(1998)Dufourd, Finkel, and Schnoebelen]{Dufourd:98}
C.~Dufourd, A.~Finkel, and P.~Schnoebelen.
\newblock {Reset Nets Between Decidability and Undecidability}.
\newblock In \emph{ICALP}, pages 103--115, 1998.

\bibitem[Emmi et~al.(2014)Emmi, Ganty, and Majumdar]{Majumdar:2014:Draft}
M.~Emmi, P.~Ganty, and R.~Majumdar.
\newblock {An Expressive yet Decidable Model of Asynchronous Event-Driven
  Programs}.
\newblock Preprint, 2014.

\bibitem[Esparza(1997)]{Esparza:1997}
J.~Esparza.
\newblock {Petri Nets, Commutative Context-Free Grammars, and Basic Parallel
  Processes}.
\newblock \emph{Fundam. Inform.}, 31\penalty0 (1):\penalty0 13--25, 1997.

\bibitem[Esparza and Ganty(2011)]{Esparza:2011}
J.~Esparza and P.~Ganty.
\newblock {Complexity of Pattern-Based Verification for Multithreaded
  Programs}.
\newblock In \emph{POPL}, pages 499--510, 2011.

\bibitem[Esparza and Podelski(2000)]{Esparza:2000}
J.~Esparza and A.~Podelski.
\newblock {Efficient Algorithms for pre* and post* on Interprocedural Parallel
  Flow Graphs}.
\newblock In \emph{POPL}, pages 1--11, 2000.

\bibitem[Esparza et~al.(2011)Esparza, Ganty, Kiefer, and
  Luttenberger]{EsparzaGKL:2011}
J.~Esparza, P.~Ganty, S.~Kiefer, and M.~Luttenberger.
\newblock {Parikhs Theorem: A Simple and Direct Automaton Construction}.
\newblock \emph{Inf. Process. Lett.}, 111\penalty0 (12):\penalty0 614--619,
  2011.

\bibitem[Finkel and Schnoebelen(2001)]{Finkel:01}
A.~Finkel and P.~Schnoebelen.
\newblock {Well-Structured Transition Systems Everywhere!}
\newblock \emph{TCS}, 256\penalty0 (1-2):\penalty0 63--92, 2001.

\bibitem[Flanagan and Qadeer(2003)]{Flanagan:2003}
C.~Flanagan and S.~Qadeer.
\newblock {Thread-Modular Model Checking}.
\newblock In \emph{SPIN}, pages 213--224, 2003.

\bibitem[Ganty and Majumdar(2012)]{Ganty:2012}
P.~Ganty and R.~Majumdar.
\newblock {Algorithmic Verification of Asynchronous Programs}.
\newblock \emph{ACM TOPLAS}, 34\penalty0 (1):\penalty0 6, 2012.

\bibitem[Ganty et~al.(2009)Ganty, Majumdar, and Rybalchenko]{Ganty:2009}
P.~Ganty, R.~Majumdar, and A.~Rybalchenko.
\newblock {Verifying Liveness for Asynchronous Programs}.
\newblock In \emph{POPL}, pages 102--113, 2009.

\bibitem[Haddad et~al.(2012)Haddad, Schmitz, and Schnoebelen]{HaddadSS:2012}
S.~Haddad, S.~Schmitz, and P.~Schnoebelen.
\newblock {The Ordinal-Recursive Complexity of Timed-arc Petri Nets, Data Nets,
  and Other Enriched Nets}.
\newblock In \emph{LICS}, pages 355--364, 2012.

\bibitem[Henzinger et~al.(2003)Henzinger, Jhala, Majumdar, and
  Qadeer]{Henzinger:2003}
T.~A. Henzinger, R.~Jhala, R.~Majumdar, and S.~Qadeer.
\newblock {Thread-Modular Abstraction Refinement}.
\newblock In \emph{CAV}, pages 262--274, 2003.

\bibitem[Heu{\ss}ner et~al.(2010)Heu{\ss}ner, Leroux, Muscholl, and
  Sutre]{Heussner:2010}
A.~Heu{\ss}ner, J.~Leroux, A.~Muscholl, and G.~Sutre.
\newblock {Reachability Analysis of Communicating Pushdown Systems}.
\newblock In \emph{FOSSACS}, pages 267--281, 2010.

\bibitem[Jhala and Majumdar(2007)]{Jhala:2007}
R.~Jhala and R.~Majumdar.
\newblock {Interprocedural Analysis of Asynchronous Programs}.
\newblock In \emph{POPL}, pages 339--350, 2007.

\bibitem[Kahlon(2009)]{Kahlon:2009}
V.~Kahlon.
\newblock {Boundedness vs. Unboundedness of Lock Chains: Characterizing
  Decidability of Pairwise {CFL}-Reachability for Threads Communicating via
  Locks}.
\newblock In \emph{LICS}, pages 27--36, 2009.

\bibitem[Kochems and Ong(2013)]{KochemsO:2013}
J.~Kochems and C.-H.~L. Ong.
\newblock {Safety Verification of Asynchronous Pushdown Systems with Shaped
  Stacks}.
\newblock In \emph{CONCUR}, pages 288--302, 2013.

\bibitem[Lazic(2013)]{Lazic:2013}
R.~Lazic.
\newblock {The Reachability Problem for Vector Addition Systems with a Stack Is
  Not Elementary}.
\newblock \emph{CoRR}, abs/1310.1767, 2013.

\bibitem[Lazic and Schmitz(2014)]{LazicS:2014}
R.~Lazic and S.~Schmitz.
\newblock {Non-Elementary Complexities for Branching VASS, MELL, and
  Extensions}.
\newblock In \emph{CSL/LICS}, pages 61:1--61:10, 2014.

\bibitem[Lazic et~al.(2007)Lazic, Newcomb, Ouaknine, Roscoe, and
  Worrell]{LazicNORW:07}
R.~Lazic, T.~C. Newcomb, J.~Ouaknine, A.~W. Roscoe, and J.~Worrell.
\newblock {Nets with Tokens Which Carry Data}.
\newblock In \emph{ICATPN}, pages 301--320, 2007.

\bibitem[Leroux et~al.(2014)Leroux, Praveen, and Sutre]{Leroux:2014}
J.~Leroux, M.~Praveen, and G.~Sutre.
\newblock {{Hyper-Ackermannian Bounds for Pushdown Vector Addition Systems}}.
\newblock In \emph{CSL/LICS}, pages 63:1--63:10, 2014.

\bibitem[Lipton(1976)]{Lipton:76}
R.~J. Lipton.
\newblock {The Reachability Problem Requires Exponential Space}.
\newblock Technical report, Dept. of Computer Science, Yale, 1976.

\bibitem[Lomazova and Schnoebelen(1999)]{LomazovaS:99}
I.~A. Lomazova and P.~Schnoebelen.
\newblock {Some Decidability Results for Nested Petri Nets}.
\newblock In \emph{Ershov Memorial Conf.}, pages 208--220, 1999.

\bibitem[Qadeer and Rehof(2005)]{QadeerR:2005}
S.~Qadeer and J.~Rehof.
\newblock {Context-Bounded Model Checking of Concurrent Software}.
\newblock In \emph{TACAS}, pages 93--107, 2005.

\bibitem[Rackoff(1978)]{Rackoff:78}
C.~Rackoff.
\newblock {The Covering and Boundedness Problems for Vector Addition Systems}.
\newblock \emph{TCS}, 6:\penalty0 223--231, 1978.

\bibitem[Ramalingam(2000)]{Ramalingam:2000}
G.~Ramalingam.
\newblock {Context-Sensitive Synchronization-Sensitive Analysis Is
  Undecidable}.
\newblock \emph{ACM TOPLAS}, 22\penalty0 (2):\penalty0 416--430, 2000.

\bibitem[Rosa-Velardo and de~Frutos-Escrig(2011)]{Rosa-VelardoF11}
F.~Rosa-Velardo and D.~de~Frutos-Escrig.
\newblock {Decidability and Complexity of Petri Nets with Unordered Data}.
\newblock \emph{TCS}, 412\penalty0 (34):\penalty0 4439--4451, 2011.

\bibitem[Schmitz(2013)]{Schmitz:2013}
S.~Schmitz.
\newblock {Complexity Hierarchies Beyond Elementary}.
\newblock \emph{CoRR}, abs/1312.5686, 2013.

\bibitem[Schnoebelen(2010)]{Schnoebelen:10}
P.~Schnoebelen.
\newblock {Revisiting Ackermann-Hardness for Lossy Counter Machines and Reset
  Petri Nets}.
\newblock In \emph{MFCS}, pages 616--628, 2010.

\bibitem[Sen and Viswanathan(2006)]{Sen:2006}
K.~Sen and M.~Viswanathan.
\newblock {Model Checking Multithreaded Programs with Asynchronous Atomic
  Methods}.
\newblock In \emph{CAV}, pages 300--314, 2006.

\bibitem[Stockmeyer(1974)]{Stockmeyer:1974}
L.~J. Stockmeyer.
\newblock \emph{{The Complexity of Decision Problems in Automata Theory and
  Logic}}.
\newblock PhD thesis, MIT, 1974.

\bibitem[Torre and Napoli(2011)]{Torre:2011}
S.~L. Torre and M.~Napoli.
\newblock {Reachability of Multistack Pushdown Systems with Scope-Bounded
  Matching Relations}.
\newblock In \emph{CONCUR}, pages 203--218, 2011.

\end{thebibliography}
}

\ifwithappendix
\clearpage
\appendix

\subsection{Proofs for Section \ref{sec:acps}}
\begin{customlemma}[\ref{lem:acps:cov:eq:simplecov}]
Coverability and simple coverability for ACPS poly\-no\-mi\-al-time inter-reduce.
\end{customlemma}
\begin{proof}
The reduction from simple coverability to coverability of ACPS is trivial since the former is a subproblem
of the latter. 
For the other direction suppose we have a coverability query $\mathpzc{Q} = \varparen{\calP,\Pi_0 \ChanPar \Gamma_0,\Pi_{\text{cov}} \ChanPar \Gamma_{\text{cov}}}$.
Let us assume that $\calP = (\calQ,\calA,\Chan,\Msg[],\calR)$,
$\Pi_{\text{cov}} = (q_1,\beta_1) \parallel \cdots \parallel (q_n,\beta_n)$ and
$\Pi_{0} = (q^{0}_1,\beta^{0}_1) \parallel \cdots \parallel (q^{0}_m,\beta^{0}_m)$
where $q_i,q^0_{j} \in \calQ$ and $\beta_i,\beta^{0}_j \in \calA^*$ for $i \in \range{n}$, $j \in \range{m}$.

\noindent Let us define the following ACPS $\calP' = (\calQ',\calA,\Chan,\Msg[],\calR')$
where
$\calQ' = \calQ \union \varset{q'_{(i,A,\beta')},q'_{(i,\epsilon)},\discn q^{0}_{(j,\beta'')} : \beta_i = A\cdot\beta\cdot\beta', 
\beta^{0}_j = \beta'' \cdot \beta''', \beta_k = \epsilon, i,k \in \range{n}, j \in \range{m}}$ and all control states 
$q'_{(i,A,\beta)}$, $q'_{(k,\epsilon)}$
$q^{0}_{(j,\beta)}$ are fresh
and 

\begin{align*}
\calR'& = \calR \\
&\union 
\set{\left.
	\begin{aligned}
	(q'_{(i,A,B\,\beta')},B)    			&\to[effect=\epsilon] (q'_{(i,A,\beta')},\epsilon),\\
	(q'_{(i,A,\beta'')},B')  		    &\to[effect=\epsilon] (q'_{(i,\beta'')},\epsilon),\\
	(q_{i},A)      		   	 				&\to[effect=\epsilon] (q'_{(i,A,\beta'_i)},\epsilon)
	\end{aligned}\, \right| 
	\begin{aligned}
	&\beta_i = A\, \beta'_i, \\
	&\beta'_i = \beta_0\cdot B \cdot\beta', \\
	&\beta'_i = \beta_1 \cdot \beta'', \\
	&i \in \range{n}
	\end{aligned}
}\\
&\union
\set{\left.
	\begin{aligned}
	(q_{k},\epsilon)        				&\to[effect=\epsilon] (q'_{(k,\epsilon)},\epsilon),\\
	(q^0_{(j,\beta''' \cdot C)},D_0)		&\to[effect=\epsilon] (q^0_{(j,\beta''')},D_0\,C),\\
	(q^0_{(j,\epsilon)},D_0)				&\to[effect=\epsilon] (q^0_j,\epsilon)
	\end{aligned}\, \right| 
	\begin{aligned}
	&\beta_k = \epsilon,\\
	&\beta^0_j = \beta'''\!\cdot\! C\!\cdot\! \beta''''\\
	&k \in \range{n}, j \in \range{m}
	\end{aligned}
}
\end{align*}
The ACPS $\calP'$ essentially implements the query $\mathpzc{Q}'$. The rules involving 
$q^0_{(j,\beta)}$ set up the start configuration with arbitrary stacks from a length one stack.
Rules involving $q'_{(k,\epsilon)}$ and $q'_{(i,A,\epsilon)}$ essentially check that the coverability query is satisfied.
In order to account for this we change the coverability query to $\mathpzc{Q}' = \varparen{\calP,\Pi'_0 \ChanPar \Gamma_0, \Pi'_{\text{cov}} \ChanPar \Gamma_{\text{cov}}}$
where 
$\Pi'_0 = (q^0_{(1,\beta^0_1)},D_0) \parallel \cdots \parallel (q^0_{(n,\beta^0_n)},D_0)$,
$\Pi'_{\text{cov}} = (q'_{(1,\tilde{\beta}_1)},\epsilon) \parallel \cdots \parallel (q'_{(n,\tilde{\beta}_n)},\epsilon)$ and if $\beta_i = \epsilon$ then $\tilde{\beta}_1 = \epsilon$; otherwise if $\beta_i = A \cdot \beta'$ then $\tilde{\beta}_i = A,\epsilon$.
By construction $\mathpzc{Q}'$ is a simple query.

Since any set of suffixes/prefixes of a sequence $\beta$ satisfies 
$$|\varset{\beta_0 : \beta_0\cdot\beta_1 = \beta}| = |\varset{\beta_1 : \beta_0\cdot\beta_1 = \beta}| \leq |\beta|$$
we can clearly set that $|\calQ'| \leq |\calQ| + (n+m) \times \max\varset{|\beta_i|,|\beta^0_j| : i \in \range{n}, j \in \range{m}}$
and $|\calR'| \leq |\calR| + 3 (n+m) \times \max\varset{|\beta_i|,|\beta^0_j| : i \in \range{n}, j \in \range{m}}$ and hence $\calP'$ and $\mathpzc{Q}'$ are clearly polynomial-time computable from $\calP$ and $\mathpzc{Q}$.

By construction $\calP'$ has the following property: $(q'_{i,A,\beta_1},A\,\beta) \parallel \Pi_0 \ChanPar \Gamma \discn
\to[*,TS=\calP'] (q'_{i,A,\epsilon},\epsilon) \parallel \Pi_0 \ChanPar \Gamma$ if and only if $\beta_i = A\,\beta_0\beta_1$, $\beta_1 \higleq \beta$, i.e.~checking coverability at a process level is correctly implemented by each process.
Further $\Pi_0 \ChanPar \Gamma_0 \to[TS=\calP] \Pi \ChanPar \Gamma$ if, and only if,
$\Pi'_0 \ChanPar \Gamma_0 \to[TS=\calP',*] \Pi \ChanPar \Gamma$, i.e.~$\Pi'_0$ correctly sets up $\Pi_0$.

Suppose now that $\mathpzc{Q}$ is a yes-instance. This means that
 $\Pi_0 \ChanPar \Gamma_0 \discn \to[*,TS=\calP] \Pi \ChanPar \Gamma$ such that 
 $\Pi_{\text{cov}} \ChanPar \Gamma_{\text{cov}} \leqACPS \Pi \ChanPar \Gamma$. Clearly it is then the case that
$\Pi'_0 \ChanPar \Gamma_0 \to[*,TS=\calP'] \Pi \ChanPar \Gamma$. Since $\Pi_{\text{cov}} \ChanPar \Gamma_{\text{cov}} \leqACPS \Pi \ChanPar \Gamma$ we know that
$\Pi = (q_1,\beta'_1) \parallel \cdots \parallel (q_n,\beta'_n) \parallel \Pi'$ and 
for all $i \in \range{n}$ either $\beta_i = \epsilon$ or $\beta_i = A \cdot \beta''_i$, 
$\beta'_i = A \cdot \beta'''_i$ and $\beta''_i \higleq \beta'''_i$.

And hence $\Pi'_0 \ChanPar \Gamma_0 \to[*,TS=\calP'] \Pi \ChanPar \Gamma \to[*,TS=\calP'] (q'_{(1,\tilde{\beta}_1)},\epsilon) \parallel \cdots \parallel (q'_{(n,\tilde{\beta}_n)},\epsilon) \parallel \Pi' \ChanPar \Gamma$ and hence $\mathpzc{Q}'$ is a yes-instance for coverability.

For the other direction, suppose $\mathpzc{Q}'$ is a yes-instance for coverability.
We then know that
$\Pi'_0 \ChanPar \Gamma_0 \to[*,TS=\calP'] (q'_{(1,\tilde{\beta}_1)},\beta'_1) \parallel \cdots \parallel (q'_{(n,\tilde{\beta}_n)},\beta'_n) \parallel \Pi' \ChanPar \Gamma' =: s'$ such that
$$(q'_{(1,\tilde{\beta}_1)},\epsilon) \parallel \cdots \parallel (q'_{(n,\tilde{\beta}_1)},\epsilon) \ChanPar \Gamma \leqACPS s'.$$
Then our observation above tells us that it must be the case that (possibly reordering locally-independent transitions)
$\Pi'_0 \ChanPar \Gamma_0 \to[*,TS=\calP'] (q_1,\beta''_1\beta'_1) \parallel \cdots \parallel (q_n,\beta''_n\beta'_n) \parallel \Pi' \ChanPar \Gamma' \to[*,TS=\calP'] (q'_{(1,\beta_1)},\beta''_1\beta'_1) \parallel \cdots \parallel (q'_{(n,\beta_n)},\beta''_n\beta'_n) \parallel \Pi' \ChanPar \Gamma'$ with
either $\beta_i = \epsilon$ or $\beta_i = A \beta^{\dagger}_i$, $\beta''_1 = A \beta'''_i$ and
$\beta^{\dagger}_i \higleq \beta'''_i$ and thus $\beta^{\dagger}_i \higleq \beta'''_i\beta'_i$ for all $i \in \range{n}$.
Further 
$\Pi_0 \ChanPar \Gamma_0 \to[*,TS=\calP] (q_1,\beta''_1\beta'_1) \parallel \cdots \parallel (q_n,\beta''_n\beta'_n) \parallel \Pi' \ChanPar \Gamma'$
Hence $\mathpzc{Q}$ is a yes-instance for coverability.

We can thus conclude that simple coverability and coverability are polynomial-time inter-reducible.
\end{proof}

\subsubsection{Proof of Proposition~\ref{prop:acps:normalform}}
In this section we will give a proof of the following Proposition:
\begin{customproposition}[\ref{prop:acps:normalform}]
Given an ACPS $\calP$, a simple coverability query $\mathpzc{Q}$ and a $\Pi^0 \ChanPar \Gamma^0$ there exists ACPS $\Fnormalform{\calP}$ in normal form, a simple coverability query $\Fnormalform{\mathpzc{Q}}$, and $\Fnormalform{\Pi^0 \ChanPar \Gamma^0}$ --- all poly\-no\-mi\-al-time computable --- such 
that:
$\mathpzc{Q}$ is a yes-instance if, and only if, $\Fnormalform{\mathpzc{Q}}$ is a yes-instance; and
$\calP$ is bounded (terminating) from $\Pi^0 \ChanPar \Gamma^0$ if, and only if, $\Fnormalform{\calP}$ is bounded (terminating respectively) from 
$\Fnormalform{\Pi^0 \ChanPar \Gamma^0}$.
\end{customproposition}

We will give a proof in two steps:
\begin{inparaenum}[(i)]
\item we first transform a general ACPS $\calP$ into an ACPS that satisfies a \emph{pre-normal form} as defined below; 
\item secondly, we show how to transform an ACPS in pre-normal form with the desired property.
\end{inparaenum}
We lay out our argument in the two Lemmas below, but first we define \emph{pre-normal form}:
We say an ACPS $\calP  = (\calQ,\calA,\Chan,\Msg[],\calR)$ is in pre-normal if
for all $(q,\beta) \to[effect=\lambda] (q',\beta') \in \calR$
\begin{inparaenum}[(i')]
\item(acps:nf:i') if $\lambda \neq \epsilon$ then $\beta = \beta' = \epsilon$,
otherwise either 
\item(acps:nf:ii') $\beta = A \in \calA$ and $\beta' = \epsilon$ or
\item(acps:nf:iii') $\beta = \epsilon$ and $\beta' = A' \in \calA$; or
\item(acps:nf:iv') $\beta = \beta' = \epsilon$; and
\item(acps:nf:v') if $\lambda = \spn{(q'',\beta'')}$ then $\beta'' = \epsilon$.
\end{inparaenum}

\begin{lemma}\label{app:lemma:pnf}
Given an ACPS $\calP$, a simple coverability query $\mathpzc{Q}$ and a start configuration $\Pi^0 \ChanPar \Gamma^0$  there exists ACPS $\Fprenormalform{\calP}$ in pre-normal form, simple coverability query $\Fprenormalform{\mathpzc{Q}}$, and start configuration $\Fprenormalform{\Pi^0 \ChanPar \Gamma^0}$ --- all polynomial-time computable --- such 
that:
\begin{inparaenum}[(A)]
\item $\mathpzc{Q}$ is a yes-instance if, and only if, $\Fprenormalform{\mathpzc{Q}}$ is a yes-instance;
\item $\calP$ is bounded from $\Pi^0 \ChanPar \Gamma^0$ if, and only if, $\Fprenormalform{\calP}$ is bounded from $\Fprenormalform{\Pi^0 \ChanPar \Gamma^0}$; and
\item $\calP$ is terminating from $\Pi^0 \ChanPar \Gamma^0$ if, and only if, $\Fprenormalform{\calP}$ is terminating from $\Fprenormalform{\Pi^0 \ChanPar \Gamma^0}$.
\end{inparaenum}
\end{lemma}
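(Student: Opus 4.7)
The plan is to replace each rule of $\calP$ by a chain of pre-normal-form ``atomic'' rules that pop the left-hand side one symbol at a time, execute the side-effect with empty stack-operands, and then push the right-hand side one symbol at a time, using fresh intermediate control states. Concretely, for each rule $r : (q, A_1\cdots A_n) \to[effect=\lambda] (q', B_1\cdots B_m)$ of $\calP$, I introduce fresh states $q^r_0,\ldots,q^r_n,\hat{q}^r_m,\ldots,\hat{q}^r_0$ with $q^r_0 := q$ and $\hat{q}^r_0 := q'$, together with the rules $(q^r_{i-1}, A_i) \to[effect=\epsilon] (q^r_i, \epsilon)$ for $i \in \range{n}$, the action step $(q^r_n, \epsilon) \to[effect=\lambda'] (\hat{q}^r_m, \epsilon)$, and $(\hat{q}^r_i, \epsilon) \to[effect=\epsilon] (\hat{q}^r_{i-1}, B_i)$ for $i \in \range{m}$. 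When $\lambda = \spn{(q'', C_1\cdots C_k)}$ is a spawn, I take $\lambda' := \spn{(\check{q}^r_0, \epsilon)}$ and add a push chain $(\check{q}^r_{j-1}, \epsilon) \to[effect=\epsilon] (\check{q}^r_j, C_{k-j+1})$ for $j \in \range{k}$ with $\check{q}^r_k := q''$, so that the spawned process starts on an empty stack and builds up its initial content itself, validating (v'); otherwise I set $\lambda' := \lambda$. The resulting ACPS $\Fprenormalform{\calP}$ satisfies (i')--(v') by construction, contributes at most $O(|\calR|\cdot L)$ new states and rules where $L$ bounds the stack lengths appearing in $\calR$, and is thus polynomial-time computable. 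I take $\Fprenormalform{\Pi^0 \ChanPar \Gamma^0} := \Pi^0 \ChanPar \Gamma^0$ and $\Fprenormalform{\mathpzc{Q}} := \mathpzc{Q}$, since the original control states and stack alphabet are retained in $\Fprenormalform{\calP}$.

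Correctness rests on a weak bisimulation between $\calP$ and $\Fprenormalform{\calP}$. In the forward direction each step $s \to_{\calP} s'$ is simulated by a deterministic sequence of $n+1+m$ chain-continuation transitions of $\Fprenormalform{\calP}$ (with $k$ further transitions in the spawned process when $\lambda$ is a spawn); conversely, a maximal chain of chain-continuation transitions projects to a single $\calP$-step. The intermediate control states are ``silent'' in the usual sense: they are reached only along chains, admit only the next chain step, and involve only local stack manipulation except for the single action step. Writing $\Config^{\mathsf{orig}}$ for the set of configurations in which every process is in an original control state of $\calQ$, the forward direction shows that every $\calP$-reachable configuration is $\Fprenormalform{\calP}$-reachable as an element of $\Config^{\mathsf{orig}}$, and the backward direction shows that every $\Config^{\mathsf{orig}}$-reachable configuration of $\Fprenormalform{\calP}$ is $\calP$-reachable.

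From this bisimulation the three preservations follow. For coverability, the simple target lies in $\Config^{\mathsf{orig}}$, so it is covered in $\calP$ iff covered in $\Fprenormalform{\calP}$: any covering processes in $\Fprenormalform{\calP}$ must themselves be in original states, and processes in intermediate chain states are irrelevant to the match. Boundedness is preserved because each $\calP$-reachable configuration corresponds to a bounded number of $\Fprenormalform{\calP}$-reachable configurations (each process occupies at most one position of at most one chain), so the two reachable sets are finite together. Termination is preserved because chains have bounded length: an infinite run of $\Fprenormalform{\calP}$ must contain infinitely many completed chains, which project to an infinite run of $\calP$. The main obstacle is that a partial chain cannot be ``rewound'' to $\Config^{\mathsf{orig}}$ once the action step has fired (the action may have altered channels or spawned further processes), which complicates the backward direction for runs that leave processes midway through chains; I would handle this by observing that chain-continuation steps of distinct processes commute and that each chain contains at most one communication step, so any run of $\Fprenormalform{\calP}$ can be re-sequenced into one in which chains complete atomically, yielding a faithful step-by-step match with $\calP$.
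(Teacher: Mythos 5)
Your construction and argument are essentially the paper's: it likewise splits each rule into single-symbol pop steps, a single action step (with spawns launched on an empty stack, the spawned process pushing its own initial content), and single-symbol push steps over fresh intermediate states, and it proves a weak (reflexive) bisimulation in which mid-pop processes are related to the pre-action $\calP$-configuration and mid-push processes to the post-action one --- which is precisely the content of your commutation/re-sequencing remark, so your ``obstacle'' is handled the same way there. The only cosmetic difference is that the paper adds $q^{\text{cov}}$ detector states and rules and modifies the query, whereas you keep $\mathpzc{Q}$ unchanged and argue directly that processes in fresh intermediate states cannot match original-state targets; both are sound, and your boundedness/termination arguments coincide with the paper's (finitely many mid-chain representatives per configuration, and no silent cycles, i.e.\ bounded chains, ruling out infinite $\epsilon$-tails).
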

\begin{proof}

Let us fix an ACPS $\calP = (\calQ,\calA,\Chan,\Msg[],\calR)$ and 
let us define the ACPS $\calP_0 = (\calQ',\calA,\Chan,\Msg[],\calR')$ and
\begin{align*}
\calQ' &= \calQ \union \set{{q}^{\text{pop}}_{\beta_0}, {q'}^{\text{push}}_{\beta_1} \left| 
		\begin{aligned}
		&(q,\beta) \to[effect=\lambda] (q',\beta') \in \calR', 
		 \beta  = \beta'_0   \cdot\beta_0, \\
		&\beta' = \beta_1\cdot\beta'_1
		\end{aligned}\right.}\\
		&\;\;\;\;\union 
		\set{{q''}^{\text{push}}_{\beta''_0} \left| 
		\begin{aligned}
		&(q,\beta) \to[effect=\spn{(q'',\beta'')}] (q',\beta') \in \calR'\\
		& \beta'' = \beta''_0 \cdot \beta''_1
		\end{aligned}\right.}\\
		&\;\;\;\;\union
		\set{q^{\text{cov}} \mid q \in \calQ}\\
\calR' &= \set{({q}^{\text{pop}}_{\beta},\epsilon) \to[effect=\lambda'] ({q'}^{\text{push}}_{\beta'},\epsilon) 
	 \left|
	 \begin{aligned}
	 &(q,\beta) \to[effect=\lambda] (q',\beta') \in \calR \\
	 &\text{if } \lambda = \spn{(q'',\beta'')} \\
		&\;\text{ then } \lambda' = \spn{(q''^{\text{push}}_{\beta''},\epsilon)}\\
		&\text{otherwise } \lambda' = \lambda
	 \end{aligned}
	 \right.
	 }\\
	 &\;\;\;\;\union
	 \set{\left.
	 \begin{aligned}
	  &(q,\epsilon) 						 	\to[effect=\epsilon] ({q}^{\text{pop}}_{\epsilon},\epsilon),\\
	  &({q'}^{\text{push}}_{\epsilon},\epsilon) \to[effect=\epsilon] (q',\epsilon) 
	 \end{aligned}
	 \right| (q,\beta) \to[effect=\lambda] (q',\beta') \in \calR'
	 }\\
	 &\;\;\;\;\union
	\set{\left.
	\begin{aligned}
		&({q}^{\text{pop}}_{\beta},A) 	 \to[effect=\epsilon] ({q}^{\text{pop}}_{\beta\cdot A},\epsilon), \\
		&({q'}^{\text{push}}_{\beta'\cdot A'},\epsilon) \to[effect=\epsilon] ({q'}^{\text{push}}_{\beta'},A'), \\
	\end{aligned} \right|
	  \begin{aligned}
	  	&{q}^{\text{pop}}_{\beta},{q}^{\text{pop}}_{\beta\cdot A} \in \calQ'\\
	  	&{q'}^{\text{push}}_{\beta'\cdot A'},{q'}^{\text{push}}_{\beta'} \in \calQ'
	  \end{aligned}
	}
\end{align*}
The rules of $\calP_0$ simply implement a $(q,\beta) \to[effect=\lambda] (q',\beta')$ by popping
$\beta$ one symbol at the time and then pushing $\beta'$ one symbol at the time.
It is easy to see that $\calP_0$ is in pre-normal form. Further
let $\Xi = \varset{|\beta|,|\beta'| : (q,\beta) \to[effect=\lambda] (q',\beta') \in \calR'} \union \varset{|\beta''| : (q,\beta) \to[effect=\spn{(q'',\beta'')}] (q',\beta') \in \calR'}$ then
$|\calQ'| \leq 2|\calQ| + 3 \times |\calR| \times \max\varparen{\Xi}$ and
$|\calR'| \leq 2 \times |\calQ'| + 3 \times |\calR|$
and so $\calP_0$ is clearly polynomial-time computable from $\calP$.

We will show that there is a \emph{weak reflexive bisimulation} between $\calP$ and $\calP_0$.
\begin{definition*}[Weak reflexive bisimulation]
Suppose $(S,\to[label=u,LTS=S])$ and $(S',\discn\to[label=u,LTS={S'}])$ are labelled transition systems we say a relation $\calB \subseteq S \times S'$ is a 
\emph{weak reflexive simulation} if for all $(s,s') \in \calB$, if for some $t \in S$ we have $s \to[label=u,LTS=S] t$  then either $(t,s') \in \calB$ and $u = \epsilon$ or there exists 
$t' \in S'$ such that $s' \to[label=u,LTS=S',*] t'$ and $(t,t') \in \calB$. 
We say $\calB$ is a \emph{weak reflexive bisimulation} relation just if both $\calB$ and $\calB^{-1}$ are weak reflexive simulation relations.
\end{definition*}

We temporarily label the transition systems $(\calP,\to[TS=\calP])$ and $(\calP_0,\to[TS=\calP_0])$ with rules of $\calR$. Let us label the transition $\Pi \ChanPar \Gamma \to[label=r,LTS=\calP] \Pi' \ChanPar \Gamma'$ if the rule $r \in \calR$ is used to justify the transition. We label $\calP_0$'s transition as follows:  
If $\Pi \ChanPar \Gamma \to[TS=\calP_0] \Pi' \ChanPar \Gamma'$ using a rule
$({q}^{\text{pop}}_{\beta},\epsilon) \to[effect=\lambda'] ({q'}^{\text{push}}_{\beta'},\epsilon) \in \calR'$
introduced because of a rule $r = (q,\beta) \to[effect=\lambda] (q',\beta') \in \calR$
we label the transition
$\Pi \ChanPar \Gamma \to[label=r,LTS=\calP_0] \Pi' \ChanPar \Gamma'$; otherwise
we label the transition with $\epsilon$, i.e.~
$\Pi \ChanPar \Gamma \to[label=\epsilon,LTS=\calP_0] \Pi' \ChanPar \Gamma'$.

Let first define a representation function for the state of a pushdown process:
\begin{align*}
F(q,\beta) =\,& \set{(q,\beta)} \\
			&\union 
			\set{(q^{\text{push}}_{\beta_1},\beta_2\cdot\beta''') 
				\left|
				\begin{aligned}
				&\exists (q',\beta') \to[effect=\lambda] (q,\beta'') \in \calR\\	
				&\beta = \beta'' \cdot \beta'''\\
				&\beta'' = \beta_1 \cdot \beta_2
				\end{aligned}
				\right.
			}\\
			&\union 
			\set{(q^{\text{push}}_{\beta_1},\beta_2\cdot\beta''') 
				\left|
				\begin{aligned}
				&\exists (q_0,\beta_0) \to[effect=\lambda] (q_1,\beta_1) \in \calR\\	
				&\lambda = \spn{(q,\beta)}\\
				&\beta = \beta_1 \cdot \beta_2
				\end{aligned}
				\right.
			}\\
			&\union 
			\set{(q^{\text{pop}}_{\beta_1},\beta_2\cdot\beta''') 
				\left|
				\begin{aligned}
				&\exists (q,\beta') \to[effect=\lambda] (q',\beta'') \in \calR\\	
				&\beta = \beta' \cdot \beta'''\\
				&\beta' = \beta_1 \cdot \beta_2
				\end{aligned}
				\right.
			}
\end{align*}
with which we can now relate configurations of $\calP$ and $\calP_0$ using the relation:
$\calB = \varset{(\Pi \ChanPar \Gamma, \Pi' \ChanPar \Gamma) : \Pi = \pi_1 \parallel \cdots \parallel \pi_n, \Pi' = \pi'_1 \parallel \cdots \parallel \pi'_n, \forall i \in \range{n}. \pi'_i \in F(\pi_i)}$.

Let us now prove that $\calB$ is a weak reflexive simulation. 
Suppose $\paren{\pi \parallel \Pi \ChanPar \Gamma, \pi_0 \parallel \Pi_0 \ChanPar \Gamma} \in \calB$, and clearly $\pi = (q,\beta)$, and
$(q,\beta) \parallel \Pi \ChanPar \Gamma \to[label=l,LTS=\calP] (q',\beta') \parallel \Pi' \ChanPar \Gamma'$.
Clearly this must happen using rule $l = (q,\beta_0) \to[effect=\lambda] (q',\beta'_0)$,
and, $\beta = \beta_0 \cdot \beta_1$ and $\beta' = \beta'_0 \cdot \beta_1$.
We will briefly show that we can assume $\pi_0 = (q^{\text{pop}}_{\beta_0},\beta_1)$.
We observe that we can then perform the following $\epsilon$-labelled transitions:
\begin{align*}
(q^{\text{push}}_{\beta''},\beta''') \parallel \Pi_0 \ChanPar \Gamma 
	&\to[label=\epsilon,LTS=\calP_0,*] 
		(q^{\text{push}}_{\epsilon},\beta'' \cdot  \beta''') \parallel \Pi_0 \ChanPar \Gamma\\
	&\to[label=\epsilon,LTS=\calP_0]
	(q_{\epsilon},\beta'' \cdot  \beta''') \parallel \Pi_0 \ChanPar \Gamma\\
	&= (q_{\epsilon},\beta) \parallel \Pi_0 \ChanPar \Gamma\\
	&\to[label=\epsilon,LTS=\calP_0]
	(q^{\text{pop}}_{\epsilon},\beta) \parallel \Pi_0 \ChanPar \Gamma\\
	&= (q^{\text{pop}}_{\epsilon},\beta_0 \cdot \beta_1) \parallel \Pi_0 \ChanPar \Gamma\\
	&\to[label=\epsilon,LTS=\calP_0,*]
	(q^{\text{pop}}_{\beta_0},\beta_1) \parallel \Pi_0 \ChanPar \Gamma
\end{align*}
it should be clear that for all $\pi_0 \in F(q,\beta)$ we can $\epsilon$-transition
$\pi_0 \parallel \Pi_0 \ChanPar \Gamma \to[label=\epsilon,LTS=\calP_0,*]
	(q^{\text{pop}}_{\beta_0},\beta_1) \parallel \Pi_0 \ChanPar \Gamma$. Hence we will assume in the following that
	$\pi_0 = (q^{\text{pop}}_{\beta_0},\beta_1)$.

First we note that there is a rule 
$({q}^{\text{pop}}_{\beta_0},\epsilon) \to[effect=\lambda'] ({q'}^{\text{push}}_{\beta'_0},\epsilon) \in \calR'$.
We can thus make a case analysis on $\lambda$.
\begin{itemize}[$\bullet$,leftmargin=*]
\item \emph{Case: } $\lambda = \epsilon$. \newline
Then clearly $\Pi = \Pi'$ and $\Gamma' = \Gamma$ and $\lambda' = \epsilon$
and:
$$(q^{\text{pop}}_{\beta_0},\beta_1) \parallel \Pi_0 \ChanPar \Gamma \to[label=l,LTS=\calP_0] ({q'}^{\text{push}}_{\beta'_0},\beta_1) \parallel \Pi_0 \ChanPar \Gamma.$$
Thus we can see that $({q'}^{\text{push}}_{\beta'_0},\beta_1) \in F(q',\beta')$  and hence
$((q',\beta') \parallel \Pi' \ChanPar \Gamma, ({q'}^{\text{push}}_{\beta'_0},\beta_1) \parallel \Pi_0 \ChanPar \Gamma) \in \calB$ 
which is what we wanted to prove.
\item \emph{Case: } $\lambda = \snd{c}{m}$. \newline
Then clearly $\Pi = \Pi'$ and $\Gamma' = \Gamma \oplus \varupdate{}{c}{\mset{m}}$ and $\lambda' = \snd{c}{m}$
and:
$$(q^{\text{pop}}_{\beta_0},\beta_1) \parallel \Pi_0 \ChanPar \Gamma \to[label=l,LTS=\calP_0] ({q'}^{\text{push}}_{\beta'_0},\beta_1) \parallel \Pi_0 \ChanPar \Gamma'.$$
Thus we can see that $({q'}^{\text{push}}_{\beta'_0},\beta_1) \in F(q',\beta')$  and hence
$((q',\beta') \parallel \Pi' \ChanPar \Gamma', ({q'}^{\text{push}}_{\beta'_0},\beta_1) \parallel \Pi_0 \ChanPar \Gamma') \in \calB$ 
which is what we wanted to prove.
\item \emph{Case: } $\lambda = \rec{c}{m}$. \newline
Then clearly $\Pi = \Pi'$ and $\Gamma = \Gamma' \oplus \varupdate{}{c}{\mset{m}}$ and $\lambda' = \rec{c}{m}$
and:
$$(q^{\text{pop}}_{\beta_0},\beta_1) \parallel \Pi_0 \ChanPar \Gamma \to[label=l,LTS=\calP_0] ({q'}^{\text{push}}_{\beta'_0},\beta_1) \parallel \Pi_0 \ChanPar \Gamma'.$$
Thus we can see that $({q'}^{\text{push}}_{\beta'_0},\beta_1) \in F(q',\beta')$  and hence
$((q',\beta') \parallel \Pi' \ChanPar \Gamma', ({q'}^{\text{push}}_{\beta'_0},\beta_1) \parallel \Pi_0 \ChanPar \Gamma') \in \calB$ 
which is what we wanted to prove.
\item \emph{Case: } $\lambda = \spn{(q'',\beta'')}$. \newline
Then clearly $\Pi' = (q'',\beta'') \parallel \Pi$ and $\Gamma' = \Gamma$ and $\lambda' = \spn{(q''^{\text{push}}_{\beta''},\epsilon)}$
and:
$$(q^{\text{pop}}_{\beta_0},\beta_1) \parallel \Pi_0 \ChanPar \Gamma \to[label=l,LTS=\calP_0] ({q'}^{\text{push}}_{\beta'_0},\beta_1) \parallel (q''^{\text{push}}_{\beta''},\epsilon) \parallel \Pi_0 \ChanPar \Gamma'.$$
Thus we can see that $({q'}^{\text{push}}_{\beta'_0},\beta_1) \in F(q',\beta')$,
$(q''^{\text{push}}_{\beta''},\epsilon) \in F(q'',\beta'')$
  and hence
$((q',\beta') \parallel \Pi' \ChanPar \Gamma', ({q'}^{\text{push}}_{\beta'_0},\beta_1) \parallel (q''^{\text{push}}_{\beta''},\epsilon) \parallel \Pi_0 \ChanPar \Gamma') \in \calB$ 
which is what we wanted to prove.
\end{itemize}
Hence we can conclude that $\calB$ is a weak reflexive simulation.

Let us turn now to $\calB^{-1}$.
Suppose $\paren{\pi \parallel \Pi \ChanPar \Gamma, \pi_0 \parallel \Pi_0 \ChanPar \Gamma} \in \calB$ and
$\pi_0 \parallel \Pi_0 \ChanPar \Gamma \to[label=l,LTS=\calP] \pi'_0 \parallel \Pi'_0 \ChanPar \Gamma'_0$
using rule $r \in \calR'$.

Let us perform a case analysis on $r$
\begin{itemize}
\item \emph{Case: } $r = (q,\epsilon) \to[effect=\epsilon] ({q}^{\text{pop}}_{\epsilon},\epsilon)$.  \newline
Then clearly $\pi_0 = (q,\beta)$, $\pi=(q,\beta)$ and 
$\pi'_0 = ({q}^{\text{pop}}_{\epsilon},\beta)$, $\Pi'_0 = \Pi_0$, and $\Gamma'_0 = \Gamma$.
Further $({q}^{\text{pop}}_{\epsilon},\beta) \in F(q,\beta)$ and thus
$\paren{\pi \parallel \Pi \ChanPar \Gamma, \pi'_0 \parallel \Pi'_0 \ChanPar \Gamma'_0} \in \calB$ which is what we wanted to prove.
\item \emph{Case: } $r = ({q}^{\text{push}}_{\epsilon},\epsilon) \to[effect=\epsilon] (q,\epsilon)$.  \newline
Clearly $\pi_0 = ({q}^{\text{push}}_{\epsilon},\beta)$, $\pi=(q,\beta)$ and 
$\pi'_0 = (q,\beta)$, $\Pi'_0 = \Pi_0$, and $\Gamma'_0 = \Gamma$.
Further $(q,\beta) \in F(q,\beta)$ and thus
$\paren{\pi \parallel \Pi \ChanPar \Gamma, \pi'_0 \parallel \Pi'_0 \ChanPar \Gamma'_0} \in \calB$ which is what we wanted to prove.
\item \emph{Case: } $r = ({q}^{\text{pop}}_{\beta},A) \to[effect=\epsilon] ({q}^{\text{pop}}_{\beta\cdot A},\epsilon)$.  \newline
Clearly $\pi_0 = ({q}^{\text{pop}}_{\beta},A \cdot \beta')$, $\pi=(q,\beta \cdot A \cdot \beta')$ and 
$\pi'_0 = ({q}^{\text{pop}}_{\beta\cdot A},\beta')$, $\Pi'_0 = \Pi_0$, and $\Gamma'_0 = \Gamma$.
Further $({q}^{\text{pop}}_{\beta\cdot A},\beta') \in F(q,\beta \cdot A \cdot \beta')$ and thus
$\paren{\pi \parallel \Pi \ChanPar \Gamma, \pi'_0 \parallel \Pi'_0 \ChanPar \Gamma'_0} \in \calB$ which is what we wanted to prove.
\item \emph{Case: } $r = ({q}^{\text{push}}_{\beta \cdot A},\epsilon) \to[effect=\epsilon] ({q}^{\text{push}}_{\beta},A)$.  \newline
Clearly $\pi_0 = ({q}^{\text{push}}_{\beta \cdot A}, \beta')$, $\pi=(q,\beta \cdot A \cdot \beta')$ and 
$\pi'_0 = ({q}^{\text{push}}_{\beta},A \cdot \beta')$, $\Pi'_0 = \Pi_0$, and $\Gamma'_0 = \Gamma$.
Further $({q}^{\text{pop}}_{\beta},\beta') \in F(q,\beta \cdot A \cdot \beta')$ and thus
$\paren{\pi \parallel \Pi \ChanPar \Gamma, \pi'_0 \parallel \Pi'_0 \ChanPar \Gamma'_0} \in \calB$ which is what we wanted to prove.
\item \emph{Case: } $r =({q}^{\text{pop}}_{\beta},\epsilon) \to[effect=\lambda] ({q'}^{\text{push}}_{\beta'},\epsilon)$. \newline
Clearly $\pi_0 = ({q}^{\text{pop}}_{\beta}, \beta'')$, $\pi=(q,\beta \cdot \beta'')$ and 
$\pi'_0 = ({q'}^{\text{push}}_{\beta'}, \beta'')$.
Further there is a rule $(q,\beta) \to[effect=\lambda'] (q',\beta') \in \calR$.
Let us do case analysis on $\lambda$
\begin{itemize}
\item \emph{Case: } $\lambda = \epsilon$. \newline
Then $\Pi'_0 = \Pi_0$, $\Gamma'_0 = \Gamma$ and $\lambda' = \epsilon$.
And
$(q,\beta\beta'') \parallel \Pi \ChanPar \Gamma \to[label=r,LTS=\calP] (q',\beta'\beta'') \parallel \Pi \ChanPar \Gamma$.
Clearly $({q'}^{\text{push}}_{\beta},\beta'') \in F(q',\beta'\beta'')$ and thus
$\varparen{(q',\beta'\beta'') \parallel \Pi \ChanPar \Gamma, \pi'_0 \parallel \Pi'_0 \ChanPar \Gamma'_0}\discn \in \calB$ which is what we wanted to prove.
\item \emph{Case: } $\lambda = \snd{c}{m}$ \newline
Then $\Pi'_0 = \Pi_0$, $\Gamma'_0 = \Gamma \oplus \varupdate{}{c}{\mset{m}}$ and $\lambda' = \snd{c}{m}$.
And
$(q,\beta\beta'') \parallel \Pi \ChanPar \Gamma \to[label=r,LTS=\calP] (q',\beta'\beta'') \parallel \Pi \ChanPar \Gamma'_0$.
Clearly $({q'}^{\text{push}}_{\beta},\beta'') \in F(q',\beta'\beta'')$ and thus
$\varparen{(q',\beta'\beta'') \parallel \Pi \ChanPar \Gamma'_0, \pi'_0 \parallel \Pi'_0 \ChanPar \Gamma'_0} \in \calB$ which is what we wanted to prove.
\item \emph{Case: } $\lambda = \rec{c}{m}$ \newline
Then $\Pi'_0 = \Pi_0$, $\Gamma = \Gamma'_0 \oplus \varupdate{}{c}{\mset{m}}$ and $\lambda' = \rec{c}{m}$.
And
$(q,\beta\beta'') \parallel \Pi \ChanPar \Gamma \to[label=r,LTS=\calP] (q',\beta'\beta'') \parallel \Pi \ChanPar \Gamma'_0$.
Clearly $({q'}^{\text{push}}_{\beta},\beta'') \in F(q',\beta'\beta'')$ and thus
$\varparen{(q',\beta'\beta'') \parallel \Pi \ChanPar \Gamma'_0, \pi'_0 \parallel \Pi'_0 \ChanPar \Gamma'_0} \in \calB$ which is what we wanted to prove.
\item \emph{Case: } $\lambda = \spn{({q''}^{\text{push}}_{\beta'''},\epsilon)}$ \newline
Then $\Pi'_0 = ({q''}^{\text{push}}_{\beta'''},\epsilon) \parallel \Pi_0$, $\Gamma = \Gamma'_0$ and
 $\lambda' = \spn{(q'',\beta''')}$.
And
$(q,\beta\beta'') \parallel \Pi \ChanPar \Gamma \to[label=r,LTS=\calP] (q',\beta'\beta'') \parallel (q'',\beta''') \parallel \Pi \ChanPar \Gamma$.
Firstly $({q'}^{\text{push}}_{\beta},\beta'') \in F(q',\beta'\beta'')$ and 
$({q''}^{\text{push}}_{\beta'''},\epsilon) \in F(q'',\beta''')$.
Thus
$\varparen{(q',\beta'\beta'') \parallel (q'',\beta''') \parallel \Pi \ChanPar \Gamma'_0, \pi'_0 \parallel \Pi'_0 \ChanPar \Gamma'_0} \in \calB$ which is what we wanted to prove.
\end{itemize}
\end{itemize}
Hence $\calB^{-1}$ is a weak reflexive simulation and hence $\calB$ is a weak reflexive bisimulation.

Let us now define $\Fprenormalform{\calP} = (\calQ',\calA,\Chan,\Msg[],\calR'\union\calR_{\text{cov}})$ where
$$
	\calR_{\text{cov}} = 
	\set{\left.
	\begin{aligned}
	    &({q}^{\text{pop}}_{\epsilon},\epsilon)      \to[effect=\epsilon] ({q}^{\text{cov}},\epsilon), \\
		&({q}^{\text{push}}_{\epsilon},\epsilon)     \to[effect=\epsilon] ({q}^{\text{cov}},\epsilon), \\
		&({q}^{\text{pop}}_{A \cdot \beta},\epsilon)      \to[effect=\epsilon] ({q}^{\text{cov}},A), \\
		&({q}^{\text{push}}_{A' \cdot \beta'},\epsilon)   \to[effect=\epsilon] ({q}^{\text{cov}},A'), \\
		&({q},\epsilon)   \to[effect=\epsilon] ({q}^{\text{cov}},\epsilon), \\
	\end{aligned} \right|
	  \begin{aligned}
	  	&{q}^{\text{pop}}_{\epsilon},{q}^{\text{push}}_{\epsilon'} \in \calQ',\\
	  	&{q}^{\text{pop}}_{A \cdot \beta},{q}^{\text{push}}_{A' \cdot \beta'} \in \calQ', \\
	  	&q \in \calQ
	  \end{aligned}
	}.
$$
Adding the rules $\calR_{\text{cov}}$ to $\calP_0$ (which are non-reversible) only changes which configurations are reachable/coverable by a one step transition. Obviously $\Fprenormalform{\calP}$ remains polynomial time computable from $\calP$

Suppose that $\mathpzc{Q} = (\calP, \Pi_0\ChanPar\Gamma_0, \Pi \ChanPar \Gamma)$ is a simple coverbility query
where $\Pi = (q_1,\beta_1) \parallel \cdots \parallel (q_k,\beta_k)$ and $\beta_i \in \calA \union \set{\epsilon}$ then let 
$\Fprenormalform{\mathpzc{Q}} = (\Fprenormalform{\calP}, \Pi_0\ChanPar\Gamma_0, \Pi' \ChanPar \Gamma)$ be a simple coverability query
such that $\Pi' = (q_1^{\text{cov}},\beta_1) \parallel \cdots \parallel (q_k^{\text{cov}},\beta_k)$.

Suppose $\mathpzc{Q}$ is a yes-instance
then $\Pi_0\ChanPar\Gamma_0 \to[TS=\calP,*] \Pi_1 \ChanPar \Gamma_1$ such that 
$\Pi \ChanPar \Gamma \leqACPS \Pi_1 \ChanPar \Gamma_1$. Since $\calB$ is a reflexive weak bisimulation we know
$\Pi_0\ChanPar\Gamma_0 \to[TS=\calP_0,*] \Pi'_1 \ChanPar \Gamma_1$ such that 
$(\Pi_1 \ChanPar \Gamma_1,\discn \Pi'_1 \ChanPar \Gamma_1) \in \calB$. 
Since $\Pi \ChanPar \Gamma \leqACPS \Pi_1 \ChanPar \Gamma_1$ we know that
$\Pi_1 = (q_1,\beta'_1) \parallel \cdots \parallel (q_k,\beta'_k) \parallel \Pi_2$
such that for all $i \in \range{k}$ either $\beta_i = \epsilon$ or $\beta_i = A_i \in \calA$ and
$\beta'_1 = A_i \, \beta''_1$.
Hence we can deduce that $\Pi'_1 = \pi'_1 \parallel \cdots \parallel \pi'_k \parallel \Pi'_2$ such that
$\pi'_i \in F(q_i,\beta'_i)$ for all $i \in \range{k}$. Thus it is easy to see that
$\Pi'_1 \ChanPar \Gamma_1 \to[TS=\Fprenormalform{\calP},*] (q_1^{\text{cov}},\beta'''_1) \parallel \cdots \parallel (q_k^{\text{cov}},\beta'''_k) \parallel \Pi'_2 \ChanPar \Gamma_1$ such that for all $i \in \range{k}$ it is the case that
$\beta'''_i = A \beta''''_i$ and $\beta'_i = A \beta^\dagger_i$
and hence $\Fprenormalform{\mathpzc{Q}}$ is a yes-instance.

Conversely, suppose $\Fprenormalform{\mathpzc{Q}}$ is a yes instance then
$\Pi_0\ChanPar\Gamma_0 \discn \to[TS=\Fprenormalform{\calP},*] (q_1^{\text{cov}},\beta'_1) \parallel \cdots \parallel (q_k^{\text{cov}},\beta'_k) \parallel \Pi'_1 \ChanPar \Gamma_1$
such that for all $i \in \range{k}$ either $\beta_i = \epsilon$ or $\beta_i = A_i$ and $\beta'_i = A_i\, \beta'_i$. 
Hence clearly (by reversing transitions from $\calR_{\text{cov}}$)
$\Pi_0\ChanPar\Gamma_0 \to[TS=\calP_0,*] \pi_1 \parallel \cdots \parallel \pi_k \parallel \Pi'_1 \ChanPar \Gamma_1$ where
$\pi'_i \in F(q_i,\beta''_i)$ for some $\beta''_1,\ldots,\beta''_k$ such that either $\beta_i = \epsilon$ or $\beta''_i = A_i \beta'''_i$ for all $i \in \range{k}$ .
Since $\calB$ is a reflexive weak bisimulation we know that
$\Pi_0\ChanPar\Gamma_0 \to[TS=\calP,*] (q_1,\beta''_1) \parallel \cdots \parallel (q_k,\beta''_k) \parallel \Pi_1 \ChanPar \Gamma_1$. Hence $\mathpzc{Q}$ is a yes-instance.
We can thus conclude that $\mathpzc{Q}$ is a yes-instance iff $\Fprenormalform{\mathpzc{Q}}$ is a yes-instance.

For boundedness, let $\Pi^0 \ChanPar \Gamma^0$ be a start configuration and let $\Fprenormalform{\Pi^0 \ChanPar \Gamma^0} = \Pi^0 \ChanPar \Gamma^0$.
Suppose that 
$\varset{\Pi \ChanPar \Gamma : \Pi_0 \ChanPar \Gamma_0 \discn \to[TS=\calP,*] \Pi \ChanPar \Gamma}$ is a finite set.
We notice that for all $\Pi \ChanPar \Gamma$ the set $\varset{\Pi' \ChanPar \Gamma : (\Pi \ChanPar \Gamma,\Pi' \ChanPar \Gamma) \in \calB }$ is finite.
Thus using that $\calB$ is a reflexive weak bisimulation we can infer that 
$\varset{\Pi \ChanPar \Gamma : \Pi_0 \ChanPar \Gamma_0 \discn \to[TS=\calP_0,*] \Pi \ChanPar \Gamma}$ is a finite set.
Since the rules in $\calR_{\text{cov}}$ adds only a finite number of reachable configurations we can conclude
$\set{\Pi \ChanPar \Gamma : \Pi_0 \ChanPar \Gamma_0 \to[TS=\Fprenormalform{\calP},*] \Pi \ChanPar \Gamma}$ is a finite set and thus $\Fprenormalform{\calP}$ is bounded from $\Fprenormalform{\Pi^0 \ChanPar \Gamma^0}$.

Conversely, suppose $\set{\Pi \ChanPar \Gamma : \Pi_0 \ChanPar \Gamma_0 \to[TS=\Fprenormalform{\calP},*] \Pi \ChanPar \Gamma}$ is a finite set. 
Then since the rules in $\calR_{\text{cov}}$ adds only a finite number of reachable configurations we can infer
$\varset{\Pi \ChanPar \Gamma : \Pi_0 \ChanPar \Gamma_0 \to[TS=\calP_0,*] \Pi \ChanPar \Gamma}$
is a finite set. We further note:
$\varset{\Pi \ChanPar \Gamma : \Pi_0 \ChanPar \Gamma_0 \to[TS=\calP,*] \Pi \ChanPar \Gamma} \subseteq \set{\Pi \ChanPar \Gamma : \Pi_0 \ChanPar \Gamma_0 \to[TS=\calP_0,*] \Pi \ChanPar \Gamma}$ and thus $\calP$ is bounded from $\Pi_0 \ChanPar \Gamma_0$. 
Thus $\calP$ is bounded from $\Pi_0 \ChanPar \Gamma_0$ if and only if $\Fprenormalform{\calP}$ is bounded from $\Fprenormalform{\Pi_0 \ChanPar \Gamma_0}$.

For termination, let $\Pi_0 \ChanPar \Gamma_0$ be a start configuration and define again $\Fprenormalform{\Pi^0 \ChanPar \Gamma^0} = \Pi^0 \ChanPar \Gamma^0$.
Suppose there exists an infinite path $\vec{s}$ in $\calP$ starting from $\Pi_0 \ChanPar \Gamma_0$. 
Since $\calB$ is a weak reflexive bisimulation and $\vec{s}$ uses an infinite sequence of labels it is clear that there is a path $\vec{s}'$ in $\calP_0$ and $\vec{s}'$ is also an infinite path. The path $\vec{s}'$ is clearly also a
 path of $\Fprenormalform{\calP}$ hence $\Fprenormalform{\calP}$ is non-terminating from 
 $\Fprenormalform{\Pi^0 \ChanPar \Gamma^0}$.
Conversely, suppose that $\vec{s}'$ is an infinite path in $\Fprenormalform{\calP}$ starting from $\Pi_0 \ChanPar \Gamma_0$. 
Since a path can only be finitely extended by rules in $\calR_{\text{cov}}$ we can deduce that there is also 
an infinite path $\vec{s}''$ from $\Pi_0 \ChanPar \Gamma_0$ in $\calP_0$.
If $\vec{s}''$ gives rise to an infinite sequence of labels
then, since $\calB$ is a weak reflexive bisimulation, we clearly obtain a path $\vec{s}$ in $\calP$ that is also infinite (since all transitions in $\calP$ carry a label).
Suppose for a contradiction that $\vec{s}''$ gives rise only for a finite sequence of labels.
This implies there exists an infinite path $\vec{s}_0$ in $\calP_0$ such that for all $i$ the transition
$\vec{s}(i) \to[label=\epsilon,LTS=\calP] \vec{s}(i+1)$ is an $\epsilon$-transition.
Inspecting the definition $\calP_0$ we see this is implies all rules used 
must of the form $(q,\beta) \to[effect=\epsilon] (q',\beta')$.
We can conclude that this is impossible since
there are no cycles in $\calP_0$'s rules with no side effects.
Hence we can conclude that $\calP$ is non-terminating from $\Pi^0 \ChanPar \Gamma^0$ if, and only if,
$\Fprenormalform{\calP}$ is non-terminating from $\Fprenormalform{\Pi_0 \ChanPar \Gamma_0}$.
\end{proof}

We can now use a summarisation-inspired idea to encode control-states into the stack alphabet:

\begin{lemma}\label{app:lemma:pnfTonf}
Given an ACPS $\calP$ in pre-normal form, a simple coverability query $\mathpzc{Q}$ and a start configuration $\Pi^0 \ChanPar \Gamma^0$ there exists ACPS $\FpnfTonormalform{\calP}$ satisfying:
for all rules $(q,\beta) \to[effect=\lambda] (q',\beta')$ of $\FpnfTonormalform{\calP}$
\begin{inparaenum}
\item(acps:nf:i) $q = q'$,
\item(acps:nf:ii) $\beta = A \in \calA$,
\item(acps:nf:iii) if $\lambda \neq \epsilon$ then $\beta' = A' \in \calA$,
otherwise \item(acps:nf:iv) $\beta' \in \set{\epsilon, A', B\,C : A', B, C \in \calA}$, and
\item(acps:nf:v) $\lambda = \spn{(q'',\beta)}$ then $q'' = q$, and $\beta \in \calA$;
\end{inparaenum}
simple coverability query $\FpnfTonormalform{\mathpzc{Q}}$; and start configuration $\FpnfTonormalform{\Pi^0 \ChanPar \Gamma^0}$ --- all poly\-no\-mi\-al-time computable --- such 
that:
\begin{inparaenum}[(A)]
\item $\mathpzc{Q}$ is a yes-instance if, and only if, $\FpnfTonormalform{\mathpzc{Q}}$ is a yes-instance; and
\item $\calP$ is bounded (terminating) from $\Pi^0 \ChanPar \Gamma^0$ if, and only if, $\FpnfTonormalform{\calP}$ is bounded (terminating respectively) from 
$\FpnfTonormalform{\Pi^0 \ChanPar \Gamma^0}$.
\end{inparaenum}
\end{lemma}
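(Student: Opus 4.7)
The plan is to encode the control states of $\calP$ into the top stack symbol, so that $\FpnfTonormalform{\calP}$ has a single virtual control state $q_0$ and condition~(i) is automatic. Let $\calQ'$ and $\calA$ be the control states and stack alphabet of $\calP$. I will take the new stack alphabet to be $\calA' = \calA \cup (\calQ' \times \calA) \cup \set{\bot_q : q \in \calQ'} \cup \Upsilon$, where the pair $\langle q, A\rangle$ represents a process in state $q$ with $A$ on top, $\bot_q$ represents an empty stack in state $q$, and $\Upsilon$ is a finite set of auxiliary transient symbols used to implement the simulations of rules that do not fit the normal form directly. The encoding $\FpnfTonormalform{-}$ sends $(q, A_1 \cdots A_n)$ to $(q_0, \langle q, A_1\rangle \cdot A_2 \cdots A_n)$ for $n \geq 1$ and to $(q_0, \bot_q)$ for $n = 0$, and extends pointwise to configurations, the query, and the start configuration.

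The easy rules of pre-normal form each translate to a single normal-form rule (or to a family indexed by a stack symbol). Action rules $(q, \epsilon) \xrightarrow{\lambda} (q', \epsilon)$ become the family $(q_0, \langle q, A\rangle) \xrightarrow{\lambda} (q_0, \langle q', A\rangle)$ for each $A \in \calA$ together with $(q_0, \bot_q) \xrightarrow{\lambda} (q_0, \bot_{q'})$, all of shape pop-1 push-1 with action (condition~(iii)); silent state-change rules $(q, \epsilon) \xrightarrow{\epsilon} (q', \epsilon)$ translate analogously under~(iv). Push rules $(q, \epsilon) \xrightarrow{\epsilon} (q', A)$ become $(q_0, \langle q, B\rangle) \xrightarrow{\epsilon} (q_0, \langle q', A\rangle \cdot B)$ for each $B \in \calA$ together with $(q_0, \bot_q) \xrightarrow{\epsilon} (q_0, \langle q', A\rangle)$, using the pop-1 push-$BC$ shape allowed by~(iv). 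Spawn side-effects are re-aimed at the canonical singleton encoded stack, which satisfies~(v) directly since pre-normal spawns already have $\beta'' = \epsilon$. The pop rule $(q, A) \xrightarrow{\epsilon} (q', \epsilon)$, applied when the encoded stack is just $\langle q, A\rangle$, becomes $(q_0, \langle q, A\rangle) \xrightarrow{\epsilon} (q_0, \bot_{q'})$.

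The main obstacle is the remaining case of the pop rule: when the encoded stack is $\langle q, A\rangle \cdot B \cdot \gamma$ the target is $\langle q', B\rangle \cdot \gamma$, a pop-2 push-1 operation forbidden by~(ii). I simulate it by a multi-step micro-protocol through $\Upsilon$: first a pop-1 push-1 rule replaces $\langle q, A\rangle$ by a transient marker $\mu_{q, A, q'}$ that records the rule-(b) instance and target state; then further pop-1 push-$\le 2$ sub-rules, tightly indexed by this same instance, consume the marker together with the exposed $B$ and re-install the canonical $\langle q', B\rangle$ on top. The delicate design requirement is that $\Upsilon$ and its sub-rules be arranged so that (a)~transient tops are syntactically distinct from canonical tops, so that no rule simulating another original rule can fire on them; (b)~every transient configuration has a forced continuation dictated by the rule-(b) instance that created it, so that $q'$ is preserved and no spurious state change is introduced; and (c)~each transient sequence completes within a bounded number of steps. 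Getting this protocol right is essentially a careful Chomsky-style PDA normalisation under the artificial constraint of a single control state, and is the only non-routine part of the construction.

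Correctness is then established by a weak reflexive bisimulation $\calB$ between $\calP$ and $\FpnfTonormalform{\calP}$, in the style of the proof of Lemma~\ref{app:lemma:pnf}: $\calB$ pairs each canonical encoded configuration with the unique original configuration it encodes, and pairs each transient encoded configuration with the target of its in-progress rule-(b) simulation. Preservation of simple coverability then follows because the encoded simple target uses only canonical tops, which transient reachable configurations cannot cover, while every canonical reachable encoded configuration covering the target lifts along $\calB$ to an original covering configuration. Preservation of boundedness follows since transient paths are of bounded length, and preservation of termination follows since any infinite run of $\FpnfTonormalform{\calP}$ must contain infinitely many canonical-to-canonical macro-steps, each simulating one step of $\calP$. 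Polynomial-time computability of all three outputs is immediate from the construction, with $|\calA'| = O(|\calQ'| \cdot |\calA| + |\calR|)$ and $|\calR'| = O(|\calA|^2 \cdot |\calR|)$.
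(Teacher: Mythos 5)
There is a genuine gap, and it sits exactly where you flag the ``only non-routine part'': the micro-protocol for simulating a pop $(q,A)\xrightarrow{\epsilon}(q',\epsilon)$ on a stack $\langle q,A\rangle\cdot B\cdot\gamma$ cannot be implemented within the target normal form. A normal-form rule has a \emph{single} stack symbol on its left-hand side (condition~(ii)) and the control state carries no information (condition~(i)), so after your first step replaces $\langle q,A\rangle$ by the marker $\mu_{q,A,q'}$, the only way to expose $B$ is a rule $\mu_{q,A,q'}\to\epsilon$; at that moment the configuration is $(q_0,\,B\cdot\gamma)$ and all record of the intended target state $q'$ is gone, since $B$ and $\gamma$ are plain symbols in your encoding. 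Any completion rule of the form $B\to\langle q',B\rangle$ would have to exist for every candidate $q'$ and would be enabled indiscriminately, so the simulation becomes unsound (arbitrary state changes at every pop); conversely, a rule that ``consumes the marker together with the exposed $B$'' has the two-symbol left-hand side $\mu_{q,A,q'}\,B$, which is exactly what condition~(ii) forbids. Replacing $\mu_{q,A,q'}$ by some other single transient symbol merely reproduces the same problem one step later: with one control state and top-symbol-only rules, information cannot be transported downward across a pop unless it was already written on the lower stack symbols at push time---which your encoding deliberately avoids, keeping $A_2\cdots A_n$ unannotated.

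This is precisely the obstacle the paper's proof of this lemma is designed around. There, every stack symbol is annotated with a \emph{pair} of states $A^{(q,q'')}$: the current state and a nondeterministically guessed state in which that frame will be re-exposed. Push rules $(q,\epsilon)\to(q',B)$ become $A^{(q,q'')}\to B^{(q',q''')}A^{(q''',q'')}$ (the guess $q'''$ is planted on the frame below), and a pop is then just $A^{(q,q')}\to\epsilon$, with correctness of the guess verified by the fact that the exposed symbol already carries $q'$ as its current-state component; no transient protocol is needed. The price of the guessing is that the encoding of a $\calP$-configuration is no longer a single configuration but a set, which the paper handles by relating each configuration of $\calP$ to the set $G(\Pi\ChanPar\Gamma)$ of its annotated encodings and proving a bisimulation with a co-universal powerset lifting of $\FpnfTonormalform{\calP}$, plus extra symbols and rules ($\calR_{\text{cov}}$) to recover a simple coverability query. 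Your single-valued encoding would indeed simplify that bookkeeping, but only if the pop step could be realised---and under conditions~(i) and~(ii) it cannot, so the construction needs the state-pair annotation (or an equivalent pre-annotation of lower stack symbols) to go through.
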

\begin{proof}
Suppose $\calP = (\calQ,\calA,\Chan,\Msg[],\calR)$ is an ACPS in pre-normal form. We then define
$\calP' = (\calQ',\calA',\Chan,\Msg[],\calR')$ where $\calQ' = \set{q_0}$ and $q_0$ is a fresh control state,
$\calA' = \varset{A^{(q,q')} \mid q,q' \in \calQ, A \in \calQ \union \varset{\Theta}}$, where $\Theta$ is a fresh symbol, and $\calR'$ is obtained from $\calR$ as follows
\begin{align*}
\calR' =\,& \set{A^{(q,q'')} \to[effect=\lambda'] A^{(q',q'')} \left| 
	\begin{aligned}
		&(q,\epsilon) \to[effect=\lambda] (q',\epsilon) \in \calR, \\
		&A \in \calA \union \varset{\Theta}, q'',q_0' \in \calQ, \\
		&\text{if } \lambda = \spn{(q_0,\epsilon)} \text{ then } \lambda' = \spn{\Theta^{(q_0,q_0')}}\\
		&\text{otherwise } \lambda = \lambda'
	\end{aligned}\right.
	}\\
&\union \set{A^{(q,q'')} \to[effect=\epsilon] {B}^{(q',q''')} A^{(q''',q'')} \left| 
	\begin{aligned}
		&(q,\epsilon) \to[effect=\epsilon] (q',B) \in \calR, \\
		&A \in \calA \union \varset{\Theta},\\
		&q'',q''' \in \calQ 
	\end{aligned}\right.
}\\
&\union \set{A^{(q,q')}   \to[effect=\epsilon] \epsilon \mid (q,A) \to[effect=\epsilon] (q',\epsilon) \in \calR}
\end{align*}
where rather than writing $(q_0,A)$ we just write $A$.
It is easy to see that $\calP'$ is polynomial time computable from $\calP$.

We represent the state of a pushdown process by the following function:
\begin{align*}
F(q,&A_1 A_2 A_3\cdots A_n) = \\
	&\set{A_1^{(q,q_1)} A_2^{(q_1,q_2)} A_3^{(q_2,q_3)}\cdots A_n^{(q_{n-1},q_n)}\Theta^{q_{n},q_{n+1}} \mid \Xi}
\end{align*}
where $\Xi = q_1,\ldots q_{n+1} \in \calQ$.
We use the former to represent a $\calP$-configuration as a set:
\begin{align*}
G(\Pi \ChanPar \Gamma) = 
	\set{\Pi' \ChanPar \Gamma \left| 
	\begin{aligned}
		&\Pi  = \pi_1  \parallel \cdots \parallel \pi_n, \\
		&\Pi' = \pi'_1 \parallel \cdots \parallel \pi'_n,\\ 
		&\forall i \in \range{n}. \pi'_i \in F(\pi_i) 
	\end{aligned}
	\right.}.
\end{align*}
Further we define a relation of configurations and sets of configurations:
$$\calB = \set{(\Pi\ChanPar\Gamma, G(\Pi\ChanPar\Gamma))}.$$

Let us define a \emph{co-universal powerset lifting} of the transition system induced by $\calP'$ and $\to[TS=\calP_0]$ as follows $\P[\calP'] \is (\P[\M[\calQ' \times \calA^*]],\to[TS={\P[\calP']}])$ where $S \to[TS={\P[\calP']}] S'$ just if
for all $s' \in S'$ there exists $s \in S$ such that $s \to[TS=\calP_0] s'$.
Further we temporarily label $\calP$ and $\P[\calP']$ by \calP-configurations as follows:
if $s \to[TS=\calP] s'$ we label by $s'$ the transition $s \to[label=s',LTS=\calP] s'$.
If $S \to[TS={\P[\calP']}] S'$ such that $S'=G(s')$ for some $\calP$-configuration $s'$ we label by $s'$ the transition
$S \to[label=s',LTS={\P[\calP']}] S'$.

We will show that $\calB$ is a bisimulation relation for $\calP$ and $\P[\calP']$. As a first step
let us give a proof that $\calB$ is a simulation relation:
Suppose $(\pi \parallel \Pi \ChanPar \Gamma, G(\pi \parallel \Pi \ChanPar \Gamma)) \in \calB$ and
$\pi \parallel \Pi \ChanPar \Gamma \to[label=l,LTS=\calP] \pi' \parallel \Pi' \ChanPar \Gamma'$
using rule $r \in \calR$. We know that $l = \pi' \parallel \Pi' \ChanPar \Gamma'$.
So let $\pi'_1 \parallel \Pi'_1 \ChanPar \Gamma' \in G(\pi' \parallel \Pi' \ChanPar \Gamma')$ 
such that $\Pi'_1 \ChanPar \Gamma' \in G(\Pi' \ChanPar \Gamma')$ and $\pi'_1 \in F(\pi')$.
Let us perform a case analysis on $r$:
\begin{itemize}
\item \emph{Case: } $r=(q,\epsilon) \to[effect=\epsilon] (q',B)$. \newline
In this case $\pi = (q,A_1\cdots A_n)$, $\pi' = (q',B\,A_1\cdots A_n)$, $\Gamma = \Gamma'$ and $\Pi = \Pi'$.
From this we can deduce: $\pi'_1 = B^{(q',q_1)}\,A_1^{(q_1,q_2)}\cdots A_n^{(q_{n},q_{n+1})}\Theta^{(q_{n+1},q_{n+2})}$ for some $\commabr{q_1,\ldots,q_{n+2}}$.
Let $\pi_1 = A_1^{(q,q_2)}\cdots A_n^{(q_{n},q_{n+1})}\Theta^{(q_{n+1},q_{n+2})}$ then clearly
$\pi_1 \parallel \Pi'_1 \ChanPar \Gamma \to[TS=\calP_0] \pi'_1 \parallel \Pi'_1 \ChanPar \Gamma$
using rule $A_1^{(q,q_2)} \to[effect=\epsilon] B^{(q',q_1)}\,A_1^{(q_1,q_2)}$.
Further $\pi_1 \in F(\pi)$ and hence $\pi_1 \parallel \Pi'_1 \ChanPar \Gamma \discn \in G(\pi \parallel \Pi' \ChanPar \Gamma) = G(\pi \parallel \Pi \ChanPar \Gamma)$. And so since $\pi'_1 \parallel \Pi'_1 \ChanPar \Gamma'$ is an arbitrary element of $G(\pi' \parallel \Pi' \ChanPar \Gamma')$ we can deduce
$G(\pi \parallel \Pi \ChanPar \Gamma) \to[label=l,LTS={\P[\calP']}] G(\pi' \parallel \Pi' \ChanPar \Gamma')$.
and $(\pi' \parallel \Pi' \ChanPar \Gamma',G(\pi' \parallel \Pi' \ChanPar \Gamma')) \in \calB$.
\item \emph{Case: } $r=(q,A) \to[effect=\epsilon] (q',\epsilon)$. \newline
In this case $\pi = (q,A\,A_1\cdots A_n)$, $\pi' = (q',A_1\cdots A_n)$, $\Gamma = \Gamma'$ and $\Pi = \Pi'$.
From this we can deduce: $\pi'_1 = A_1^{(q',q_1)}\cdots A_n^{(q_{n-1},q_{n})}\Theta^{(q_{n},q_{n+1})}$ for some $q_1,\ldots,q_{n+1}$.
Let $\pi_1 = A^{(q,q')}\,A_1^{(q',q_1)}\cdots A_n^{(q_{n-1},q_{n})}\Theta^{(q_{n},q_{n+1})}$ then clearly
$\pi_1 \parallel \Pi'_1 \ChanPar \Gamma \to[TS=\calP_0] \pi'_1 \parallel \Pi'_1 \ChanPar \Gamma$
using rule $A^{(q,q')} \to[effect=\epsilon] \epsilon$.
Further $\pi_1 \in F(\pi)$ and hence $\pi_1 \parallel \Pi'_1 \ChanPar \Gamma \in G(\pi \parallel \Pi' \ChanPar \Gamma) = G(\pi \parallel \Pi \ChanPar \Gamma)$. And so since $\pi'_1 \parallel \Pi'_1 \ChanPar \Gamma'$ is an arbitrary element of $G(\pi' \parallel \Pi' \ChanPar \Gamma')$ we can deduce
$G(\pi \parallel \Pi \ChanPar \Gamma) \to[label=l,LTS={\P[\calP']}] G(\pi' \parallel \Pi' \ChanPar \Gamma')$.
and $(\pi' \parallel \Pi' \ChanPar \Gamma',G(\pi' \parallel \Pi' \ChanPar \Gamma')) \in \calB$.
\item \emph{Case: } $r=(q,\epsilon) \to[effect=\lambda] (q',\epsilon)$. \newline
In this case $\pi = (q,A_1\cdots A_n)$, $\pi' = (q',A_1\cdots A_n)$.
From this we can deduce: $\pi'_1 = A_1^{(q',q_1)}\cdots A_n^{(q_{n-1},q_{n})}\Theta^{(q_{n},q_{n+1})}$ for some $q_1,\ldots,q_{n+1}$.
Let $\pi_1 = A_1^{(q,q_1)}\cdots A_n^{(q_{n-1},q_{n})}\discn\Theta^{(q_{n},q_{n+1})}$.
Let us perform a case analysis on $\lambda$:
\begin{itemize}
\item \emph{Case: } $\lambda = \epsilon$. \newline
We have $\Pi = \Pi'$, $\Gamma = \Gamma'$,  and
$\pi_1 \parallel \Pi'_1 \ChanPar \Gamma \to[TS=\calP_0] \pi'_1 \parallel \Pi'_1 \ChanPar \Gamma$
using rule $A_1^{(q,q_1)} \to[effect=\epsilon] A_1^{(q',q_1)}$.
Further $\pi_1 \in F(\pi)$ and hence $\pi_1 \parallel \Pi'_1 \ChanPar \Gamma \in G(\pi \parallel \Pi' \ChanPar \Gamma) = G(\pi \parallel \Pi \ChanPar \Gamma)$.
Since $\pi'_1 \parallel \Pi'_1 \ChanPar \Gamma'$ is an arbitrary element of $G(\pi' \parallel \Pi' \ChanPar \Gamma')$ we can deduce
$G(\pi \parallel \Pi \ChanPar \Gamma) \to[label=l,LTS={\P[\calP']}] G(\pi' \parallel \Pi' \ChanPar \Gamma')$.
and $(\pi' \parallel \Pi' \ChanPar \Gamma',G(\pi' \parallel \Pi' \ChanPar \Gamma')) \in \calB$.
\item \emph{Case: } $\lambda = \snd{c}{m}$. \newline
We have $\Pi = \Pi'$, $\Gamma' = \Gamma \oplus \varupdate{}{c}{\mset{m}}$, and
$\pi_1 \parallel \Pi'_1 \ChanPar \Gamma \to[TS=\calP_0] \pi'_1 \parallel \Pi'_1 \ChanPar \Gamma'$
using rule $A_1^{(q,q_1)} \to[effect=\snd{c}{m}] A_1^{(q',q_1)}$.
Further $\pi_1 \in F(\pi)$ and hence $\pi_1 \parallel \Pi'_1 \ChanPar \Gamma \in G(\pi \parallel \Pi' \ChanPar \Gamma) = G(\pi \parallel \Pi \ChanPar \Gamma)$.
Since $\pi'_1 \parallel \Pi'_1 \ChanPar \Gamma'$ is an arbitrary element of $G(\pi' \parallel \Pi' \ChanPar \Gamma')$ we can deduce
$G(\pi \parallel \Pi \ChanPar \Gamma) \to[label=l,LTS={\P[\calP']}] G(\pi' \parallel \Pi' \ChanPar \Gamma')$.
and $(\pi' \parallel \Pi' \ChanPar \Gamma',G(\pi' \parallel \Pi' \ChanPar \Gamma')) \in \calB$.
\item \emph{Case: } $\lambda = \rec{c}{m}$. \newline
We have $\Pi = \Pi'$, $\Gamma = \Gamma' \oplus \varupdate{}{c}{\mset{m}}$, and
$\pi_1 \parallel \Pi'_1 \ChanPar \Gamma \to[TS=\calP_0] \pi'_1 \parallel \Pi'_1 \ChanPar \Gamma'$
using rule $A_1^{(q,q_1)} \to[effect=\rec{c}{m}] A_1^{(q',q_1)}$.
Further $\pi_1 \in F(\pi)$ and hence $\pi_1 \parallel \Pi'_1 \ChanPar \Gamma \in G(\pi \parallel \Pi' \ChanPar \Gamma) = G(\pi \parallel \Pi \ChanPar \Gamma)$.
Since $\pi'_1 \parallel \Pi'_1 \ChanPar \Gamma'$ is an arbitrary element of $G(\pi' \parallel \Pi' \ChanPar \Gamma')$ we can deduce
$G(\pi \parallel \Pi \ChanPar \Gamma) \to[label=l,LTS={\P[\calP']}] G(\pi' \parallel \Pi' \ChanPar \Gamma')$.
and $(\pi' \parallel \Pi' \ChanPar \Gamma',G(\pi' \parallel \Pi' \ChanPar \Gamma')) \in \calB$.
\item \emph{Case: } $\lambda = \spn{q'',\epsilon}$. \newline
We have $\Pi' = (q'', \epsilon) \parallel \Pi$, $\Gamma = \Gamma'$.
Hence $\Pi'_1 = \pi_0 \parallel \Pi_1$ such that 
$\Pi_1 \ChanPar \Gamma \in G(\Pi \ChanPar \Gamma)$ and $\pi_0 \in F(q'', \epsilon)$ which implies
$\pi_0 = \Theta^{q'',q'''}$ for some $q'''$.
Thus
$\pi_1 \parallel \Pi_1 \ChanPar \Gamma \to[TS=\calP_0] \pi'_1 \parallel \Theta^{q'',q'''} \parallel \Pi_1 \ChanPar \Gamma$
using rule $A_1^{(q,q_1)} \to[effect=\spn{\Theta^{q'',q'''}}] A_1^{(q',q_1)}$.
Further $\pi_1 \in F(\pi)$ and hence $\pi_1 \parallel \Pi'_1 \ChanPar \Gamma \in G(\pi \parallel \Pi \ChanPar \Gamma)$.
Since $\pi'_1 \parallel \Pi'_1 \ChanPar \Gamma'$ is an arbitrary element of $G(\pi' \parallel \Pi' \ChanPar \Gamma')$ we can deduce
$G(\pi \parallel \Pi \ChanPar \Gamma) \to[label=l,LTS={\P[\calP']}] G(\pi' \parallel \Pi' \ChanPar \Gamma')$.
and $(\pi' \parallel \Pi' \ChanPar \Gamma',G(\pi' \parallel \Pi' \ChanPar \Gamma')) \in \calB$.
\end{itemize}
\end{itemize}
We can thus conclude that $\calB$ is a simulation relation.

For a proof that $\calB^{-1}$ is a simulation relation we will first prove a little lemma:
\begin{lemma*}
$\varset{(s_0,s) : s_0 \in G(s)}$ is a simulation relation.
\end{lemma*}
\begin{proof}
Suppose $\pi_0 \parallel \Pi_0 \ChanPar \Gamma \in G(\pi \parallel \Pi \ChanPar \Gamma)$ and 
$\pi_0 \parallel \Pi_0 \ChanPar \Gamma \to[TS=\calP'] \pi'_0 \parallel \Pi'_0 \ChanPar \Gamma'$ using rule
$r \in \calR'$. Let us make a case analysis on $r$:
\begin{itemize}
\item \emph{Case: } $r = A^{(q,q')}  \to[effect=\epsilon] \epsilon$. \newline
Then $\Pi'_0 = \Pi_0$, $\Gamma = \Gamma'$, $\pi_0 = A^{(q,q')}A_1^{(q',q_1)}\cdots A_n^{(q_{n-1},q_{n})}\discn\Theta^{(q_{n},q_{n+1})}$ and $\pi'_0 = A_1^{(q',q_1)}\cdots A_n^{(q_{n-1},q_{n})}\Theta^{(q_{n},q_{n+1})}$ for some $q_1,\ldots,q_{n+1}$.
We can deduce that $\pi = (q,A\,A_1\cdots A_n)$ and we can let
$\pi' = (q',A_1\cdots A_n)$ so that $\pi'_0 \in F(\pi')$.
Further we know by construction that $(q,A) \to[effect=\epsilon] (q',\epsilon) \in \calR$
and thus $\pi \parallel \Pi \ChanPar \Gamma \to[TS=\calP] \pi' \parallel \Pi \ChanPar \Gamma$
and $\pi'_0 \parallel \Pi'_0 \ChanPar \Gamma' \in G(\pi' \parallel \Pi' \ChanPar \Gamma') = G(\pi' \parallel \Pi \ChanPar \Gamma)$ which is what we wanted to prove.
\item \emph{Case: } $r = A^{(q,q'')} \to[effect=\epsilon] {B}^{(q',q''')} A^{(q''',q'')}$. \newline
Then $\Pi'_0 = \Pi_0$, $\Gamma = \Gamma'$, $\pi_0 = A^{(q,q'')}A_1^{(q'',q_1)}\cdots A_n^{(q_{n-1},q_{n})}\discn\Theta^{(q_{n},q_{n+1})}$ and $\pi'_0 = {B}^{(q',q''')} A^{(q''',q'')}A_1^{(q'',q_1)}\cdots A_n^{(q_{n-1},q_{n})}\discn\Theta^{(q_{n},q_{n+1})}$ for some $q_1,\ldots,q_{n+1}$.
We can deduce that $\pi = (q,A\,A_1\cdots A_n)$ and we can let
$\pi' = (q',B\,A\,A_1\cdots A_n)$ so that $\pi'_0 \in F(\pi')$.
Further we know by construction that $(q,\epsilon) \to[effect=\epsilon] (q',B) \in \calR$
and thus $\pi \parallel \Pi \ChanPar \Gamma \to[TS=\calP] \pi' \parallel \Pi \ChanPar \Gamma$
and $\pi'_0 \parallel \Pi'_0 \ChanPar \Gamma' \in G(\pi' \parallel \Pi' \ChanPar \Gamma') = G(\pi' \parallel \Pi \ChanPar \Gamma)$ which is what we wanted to prove.
\item \emph{Case: } $r = A^{(q,q'')} \to[effect=\lambda'] A^{(q',q'')}$. \newline
Then $\Pi'_0 = \Pi_0$, $\Gamma = \Gamma'$, $\pi_0 = A^{(q,q'')}A_1^{(q'',q_1)}\cdots A_n^{(q_{n-1},q_{n})}\discn\Theta^{(q_{n},q_{n+1})}$ and $\pi'_0 = A^{(q',q'')}A_1^{(q'',q_1)}\cdots A_n^{(q_{n-1},q_{n})}\discn\Theta^{(q_{n},q_{n+1})}$ for some $q_1,\ldots,q_{n+1}$.
We can deduce that $\pi = (q,A\,A_1\cdots A_n)$ and we can let
$\pi' = (q',A\,A_1\cdots A_n)$ so that $\pi'_0 \in F(\pi')$.
Further we know by construction that $(q,\epsilon) \to[effect=\lambda] (q',\epsilon) \in \calR$.
Let us perform a case analysis on $\lambda'$.
\begin{itemize}
\item \emph{Case: } $\lambda'=\epsilon$. \newline
Then $\lambda = \epsilon$, $\Pi = \Pi'$, $\Gamma = \Gamma'$
and thus $\pi \parallel \Pi \ChanPar \Gamma \to[TS=\calP] \pi' \parallel \Pi \ChanPar \Gamma$
and $\pi'_0 \parallel \Pi'_0 \ChanPar \Gamma' \in G(\pi' \parallel \Pi' \ChanPar \Gamma') = G(\pi' \parallel \Pi \ChanPar \Gamma)$ which is what we wanted to prove.
\item \emph{Case: } $\lambda'=\snd{c}{m}$. \newline
Then $\lambda = \snd{c}{m}$, $\Pi = \Pi'$, $\Gamma' = \Gamma \oplus \varupdate{}{c}{\mset{m}}$
and thus $\pi \parallel \Pi \ChanPar \Gamma \to[TS=\calP] \pi' \parallel \Pi \ChanPar \Gamma'$
and $\pi'_0 \parallel \Pi'_0 \ChanPar \Gamma' \in G(\pi' \parallel \Pi' \ChanPar \Gamma') = G(\pi' \parallel \Pi \ChanPar \Gamma')$ which is what we wanted to prove.
\item \emph{Case: } $\lambda'=\rec{c}{m}$. \newline
Then $\lambda = \rec{c}{m}$, $\Pi = \Pi'$, $\Gamma = \Gamma' \oplus \varupdate{}{c}{\mset{m}}$
and thus $\pi \parallel \Pi \ChanPar \Gamma \to[TS=\calP] \pi' \parallel \Pi \ChanPar \Gamma'$
and $\pi'_0 \parallel \Pi'_0 \ChanPar \Gamma' \in G(\pi' \parallel \Pi' \ChanPar \Gamma') = G(\pi' \parallel \Pi \ChanPar \Gamma')$ which is what we wanted to prove.
\item \emph{Case: } $\lambda'=\spn{\Theta^{(q''',q'''')}}$. \newline
Then $\lambda = \spn{(q''',\epsilon)}$, $\Pi' = (q''',\epsilon) \parallel \Pi$, $\Gamma' = \Gamma$
and thus $\pi \parallel \Pi \ChanPar \Gamma \to[TS=\calP] \pi' (q''',\epsilon) \parallel \parallel \Pi \ChanPar \Gamma'$
and $\pi'_0 \parallel \Pi'_0 \ChanPar \Gamma' \in G(\pi' \parallel \Pi' \ChanPar \Gamma') = G(\pi' \parallel \Pi \ChanPar \Gamma')$ which is what we wanted to prove.
\end{itemize}
\end{itemize}
which concludes the proof.
\end{proof}

Now we can use the above Lemma to show that $\calB^{-1}$ is a simulation relation.
Hence suppose $(\pi \parallel \Pi \ChanPar \Gamma, G(\pi \parallel \Pi \ChanPar \Gamma)) \in \calB$ and
$G(\pi \parallel \Pi \ChanPar \Gamma) \to[label=l,LTS={\P[\calP']}] G(\pi' \parallel \Pi' \ChanPar \Gamma')$
then clearly $G(\pi' \parallel \Pi' \ChanPar \Gamma') \neq \emptyset$ and hence we have
$s \in G(\pi \parallel \Pi \ChanPar \Gamma)$ and $s' \in G(\pi' \parallel \Pi' \ChanPar \Gamma')$
such that $s \to[TS=\calP'] s'$ from which the above Lemma let's us conclude that 
$\pi \parallel \Pi \ChanPar \Gamma \to[TS=\calP] \pi' \parallel \Pi' \ChanPar \Gamma'$.
Hence we can conclude that $\calB$ is a bisimulation.

Let us define $\FpnfTonormalform{\calP} = (\calQ',\calA' \union \calA_{\text{cov}},\Chan,\Msg[],\calR' \union \calR_{\text{cov}})$ where
$\calA_{\text{cov}} = \varset{\Theta_{\text{cov}}^{(q,A)}, \Theta_{\text{cov}}^{(q,\epsilon)} : q \in \calQ, A \in \calA}$ and

\begin{align*}
\calR_{\text{cov}} =\,& \set{
A^{q,q'} \to[effect=\epsilon] \Theta_{\text{cov}}^{(q,A)},
A^{q,q'} \to[effect=\epsilon] \Theta_{\text{cov}}^{(q,\epsilon)} : q,q' \in \calQ, A \in \calA
}\\
&\union
\set{
\Theta^{q,q'} \to[effect=\epsilon] \Theta_{\text{cov}}^{(q,\epsilon)} : q,q' \in \calQ}.
\end{align*}
It is easy to see that $\FpnfTonormalform{\calP}$ is polynomial-time computable from $\calP$.

Now suppose $\mathpzc{Q} = (\calP,\Pi_0 \ChanPar \Gamma_0,\Pi_{\text{cov}} \ChanPar \Gamma_{\text{cov}})$ 
is a simple coverability query. Since $\mathpzc{Q}$ is simple we may  assume that
$\Pi_0 = (q^0_1,\epsilon) \parallel \cdots \parallel (q^0_k,\epsilon)$ and
$\Pi_{\text{cov}} = (q_1,\beta_1) \parallel \cdots (q_n,\beta_n)$ such that $\beta_i \in \calA \union \set{\epsilon}$ since a trivial polynomial-time transform $\calP$ may set up a general $\Pi_0 \ChanPar \Gamma_0$ from a simple query $\mathpzc{Q}$.

Fix a $q^{\dagger} \in \calQ$. We may define the query $\FpnfTonormalform{\mathpzc{Q}} = \commabr{(\FpnfTonormalform{\calP},\FpnfTonormalform{\Pi_0 \ChanPar \Gamma_0},\FpnfTonormalform{\Pi_{\text{cov}} \ChanPar \Gamma_{\text{cov}}})}$
where
\begin{gather*}
 \FpnfTonormalform{\Pi_0 \ChanPar \Gamma_0} = \Theta^{(q^0_1,q^{\dagger})} \parallel \cdots \parallel 
 \Theta^{q^0_k,q^{\dagger}} \ChanPar \Gamma_0 \text{ and  }\\
 \FpnfTonormalform{\Pi_{\text{cov}} \ChanPar \Gamma_{\text{cov}}} = \Theta_{\text{cov}}^{(q_1,\beta_1)} \parallel \cdots \parallel 
 \Theta_{\text{cov}}^{(q_n,\beta_n)} \ChanPar \Gamma.
 \end{gather*} 
 $\FpnfTonormalform{\Pi_0 \ChanPar \Gamma_0} \in G(\Pi_0 \ChanPar \Gamma_0)$ and
 $\FpnfTonormalform{\Pi_{\text{cov}} \ChanPar \Gamma_{\text{cov}}}\discn \in G(\Pi_{\text{cov}} \discn\ChanPar \Gamma_{\text{cov}})$ and both the former are polynomial-time computable.

Now suppose $\mathpzc{Q}$ is a yes-instance, then
$\Pi_0 \ChanPar \Gamma_0 \to[TS=\calP,*] \Pi \ChanPar \Gamma$ such that $\Pi_{\text{cov}} \ChanPar \Gamma_{\text{cov}} 
\discn \leqACPS \Pi \ChanPar \Gamma$ and since $\calB$ is a bisimulation we also know
that
$G(\Pi_0 \ChanPar \Gamma_0) \discn\to[TS={\P[\calP']},*] G(\Pi \ChanPar \Gamma)$.
Further we must have $\Pi = (q_1,\beta'_1) \parallel \cdots (q_n,\beta'_n) \parallel \Pi_1$
where either $\beta_i = \epsilon$ or $\beta_i = A_i$ and $\beta'_i = A_i\cdot \beta''_i$.
By definition for all $\Pi' \ChanPar \Gamma \in G(\Pi \ChanPar \Gamma)$ there exists 
a $\Pi'_0 \ChanPar \Gamma_0 \in G(\Pi_0 \ChanPar \Gamma_0)$ such that
 $\Pi'_0 \ChanPar \Gamma_0 \to[TS=\calP',*] \Pi' \ChanPar \Gamma$.
We can deduce that $\Pi' = A_1^{q_1,q'_1}\bar{\beta}_1 \parallel \cdots A_n^{q_n,q'_n}\bar{\beta}_n \parallel \Pi'_1$
where either $A_i = \beta_i$ or $\beta_i = \epsilon$ and $A_i = \Theta$.
We can then see that $\Pi' \ChanPar \Gamma \to[TS={\FpnfTonormalform{\calP}},*] \Theta_{\text{cov}}^{(q_1,\beta_1)}\bar{\beta}_1 \parallel \cdots \Theta_{\text{cov}}^{(q_n,\beta_n)}\bar{\beta}_n \parallel \Pi'_1$.
Since $\Pi'_0 = \Theta^{(q^0_1,q^1_1)} \parallel \cdots \parallel \Theta^{q^0_k,q^1_k}$ may differ from
$\FpnfTonormalform{\Pi_0 \ChanPar \Gamma_0}$ only by the choice of the $q^1_j$ which cannot play a r\^ole in
any reduction; this allows us to deduce:
$\FpnfTonormalform{\Pi_0 \ChanPar \Gamma_0} \to[TS={\FpnfTonormalform{\calP}},*] \Theta_{\text{cov}}^{(q_1,\beta_1)}\bar{\beta}_1 \parallel \cdots \Theta_{\text{cov}}^{(q_n,\beta_n)}\bar{\beta}_n \parallel \Pi''_1$.
Thus $\FpnfTonormalform{\mathpzc{Q}}$ is a yes-instance. 

Conversely, suppose $\FpnfTonormalform{\mathpzc{Q}}$ is a yes-instance
then $\FpnfTonormalform{\Pi_0 \ChanPar \Gamma_0} \discn \to[TS=\FpnfTonormalform{\calP},*] \Theta_{\text{cov}}^{(q_1,\beta_1)}\bar{\beta}_1 \parallel \cdots \Theta_{\text{cov}}^{(q_n,\beta_n)}\bar{\beta}_n \parallel \Pi'_1$.
Reversing transitions using rules from $\calR_{\text{cov}}$ we can see that
$\FpnfTonormalform{\Pi_0 \ChanPar \Gamma_0} \to[TS=\calP',*] 
A_1^{q_1,q'_1}\bar{\beta}_1 \parallel \cdots \parallel A_n^{q_n,q'_n}\bar{\beta}_n \parallel \Pi''_1$
where either $A_i = \beta_i$ or $\beta_i = \epsilon$ and $A_i = \Theta$.
Since $\FpnfTonormalform{\Pi_0 \ChanPar \Gamma_0} \in G(\Pi_0 \ChanPar \Gamma_0)$ the 
Lemma above implies that 
$A_1^{q_1,q'_1}\bar{\beta}_1 \parallel \cdots \parallel A_n^{q_n,q'_n}\bar{\beta}_n \parallel \Pi''_1 \in
G((q_1,\beta_1\,\beta'_1) \parallel \cdots \parallel (q_n,\beta_n\,\beta'_n) \parallel \Pi_1)$ for some
$\beta'_i$ and $\Pi_1$ and
$\Pi_0 \ChanPar \Gamma_0 \to[TS=\calP,*] (q_1,\beta_1\,\beta'_1) \parallel \cdots \parallel (q_n,\beta_n\,\beta'_n) \parallel \Pi_1$.
Thus $(\calP,\Pi_0 \ChanPar \Gamma_0,\Pi_{\text{cov}} \ChanPar \Gamma_{\text{cov}})$ is a yes-instance.

For boundedness, suppose given $\Pi_0 \ChanPar \Gamma_0$ suppose $\{ \Pi \ChanPar \Gamma : \Pi_0 \ChanPar \Gamma_0 \to[TS=\calP,*] \Pi \ChanPar \Gamma\}$ is a finite set.
Then since $\calB$ is a bisimulation this implies $\{ G(\Pi \ChanPar \Gamma) : G(\Pi_0 \ChanPar \Gamma_0) \to[TS={\P[\calP']},*] G(\Pi \ChanPar \Gamma)\}$
is a finite set which implies that $\{ \Pi \ChanPar \Gamma : \FpnfTonormalform{\Pi_0 \ChanPar \Gamma_0} \to[TS=\calP',*] \Pi \ChanPar \Gamma\}$ is a finite set since $G(\Pi \ChanPar \Gamma)$ is a finite set for all $\Pi \ChanPar \Gamma$. 
Since rules from $\calR_{\text{cov}}$ only add a finite of finite number of configurations we can deduce that
$\{ \Pi \ChanPar \Gamma : \FpnfTonormalform{\Pi_0 \ChanPar \Gamma_0} \to[TS=\FpnfTonormalform{\calP},*] \Pi \ChanPar \Gamma\}$ is a finite set.
Conversely, suppose $\{ \Pi \ChanPar \Gamma : \FpnfTonormalform{\Pi_0 \ChanPar \Gamma_0}\discn \to[TS=\FpnfTonormalform{\calP},*] \Pi \ChanPar \Gamma\}$ is a finite set then since rules from $\calR_{\text{cov}}$ only add a finite of finite number of configurations we can infer that
$\{ \Pi \ChanPar \Gamma : \FpnfTonormalform{\Pi_0 \ChanPar \Gamma_0} \to[TS=\calP',*] \Pi \ChanPar \Gamma\}$ is finite set.
Clearly this implies that $\{ G(\Pi \ChanPar \Gamma) : G(\Pi_0 \ChanPar \Gamma_0) \to[TS={\P[\calP']},*] G(\Pi \ChanPar \Gamma)\}$ is a finite set and thus $\{ \Pi \ChanPar \Gamma : \Pi_0 \ChanPar \Gamma_0 \to[TS=\calP,*] \Pi \ChanPar \Gamma\}$ is a finite set since $\calB$ is a bisimulation.

For termination, since $\calB$ is a bisimulation clearly there is an infinite path from $\Pi_0 \ChanPar \Gamma_0$ in $\calP$ iff
there is an infinite path from $G(\Pi_0 \ChanPar \Gamma_0)$ in $\P[\calP']$. And the latter is clearly possible if, and only if,
there is an infinite path from a $\FpnfTonormalform{\Pi_0 \ChanPar \Gamma_0} \in G(\Pi_0 \ChanPar \Gamma_0)$ which is implied by the Lemma above and by $\calB$ being a bisimulation.
This concludes the proof.
\end{proof}

We can now give a proof of Proposition~\ref{prop:acps:normalform}:
\begin{proof}[Proof of Proposition~\ref{prop:acps:normalform}]
Let $\Fnormalform[']{\calP} = \FpnfTonormalform{\Fprenormalform{\calP}}$,
$\Fnormalform[']{\mathpzc{Q}} = \FpnfTonormalform{\Fprenormalform{\mathpzc{Q}}}$, and
$\Fnormalform[']{\Pi_0 \ChanPar \Gamma_0} = \FpnfTonormalform{\Fprenormalform{\Pi_0 \ChanPar \Gamma_0}}$.
Lemma~\ref{app:lemma:pnf} and Lemma~\ref{app:lemma:pnfTonf} then implies that
$\Fnormalform[']{\mathpzc{Q}}$ is a yes-instance if, and only if, $\mathpzc{Q}$ is a yes-instance;
$\Fnormalform[']{\calP}$ is bounded from $\Fnormalform[']{\Pi_0 \ChanPar \Gamma_0}$ if, and only if,
$\calP$ is bounded from $\Pi_0 \ChanPar \Gamma_0$; and
$\Fnormalform[']{\calP}$ is terminating from $\Fnormalform[']{\Pi_0 \ChanPar \Gamma_0}$ if, and only if,
$\calP$ is terminating from $\Pi_0 \ChanPar \Gamma_0$.
$\Fnormalform[']{\calP}$ is not quite in normal form yet, since it contains rules of the form
$A \to[effect=\lambda] B$ where normal form requires the RHS to be $\epsilon$, clearly we can by introducing a
polynomial number of non-terminals remedy this; replacing such rules by pairs::
$$A \to[effect=\lambda] B \mapsto \set{A \to[effect=\epsilon] C^{\lambda}\,B, C^{\lambda} \to[effect=\lambda] \epsilon}.$$
We call the the resulting ACPS $\Fnormalform{\calP}$, and take the query $\Fnormalform{\mathpzc{Q}} = \Fnormalform[']{\mathpzc{Q}}$, and $\Fnormalform{\Pi_0 \ChanPar \Gamma_0} = \Fnormalform[']{\Pi_0 \ChanPar \Gamma_0}$ and it is trivial to see that the result holds.
\end{proof}

\subsubsection{Proof of Proposition~\ref{prop:acps:apcps:interreduction}}

We fix a $\calP$ in normal form for this section and hence have also a fixed $\GofP$.
The proof of Proposition~\ref{prop:acps:apcps:interreduction} exploits the fact, that we may accelerate
any transitions with commutative side-effects (send, spawn, $\epsilon$) and that we may delay blocking/non-commutative
transitions until the last possible point. This means we can synchronise reductions of $\calP$ and $\GofP$ at configurations where all processes have non-commutative non-terminals as a head symbol. We will relabel the transition 
relations several times in order to chose different synchronisation points.

Let us first extend $\eqvI[{\IofG}]$ to parallel compositions. We write
$\pi_1 \parallel \cdots \parallel \pi_n \eqvI[{\IofG}] \bar{\pi}_1 \parallel \cdots \parallel \bar{\pi}_n$
just if $\pi_i \eqvI[{\IofG}] \bar{\pi}_i$ for all $i \in \range{n}$.

For this section let us write $\to[TS=\GofP]$ for $\toCF$ to make clear that the transition relation is induced by 
$\GofP$. We temporarily label $\to[TS=\calP]$ and $\to[TS=\GofP]$ by $\calP$-configurations as follows:
If $\Pi \ChanPar \Gamma \to[TS=\calP] \Pi' \ChanPar \Gamma'$ then we label the transition
$\Pi \ChanPar \Gamma \to[label=\Pi'_0 \ChanPar \Gamma',LTS=\calP] \Pi' \ChanPar \Gamma'$ for $\Pi'_0 \eqvI[\IofG] \Pi'$ .
If $\Pi \ChanPar \Gamma \to[TS=\GofP] \Pi' \ChanPar \Gamma'$ and $\Pi' \eqvI[\IofG] \bar{\Pi'}$, $\bar{\Pi'} \in \M[\NonT^*]$ then we label
the transition $\Pi \ChanPar \Gamma \to[label=\bar{\Pi'} \ChanPar \Gamma',LTS=\GofP] \Pi' \ChanPar \Gamma'$; 
otherwise we label it by $\epsilon$, i.e.~$\Pi \ChanPar \Gamma \to[label=\epsilon,LTS=\GofP] \Pi' \ChanPar \Gamma'$.

\begin{lemma}\label{app:apcps:simulates:acps}
The relation $\mathpzc{R} = \varset{(\Pi \ChanPar \Gamma,\bar{\Pi} \ChanPar \Gamma) : \bar{\Pi} \eqvI[{\IofG}] \Pi,
\bar{\Pi} \in \M[\NonT^*]}$
is a weak simulation relation for $\calP$ and $\GofP$.
\end{lemma}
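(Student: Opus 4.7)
The plan is to establish the weak-simulation condition by a direct case analysis on the ACPS rule that triggers the transition. Fix $(\Pi \ChanPar \Gamma, \bar{\Pi} \ChanPar \Gamma) \in \mathpzc{R}$ and suppose the ACPS takes a step $\Pi \ChanPar \Gamma \to[label=l,LTS=\calP] \Pi' \ChanPar \Gamma'$ by applying some rule $A \to[effect=\lambda] \beta \in \calR$ to a process $\pi = A\gamma$ inside $\Pi$, so $\Pi = \pi \parallel \Pi_0$. Because $\eqvI[{\IofG}]$ only permutes adjacent independent symbols, equivalent parallel compositions decompose compatibly: I can therefore write $\bar{\Pi} = \bar{\pi} \parallel \bar{\Pi}_0$ with $\bar{\pi} \eqvI[{\IofG}] A\gamma$ and $\bar{\Pi}_0 \eqvI[{\IofG}] \Pi_0$. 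Since $A \in \NonT$ is one of the symbols occurring in $\bar{\pi}$, the class $[\bar{\pi}]_{\eqvI[{\IofG}]}$ admits a representative of the form $A\bar{\gamma}$ with $\bar{\gamma} \eqvI[{\IofG}] \gamma$, so the APCPS rules are applicable at $\bar{\pi}$.

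Next I would case-split on the shape of the ACPS rule, mirroring the translation from $\calR$ to $\RofP$. When $\lambda = \epsilon$, the grammar rule $A \to \beta$ lies in $\RofP$, and a single APCPS step obtained via \Cref{stdrule:interleave:std} (when $\beta$ does not satisfy $(\dagger)$) or via \Cref{stdrule:interleave:com} (when $\beta = BC$ with $C \in \ComN$, in which case the productivity side-condition $C \to^{*} w$ is automatic because commutative non-terminals are productive) yields $\bar{\Pi} \ChanPar \Gamma \to[LTS=\GofP] \beta\bar{\gamma} \parallel \bar{\Pi}_0 \ChanPar \Gamma$, and the result is $\eqvI[{\IofG}]$-equivalent to $\Pi'$. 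When $\lambda \in \varset{\snd{c}{m},\rec{c}{m},\spn{(q,A')}}$, normal form forces $\beta = \epsilon$ and $A \to \lambda \in \RofP$; I would then use a two-step APCPS path: first a \Cref{stdrule:interleave:std}-step producing $\lambda\bar{\gamma} \parallel \bar{\Pi}_0 \ChanPar \Gamma$ (this intermediate step is $\epsilon$-labelled because the target contains a terminal), and then a side-effect step \Cref{stdrule:send}, \Cref{stdrule:receive}, or \Cref{stdrule:spawn} that discharges $\lambda$ with exactly the channel or spawn effect of the ACPS transition. In each case the final APCPS configuration lies in $\M[\NonT^{*}]$ and is $\eqvI[{\IofG}]$-equivalent to $\Pi'$, so the relation $\mathpzc{R}$ is preserved and the produced label lies in the same $\eqvI[{\IofG}]$-class as $l$.

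The main obstacle I anticipate is the careful handling of the equivalence relation $\eqvI[{\IofG}]$: I must justify both the compatible decomposition of $\bar{\Pi}$ above and the claim that the permutation witnessing $\bar{\pi} \eqvI[{\IofG}] A\gamma$ descends to a permutation witnessing $\beta\bar{\gamma} \eqvI[{\IofG}] \beta\gamma$ after the reduction, so that the post-states indeed fall in the same equivalence class. A secondary point is checking label agreement in the temporary labelling: the ACPS label ranges over arbitrary representatives $\Pi'_{0} \eqvI[{\IofG}] \Pi'$, while the APCPS label ranges over the representatives of $\Pi'$ that lie in $\M[\NonT^{*}]$; because $\calP$ is in normal form every reachable ACPS configuration lies in $\M[\NonT^{*}]$, and these two label families coincide, so label matching reduces to the equivalence of the target configurations already verified in the case analysis.
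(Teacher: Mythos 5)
Your proof is correct and follows essentially the same route as the paper's: pick a representative of $\bar{\pi}$'s class of the form $A\bar{\gamma}$, case-split on the side effect $\lambda$, match an $\epsilon$-effect ACPS transition by a single grammar step (\cref{stdrule:interleave:std} or \cref{stdrule:interleave:com}, the productivity side condition being automatic for commutative $C$) and a send/receive/spawn by an $\epsilon$-labelled interleave step followed by the corresponding side-effect rule, and then discharge label matching via $\eqvI[{\IofG}]$-equivalence of the target configurations, exactly as the paper does. One small repair: the representative $A\bar{\gamma}$ exists simply because $A\gamma$ itself is a representative of $[\bar{\pi}]$ (as $\bar{\pi} \eqvI[{\IofG}] A\gamma$), not because $A$ merely occurs somewhere in $\bar{\pi}$ --- occurrence alone would not in general allow $A$ to be commuted to the head.
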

\begin{proof}
Suppose $(\pi \parallel \Pi \ChanPar \Gamma,\pi_0 \parallel \Pi_0 \ChanPar \Gamma) \in \mathpzc{R}$
and $\pi \parallel \Pi \ChanPar \Gamma \to[TS=\calP] \pi' \parallel \Pi' \ChanPar \Gamma'$.
We may assume that $\pi = A\,\beta$ and a rule $A \to[effect=\lambda] \beta'$ is used in the transition.
Since $\pi_0 \parallel \Pi_0 \eqvI[{\IofG}] A\,\beta \parallel \Pi$ we can w.l.o.g. assume that
$\pi_0 = A\, \bar{\beta}$ and $\bar{\beta} \eqvI[{\IofG}] \beta$ (by using transitivity of $\eqvI[{\IofG}]$ otherwise).
Let us perform a case analysis on $\lambda$:
\begin{itemize}
\item \emph{Case: } $\lambda=\epsilon$ \newline
Since $\calP$ is in normal form we know that
$\beta' \in \varset{\epsilon, B\,C : B,C \in \calA}$,
$\pi' = \beta'\,\beta$,
$\Pi' = \Pi$, $\Gamma' = \Gamma$ and $A \to \beta' \in \GofP$.
Then clearly $A\, \bar{\beta} \parallel \Pi_0 \ChanPar \Gamma \to[label=l,LTS=\GofP] \beta' \bar{\beta} \parallel \Pi_0 \ChanPar \Gamma$
and $\beta'\, \bar{\beta} \parallel \Pi_0 \eqvI[{\IofG}] \beta'\,\beta \parallel \Pi = \pi' \parallel \Pi'$.
Further clearly $\beta'\, \bar{\beta} \parallel \Pi_0 \in \M[\NonT^*]$ and thus we may take $l = \pi' \parallel \Pi'$.
\item \emph{Case: } $\lambda=\snd{c}{m}$ \newline
Since $\calP$ is in normal form we know that 
$\beta'=\epsilon$,
$\pi' = \beta$,
$\Pi' = \Pi$, $\Gamma' = \Gamma \oplus \varupdate{}{c}{\mset{m}}$ and $A \to \snd{c}{m} \in \GofP$.
Then clearly 
$A\, \bar{\beta} \parallel \Pi_0 \ChanPar \Gamma \to[label=\epsilon,LTS=\GofP] \snd{c}{m} \bar{\beta} \parallel \Pi_0 \ChanPar \Gamma
\to[label=l,LTS=\GofP] \bar{\beta} \parallel \Pi_0 \ChanPar \Gamma'$
and $\bar{\beta} \parallel \Pi_0 \eqvI[{\IofG}] \beta \parallel \Pi = \pi' \parallel \Pi'$.
Further clearly $\bar{\beta} \parallel \Pi_0 \in \M[\NonT^*]$ and thus we may take $l = \pi' \parallel \Pi'$.
\item \emph{Case: } $\lambda=\rec{c}{m}$ \newline
Since $\calP$ is in normal form we know that 
$\beta'=\epsilon$,
$\pi' = \beta$,
$\Pi' = \Pi$, $\Gamma = \Gamma' \oplus \varupdate{}{c}{\mset{m}}$ and $A \to \rec{c}{m} \in \GofP$.
Then clearly 
$A\, \bar{\beta} \parallel \Pi_0 \ChanPar \Gamma \to[label=\epsilon,LTS=\GofP] \rec{c}{m} \bar{\beta} \parallel \Pi_0 \ChanPar \Gamma
\to[label=l,LTS=\GofP] \bar{\beta} \parallel \Pi_0 \ChanPar \Gamma'$
and $\bar{\beta} \parallel \Pi_0 \eqvI[{\IofG}] \beta \parallel \Pi = \pi' \parallel \Pi'$.
Further clearly $\bar{\beta} \parallel \Pi_0 \in \M[\NonT^*]$ and thus we may take $l = \pi' \parallel \Pi'$.
\item \emph{Case: } $\lambda=\spn{A'}$ \newline
Since $\calP$ is in normal form we know that 
$\beta'=\epsilon$,
$\pi' = \beta$,
$\Pi' = A' \parallel \Pi$, $\Gamma = \Gamma'$ and $A \to \spn{A'} \in \GofP$.
Then clearly 
$A\, \bar{\beta} \parallel \Pi_0 \ChanPar \Gamma \to[label=\epsilon,LTS=\GofP] \spn{A'} \bar{\beta} \parallel \Pi_0 \ChanPar \Gamma
\to[label=l,LTS=\GofP] \bar{\beta} \parallel A' \parallel \Pi_0 \ChanPar \Gamma$.
Further $\bar{\beta} \parallel A' \parallel \Pi_0 \eqvI[{\IofG}] \beta \parallel A' \parallel \Pi = \pi' \parallel \Pi'$, and
Further clearly $\bar{\beta} \parallel A' \parallel \Pi_0 \in \M[\NonT^*]$ and thus we may take $l = \pi' \parallel \Pi'$.
\end{itemize}
Hence $\mathpzc{R}$ is a weak simulation which is what we wanted to prove.
\end{proof}

We temporarily relabel $\to[TS=\calP]$ and $\to[TS=\GofP]$ by side effects in $\Lambda$ as follows:
If $\Pi \ChanPar \Gamma \to[TS=\calP] \Pi' \ChanPar \Gamma'$ using rule $\beta \to[effect=\lambda] \beta'$ then we label the transition
$\Pi \ChanPar \Gamma \to[label=\lambda,LTS=\calP] \Pi' \ChanPar \Gamma'$.
If $\lambda\, \beta \parallel \Pi \ChanPar \Gamma \to[TS=\GofP] \beta \parallel \Pi' \ChanPar \Gamma'$ using rules \Cref{stdrule:send}, \Cref{stdrule:spawn} and $\lambda \in \SigmaofP$ then we label
the transition $\lambda\, \beta \parallel \Pi \ChanPar \Gamma \to[label=\lambda,TS=\GofP] \beta \parallel \Pi' \ChanPar \Gamma'$; 
otherwise we label it by $\epsilon$, i.e.~$\Pi \ChanPar \Gamma \to[label=\epsilon,LTS=\GofP] \Pi' \ChanPar \Gamma'$. Further we label the transitive closures as usual:
If $\Pi_0 \ChanPar \Gamma_0 \to[label=l_1,LTS=\GofP] \Pi_1 \ChanPar \Gamma_1 \to[label=l_2,LTS=\GofP] \cdots \to[label=l_n,LTS=\GofP] \Pi_n \ChanPar \Gamma_n$ then we label $\Pi_0 \ChanPar \Gamma_0 \to[label=l_1\cdots l_n,LTS=\GofP,*] \Pi_n \ChanPar \Gamma_n$ and similarly for $\to[label=\vec{l},LTS=\calP,*]$.

We capture the language of side effects of a $\beta \in \ComN$ in $\calP$ and $\GofP$ as follows:
$\calL_{\GofP}(\beta) = \varset{\vec{\lambda} \mid \beta \ChanPar \emptyset \to[label=\vec{\lambda},LTS=\GofP,*] \epsilon \parallel \Pi(\vec{\lambda}) \ChanPar \Gamma(\vec{\lambda})}$ and
$\calL_{\calP}(\beta) = \varset{\vec{\lambda} \mid \beta \ChanPar \emptyset \to[label=\vec{\lambda},LTS=\calP,*] \epsilon \parallel \Pi(\vec{\lambda}) \ChanPar \Gamma(\vec{\lambda})}$ where 
a $\vec{\lambda}$ determines a ``delta'' to the configuration in terms of sent messages and spawned processes:
\begin{align*}
\Pi(\vec{\lambda})    &\is \set{\Parallel_{A \in \NonT} \paren{\parallel_1^{\M(\vec{\lambda})(\spn{A})} A}},\\
\Gamma(\vec{\lambda}) &\is \set{\Oplus_{c \in \Chan}\Oplus_{m \in \Msg[]} \update{}{c}{\mset{m^{\M(\vec{\lambda})(\snd{c}{m})}}}}.
\end{align*}
Further we note that $\calL_{\GofP}(\beta)$ is effectively the \emph{Parikh} language defined by the CFG $\GofP$ of $\beta$ in the standard derivation sense.
We can now make precise how we can accelerate the execution of commutative non-terminals:
\begin{lemma}\label{app:apcps:acps:com:termination}
Suppose $\beta,\bar{\beta} \in \ComN^*$ then:
\begin{asparaenum}[(i)]
\item $\calL_{\calP}(\beta) \neq \emptyset$ and for all $\vec{\lambda} \in \calL_{\calP}(\beta)$ we have the transitions:
$\beta\,\beta' \parallel \Pi \ChanPar \Gamma \to[TS=\calP,*] \beta' \parallel \Pi \parallel \Pi(\vec{\lambda})  \ChanPar \Gamma \oplus \Gamma(\vec{\lambda})$; and
\item $\calL_{\GofP}(\bar{\beta}) \neq \emptyset$ and for all $\vec{\lambda} \in \calL_{\GofP}(\bar{\beta})$ we have the transitions:
$\bar{\beta}\,\bar{\beta}' \parallel \bar{\Pi} \ChanPar \bar{\Gamma} \to[TS=\GofP,*] \bar{\beta}' \parallel \bar{\Pi} \parallel \Pi(\vec{\lambda}) \ChanPar \Gamma(\vec{\lambda})$.
\end{asparaenum}
And if $\beta \eqvI[{\IofG}] \bar{\beta}$ then we have the set equality:
$$\set{(\Pi(\vec{\lambda}),\Gamma(\vec{\lambda})) \mid \vec{\lambda} \in \calL_{\calP}(\beta)} = 
\set{(\Pi(\vec{\lambda}),\Gamma(\vec{\lambda})) \mid \vec{\lambda} \in \calL_{\GofP}(\bar{\beta})}.$$
\end{lemma}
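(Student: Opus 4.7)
\begin{proof*}[Proof plan]
The plan is to establish (i) and (ii) via a careful analysis of the greatest-fixpoint definition of $\ComN$, and then deduce (iii) from the observation that the only quantity escaping a commutative computation is a Parikh image of side-effects.

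First I would show that every $\beta\in\ComN^{*}$ admits at least one terminating leftmost derivation in $\GofP$ whose frontier lies entirely in $\ComSigma^{*}$. By induction on the length of $\beta$: for the single-symbol case $\beta=A\in\ComN$, membership in the greatest fixpoint of $F$ gives $\calL(A)\neq\emptyset$ together with the structural guarantee that every rule $A\to w$ uses only symbols from $\ComSigma\cup\ComN$; picking a witness of $\calL(A)\neq\emptyset$ and reasoning coinductively inside $\ComN$ yields a complete derivation $A\to^{*}\vec{\lambda}$ with $\vec{\lambda}\in\ComSigma^{*}$. Concatenating such derivations gives the inductive step, and simultaneously shows $\calL_{\GofP}(\beta)\neq\emptyset$ with every witness in $\ComSigma^{*}$. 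A routine matching of CFG-derivations against the ACPS rules of $\calP$ (recall $\GofP$ was defined from $\calP$ in normal form by a one-to-one correspondence of rules) then produces the execution $\beta\beta'\parallel\Pi\ChanPar\Gamma\to[TS=\calP,*]\beta'\parallel\Pi\parallel\Pi(\vec{\lambda})\ChanPar\Gamma\oplus\Gamma(\vec{\lambda})$: crucially, none of the actions in $\vec{\lambda}$ is a receive, so no transition can block on the state of $\Gamma$, and spawns only enlarge $\Pi$; hence the derivation lifts verbatim to a $\calP$-execution. This establishes (i). The argument for (ii) is analogous using rules \Cref{stdrule:interleave:std}, \Cref{stdrule:interleave:com}, \Cref{stdrule:send} and \Cref{stdrule:spawn} of the standard APCPS semantics, noting that leftmost CFG-derivation followed by immediate dispatch of the leftmost side-effect is exactly what those rules encode.

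For the set equality in (iii), the key observation is that both $\Pi(\vec{\lambda})$ and $\Gamma(\vec{\lambda})$ depend only on the Parikh image $\M(\vec{\lambda})$ of $\vec{\lambda}$. It therefore suffices to prove that $\{\M(\vec{\lambda})\mid\vec{\lambda}\in\calL_{\calP}(\beta)\}=\{\M(\vec{\lambda})\mid\vec{\lambda}\in\calL_{\GofP}(\bar{\beta})\}$ whenever $\beta\eqvI[{\IofG}]\bar{\beta}$. Because $\calP$ and $\GofP$ are in one-to-one rule correspondence and, along any run starting from a word in $\ComN^{*}$, no rule of type \Cref{item:rec} or any receive action is ever applied, the two derivation relations agree on Parikh images: $\{\M(\vec{\lambda})\mid\vec{\lambda}\in\calL_{\calP}(\beta)\}=\{\M(\vec{\lambda})\mid\vec{\lambda}\in\calL_{\GofP}(\beta)\}$ for every $\beta\in\ComN^{*}$. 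It thus remains to see that $\{\M(\vec{\lambda})\mid\vec{\lambda}\in\calL_{\GofP}(\beta)\}$ is invariant under $\eqvI[{\IofG}]$.

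This last invariance is the main (though not deep) obstacle: a single commutation of adjacent commutative symbols $X\,Y\mapsto Y\,X$ in $\beta$ induces a bijection on the set of complete leftmost derivations that preserves the Parikh image of the produced action word. I would verify this by a shuffle argument: a derivation of $X\,Y$ interleaves a complete derivation of $X$ and one of $Y$ (both in $\ComSigma^{*}$ by the argument above), and any such interleaving is also realisable as a derivation of $Y\,X$; since the Parikh image is interleaving-agnostic, the image sets coincide. Extending by the standard congruence closure yields invariance under $\eqvI[{\IofG}]$, which completes (iii).
\end{proof*}
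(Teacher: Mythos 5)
Your proposal is correct and takes essentially the same route as the paper: parts (i)–(ii) rest on the greatest-fixpoint characterisation of $\ComN$ (productivity and closure under commutative symbols) plus the absence of receive actions to lift derivations into arbitrary contexts, and part (iii) reduces to equality of Parikh images, which the paper justifies in the same spirit by observing that the commutative fragment of $\GofP$ is effectively a completely commutative CFG derived from $\calP$. Two cosmetic points: your clause that binary rules $X \to Y\,Z$ are never applied is inaccurate (only receive actions are excluded, which is all the argument needs), and the $\eqvI$-invariance of $\calL_{\GofP}$ requires no shuffle argument, since standard-semantics processes are already $\eqvI$-equivalence classes, so $\beta$ and $\bar{\beta}$ denote the same configuration.
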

\begin{proof}
Let us first prove that $\calL_{\calP}(\beta) \neq \emptyset$.
First since $\beta \in \ComN^*$ we may see them as non-terminals of $\GofP$ and rewrite them.
Suppose $\beta = A_1\cdots A_n$ then we know that $A_i$ is commutative and thus $\calL(A_i) \neq \emptyset$ for all 
$i \in \range{n}$. So let us pick $\vec{\lambda}_i \in \calL(A_i)$ let us show in a brief induction that if
$\vec{\lambda} \in \calL(A)$ then $\vec{\lambda} \in \calL_{\calP}(A)$. Let us suppose $A \to^n \vec{\lambda}$.
If $n = 1$ then we use a rule $A \to \vec{\lambda}$ in $\GofP$; hence there exists a rule
$A \to[effect=\vec{\lambda}] \epsilon$ in $\calP$ and thus
$A \ChanPar \emptyset \to[label=\vec{\lambda},LTS=\GofP,*] \epsilon \parallel \Pi(\vec{\lambda}) \ChanPar \Gamma(\vec{\lambda})$.
If $n=k+1$ and the claim holds for all $A'$ and $k' \leq k$ then we can see that
$A \to \beta \to^k \vec{\lambda}$. Now since $\GofP$ is in Chomsky normal form the first step must be a rule of the form
$A \to B\,C$ and $\beta = B\,C$. This implies $\vec{\lambda} = \vec{\lambda}' \vec{\lambda}''$,
$B \to^{k'} \vec{\lambda}'$, $C \to^{k''} \vec{\lambda}''$, and $k = k' + k''$, which yields using the IH that
$\vec{\lambda}' \in \calL_{\calP}(B)$ and $\vec{\lambda}'' \in \calL_{\calP}(C)$. Further we know there is a rule
$A \to[effect=\epsilon] B\, C$ in $\calP$  and thus
$A \ChanPar \emptyset \to[label=\epsilon,LTS=\GofP,*] B\, C \ChanPar \emptyset
\to[label=\vec{\lambda}',LTS=\GofP,*] C \parallel \Pi(\vec{\lambda}') \ChanPar \Gamma(\vec{\lambda}')
\to[label=\vec{\lambda}'',LTS=\GofP,*] \epsilon \parallel \Pi(\vec{\lambda}'\vec{\lambda}') \ChanPar \Gamma(\vec{\lambda}'')$
which concludes the induction.
Hence we can infer that in fact $\vec{\lambda}_i \in \calL(A_i)$
and so we can see that
$A_1\cdots A_n \ChanPar \emptyset \to[label=\vec{\lambda}_1,LTS=\GofP,*] A_2\cdots A_n \parallel \Pi(\vec{\lambda}_1) \ChanPar \Gamma(\vec{\lambda}_1)
\to[label=\vec{\lambda}_2,LTS=\GofP,*] \cdots
\to[label=\vec{\lambda}_n,LTS=\GofP,*] \cdots
\epsilon \parallel \Pi(\vec{\lambda}_1\cdots\vec{\lambda}_n) \ChanPar \Gamma(\vec{\lambda}_1\cdots\vec{\lambda}_n)$
and thus $\vec{\lambda}_1\cdots\vec{\lambda}_n \in \calL_{\calP}(\beta)$
which is what we wanted to prove.
For the second claim of (i) suppose $\vec{\lambda} \in \calL_{\calP}(\beta)$ then by definition
$\beta \ChanPar \emptyset \to[label=\vec{\lambda},LTS=\GofP,*] \epsilon \parallel \Pi(\vec{\lambda}) \ChanPar \Gamma(\vec{\lambda})$. It is then trivial to see that
$\beta\,\beta' \parallel \Pi \ChanPar \Gamma \to[TS=\calP,*] \beta' \parallel \Pi \parallel \Pi(\vec{\lambda})  \ChanPar \Gamma \oplus \Gamma(\vec{\lambda})$.

For (ii), let us first prove that $\calL_{\GofP}(\bar{\beta}) \neq \emptyset$. First we note again that
$\bar{\beta} \in \ComN^*$ and thus $\calL(\bar{\beta}) \neq \emptyset$. 
Hence take $\vec{\lambda} \in \calL(\bar{\beta})$.
By continuously commuting non-terminals to the leftmost-position we can see that
$\bar{\beta} \ChanPar \emptyset \to[label=\epsilon,LTS=\GofP,*] \vec{\lambda} \ChanPar \emptyset$
and then clearly
$\vec{\lambda} \ChanPar \emptyset \to[label=\vec{\lambda},LTS=\GofP,*] 
\epsilon \parallel \Pi(\vec{\lambda}) \ChanPar \Gamma(\vec{\lambda})$ 
thus $\vec{\lambda} \in \calL_{\GofP}(\bar{\beta})$.
For the second claim of (ii) let $\vec{\lambda} \in \calL_{\GofP}(\bar{\beta})$ then by definition
$\bar{\beta} \ChanPar \emptyset \to[label=\vec{\lambda},LTS=\GofP,*] 
\epsilon \parallel \Pi(\vec{\lambda}) \ChanPar \Gamma(\vec{\lambda})$;
it is hence trivial to see that 
$\bar{\beta}\,\bar{\beta}' \parallel \bar{\Pi} \ChanPar \bar{\Gamma} \to[TS=\calP,*] \bar{\beta}' \parallel \bar{\Pi} \parallel \Pi(\vec{\lambda}) \ChanPar \bar{\Gamma} \oplus \Gamma(\vec{\lambda})$.

Since $\GofP$, restricted to non-terminals of $\ComN$, is effectively a completely commutative context-free grammar derived from $\calP$ we can see that
$\varset{\M(\lambda) \mid \lambda \in \calL_{\calP}(\beta)} = \varset{\M(\bar{\lambda}) \mid \bar{\lambda} \eqvI[\IofG] \lambda \in \calL_{\calP}(\beta)} = \varset{\M(\bar{\lambda}) \mid \bar{\lambda} \in \calL_{\GofP}(\bar{\beta})}$
which implies immediately that
$$\set{(\Pi(\vec{\lambda}),\Gamma(\vec{\lambda})) \mid \vec{\lambda} \in \calL_{\calP}(\beta)} = 
\set{(\Pi(\vec{\lambda}),\Gamma(\vec{\lambda})) \mid \vec{\lambda} \in \calL_{\GofP}(\bar{\beta})}.$$
\end{proof}

As a final step in our argument let us relabel the transition relation again.
We temporarily relabel $\to[TS=\calP]$ and $\to[TS=\GofP]$ by $\calP$-configurations as follows:
If $\Pi \ChanPar \Gamma \to[TS=\calP] \Pi' \ChanPar \Gamma'$ then we label the transition
$\Pi \ChanPar \Gamma \to[label=\Pi'_0 \ChanPar \Gamma',LTS=\calP] \Pi' \ChanPar \Gamma'$, if
$\Pi' \in \M[\NComN\cdot\calA^*]$ and $\Pi'_0 \eqvI[\IofG] \Pi'$; otherwise we label it with $\epsilon$:
$\Pi \ChanPar \Gamma \to[label=\epsilon,LTS=\calP] \Pi' \ChanPar \Gamma'$.
If $\Pi \ChanPar \Gamma \to[TS=\GofP] \Pi' \ChanPar \Gamma'$ 
and $\Pi' \eqvI[\IofG] \bar{\Pi'}$, $\bar{\Pi'} \in \M[\NComN \cdot \NonT^*]$ then we label
the transition $\Pi \ChanPar \Gamma \to[label=\bar{\Pi'} \ChanPar \Gamma',LTS=\GofP] \Pi' \ChanPar \Gamma'$; 
otherwise we label it by $\epsilon$, i.e.~$\Pi \ChanPar \Gamma \to[label=\Pi' \ChanPar \Gamma',LTS=\GofP] \Pi' \ChanPar \Gamma'$.

\begin{lemma}\label{app:lemma:acps:apcps:wbisimulation}
The relation $\mathpzc{R}' = \varset{(\Pi \ChanPar \Gamma,\bar{\Pi} \ChanPar \Gamma) : \bar{\Pi} \eqvI[{\IofG}] \Pi,
\bar{\Pi} \in \M[\NComN\cdot\NonT^*]}$
is a weak bisimulation relation for $\calP$ and $\GofP$.
\end{lemma}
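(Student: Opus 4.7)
The plan is to verify that $\mathpzc{R}'$ and $\mathpzc{R}'^{-1}$ are both weak (reflexive) simulations, by case analysis on the applied rule, leveraging the infrastructure already in place. As a preliminary observation: a pair $(\Pi \ChanPar \Gamma, \bar{\Pi} \ChanPar \Gamma) \in \mathpzc{R}'$ forces \emph{both} sides to carry non-commutative heads, because $\IofG$ only pairs commutative symbols with each other, so no permutation witnessing $\bar{\Pi} \eqvI[\IofG] \Pi$ can move a non-commutative symbol past anything. Hence if $\bar{\Pi} \in \M[\NComN\cdot\NonT^*]$ then $\Pi$ has the same non-commutative symbol at the head of the correspondingly aligned process.

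For the direction $\calP$-to-$\GofP$, I start from $(\Pi \ChanPar \Gamma, \bar{\Pi} \ChanPar \Gamma) \in \mathpzc{R}'$ and a $\calP$-step $\Pi \ChanPar \Gamma \to[label=l,LTS=\calP] \Pi' \ChanPar \Gamma'$ justified by a $\calP$-rule applied to some process $\pi = X\,\beta$ with $X \in \NComN$. Since $\calP$ is in normal form, the right-hand side of the rule is $\epsilon$, a single symbol, or a two-symbol sequence; I split on this and on the side effect $\lambda$. In each case the underlying weak simulation supplied by Lemma~\ref{app:apcps:simulates:acps} yields a $\GofP$-sequence $\bar{\Pi} \ChanPar \Gamma \to[TS=\GofP,*] \bar{\Pi}_0 \ChanPar \Gamma'$ with $\bar{\Pi}_0 \in \M[\NonT^*]$ and $\bar{\Pi}_0 \eqvI[\IofG] \Pi'$. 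If $l \neq \epsilon$ then $\Pi'$ already has non-commutative heads, so some representative of its $\eqvI[\IofG]$-class lies in $\M[\NComN\cdot\NonT^*]$; I pick it as $\bar{\Pi}'$ and assign label $l$ to the last non-trivial step in the $\GofP$-sequence. If $l = \epsilon$ then $\Pi'$ has a commutative-headed process, the matching $\GofP$-sequence is entirely $\epsilon$-labelled under the present relabelling, and I append further $\epsilon$-steps supplied by Lemma~\ref{app:apcps:acps:com:termination} to reduce commutative prefixes on both sides to completion, producing identical Parikh-image side effects; I iterate until every process head is again in $\NComN$, whereupon I have re-entered $\mathpzc{R}'$.

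The reverse direction is handled symmetrically. Since $\bar{\Pi}$ has non-commutative heads, the only immediately applicable $\GofP$-rule on a leftmost non-terminal is a CFG reduction \Cref{stdrule:interleave:std} or \Cref{stdrule:interleave:com} at the head $X \in \NComN$ of some process, and the rule corresponds by construction to a $\calP$-rule $X \to[effect=\lambda] \beta_0$, which $\calP$ fires on the aligned process to produce an $\eqvI[\IofG]$-equivalent successor; the side-effect rules \Cref{stdrule:receive}--\Cref{stdrule:spawn} only fire after a terminal has been exposed at a head, so I group a CFG reduction that exposes a terminal together with the subsequent side-effect rule into a single macro-step matched by one $\calP$-step. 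The main obstacle throughout is maintaining the non-commutative-head invariant across rules like $X \to B\,C$ with $B$ commutative: both systems temporarily slip out of $\mathpzc{R}'$'s range, and I close the argument by chaining $\epsilon$-labelled reductions in both systems until a non-commutative head re-emerges. Lemma~\ref{app:apcps:acps:com:termination} is exactly what certifies that such chains exist on both sides and produce matching channel updates and spawned processes, sealing the weak bisimulation.
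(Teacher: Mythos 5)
Your overall route is the paper's: the $\calP$-to-$\GofP$ direction rests on Lemma~\ref{app:apcps:simulates:acps} together with the observation that non-commutative heads are invariant under $\eqvI[{\IofG}]$, and the $\GofP$-to-$\calP$ direction decomposes the given run into label-unsplittable segments consisting of one rule fired at a non-commutative head followed by the execution of commutative material, matched in $\calP$ by the corresponding rule plus the acceleration Lemma~\ref{app:apcps:acps:com:termination} (whose effect-set equality gives the matching sends/spawns). However, there is a genuine flaw in how you discharge the $\epsilon$-move obligations. In a weak simulation the move on the universally quantified side is \emph{given}; you may not ``append further $\epsilon$-steps \ldots on both sides'' until a non-commutative head re-emerges. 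If the given side performs a single $\epsilon$-step landing in a mid-phase configuration (say it fires $A \to B\,C$ with $B$ commutative and stops), that configuration lies outside the domain of $\mathpzc{R}'$, and no amount of activity in the other system can put the pair back into $\mathpzc{R}'$, because the endpoint of the given move is fixed. The correct resolution --- which the paper builds into its relabelling and which its proof uses silently --- is that the only weak moves one must match are those ending at a visibly labelled configuration, i.e.\ one in which every process head is again non-commutative; equivalently, one splits the given run at label boundaries and matches whole segments, never individual mid-phase $\epsilon$-steps. With that convention your forward direction needs no case analysis and no appeal to Lemma~\ref{app:apcps:acps:com:termination} at all: Lemma~\ref{app:apcps:simulates:acps} matches the entire given segment, and the head-invariance observation shows the $\GofP$-side endpoint is again in $\M[\NComN\cdot\NonT^*]$.

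In the reverse direction the same convention also supplies a step you currently leave implicit: within a label-unsplittable $\GofP$ segment, the commutative prefix exposed by the head rule is necessarily executed \emph{to completion}, because a commutative symbol at the head cannot be commuted past a non-commutative one, so the segment's endpoint being a synchronisation configuration forces the prefix to have been consumed. Only then can you read off some $\vec{\lambda} \in \calL_{\GofP}(\beta)$ for the commutative prefix $\beta$ and invoke the set equality of Lemma~\ref{app:apcps:acps:com:termination} to choose $\bar{\vec{\lambda}} \in \calL_{\calP}(\bar{\beta})$ with identical spawned processes and channel contents, which is what your ``matching Parikh-image side effects'' claim requires. With the $\epsilon$-handling repaired and this completeness argument made explicit, your case analysis (one step at a non-commutative head per rule \Cref{stdrule:interleave:std}/\Cref{stdrule:interleave:com}, grouping terminal exposure with the side-effect rules \Cref{stdrule:receive}--\Cref{stdrule:spawn}) coincides with the paper's proof.
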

\begin{proof}
Let us first prove that $\mathpzc{R}'$ is a weak simulation.
Suppose $(\Pi \ChanPar \Gamma, \Pi_0 \ChanPar \Gamma) \in \mathpzc{R}'$
and $\Pi \ChanPar \Gamma \to[label=\Pi' \ChanPar \Gamma',LTS=\calP,*] \Pi' \ChanPar \Gamma'$ such that $\Pi' \in \M[\NComN\cdot\calA^*]$.
Lemma~\ref{app:apcps:simulates:acps} then tells us that
$\Pi \ChanPar \Gamma \to[label=\Pi' \ChanPar \Gamma',LTS=\calP,*] \Pi'_0 \ChanPar \Gamma'$
and $\Pi'_0 \eqvI[\IofG] \Pi'$. The latter implies that $\Pi'_0 \in \M[\NComN\cdot\NonT^*]$ since $\Pi' \in \M[\NComN\cdot\calA^*]$, i.e. each process is headed by a non-commutative non-terminal which obviously are invariant under commutation.
Hence we can deduce that $\mathpzc{R}'$ is a weak simulation.

Let us first prove that $\mathpzc{R}'^{-1}$ is a weak simulation.
Suppose $(\Pi \ChanPar \Gamma, \Pi_0 \ChanPar \Gamma) \in \mathpzc{R}'$
and $\Pi_0 \ChanPar \Gamma \to[label=\Pi' \ChanPar \Gamma',LTS=\GofP,*] \Pi'_0 \ChanPar \Gamma'$ where
$\Pi'_0 \eqvI[\IofG] \Pi'$ for some $\Pi' \in \M[\NComN\cdot\calA^*]$.
W.l.o.g. we may assume this path may not be split to expose two labels $l,l'$ (otherwise we may consider a maximal splitting of the path to obtain several labels and apply the below to each in turn one-by-one).
We can infer that 
$\Pi_0 = A\,\beta_0 \parallel \Pi_1$ and the transition may be split
$A\,\beta_0 \parallel \Pi_1 \ChanPar \Gamma \to[label=l,LTS=\GofP] \beta'\,\beta_0 \parallel \Pi'_1 \ChanPar \Gamma_1 \to[label=l',LTS=\GofP,*] \Pi'_0 \ChanPar \Gamma'$, $l\cdot l' = \Pi' \ChanPar \Gamma'$ and using a rule
$A \to \beta' \in \GofP$.
Further we may infer that 
$\Pi = A\,\beta \parallel \Pi_2$.
Let us make a case analysis on $\beta'$.
\begin{itemize}
\item \emph{Case: } $\beta' = B\, C$, $B$ non-commutative. \newline
Firstly, we notice $\Pi'_1 = \Pi_1, \Gamma = \Gamma_1$ and
$B\,C\,\beta_0 \parallel \Pi_1 \in \M[\NComN\cdot\NonT^*]$ which implies
that $l = \Pi' \ChanPar \Gamma'$ and also $\Pi'_0 \ChanPar \Gamma' = \beta'\,\beta_0 \parallel \Pi_1$.
Further there is a rule $A \to[effect=\epsilon] B\,C$ in $\calP$ and thus
$A\,\beta \parallel \Pi_2 \ChanPar \Gamma \to[label=l,LTS=\calP] B\, C\,\beta \parallel \Pi_2 \ChanPar \Gamma$
and $B\, C\,\beta \parallel \Pi_2 \eqvI[\IofG] B\,C\,\beta_0 \parallel \Pi_1$
which is what we wanted to prove.
\item \emph{Case: } $\beta' = B\, C$, $B$ commutative. \newline
Firstly, we notice $\Pi'_1 = \Pi_1$ and $\Gamma = \Gamma_1$.
Since $A$ was non-commutative it must be the case that $C$ is non-commutative.
Further since $\Pi_0 \ChanPar \Gamma \to[label=\Pi' \ChanPar \Gamma',LTS=\GofP,*] \Pi'_0 \ChanPar \Gamma'$ may not be further split we may assume that for some $\vec{\lambda} \in \calL_{\GofP}(B)$ we may rewrite the transition as:
$A\,\beta_0 \parallel \Pi_1 \ChanPar \Gamma \to[label=\epsilon,LTS=\GofP] B\,C\,\beta_0 \parallel \Pi_1 \ChanPar \Gamma \to[label=l',LTS=\GofP,*] C\,\beta_0 \parallel \Pi_1 \parallel \Pi(\vec{\lambda}) \ChanPar \Gamma \oplus \Gamma(\vec{\lambda})$, i.e.~$\Pi'_0 \ChanPar \Gamma' = C\,\beta_0 \parallel \Pi_1 \parallel \Pi(\vec{\lambda}) \ChanPar \Gamma \oplus \Gamma(\vec{\lambda})$.
Now we know that 
there is a rule $A \to[effect=\epsilon] B\,C$ in $\calP$ and thus
$A\,\beta \parallel \Pi_2 \ChanPar \Gamma \to[label=\epsilon,LTS=\calP] B\, C\,\beta \parallel \Pi_2 \ChanPar \Gamma$
and Lemma~\ref{app:apcps:acps:com:termination} then gives us that
$B\,C\,\beta \parallel \Pi_2 \ChanPar \Gamma \to[label=l',LTS=\calP,*] C\,\beta \parallel \Pi_2 \parallel 
\Pi(\bar{\vec{\lambda}}) \ChanPar \Gamma \oplus \Gamma(\bar{\vec{\lambda}})$
for some $\bar{\vec{\lambda}} \in \calL_{\calP}(B)$ such that $\Pi(\bar{\vec{\lambda}}) = \Pi(\vec{\lambda})$,
$\Gamma(\vec{\lambda}) = \Gamma(\bar{\vec{\lambda}})$.
Further $C\,\beta \parallel \Pi_2 \parallel \Pi(\vec{\lambda}) \eqvI[\IofG] C\,\beta_0 \parallel \Pi_1 \parallel \Pi(\vec{\lambda})$ which is what we wanted to prove.
\item \emph{Case: } $\beta' = \epsilon$. \newline
Firstly, we notice $\Pi'_1 = \Pi_1$ and $\Gamma = \Gamma_1$ and we may write 
$\beta_0 = \beta'_0\beta''_0$ such that $\beta'_0 \in \ComN^*$ and $\beta''_0 \in (\NComN\ComN^*)^*$.
Further since $\Pi_0 \ChanPar \Gamma \to[label=\Pi' \ChanPar \Gamma',LTS=\GofP,*] \Pi'_0 \ChanPar \Gamma'$ may not be further split we may assume that for some $\vec{\lambda} \in \calL_{\GofP}(\beta'_0)$ we may rewrite the transition as:
$A\,\beta_0 \parallel \Pi_1 \ChanPar \Gamma \to[label=\epsilon,LTS=\GofP] \beta_0 \parallel \Pi_1 \ChanPar \Gamma \to[label=l',LTS=\GofP,*] \beta''_0 \parallel \Pi_1 \parallel \Pi(\vec{\lambda}) \ChanPar \Gamma \oplus \Gamma(\vec{\lambda})$, i.e.~$\Pi'_0 \ChanPar \Gamma' = \beta''_0 \parallel \Pi_1 \parallel \Pi(\vec{\lambda}) \ChanPar \Gamma \oplus \Gamma(\vec{\lambda})$.
Now we know that 
there is a rule $A \to[effect=\epsilon] \epsilon$ in $\calP$ and thus
$A\,\beta \parallel \Pi_2 \ChanPar \Gamma \to[label=\epsilon,LTS=\calP] \beta \parallel \Pi_2 \ChanPar \Gamma$
and since $\beta \eqvI[\IofG] \beta_0$ we may rewrite $\beta = \beta'\beta''$ such that
$\beta' \eqvI[\IofG] \beta'_0$ and $\beta'' \eqvI[\IofG] \beta''_0$, hence clearly $\beta' \in \ComN^*$ and $\beta'' \in (\NComN\ComN^*)^*$,
and Lemma~\ref{app:apcps:acps:com:termination} then gives us that
$\beta \parallel \Pi_2 \ChanPar \Gamma \to[label=l',LTS=\calP,*] \beta'' \parallel \Pi_2 \parallel 
\Pi(\bar{\vec{\lambda}}) \ChanPar \Gamma \oplus \Gamma(\bar{\vec{\lambda}})$
for some $\bar{\vec{\lambda}} \in \calL_{\GofP}(\beta')$ such that $\Pi(\bar{\vec{\lambda}}) = \Pi(\vec{\lambda})$,
$\Gamma(\vec{\lambda}) = \Gamma(\bar{\vec{\lambda}})$.
Further $\beta'' \parallel \Pi_2 \parallel \Pi(\vec{\lambda}) \eqvI[\IofG] \beta''_0 \parallel \Pi_1 \parallel \Pi(\vec{\lambda})$ which is what we wanted to prove.
\item \emph{Case: } $\beta' = \snd{c}{m}$. \newline
Firstly, we notice $\Pi'_1 = \Pi_1$ and $\Gamma = \Gamma_1$ and we may write 
$\beta_0 = \beta'_0\beta''_0$ such that $\beta'_0 \in \ComN^*$ and $\beta''_0 \in (\NComN\ComN^*)^*$.
Further since $\Pi_0 \ChanPar \Gamma \to[label=\Pi' \ChanPar \Gamma',LTS=\GofP,*] \Pi'_0 \ChanPar \Gamma'$ may not be further split we may assume that for some $\vec{\lambda} \in \calL_{\GofP}(\beta'_0)$ we may rewrite the transition as:
$A\,\beta_0 \parallel \Pi_1 \ChanPar \Gamma \to[label=\epsilon,LTS=\GofP] \snd{c}{m}\beta_0 \parallel \Pi_1 \ChanPar \Gamma \to[label=l',LTS=\GofP,*] \beta''_0 \parallel \Pi_1 \parallel \Pi(\snd{c}{m}\cdot\vec{\lambda}) \ChanPar \Gamma \oplus \Gamma(\snd{c}{m}\cdot\vec{\lambda})$, i.e.~$\Pi'_0 \ChanPar \Gamma' = \beta''_0 \parallel \Pi_1 \parallel \Pi(\snd{c}{m}\cdot\vec{\lambda}) \ChanPar \Gamma \oplus \Gamma(\snd{c}{m}\cdot\vec{\lambda})$.
Now we know that 
there is a rule $A \to[effect=\snd{c}{m}] \epsilon$ in $\calP$ and thus
$A\,\beta \parallel \Pi_2 \ChanPar \Gamma 
	\to[label=\epsilon,LTS=\calP] \beta \parallel \Pi_2 \parallel \Pi(\snd{c}{m}) \ChanPar \Gamma\oplus \Gamma(\snd{c}{m})$
and since $\beta \eqvI[\IofG] \beta_0$ we may use analogous reasoning to the $\beta'=\epsilon$ case to obtain that
$\beta \parallel \Pi_2 \parallel \Pi(\snd{c}{m}) \ChanPar \Gamma\oplus \Gamma(\snd{c}{m}) \to[label=l',LTS=\calP,*] \beta'' \parallel \Pi_2 \parallel 
\Pi(\snd{c}{m}\cdot\vec{\lambda}) \ChanPar \Gamma \oplus \Gamma(\snd{c}{m}\cdot\vec{\lambda})$
such that $\beta'' \parallel \Pi_2 \parallel \Pi(\snd{c}{m}\cdot\vec{\lambda}) \eqvI[\IofG] \beta''_0 \parallel \Pi_1 \parallel \Pi(\snd{c}{m}\cdot\vec{\lambda})$ which is what we wanted to prove.
\item \emph{Case: } $\beta' = \spn{A'}$. \newline
Firstly, we notice $\Pi'_1 = \Pi_1$ and $\Gamma = \Gamma_1$ and we may write 
$\beta_0 = \beta'_0\beta''_0$ such that $\beta'_0 \in \ComN^*$ and $\beta''_0 \in (\NComN\ComN^*)^*$.
Further since $\Pi_0 \ChanPar \Gamma \to[label=\Pi' \ChanPar \Gamma',LTS=\GofP,*] \Pi'_0 \ChanPar \Gamma'$ may not be further split we may assume that for some $\vec{\lambda} \in \calL_{\GofP}(\beta'_0)$ we may rewrite the transition as:
$A\,\beta_0 \parallel \Pi_1 \ChanPar \Gamma \to[label=\epsilon,LTS=\GofP] \spn{A'}\beta_0 \parallel \Pi_1 \ChanPar \Gamma \to[label=l',LTS=\GofP,*] \beta''_0 \parallel \Pi_1 \parallel \Pi(\spn{A'}\cdot\vec{\lambda}) \ChanPar \Gamma \oplus \Gamma(\spn{A'}\cdot\vec{\lambda})$, i.e.~$\Pi'_0 \ChanPar \Gamma' = \beta''_0 \parallel \Pi_1 \parallel \Pi(\spn{A'}\cdot\vec{\lambda}) \ChanPar \Gamma \oplus \Gamma(\spn{A'}\cdot\vec{\lambda})$.
Now we know that 
there is a rule $A \to[effect=\spn{A'}] \epsilon$ in $\calP$ and thus
$A\,\beta \parallel \Pi_2 \ChanPar \Gamma 
	\to[label=\epsilon,LTS=\calP] \beta \parallel \Pi_2 \parallel \Pi(\spn{A'}) \ChanPar \Gamma \oplus \Gamma(\spn{A'})$
and since $\beta \eqvI[\IofG] \beta_0$ we may use analogous reasoning to the $\beta'=\epsilon$ case to obtain that
$\beta \parallel \Pi_2 \parallel \Pi(\spn{A'}) \ChanPar \Gamma \oplus \Gamma(\spn{A'}) \to[label=l',LTS=\calP,*] \beta'' \parallel \Pi_2 \parallel 
\Pi(\spn{A'}\cdot\vec{\lambda}) \ChanPar \Gamma \oplus \Gamma(\spn{A'}\cdot\vec{\lambda})$
such that $\beta'' \parallel \Pi_2 \parallel \Pi(\snd{c}{m}\cdot\vec{\lambda}) \eqvI[\IofG] \beta''_0 \parallel \Pi_1 \parallel \Pi(\snd{c}{m}\cdot\vec{\lambda})$ which is what we wanted to prove.
\item \emph{Case: } $\beta' = \rec{c}{m}$. \newline
Firstly, we notice $\Pi'_1 = \Pi_1$ and $\Gamma = \Gamma_1$ and we may write 
$\beta_0 = \beta'_0\beta''_0$ such that $\beta'_0 \in \ComN^*$ and $\beta''_0 \in (\NComN\ComN^*)^*$.
Further since $\Pi_0 \ChanPar \Gamma \to[label=\Pi' \ChanPar \Gamma',LTS=\GofP,*] \Pi'_0 \ChanPar \Gamma'$ may not be further split we may assume that for some $\vec{\lambda} \in \calL_{\GofP}(\beta'_0)$ we may rewrite the transition as:
$A\,\beta_0 \parallel \Pi_1 \ChanPar \Gamma 
	\to[label=\epsilon,LTS=\GofP] \rec{c}{m}\,\beta_0 \parallel \Pi_1 \ChanPar \Gamma 
	\to[label=\epsilon,LTS=\GofP] \beta_0 \parallel \Pi_1 \ChanPar \Gamma \ominus \varupdate{}{c}{m}
	\to[label=l',LTS=\GofP,*] \beta''_0 \parallel \Pi_1 \parallel \Pi(\vec{\lambda}) \ChanPar \Gamma \oplus \Gamma(\vec{\lambda})$, i.e.~$\Pi'_0 \ChanPar \Gamma' = \beta''_0 \parallel \Pi_1 \parallel \Pi(\vec{\lambda}) \ChanPar \Gamma \ominus \varupdate{}{c}{m} \oplus \Gamma(\vec{\lambda})$.
Now we know that 
there is a rule $A \to[effect=\rec{c}{m}] \epsilon$ in $\calP$ and thus
$A\,\beta \parallel \Pi_2 \ChanPar \Gamma 
	\to[label=\epsilon,LTS=\calP] \beta \parallel \Pi_2 \parallel \ChanPar \Gamma\ominus \varupdate{}{c}{m}$
and since $\beta \eqvI[\IofG] \beta_0$ we may use analogous reasoning to the $\beta'=\epsilon$ case to obtain that
$\beta \parallel \Pi_2 \ChanPar \Gamma \ominus \varupdate{}{c}{m}\to[label=l',LTS=\calP,*] \beta'' \parallel \Pi_2 \parallel 
\Pi(\vec{\lambda}) \ChanPar \Gamma \ominus \varupdate{}{c}{m} \oplus \Gamma(\vec{\lambda})$
such that $\beta'' \parallel \Pi_2 \parallel \Pi(\vec{\lambda}) \eqvI[\IofG] \beta''_0 \parallel \Pi_1 \parallel \Pi(\vec{\lambda})$ which is what we wanted to prove.
\end{itemize}
This concludes the proof.

\end{proof}


\begin{customproposition}[\ref{prop:acps:apcps:interreduction}]
Suppose $\calP$ is an ACPS in normal form and $\mathpzc{Q} = (\calP, \Pi_0\ChanPar\Gamma_0, \Pi\ChanPar\Gamma)$ a simple coverability query. Then, $\mathpzc{Q}$ is a yes-instance, if and only if, $\mathpzc{Q}'= (\GofP, \Pi_0\ChanPar\Gamma_0, \Pi\ChanPar\Gamma)$ is a yes-instance.
Hence simple coverability, boundedness and termination for ACPS and APCPS poly\-no\-mi\-al-time inter-reduce. A simple APCPS query $(\calG,\Pi_0 \ChanPar \Gamma_0,\Pi \ChanPar \Gamma)$ satisfies: $\pi \eqvI[{\IofG[\calG]}] A \in \NonT$ for all $\pi$ in $\Pi$ and $\Pi_0$.
Further $\calP$ is K-shaped from $\Pi_0\ChanPar\Gamma_0$ if, and only if, $\GofP$ is $K$-shaped from $\Pi_0\ChanPar\Gamma_0$.
\end{customproposition}
\begin{proof}
First suppose
$\mathpzc{Q} = (\calP,\Pi_0 \ChanPar \Gamma_0,\Pi \ChanPar \Gamma)$ is a simple coverability query for ACPS in normal form, $\Pi = A_1 \parallel \cdots \parallel A_n$.
W.l.o.g. we may assume that all $A_i$ are non-commutative wrt to $\IofG$ 
(otherwise we may introduce a fresh channel $c_{\text{cov}}$, message $m_{\text{cov}}$, non-terminals $A'_i$ 
and local states $0,\ldots,K,K+1,\infty$, counting the number of non-commutative non-terminals on stack --- widening at $\geq K+2$, 
rules $(k,A_i) \to[effect=\epsilon] A'_i$ for $k \leq K+1$, 
$A'_i \to[effect=\rec{c_{\text{cov}}}{m_{\text{cov}}}] \epsilon$ and change $\Pi$ to
$\Pi = A'_1 \parallel \cdots \parallel A'_n$; applying polynomial-time normal form reduction afterwards; 
this transformation may increase the shaped constraint from $K$-shaped to $K+1$-shaped, since $(\infty,A_i)$ will clearly be commutative, and is also polynomial).

Defining then $\mathpzc{Q}' = \commabr{(\GofP,\Pi_0 \ChanPar \Gamma_0,\Pi \ChanPar \Gamma)}$ is then clearly a simple coverability query and can be polynomial-time computed. 

Since $\GofP[-]$ is clearly a bijection and both $\GofP[-]$ and $\GofP[-]^{-1}$ are polynomial time computable we may simply show that
$\mathpzc{Q}$ is a yes-instance if, and only if, $\mathpzc{Q}'$ is a yes-instance.

Suppose $\mathpzc{Q}$' is a yes-instance then 
$\Pi_0 \ChanPar \Gamma_0 \to[TS=\GofP,*] \Pi' \ChanPar \Gamma'$ such that
$\Pi \ChanPar \Gamma \leqAPCPS \Pi' \ChanPar \Gamma'$.
Let us maximally split this path so that
$\Pi_0 \ChanPar \Gamma_0 
	\to[label=\Pi_1 \ChanPar \Gamma_1,LTS=\GofP,*] \Pi_1 \ChanPar \Gamma_1 
	\to[label=\Pi_2 \ChanPar \Gamma_2,TS=\GofP,*] \cdots 
	\to[label=\Pi_n \ChanPar \Gamma_n,LTS=\GofP,*] \Pi_n \ChanPar \Gamma_n 
	\to[label=\epsilon,LTS=\GofP,*] \Pi' \ChanPar \Gamma'$ using the third labelling of this section.
Since $\mathpzc{Q}'$ is simple we know that $\Pi = A_1 \parallel \cdots \parallel A_n$
and thus $\Pi' \eqvI[{\IofG}] A_1\beta_1 \parallel \cdots \parallel A_n\beta_n \parallel \Pi''$.
Further since the path above is a maximal split we may assume that 
$\Pi_n \eqvI[{\IofG}] A_1\beta_1 \parallel \cdots \parallel A_n\beta_n \parallel \Pi'''$
and $\Pi''' \ChanPar \Gamma_n \to[label=\epsilon,LTS=\GofP,*] \Pi'' \ChanPar \Gamma'$ where the latter may not be split into two paths exposing a label.
Hence we may conclude that
$\Pi''' = B_1\,\beta^{\dagger}_1\,\beta^{\dagger'}_1 \parallel \cdots \parallel B_k\,\beta^{\dagger}_k\beta^{\dagger'}_k \parallel \Pi''''$ 
where $B_1,\ldots,B_k \in \NComN$, $\beta^{\dagger}_1,\ldots,\beta^{\dagger}_k \in \ComN^*$ and
$\beta^{\dagger'}_1,\ldots,\beta^{\dagger'}_k \in (\NComN\ComN^*)^*$.
that after a one rule step for each process $1$ up to $k$
we can go to
$\Pi''' \ChanPar \Gamma_n \to[label=\epsilon,LTS=\GofP,*] \beta^{\dagger''}_1\,\beta^{\dagger'}_1 \parallel \cdots \parallel \beta^{\dagger''}_k\beta^{\dagger'}_k \parallel \Pi'''' \ChanPar \Gamma^{\dagger} \to[label=\epsilon,LTS=\GofP,*] \Pi'' \ChanPar \Gamma'$
where $\beta^{\dagger''}_1,\ldots,\beta^{\dagger''}_k \in \ComN^*$, 
$\Pi'' = \Pi'''' \parallel \Pi(\vec{\lambda}^{\dagger}_1\cdots\vec{\lambda}^{\dagger}_k)$,
$\Gamma' = \Gamma^{\dagger} \oplus \Gamma(\vec{\lambda}^{\dagger}_1\cdots\vec{\lambda}^{\dagger}_k)$
for some $\vec{\lambda}^{\dagger}_i\vec{\lambda}^{\dagger'}_i \in \calL_{\GofP}(\beta^{\dagger''}_i)$ for each $i$.

We will now show that we can follow these transitions with $\calP$.
First
Lemma~\ref{app:lemma:acps:apcps:wbisimulation} tells us that we find the following path
$\Pi_0 \ChanPar \Gamma_0 
	\to[label=\Pi_1 \ChanPar \Gamma_1,LTS=\calP,*] \bar{\Pi}_1 \ChanPar \Gamma_1 
	\to[label=\Pi_2 \ChanPar \Gamma_2,TS=\calP,*] \cdots 
	\to[label=\Pi_n \ChanPar \Gamma_n,LTS=\calP,*] \bar{\Pi}_n \ChanPar \Gamma_n$ 
such that $\Pi_i \eqvI[{\IofG}] \bar{\Pi}_i$ for all $i \in \range{n}$.
Now $\bar{\Pi}_n = A_1\bar{\beta}_1 \parallel \cdots \parallel A_n\bar{\beta}_n \parallel \bar{\Pi}'''$
and $\Pi''' = B_1\,\bar{\beta}^{\dagger}_1\,\bar{\beta}^{\dagger'}_1 \parallel \cdots \parallel B_k\,\bar{\beta}^{\dagger}_k\bar{\beta}^{\dagger'}_k \parallel \bar{\Pi}''''$
since $A_1, \ldots, A_n,B_1,\ldots,B_k \in \NComN$. Now since $B_1,\ldots,B_k \in \NComN$ it is easy to see that we can follow the $\GofP$-path one rule step per process in:
$\bar{\Pi}''' \ChanPar \Gamma_n \to[label=\epsilon,LTS=\GofP,*] \bar{\beta}^{\dagger''}_1\,\bar{\beta}^{\dagger'}_1 \parallel \cdots \parallel \bar{\beta}^{\dagger''}_k\bar{\beta}^{\dagger'}_k \parallel \bar{\Pi'}''' \ChanPar \Gamma^{\dagger}$
where $\bar{\beta}^{\dagger''}_1, \ldots, \bar{\beta}^{\dagger''}_k \in \ComN^*$ and
$\bar{\beta}^{\dagger''}_i \eqvI[{\IofG}] \beta^{\dagger''}_i$ for all $i \in \range{k}$.
We can then pick $\bar{\vec{\lambda}}^{\dagger}_i \in \calL_{\calP}(\bar{\beta}^{\dagger''}_i)$ such that
$\bar{\vec{\lambda}}^{\dagger}_i \eqvI[{\IofG}] \vec{\lambda}^{\dagger}_i\vec{\lambda}^{\dagger'}_i$.
Lemma~\ref{app:apcps:acps:com:termination} then tells us we can reach 
$\bar{\Pi}''' \ChanPar \Gamma_n \to[label=\epsilon,LTS=\GofP,*] \bar{\Pi}^\dagger \parallel \bar{\Pi'}'''  \parallel \Pi(\bar{\vec{\lambda}}^{\dagger}_1\cdots\bar{\vec{\lambda}}^{\dagger}_k) \ChanPar \Gamma^{\dagger} \oplus \Gamma^{\dagger} \oplus \Gamma(\bar{\vec{\lambda}}^{\dagger}_1\cdots\bar{\vec{\lambda}}^{\dagger}_k)$
where we write $\bar{\Pi}^\dagger = \bar{\beta}^{\dagger'}_1 \parallel \cdots \parallel \bar{\beta}^{\dagger'}_k$.
We note that $\Gamma \leq \Gamma(\vec{\lambda}^{\dagger}_1\cdots\vec{\lambda}^{\dagger}_k) \leq \Gamma(\bar{\vec{\lambda}}^{\dagger}_1\cdots\bar{\vec{\lambda}}^{\dagger}_k)$.
We can now use these paths to extend our $\calP$ path to a covering configuration:
$\Pi_0 \ChanPar \Gamma_0 
	\to[LTS=\calP,*]  \bar{\Pi}_n \ChanPar \Gamma_n
	\to[LTS=\calP,*] A_1\bar{\beta}_1 \parallel \cdots \parallel A_n\bar{\beta}_n \parallel \bar{\Pi}^\dagger \parallel \bar{\Pi'}'''  \parallel \Pi(\bar{\vec{\lambda}}^{\dagger}_1\cdots\bar{\vec{\lambda}}^{\dagger}_k) \ChanPar \Gamma^{\dagger} \oplus \Gamma^{\dagger} \oplus \Gamma(\bar{\vec{\lambda}}^{\dagger}_1\cdots\bar{\vec{\lambda}}^{\dagger}_k)$
which clearly implies that 
$\Pi \ChanPar \Gamma$ is coverable in $\calP$ from  $\Pi_0 \ChanPar \Gamma_0$
Hence $\mathpzc{Q}$ is a yes-instance.

Conversely, suppose $\mathpzc{Q}$ is a yes-instance. Then
$\Pi_0 \ChanPar \Gamma_0 \to[TS=\calP,*] \Pi' \ChanPar \Gamma'$ such that
$\Pi \ChanPar \Gamma \leqACPS \Pi' \ChanPar \Gamma'$. Since $\mathpzc{Q}$ is a simple query
we know that $\Pi = A_1 \parallel \cdots \parallel A_n$ and hence we may conclude that
$\Pi' = A_1\,\beta_1 \parallel \cdots \parallel A_n\,\beta_n \parallel \Pi''$.
Lemma~\ref{app:apcps:simulates:acps} then allows us to conclude that
$\Pi_0 \ChanPar \Gamma_0 \to[TS=\GofP,*] \Pi'_0 \ChanPar \Gamma'$
where $\Pi'_0 \eqvI[{\IofG}] \Pi'$ i.e.~$\Pi'_0 \eqvI[{\IofG}] A_1\,\beta_1 \parallel \cdots \parallel A_n\,\beta_n \parallel \Pi''$ and thus clearly $\Pi \ChanPar \Gamma \leqAPCPS \Pi'_0 \ChanPar \Gamma'$ which implies 
$\mathpzc{Q}$' is a yes-instance.

For the second claim, 
we will prove that there is a reachable $K$-shaped configuration from $\Pi_0 \ChanPar \Gamma_0$ in $\calP$ if, and only if, there is a reachable $K$-shaped configuration from $\Pi_0 \ChanPar \Gamma_0$ in $\GofP$.

Suppose then that $\Pi_0 \ChanPar \Gamma_0 \to[TS=\calP,*] \Pi \ChanPar \Gamma$ and
that $\Pi \ChanPar \Gamma$ is $K$-shaped. An application of Lemma~\ref{app:apcps:simulates:acps} then yields that
$\Pi_0 \ChanPar \Gamma_0 \to[TS=\GofP,*] \Pi' \ChanPar \Gamma$
where $\Pi' \eqvI[{\IofG}] \Pi$ which clearly implies that $\Pi'$ is $K$-shaped as processes of $\Pi'$ are only (commutative) permutations of processes in $\Pi$.
Conversely, suppose $\Pi_0 \ChanPar \Gamma_0 \to[TS=\GofP,*] \Pi \ChanPar \Gamma$ and $\Pi \ChanPar \Gamma$ is $K$-shaped.
As before let us maximally split this path so that
$\Pi_0 \ChanPar \Gamma_0 
	\to[label=\Pi_1 \ChanPar \Gamma_1,LTS=\GofP,*] \Pi_1 \ChanPar \Gamma_1 
	\to[label=\Pi_2 \ChanPar \Gamma_2,TS=\GofP,*] \cdots 
	\to[label=\Pi_n \ChanPar \Gamma_n,LTS=\GofP,*] \Pi_n \ChanPar \Gamma_n 
	\to[label=\epsilon,LTS=\GofP,*] \Pi \ChanPar \Gamma$ using the third labelling of this section.
Now $\Pi = \beta_1\beta'_1 \parallel \cdots \parallel \beta_n\beta'_n$ with $\beta_i \in \ComN^*$ and
$\beta'_i \in (\NComN\ComN^*)^*$.
Using Lemma~\ref{app:apcps:acps:com:termination} we know that we can extend the path above to
$\Pi_0 \ChanPar \Gamma_0 \to[LTS=\GofP,*] \Pi \ChanPar \Gamma \to[LTS=\GofP,*] \beta'_1 \parallel \cdots \parallel \beta'_n \ChanPar \Gamma' =: \Pi' \ChanPar \Gamma'$.
Clearly $beta'_1 \parallel \cdots \parallel \beta'_n \in \M[\NComN\NonT^*]$ and since $\Pi \ChanPar \Gamma$ 
, and we have only ``executed off'' all commutative $\beta_i$'s, it is easy to see that $\Pi' \ChanPar \Gamma'$ is also $K$-shaped.
Since $beta'_1 \parallel \cdots \parallel \beta'_n \in \M[\NComN\NonT^*]$ we can invoke Lemma~\ref{app:lemma:acps:apcps:wbisimulation}
to see $\Pi_0 \ChanPar \Gamma_0 \to[LTS=\calP,*] \bar{\Pi}' \ChanPar \Gamma'$ 
such that $\bar{\Pi}' \eqvI[{\IofG}] \Pi'$. Hence as above we may conclude that $\bar{\Pi}' \ChanPar \Gamma'$ is $K$-shaped.

Hence we can conclude that there is a reachable $K$-shaped configuration from $\Pi_0 \ChanPar \Gamma_0$ in $\calP$ if, and only if, there is a reachable $K$-shaped configuration from $\Pi_0 \ChanPar \Gamma_0$ in $\GofP$.
From which we can deduce that $\calP$ is $K$-shaped if, and only if, $\GofP$ is $K$-shaped.
\end{proof}

\subsection{Proofs for Section \ref{sec:NNCT}}

\begin{customlemma}[\ref{lem:config:wqo}]
$(\commabr{\Minner,\leqMinner})$ and $(\commabr{\Configuration,\leqconfig})$ are WQO.
\end{customlemma}
\begin{proof}
Let for all $P \subseteq \Pinner$ define $M(P)$ to be set of complex tokens that exactly contain coloured tokens only of colours not in $P$, i.e.~$M(P) = \set{m \in \Minner : \forall p \in P. m(p) = \emptyset, \forall p' \notin p'. m(p') \neq \emptyset }$.
It should be clear that for each $m \in \Minner$ there is a unique $P \subseteq \Pinner$ such that $m \in M(P)$.
Hence we can see that $\Minner$ is isomorphic to the \emph{finite} disjoint union $\sum_{P \subseteq \Pinner} M(P)$.
Let $P \subseteq \Psimple$ we can see that $M(P)$ is isomorphic to $\Pi_{p \in P} \set{\emptyset} \times \Pi_{p \notin P}{\M{\set{\bullet}}}$ and so clearly $\leq_{M(P)} \is \Pi_{p \in P} \mathord{=} \times \Pi_{p \notin P} \subseteq_{\M{\set{\bullet}}}$ is a WQO (eliding the isomorphism) for $M(P)$.
Hence $\sum_{P \subseteq \Pinner} \leq_{M(P)}$ is a WQO for $\sum_{P \subseteq \Pinner} M(P)$ and thus for $\Minner$
(eliding the isomorphism).

Suppose $m, m' \in \Minner$ such that $m \leqMinner m'$. Let $P = \set{p : m(p) = \emptyset}$
since $m \leqMinner m'$ it is easy to see that $P = \set{p : m'(p) = \emptyset}$. Hence $m, m' \in M(P)$
and we also know that for all $p \notin P$ we have $0 < |m(p)| \leq |m(p')|$, i.e.~$m(p) \subseteq_{\M[\set{\bullet}]} m'(p)$, hence $m \leq_{M(P)} m'$ and thus $m \leq_{\sum_{P \subseteq \Pinner} M(P)} m'$.
In the opposite direction suppose $m, m'$ are such $m \leq_{\sum_{P \subseteq \Pinner} M(P)} m'$ then
both $m,m' \in M(P)$ for some $P \subseteq \Pinner$ and $m \leq_{M(P)} m'$ which means
that for all $p \in P$ it is the case that $0 = |m(p)| = |m'(p)| = |\emptyset|$ and for all
$p \notin P$ we have both $m(p), m'(p) \neq \emptyset$ and $m(p) \subseteq_{\M[\set{\bullet}]} m'(p)$,
i.e.~$0 < |m(p)| \leq |m'(p)|$ and hence $m \leqMinner m'$.
$(\commabr{\Minner,\leqMinner})$ is then isomorphic to a WQO and hence is a WQO.

For $(\commabr{\Configuration,\leqconfig})$.
It is clear that $\paren{\M[\bullet],\subseteq_{\M[\bullet]}}$ is a WQO; since $\Psimple$ is finite, Dickson's lemma tells us that $\paren{\Msimple,\leq_{\Msimple}}$ is a WQO where $\leq_{\Msimple}$ is the extension from $\subseteq_{\M[\bullet]}$ to 
$\subseteq_{\Psimple \to \M[\bullet]}$. 
Further since multi sets over a WQO are again a WQO, $\commabr{\varparen{\M[{\Minner}], \subseteq_{\M[\Minner]}}}$ is a WQO and so, we can infer that $(\commabr{\Configuration,\leqconfig})$ is a WQO which concludes the proof.
\end{proof}

\begin{lemma}\label{lem:oplus:monotone}
$\oplus$ is monotone in both arguments with respect to $\leqconfig$ and $\leqMinner$.
\end{lemma}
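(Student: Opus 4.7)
The plan is to verify monotonicity of $\oplus$ in two layers: first at the level of coloured markings under $\leqMinner$, and then at the configuration level under $\leqconfig$, which decomposes cleanly into a simple-place component and a complex-place component built from the first step.

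For $\leqMinner$, the slightly unusual point is that the relation is not ordinary multiset inclusion: it requires that $m(p^\I)$ and $m'(p^\I)$ either be \emph{simultaneously} zero or both nonzero. So I would fix $m_1 \leqMinner m_1'$ and $m_2 \leqMinner m_2'$, fix an arbitrary $p^\I \in \Pinner$, and perform a four-way case split on whether each of $m_i(p^\I)$ is zero. In each case either both sides of the target inequality evaluate to zero (if all four are zero), or both are strictly positive with the desired numerical inequality obtained by summing the pointwise inequalities supplied by the hypotheses. This establishes $m_1 \oplus m_2 \leqMinner m_1' \oplus m_2'$.

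For $\leqconfig$, I would unfold the definition place-by-place. On simple places $p \in \Psimple$, we just need $\leq_{\M[\{\bullet\}]}$ monotonicity under multiset union, which is immediate. On complex places $p' \in \Pcomplex$, the hypothesis $s_i(p') \leq_{\M[\Minner]} s_i'(p')$ ($i \in \{1,2\}$) gives, by definition of the multiset extension of $\leqMinner$, injections $h_i$ from the index set of $s_i(p')$ into the index set of $s_i'(p')$ witnessing the order pointwise via $\leqMinner$. Combining $h_1$ and $h_2$ on the disjoint index sets of $s_1(p') \oplus s_2(p')$ and $s_1'(p') \oplus s_2'(p')$ yields a single injection witnessing $(s_1 \oplus s_2)(p') \leq_{\M[\Minner]} (s_1' \oplus s_2')(p')$; each individual pointwise comparison is still a valid $\leqMinner$ instance, so no further work is needed.

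I do not expect any real obstacle here: the only subtlety is keeping track of the zero-preservation clause in the definition of $\leqMinner$ during the case analysis of the first step, and the configuration-level argument is a routine pointwise lifting combined with the standard way of combining two multiset-extension witnesses across a disjoint union of index sets. The proof is therefore short and essentially combinatorial.
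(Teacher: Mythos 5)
Your proposal is correct and follows essentially the same route as the paper's proof: a zero/nonzero case analysis per colour for $\leqMinner$ (the paper splits on whether $(m\oplus m')(p^{\I})$ is empty, which collapses your four cases into two but is the same argument), and for $\leqconfig$ a pointwise count comparison on simple places together with combining the two multiset-extension injections into one on each complex place. No gaps.
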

\begin{proof}
Suppose we have $m,m',m_0,m'_0 \in \Minner$ such that $m \leqMinner m_0$ and $m' \leqMinner m'_0$.
Let $p \in \Pinner$
If $|(m \oplus m')(p)| = 0$ we can conclude that $|m(p)| = |m'(p)| = 0$ then $\leqMinner$ guarantees
that $|m(p)| = |m_0(p)| = 0$ and $|m'(p)| = |m'_0(p)| = 0$ and hence $|(m_0 \oplus m'_0)(p)| = 0$.

Otherwise either $|m(p)|$ or $|m'(p)| > 0$. W.l.o.g. assume $|m(p)| > 0$ then we know that 
$|m(p)| \leq_{\Minner} |m_0(p)|$
and $|m_0(p)| > \vec{0}$. Further we can easily see that $|m'(p)| \leq |m'_0(p)|$.
Hence we can see that $0 < |(m \oplus m')(p)| \leq |(m_0 \oplus m'_0)(p)|$. 
Hence we can deduce that $m \oplus m' \leqMinner m_0 \oplus m'_0$ which 
is what we want to prove.

Suppose we have $s,s',s_0,s'_0 \in \Configuration$ and $s \leqconfig s_0$ and $s' \leqconfig s'_0$.
Let $p \in \Psimple$ then clearly
$$|(s \oplus s')(p)| = |s(p)|+|s'(p)| \leq |s_0(p)|+|s'_0(p)| = |(s_0 \oplus s'_0)(p)|.$$
Further let $p' \in \Pcomplex$ then
\begin{align*}
(s \oplus s')(p') &= \mset{m_1,\ldots,m_n,m'_1,\ldots,m'_{n'}} \text{ and }\\
(s_0 \oplus s'_0)(p') &= \mset{M_1,\ldots,M_N,M'_1,\ldots,M'_{N'}}
\end{align*}
where we may assume:
\begin{align*}
s(p') &= \mset{m_1,\ldots,m_n}, &
s'(p') &= \mset{m'_1,\ldots,m'_{n'}},\\
s_0(p') &= \mset{M_1,\ldots,M_N} \text{ and } &
s'_0(p') &= \mset{M'_1,\ldots,M'_{N'}}.
\end{align*}
Further since $s \leqconfig s_0$ and $s' \leqconfig s'_0$ we know that
$n \leq N$ and $n' \leq N'$ and for all $i \in \range{n}$ we have $m_i \leqMinner M_i$
and $m'_j \leqMinner M'_j$ for all $j \in \range{n'}$.
This gives us an injection that pairs up $m_i$ with $M_i$ and $m'_j$ with $M'_j$ for 
$i \in \range{n}$ and $j \in \range{n'}$. We can thus conclude that
$(s \oplus s')(p') \leq_{\M[\Minner]} (s_0 \oplus s'_0)(p')$ which of course
implies
$s \oplus s' \leqconfig s_0 \oplus s'_0$.
\end{proof}

\begin{lemma}\label{lem:ominus:monotone:first}
Suppose we have $s,s_0,s' \in \Configuration$, $s \leqconfig s_0$ and $s'(p)(m) = 0$ if both $p \in \Pcomplex$ and $m \neq \vec{0}$ then $s \ominus s' \leq s_0 \ominus s'$.
\end{lemma}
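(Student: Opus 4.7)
The plan is to argue pointwise: for each place $p$, show $(s \ominus s')(p) \leq (s_0 \ominus s')(p)$ in the appropriate order on the corresponding multiset type, and then assemble these into $s \ominus s' \leqconfig s_0 \ominus s'$. The simple places and the complex places require different arguments, with the latter being where the hypothesis on $s'$ is genuinely needed.

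For $p \in \Psimple$, everything reduces to ordinary multiset subtraction over $\set{\bullet}$. From $s \leqconfig s_0$ we have $s(p) \leq_{\M[\set{\bullet}]} s_0(p)$, and since the subtractions are assumed well defined, monotonicity of $\ominus$ on multisets over a fixed set gives the required inequality. This case is routine.

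The heart of the argument is the case $p' \in \Pcomplex$. By hypothesis on $s'$, we have $s'(p') = \mset{\vec{0}^k}$ for some $k \geq 0$. The key observation is that $\vec{0}$ is isolated in $(\Minner, \leqMinner)$: unfolding the definition, $m \leqMinner \vec{0}$ (or $\vec{0} \leqMinner m$) forces $|m(\enumelem{p^{\I}}{j})| = 0$ for every colour, so $m = \vec{0}$. From $s \leqconfig s_0$ we obtain an injection $h$ witnessing $s(p') \leq_{\M[\Minner]} s_0(p')$, and by the isolation of $\vec{0}$ the injection $h$ must send each $\vec{0}$ of $s(p')$ to a $\vec{0}$ of $s_0(p')$ and each non-$\vec{0}$ token to a non-$\vec{0}$ token. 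Writing $s(p') = \mset{\vec{0}^a} \oplus M$ and $s_0(p') = \mset{\vec{0}^A} \oplus M_0$ with $M, M_0$ containing no $\vec{0}$-tokens, this yields $a \leq A$ and $M \leq_{\M[\Minner]} M_0$.

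Since $s \ominus s'$ is well defined, we must have $a \geq k$, and then $A \geq k$ follows from $a \leq A$. Subtracting:
\[
(s \ominus s')(p') = \mset{\vec{0}^{\,a-k}} \oplus M, \qquad (s_0 \ominus s')(p') = \mset{\vec{0}^{\,A-k}} \oplus M_0.
\]
The trivial injection of $\vec{0}$'s (possible since $a-k \leq A-k$) together with the injection witnessing $M \leq_{\M[\Minner]} M_0$ gives $(s \ominus s')(p') \leq_{\M[\Minner]} (s_0 \ominus s')(p')$, as needed. Combining the simple and complex cases yields $s \ominus s' \leqconfig s_0 \ominus s'$.

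The only subtle point—and the step I expect to warrant the most care—is the use of the refined order $\leqMinner$ rather than plain multiset inclusion: it is precisely the fact that $\vec{0}$ is $\leqMinner$-incomparable to every non-$\vec{0}$ complex token that lets us split the witnessing injection and drop the $k$ ejected empty tokens on both sides simultaneously. Without this feature of $\leqMinner$ the lemma would fail, since a $\vec{0}$ in $s(p')$ could otherwise be matched to a non-trivial complex token in $s_0(p')$ and the cancellation would not descend.
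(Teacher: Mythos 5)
Your proof is correct and follows essentially the same route as the paper's: treat simple places by counting, and for complex places split each multiset into its $\vec{0}$-tokens and non-$\vec{0}$ tokens, using that $s'$ only removes empty complex tokens; the paper's (terser) proof relies implicitly on the same key fact you make explicit, namely that $\vec{0}$ is $\leqMinner$-comparable only to itself, so the witnessing injection respects this split. No gaps.
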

\begin{proof}
Let $p \in \Psimple$ then clearly
$$|(s \ominus s')(p)| = |s(p)|-|s'(p)| \leq |s_0(p)|-|s'(p)| = |(s_0 \ominus s')(p)|.$$
Further let $p' \in \Pcomplex$ then
\begin{align*}
(s \ominus s')(p')(\vec{0}) &= s(p')(\vec{0}) - s'(p')(\vec{0}) \\
							&\leq s_0(p')(\vec{0}) - s'(p')(\vec{0}) \\
							&= (s_0 \ominus s')(p')(\vec{0})\\
(s \ominus s')(p')\restriction &\paren{\M[\Minner] \setminus \set{\vec{0}}} \\
						    &= s(p')\restriction \paren{\M[\Minner] \setminus \set{\vec{0}}} \\
							&\leq s_0(p')\restriction \paren{\M[\Minner] \setminus \set{\vec{0}}} \\
							&= (s_0 \ominus s')(p')\restriction \paren{\M[\Minner] \setminus \set{\vec{0}}} \\
\end{align*}
Hence $s \oplus s' \leqconfig s_0 \oplus s'_0$.
\end{proof}

\begin{lemma}\label{lem:ominus:monotone:second}
Let $p \in \Pcomplex$ and $s,s' \in \Config$.
Suppose $m,m' \in \Minner$ such that $m' = \min\set{m_0 \in s'(p) : m \leqMinner m_0}$.
If $s \oplus \update{}{p}{m} \leqconfig s' \oplus \update{}{p}{m'}$ 
then $s \leqconfig s'$.
\end{lemma}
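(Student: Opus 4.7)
The plan is to reduce the inequality on places other than $p$ trivially and to do the real work at $p$ by a rerouting argument on the multiset of complex tokens. First, observe that for any place $q$ distinct from $p$ the two added tokens do not contribute, so
\[
(s \oplus \update{}{p}{\mset{m}})(q) = s(q), \qquad (s' \oplus \update{}{p}{\mset{m'}})(q) = s'(q),
\]
and hence the hypothesis immediately yields $s(q) \leq s'(q)$, both for $q \in \Psimple$ and for $q \in \Pcomplex \setminus \set{p}$.

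What remains is to show $s(p) \leq_{\M[\Minner]} s'(p)$. By definition of $\leq_{\M[\Minner]}$, the assumption $s(p) \oplus \mset{m} \leq_{\M[\Minner]} s'(p) \oplus \mset{m'}$ gives an injection $h$ from (the indices of) $s(p) \oplus \mset{m}$ into $s'(p) \oplus \mset{m'}$ such that $x \leqMinner h(x)$ for every $x$ in the domain. The goal is to modify $h$ into an injection $h' : s(p) \hookrightarrow s'(p)$ preserving $\leqMinner$-domination.

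I would split on where $h$ sends the extra token $m$. If $h(m) = m'$ then the restriction of $h$ to $s(p)$ is already an injection into $s'(p)$ with $x \leqMinner h(x)$, and we are done. Otherwise $h(m) = m_0$ for some $m_0 \in s'(p)$, so in particular $m \leqMinner m_0$. If moreover the extra copy $m'$ is not in the image of $h$, the restriction of $h$ to $s(p)$ again avoids $m'$ and lands in $s'(p)$, and we are done. The remaining case is that some $\tilde m \in s(p)$ satisfies $h(\tilde m) = m'$, whence $\tilde m \leqMinner m'$.

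This last case is the main obstacle, and is precisely where the hypothesis on $m'$ is used. Since $m_0 \in s'(p)$ and $m \leqMinner m_0$, we have $m_0 \in \set{m_0 \in s'(p) : m \leqMinner m_0}$, so minimality of $m'$ in that set yields $m' \leqMinner m_0$; by transitivity, $\tilde m \leqMinner m_0$. Define $h'(\tilde m) := m_0$ and $h'(x) := h(x)$ for all other $x \in s(p)$. Injectivity is preserved because the only preimage of $m_0$ under $h$ was the extra token $m \notin s(p)$, and $\tilde m$ is no longer mapped to $m'$; domination is preserved by the computation above. Thus $h'$ witnesses $s(p) \leq_{\M[\Minner]} s'(p)$, completing the proof that $s \leqconfig s'$.
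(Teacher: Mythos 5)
Your proof is correct and follows essentially the same route as the paper's: handle all places other than $p$ trivially, then at $p$ take the dominating injection and, in the only nontrivial case where some $\tilde m \in s(p)$ is mapped to the extra copy $m'$, reroute $\tilde m$ to $h(m)$ using the minimality of $m'$ (which gives $m' \leqMinner h(m)$) and transitivity of $\leqMinner$. If anything, you are slightly more careful than the paper in explicitly treating the subcase where $m'$ is not in the image of $h$, which the paper's case split glosses over.
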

\begin{proof}
Let $p \in \Psimple$ then clearly
$$|s(p)| = |(s \oplus \update{}{p}{m})(p)| \leq |(s' \oplus \update{}{p}{m'})(p)| = |s'(p)|.$$
Further let $p' \in \Pcomplex$ such that $p \neq p'$ then
\begin{align*}
s(p') &= (s \oplus \update{}{p}{m})(p') \leq_{\M[\Minner]} (s' \oplus \update{}{p}{m})(p')\\
	  &= s'(p').
\end{align*}
Focussing on $p$ we see:
\begin{align*}
s(p) &= \mset{m_1,\ldots,m_n}, &
s'(p) &= \mset{m'_1,\ldots,m'_{n'}},
\end{align*}
where we know that
\begin{align*}
(s \oplus \update{}{p}{m})(p) &= \mset{m_1,\ldots,m_n,m} \text{ and }\\
(s_0 \oplus \update{}{p}{m'})(p) &= \mset{m'_1,\ldots,m_{n'},m'}
\end{align*}
where $n \leq n'$ and there is an bijection $h$ from $\commabr{M \is \varmset{m_1,\ldots,m_n,m}}$ to $M' \is \mset{m'_1,\ldots,m_{n},m'}$.
such that $m^0 \leqMinner h(m)^0$ for all $m^0 \in M$. 
Suppose $h$ pairs up $m$ with $m'$ then clearly $h$ is an injection justifying
$s(p) \leq_{\M[\Minner]} s'(p)$. Otherwise say $h(m_i) = m'$ then since $m \leqMinner h(m)$
we know that $m' \leqMinner h(m)$ since $m'$ is a minimum.
Thus $\update{h}{m_i}{h(m)}$ is an injection justifying $s(p) \leq_{\M[\Minner]} s'(p)$.
We can thus conclude $s \leqconfig s'$.
\end{proof}

\begin{lemma}\label{lem:restriction:monotone}
$\restriction$ is monotone in the first argument with respect to $\leqMinner$.
\end{lemma}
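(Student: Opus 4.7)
The plan is to unfold both definitions and verify the condition of $\leqMinner$ pointwise. Recall that $\leqMinner$ demands, for each $p \in \Pinner$, that either both $|m(p)|$ and $|m'(p)|$ are $0$, or both are positive with $|m(p)| \leq |m'(p)|$; and that $(M \restriction U_0)(p)$ equals $M(p)$ when $p \in U_0$ and is empty otherwise.

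Concretely, I would fix $m, m' \in \Minner$ with $m \leqMinner m'$, fix a subset $P \subseteq \Pinner$, and consider an arbitrary colour $p \in \Pinner$. In the case $p \in P$, the restriction leaves the value unchanged, so $(m \restriction P)(p) = m(p)$ and $(m' \restriction P)(p) = m'(p)$, and the required disjunction follows immediately from the hypothesis $m \leqMinner m'$ instantiated at $p$. In the case $p \notin P$, both $(m \restriction P)(p)$ and $(m' \restriction P)(p)$ are the empty multiset, so $|(m \restriction P)(p)| = 0 = |(m' \restriction P)(p)|$ and the first disjunct of $\leqMinner$'s defining condition holds at $p$.

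Since these two cases cover every $p \in \Pinner$, we conclude $m \restriction P \leqMinner m' \restriction P$, which is what is required. There is no real obstacle here: the lemma is essentially an immediate consequence of the definitions, and its role is to serve as a plumbing lemma in the monotonicity arguments for $\to[TS=\calN]$ (in particular in handling the pieces $m_{P}$ and $m_{\overline{P}}$ that arise in the transfer rule case of \Cref{prop:nnct:wsts}).
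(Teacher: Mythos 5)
Your proof is correct and follows the same route as the paper's: fix $p$, split on whether $p \in P$ (where restriction leaves the values unchanged and the hypothesis at $p$ gives the required disjunction, which the paper merely unfolds into the two subcases $|m(p)|>0$ and $|m(p)|=0$) or $p \notin P$ (where both restricted values are empty), and conclude pointwise. Nothing is missing.
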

\begin{proof}
Let $m, m' \in \Minner$ such that $m \leqMinner m'$ and $P \subseteq \Pinner$.
Suppose $p \in P$ and $0 < m(p)$ then
$0 < |m(p)| = |(m \restriction P)(p)| = |m(p)| \leq |m'(p)| = |(m' \restriction P)(p)|$.
If $p \in P$ and $|m(p)| = 0$ then
$0 = |m(p)| = |(m \restriction P)(p)| = |m(p)| = |m'(p)| = |(m' \restriction P)(p)|$.
Otherwise $p' \notin P$ then
$|(m \restriction P)(p')| = 0 = |(m' \restriction P)(p')|$.
Hence clearly $m \restriction P \leqMinner m' \restriction P$.
\end{proof}

\begin{customproposition}[\ref{prop:nnct:wsts}]
Given an NNCT $\calN$, $(\Configuration,\to[TS={\calN}],\leqconfig)$ is a strict WSTS.
\end{customproposition}
\begin{proof}
Let $\calN = (\Psimple,\Pcomplex,\Pinner,\Rules,\colmap)$ and
suppose $s, s' \in \Configuration$ such that $s \lessconfig s'$. Further suppose we can make the transition
$s \to[rule=r,TS=\calN] t$ using rule $r \in \Rules$.
Let us perform a case analysis on $r$
\begin{itemize}
\item \emph{Case: } $r = (I,O) \in \SimpleRules$. \newline
Since $r$ is enabled at $s$ we know that $s \ominus I \in \Configuration$.
Lemma~\ref{lem:ominus:monotone:first} then yields that $s \ominus I \leqconfig s' \ominus I$
and hence clearly $r$ is enabled at $s'$.
Thus $s \to[rule=r,TS=\calN] s' \ominus I \oplus O =: t'$.
Since $t = s \ominus I \oplus O$ Lemma~\ref{lem:oplus:monotone} gives us
$t \leqconfig t'$. Since $s \neq s'$ it is clear that $t \neq t'$ and
hence we obtain $t \lessconfig t'$ which is what we want to prove.
\item \emph{Case: } $r = ((p,I),(p',\calc,O)) \in \ComplexRules$. \newline
Since $r$ is enabled at $s$ we know that for some $m \in s(p)$
$s \ominus I \update{}{p}{m} \in \Configuration$ and 
$$t = s \ominus I \update{}{p}{m} \oplus O \oplus \update{}{p'}{m \oplus \calc}.$$
First Lemma~\ref{lem:ominus:monotone:first} then yields that $s \ominus I \leqconfig s' \ominus I$.
Since $s \lessconfig s'$ there exists $m' \in s'(p)$ such that $m \leqMinner m'$;
w.l.o.g. we can assume that $m' = \min\varset{m_0 \in s'(p) : m \leqMinner m_0}$.
Since $I \in \ConfigSimple$ it is also the case that $m' = \min\set{m_0 \in s'(p) \ominus I: m \leqMinner m_0}$.
Lemma~\ref{lem:ominus:monotone:second} then yields that
$s \ominus I \update{}{p}{m} \leqconfig s' \ominus I \update{}{p}{m'}$ hence it is easy to see that
$r$ is enabled at $s'$.
Further 
$$s' \to[rule=r,TS=\calN] s' \ominus I \update{}{p}{m'} \oplus O \oplus \update{}{p'}{m' \oplus \calc} =: t'.$$
Since $m \leqMinner m'$ Lemma~\ref{lem:oplus:monotone} yields 
$m \oplus \calc \leqMinner m' \oplus \calc$.
Hence it is easy to see $\update{}{p'}{m \oplus \calc} \leqconfig \update{}{p}{m' \oplus \calc}$.
Lemma~\ref{lem:oplus:monotone} then gives us that 
$t \leqconfig t'$. 
Since $s \neq s'$ either $s' \ominus I \update{}{p}{m'} \neq s \ominus I \update{}{p}{m}$ 
or $m \neq m'$. Noting this we can see that $t \neq t'$ and
hence $t \lessconfig t'$ which is what we want to prove.
\item \emph{Case: } $r = ((p,I),(p',P, O)) \in \TransferRules$. \newline
Since $r$ is enabled at $s$ we know that for some $m \in s(p)$
$s \ominus I \update{}{p}{m} \in \Configuration$ and 
$$t = s \ominus I \update{}{p}{m} \oplus O \oplus \update{}{p'}{m_{\overline{P}}} \oplus (m_P \compose \colmap^{-1})$$
where $m_P = m \restriction P$ and $m_{\overline{P}} = m \restriction (\Pinner \setminus P)$.
First Lemma~\ref{lem:ominus:monotone:first} then yields that $s \ominus I \leqconfig s' \ominus I$.
Since $s \lessconfig s'$ there exists $m' \in s'(p)$ such that $m \leqMinner m'$;
w.l.o.g. we can assume that $m' = \min\varset{m_0 \in s'(p) : m \leqMinner m_0}$.
Since $I \in \ConfigSimple$ it is also the case that $m' = \min\set{m_0 \in s'(p) \ominus I: m \leqMinner m_0}$.
Lemma~\ref{lem:ominus:monotone:second} then yields that
$s \ominus I \update{}{p}{m} \leqconfig s' \ominus I \update{}{p}{m'}$ hence it is easy to see that
$r$ is enabled at $s'$.
Further 
$$s' \to[rule=r,TS=\calN] s' \ominus I \update{}{p}{m'} \oplus O \oplus \update{}{p'}{m'_{\overline{P}}} \oplus (m'_P \compose \colmap^{-1})  =: t'.$$
where $m_P = m \restriction P$ and $m_{\overline{P}} = m \restriction (\Pinner \setminus P)$.
Since $m \leqMinner m'$ Lemma~\ref{lem:restriction:monotone} yields both
$m_P \leqMinner m'_P$ and $m_{\overline{P}} \leqMinner m'_{\overline{P}}$.
Hence it is easy to see $\update{}{p'}{m_{\overline{P}}} \leqconfig \update{}{p}{m'_{\overline{P}}}$.
Further clearly $|(m_P \compose \colmap^{-1})(p)| \leq |(m_P \compose \colmap^{-1})(p)|$ for all $p \in \Psimple$
and thus $(m_P \compose \colmap^{-1}) \leqconfig (m_P \compose \colmap^{-1})$.
Lemma~\ref{lem:oplus:monotone} then gives us that 
$t \leqconfig t'$. 
Since $s \neq s'$ either $s' \ominus I \update{}{p}{m'} \neq s \ominus I \update{}{p}{m}$ 
or $m \neq m'$. The later implies that either $m_P \neq m'_P$ or $m_{\overline{P}} \neq m'_{\overline{P}}$.
Noting this we can see that $t \neq t'$ and
hence $t \lessconfig t'$ which is what we want to prove.
\end{itemize}
\end{proof}           
\newpage
\newcommand\CacheStack[1]{{\ulcorner #1 \urcorner}}
\begin{customtheorem}[\ref{thm:APCPS-to-NNCT}]
Simple coverability for $K$-shaped APCPS in the alternative semantics \Exptime-time reduces to NNCT coverability.
\end{customtheorem}

\begin{proof}
\makeatletter
\newcommand{\tt@place}[2]{p_{#1}^{\scriptscriptstyle #2}}
\newcommand{\tt@rho}{\colmap}
\newcommand{\tt@msg}[1][]{\mathit{msg}_{#1}}
\newcommand{\tt@chan}[1][]{\mathit{c}_{#1}}
\newcommand{\tt@marking}[2][]{\mathit{s}^{#1}_{#2}}
\newcommand{\tt@bsfun}{\mathit{\calF}}
\newcommand{\tt@bsfun@simple}{\tt@bsfun^{\scriptscriptstyle{\Gamma}}}
\newcommand{\tt@bsfun@proc}{\tt@bsfun^{\scriptscriptstyle{\pi}}}
\newcommand{\tt@bsfun@Pi}{\tt@bsfun^{\scriptscriptstyle{\Pi}}}
\newcommand{\tt@ctoken}[1][]{m}
\newcommand{\tt@ctoken@fun}[1][]{\tt@bsfun^{\scriptscriptstyle{I}}}
\newcommand{\tt@letter}[1][]{e}
\newcommand{\tt@q}[1][]{\$}
\newcommand{\tt@bisim}{\mathit{R}}

\let\oldPPL\PPL
\renewcommand{\PPL}{\varset{A^{\text{cov}}_{i} : i \in \range{n}}}

Fix a $K$-shaped APCPS $\calG = (\Sigma, I, \NonT, \Rules, S)$ from $\Pi_0 \ChanPar \Gamma_0$ where $K \geq 1$ and 
a simple coverability query $(\calG,\Pi_0 \ChanPar \Gamma_0,A^{\text{cov}}_1 \parallel \cdots \parallel A^{\text{cov}}_n \ChanPar \Gamma^{\text{cov}})$. We first define a simulating NNCT 
$\calN = (\Psimple,\Pcomplex,\Pinner,\Rules,\tt@rho)$. 
\begin{itemize}[-]
\item For each $\tt@msg \in \MMsg$ and $\tt@chan \in \Chan$, we introduce a simple place $\tt@place{\tt@chan, \tt@msg}{S}$. 
\item For each $X \in \NonT$, we introduce a simple place $\tt@place{\nu X}{S}$. 
\item For each $0 \leq k \leq K$, $X_k \cdots X_1 \in (\NComN)^\ast$, $\ell_{k+1} \, \ell_k \cdots \ell_1 \in (\PPL \cup \makeset{+,-})^\ast$ and $\tt@q \in \Sigma \cup \set{\epsilon} \cup \NonT$, 
we introduce a complex place $\tt@place{\tt@q', \ell_{k+1} \, X_k \, {\ell_k} \cdots X_1 \, {\ell_1}}{C}$; 
we also introduce a complex place 
$\tt@place{\tt@q', (+)\,  N \, \ell_{k+1} \, X_k \, {\ell_k} \cdots X_1 \, {\ell_1}}{C}$
for each $\tt@q' \in \Sigma \union \set{\epsilon}$ and $N \in \NonT$.
\item We introduce an auxiliary simple place $\tt@place{\mathit{budget}}{\mathit{CCFG}}$ and
for each $X \in \NonT$, we introduce a simple place $\tt@place{X}{\mathit{CCFG}}$;
for each $0 \leq k \leq K$, $X_k \cdots X_1 \in (\NComN)^\ast$, $\ell_{k+1} \, \ell_k \cdots \ell_1 \in (\P[\PPL] \cup \makeset{+})^\ast$ and $\tt@q \in \Sigma \cup \set{\epsilon} \cup \NonT$, 
we introduce a complex place $\tt@place{\tt@q, \ell_{k+1} \, X_k \, {\ell_k} \cdots X_1 \, {\ell_1}}{\mathit{CCFG}}$,
which will be used by $\calN$ to implement a CCFG widget.
\item For each non-terminal $A^{\text{cov}}_i$ in the coverability query where $i \in \range{n}$ $\calN$ has a simple place $\tt@place{A_i}{\mathit{cov}}$.
\item The NNCT $\calN$ will further have four special simple places: $\tt@place{}{\mathit{sim}}$, $\tt@place{}{\mathit{CCFG}}$, $\tt@place{}{\mathit{CCFG}+}$ and $\tt@place{}{\mathit{Query}}$.
\item For each $1 \leq k \leq K+1$ and $\tt@letter \in \ComSigma$, we introduce a colour 
$\tt@place{k, \tt@letter}{I}$.
\item We define a map $\tt@rho : \Pinner \to \Psimple$ by $\tt@place{k,\snd{\tt@chan}{\tt@msg}}{I} \mapsto \tt@place{c,m}{S}$ and 
$\tt@place{k, \nu X}{I} \mapsto \tt@place{\nu X}{S}$.
\end{itemize}
Let us further define three special simple markings 
$\tt@marking{\mathit{sim}} = \varupdate{}{\tt@place{}{\mathit{sim}}}{\mset{\bullet}}$,
$\tt@marking{\mathit{CCFG}} = \update{}{\tt@place{}{\mathit{CCFG}}}{\mset{\bullet}}$,
$\tt@marking{\mathit{CCFG}+} = \update{}{\tt@place{}{\mathit{CCFG}+}}{\mset{\bullet}}$ and
$\tt@marking{\mathit{Query}} = \update{}{\tt@place{}{\mathit{Query}}}{\mset{\bullet}}$.

An APCPS configuration 
$\Pi \ChanPar \Gamma \in \APCPSConfig$ is represented as an NNCT configuration as follows: 
\begin{asparaitem}
\item For each $\tt@chan \in \Chan$ and $\tt@msg \in \MMsg$ place $\tt@place{\tt@chan, \tt@msg}{S}$ contains
precisely one $\bullet$-token for each occurrence of $\tt@msg$ in $\tt@chan$ ---
we can formalise this as a function 
$\tt@bsfun@simple(\Gamma) = \varupdate[\mid \tt@chan \in \Chan, \tt@msg \in \MMsg]%
			{}{\tt@place{\tt@chan, \tt@msg}{S}}%
			{\varmset{\bullet^{\Gamma (\tt@chan) (\tt@msg)}}}$.

\item Let $0 \leq k \leq K+1$. The representation of the state of a process $\pi = \tt@q \, M_{k+1} \, \gamma \in \Pi$ with $\gamma = X_k \, M_k \cdots X_1 \, M_1$ is defined by a case analysis in three cases -- a general case and two edge cases:
\begin{asparaitem}[$\circ$]
\item If either $0 < k \leq K$ or $k = 0$ and it is not the case that both $\tt@q \in \NonT$ and $M_1 = \emptyset$ then
we represent $\pi$ as follows:
\begin{inparaenum}[(i)]
\item for each $e \in M_i$ there is one $\tt@place{i, \tt@letter}{I}$-coloured token in $\tt@ctoken$ 
 where $1 \leq i \leq k+1$ and $\tt@letter \in \ComSigma$
--- or equivalently $\tt@ctoken = \tt@ctoken@fun(M_1,\ldots,M_{k+1})$ where we define
the function $\tt@ctoken@fun(M_1,\ldots,M_{k+1}) = \varupdate[\mid  1 \leq i \leq k+1, \tt@letter \in \discn\ComSigma]{}{\tt@place{i, \tt@letter}{I}}{\varmset{\bullet^{M_i(e)}}}$;
\item for each $1 \leq i \leq k+1$, if the cache $M_i \in \TermCache$ then let $\ell_i = +$; 
otherwise let $\ell_i \in \set{-} \union \varset{A^{\text{cov}}_j \mid j \in \range{n}, \discn M_i(A^{\text{cov}}_j) > 0}$. 
We refer to a possible value of $\ell_i$ as a \emph{character} of $M_i$.
The sequence of $\tt@q$, the non-commutative non-terminals $X_k \cdots X_1$ and a possible choice of characters $\ell_{k+1},\cdots,\ell_1$ of $M_{k+1},\cdots M_1$ is represented as the complex place in which $\tt@ctoken$ located, i.e.~
$\tt@ctoken$ is placed in $\tt@place{\tt@q, \ell_{k+1} \, X_k \, {\ell_k} \cdots X_1 \, {\ell_1}}{C}$.
We formalise the representation of one process as the set of markings
$\tt@bsfun@proc(\tt@q \, M_{k+1} \, X_k \, M_k \cdots X_1 \, M_1) = \discn
\varset{\varupdate{}{\tt@place{\tt@q, \ell_{k+1} \, X_k \, {\ell_k} \cdots X_1 \, {\ell_1}}{C}}{\discn\varmset{\tt@ctoken@fun(M_1,\ldots,M_{k+1})}} \mid \ell_i \text{ character of } \discn M_i, i \in \range{k+1}}$,
where the possible characters of $M_{k+1},\cdots M_1$ can be thought of non-deterministically chosen.
\end{inparaenum}
\item If $k = 0$, $\tt@q \in \NonT$, and $M_1 = \emptyset$ then we may represent $\pi$ in addition to the representation above also as a $\bullet$ token in $\tt@place{\spn{\tt@q}}{S}$ hence the representation of $\pi$
is defined as $\tt@bsfun@proc(\pi) = \varset{\varupdate{}{\tt@place{\spn{\tt@q}}{S}}{\mset{\bullet}}} \union \tt@bsfun@proc(\tt@q\, \emptyset)$ where $\tt@bsfun@proc(\tt@q\, \emptyset)$ is as defined in the general case above.

\item If $k = K+1$ and $\pi = \tt@q \, M_{K+2} \, \gamma \in \Pi$ with $\gamma = X_{K+1} \, M_{K+1} \,\discn X_{K} \, M_{K} \cdots X_1 \, M_1$ then since $\calG$ is a $K$-shaped APCPS it will be the case that $M_{K+2} = \emptyset$ and $X_{K+1} \in \NonT$ and $\tt@q \in \Sigma \union \epsilon$. We notice that any character for $M_{k+2}$ must be $+$ and so 
$\tt@q \, M_{K+2} \, \gamma$ is represented by a complex token $\tt@ctoken = \tt@ctoken@fun(M_1,\ldots,M_{K+2}) \is \tt@ctoken@fun(M_1,\ldots,M_{K+1})$ and with the set of markings
$\tt@bsfun@proc(\tt@q \, M_{K+2} \, \gamma) \discn = 
\varset{\varupdate{}{\tt@place{\tt@q, (+)\, X_{K+1}\, \ell_{K+1} \, X_k \, {\ell_k} \cdots X_1 \, {\ell_1}}{C}}{\mset{\tt@ctoken@fun(M_1,\ldots,M_{K+2})}} \mid \ell_i \text{ character of } M_i, i \in \range{k+1}}$. 
For uniformity we will not treat this special case explicitly in the following but we will note that this special representation applies when $k = K+1$.
\end{asparaitem}
\smallskip

Representing all processes in $\Pi = \pi_1 \parallel \cdots \parallel \pi_n$ can then be formalised as
$$\tt@bsfun@Pi(\Pi) = \set{\Oplus_{i=1}^n s_i \mid s_i \in \tt@bsfun@proc(\pi_i)}.$$
\end{asparaitem}

The representation of $\Pi \ChanPar \Gamma$ is a set of configurations where $\Gamma$ is represented in $\calN$'s simple places by $\tt@bsfun@simple(\Gamma)$ and $\Pi$ is represented in $\calN$'s complex and simple places by a configuration from
$\tt@bsfun@Pi(\Pi)$ together with a $\bullet$-token in $\tt@place{\mathit{sim}}{}$ or formally as:
$\tt@bsfun(\Pi \ChanPar \Gamma) = \set{s \oplus (\tt@bsfun@simple(\Gamma) \oplus \tt@marking{\mathit{sim}}) \mid s \in \tt@bsfun@Pi(\Pi)}$.
We can define the relation $\tt@bisim \subseteq \APCPSConfig \times \P[\Config]$ by
$$\tt@bisim = \set{(\Pi \ChanPar \Gamma,\tt@bsfun(\Pi \ChanPar \Gamma)) \mid \Pi \ChanPar \Gamma \in \APCPSConfig};$$
we will prove in the following that $\tt@bisim$ is a weak bisimulation between a labelled version of
$\paren{\APCPSConfig, \toCM}$ and a labelled version of a ``co-universal powerset lifting'' of  $(\Config,\to[TS=\calN])$.

Let us turn to the implementation of the alternative semantics as defined in the right column of \cref{fig:standard_vs_alternative:semantics} in $\calN$'s rules.
In the following we write $\Xi = X_k \, \ell_k \cdots X_1 \, \ell_1$ to save space. 

The alternative semantics defines transitions of the form $\gamma \parallel \Pi \ChanPar \Gamma \toCM \gamma' \parallel \Pi' \ChanPar \Gamma'$. We will describe $\calN$'s rules by a case analysis on the rule that justifies the transition and relate the forms of $\gamma$, $\gamma'$, $\Pi$, $\Pi'$, $\Gamma$ and $\Gamma'$ to guide the reader's intuition.
\begin{asparaitem}[$\bullet$]
\item \textbf{Rule} \Cref{altrule:interleave:std}: $\Pi = \Pi'$, $\Gamma = \Gamma'$ and $A \to \beta$ is a $\calG$ rule.\newline
We perform case analysis on $\beta$:

\begin{itemize}[-]
\item $\beta = B\,C$ and $C$ non-commutative.\newline
 For each sequence of non-commutative non-terminals and cache characters
 $A \, \ell_{k+1} \, X_k \, \ell_k \cdots X_1 \, \ell_1$ where $k < K$ 
 we introduce a complex rule
  $\commabr{((\tt@place{A, \ell_{k+1} \, \Xi}{C}, \tt@marking{\mathit{sim}}), (\tt@place{B, \mathord{(+)} \, C \, \ell_{k+1} \, \Xi}{C}, \emptyset, \tt@marking{\mathit{sim}}))}$
 that moves a complex token from a place encoding $A\, \ell_{k+1} \, \Xi$ to a place representing $B\, \mathord{(+)} \, C \, \ell_{k+1} \, \Xi$.

\item $\beta = a$ where $a \in \Sigma \union \{\epsilon\}$\newline
For each sequence of non-commutative non-terminals and cache characters
 $A \, \ell_{k+1} \, X_k \, \ell_k \cdots X_1 \, \ell_1$ where $k \leq K$ 
 we introduce a complex rule
$\commabr{((\tt@place{A, \ell_{k+1} \, \Xi}{C}, \tt@marking{\mathit{sim}}), (\tt@place{a, \ell_{k+1} \, \Xi}{C}, \emptyset, \tt@marking{\mathit{sim}}))}$
that moves a complex token from a place encoding
$A\, \ell_{k+1} \, \Xi$ to a place representing $a\, \ell_{k+1} \, \Xi$.
\end{itemize}

\item \textbf{Rule} \Cref{altrule:interleave:com}: $A \rightarrow B\,C$ is a $\calG$-rule and $C$ commutative.\newline
Note that a reduction $C \toSF^* w \in (\ComN \union \ComSigma)^*$ is that of a commutative context-free grammar (CCFG) for which a Petri-net encoding exists \cite{Ganty:2012} by \citeauthor{Ganty:2012} which builds on an earlier encoding \cite{Esparza:1997} by \citeauthor{Esparza:1997}. \citeauthor{Ganty:2012} leverage a recent result 
\cite{EsparzaGKL:2011}: every word of a CCFG has a \emph{bounded-index derivation}. The CCFG widget can thus be augmented with a budget counter that ensures that the Petri-net encoding respects boundedness of index. Termination of such a CCFG computation is signaled by a transition which is only enabled when the budget counter reaches the set budget. 

We make use of a trivially modified CCFG widget \emph{\`a la} \citeauthor{Ganty:2012} to implement
transitions justified by Rule \Cref{altrule:interleave:com}. We will first define 
the rules of the CCFG widget and how it is activated. 

Let us define a few abbreviations. Let $\tt@marking[\mathit{budget}]{b} = \varupdate{}{\tt@place{\mathit{budget}}{\mathit{CCFG}}}{\varmset{\bullet^b}}$
and for each $N \in \ComN$ let $\tt@marking{N} = \varupdate{}{\tt@place{N}{\mathit{CCFG}}}{\varmset{\bullet}}$.

For each sequence of non-commutative non-terminals and cache characters
 $X_k \, \ell_k \cdots X_1 \, \ell_1$ where $k \leq K$ 
 we introduce the following complex rules:
\begin{asparaitem}[$*$]
  \item $\commabr{((\tt@place{A, (+) \, \Xi}{C}, \tt@marking{\mathit{sim}}), (\tt@place{B, \mathord{(+)} \, \Xi}{\mathit{CCFG}}, \emptyset, \tt@marking{\mathit{CCFG}+} \oplus O))}$,
  \item $\commabr{((\tt@place{A, (+) \, \Xi}{C}, \tt@marking{\mathit{sim}}), (\tt@place{B, \mathord{(-)} \, \Xi}{\mathit{CCFG}}, \emptyset, \tt@marking{\mathit{CCFG}} \oplus O))}$,
  \item $\commabr{((\tt@place{A, (-) \, \Xi}{C}, \tt@marking{\mathit{sim}}), (\tt@place{B, \mathord{(-)} \, \Xi}{\mathit{CCFG}}, \emptyset, \tt@marking{\mathit{CCFG}} \oplus O))}$ and
  \item $\commabr{((\tt@place{A, \ell \, \Xi}{C}, \tt@marking{\mathit{sim}}), (\tt@place{B, \ell \, \Xi}{\mathit{CCFG}}, \emptyset, \tt@marking{\mathit{CCFG}} \oplus O))}$ for each $\ell \in \PPL$
\end{asparaitem}
where 
$O = \tt@marking[\mathit{budget}]{|\calN|} \oplus \tt@marking{C}$.
Each of rules activates the simulation of the CCFG widget. The topmost cache character is set according to whether the cache is a $\TermCache$ ($+$), a potential $\MixedCache$ (-) or exposing non-terminal $A^{\text{cov}}_i$ in a $\MixedCache$. 
Further the $\tt@marking{\mathit{CCFG}+}$-``mode'' will enforce that the CCFG widget computes a $\TermCache$
while the $\tt@marking{\mathit{CCFG}}$ -``mode'' allows a $\MixedCache$ and the exposal of a non-terminal.
Further a token is placed in place $\tt@place{C}{\mathit{CCFG}}$ and the budget place is initialised with 
$|\calN|$ $\bullet$-tokens. The CCFG widget maintains the invariant that the number of tokens in the set of places 
$\Union_{N \in \NonT} \tt@place{N}{\mathit{CCFG}}$ plus the contents of 
$\tt@place{}{\mathit{budget}}$ equals $|\calN|+1$.

To exit the simulation of the CCFG widget we add the rules:
\begin{asparaitem}[$*$]
  \item $\commabr{((\tt@place{B, (+) \, \Xi}{\mathit{CCFG}}, I), (\tt@place{B, \mathord{(+)} \, \Xi}{C}, \emptyset, \tt@marking{\mathit{sim}}))}$, 
  \item $\commabr{((\tt@place{B, (-) \, \Xi}{\mathit{CCFG}}, I), (\tt@place{B, \mathord{(+)} \, \Xi}{C}, \emptyset, \tt@marking{\mathit{sim}}))}$ and
  \item $\commabr{((\tt@place{B, \ell \, \Xi}{\mathit{CCFG}}, I), (\tt@place{B, \ell \, \Xi}{C}, \emptyset, \tt@marking{\mathit{sim}}))}$ for each $\ell \in \PPL$
\end{asparaitem}
where 
$I = \tt@marking{\mathit{CCFG}} \oplus \tt@marking[\mathit{budget}]{|\calN|+1}$. 
Since we require that $\tt@place{}{\mathit{budget}}$ contains $|\calN|+1$ $\bullet$-tokens
the invariant tells us that the set of places $\Union_{N \in \NonT} \tt@place{N}{\mathit{CCFG}}$
must be empty when one of the above rules is enabled.
  
We turn to how the CCFG widget implements $\calG$'s commutative rules. 
For each $\calG$-rule $r$ which involves only commutative non-terminals we do a case analysis on $r$:
\begin{enumerate}[(I)]
\item(CCFG:I) $r = X \to Y\, Z$ and $X$, $Y$, $Z$ commutative \newline
We add the complex rules 
\begin{asparaitem}
\item $\commabr{((\tt@place{B, (+) \, \Xi}{\mathit{CCFG}}, \tt@marking{\mathit{CCFG}+} \oplus I), (\tt@place{B, (+) \, \Xi}{\mathit{CCFG}},\emptyset,\tt@marking{\mathit{CCFG}+} \oplus O))}$,
\item $\commabr{((\tt@place{B, (-) \, \Xi}{\mathit{CCFG}}, \tt@marking{\mathit{CCFG}} \oplus I), (\tt@place{B, (-) \, \Xi}{\mathit{CCFG}},\emptyset,\tt@marking{\mathit{CCFG}} \oplus O))}$, and
\item $\commabr{((\tt@place{B, \ell \, \Xi}{\mathit{CCFG}}, \tt@marking{\mathit{CCFG}} \oplus I), (\tt@place{B, \ell \, \Xi}{\mathit{CCFG}},\emptyset,\tt@marking{\mathit{CCFG}} \oplus O))}$ for each $\ell \in \PPL$
where
$I = \tt@marking{X} \oplus \tt@marking[\mathit{budget}]{2}$ and $O = \tt@marking{Y} \oplus \tt@marking{Z} \oplus \tt@marking[\mathit{budget}]{1}$.
Further if $W \in \varset{Y,Z}$ and $W = A^\text{cov}_i$ then we add the complex rule
$\commabr{((\tt@place{B, (-) \, \Xi}{\mathit{CCFG}}, \tt@marking{\mathit{CCFG}} \oplus I), (\tt@place{B, W \, \Xi}{\mathit{CCFG}},\emptyset,\tt@marking{\mathit{CCFG}} \oplus O))}$.
We notice that a non-terminal can only be non-deterministically exposed in $\tt@marking{\mathit{CCFG}}$-``mode'' and not changed
after it has been set.
\end{asparaitem}

\item(CCFG:IV) $r = X \to \tt@letter$, $\tt@letter \in \ComSigma \union \set{\epsilon}$ \newline
We add the complex rules
\begin{asparaenum}
\item $\commabr{((\tt@place{B, (+) \, \Xi}{\mathit{CCFG}},\tt@marking{\mathit{CCFG}+} \oplus I), (\tt@place{B, (+) \, \Xi}{\mathit{CCFG}},\calc,\tt@marking{\mathit{CCFG}+} \oplus O))}$,
\item $\commabr{((\tt@place{B, (-) \, \Xi}{\mathit{CCFG}},\tt@marking{\mathit{CCFG}} \oplus I), (\tt@place{B, (-) \, \Xi}{\mathit{CCFG}},\calc,\tt@marking{\mathit{CCFG}} \oplus O))}$ and
\item $\commabr{((\tt@place{B, \ell \, \Xi}{\mathit{CCFG}},\tt@marking{\mathit{CCFG}} \oplus I), (\tt@place{B, \ell \, \Xi}{\mathit{CCFG}},\calc,\tt@marking{\mathit{CCFG}} \oplus O))}$ for each $\ell \in \PPL$
where
$I = \tt@marking{X}$,
$O = \tt@marking[\mathit{budget}]{1}$ and 
$\calc = \varupdate{}{\tt@place{k+1, \tt@letter}{I}}{\mset{\bullet}}$.
\end{asparaenum}
\item(CCFG:nonterminal:terminal) For each $X \in \ComN$ \newline
We add the simple rule:
$(\tt@marking{\mathit{CCFG}} \oplus I, \tt@marking{\mathit{CCFG}} \oplus O)$ and
where
$I = \tt@marking{X}$ and $O = \tt@marking[\mathit{budget}]{1}$.
\end{enumerate}

The CCFG widget defined above is essentially the same as \citeauthor{Ganty:2012}'s in \cite{Ganty:2012}
with one difference:
our CCFG widget injects for each terminal symbol $\tt@letter \in \ComSigma$
a token of colour $\tt@place{k, \tt@letter}{I}$ into the unique complex token 
located in some $\tt@place{B, \ell \, \Xi}{\mathit{CCFG}}$ instead of
placing a token into a designated place $p_{\tt@letter}$.

Further our CCFG widget can be thought of as implementing two CCFGs derived from $\calG$'s rules.
One, indicated by the ``mode'' $\tt@marking{\mathit{CCFG}+}$, implements the CCFG induced by $\calG$'s commutative rules; the other, indicated by the ``mode'' $\tt@marking{\mathit{CCFG}}$, allows: 
\begin{inparaenum}[(a)]
\item $\calG$ to produce partial words using rules introduced by \ref{CCFG:nonterminal:terminal},
which can be thought of allowing a non-terminal $X$ to rewrite to a terminal $\bar{X}$ that is ignored; and
\item a non-terminal $A^{\text{cov}}_i$ may change the location of the unique complex token $m$ 
from some place $\tt@place{B, (-) \, \Xi}{\mathit{CCFG}}$ to $\tt@place{B, A^{\text{cov}}_i \, \Xi}{\mathit{CCFG}}$
which reflects the representation of the topmost cache's character for the process represented by $m$.
\end{inparaenum}
This concludes the description of the implementation of Rule \cref{altrule:interleave:com}.


\item \textbf{Rule} \Cref{altrule:receive}: $\Pi = \Pi'$, $\Gamma = \mcchan{(\mset{\tt@msg} \oplus q)}{\tt@chan}, \Gamma'$ and
$\gamma = (\rec{\tt@chan}{\tt@msg})\, \gamma'$. \newline
Let $\tt@q = \rec{\tt@chan}{\tt@msg}$.
For each sequence of non-terminals and cache characters
 $\tt@q \, \ell_{k+1} \, X_k \, \ell_k \cdots X_1 \, \ell_1$ where $k \leq K+1$ 
 we introduce a complex rule $((\tt@place{\tt@q, \ell_{k+1} \, \Xi}{C}, I), (\tt@place{\epsilon, \ell_{k+1} \, \Xi}{C}, \emptyset, O))$ where $I = \update{}{\tt@place{\tt@chan,\tt@msg}{S}}{\mset{\bullet}} \oplus \tt@marking{\mathit{sim}}$ and $O = \tt@marking{\mathit{sim}}$
 that moves a complex token from a place encoding
$\tt@q\, \ell_{k+1} \, \Xi$ to a place representing $\epsilon\, \ell_{k+1} \, \Xi$ while
removing a $\bullet$-token from the simple place representing
messages $\tt@msg$ in channel $\tt@chan$.

\item \textbf{Rule} \Cref{altrule:send}: $\Pi = \Pi'$, $\Gamma, \mcchan{(\mset{\tt@msg} \oplus q)}{\tt@chan} = \Gamma'$ and
$\gamma = (\snd{\tt@chan}{\tt@msg})\, \gamma'$. \newline
 Let $\tt@q = \snd{\tt@chan}{\tt@msg}$.
For each sequence of non-terminals and cache characters
 $\tt@q \, \ell_{k+1} \, X_k \, \ell_k \cdots X_1 \, \ell_1$ where $k \leq K+1$ 
 we introduce a complex rule $((\tt@place{\tt@q, \ell_{k+1} \, \Xi}{C}, I), (\tt@place{\epsilon, \ell_{k+1} \, \Xi}{C}, \emptyset, O))$ where $I = \tt@marking{\mathit{sim}}$ and $O = \update{}{\tt@place{\tt@chan,\tt@msg}{S}}{\mset{\bullet}} \oplus \tt@marking{\mathit{sim}}$
 that moves a complex token from a place encoding
$\tt@q\, \ell_{k+1} \, \Xi$ to a place representing $\epsilon\, \ell_{k+1} \, \Xi$ while
adding a $\bullet$-token from the simple place representing
messages $\tt@msg$ in channel $\tt@chan$.

\item \textbf{Rule} \Cref{altrule:spawn}. $\Pi \parallel X = \Pi'$, $\Gamma = \Gamma'$ and
$\gamma = \spn{X}\, \gamma'$

Let $\tt@q = \spn{X}$. 
For each sequence of non-terminals and cache characters
 $\tt@q \, \ell_{k+1} \, X_k \, \ell_k \cdots X_1 \, \ell_1$ where $k \leq K+1$ 
 we introduce a complex rule $((\tt@place{\tt@q, \ell_{k+1} \, \Xi}{C}, I), (\tt@place{\epsilon, \ell_{k+1} \, \Xi}{C}, \emptyset, O))$
 where $I = \tt@marking{\mathit{sim}}$ and $O = \update{}{\tt@place{\spn{X}}{S}}{\mset{\bullet}} \oplus \tt@marking{\mathit{sim}}$
 that moves a complex token from a place encoding
$\tt@q\, \ell_{k+1} \, \Xi$ to a place representing $\epsilon\, \ell_{k+1} \, \Xi$ while
adding a $\bullet$-token to the simple ``spawning'' place $\tt@place{\spn{X}}{S}$.
Additionally, for every $N \in \NonT$ we introduce a simple rule $r_{N} =(\update{}{\tt@place{\spn{N}}{S}}{\mset{\bullet}} \oplus \tt@marking{\mathit{sim}}, \update{}{\tt@place{N, +}{C}}{\mset{\emptyset}} \oplus \tt@marking{\mathit{sim}})$ that removes a $\bullet$-token from $\tt@place{\spn{N}}{S}$ and
adds the empty complex token $\emptyset$ to a complex place representing $N \, +$.
For reference we will call the rule $r_{N}$ a weak spawn rule for the non-terminal $N$.\label{tt:rule:weak_spawn}


\item \textbf{Rule} \Cref{altrule:dispatch:term}: $\Pi' = \Pi \oplus \Pi(M)$, $\Gamma' = \Gamma \oplus \Gamma(M)$,
$\gamma = M\, \gamma'$ and $M \in \TermCache$. \newline
For each sequence of non-terminals and cache characters
 $\epsilon \, (\mathord{+}) \, \discn X_k \, \ell_k \cdots X_1 \, \ell_1$ where $k \leq K+1$ 
 we introduce a transfer rule 
$\commabr{((\tt@place{\epsilon, + \, X_{k} \, \ell_{k} \cdots X_1 \, \ell_1}{C}, \tt@marking{\mathit{sim}}), (\tt@place{X_{k}, \ell_{k} \cdots X_1 \, \ell_1}{C}, P, \tt@marking{\mathit{sim}}))}$ 
where $P = \set{\tt@place{k+1, e}{I} \mid e \in \ComSigma}$ if $k \leq K$ and $P = \emptyset$ otherwise. 
This rule moves a complex token from a place encoding $\epsilon, +, \Xi$ to a place representing $\Xi$ while it
simulates the immediate despatch of the commutative concurrency actions, send $\snd{\tt@chan}{\tt@msg}$ and spawn $\spn{X}$, that are present in the top-level cache $M$ which is in $\TermCache$ since its character is $+$.
\item \textbf{Rule} \Cref{altrule:dispatch:nonterm}: $\Pi' = \Pi \oplus \Pi(M)$, $\Gamma' = \Gamma \oplus \Gamma(M)$,
$\gamma = M\, \gamma_0$, $\gamma' = M'\, \gamma_0$, $M' = M \restriction {(\ComN \union \PPL)}$ and $M \in \MixedCache$ \newline
For each sequence of non-terminals and cache characters
 $\epsilon \, \ell_{k+1} \, \discn X_k \, \ell_k \cdots X_1 \, \ell_1$ where $k \leq K+1$, $\ell_{k+1} \in \set{-} \union \PPL$ where $k \leq K$
  we introduce a transfer rule 
$\commabr{((\tt@place{\epsilon, \ell_{k+1} \, \Xi}{C}, \tt@marking{\mathit{sim}}), (\tt@place{\epsilon, \ell_{k+1} \, \Xi}{C}, P, \tt@marking{\mathit{sim}})}$ 
where $P = \set{\tt@place{k+1, e}{I} \mid e \in \ComSigma}$. 
This rule moves a takes a complex token from a place encoding $\epsilon, +, \Xi$ and places it back while it
simulates the immediate despatch of the commutative concurrency actions, send $\snd{\tt@chan}{\tt@msg}$ and spawn $\spn{X}$, that are present in the top-level cache $M$ which is in $\MixedCache$ since its character is $\ell_{k+1} \in \set{-} \union \PPL$.
\end{asparaitem}

\medskip

The coverability query $(\calG,\Pi_0 \ChanPar \Gamma_0,A^{\text{cov}}_1 \parallel \cdots \parallel A^{\text{cov}}_n \ChanPar \Gamma^{\text{cov}})$ is implemented by a family of widgets $W_1, \cdots, W_n$ where for each $i$, $W_i$ is a disjunction of the non-emptiness test of complex places of the shapes $\tt@place{\epsilon, A^{\text{cov}}_i \, \Xi}{C}$ and $\tt@place{A^{\text{cov}}_i, \ell \, \Xi'}{C}$, as $\ell$ ranges over $\widehat{\mathpzc{L}} := \PPL \cup \set{+, -}$, and $\Xi$ and $\Xi'$ range over $(\NComN \cdot \widehat{\mathpzc{L}})^\ast$ of the appropriate lengths. A widget signals that the non-emptiness test is satisfied by placing a $\bullet$-token into the simple place $\tt@place{A_i}{\mathit{cov}}$.
The intention is that we can use a coverability query for $\calN$ asking for at least one token in each of $\tt@place{A_1}{\mathit{cov}}, \ldots, \tt@place{A_n}{\mathit{cov}}$ to implement the coverability query $(\calG,\Pi_0 \ChanPar \Gamma_0,A^{\text{cov}}_1 \parallel \cdots \parallel A^{\text{cov}}_n \ChanPar \Gamma^{\text{cov}})$.

Formally, there is a simple rule $(\tt@marking{\mathit{sim}},\tt@marking{\mathit{Query}})$ that non-de\-ter\-min\-is\-ti\-cal\-ly terminates the simulation of $\calG$ and activates the processing of the coverability query.
Then for each $A^{\text{cov}}_i$ where $i \in \range{n}$, we implement the widget $W_i$ by the following complex rules:
for all non-commutative non-terminals and cache characters
 $X_k \, \ell_k \cdots X_1 \, \ell_1$ where $k \leq K$, and $\tt@q \in \Sigma \cup \set{\epsilon} \cup \NonT$ we introduce complex rules
$((\tt@place{A^{\text{cov}}_i, \ell_{k+1} \, \Xi}{C}, \tt@marking{\mathit{Query}}), (\tt@place{\epsilon, (-) \, \Xi}{C}, \emptyset, \tt@marking{\mathit{Query}} \oplus O))$ and
$((\tt@place{\epsilon, A^{\text{cov}}_i \, \Xi}{C}, \tt@marking{\mathit{Query}}), (\tt@place{\epsilon, (-) \, \Xi}{C}, \emptyset, \tt@marking{\mathit{Query}} \oplus O))$
where $O = \varupdate{}{\tt@place{A_i}{\mathit{cov}}}{\mset{\bullet}}$.

Let us briefly inspect the size of $\calN$ and let us assume that all
$n, K, |\calR|, |\calN|, |\Sigma| > 0$. It is clear that there exists a constants $c$, $c'$ and $c''$
such that $\calN$ has no more than $c \cdot n \cdot |\calR| \cdot |\Pcomplex|^2$ complex rules,
$\calN$ has no more than $c' \cdot |\calR| \cdot |\calN|$ simple rules and
$\calN$ has no more than $c''\cdot |\Pcomplex|^2$ transfer rules.
Hence there exists a constant $c'''$ such that $\calN$ has no more than
$c''' \cdot n \cdot |\calR| \cdot |\calN| \cdot |\Pcomplex|^2$ rules.
Further there exists constants $d, d', d'', d'''$
such that $\calN$ has no more than $d \cdot n \cdot |\Sigma| \cdot |\NonT|$ simple places,
$\calN$ has no more than $d' \cdot K \cdot |\Sigma|$ colours, and 
$\calN$ has no more than $d'' \cdot |\NonT|^{d''' \cdot K} \cdot |\Sigma|^{d''' \cdot K}$ complex places.
It is thus easy to see that $\calN$ can be computed from $\calG$ in \Exptime.

We will now prove that it checking whether $(\calG,\Pi_0 \ChanPar \Gamma_0,A^{\text{cov}}_1 \parallel \cdots \parallel A^{\text{cov}}_n \ChanPar \Gamma^{\text{cov}})$ is a yes-instance of simple coverability reduces to a coverability check on $\calN$.

In order to clarify which model induces a transition we will write
$\to[TS=\calG]$ for $\toCM$ in the following.
We label the transition system $(\APCPSConfig,\to[TS=\calG])$ in the following way:
if $\Pi \ChanPar \Gamma$, $\Pi' \ChanPar \Gamma' \in \APCPSConfig$ such that
$\Pi \ChanPar \Gamma \to[TS=\calG]\Pi' \ChanPar \Gamma'$ then the labelled version
has the transition $\Pi \ChanPar \Gamma \to[label=\Pi' \ChanPar \Gamma',LTS=\calG]\Pi' \ChanPar \Gamma'$.
Next, let us define the transition system $(\P[\Config],\to[TS={\P[\calN]}])$ which is
 a \emph{weak} co-universal power lifting of the transition system $(\Config,\to[TS=\calN])$.
 Suppose $S,S' \subseteq \Config$ then $S \to[TS={\P[\calN]}] S'$ just if
 for all $s' \in S'$ there exists an $s \in S$ such that 
 there are number of weak spawn rules (c.f.\ref{tt:rule:weak_spawn}) $r_1, \ldots, r_m$ where $m \geq 0$
 such that
 $s \to[rule=r_1,TS=\calN] s_1 \to[rule=r_2,TS=\calN] \cdots \to[rule=r_m,TS=\calN] s_m$
 and $s_m \to[TS=\calN] s'$.
 We label $(\P[\Config],\to[TS={\P[\calN]}])$ in the following way: suppose
 $S,S' \subseteq \Config$ such that $S \to[TS={\P[\calN]}] S'$;
 if there exists $\Pi \ChanPar \Gamma \in \APCPSConfig$ such that $S' = \tt@bsfun(\Pi \ChanPar \Gamma)$
 then we label the transition by $S \to[label=\Pi \ChanPar \Gamma,LTS={\P[\calN]}] S'$;
 otherwise we label the transition by $S \to[label=\epsilon,LTS={\P[\calN]}] S'$.

We will now prove that $\tt@bisim$ is a weak bisimulation between the labelled versions of 
$(\APCPSConfig,\to[TS=\calG])$ and $\paren{\P[\Config],\to[TS={\P[\calN]}]}$.
\medskip%

Let us first prove $\tt@bisim$ is a weak simulation. Suppose $\commabr{\varparen{\gamma \parallel \Pi \ChanPar \Gamma,\tt@bsfun(\gamma \parallel \Pi \ChanPar \Gamma)} \in \tt@bisim}$ and 
$\gamma \parallel \Pi \ChanPar \Gamma \to[label={\Pi_0 \ChanPar \Gamma_0},LTS=\calG,*] \gamma' \parallel \Pi' \ChanPar \Gamma'$ then clearly
$\gamma \parallel \Pi \ChanPar \Gamma \to[label={\gamma' \parallel \Pi' \ChanPar \Gamma'},LTS=\calG] \gamma' \parallel \Pi' \ChanPar \Gamma'$.

We want to show that $\tt@bsfun(\gamma \parallel \Pi \ChanPar \Gamma) \to[label={\gamma' \parallel \Pi' \ChanPar \Gamma'},LTS={\P[\calN]},*] \tt@bsfun(\gamma' \parallel \Pi' \ChanPar \Gamma')$.
So let $s' \in \tt@bsfun(\gamma' \parallel \Pi' \ChanPar \Gamma')$ then by definition 
$s' = s'_0 \oplus (\tt@bsfun@simple(\Gamma') \oplus \tt@marking{\mathit{sim}})$ for some $s'_0 \in \tt@bsfun@Pi(\gamma' \parallel \Pi')$. We can further decompose $s'_0$ and obtain $s'_0 = s'_1 \oplus s'_2$ where $s'_1 \in \tt@bsfun@proc(\gamma')$ and $s'_2 \in \tt@bsfun@Pi(\Pi')$.

Let us perform a case analysis on the rule that justifies the transition
$\gamma \parallel \Pi \ChanPar \Gamma \to[label={\gamma' \parallel \Pi' \ChanPar \Gamma'},LTS=\calG] \gamma' \parallel \Pi' \ChanPar \Gamma'$:
\begin{asparaitem}
\item \textbf{Rule} \Cref{altrule:interleave:std}: $\Pi = \Pi'$ , $\Gamma = \Gamma'$, $\gamma = A \, M \, X_k\, M_{k} \cdots X_1\, M_{1}$, and $\gamma' = \delta \, M \, X_k\, M_{k} \cdots X_1\, M_{1}$ for some $\calG$ rule $A \to \delta$.\newline
We can see that 
$s'_1 = \varupdate{}{\tt@place{\overline{\delta}, \ell_{k+1} \, \Xi}{C}}{\mset{\tt@ctoken@fun(M_1,\ldots,M_{k+1})}}$
where $\ell_i$ is a character of $M_i$ for each $i \in \range{k+1}$,
and $\overline{\delta}$ depends on the form of $\delta$: 
$\overline{\delta} = B \, (+) \, C$ if $\delta = B\,C$, $k < K$ and $C$ non-commutative; and
$\overline{\delta} = a$ if $\delta = a$ and $a \in \Sigma \union \set{\epsilon}$.
Note that it is now obvious that it is impossible that $\gamma'$ is represented in a simple place.

Let $s = (s_1 \oplus s'_2) \oplus (\tt@bsfun@simple(\Gamma) \oplus \tt@marking{\mathit{sim}})$
we write $s_1 \is \varupdate{}{\tt@place{A, \ell_{k+1} \, \Xi}{C}}{\mset{\tt@ctoken@fun(M_1,\ldots,M_{k+1})}}$.
Clearly $s_1 \in \tt@bsfun@proc(A M\discn X_k M_k \cdots X_1 M_1) = \tt@bsfun@proc(\gamma)$ and thus
$s \in \tt@bsfun(\gamma \parallel \Pi \ChanPar \Gamma)$.

We know that $\calN$ has a complex rule 
$\commabr{((\tt@place{A, \ell_{k+1} \, \Xi}{C}, \tt@marking{\mathit{sim}}), (\tt@place{\overline{\delta} \, \ell_{k+1} \, \Xi}{C}, \emptyset, \tt@marking{\mathit{sim}}))}$
Thus we know we can make the transition
$s \to[TS=\calN] (s'_1 \oplus s_2) \oplus (\tt@bsfun@simple(\Gamma) \oplus \tt@marking{\mathit{sim}})$
and clearly $s' = (s'_1 \oplus s_2) \oplus (\tt@bsfun@simple(\Gamma) \oplus \tt@marking{\mathit{sim}})$.

Since $s \in \tt@bsfun(\gamma' \parallel \Pi' \ChanPar \Gamma')$ was arbitrary we can conclude that in fact
$\tt@bsfun(\gamma \parallel \Pi \ChanPar \Gamma) \to[TS={\P[\calN]}] \tt@bsfun(\gamma' \parallel \Pi' \ChanPar \Gamma')$ and thus clearly
$$\tt@bsfun(\gamma \parallel \Pi \ChanPar \Gamma) \to[label=\gamma' \parallel \Pi' \ChanPar \Gamma',LTS={\P[\calN]},*] \tt@bsfun(\gamma' \parallel \Pi' \ChanPar \Gamma')$$
which is what we wanted to prove.
\item \textbf{Rule} \Cref{altrule:interleave:com}: $\Pi = \Pi'$ , $\Gamma = \Gamma'$,
$\gamma = A \, M \, X_k\, M_{k} \cdots X_1\, M_{1}$, $\gamma' = \delta \, M' \, X_k\, M_{k} \cdots X_1\, M_{1}$ 
 for some $\calG$ rule $A \to B\,C$, $C$ commutative, $C \to[*] w$, and $M' = M \oplus \M(w)$.\newline
In this case we can assume that
$\commabr{s'_1 = \varupdate{}{\tt@place{B, \ell' \, \Xi}{C}}{\varmset{\tt@ctoken@fun(M_1,\ldots,M_k,M')}}}$
where $\ell'$ is a character for $M' = M \oplus \M(w)$.

Let us define cache characters $\ell, \overline{\ell}$ and $\overline{\ell}'$ by a case analysis on $\ell'$ and $\M(w)$:
\begin{enumerate}[(C1)]
\item(ell:case:1) \emph{Case: } $\ell'= +$:
Define $\ell = \overline{\ell} = \overline{\ell}' = +$.
\item(ell:case:2) \emph{Case: } $\ell'= -$ and $M \in \MixedCache$:
Define $\ell = \overline{\ell} = \overline{\ell}' = -$.
\item(ell:case:3) \emph{Case: } $\ell'= -$ and $M \in \TermCache$:
Define $\ell = +$ and $\overline{\ell} = \overline{\ell}' = -$.
\item(ell:case:4) \emph{Case: } $\ell' \in \PPL$, $M \in \MixedCache$ and $\M(w)(\ell') > 0$:
Define $\ell = \overline{\ell} = -$ and $\overline{\ell}' = \ell'$.
\item(ell:case:5) \emph{Case: } $\ell' \in \PPL$, $M \in \TermCache$ and $\M(w)(\ell') > 0$:
Define $\ell = +$ and $\overline{\ell} = -$ and $\overline{\ell}' = \ell'$.
\item(ell:case:6) \emph{Case: } $\ell' \in \PPL$ and $\M(w)(\ell') = 0$:
Define $\ell = \overline{\ell} = \overline{\ell}' = \ell'$.
\end{enumerate}
Let us show that the above choices ensure that $\ell$ is a character for $M$.
If $M \in \TermCache$ then either $M' = M \oplus \M(w) \in \TermCache$ and so $\ell' = +$  and hence
$\ell = +$ and thus $\ell$ is a character for $M$. 
Otherwise $\ell' \in \set{-} \union \PPL$ and so since $M \in \TermCache$ by definition we have
$\ell = +$ and hence $\ell$ is a character for $M$.
If $M \in \MixedCache$ then clearly $M \oplus \M(w) \in \MixedCache$ and so either $\ell' = -$ or 
$\ell' \in \PPL$. So if $\ell' = -$ or $\M(w)(\ell') > 0$ then $\ell = -$ which is a character $M$.
Otherwise, $\ell' \in \PPL$ and $\M(w)(\ell') = 0$ then $\ell = \ell'$ and since $\ell'$ is not a character for $\M(w)$
but for $M'$ it must be that case that $\ell = \ell'$ is character for $M$.
Hence we can conclude that in all cases $\ell$ is a character for $M$.

Define $s^{s'}$ by 
$s^{s'} = s^{s'}_1 \oplus s'_2 \oplus (\tt@bsfun@simple(\Gamma) \oplus \tt@marking{\mathit{sim}})$
where $s^{s'}_1 = \varupdate{}{\tt@place{A, \ell \, \Xi}{C}}{\mset{\tt@ctoken@fun(M_1,\ldots,M_k,M)}}$.
It is clear that $s^{s'}_1 \in \tt@bsfun@proc(\gamma)$ and hence $s^{s'} \in \tt@bsfun(\gamma \parallel \Pi \ChanPar \Gamma)$.



Our definitions of $\ell$ and $\overline{\ell}$ ensure that there is a complex rule of the form
$\vec{r}(1) = \commabr{((\tt@place{A, \ell \, \Xi}{C}, \tt@marking{\mathit{sim}}), (\tt@place{B, \overline{\ell} \, \Xi}{\mathit{CCFG}}, \emptyset, \tt@marking{\mathit{CCFG}+?} \oplus O))}$
which is enabled at $\vec{s}^{s'}(1)$ to active a CCFG widget computation
where $\mathit{CCFG}+? = \mathit{CCFG}+$ if $\overline{\ell} = +$; and
$\mathit{CCFG}+? = \mathit{CCFG}$ otherwise; and
$O = \tt@marking[\mathit{budget}]{|\calN|} \oplus \varupdate{}{\tt@place{C}{\mathit{CCFG}}}{\varmset{\bullet}}$.

Hence we can make a transition
$\vec{s}^{s'}(1) \to[rule=\vec{r}(1),TS=\calN] \vec{s}^{s'}(2)$ where 
$\vec{s}^{s'}(2) = ({\vec{s}_1}^{s'}(2) \oplus s'_2) \oplus (\tt@bsfun@simple(\Gamma)\oplus \tt@marking{\mathit{CCFG}+?} \oplus \tt@marking[\mathit{budget}]{|\calN|} \oplus \varupdate{}{\tt@place{C}{\mathit{CCFG}}}{\varmset{\bullet}})$ and 
${\vec{s}'}^{s'}(2) = \varupdate{}{\tt@place{B, \overline{\ell} \, \Xi}{CCFG}}{\mset{\tt@ctoken@fun(M_1,\ldots,M_{k+1})}}$.

We appeal to \cite{Ganty:2012} that the CCFG widget allows us to simulate the computation $C \to[*] w$
in $L$ steps, where $L$ only depends on the derivation $C \to[*] w$ and not on ${\vec{s}'}^{s'}(2)$,
so that we get
$\vec{s}^{s'}(1+1) \to[rule=\vec{r}(2),TS=\calN] \vec{s}^{s'}(1+2) \cdots \to[rule=\vec{r}(1+L),TS=\calN] \vec{s}^{s'}(1+L)$
where for all $2 \leq i \leq L+1$ the rule $\vec{r}(i)$ is introduced by cases \ref{CCFG:I}--\ref{CCFG:nonterminal:terminal} above; and
for each terminal $\tt@letter \in \ComSigma$ occurring in $w$ we inject a $\tt@place{k+1, \tt@letter}{I}$-coloured token into
$\tt@ctoken@fun(M_1,\ldots,M_{k+1})$, i.e.~if $M_0 \leq_{\M} \M(w)$ then injecting a $\tt@place{k+1, \tt@letter}{I}$-coloured token $\tt@ctoken@fun(M_1,\ldots,M_{k+1} \oplus M_0)$ yields
$\tt@ctoken@fun(M_1,\ldots,M_{k+1} \oplus (M_0 \oplus \mset{\tt@letter}))$,
 and for each $2 \leq i \leq L+1$ the configuration $\vec{s}^{s'}(i)$ has $\tt@marking{\mathit{CCFG+?}}$ as a submarking. 
Further, we can see that in the computation of $C \to[*] w$ it is possible to expose a non-terminal $\ell_0$ if $\M(w)(\ell_0) > 0$ and $\overline{\ell} = -$. We can thus see that
if cases \ref{ell:case:1}--\ref{ell:case:5} applied for the definition of $\overline{\ell}$ and $\overline{\ell}'$ then we can assume that the above computation ignores all non-terminal; if case \ref{ell:case:6} applied then
$\M(w)(\ell') > 0$ and $\overline{\ell} = -$, so we can expose $\ell' =\overline{\ell}'$ along the above computation.

Hence we can conclude that
$\vec{s}^{s'}(1+L) =  ({\vec{s}_1}^{s'}(1+L) \oplus s_2) \oplus (\tt@bsfun@simple(\Gamma)\oplus \tt@marking{\mathit{CCFG}+?} \oplus \tt@marking[\mathit{budget}]{|\calN|+1})$
where ${\vec{s}'}^s(1+L) = \varupdate{}{\tt@place{B, \overline{\ell}' \, \Xi}{\mathit{CCFG}}}{\mset{\tt@ctoken@fun(M_1,\ldots,M_{k+1} \oplus (\M(w) \restriction \ComSigma))}}$.

We note that $\overline{\ell}' = \ell'$ in all cases of \ref{ell:case:1}--\ref{ell:case:6}.

It is then the case that a complex rule of
the form $\commabr{((\tt@place{B, \ell' \, \Xi}{\mathit{CCFG}}, \tt@marking{\mathit{CCFG}+?} \oplus \tt@marking[\mathit{budget}]{|\calN|+1}), (\tt@place{B, \ell' \, \Xi}{C}, \emptyset, \tt@marking{\mathit{sim}}))}$
is enabled and we can make the transition
$\vec{s}^{s'}(L+1) \to[TS=\calN] \vec{s}^{s'}(L+2)$
where $\vec{s}^{s'}(L+2) = ({\vec{s}_1}^{s'}(L+2) \oplus s_2) \oplus (\tt@bsfun@simple(\Gamma)\oplus \tt@marking{\mathit{sim}})$
and $\vec{s}'^{s'}(L+2) = \varupdate{}{\tt@place{B, \ell' \, \Xi}{C}}{\mset{\tt@ctoken@fun(M_1,\ldots,M_{k+1} \oplus (\M(w) \restriction \ComSigma))}}$.

Now $\tt@ctoken@fun(M_1,\ldots,M_{k+1} \oplus (\M(w) \restriction \ComSigma)) = 
\commabr{\tt@ctoken@fun(M_1,\ldots,M \oplus \M(w))}$. 
Thus $\vec{s}'^{s'}(L+2) = s'_1$ and hence $\vec{s}^{s'}(L+2) = s'$.

We now need to lift these paths to $\P[\calN]$. The recipe above gives us for each 
$s' \in \tt@bsfun(\gamma' \parallel \Pi' \ChanPar \Gamma')$ a path
$\vec{s}^{s'}(1) \cdots \vec{s}^{s'}(L+2)$.
We note that $L$ is in fact independent of $s'$ since it is only dependent on the derivation $C \to[*] w$.
Further $\vec{s}^{s'}(1) \in \tt@bsfun(\gamma \parallel \Pi \ChanPar \Gamma)$, $\vec{s}^{s'}(L+2) = s'$
and each for all $2 \leq i < L+1$ the configuration $\vec{s}^{s'}(i)$ contains either submarking $\tt@marking{\mathit{CCFG+}}$ or $\tt@marking{\mathit{CCFG}}$ and hence $\nexists \Pi_0 \ChanPar \Gamma_0$ such that
$\vec{s}^{s'}(i) \in \tt@bsfun(\Pi_0 \ChanPar \Gamma_0)$.
Let us define the following subsets of $\Config$: for $1 \leq i \leq L+2$ let
$\vec{S}(i) = \set{\vec{s}^{s'}(i) \mid s' \in \tt@bsfun(\gamma' \parallel \Pi' \ChanPar \Gamma')}$.
By definition we have for all $\vec{s}^{s'}(i+1) \in \vec{S}(i+1)$ that $\vec{s}^{s'}(i) \to[TS=\calN] \vec{s}^{s'}(i+1) \in \vec{S}(i+1)$ if $1 \leq i < L +2$
and hence
$\vec{S}(1) \to[TS={\P[\calN]}] \vec{S}(2) \to[TS={\P[\calN]}] \cdots \to[TS={\P[\calN]}] \vec{S}(L+1) \to[TS={\P[\calN]}] \tt@bsfun(\gamma' \parallel \Pi' \ChanPar \Gamma')$
since clearly $\vec{S}(L+2) = \tt@bsfun(\gamma' \parallel \Pi' \ChanPar \Gamma')$.
Looking at the labelled version of $\P[\calN]$ our reasoning above implies that
For all $1 < i < L +2$ $\nexists \Pi_0 \ChanPar \Gamma_0$ such that $\vec{S}(i) = \tt@bsfun(\Pi_0 \ChanPar \Gamma_0)$
and hence
$\tt@bsfun(\gamma \parallel \Pi \ChanPar \Gamma) \supseteq \vec{S}(1) \to[label=\gamma' \parallel \Pi' \ChanPar \Gamma',LTS={\P[\calN]},*] \tt@bsfun(\gamma' \parallel \Pi' \ChanPar \Gamma')$
which is what we wanted to prove.

\item \textbf{Rule} \Cref{altrule:receive},\Cref{altrule:send}: 
$\gamma = \tt@q\, \gamma'$ and $\tt@q \in \Sigma$. \newline 
We will prove the case where the transition is justified by rule  \Cref{altrule:receive};
the case using rule \Cref{altrule:send} is analogous.

In this case $\Pi = \Pi'$, $\Gamma = \mcchan{(\mset{\tt@msg} \oplus q)}{\tt@chan}, \Gamma_0$,
$\Gamma' = \mcchan{q}{\tt@chan}, \Gamma_0$ and $\tt@q = \rec{\tt@chan}{\tt@msg}$.

If $\gamma' = M_{k+1}\, X_k\, M_{k} \cdots X_1\, M_{1}$ is easy to see that 
$s'_1 = \varupdate{}{\tt@place{\epsilon, \ell_{k+1} \, \Xi}{C}}{\mset{\tt@ctoken@fun(M_1,\ldots,M_{k+1})}}$
where $\ell_i$ is a character of $M_i$ for each $i \in \range{k+1}$; and
$\tt@bsfun@simple(\Gamma) = \tt@bsfun@simple(\Gamma') \oplus \varupdate{}{\tt@place{\tt@chan,\tt@msg}{S}}{\mset{\bullet}}$.

Hence let $s = (s_1 \oplus s'_2) + (\tt@bsfun@simple(\Gamma) \oplus \tt@marking{\mathit{sim}})$ where
we write $s_1 \is \varupdate{}{\tt@place{\rec{\tt@chan}{\tt@msg}, \ell_{k+1} \, \Xi}{C}}{\mset{\tt@ctoken@fun(M_1,\ldots,M_{k+1})}}$.

The complex rule 
$\commabr{((\tt@place{\rec{\tt@chan}{\tt@msg}, \ell_{k+1} \, \Xi}{C}, \update{}{\tt@place{\tt@chan,\tt@msg}{S}}{\mset{\bullet}} \oplus \tt@marking{\mathit{sim}}), (\tt@place{\epsilon, \ell_{k+1} \, \Xi}{C}, \emptyset, \tt@marking{\mathit{sim}}))}$ 
is then enabled at $s$ and we can make the transition
$s \to[TS=\calN] (s'_1 \oplus s'_2) + (\tt@bsfun@simple(\Gamma')\oplus \tt@marking{\mathit{sim}}) = s'$

Since $\gamma = (\rec{\tt@chan}{\tt@msg})\, \gamma'$ we know that $s_1 \in \tt@bsfun@proc((\rec{\tt@chan}{\tt@msg})\, \gamma')$ from which we can conclude that
$s_1 \oplus s'_2 \in \tt@bsfun@Pi(\gamma \parallel \Pi') = \tt@bsfun@Pi(\gamma \parallel \Pi)$ and hence
$(s_1 \oplus s'_2) \oplus (\tt@bsfun@simple(\Gamma)\oplus \tt@marking{\mathit{sim}}) \in \tt@bsfun(\gamma \parallel \Pi \ChanPar \Gamma)$.

Since $s' \in \tt@bsfun(\gamma' \parallel \Pi' \ChanPar \Gamma')$ was arbitrary we can conclude that in fact
$\tt@bsfun(\gamma \parallel \Pi \ChanPar \Gamma) \to[TS={\P[\calN]}] \tt@bsfun(\gamma' \parallel \Pi' \ChanPar \Gamma')$ and thus clearly
$$\tt@bsfun(\gamma \parallel \Pi \ChanPar \Gamma) \to[label=\gamma' \parallel \Pi' \ChanPar \Gamma',LTS={\P[\calN]},*] \tt@bsfun(\gamma' \parallel \Pi' \ChanPar \Gamma')$$
which concludes this case.
\item \textbf{Rule} \Cref{altrule:spawn}: 
 $\Pi' = \Pi \parallel X$, $\Gamma = \Gamma'$, and $\gamma = \spn{X}\, \gamma'$. \newline 
If $\gamma' = M_{k+1}\, X_k\, M_{k} \cdots X_1\, M_{1}$ is easy to see that 
$s'_1 = \varupdate{}{\tt@place{\epsilon, \ell_{k+1} \, \Xi}{C}}{\mset{\tt@ctoken@fun(M_1,\ldots,M_{k+1})}}$
where $\ell_i$ is a character of $M_i$ for each $i \in \range{k+1}$.
Further since $s'_2 \in \tt@bsfun@Pi(\Pi') = \tt@bsfun@Pi(\Pi \parallel X)$ we know that
$s'_2 = s''_2 \oplus s'_X$ where $s''_2 \in \tt@bsfun@Pi(\Pi)$ and $s'_X \in \tt@bsfun@proc(X)$.

Thus we can define $s = (s_1 \oplus s''_2) + (\tt@bsfun@simple(\Gamma) \oplus \tt@marking{\mathit{sim}})$ where
we write $s_1 \is \varupdate{}{\tt@place{\spn{X}, \ell_{k+1} \, \Xi}{C}}{\mset{\tt@ctoken@fun(M_1,\ldots,M_{k+1})}}$.

Hence the complex rule 
$$\commabr{((\tt@place{\spn{X}, \ell_{k+1} \, \Xi}{C}, \tt@marking{\mathit{sim}}), (\tt@place{\epsilon, \ell_{k+1} \, \Xi}{C}, \emptyset, \update{}{\tt@place{\spn{X}}{S}}{\mset{\bullet}} \oplus \tt@marking{sim}))}$$
is enabled at $s$ and we can make the transition
$s \to[TS=\calN] (s'_1 \oplus s''_2 \oplus s_X) \oplus (\tt@bsfun@simple(\Gamma')\oplus \tt@marking{\mathit{sim}})$
where we write $s_X = \update{}{\tt@place{\spn{X}}{S}}{\mset{\bullet}}$.

Since $s'_X \in \tt@bsfun@proc(X)$ either $s'_X = s_X$ or $s'_X = \update{}{\tt@place{X, (+)}{C}}{\mset{\emptyset}}$.
In the latter case we can use a weak spawn rule to make the transition
$$(s'_1 \oplus s''_2 \oplus s_X) \oplus (\tt@bsfun@simple(\Gamma')\oplus \tt@marking{\mathit{sim}}) \to[TS=\calN]
s'.$$

Since $\gamma = (\spn{X})\, \gamma'$ we know that $s_1 \in \tt@bsfun@proc(\gamma)$ from which we can conclude that
$s_1 \oplus s''_2 \in \tt@bsfun@Pi(\gamma \parallel \Pi)$.
Therefore
$(s_1 \oplus s''_2) \oplus (\tt@bsfun@simple(\Gamma) \oplus \tt@marking{\mathit{sim}}) \in \tt@bsfun(\gamma \parallel \Pi \ChanPar \Gamma)$.

Since $s' \in \tt@bsfun(\gamma' \parallel \Pi' \ChanPar \Gamma')$ was arbitrary we can conclude that
$\tt@bsfun(\gamma \parallel \Pi \ChanPar \Gamma) \to[TS={\P[\calN]}] \tt@bsfun(\gamma' \parallel \Pi' \ChanPar \Gamma')$ and thus clearly
$$\tt@bsfun(\gamma \parallel \Pi \ChanPar \Gamma) \to[label=\gamma' \parallel \Pi' \ChanPar \Gamma',LTS={\P[\calN]},*] \tt@bsfun(\gamma' \parallel \Pi' \ChanPar \Gamma')$$
which is what we wanted to prove.
\item \textbf{Rule} \Cref{altrule:dispatch:term}--\Cref{altrule:dispatch:nonterm}: $\Pi' = \gamma' \parallel \Pi \parallel \Pi(M)$, $\Gamma' = \Gamma \oplus \Gamma(M)$,
$\gamma = \epsilon\, M\, \gamma_0$ and $\gamma' = \epsilon \, M'\,\gamma_0$. \newline 
We will prove the case where the transition is justified by rule \Cref{altrule:dispatch:term};
the case using rule \Cref{altrule:dispatch:nonterm} is proved similarly.

In this case, in fact $M' = \emptyset$ and thus $\gamma' = \gamma_0$ which implies $\gamma = \epsilon\, M\, \gamma'$.
If $\gamma' = X_k\, M_{k} \cdots X_1\, M_{1}$ is easy to see that 
$s'_1 = \varupdate{}{\tt@place{X_k\, \ell_{k} \cdots X_1\, \ell_{1}}{C}}{\mset{\tt@ctoken@fun(M_1,\ldots,M_{k})}}$
where $\ell_i$ is a character of $M_i$ for each $i \in \range{k}$.
Further since $s'_2 \in \tt@bsfun@Pi(\Pi') = \tt@bsfun@Pi(\Pi \parallel \Pi(M))$
we decompose $s'_2$ into $s'_2 = s''_2 \oplus s'_{\Pi(M)}$ where $s''_2 \in \tt@bsfun@Pi(\Pi)$
and $s'_{\Pi(M)} \in \tt@bsfun@Pi(\Pi(M))$.

We also know that $M \in \TermCache$, so let $\ell = +$,
$s = (s_1 \oplus s''_2) \oplus (\tt@bsfun@simple(\Gamma) \oplus \tt@marking{\mathit{sim}})$
and $s_1 = \varupdate{}{\tt@place{\epsilon, + \, \Xi}{C}}{\mset{\tt@ctoken@fun(M_1,\ldots,M_{k},M)}}$.
Thus $s_1 \in \tt@bsfun@proc(M \, \gamma') = \tt@bsfun@proc(\gamma)$ and hence 
$s \in \tt@bsfun(\gamma \parallel \Pi \ChanPar \Gamma)$.

The transfer rule 
$r = \commabr{((\tt@place{\epsilon, + \, \Xi}{C}, \tt@marking{\mathit{sim}}), (\tt@place{\Xi}{C}, P, \tt@marking{\mathit{sim}}))}$ is then enabled at $s$
where $P = \set{\tt@place{k+1, e}{I} \mid e \in \ComSigma}$ 
if $k \leq K$ and $P = \emptyset$ otherwise (in which case $M = \emptyset$).
Let $\commabr{m = \tt@ctoken@fun(M_1,\ldots,M_{k},M)}$, $m_P = m \restriction P$ and $m_{\overline{P}} = m \restriction (\Pinner \setminus P)$. Then we know that using $r$ we can make the transition
$$s \to[TS=\calN] (s_{\overline{P}} \oplus s''_2) \oplus (\tt@bsfun@simple(\Gamma)\oplus \tt@marking{\mathit{sim}}) \oplus
\paren{m_{P} \compose \colmap^{-1}} =: s'_0$$
where $s_{\overline{P}} = \update{}{\tt@place{\Xi}{C}}{\mset{m_{\overline{P}}}}$.

It is not hard to see that 
$\commabr{m_{\overline{P}} = \tt@ctoken@fun(M_1,\ldots,M_{k},\emptyset) = \tt@ctoken@fun(M_1,\ldots,M_{k})}$ and hence $s_{\overline{P}} = s'_1$.

Let $\calG^\Gamma = \set{\tt@place{\tt@chan,\tt@msg}{S} \mid \tt@chan \in \Chan, \tt@msg \in \MMsg}$
and $\calG^\nu = \varset{\tt@place{\spn{X}}{S} \mid\discn X \in \NonT}$.
We notice that for all $\tt@chan \in \Chan, \tt@msg \in \MMsg$
\begin{align*}
\tt@bsfun@simple\paren{\Gamma(M \restriction \set{\snd{\tt@chan}{\tt@msg}})} =&
\tt@bsfun@simple\paren{\cchan{\tt@msg^{M(\snd{\tt@chan}{\tt@msg})}}{\tt@chan}} \\
=& \update{}{\tt@place{\tt@chan,\tt@msg}{S}}{\mset{\bullet^{M(\snd{\tt@chan}{\tt@msg})}}}\\
=& \update{}{\tt@place{\tt@chan,\tt@msg}{S}}{m_{P}(\snd{\tt@chan}{\tt@msg})} \\
=& \paren{m_{P} \compose \colmap^{-1}} \restriction \set{\tt@place{\tt@chan,\tt@msg}{S}}
\end{align*}
from which we conclude that
\begin{align*}
\paren{m_{P} \compose \colmap^{-1}} \restriction \calG^\Gamma =& 
\Oplus_{\substack{\tt@chan \in \Chan, \\ \tt@msg \in \MMsg}}
\tt@bsfun@simple\paren{\Gamma(M \restriction \set{\snd{\tt@chan}{\tt@msg}})} \\
=& 
\tt@bsfun@simple\paren{\Gamma(M)}
\end{align*}
and $\tt@bsfun@simple(\Gamma') = \tt@bsfun@simple(\Gamma) \oplus \paren{m_{P} \compose \colmap^{-1}} \restriction \calG^\Gamma$.

Further for each $X \in \NonT$ let
$s_X = \update{}{\tt@place{\spn{X}}{S}}{\mset{\bullet}}$; then
it is the case that
$\Oplus_{i=1}^{M(\spn(X))} s_X \in \tt@bsfun@Pi(\Pi(M \restriction \set{\spn{X}}))$
and thus $\Oplus_{X \in \NonT} \Oplus_{i=1}^{M(\spn(X))} s_X \in \tt@bsfun@Pi(\Pi(M))$.
Now $\Oplus_{i=1}^{M(\spn(X))} s_X = \paren{m_{P} \compose \colmap^{-1}} \restriction \set{\tt@place{\spn{X}}{S}}$
and thus $\Oplus_{X \in \NonT} \Oplus_{i=1}^{M(\spn(X))} s_X = \varparen{m_{P} \compose \colmap^{-1}} \restriction \calG^{\nu}$.
Since $\varparen{m_{P} \compose \colmap^{-1}} = \varparen{m_{P} \compose \colmap^{-1}} \restriction \calG^\Gamma \oplus
\varparen{m_{P} \compose \colmap^{-1}} \restriction \calG^\nu$ we have
$$s'_0 = (s'_1 \oplus s_2 \oplus \paren{m_{P} \compose \colmap^{-1} \restriction \calG^\nu}) \oplus (\tt@bsfun@simple(\Gamma') \oplus \tt@marking{\mathit{sim}}).$$

Let us inspect $s'_{\Pi(M)}$. Since $s'_{\Pi(M)} \in \tt@bsfun@Pi(\Pi(M))$ either $s'_{\Pi(M)} = m_{P} \compose \colmap^{-1} \restriction \calG^\nu$
or using a number of weak spawn rules we can transition in $s'_0$ from $m_{P} \compose \colmap^{-1} \restriction \calG^\nu$ to $s'_{\Pi(M)}$.
This implies (using a number of weak spawn rules after the first step) we can make the transition
$s \to[TS=\calN] s'_0 \to[TS=\calN,*] s'$.

Since $s' \in \tt@bsfun(\gamma' \parallel \Pi' \ChanPar \Gamma')$ was arbitrary we can conclude that in fact
$\tt@bsfun(\gamma \parallel \Pi \ChanPar \Gamma) \to[TS={\P[\calN]}] \tt@bsfun(\gamma' \parallel \Pi' \ChanPar \Gamma')$ and thus clearly
$$\tt@bsfun(\gamma \parallel \Pi \ChanPar \Gamma) \to[label=\gamma' \parallel \Pi' \ChanPar \Gamma',LTS={\P[\calN]},*] \tt@bsfun(\gamma' \parallel \Pi' \ChanPar \Gamma')$$
which is what we wanted to prove.
\end{asparaitem}
We can thus deduce that $\tt@bisim$ is a weak simulation.

Let us first prove a lemma that will be the basis of our proof that $\tt@bisim^{-1}$ is a weak simulation.
\begin{lemma*}
Suppose $\vec{s}(1),\cdots\vec{s}(m)$ is a sequence of configurations 
such that
\begin{inparaenum}[(A)]
\item for each $i \in \range{m-1}$
$\vec{s}(i) =s^i_0 \to[rule=\vec{r}^i(1),TS=\calN] s^i_1 \to[rule=\vec{r}^i(2),TS=\calN] \discn \cdots 
\to[rule=\vec{r}^i(m^i-1),TS=\calN] s^i_{m^i-1} \to[rule=\vec{r}^i(m^i),TS=\calN] s^i_{m^i} = \vec{s}(i+1)$ 
where only one rule, $\vec{r}^i(j^i)$ say, is not a weak spawn rule;
\item $\vec{s}(1) \in \tt@bsfun(\Pi \ChanPar \Gamma)$ and $\vec{s}(m) \in \tt@bsfun(\Pi' \ChanPar \Gamma')$; and
\item for all $2 \leq i < m$ there \emph{does not} exist $\Pi_0 \ChanPar \Gamma_0$ such that
$\vec{s}(i) \in \tt@bsfun(\Pi_0 \ChanPar \Gamma_0)$.
\end{inparaenum}
Then $\Pi \ChanPar \Gamma \to[TS=\calG] \Pi' \ChanPar \Gamma'$.
\end{lemma*}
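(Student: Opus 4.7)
The plan is to do a case analysis on the first non-weak-spawn rule $\vec{r}^1(j^1)$, using condition~(C) to constrain what the remaining segments can do. First I would establish an auxiliary observation: weak spawn rules preserve the represented APCPS configuration. Concretely, if $s$ represents some $\Pi_0 \ChanPar \Gamma_0$ and $s \to[TS=\calN] s'$ via a weak spawn rule $r_N$, then the token merely migrates between two equivalent representations of the process $N$ (a $\bullet$-token in the simple place for $\spn{N}$ versus an empty complex token in the $(N,+)$ complex place), so $s'$ also represents $\Pi_0 \ChanPar \Gamma_0$. Hence the semantics of each segment is determined entirely by its single non-weak-spawn rule.

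Next I would classify the intermediate configurations by their mode submarking. Both endpoints carry the sim-mode token, and the Query mode is irreversible once entered, so by condition~(C) every unrepresentable intermediate $\vec{s}(i)$ must carry either a CCFG-mode or a CCFG$+$-mode token, i.e.~we are mid-simulation of some $\calG$-rule $A \to B\,C$ with $C$ commutative. This dichotomy splits the analysis. In the direct case (no segment enters the CCFG widget), $\vec{r}^1(j^1)$ is a single complex or transfer rule corresponding to \Cref{altrule:interleave:std} with non-commutative right-hand side, \Cref{altrule:receive}, \Cref{altrule:send}, \Cref{altrule:spawn}, \Cref{altrule:dispatch:term}, or \Cref{altrule:dispatch:nonterm}. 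Each preserves the sim-mode token, so $\vec{s}(2)$ is representable and condition~(C) forces $m=2$. Unwinding the rule's definition against the representation function pinpoints $\vec{s}(2)$ as representing the unique $\Pi'' \ChanPar \Gamma''$ obtained from $\Pi \ChanPar \Gamma$ by the matching APCPS rule; since $\vec{s}(m)=\vec{s}(2)$ represents $\Pi' \ChanPar \Gamma'$, we conclude $\Pi'' = \Pi'$ and $\Gamma'' = \Gamma'$.

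The bulk of the work lies in the CCFG-widget case, which I expect to be the main obstacle. Here $\vec{r}^1(j^1)$ is the entry rule for some $\calG$-rule $A \to B\,C$ with $C$ commutative; it activates the widget by seeding the budget place with $|\calN|$ tokens and the $C$-place with one token. Condition~(C) forces the rest of the path to drive the widget to completion and exit via a rule that re-establishes the sim-mode token. Exploiting the invariant that the budget counter plus the total of the non-terminal counters inside the widget equals $|\calN|+1$, together with Ganty et al.'s correctness of the CCFG widget lifted to our coloured-token variant, the intermediate rules must enact a bounded-index derivation $C \toSF^* w$, each terminal $e \in \ComSigma$ of $w$ being injected as a coloured token into the single complex token under manipulation. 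The mode discipline---CCFG$+$ versus plain CCFG---controls when non-terminals may be exposed in the cache, aligning exactly with the $\TermCache$ versus $\MixedCache$ distinction and with the evolution of the cache character $\ell \to \overline{\ell} \to \overline{\ell}'$ threaded through the widget. After the exit rule, $\vec{s}(m)$ represents precisely the configuration obtained by one application of \Cref{altrule:interleave:com} with this $w$, yielding the desired single APCPS step. The main subtlety is ensuring that no admissible NNCT rule sequence inside the widget produces a coloured-token payload or cache-character update inconsistent with an actual $\calG$-derivation; this follows from the Ganty et al.\ widget-correctness argument adapted to coloured-token injection.
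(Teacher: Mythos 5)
Your proposal is correct and takes essentially the same route as the paper's own proof: the same preliminary observation that weak spawn rules preserve the represented APCPS configuration, the same case analysis on the unique non-weak-spawn rule of the first segment (direct simulation rules forcing $m=2$ versus entry into the CCFG widget), and the same appeal to the widget correctness of Ganty et al.\ together with the mode/budget discipline to recover exactly one application of the commutative rule. No gap worth noting.
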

\begin{proof}
It is easy to see that if a configuration $s \in \tt@bsfun(\Pi_0 \ChanPar \Gamma_0)$ 
for some $\Pi_0 \ChanPar \Gamma_0$ and $s \to[rule=r,TS=\calN] s'$ where $r$ is a weak spawn rule 
then $s' \in \tt@bsfun(\Pi_0 \ChanPar \Gamma_0)$. Hence if we look at the transitions
$\vec{s}(1) \to[rule=\vec{r}^1(1),TS=\calN] s^1_1 \to[rule=\vec{r}^1(2),TS=\calN] \cdots 
\to[rule=\vec{r}^1(j^1),TS=\calN] s^1_{j^1}$
we can conclude that for all $0 \leq l < j^1$ we have $s^1_l \in \tt@bsfun(\Pi \ChanPar \Gamma)$.
Thus we can decompose $s^1_{j^1-1}$ as $s^1_{j^1-1} = s^{\Pi} \oplus \tt@bsfun@simple(\Gamma) \oplus \tt@marking{\mathit{sim}}$ where $s^{\Pi} \in \tt@bsfun@Pi(\Pi)$.

Let us do a case analysis on which rule of the \nameref{def:alt-conc-sem} justifies 
the introduction of $\vec{r}^1(j^1)$ into $\calN$'s rules.
\begin{asparaitem}
\item \textbf{Rule} \Cref{altrule:interleave:com}.\newline
There are many candidate rules $\vec{r}^1(j^1)$ at first sight. However, we know that
$s^1_{j^1-1}$ does not have $\tt@marking{\mathit{CCFG}}$ or $\tt@marking{\mathit{CCFG}+}$
as a submarking. Hence we must have
$\vec{r}^1(j^1) = 
\commabr{((\tt@place{A, \ell_{k+1} \, \Xi}{C}, \tt@marking{\mathit{sim}}), (\tt@place{B, \overline{\ell} \, \Xi}{\mathit{CCFG}}, \emptyset, \tt@marking{\mathit{CCFG}+?} \oplus O))}$
for some $\calG$ rule $A \to B\, C$ where $C$ is commutative
and $\overline{\ell} \in \set{+,-}$ and $\mathit{CCFG}+? \in \set{\mathit{CCFG}+,\mathit{CCFG}}$ if $\ell_{k+1} = +$;
$\overline{\ell} = \ell_{k+1}$, and $\mathit{CCFG}+? = \mathit{CCFG}$ otherwise; and
$O = \tt@marking[\mathit{budget}]{|\calN|} \oplus \varupdate{}{\tt@place{C}{\mathit{CCFG}}}{\varmset{\bullet}}$.

Thus with $s^1_{j^1}$ a computation of the CCFG starts and until
for some $i,j$ the rule $\vec{r}^i(j)$ exists the CCFG computation, i.e.~$\vec{r}^i(j) = \commabr{((\tt@place{B, \overline{\ell}' \, \Xi}{\mathit{CCFG}}, \tt@marking{\mathit{CCFG}+?} \oplus \tt@marking[\mathit{budget}]{|\calN|+1}), (\tt@place{B, \overline{\ell}' \, \Xi}{C}, \emptyset, \tt@marking{\mathit{sim}}))}$
where $\overline{\ell}' \in \set{-} \union \set{A^{\text{cov}}_i \mid i \in \range{n} \M(w)(A^{\text{cov}}_i) > 0}$ if $\overline{\ell} = -$
and $\overline{\ell}' = \overline{\ell}$ otherwise.
Along this path between the configurations $s^1_{j^1}$ and $s^i_{j}$ the submarking $\tt@marking{\mathit{sim}}$
cannot appear and hence no weak spawning rule can apply. Thus
we know that $m^1 = j^1$, $j = 1$ and for $1 < i' < i$ it is the case that $m^{i'} = 1$.
We know that $s^{\Pi} \in \tt@bsfun@Pi(\Pi)$ and hence
  $\Pi = A\, M_{k+1} \, X_k \discn\, M_k \cdots X_1 \, M_1 \parallel \Pi_0$
  and $\commabr{\varupdate{}{\tt@place{A, \ell_{k+1} \, \Xi}{C}}{\mset{m}} \in \tt@bsfun@proc(A\, M_{k+1} \, X_k \, M_k \cdots X_1 \, M_1)}$ where $m = \tt@ctoken@fun(M_1,\ldots,M_{k+1})$.
Further it is clear that $m \in \tt@place{A, \ell_{k+1} \, \Xi}{C}$
in $s^1_{j^1-1}$.
Hence the CCFG widget guarantees \cite{Ganty:2012} that
$s^i_{j} = (s^{\Pi} \ominus \varupdate{}{\tt@place{A, \ell_{k+1} \, \Xi}{C}}{\mset{m}}
   \oplus \varupdate{}{\tt@place{B, \overline{\ell}' \, \Xi}{C}}{\mset{m'} }) 
   \oplus \tt@bsfun@simple(\Gamma) 
   \oplus \tt@marking{\mathit{sim}}$ 
   where $m' = \tt@ctoken@fun(M_1,\ldots,M_{k+1} \oplus \M(w))$ for some $w$ such that $C \to[*] w$.

Further the CCFG widget guarantees that if $\overline{\ell}' = +$ then $M_{k+1} \oplus \M(w) \in \TermCache$
and $M_{k+1} \oplus \M(w) \in \MixedCache$ otherwise.
Hence $\varupdate{}{\tt@place{B, \overline{\ell}' \, \Xi}{C}}{\mset{m}} \in \tt@bsfun@proc(B,\discn (M_{k+1} \oplus \M(w)) \, X_k \discn\, M_k \cdots X_1 \, M_1)$ and we can deduce that
  $s^{\Pi} \ominus \varupdate{}{\tt@place{A, \ell_{k+1} \, \Xi}{C}}{\mset{m}}
     \oplus \varupdate{}{\tt@place{B, \overline{\ell}' \, \Xi}{C}}{\mset{m'}} \in \tt@bsfun@Pi(B\, (M_{k+1} \oplus \M(w)) \, X_k \discn \, M_k \cdots X_1 \, M_1 \parallel \Pi_0)$.
It is thus the case that
$s^i_{1} \in \tt@bsfun@Pi(B\, (M_{k+1} \oplus \M(w)) \, X_k \, M_k\discn \cdots X_1 \, M_1 \parallel \Pi_0 \ChanPar \Gamma)$ which implies
  \begin{inparaenum}[(a)]
  \item $m^i = 1$, \item $m = i$ and thus \item $\Pi' \ChanPar \Gamma' = B\, (M_{k+1} \oplus \M(w)) \, X_k \, M_k\discn \cdots X_1 \, M_1 \parallel \Pi_0 \ChanPar \Gamma$.
  \end{inparaenum}
  We can easily check that $\Pi \ChanPar \Gamma \to[TS=\calG] \Pi' \ChanPar \Gamma'$ and
  hence $\Pi \ChanPar \Gamma \to[label=\Pi' \ChanPar \Gamma', LTS=\calG] \Pi' \ChanPar \Gamma'$ which is what we wanted to prove.
\item \emph{Rule} \Cref{altrule:interleave:std}.\newline
We can assume that
$\vec{r}^1(j^1) = 
\commabr{((\tt@place{A, \ell_{k+1} \, \Xi}{C}, \tt@marking{\mathit{sim}}), (\tt@place{\overline{\delta} \, \ell_{k+1} \, \Xi}{C}, \emptyset, \tt@marking{\mathit{sim}}))}$
for some non-terminals and cache characters
 $A \ell_{k+1} \, X_k \discn\, \ell_k \cdots X_1 \, \ell_1$
and $\calG$ rule $A \to \delta$
  where $k \leq K+1$ and
  $\overline{\delta}$ depends on $\delta$ and $k$: $\overline{\delta} = B, \, (+) \, C$ if $\delta = B\,C$, $k < K$; and
$\overline{\delta} = a$ if $\delta = a$ with $a \in \Sigma \union \set{\epsilon}$.

Hence we can deduce there exists a complex token $m$ located at place $\tt@place{A, \ell_{k+1} \, \Xi}{C}$
in $s^1_{j^1-1}$
 and 
 $s^1_{j^1} = (s^{\Pi} \ominus \varupdate{}{\tt@place{A, \ell_{k+1} \, \Xi}{C}}{\mset{m}}
   \oplus \varupdate{}{\tt@place{\overline{\delta}, \ell_{k+1} \, \Xi}{C}}{\mset{m}}) 
   \oplus \tt@bsfun@simple(\Gamma) 
   \oplus \tt@marking{\mathit{sim}}$.

We know $s^{\Pi} \in \tt@bsfun@Pi(\Pi)$ and hence
  $\Pi = A\, M_{k+1} \, X_k \discn\, M_k \cdots X_1\discn \, M_1 \parallel \Pi_0$
  and $\commabr{\varupdate{}{\tt@place{A, \ell_{k+1} \, \Xi}{C}}{\mset{m}} \in \tt@bsfun@proc(A\, M_{k+1} \, X_k \, M_k \cdots \discn X_1 \, M_1)}$.
  Then clearly $\varupdate{}{\tt@place{\overline{\delta}, \ell_{k+1} \, \Xi}{C}}{\mset{m}} \in \tt@bsfun@proc(\delta\, M_{k+1} \, X_k \, M_k \discn \cdots X_1 \, M_1)$ and we can deduce that
  $s^{\Pi} \ominus \varupdate{}{\tt@place{A, \ell_{k+1} \, \Xi}{C}}{\mset{m}}
     \oplus \varupdate{}{\tt@place{\overline{\delta}\, \ell_{k+1} \, \Xi}{C}}{\mset{m}} \in \tt@bsfun@Pi(\delta\, M_{k+1} \, X_k \discn \, M_k \cdots X_1 \, M_1 \parallel \Pi_0)$.

It is thus easy to see that $s^1_{j^1} \in \tt@bsfun@Pi(\delta\, M_{k+1} \, X_k \, M_k\discn \cdots X_1 \, M_1 \parallel \Pi_0 \ChanPar \Gamma)$.
  From our assumptions above this implies
  \begin{inparaenum}[(a)]
  \item $m^1 = j^1$, \item $m = 1$ and thus \item $\Pi' \ChanPar \Gamma' = \delta\, M_{k+1} \, X_k \, M_k\discn \cdots X_1 \, M_1 \parallel \discn \Pi_0 \ChanPar \Gamma$.
  \end{inparaenum}
  It is also easy to see that $\Pi \ChanPar \Gamma \to[TS=\calG] \Pi' \ChanPar \Gamma'$ and
  hence $\Pi \ChanPar \Gamma \to[label=\Pi' \ChanPar \Gamma', LTS=\calG] \Pi' \ChanPar \Gamma'$ which is what we wanted to prove.
\item \textbf{Rules} \Cref{altrule:receive},\Cref{altrule:send}. \newline 
We will prove the case that the introduction of $\vec{r}^1(j^1)$ is justified by rule \Cref{altrule:send}.
The case of \Cref{altrule:receive} are proved similarly.

We can thus assume that
$\vec{r}^1(j^1) = 
\commabr{((\tt@place{\snd{\tt@chan}{\tt@msg}, \ell_{k+1} \, \Xi}{C}, \tt@marking{\mathit{sim}}), 
 (\tt@place{\epsilon, \ell_{k+1} \, \Xi}{C}, \emptyset, \varupdate{}{\tt@place{\tt@chan,\tt@msg}{S}}{\mset{\bullet}} \oplus \tt@marking{\mathit{sim}}))}$ for some non-terminals and cache characters
 $\ell_{k+1} \, X_k \, \ell_k \cdots X_1 \, \ell_1$ where $k \leq K+1$.
 Hence there exists a complex token $m$ located at place $\tt@place{\snd{\tt@chan}{\tt@msg}, \ell_{k+1} \, \Xi}{C}$
 and 
 $s^1_{j^1} = (s^{\Pi} \ominus \varupdate{}{\tt@place{\snd{\tt@chan}{\tt@msg}, \ell_{k+1} \, \Xi}{C}}{\mset{m}}
   \oplus \varupdate{}{\tt@place{\epsilon, \ell_{k+1} \, \Xi}{C}}{\mset{m}}) 
   \oplus (\tt@bsfun@simple(\Gamma) 
   \oplus \varupdate{}{\tt@place{\tt@chan,\tt@msg}{S}}{\mset{\bullet}})
   \oplus \tt@marking{\mathit{sim}}$.

  Since $s^{\Pi} \in \tt@bsfun@Pi(\Pi)$ we can deduce that 
  $\Pi = \snd{\tt@chan}{\tt@msg}\, M_{k+1} \, X_k \discn\, M_k \cdots X_1 \, M_1 \parallel \Pi_0$
  and $\commabr{\varupdate{}{\tt@place{\snd{\tt@chan}{\tt@msg}, \ell_{k+1} \, \Xi}{C}}{\mset{m}} \in 
  \tt@bsfun@proc(\snd{\tt@chan}{\tt@msg}\,\discn M_{k+1} \, X_k \, M_k \cdots X_1 \, M_1)}$.
  Then clearly $\varupdate{}{\tt@place{\epsilon, \ell_{k+1} \, \Xi}{C}}{\mset{m}} \in \tt@bsfun@proc(\epsilon\, M_{k+1} \, X_k \, M_k \cdots X_1 \, M_1)$ and we can deduce that
  $s^{\Pi} \ominus \varupdate{}{\tt@place{\snd{\tt@chan}{\tt@msg}, \ell_{k+1} \, \Xi}{C}}{\mset{m}}
     \oplus \varupdate{}{\tt@place{\epsilon, \ell_{k+1} \, \Xi}{C}}{\mset{m}} \in \tt@bsfun@Pi(\epsilon\, M_{k+1} \, X_k \discn \, M_k \cdots X_1 \, M_1 \parallel \Pi_0)$.
  Further it easy to see that
  $\tt@bsfun@simple(\Gamma) 
     \oplus \varupdate{}{\tt@place{\tt@chan,\tt@msg}{S}}{\mset{\bullet}} =
    \tt@bsfun@simple(\Gamma \oplus \cchan{\tt@msg}{\tt@chan})$.
  We can then conclude that $s^1_{j^1} \in \tt@bsfun@Pi(\epsilon\, M_{k+1} \, X_k \, M_k\discn \cdots X_1 \, M_1 \parallel \Pi_0 \ChanPar \Gamma \oplus \cchan{\tt@msg}{\tt@chan})$.
  From our assumptions above this implies
  \begin{inparaenum}[(a)]
  \item $m^1 = j^1$, \item $m = 1$ and thus \item $\Pi' \ChanPar \Gamma' = \epsilon\, M_{k+1} \, X_k \, M_k\discn \cdots X_1 \, M_1 \parallel \Pi_0 \ChanPar \Gamma \oplus \cchan{\tt@msg}{\tt@chan}$.
  \end{inparaenum}
  It is also easy to see that $\Pi \ChanPar \Gamma \to[TS=\calG] \Pi' \ChanPar \Gamma'$ and
  hence $\Pi \ChanPar \Gamma \to[label=\Pi' \ChanPar \Gamma', LTS=\calG] \Pi' \ChanPar \Gamma'$ which is what we wanted to prove.
\item \textbf{Rule} \Cref{altrule:spawn}.\newline 
Since by definition $\vec{r}^1(j^1)$ is not a weak spawn rule, we can assume that
$\vec{r}^1(j^1) = 
\commabr{
((\tt@place{\spn{X}, \ell_{k+1} \, \Xi}{C}, \tt@marking{\mathit{sim}}), (\tt@place{\epsilon, \ell_{k+1} \, \Xi}{C}, \emptyset, s_X \oplus \tt@marking{\mathit{sim}}))
}$
 for $X \in \NonT$ and some non-terminals and cache characters
 $\ell_{k+1} \, X_k \, \ell_k \cdots X_1 \, \ell_1$ where $k \leq K+1$ and $s_X = \update{}{\tt@place{\spn{X}}{S}}{\mset{\bullet}}$.
 Hence there exists a complex token $m$ located at place $\tt@place{\spn{X}, \ell_{k+1} \, \Xi}{C}$
 and 
 $s^1_{j^1} = (s^{\Pi} \ominus \varupdate{}{\tt@place{\spn{X}, \ell_{k+1} \, \Xi}{C}}{\mset{m}}
   \oplus \varupdate{}{\tt@place{\epsilon, \ell_{k+1} \, \Xi}{C}}{\mset{m}}) 
   \oplus (\tt@bsfun@simple(\Gamma) 
   \oplus \varupdate{}{\tt@place{\spn{X}}{S}}{\mset{\bullet}})
   \oplus \tt@marking{\mathit{sim}}$.

  Since $s^{\Pi} \in \tt@bsfun@Pi(\Pi)$ we can deduce that 
  $\Pi = \spn{X}\, M_{k+1} \, X_k \discn\, M_k \cdots X_1 \, M_1 \parallel \Pi_0$
  and $\commabr{\varupdate{}{\tt@place{\spn{X}, \ell_{k+1} \, \Xi}{C}}{\mset{m}} \in \tt@bsfun@proc(\snd{\tt@chan}{\tt@msg}\,\discn M_{k+1} \, X_k \, M_k \cdots X_1 \, M_1)}$.
  Then clearly $\varupdate{}{\tt@place{\epsilon, \ell_{k+1} \, \Xi}{C}}{\mset{m}} \in \tt@bsfun@proc(\epsilon\, M_{k+1} \, X_k \, M_k \cdots X_1 \, M_1)$ and we can deduce that
  $s' \is s^{\Pi} \ominus \varupdate{}{\tt@place{\spn{X}, \ell_{k+1} \, \Xi}{C}}{\mset{m}}
     \oplus \varupdate{}{\tt@place{\epsilon, \ell_{k+1} \, \Xi}{C}}{\mset{m}} \in \tt@bsfun@Pi(\epsilon\, M_{k+1} \, X_k \discn \, M_k \cdots X_1 \, M_1 \parallel \Pi_0)$.
  Clearly $s_X \in \tt@bsfun@proc(X)$ and thus we can deduce that
  $s' \oplus s_X \in \tt@bsfun@Pi(\epsilon\, M_{k+1} \, X_k \discn \, M_k \cdots X_1 \, M_1 \parallel \Pi_0 \parallel X)$

  We can then conclude that $s^1_{j^1} \in \tt@bsfun@Pi(\epsilon\, M_{k+1} \, X_k \, M_k\discn \cdots X_1 \, M_1 \parallel \Pi_0 \parallel X \ChanPar \Gamma)$.
  From our assumptions above this implies
  \begin{inparaenum}[(a)]
  \item $m^1 = j^1$, \item $m = 1$ and thus \item $\Pi' \ChanPar \Gamma' = \epsilon\, M_{k+1} \, X_k \, M_k\discn \cdots X_1 \, M_1 \parallel \Pi_0 \parallel X\ChanPar \Gamma$.
  \end{inparaenum}
  It is also easy to see that $\Pi \ChanPar \Gamma \to[TS=\calG] \Pi' \ChanPar \Gamma'$ and
  hence $\Pi \ChanPar \Gamma \to[label=\Pi' \ChanPar \Gamma', LTS=\calG] \Pi' \ChanPar \Gamma'$ which is what we wanted to prove.
\item \textbf{Rule} \Cref{altrule:dispatch:term}--\Cref{altrule:dispatch:nonterm}.\newline
We will give a proof of the case that the introduction of $\vec{r}^1(j^1)$ is justified by rule \Cref{altrule:dispatch:nonterm}.
The proof in the case that $\vec{r}^1(j^1)$ introduced to implement rule \Cref{altrule:dispatch:term}
is analogous.

We can thus assume that
$\vec{r}^1(j^1) = 
\commabr{((\tt@place{\epsilon, \ell_{k+1} \, \Xi}{C}, \tt@marking{\mathit{sim}}), (\tt@place{\epsilon, \ell_{k+1} \, \Xi}{C}, P, \tt@marking{\mathit{sim}})}
$ for some non-terminals and cache characters
 $\ell_{k+1} \, X_k \, \ell_k\discn \cdots X_1 \, \ell_1$ where $k \leq K$, $P = \set{\tt@place{k+1, e}{I} \mid e \in \ComSigma}$ and $\ell_{k+1} \in \set{-} \union \PPL$.

 Hence there exists a complex token $m$ located at place $\tt@place{\epsilon, \ell_{k+1} \, \Xi}{C}$
 and 
 $s^1_{j^1} = (s^{\Pi} \ominus \varupdate{}{\tt@place{\epsilon, \ell_{k+1} \, \Xi}{C}}{\mset{m}}
   \oplus \varupdate{}{\tt@place{\epsilon, \ell_{k+1} \, \Xi}{C}}{\mset{m_{\overline{P}}}}) 
   \oplus (\tt@bsfun@simple(\Gamma) 
   \oplus (m_{P} \compose \colmap^{-1}))
   \oplus \tt@marking{\mathit{sim}}$
   where $m_P = m \restriction P$ and $m_{\overline{P}} = m \restriction (\Pinner \setminus P)$.

  Since $s^{\Pi} \in \tt@bsfun@Pi(\Pi)$ we can deduce that 
  $\Pi = \epsilon\, M_{k+1} \, X_k \discn\, M_k \cdots X_1 \, M_1 \parallel \Pi_0$
  and $\commabr{\varupdate{}{\tt@place{\epsilon, \ell_{k+1} \, \Xi}{C}}{\mset{m}} \in \tt@bsfun@proc(\snd{\tt@chan}{\tt@msg}\, M_{k+1} \discn \, X_k \, M_k \cdots X_1 \, M_1)}$.
  We further know that $\commabr{m = \tt@ctoken@fun(M_1,\ldots,M_{k},M_{k+1})}$ and $M_{k+1} \in \MixedCache$. It is easy to see that
  $m_{\overline{P}} = \tt@ctoken@fun(M_1,\ldots,M_{k},M')$ where we write $M' = M_{k+1} \restriction (\ComN \union \PPL)$.
  
  Then clearly $\varupdate{}{\tt@place{\epsilon \, \ell_{k+1} \, \Xi}{C}}{\mset{m_{\overline{P}}}} \in \tt@bsfun@proc(\epsilon, M' \, X_k \, M_k \cdots X_1 \, M_1)$ and we can deduce that
  $s^{\Pi} \ominus \varupdate{}{\tt@place{\epsilon, \ell_{k+1} \, \Xi}{C}}{\mset{m}}
     \oplus \varupdate{}{\tt@place{\epsilon, \ell_{k+1} \, \Xi}{C}}{\mset{m_{\overline{P}}}} \in \tt@bsfun@Pi(\epsilon
     \, M' \, X_k \discn \, M_k \cdots X_1 \, M_1 \parallel \Pi_0)$.

  Inspecting case \textbf{Rules} \Cref{altrule:dispatch:term},\Cref{altrule:dispatch:nonterm} in the proof
   of $\tt@bisim$ is a simulation we can see that
$\paren{m_{P} \compose \colmap^{-1}} \restriction \calG^\Gamma = \tt@bsfun@simple\paren{\Gamma(M_{k+1})}$
and $\paren{m_{P} \compose \colmap^{-1}} \restriction \calG^{\nu} \in \tt@bsfun@Pi(\Pi(M_{k+1}))$
where $\calG^\Gamma = \varset{\tt@place{\tt@chan,\tt@msg}{S} \mid \tt@chan \in \Chan, \tt@msg \in \MMsg}$
and $\calG^\nu = \set{\tt@place{\spn{X}}{S} \mid X \in \NonT}$. Further
$\paren{m_{P} \compose \colmap^{-1}} \restriction \calG^\Gamma \oplus \paren{m_{P} \compose \colmap^{-1}} \restriction \calG^{\nu} = \paren{m_{P} \compose \colmap^{-1}}$.

  Hence we obtain
  $\tt@bsfun@simple(\Gamma) 
     \oplus \paren{m_{P} \compose \colmap^{-1}} \restriction \calG^\Gamma =
    \tt@bsfun@simple(\Gamma \oplus \Gamma(M_{k+1}))$.
  We can then conclude that $s^1_{j^1} \in \tt@bsfun@Pi(\epsilon\, M_{k+1} \, X_k \, M_k\discn \cdots X_1 \, M_1 \parallel \Pi_0 \parallel \Pi(M_{k+1}) \ChanPar \Gamma \oplus \Gamma(M_{k+1}))$.
  From our assumptions above this implies
  \begin{inparaenum}[(a)]
  \item $m^1 = j^1$, \item $m = 1$ and thus \item $\Pi' \ChanPar \Gamma' = \epsilon\, M_{k+1} \, X_k \, M_k\discn \cdots X_1 \, M_1 \parallel \Pi_0 \parallel \Pi(M_{k+1}) \ChanPar \Gamma \discn\oplus \Gamma(M_{k+1})$.
  \end{inparaenum}
  It is also easy to see that $\Pi \ChanPar \Gamma \to[TS=\calG] \Pi' \ChanPar \Gamma'$ and
  hence $\Pi \ChanPar \Gamma \to[label=\Pi' \ChanPar \Gamma', LTS=\calG] \Pi' \ChanPar \Gamma'$ which is what we wanted to prove.
\end{asparaitem}
\end{proof}

Let us now turn to $\tt@bisim^{-1}$.
Suppose $\commabr{\varparen{\Pi \ChanPar \Gamma,\tt@bsfun(\Pi \ChanPar \Gamma)} \in \tt@bisim}$ and 
$\tt@bsfun(\Pi \ChanPar \Gamma) \to[label={\Pi' \ChanPar \Gamma'},LTS={\P[\calN]},*] S$.
Hence
$\tt@bsfun(\Pi \ChanPar \Gamma) \to[label={\Pi' \ChanPar \Gamma'},LTS={\P[\calN]},*] \tt@bsfun(\Pi' \ChanPar \Gamma') \to[label=\epsilon,LTS={\P[\calN]},*] S$.
By definition this implies that there exists non-empty subsets of $\Config$, $\vec{S}(1),\ldots,\vec{S}(m)$ say,
such that  $m \geq 1$,
$\tt@bsfun(\Pi \ChanPar \Gamma) = \vec{S}(1)
\to[label=\epsilon,LTS={\P[\calN]}] \vec{S}(2)
\to[label=\epsilon,LTS={\P[\calN]}] \cdots
\to[label=\epsilon,LTS={\P[\calN]}] \vec{S}(m-1)
\to[label={\Pi' \ChanPar \Gamma'},LTS={\P[\calN]}] \vec{S}(m) = \tt@bsfun(\Pi' \ChanPar \Gamma')$
Thus for $i \in \range{m}$ we may pick configuration $\vec{s}(i) \in \vec{S}(i)$
such that for each $i \in \range{m-1}$
$\vec{s}(i) =s^i_0 \to[rule=\vec{r}^i(1),TS=\calN] s^i_1 \to[rule=\vec{r}^i(2),TS=\calN] \cdots 
\to[rule=\vec{r}^i(m^i-1),TS=\calN] s^i_{m^i-1} \to[rule=\vec{r}^i(m^i),TS=\calN] s^i_{m^i} = \vec{s}(i+1)$ 
where only one rule, $\vec{r}^i(j^i)$ say, is not a weak spawn rule.
We further know that for all $2 \leq i < m$ there \emph{does not} exist $\Pi_0 \ChanPar \Gamma_0$ such that
$\vec{s}(i) \in \tt@bsfun(\Pi_0 \ChanPar \Gamma_0)$.
Since clearly $\vec{s}(1) \in \tt@bsfun(\Pi \ChanPar \Gamma)$ and $\vec{s}(m) \in \tt@bsfun(\Pi' \ChanPar \Gamma')$ the Lemma above applies to give us $\Pi \ChanPar \Gamma \to[label=\Pi' \ChanPar \Gamma', LTS=\calG] \Pi' \ChanPar \Gamma'$.
We can thus conclude that $\tt@bisim^{-1}$ is a weak simulation and hence
$\tt@bisim$ is a weak bisimulation.

Further suppose that we have $s \in \tt@bsfun(\Pi \ChanPar \Gamma)$ for some $\Pi \ChanPar \Gamma$
and $s' \in \tt@bsfun(\Pi' \ChanPar \Gamma')$ and $s \to[TS=\calN,*] s'$ then
we can decompose this path into a sequence of configurations
$\vec{s}(1) = s \to[TS=\calN] \vec{s}(2) \to[TS=\calN] \cdots \to[TS=\calN] \vec{s}(l) = s'$.
Let $i_1,\ldots,i_{l'}$ be the indices, in order, such that  $\vec{s}(i_j) \in \tt@bsfun(\Pi_j \ChanPar \Gamma_j)$ for some $\Pi_1 \ChanPar \Gamma_1, \ldots, \Pi_{l'} \ChanPar \Gamma_{l'}$.
Clearly for all $1 \leq j \leq l'$ we can apply the Lemma above to the path $\vec{s}(i_j) \to[TS=\calN,*] \vec{s}(i_{j+1})$ to obtain 
$\Pi_1 \ChanPar \Gamma_1 \to[label=\Pi_2 \ChanPar \Gamma_2, LTS=\calG] \Pi_2 \ChanPar \Gamma_2 \to[label=\Pi_3 \ChanPar \Gamma_3, LTS=\calG] \cdots \to[label=\Pi_{l'} \ChanPar \Gamma_{l'}, LTS=\calG] \Pi_{l'} \ChanPar \Gamma_{l'}$
which means that since $\tt@bisim$ is a weak bisimulation that
$\tt@bsfun(\Pi \ChanPar \Gamma) \to[TS={\P[\calN]},*] \tt@bsfun(\Pi' \ChanPar \Gamma')$.
Hence we can conclude that if 
$s \in \tt@bsfun(\Pi \ChanPar \Gamma)$ for some $\Pi \ChanPar \Gamma$
and $s' \in \tt@bsfun(\Pi' \ChanPar \Gamma')$ and $s \to[TS=\calN,*] s'$ then
$\tt@bsfun(\Pi \ChanPar \Gamma) \to[TS={\P[\calN]},*] \tt@bsfun(\Pi' \ChanPar \Gamma')$.

We know that $\Pi_0 = A^0_1 \parallel \cdots \parallel A^0_{n'}$ so let
$s_{\mathit{cov}} = \tt@marking{\mathit{Query}} \oplus \varpreupdate{}{\tt@place{A_1}{\mathit{cov}} \mapsto \mset{\bullet}, \ldots, \tt@place{A_n}{\mathit{cov}} \mapsto \mset{\bullet}} \oplus \varupdate[\mid \tt@chan \in \Chan, \tt@msg \in {\Msg[]}]{}{\tt@place{\tt@chan, \tt@msg}{S}}{\varmset{\bullet^{\Gamma^\text{cov}(c)(m)}}}$
and $s_{0} = \tt@marking{\mathit{sim}} \oplus \varupdate[\mid i \in \range{n'}]{}{\tt@place{\spn{A^0_i}}{S}}{\varmset{\bullet}} \oplus \varupdate[\mid \tt@chan \in \Chan, \tt@msg \in {\Msg[]}]{}{\tt@place{\tt@chan, \tt@msg}{S}}{\varmset{\bullet^{\Gamma_0(c)(m)}}}$.
We note that $s_{0} \in \tt@bsfun(\Pi_0 \ChanPar \Gamma_0)$.

We will now show that $(\calN,s_0,s_{\mathit{cov}})$ is a yes-instance of the coverability problem if and only if $\commabr{(\calG,\Pi_0 \ChanPar \Gamma_0,A^{\text{cov}}_1 \parallel \cdots \parallel A^{\text{cov}}_n \ChanPar \Gamma^{\text{cov}})}$ is a yes-instance of simple coverability.

Suppose $(\calN,s_0,s_{\mathit{cov}})$ is a yes-instance of the coverability problem. Then
there exists a $s' \in \Config$ such that $s_0 \to[TS=\calN,*] s'$ and $s_{\mathit{cov}} \leqconfig s'$.
That means that there exists a configuration $s''$ such that
$s_0 \to[TS=\calN,*] s''$, $s'' \to[TS=\calN] \vec{s}(1) \to[TS=\calN] \cdots \to[TS=\calN] \vec{s}(l)$, $\vec{s}(l) = s'$, $s'' \in \tt@bsfun(\Pi \ChanPar \Gamma)$ for some $\Pi \ChanPar \Gamma$, and for each $1 \leq i \leq l$ there does not exists a $\bar{\Pi} \ChanPar \bar{\Gamma}$ such that $\vec{s}(i) \in \tt@bsfun(\bar{\Pi} \ChanPar \bar{\Gamma})$.
We can thus appeal to our reasoning above to obtain that
$\tt@bsfun(\Pi_0 \ChanPar \Gamma_0) \to[TS={\P[\calN]},*] \tt@bsfun(\Pi \ChanPar \Gamma)$.
We can decompose this path to get
$\tt@bsfun(\Pi_0 \ChanPar \Gamma_0) 
\to[label=\Pi_1 \ChanPar \Gamma_1,LTS={\P[\calN]},*] \tt@bsfun(\Pi_1 \ChanPar \Gamma_1)
\to[label=\Pi_2 \ChanPar \Gamma_2,LTS={\P[\calN]},*] \cdots
\to[label=\Pi_{l'} \ChanPar \Gamma_{l'},LTS={\P[\calN]},*] \tt@bsfun(\Pi_{l'} \ChanPar \Gamma_{l'}) = \tt@bsfun(\Pi \ChanPar \Gamma)$.
Since 
$\tt@bisim$ is a weak bisimulation we thus know that
$\Pi_0 \ChanPar \Gamma_0
\to[label=\Pi_1 \ChanPar \Gamma_1,LTS=\calG,*] \Pi_1 \ChanPar \Gamma_1
\to[label=\Pi_2 \ChanPar \Gamma_2,LTS=\calG,*] \cdots
\to[label=\Pi_{l'} \ChanPar \Gamma_{l'},LTS=\calG,*] \Pi_{l'} \ChanPar \Gamma_{l'} = \Pi \ChanPar \Gamma$
and thus $\Pi_0 \ChanPar \Gamma_0 \to[TS=\calG,*] \Pi \ChanPar \Gamma$.

Let us turn to an inspection of $s''$. We know that $s'' \in \tt@bsfun(\Pi \ChanPar \Gamma)$,
for all $1 \leq i \leq l$ there \emph{does not} exist $\bar{\Pi} \ChanPar \bar{\Gamma}$ such that
$\vec{s}(i) \in \tt@bsfun(\bar{\Pi} \ChanPar \bar{\Gamma})$, and
$s_{\mathit{cov}} \leqconfig s' = \vec{s}(l)$.
Since $\tt@marking{\mathit{Query}} \leqconfig s_{\mathit{cov}}$,
we can conclude that $\tt@marking{\mathit{Query}} \leqconfig s'$.
This implies that for some $0 \leq i < l$ we have the transition
$\vec{s}(j) \to[rule=r,TS=\calN] \vec{s}(j+1)$ via the rule $r = (\tt@marking{\mathit{sim}},\tt@marking{\mathit{Query}})$ which switches from simulation mode to query mode and where we write $\vec{s}(0) = s''$.
Inspecting the rules of $\calN$ we note that this mode switch cannot be reversed.
We can deduce that $j = 0$ since using any other rule of $\calN$ for the transition
$s'' \to[rule=r,TS=\calN] \vec{s}(1)$ would either place $\vec{s}(1) \in \tt@bsfun(\bar{\Pi} \ChanPar \bar{\Gamma})$ for some $\bar{\Pi} \ChanPar \bar{\Gamma}$ immediately, or start a CCFG computation disabling $r$ until completion which would produce another $\vec{s}(j) \in \tt@bsfun(\Pi_0 \ChanPar \Gamma_0)$ for some $\Pi_0 \ChanPar \Gamma_0$ before $r$ could be enabled.
Hence we can conclude that $s''(\tt@place{A_i}{\mathit{cov}}) = \emptyset$ for all $1 \leq i \leq n$ since 
$s'' \in \tt@bsfun(\Pi \ChanPar \Gamma)$ and for all $1 \leq j \leq l$
such that $\vec{s}(j) \to[rule=\vec{r}(j),TS=\calN] \vec{s}(j+1)$ the rule 
$\vec{r}(j)$ is a rule of the form 
$((\tt@place{A^{\text{cov}}_i, \ell_{k+1} \, \Xi}{C}, \tt@marking{\mathit{Query}}), (\tt@place{\epsilon, (-) \, \Xi}{C}, \emptyset, \tt@marking{\mathit{Query}} \oplus \varupdate{}{\tt@place{A_i}{\mathit{cov}}}{\mset{\bullet}}))$ or
$((\tt@place{\epsilon, A^{\text{cov}}_i \, \Xi}{C}, \tt@marking{\mathit{Query}}), (\tt@place{\epsilon, (-) \, \Xi}{C}, \emptyset, \tt@marking{\mathit{Query}} \oplus \varupdate{}{\tt@place{A_i}{\mathit{cov}}}{\mset{\bullet}}))$.
Since $s_{\mathit{cov}} \leqconfig s'$ this means that at least there are indices $i_1, \ldots, i_n$ (not necessarily in order)
where for each $1 \leq j \leq n$ rule $\vec{r}(i_j)$ places a $\bullet$-token into place $\tt@place{A_j}{\mathit{cov}}$.
Since the rules $\vec{r}(1),\ldots,\vec{r}(l)$ only remove complex tokens from places of the form
$\tt@place{A, \ell_{k+1} \, \Xi}{C}$ or $\tt@place{\epsilon, A \, \Xi}{C}$ we can conclude that there exists
complex tokens $m_1,...,m_n$ located in places $p_1,...,p_n$ in configuration $s''$
such that for all $1 \leq j \leq n$ the place $p_j = \tt@place{A_j, \ell_{k+1} \, \Xi}{C}$ or $\tt@place{\epsilon, A_j \, \Xi}{C}$. Since $s'' \in \tt@bsfun(\Pi \ChanPar \Gamma)$ this implies there exists
processes $\pi_1,\ldots,\pi_n$ such that $\Pi = \pi_1 \parallel \cdots \parallel \pi_n \parallel \Pi_0$ and
each $\pi_j = A^{\text{cov}}_j\cdot \gamma_j$ or $\epsilon\,M_j\, \cdot \gamma_j$ with $M_j(A^{\text{cov}}_j) > 0$ for $1 \leq j \leq n$. Further it is also easy to see that $\Gamma^{\text{cov}}$ is covered by $\Gamma$.
Hence we can deduce that $\commabr{(\calG,\Pi_0 \ChanPar \Gamma_0,A^{\text{cov}}_1 \parallel \cdots \parallel A^{\text{cov}}_n \ChanPar \Gamma^{\text{cov}})}$ is a yes-instances of simple coverability for the alternative semantics.

For the other direction of the reduction suppose $\commabr{(\calG,\Pi_0 \ChanPar \Gamma_0,A^{\text{cov}}_1 \parallel \cdots \parallel A^{\text{cov}}_n \ChanPar \Gamma^{\text{cov}})}$
is a yes-instances of simple coverability in the alternative semantics. This means that  
$\Pi_0 \ChanPar \Gamma_0
\to[label=\Pi_1 \ChanPar \Gamma_1,LTS=\calG] \Pi_1 \ChanPar \Gamma_1
\to[label=\Pi_2 \ChanPar \Gamma_2,LTS=\calG] \cdots
\to[label=\Pi_{l} \ChanPar \Gamma_{l},LTS=\calG] \Pi_{l} \ChanPar \Gamma_{l} =: \Pi \ChanPar \Gamma$
where $\Pi = \pi_1 \parallel \cdots \parallel \pi_n \parallel \Pi_0$ and
each $\pi_j = A^{\text{cov}}_j\cdot \gamma_j$ or $\epsilon\,M_j\, \cdot \gamma_j$ with $M_j(A^{\text{cov}}_j) > 0$ for $1 \leq j \leq n$.

Since $\tt@bisim$ is a weak bisimulation we obtain
$\tt@bsfun(\Pi_0 \ChanPar \Gamma_0) 
\to[label=\Pi_1 \ChanPar \Gamma_1,LTS={\P[\calN]},*] \tt@bsfun(\Pi_1 \ChanPar \Gamma_1)
\to[label=\Pi_2 \ChanPar \Gamma_2,LTS={\P[\calN]},*] \cdots
\to[label=\Pi_{l-1} \ChanPar \Gamma_{l-1},LTS={\P[\calN]},*] \tt@bsfun(\Pi_{l-1} \ChanPar \Gamma_{l-1}) \discn
\to[label=\Pi \ChanPar \Gamma,LTS={\P[\calN]},*] \tt@bsfun(\Pi \ChanPar \Gamma)$.

Thus we may pick $\vec{s}(1),\ldots,\vec{s}(l)$ such that
for all $1 \leq j \leq l$ we have $\vec{s}(j) \in \tt@bsfun(\Pi_{j} \ChanPar \Gamma_{j})$
and for $j < l$ we have $\vec{s}(j) \to[TS=\calN] \vec{s}(j+1)$.
Hence $\vec{s}(1) \to[TS=\calN,*] \vec{s}(l)$. Further since 
$\tt@bsfun(\Pi_0 \ChanPar \Gamma_0) 
\to[TS={\P[\calN]},*] \tt@bsfun(\Pi_1 \ChanPar \Gamma_1)$ and $s_0 \to[TS=\calN,*] s$ for all $s \in \tt@bsfun(S \ChanPar \emptyset)$ 
we can conclude that $s_0 \to[TS=\calN,*] \vec{s}(l) =: s'$.

Let us now inspect $s'$. Since $s' \in \tt@bsfun(\Pi \ChanPar \Gamma)$ we can deduce that
there exists complex tokens $m_1,\ldots,m_n$ located in places
$p_1,...,p_n$ in configuration $s'$
such that for all $1 \leq j \leq n$ the place $p_j = \tt@place{A^{\text{cov}}_j, \ell_{k+1} \, \Xi}{C}$ or $\tt@place{\epsilon, A^{\text{cov}}_j \, \Xi}{C}$. Hence we may extend the derivation $s_0 \to[TS=\calN,*] s' \to[rule=r,TS=\calN] s'_0 \to[rule=\vec{r}(1),TS=\calN] s'_1 \to[rule=\vec{r}(2),TS=\calN] \cdots \to[rule=\vec{r}(n),TS=\calN] s'_n$
where $r = (\tt@marking{\mathit{sim}},\tt@marking{\mathit{Query}})$ switches from simulation mode to query mode
and the rule 
$\vec{r}(j) =
((\tt@place{A^{\text{cov}}_j, \ell_{k+1} \, \Xi}{C}, \tt@marking{\mathit{Query}}), (\tt@place{\epsilon, (-) \, \Xi}{C}, \emptyset, \tt@marking{\mathit{Query}} \oplus \varupdate{}{\tt@place{A_j}{\mathit{cov}}}{\mset{\bullet}}))$ or
$((\tt@place{\epsilon, A^{\text{cov}}_j \, \Xi}{C}, \tt@marking{\mathit{Query}}), (\tt@place{\epsilon, (-) \, \Xi}{C}, \emptyset, \tt@marking{\mathit{Query}} \oplus \varupdate{}{\tt@place{A_j}{\mathit{cov}}}{\mset{\bullet}}))$
and thus we can deduce that $s'_n$ has one $\bullet$-token located in each place 
$\tt@place{A_1}{\mathit{cov}},\ldots,\tt@place{A_n}{\mathit{cov}}$ and 
$\tt@marking{\mathit{Query}}$ as a submarking, i.e.~$s_{\text{cov}} \leqconfig s'_n$.
Hence we can deduce that $(\calN,s_0,s_{\mathit{cov}})$ is a yes-instance of the coverability problem.

Hence we can conclude that the simple coverability problem for $K$-shaped APCPS in the alternative semantics \Exptime\ reduces to the 
coverability problem of NNCT.
\makeatother
\end{proof}
\renewcommand{\PPL}{\calL}
\begin{customtheorem}[\ref{thm:NNCT-to-APCPS}]
Simple coverability, boundedness and termination for a total-transfer NNCT
\Exptime\ reduces to simple coverability, boundedness and termination respectively for a $\mathit{4}$-shaped APCPS in the alternative semantics.
\end{customtheorem}
\begin{proof}
\makeatletter

\newcommand{\tf@nonterm}[2]{N_{#1}^{\scriptscriptstyle #2}}
\newcommand{\tf@chan}[1]{\mathit{c}_{#1}}
\newcommand{\tf@msg}[1]{\mathit{msg}_{#1}}
\newcommand{\tf@bisim}{\mathit{R}}
\newcommand{\tf@bsfun}{\calF}
\newcommand{\tf@bsfun@chan}{\calF^{\Gamma}}
\newcommand{\tf@bsfun@proc}{\calF^{\Pi}}
\newcommand{\tf@smp@chan}{\widetilde{\Gamma}}

Fix a total-transfer NNCT $\calN = (\Psimple,\Pcomplex,\Pinner,\Rules,\colmap)$ and
an instance of coverability $\paren{\calN,s_0,s_{\text{cov}}}$ with a simple query, i.e.~$s_0(p) = s_{\text{cov}}(p) = \emptyset$ for
all $p \in \Pcomplex$. Let us also fix an enumeration of $\Psimple = \set{\enumelem{p}{1},\ldots,\enumelem{p}{\nsimple}}$, $\Pcomplex = \set{\enumelem{p'}{1},\ldots,\enumelem{p'}{\ncomplex}}$ and
$\Pinner = \set{\enumelem{p^{\I}}{1},\ldots,\enumelem{p^{\I}}{\ninner}}$.
Since $\calN$ is a total-transfer NNCT we know that for all $r \in \TransferRules$ it is the case that $r = ((p,I),(p',\Pinner,O))$.

Let us define an APCPS $\calG=(\Sigma, I, \NonT, \Rules, S)$ that will simulate $\calN$.
The APCPS $\calG$ has a channel $\tf@chan{p}$ for each simple or complex place $p$ plus a special channel $\tf@chan{?}$ and let the set of channels be $\Chan = \set{\tf@chan{p} \mid p \in \Psimple \union \Pcomplex} \union \set{\tf@chan{?}}$. 
For $\calG$'s messages let us first inspect $\calN$'s complex rules:
let $\Xi$ be the set of complex tokens representing ``injected'' coloured tokens in $\calN$'s complex rules
plus the set of complex tokens that are created by simple rules, i.e.~$\Xi = \set{\calc \mid ((p,I),(p',\calc,O)) \in \ComplexRules} \union \varset{\calc \mid p \in \Pcomplex, (I,O) \in \SimpleRules,  O(p)(\calc) \geq 1}$ then $\calG$ will have a message $m_{p,d}$ for each $p \in \Pcomplex$ and $d \in \Xi$
and two special messages $m_{p,\emptyset}, m_{p,\downarrow}$ for each $p \in \Pcomplex$. Further $\calG$ will have a message $\bullet$; and hence we can let the set of messages be 
$\MMsg = \set{\tf@msg{p,d} \mid p \in \Pcomplex, d \in \Xi \union \set{\emptyset,\downarrow}} \union \set{\bullet}$.

The APCPS $\calG$ will have non-terminals $\tf@nonterm{r}{}$, $\tf@nonterm{r}{I}$ for each rule $r \in \Rules$,
non-terminals $\tf@nonterm{r}{\S,O}$, $\tf@nonterm{r}{\C,O}$ for each $r \in \SimpleRules$,
and non-terminals $\tf@nonterm{r}{O}$, $\tf@nonterm{r}{\C}$ for each $r \in \ComplexRules \union \TransferRules$.
Additionally $\calG$ will have non-terminals $\tf@nonterm{}{\spn{p}}$, $\tf@nonterm{}{p}$
for every $p \in \Pcomplex$, a non-terminal $\tf@nonterm{}{\calc}$ for each $\calc \in \Xi$
and two special non-terminals
$\tf@nonterm{}{\spn{?}}$ and $\tf@nonterm{}{\mathit{sim}}$. Hence $\calG$'s set of non-terminals are
\begin{align*}
\NonT =& \set{\tf@nonterm{}{\spn{p}}, \tf@nonterm{}{p} \mid p \in \Pcomplex} 
	     \union \set{\tf@nonterm{}{\calc} \mid \calc \in \Xi} 
	     \union \set{\tf@nonterm{}{\spn{?}}, \tf@nonterm{}{\mathit{sim}}} \\	   
	   & \union \set{\tf@nonterm{r}{}, \tf@nonterm{r}{I} \mid r \in \Rules} 
	     \union \set{\tf@nonterm{r}{\S,O}, \tf@nonterm{r}{\C,O} \mid r \in \SimpleRules} \\
	   & \union \set{\tf@nonterm{r}{O}, \tf@nonterm{r}{\C} \mid r \in \ComplexRules \union \TransferRules} 
\end{align*}
Let us also define a designated non-terminal $A_{\text{cov}}$ to label the coverability query. We can then set $\calG$'s alphabet $\Sigma = \varset{\snd{\tf@chan{}}{\tf@msg{}},\rec{\tf@chan{}}{\tf@msg{}}, \spn{X} \mid \tf@chan{} \in \Chan, \tf@msg{} \in \MMsg, X \in \NonT}$.
Further let us use the standard independence relation for APCPS $I$ over $\Sigma \union \NonT$.

A configuration $s \in \Config$ will be represented by an APCPS configuration in the following way:
\begin{itemize}
\item for each simple place $p \in \Psimple$, the channel $\tf@chan{p}$ will contain $|s(p)|$  $\bullet$-messages --- 
	we can formalise this by a function $\tf@bsfun@chan$ that we define as
	$\tf@bsfun@chan(s) 	= \cchan{\bullet^{|s(\enumelem{p}{1})|}}{\tf@chan{\enumelem{p}{1}}}, \ldots, \cchan{\bullet^{|s(\enumelem{p}{\nsimple})|}}{\tf@chan{\enumelem{p}{\nsimple}}}$;
\item for each complex place $p \in \Pcomplex$, suppose $s(p) = \mset{m_1,\ldots,m_k}$ then for each $m_i$ there will be one process with a process-state $\tf@nonterm{}{p} \cdot \tf@smp@chan(m_i \circ \colmap^{-1}) \cdot \tf@nonterm{}{\spn{?}}$ where $\tf@smp@chan$ is a function that takes a mapping $m_0 : \Psimple \to \N$
and transforms it into a multiset $\M[\snd{\tf@chan{\Psimple}}{\bullet}]$ by
$\tf@smp@chan(m_0)(\snd{\tf@chan{p}}{\bullet}) = |m_0(p)|$ ---
we can formalise this with two functions 
$\tf@bsfun@proc(s,p) =  \Parallel_{i=1}^k \tf@nonterm{}{p} \cdot \tf@smp@chan(m_i \circ \colmap^{-1}) \cdot \tf@nonterm{}{\spn{?}}$ and $\tf@bsfun@proc(s) = \Parallel_{i=1}^{\ncomplex} \tf@bsfun@proc(s,\enumelem{p'}{i})$;
\item in addition we have one administrative process, that implements the execution of $\calN$'s rules and
is in the process-state $\tf@nonterm{}{\mathit{sim}}$ when representing the configuration $s$.
\end{itemize}
Formally, we define a representation function $\tf@bsfun$ by 
$\tf@bsfun(s) 		= \tf@nonterm{}{\mathit{sim}} \parallel \tf@bsfun@proc(s) \ChanPar \tf@bsfun@chan(s)$
and define the relation $\tf@bisim \subseteq \Config \times \APCPSConfig$
by $\tf@bisim 			= \set{(s,\tf@bsfun(s)) \mid s \in \Configuration}$.
We will prove that
$\tf@bisim$ is a weak bisimulation for $(\Config,\to[TS=\calN])$ and $(\commabr{\APCPSConfig,\to[TS=\calG]})$
where we write $\to[TS=\calG]$ for $\toCM$ here and in the following.

We will now define and explain how the administrative process is implementing $\calN$'s rules.

For each rule $r \in \Rules$ the APCPS $\calG$ has the rule
$\tf@nonterm{}{\mathit{sim}} \to \tf@nonterm{r}{} \tf@nonterm{}{\mathit{sim}}$
which guesses one of $\calN$'s rules to execute next.
Depending on whether $r$ is a simple, complex or transfer rule the implementation is different:
\begin{itemize}
\item $r = (I,O) \in \SimpleRules$ \newline
If $r$ is a simple rule, then $\tf@nonterm{r}{}$ rewrites to three non-terminals:
$\tf@nonterm{r}{I}$ which removes $\bullet$-messages as described by $I$,
$\tf@nonterm{r}{O,S}$ which sends $\bullet$-messages as described by the effect of $O$ on $\calN$'s simple places and
$\tf@nonterm{r}{O,C}$ which spawns new processes which will represent the newly created complex tokens as described by $O$ on $\calN$'s complex places. Let us enumerate the set 
$\Xi = \set{\enumelem{\calc}{1},\ldots,\enumelem{\calc}{n_{\Xi}}}$ then we can
formally implement the above by the following rules:
\begin{align*}
\tf@nonterm{r}{} 		\to &\tf@nonterm{r}{I} \cdot \tf@nonterm{r}{\S,O} \cdot \tf@nonterm{r}{\C,O}\\
\tf@nonterm{r}{I} 		\to &\paren{\rec{\tf@chan{\enumelem{p}{1}}}{\bullet}}^{|I(\enumelem{p}{1})|} \cdots 
		   					 \paren{\rec{\tf@chan{\enumelem{p}{\nsimple}}}{\bullet}}^{|I(\enumelem{p}{\nsimple})|}\\
\tf@nonterm{r}{\S,O} 	\to &\paren{\snd{\tf@chan{\enumelem{p}{1}}}{\bullet}}^{|O(\enumelem{p}{1})|} \cdots 
		 					 \paren{\snd{\tf@chan{\enumelem{p}{\nsimple}}}{\bullet}}^{|O(\enumelem{p}{\nsimple})|}\\
\tf@nonterm{r}{\C,O}	\to &\paren{\spn{\tf@nonterm{}{\spn{\enumelem{p'}{1}}}}
									\paren{\snd{\tf@chan{?}}{\tf@msg{\enumelem{p'}{1},\enumelem{\calc}{1}}}} 
									\paren{\rec{\tf@chan{?}}{\bullet}}
								   }^{O(\enumelem{p'}{1})(\enumelem{\calc}{1})} \cdots \\
							&\paren{\spn{\tf@nonterm{}{\spn{\enumelem{p'}{i}}}}
									\paren{\snd{\tf@chan{?}}{\tf@msg{\enumelem{p'}{i},\enumelem{\calc}{j}}}} 
									\paren{\rec{\tf@chan{?}}{\bullet}}
								   }^{O(\enumelem{p'}{i})(\enumelem{\calc}{j})} \cdots \\	   
							&\paren{\spn{\tf@nonterm{}{\spn{\enumelem{p'}{\ncomplex}}}}
									\paren{\snd{\tf@chan{?}}{\tf@msg{\enumelem{p'}{\ncomplex},\enumelem{\calc}{n_{\Xi}}}}} 
									\paren{\rec{\tf@chan{?}}{\bullet}}
								   }^{O(\enumelem{p'}{\ncomplex})(\enumelem{\calc}{n_{\Xi}})}
\shortintertext{and the rules for complex token representing processes we have the following rule for each $\calc \in \Xi$ and $p \in \Pcomplex$:}
\tf@nonterm{}{\spn{p}} \to& \paren{\rec{\tf@chan{?}}{\tf@msg{p,\calc}}} \paren{\snd{\tf@chan{?}}{\bullet}} \tf@nonterm{}{p} \cdot \tf@nonterm{}{\calc} \cdot \tf@nonterm{}{\spn{?}}
\shortintertext{and for each $\calc \in \Xi$:}
\tf@nonterm{}{\calc} \to &\paren{\snd{\tf@chan{\colmap\paren{\enumelem{p^{\I}}{1}}}}{\bullet}}^{|\calc(\enumelem{p^{\I}}{1})|} \cdots 
		   	   \paren{\snd{\tf@chan{\colmap\paren{\enumelem{p^{\I}}{\ninner}}}}{\bullet}}^{|\calc(\enumelem{p^{\I}}{\ninner})|}
\end{align*}
\item $r = ((p,I),(p',\calc,O)) \in \ComplexRules$ \newline
If $r$ is a complex rule, then $\tf@nonterm{r}{}$ rewrites to three non-terminals:
$\tf@nonterm{r}{I}$ which removes $\bullet$-messages as described by $I$,
$\tf@nonterm{r}{O}$ which sends $\bullet$-messages as described by $O$ and
$\tf@nonterm{r}{C}$ which forces one process representing one complex token $m$ in complex place $p$
to change state so that the process afterwards represents $m \oplus \calc$ in complex place $p'$.
Formally we can implement this with the following rules:
\begin{align*}
\tf@nonterm{r}{} 	\to &\tf@nonterm{r}{I} \cdot \tf@nonterm{r}O \cdot \tf@nonterm{r}{\C}\\
\tf@nonterm{r}{I} 	\to &\paren{\rec{\tf@chan{\enumelem{p}{1}}}{\bullet}}^{|I(\enumelem{p}{1})|} \cdots 
		   				 \paren{\rec{\tf@chan{\enumelem{p}{\nsimple}}}{\bullet}}^{|I(\enumelem{p}{\nsimple})|}\\
\tf@nonterm{r}{O} 	\to &\paren{\snd{\tf@chan{\enumelem{p}{1}}}{\bullet}}^{|O(\enumelem{p}{1})|} \cdots 
		   				 \paren{\snd{\tf@chan{\enumelem{p}{\nsimple}}}{\bullet}}^{|O(\enumelem{p}{\nsimple})|}\\
\tf@nonterm{r}{\C} 	\to &(\snd{\tf@chan{p}}{\tf@msg{p',\calc}}) \cdot (\rec{\tf@chan{p}}{\bullet})
\shortintertext{and for each $\calc \in \Xi$ and $p \in \Pcomplex$ we have the following rule for complex token representing processes:}
\tf@nonterm{}{p} \to& (\rec{\tf@chan{p}}{\tf@msg{p',\calc}}) \cdot (\snd{\tf@chan{p}}{\bullet}) \cdot \tf@nonterm{}{p'} \tf@nonterm{}{\calc}
\end{align*}
\item $r = ((p,I),(p',\Pinner,O)) \in \TransferRules$ \newline
If $r$ is a transfer rule, then $\tf@nonterm{r}{}$ rewrites to three non-terminals:
$\tf@nonterm{r}{I}$ which removes $\bullet$-messages as described by $I$,
$\tf@nonterm{r}{O}$ which sends $\bullet$-messages as described by $O$ and
$\tf@nonterm{r}{C}$ which forces one process representing one complex token $m$ in complex place $p$
to change state and
make its summary effective, which transfers $(m \circ \colmap^{-1})$ to the channels,
so that the process afterwards represents the empty complex token $\emptyset$ in complex place $p'$.
Formally we can implement this with the following rules:
\begin{align*}
\tf@nonterm{r}{}	\to &\tf@nonterm{r}{I} \cdot \tf@nonterm{r}{O} \cdot \tf@nonterm{r}{\C}\\
\tf@nonterm{r}{I} 	\to &\paren{\rec{\tf@chan{\enumelem{p}{1}}}{\bullet}}^{|I(\enumelem{p}{1})|} \cdots 
		   				 \paren{\rec{\tf@chan{\enumelem{p}{\nsimple}}}{\bullet}}^{|I(\enumelem{p}{\nsimple})|}\\
\tf@nonterm{r}{O} 	\to &\paren{\snd{\tf@chan{\enumelem{p}{1}}}{\bullet}}^{|O(\enumelem{p}{1})|} \cdots 
		   				 \paren{\snd{\tf@chan{\enumelem{p}{\nsimple}}}{\bullet}}^{|O(\enumelem{p}{\nsimple})|}\\
\tf@nonterm{r}{\C} 	\to &(\snd{\tf@chan{p}}{\tf@msg{p,\downarrow}}) \cdot (\rec{\tf@chan{p}}{\bullet}) 
						 \cdot (\snd{\tf@chan{?}}{\tf@msg{p',\emptyset}}) \cdot (\rec{\tf@chan{?}}{\bullet})
\shortintertext{and for each $p \in \Pcomplex$ we have the following rule for complex token representing processes:}
\tf@nonterm{}{p} \to& (\rec{\tf@chan{p}}{\tf@msg{p,\downarrow}}) \cdot (\snd{\tf@chan{p}}{\bullet})\\
\tf@nonterm{}{\spn{?}} \to& (\rec{\tf@chan{?}}{\tf@msg{p,\emptyset}})\cdot (\snd{\tf@chan{?}}{\bullet}) \cdot \tf@nonterm{}{p} \cdot \tf@nonterm{}{\spn{?}}
\end{align*}
\end{itemize}

Further we add two more rules:
\begin{gather*}
S \to \paren{\snd{\tf@chan{\enumelem{p}{1}}}{\bullet}}^{|s_{0}(\enumelem{p}{1})|} \cdots 
               \paren{\snd{\tf@chan{\enumelem{p}{\nsimple}}}{\bullet}}^{|s_{0}(\enumelem{p}{\nsimple})|} \tf@nonterm{}{\mathit{sim}}\\
\tf@nonterm{}{\mathit{sim}} \to \paren{\rec{\tf@chan{\enumelem{p}{1}}}{\bullet}}^{|s_{\text{cov}}(\enumelem{p}{1})|} \cdots 
                \paren{\rec{\tf@chan{\enumelem{p}{\nsimple}}}{\bullet}}^{|s_{\text{cov}}(\enumelem{p}{\nsimple})|} A_{\text{cov}}.
\end{gather*}
The first rule simply sets up the simulation from $s_0$.
The second rule has the purpose to encode $\calN$'s coverability query with the intention that 
a configuration $A_{\text{cov}} \parallel \Pi \ChanPar \Gamma$ is only reachable if and only if
$s_{\text{cov}}$ is coverable.

First, let us note that clearly $\calG$ can be constructed from $\calN$ in \Exptime.
Also all non-terminals except for the $\tf@nonterm{r}{O}$, $\tf@nonterm{r}{O,S}$ and $\tf@nonterm{}{\calc}$
are non-commutative. It is easy to see that $\calG$ has shaped stacks since the only non-terminal loop increasing
the ``call-stack'' is in the rule
$$\tf@nonterm{}{p} \to (\rec{\tf@chan{p}}{\tf@msg{p',\calc}}) \cdot (\snd{\tf@chan{p}}{\bullet}) \cdot \tf@nonterm{}{p'} \tf@nonterm{}{\calc}$$
and $\tf@nonterm{}{\calc}$ is a commutative non-terminal. It is easy to see that picking $K = 4$ is adequate.

Before we go on to prove $\tf@bisim$ is a weak bisimulation let us first analyse
$\tf@bsfun@proc$ and $\tf@bsfun@chan$.

It is easy to see that for all $s,s' \in \Config$ we have
\begin{gather*}
 \tf@bsfun@proc(s) \parallel \tf@bsfun@proc(s') = \tf@bsfun@proc(s \oplus s'),\\
 \tf@bsfun@chan(s) \oplus \tf@bsfun@chan(s') = \tf@bsfun@chan(s \oplus s'),
\shortintertext{if for all $p \in \Psimple$ we have $s'(p) = \emptyset$ then} 
\tf@bsfun@chan(s) = \tf@bsfun@chan(s \oplus s') = \tf@bsfun@chan(s \ominus s')
\shortintertext{and if for all $p \in \Pcomplex$ we have $s'(p) = \emptyset$ then} 
\tf@bsfun@proc(s) = \tf@bsfun@proc(s \oplus s') = \tf@bsfun@proc(s \ominus s')
\end{gather*} 

Let us label APCPS and NNCT transitions in the following way:
for $s,s' \in \Configuration$ such that $s \to_{\calN} s'$ let us label $\to_{\calN}$ such that
we write $s \to[label=s',LTS=\calN] s'$;
for $\Pi \ChanPar \Gamma,\Pi' \ChanPar \Gamma' \in \APCPSConfig$  such that 
$\Pi \ChanPar \Gamma \to[TS={\calG}] \Pi' \ChanPar \Gamma'$ and $\exists s_0 \in \Configuration$ such that $\tf@bsfun(s_0) = \Pi' \ChanPar \Gamma'$ let us label
$\to[LTS={\calG}]$ such that we write $\Pi \ChanPar \Gamma \to[label=s_0,LTS=\calG] \Pi' \ChanPar \Gamma'$;
if however $\nexists s_0 \in \Configuration$ such that $\tf@bsfun(s_0) = s'$ then let us label
$\to[TS={\calG}]$ such that we write $\Pi \ChanPar \Gamma \to[label=\epsilon,LTS=\calG] \Pi' \ChanPar \Gamma'$.

Let us clarify the definition of a weak bisimulation for a labelled transition system.
\begin{definition*}[(Weak) simulation, bisimulation]
Suppose $(S,\to_S)$ and $(S',\to_{S'})$ are labelled transition systems we say a relation $\mathfrak{R} \subseteq S \times S'$ is a (weak)
\emph{simulation} just if for all $(s,s') \in \mathfrak{R}$, if for some $t \in S$ we have $s \to[label=\alpha,LTS=S] t$ ($s \to[label=\alpha,LTS=S,*] t$) then there exists 
$t' \in S'$ such that $s' \to[label=\alpha,LTS=S'] t'$ ($s' \to[label=\alpha,LTS=S',*] t'$) and $(t,t') \in \mathfrak{R}$. 
We say $\mathfrak{R}$ is a (weak) \emph{bisimulation}
just if both $\mathfrak{R}$ and $\mathfrak{R}^{-1}$ are (weak) simulations.
\end{definition*}

Let us clarify some notation: if $\Gamma \in (\Chan \rightarrow \M[\MMsg])$ such that
for $\Gamma', \Gamma'' \in (\Chan \rightarrow \M[\MMsg])$ we can decompose $\Gamma$ such that $\Gamma = \Gamma' \oplus \Gamma'' $ then $\Gamma' = \Gamma \ominus \Gamma''$.

Let us now prove that $\tf@bisim$ above is a weak simulation. So suppose
we have $(s,\tf@bsfun(s)) \in \tf@bisim$ and $s \to[label=s_0,LTS=\calN,*] s'$ then from the labelling 
we clearly have $s \to[label=s',LTS=\calN] s'$. Hence there exists a $r \in \Rules$ such that
$s \to[rule=r,TS=\calN] s'$. We will perform a case analysis on $r$.
\begin{asparaitem}
\item \emph{Case: } $r \in \SimpleRules$. \newline
Then $r = (I,O)$ and $s' = (s \ominus I) \oplus O$ and also $s\ominus I \in \Config$.
We know that $\tf@bsfun(s) = \tf@nonterm{}{\mathit{sim}} \parallel \tf@bsfun@proc(s) \ChanPar \tf@bsfun@chan(s)$ and 
since $s\ominus I \in \Config$ we have that $|s(p_0)| \geq |I(p_0)|$ for all $p_0 \in \Psimple$.
Hence by definition 
$$\tf@bsfun@chan(s) = \Gamma \oplus \cchan{\bullet^{|I(\enumelem{p}{1})|}}{\tf@chan{\enumelem{p}{1}}} \oplus \cdots \oplus \cchan{\bullet^{|I(\enumelem{p}{\nsimple})|}}{\tf@chan{\enumelem{p}{\nsimple}}}$$
for some $\Gamma \in (\Chan \rightarrow \M[\MMsg])$.
Hence we can see that we can perform the following transitions:
\begin{align*}
\tf@bsfun(s) &\to[label=\epsilon,LTS=\calG] \tf@nonterm{r}{} \tf@nonterm{}{\mathit{sim}} \parallel \tf@bsfun@proc(s) \ChanPar \tf@bsfun@chan(s)\\
			 &\to[label=\epsilon,LTS=\calG] \tf@nonterm{r}{I} \tf@nonterm{r}{O,S} \tf@nonterm{r}{O,C} \tf@nonterm{}{\mathit{sim}} \parallel \tf@bsfun@proc(s) \ChanPar \tf@bsfun@chan(s)\\
			 &\to[label=\epsilon,LTS=\calG] \paren{\rec{\tf@chan{\enumelem{p}{1}}}{\bullet}}^{|I(\enumelem{p}{1})|} \cdots 
		   					 \paren{\rec{\tf@chan{\enumelem{p}{\nsimple}}}{\bullet}}^{|I(\enumelem{p}{\nsimple})|}\\
		   	 &\hspace{2.5em}	 \tf@nonterm{r}{O,S} \tf@nonterm{r}{O,C} \tf@nonterm{}{\mathit{sim}} \parallel \tf@bsfun@proc(s) \ChanPar \tf@bsfun@chan(s)\\
		   	 &\to[label=\epsilon,LTS=\calG] \paren{\rec{\tf@chan{\enumelem{p}{1}}}{\bullet}}^{|I(\enumelem{p}{1})|-1} \cdots 
		   					 \paren{\rec{\tf@chan{\enumelem{p}{\nsimple}}}{\bullet}}^{|I(\enumelem{p}{\nsimple})|}\\
		   	 &\hspace{2.5em}	 \tf@nonterm{r}{O,S} \tf@nonterm{r}{O,C} \tf@nonterm{}{\mathit{sim}} \parallel \tf@bsfun@proc(s) \ChanPar \tf@bsfun@chan(s) \ominus \cchan{\bullet}{\tf@chan{\enumelem{p}{1}}}\\
		   	 &\to[label=\epsilon,LTS=\calG] \cdots\\
		   	 &\to[label=\epsilon,LTS=\calG] \paren{\rec{\tf@chan{\enumelem{p}{2}}}{\bullet}}^{|I(\enumelem{p}{2})|} \cdots 
		   					 \paren{\rec{\tf@chan{\enumelem{p}{\nsimple}}}{\bullet}}^{|I(\enumelem{p}{\nsimple})|}\\
		   	 &\hspace{2.5em}	 \tf@nonterm{r}{O,S} \tf@nonterm{r}{O,C} \tf@nonterm{}{\mathit{sim}} \parallel \tf@bsfun@proc(s) \ChanPar \tf@bsfun@chan(s) \ominus \cchan{\bullet^{|I(\enumelem{p}{1})|}}{\tf@chan{\enumelem{p}{1}}}\\
		   	 &\to[label=\epsilon,LTS=\calG] \cdots\\
		   	 &\to[label=\epsilon,LTS=\calG]	\tf@nonterm{r}{O,S} \tf@nonterm{r}{O,C} \tf@nonterm{}{\mathit{sim}} \parallel \tf@bsfun@proc(s) \ChanPar \tf@bsfun@chan(s) \ominus \cchan{\bullet^{|I(\enumelem{p}{1})|}}{\tf@chan{\enumelem{p}{1}}}\\
		   	 &\hspace{2.5em}	\ominus \cchan{\bullet^{|I(\enumelem{p}{2})|}}{\tf@chan{\enumelem{p}{2}}} \ominus \cdots 
		   	 \ominus \cchan{\bullet^{|I(\enumelem{p}{\nsimple})|}}{\tf@chan{\enumelem{p}{\nsimple}}}\\
		   	 &\hspace{2.5em}=\tf@nonterm{r}{O,S} \tf@nonterm{r}{O,C} \tf@nonterm{}{\mathit{sim}} \parallel \tf@bsfun@proc(s) \ChanPar \tf@bsfun@chan(s \ominus I)
\end{align*}
We can expand $\tf@nonterm{r}{O,S}$ and perform the send actions similarly to the way we expanded
$\tf@nonterm{r}{I}$ and its receive actions.
\begin{align*}
\tf@bsfun(s) &\to[label=\epsilon,LTS=\calG,*]  
				\paren{\snd{\tf@chan{\enumelem{p}{1}}}{\bullet}}^{|O(\enumelem{p}{1})|} \cdots 
		 		\paren{\snd{\tf@chan{\enumelem{p}{\nsimple}}}{\bullet}}^{|O(\enumelem{p}{\nsimple})|}\\
			&\hspace{2.5em}  		   
				\tf@nonterm{r}{O,C} \tf@nonterm{}{\mathit{sim}} \parallel 
				\tf@bsfun@proc(s) \ChanPar \tf@bsfun@chan(s \ominus I)\\
			&\to[label=\epsilon,LTS=\calG,*]   
				\tf@nonterm{r}{O,C} \tf@nonterm{}{\mathit{sim}} \parallel 
				\tf@bsfun@proc(s) \ChanPar \tf@bsfun@chan(s \ominus I) \oplus 
				\cchan{\bullet^{|O(\enumelem{p}{1})|}}{\tf@chan{\enumelem{p}{1}}} \\
			&\hspace{2.5em} 		   
				\oplus \cdots \oplus \cchan{\bullet^{|O(\enumelem{p}{\nsimple})|}}{\tf@chan{\enumelem{p}{\nsimple}}}\\
			&\hspace{2.5em}
				= \tf@nonterm{r}{O,C} \tf@nonterm{}{\mathit{sim}} \parallel 
				  \tf@bsfun@proc(s) \ChanPar \tf@bsfun@chan(s \ominus I \oplus (O \restriction \Psimple))
\end{align*}
We continue with expanding $\tf@nonterm{r}{O,C}$. Similarly to above we can then perform all the spawns and synchronisations.
\begin{align*}
\tf@bsfun(s) &\to[label=\epsilon,LTS=\calG,*] 
				\paren{\spn{\tf@nonterm{}{\spn{\enumelem{p'}{1}}}}
						\paren{\snd{\tf@chan{?}}{\tf@msg{\enumelem{p'}{1},\enumelem{\calc}{1}}}} 
						\paren{\rec{\tf@chan{?}}{\bullet}}
					   }^{O(\enumelem{p'}{1})(\enumelem{\calc}{1})} \cdots \\
			&\hspace{2.5em}  
				\paren{\spn{\tf@nonterm{}{\spn{\enumelem{p'}{i}}}}
						\paren{\snd{\tf@chan{?}}{\tf@msg{\enumelem{p'}{i},\enumelem{\calc}{j}}}} 
						\paren{\rec{\tf@chan{?}}{\bullet}}
					   }^{O(\enumelem{p'}{i})(\enumelem{\calc}{j})} \cdots \\	   
			&\hspace{2.5em}  
				\paren{\spn{\tf@nonterm{}{\spn{\enumelem{p'}{\ncomplex}}}}
						\paren{\snd{\tf@chan{?}}{\tf@msg{\enumelem{p'}{\ncomplex},\enumelem{\calc}{n_{\Xi}}}}} 
						\paren{\rec{\tf@chan{?}}{\bullet}}
					   }^{O(\enumelem{p'}{\ncomplex})(\enumelem{\calc}{n_{\Xi}})}\\
			&\hspace{2.5em}  
				\tf@nonterm{}{\mathit{sim}} \parallel \tf@bsfun@proc(s) 
				\ChanPar \tf@bsfun@chan(s \ominus I \oplus (O \restriction \Psimple))\\
			&\to[label=\epsilon,LTS=\calG]
				\paren{\snd{\tf@chan{?}}{\tf@msg{\enumelem{p'}{1},\enumelem{\calc}{1}}}} 
					\paren{\rec{\tf@chan{?}}{\bullet}}\\
			&\hspace{2.5em}  		
					\paren{\spn{\tf@nonterm{}{\spn{\enumelem{p'}{1}}}}
						\paren{\snd{\tf@chan{?}}{\tf@msg{\enumelem{p'}{1},\enumelem{\calc}{1}}}} 
						\paren{\rec{\tf@chan{?}}{\bullet}}
					   }^{O(\enumelem{p'}{1})(\enumelem{\calc}{1})-1} \cdots \\
			&\hspace{2.5em}  
				\paren{\spn{\tf@nonterm{}{\spn{\enumelem{p'}{i}}}}
						\paren{\snd{\tf@chan{?}}{\tf@msg{\enumelem{p'}{i},\enumelem{\calc}{j}}}} 
						\paren{\rec{\tf@chan{?}}{\bullet}}
					   }^{O(\enumelem{p'}{i})(\enumelem{\calc}{j})} \cdots \\	   
			&\hspace{2.5em}  
				\paren{\spn{\tf@nonterm{}{\spn{\enumelem{p'}{\ncomplex}}}}
						\paren{\snd{\tf@chan{?}}{\tf@msg{\enumelem{p'}{\ncomplex},\enumelem{\calc}{n_{\Xi}}}}} 
						\paren{\rec{\tf@chan{?}}{\bullet}}
					   }^{O(\enumelem{p'}{\ncomplex})(\enumelem{\calc}{n_{\Xi}})} \\
			 &\hspace{2.5em}  
				\tf@nonterm{}{\mathit{sim}} 
				\parallel \tf@nonterm{}{\spn{\enumelem{p'}{1}}}
				\parallel \tf@bsfun@proc(s) 
				\ChanPar \tf@bsfun@chan(s \ominus I \oplus (O \restriction \Psimple))
\end{align*}
Using the rule $\tf@nonterm{}{\spn{\enumelem{p'}{1}}} \to \paren{\rec{\tf@chan{?}}{\tf@msg{p,\enumelem{\calc}{1}}}} \paren{\snd{\tf@chan{?}}{\bullet}} \tf@nonterm{}{\enumelem{p'}{1}} \cdot \tf@nonterm{}{\enumelem{\calc}{1}} \cdot \tf@nonterm{}{\spn{?}}$
we can derive
\begin{align*}
\tf@nonterm{}{\enumelem{\calc}{1}} \to[*] \paren{\snd{\tf@chan{\colmap\paren{\enumelem{p^{\I}}{1}}}}{\bullet}}^{|\enumelem{\calc}{1}(\enumelem{p^{\I}}{1})|} \cdots 
            \paren{\snd{\tf@chan{\colmap\paren{\enumelem{p^{\I}}{\ninner}}}}{\bullet}}&
            \vphantom{\paren{\snd{\tf@chan{\colmap\paren{\enumelem{p^{\I}}{\ninner}}}}{\bullet}}}^{|\enumelem{\calc}{1}(\enumelem{p^{\I}}{\ninner})|} \\
            &=: w
\end{align*}
and the equality $\M(w) = \tf@smp@chan(\enumelem{\calc}{1} \circ \colmap^{-1})$ holds. Thus
\begin{align*}				
\tf@bsfun(s) &\to[label=\epsilon,LTS=\calG,*]
				\paren{\snd{\tf@chan{?}}{\tf@msg{\enumelem{p'}{1},\enumelem{\calc}{1}}}} 
					\paren{\rec{\tf@chan{?}}{\bullet}}\\
			&\hspace{1.5em}  		
					\paren{\spn{\tf@nonterm{}{\spn{\enumelem{p'}{1}}}}
						\paren{\snd{\tf@chan{?}}{\tf@msg{\enumelem{p'}{1},\enumelem{\calc}{1}}}} 
						\paren{\rec{\tf@chan{?}}{\bullet}}
					   }^{O(\enumelem{p'}{1})(\enumelem{\calc}{1})-1} \cdots \\
			&\hspace{1.5em}  
				\paren{\spn{\tf@nonterm{}{\spn{\enumelem{p'}{i}}}}
						\paren{\snd{\tf@chan{?}}{\tf@msg{\enumelem{p'}{i},\enumelem{\calc}{j}}}} 
						\paren{\rec{\tf@chan{?}}{\bullet}}
					   }^{O(\enumelem{p'}{i})(\enumelem{\calc}{j})} \cdots \\	   
			&\hspace{1.5em}  
				\paren{\spn{\tf@nonterm{}{\spn{\enumelem{p'}{\ncomplex}}}}
						\paren{\snd{\tf@chan{?}}{\tf@msg{\enumelem{p'}{\ncomplex},\enumelem{\calc}{n_{\Xi}}}}} 
						\paren{\rec{\tf@chan{?}}{\bullet}}
					   }^{O(\enumelem{p'}{\ncomplex})(\enumelem{\calc}{n_{\Xi}})} \\
			 &\hspace{1.5em}  
				\tf@nonterm{}{\mathit{sim}} 
				\parallel \paren{\rec{\tf@chan{?}}{\tf@msg{p,\calc}}} \paren{\snd{\tf@chan{?}}{\bullet}} \tf@nonterm{}{\enumelem{p'}{1}} \cdot \tf@smp@chan(\enumelem{\calc}{1} \circ \colmap^{-1}) \cdot \tf@nonterm{}{\spn{?}}\\
			&\hspace{1.5em}  	
				\parallel \tf@bsfun@proc(s) 
				\ChanPar \tf@bsfun@chan(s \ominus I \oplus (O \restriction \Psimple))\\
			&\to[label=\epsilon,LTS=\calG,*]				
					\paren{\spn{\tf@nonterm{}{\spn{\enumelem{p'}{1}}}}
						\paren{\snd{\tf@chan{?}}{\tf@msg{\enumelem{p'}{1},\enumelem{\calc}{1}}}} 
						\paren{\rec{\tf@chan{?}}{\bullet}}
					   }^{O(\enumelem{p'}{1})(\enumelem{\calc}{1})-1} \cdots \\
			&\hspace{1.5em}  
				\paren{\spn{\tf@nonterm{}{\spn{\enumelem{p'}{i}}}}
						\paren{\snd{\tf@chan{?}}{\tf@msg{\enumelem{p'}{i},\enumelem{\calc}{j}}}} 
						\paren{\rec{\tf@chan{?}}{\bullet}}
					   }^{O(\enumelem{p'}{i})(\enumelem{\calc}{j})} \cdots \\	   
			&\hspace{1.5em}  
				\paren{\spn{\tf@nonterm{}{\spn{\enumelem{p'}{\ncomplex}}}}
						\paren{\snd{\tf@chan{?}}{\tf@msg{\enumelem{p'}{\ncomplex},\enumelem{\calc}{n_{\Xi}}}}} 
						\paren{\rec{\tf@chan{?}}{\bullet}}
					   }^{O(\enumelem{p'}{\ncomplex})(\enumelem{\calc}{n_{\Xi}})} \\
			 &\hspace{1.5em}  
				\tf@nonterm{}{\mathit{sim}} 
				\parallel \tf@nonterm{}{\enumelem{p'}{1}} \cdot \tf@smp@chan(\enumelem{\calc}{1} \circ \colmap^{-1}) \cdot \tf@nonterm{}{\spn{?}}\\
			&\hspace{1.5em}  	
				\parallel \tf@bsfun@proc(s) 
				\ChanPar \tf@bsfun@chan(s \ominus I \oplus (O \restriction \Psimple))\\
\shortintertext{we can continue like this:}		
\tf@bsfun(s) &\to[label=\epsilon,LTS=\calG,*]				
					\paren{\spn{\tf@nonterm{}{\spn{\enumelem{p'}{1}}}}
						\paren{\snd{\tf@chan{?}}{\tf@msg{\enumelem{p'}{1},\enumelem{\calc}{2}}}} 
						\paren{\rec{\tf@chan{?}}{\bullet}}
					   }^{O(\enumelem{p'}{1})(\enumelem{\calc}{2})} \cdots \\
			&\hspace{1.5em}  
				\paren{\spn{\tf@nonterm{}{\spn{\enumelem{p'}{i}}}}
						\paren{\snd{\tf@chan{?}}{\tf@msg{\enumelem{p'}{i},\enumelem{\calc}{j}}}} 
						\paren{\rec{\tf@chan{?}}{\bullet}}
					   }^{O(\enumelem{p'}{i})(\enumelem{\calc}{j})} \cdots \\	   
			&\hspace{1.5em}  
				\paren{\spn{\tf@nonterm{}{\spn{\enumelem{p'}{\ncomplex}}}}
						\paren{\snd{\tf@chan{?}}{\tf@msg{\enumelem{p'}{\ncomplex},\enumelem{\calc}{n_{\Xi}}}}} 
						\paren{\rec{\tf@chan{?}}{\bullet}}
					   }^{O(\enumelem{p'}{\ncomplex})(\enumelem{\calc}{n_{\Xi}})} \\
			 &\hspace{1.5em}  
				\tf@nonterm{}{\mathit{sim}} 
				\parallel \Parallel_{i=1}^{O(\enumelem{p'}{1}{\enumelem{\calc}{1}})} \tf@nonterm{}{\enumelem{p'}{1}} \cdot \tf@smp@chan(\enumelem{\calc}{1} \circ \colmap^{-1}) \cdot \tf@nonterm{}{\spn{?}}\\
			&\hspace{1.5em}  	
				\parallel \tf@bsfun@proc(s) 
				\ChanPar \tf@bsfun@chan(s \ominus I \oplus (O \restriction \Psimple))\\		
\tf@bsfun(s) &\to[label=\epsilon,LTS=\calG,*]				
					\paren{\spn{\tf@nonterm{}{\spn{\enumelem{p'}{2}}}}
						\paren{\snd{\tf@chan{?}}{\tf@msg{\enumelem{p'}{2},\enumelem{\calc}{1}}}} 
						\paren{\rec{\tf@chan{?}}{\bullet}}
					   }^{O(\enumelem{p'}{2})(\enumelem{\calc}{1})} \cdots \\
			&\hspace{1.5em}  
				\paren{\spn{\tf@nonterm{}{\spn{\enumelem{p'}{i}}}}
						\paren{\snd{\tf@chan{?}}{\tf@msg{\enumelem{p'}{i},\enumelem{\calc}{j}}}} 
						\paren{\rec{\tf@chan{?}}{\bullet}}
					   }^{O(\enumelem{p'}{i})(\enumelem{\calc}{j})} \cdots \\	   
			&\hspace{1.5em}  
				\paren{\spn{\tf@nonterm{}{\spn{\enumelem{p'}{\ncomplex}}}}
						\paren{\snd{\tf@chan{?}}{\tf@msg{\enumelem{p'}{\ncomplex},\enumelem{\calc}{n_{\Xi}}}}} 
						\paren{\rec{\tf@chan{?}}{\bullet}}
					   }^{O(\enumelem{p'}{\ncomplex})(\enumelem{\calc}{n_{\Xi}})} \\
			 &\hspace{1.5em}  
				\tf@nonterm{}{\mathit{sim}} 
				\parallel \Parallel_{i=1}^{O(\enumelem{p'}{1}{\enumelem{\calc}{1}})} \tf@nonterm{}{\enumelem{p'}{1}} \cdot \tf@smp@chan(\enumelem{\calc}{1} \circ \colmap^{-1}) \cdot \tf@nonterm{}{\spn{?}}\\
			&\hspace{1.5em}  	
				\parallel \cdots\\
			&\hspace{1.5em}
				\parallel \Parallel_{i=1}^{O(\enumelem{p'}{1}{\enumelem{\calc}{n_{\Xi}}})} \tf@nonterm{}{\enumelem{p'}{1}} \cdot \tf@smp@chan(\enumelem{\calc}{n_{\Xi}} \circ \colmap^{-1}) \cdot \tf@nonterm{}{\spn{?}}\\
			&\hspace{1.5em}	
				\parallel \tf@bsfun@proc(s) 
				\ChanPar \tf@bsfun@chan(s \ominus I \oplus (O \restriction \Psimple))\\						
			&\hspace{1.5em} =
					\paren{\spn{\tf@nonterm{}{\spn{\enumelem{p'}{2}}}}
						\paren{\snd{\tf@chan{?}}{\tf@msg{\enumelem{p'}{2},\enumelem{\calc}{1}}}} 
						\paren{\rec{\tf@chan{?}}{\bullet}}
					   }^{O(\enumelem{p'}{2})(\enumelem{\calc}{1})} \cdots \\
			&\hspace{1.5em}  
				\paren{\spn{\tf@nonterm{}{\spn{\enumelem{p'}{i}}}}
						\paren{\snd{\tf@chan{?}}{\tf@msg{\enumelem{p'}{i},\enumelem{\calc}{j}}}} 
						\paren{\rec{\tf@chan{?}}{\bullet}}
					   }^{O(\enumelem{p'}{i})(\enumelem{\calc}{j})} \cdots \\	   
			&\hspace{1.5em}  
				\paren{\spn{\tf@nonterm{}{\spn{\enumelem{p'}{\ncomplex}}}}
						\paren{\snd{\tf@chan{?}}{\tf@msg{\enumelem{p'}{\ncomplex},\enumelem{\calc}{n_{\Xi}}}}} 
						\paren{\rec{\tf@chan{?}}{\bullet}}
					   }^{O(\enumelem{p'}{\ncomplex})(\enumelem{\calc}{n_{\Xi}})} \\
			 &\hspace{1.5em}  
				\tf@nonterm{}{\mathit{sim}} 
				\parallel \tf@bsfun@proc(O,\enumelem{p'}{1})
				\parallel \tf@bsfun@proc(s) 
				\ChanPar \tf@bsfun@chan(s \ominus I \oplus (O \restriction \Psimple))\\	
			&\to[label=\alpha,LTS=\calG,*]
				\tf@nonterm{}{\mathit{sim}} \parallel \tf@bsfun@proc(s) 
				\parallel \paren{\Parallel_{i=1}^{\ncomplex} \tf@bsfun@proc(O,\enumelem{p'}{i})}\\
			&\hspace{2.5em}  	
				\ChanPar \tf@bsfun@chan(s \ominus I \oplus (O \restriction \Psimple))\\
			&\hspace{1.5em} = 
				\tf@nonterm{}{\mathit{sim}} \parallel \tf@bsfun@proc(s) 
				\parallel \tf@bsfun@proc(O \restriction \Pcomplex)\\
			&\hspace{2.5em}  
				\ChanPar \tf@bsfun@chan(s \ominus I \oplus (O \restriction \Psimple))\\
\end{align*}
It just remains to analyse the last configuration:
\begin{gather*}
\begin{aligned}
\tf@bsfun@proc(s) \parallel \tf@bsfun@proc(O \restriction \Pcomplex) &= \tf@bsfun@proc(s) 
				\parallel \tf@bsfun@proc(O) = \tf@bsfun@proc(s \oplus O) \\
				&= \tf@bsfun@proc(s \ominus I \oplus O) = \tf@bsfun@proc(s')
\end{aligned}\\
\shortintertext{ and }
\tf@bsfun@chan(s \ominus I \oplus (O \restriction \Psimple)) = \tf@bsfun@chan(s \ominus I \oplus O) = \tf@bsfun@chan(s')  
\end{gather*}  
from which we can deduce $\tf@bsfun(s) \to[label=\alpha,LTS=\calG,*] \tf@bsfun(s')$, $(s',\tf@bsfun(s')) \in \tf@bisim$ and clearly $\alpha = s'$
which is what we wanted to prove.
\item \emph{Case: } $r \in \ComplexRules$. \newline
	Then $r = ((p,I),(p',\oplus,O))$ and $s' = (s \ominus I \ominus \update{}{p}{\mset{m}}) \oplus O \oplus \update{}{p'}{\mset{m \oplus \calc}}$ for some $m \in s(p)$.
Again $\tf@bsfun(s) = \tf@nonterm{}{\mathit{sim}} \parallel \tf@bsfun@proc(s) \ChanPar \tf@bsfun@chan(s)$ and 
similarly to the case above
\begin{align*}
\tf@bsfun(s) &\to[label=\epsilon,LTS=\calG,*] 
				  \tf@nonterm{r}{C} \tf@nonterm{}{\mathit{sim}} \parallel 
				  \tf@bsfun@proc(s) \ChanPar \tf@bsfun@chan(s \ominus I \oplus O)\\
			 &\to[label=\epsilon,LTS=\calG]	  
			 	  (\snd{\tf@chan{p}}{\tf@msg{p',\calc}})  \cdot (\rec{\tf@chan{p}}{\bullet}) \tf@nonterm{}{\mathit{sim}} \parallel 
				  \tf@bsfun@proc(s) \ChanPar \tf@bsfun@chan(s \ominus I \oplus O)\\
			 &\to[label=\epsilon,LTS=\calG]	  
			 	  (\rec{\tf@chan{p}}{\bullet}) \tf@nonterm{}{\mathit{sim}} \parallel 
				  \tf@bsfun@proc(s) \ChanPar \tf@bsfun@chan(s \ominus I \oplus O) \oplus
				  \cchan{\tf@msg{p',\calc}}{\tf@chan{p}}\\	
			 &\hspace{1.5em} = 
			 	  (\rec{\tf@chan{p}}{\bullet}) \tf@nonterm{}{\mathit{sim}} 
			 	  \parallel \tf@bsfun@proc(\update{}{p}{\mset{m}}) 
			 	  \parallel \tf@bsfun@proc(s \ominus \update{}{p}{\mset{m}}) \\
			 &\hspace{2.5em}
			 	  \ChanPar \tf@bsfun@chan(s \ominus I \oplus O) \oplus
				  \cchan{\tf@msg{p',\calc}}{\tf@chan{p}}\\	
			 &\hspace{1.5em} = 
			 	  (\rec{\tf@chan{p}}{\bullet}) \tf@nonterm{}{\mathit{sim}} 
			 	  \parallel \tf@nonterm{}{p} \tf@smp@chan\paren{m \circ \colmap^{-1}}\tf@nonterm{}{\spn{?}}\\
			 &\hspace{2.5em}	  
			 	  \parallel \tf@bsfun@proc(s \ominus \update{}{p}{\mset{m}}) \\
			 &\hspace{2.5em}
			 	  \ChanPar \tf@bsfun@chan(s \ominus I \oplus O) \oplus
				  \cchan{\tf@msg{p',\calc}}{\tf@chan{p}}
\end{align*}
It is immediate that we can derive
$$\tf@nonterm{}{\calc} \to[*] \paren{\snd{\tf@chan{\colmap\paren{\enumelem{p^{\I}}{1}}}}{\bullet}}^{|\calc(\enumelem{p^{\I}}{1})|} \cdots 
		   	   \paren{\snd{\tf@chan{\colmap\paren{\enumelem{p^{\I}}{\ninner}}}}{\bullet}}^{|\calc(\enumelem{p^{\I}}{\ninner})|} =: w$$
and that the equality $\M(w) = \tf@smp@chan(\calc \circ \colmap^{-1})$ holds. Thus
\begin{align*}	
\tf@bsfun(s)&\to[label=\epsilon,LTS=\calG,*]	  
				(\rec{\tf@chan{p}}{\bullet}) \tf@nonterm{}{\mathit{sim}} \\
			&\hspace{2.5em}	  
			 	  \parallel (\rec{\tf@chan{p}}{\tf@msg{p',\calc}}) \cdot (\snd{\tf@chan{p}}{\bullet}) \cdot \tf@nonterm{}{p'} \tf@smp@chan\paren{m \circ \colmap^{-1}} \oplus \tf@smp@chan\paren{\calc \circ \colmap^{-1}} \tf@nonterm{}{\spn{?}}\\
			 &\hspace{2.5em}	  
			 	  \parallel \tf@bsfun@proc(s \ominus \update{}{p}{\mset{m}}) \\
			 &\hspace{2.5em}
			 	  \ChanPar \tf@bsfun@chan(s \ominus I \oplus O) \oplus
				  \cchan{\tf@msg{p',\calc}}{\tf@chan{p}}\\	
			 &\hspace{1.5em} =
				(\rec{\tf@chan{p}}{\bullet}) \tf@nonterm{}{\mathit{sim}} \\
			&\hspace{2.5em}	  
			 	  \parallel (\rec{\tf@chan{p}}{\tf@msg{p',\calc}}) \cdot (\snd{\tf@chan{p}}{\bullet}) \cdot \tf@nonterm{}{p'} \tf@smp@chan\paren{(m \oplus \calc) \circ \colmap^{-1}} \tf@nonterm{}{\spn{?}}\\
			 &\hspace{2.5em}	  
			 	  \parallel \tf@bsfun@proc(s \ominus \update{}{p}{\mset{m}}) \\
			 &\hspace{2.5em}
			 	  \ChanPar \tf@bsfun@chan(s \ominus I \oplus O) \oplus
				  \cchan{\tf@msg{p',\calc}}{\tf@chan{p}}\\	
			&\to[label=\alpha,LTS=\calG,*]	  
				\tf@nonterm{}{\mathit{sim}} 
			 	\parallel \tf@nonterm{}{p'} \paren{(m \oplus \calc) \circ \colmap^{-1} \circ \tf@smp@chan} \tf@nonterm{}{\spn{?}}\\
			 &\hspace{2.5em}	  
			 	  \parallel \tf@bsfun@proc(s \ominus \update{}{p}{\mset{m}}) \\
			 &\hspace{2.5em}
			 	  \ChanPar \tf@bsfun@chan(s \ominus I \oplus O) \\
			 &\hspace{1.5em} =	
			 	\tf@nonterm{}{\mathit{sim}} 
			 	  \parallel \tf@bsfun@proc(s \ominus \update{}{p}{\mset{m}} \oplus \update{}{p'}{\mset{m \oplus \calc}}) \\
			 &\hspace{2.5em}
			 	  \ChanPar \tf@bsfun@chan(s \ominus I \oplus O) 
\end{align*}
Lastly, we clearly have
\begin{gather*}
\begin{aligned}
&\tf@bsfun@proc(s \ominus \update{}{p}{\mset{m}} \oplus \update{}{p'}{\mset{m \oplus \calc}}) \\
&\;\;\;\; = \tf@bsfun@proc(s \ominus \update{}{p}{\mset{m}} \ominus I \oplus \update{}{p'}{\mset{m \oplus \calc}} \oplus O) =
\tf@bsfun@proc(s')
\end{aligned}\\
\begin{aligned}
&\tf@bsfun@chan(s \ominus I \oplus O) \\
&\;\;\;\; = \tf@bsfun@chan(s \ominus \update{}{p}{\mset{m}} \ominus I \oplus \update{}{p'}{\mset{m \oplus \calc}} \oplus O) = \tf@bsfun@chan(s')
\end{aligned}
\end{gather*}
from which we can deduce $\tf@bsfun(s) \to[label=\alpha,LTS=\calG,*] \tf@bsfun(s')$, $(s',\tf@bsfun(s')) \in \tf@bisim$ and clearly $\alpha = s'$
which is what we wanted to prove.
\item $r \in \TransferRules$. \newline
	Then $r = ((p,I),(p',\Pinner,O))$ since $r$ is total and 
	$s' = (s \ominus I \ominus \update{}{p}{\mset{m}}) 
		  \oplus O 
		  \oplus \update{}{p'}{\mset{\emptyset}} 
		  \oplus (m \circ \colmap^{-1})$ for some $m \in s(p)$.
Analogously to the cases above:
\begin{align*}
\tf@bsfun(s) &\to[label=\epsilon,LTS=\calG,*] 
				  \tf@nonterm{r}{C} \tf@nonterm{}{\mathit{sim}} \parallel 
				  \tf@bsfun@proc(s) \ChanPar \tf@bsfun@chan(s \ominus I \oplus O)\\
			 &\to[label=\epsilon,LTS=\calG] 
			 	  (\snd{\tf@chan{p}}{\tf@msg{p,\downarrow}}) \cdot (\rec{\tf@chan{p}}{\bullet}) 
				  \cdot (\snd{\tf@chan{?}}{\tf@msg{p',\emptyset}}) \cdot (\rec{\tf@chan{?}}{\bullet})
				  \tf@nonterm{}{\mathit{sim}} \\
			&\hspace{2.5em}
				  \parallel 
				  \tf@bsfun@proc(s) \ChanPar \tf@bsfun@chan(s \ominus I \oplus O)\\
			&\to[label=\epsilon,LTS=\calG,*]
				  (\snd{\tf@chan{p}}{\tf@msg{p,\downarrow}}) \cdot (\rec{\tf@chan{p}}{\bullet}) 
				  \cdot (\snd{\tf@chan{?}}{\tf@msg{p',\emptyset}}) \cdot (\rec{\tf@chan{?}}{\bullet})
				  \tf@nonterm{}{\mathit{sim}} \\
			&\hspace{2.5em}
				  \parallel \tf@nonterm{}{p} \tf@smp@chan\paren{m \circ \colmap^{-1}}\tf@nonterm{}{\spn{?}}\\
			&\hspace{2.5em}	  
			 	  \parallel \tf@bsfun@proc(s \ominus \update{}{p}{\mset{m}}) \\
			&\hspace{2.5em}
			 	  \ChanPar \tf@bsfun@chan(s \ominus I \oplus O)\\
			&\to[label=\epsilon,LTS=\calG]
				  (\snd{\tf@chan{p}}{\tf@msg{p,\downarrow}}) \cdot (\rec{\tf@chan{p}}{\bullet}) 
				  \cdot (\snd{\tf@chan{?}}{\tf@msg{p',\emptyset}}) \cdot (\rec{\tf@chan{?}}{\bullet})
				  \tf@nonterm{}{\mathit{sim}} \\
			&\hspace{2.5em}
				  \parallel (\rec{\tf@chan{p}}{\tf@msg{p',\downarrow}}) \cdot (\snd{\tf@chan{p}}{\bullet})
				  \tf@smp@chan\paren{m \circ \colmap^{-1}}\tf@nonterm{}{\spn{?}}\\
			&\hspace{2.5em}	  
			 	  \parallel \tf@bsfun@proc(s \ominus \update{}{p}{\mset{m}}) \\
			&\hspace{2.5em}
			 	  \ChanPar \tf@bsfun@chan(s \ominus I \oplus O)\\
			&\to[label=\epsilon,LTS=\calG,*]
				  (\snd{\tf@chan{?}}{\tf@msg{p',\emptyset}}) \cdot (\rec{\tf@chan{?}}{\bullet})
				  \tf@nonterm{}{\mathit{sim}} \\
			&\hspace{2.5em}
				  \parallel \tf@smp@chan\paren{m \circ \colmap^{-1}}\tf@nonterm{}{\spn{?}}\\
			&\hspace{2.5em}	  
			 	  \parallel \tf@bsfun@proc(s \ominus \update{}{p}{\mset{m}}) 
			 	  \ChanPar \tf@bsfun@chan(s \ominus I \oplus O)\\
			&\to[label=\epsilon,LTS=\calG]
				  (\snd{\tf@chan{?}}{\tf@msg{p',\emptyset}}) \cdot (\rec{\tf@chan{?}}{\bullet})
				  \tf@nonterm{}{\mathit{sim}} \\
			&\hspace{2.5em}
				  \parallel \tf@nonterm{}{\spn{?}}
			 	  \parallel \tf@bsfun@proc(s \ominus \update{}{p}{\mset{m}}) \\
			&\hspace{2.5em}
			 	  \ChanPar \tf@bsfun@chan(s \ominus I \oplus O) \oplus \Gamma\paren{\tf@smp@chan(m \circ \colmap^{-1})}
\end{align*}
We can see that 
$$\tf@bsfun@chan(s \ominus I \oplus O) \oplus \Gamma\paren{\tf@smp@chan(m \circ \colmap^{-1})} = 
\tf@bsfun@chan(s \ominus I \oplus O \oplus m \circ \colmap^{-1})$$
and so
\begin{align*}
\tf@bsfun(s)&\to[label=\epsilon,LTS=\calG,*]
				  (\snd{\tf@chan{?}}{\tf@msg{p',\emptyset}}) \cdot (\rec{\tf@chan{?}}{\bullet})
				  \tf@nonterm{}{\mathit{sim}} \\
			&\hspace{2.5em}
				  \parallel (\rec{\tf@chan{?}}{\tf@msg{p,\emptyset}})\cdot (\snd{\tf@chan{?}}{\bullet}) \cdot \tf@nonterm{}{p'} \cdot \tf@nonterm{}{\spn{?}} \\
			&\hspace{2.5em}	  
			 	  \parallel \tf@bsfun@proc(s \ominus \update{}{p}{\mset{m}}) \\
			&\hspace{2.5em}
			 	  \ChanPar \tf@bsfun@chan(s \ominus I \oplus O \oplus (m \circ \colmap^{-1}))\\
			&\to[label=\alpha,LTS=\calG,*]
				  \tf@nonterm{}{\mathit{sim}}
				  \parallel \tf@nonterm{}{p'} \cdot \tf@nonterm{}{\spn{?}} 
			 	  \parallel \tf@bsfun@proc(s \ominus \update{}{p}{\mset{m}}) \\
			&\hspace{2.5em}
			 	  \ChanPar \tf@bsfun@chan(s \ominus I \oplus O \oplus (m \circ \colmap^{-1})) \\
			&\hspace{1.5em} = \tf@nonterm{}{\mathit{sim}}
			 	  \parallel \tf@bsfun@proc(s \ominus \update{}{p}{\mset{m}} \oplus \update{}{p'}{\mset{\emptyset}}) \\
			&\hspace{2.5em}
			 	  \ChanPar \tf@bsfun@chan(s \ominus I \oplus O \oplus (m \circ \colmap^{-1}))
\end{align*}
Analysing the last configuration:
\begin{gather*}
\begin{aligned}
&\tf@bsfun@proc(s \ominus \update{}{p}{\mset{m}} \oplus \update{}{p'}{\mset{\emptyset}}) \\
&\;\;\;\; = \tf@bsfun@proc(s \ominus \update{}{p}{\mset{m}} \ominus I \oplus \update{}{p'}{\mset{\emptyset}} \oplus O \oplus (m \circ \colmap^{-1})) \\
&\;\;\;\; = \tf@bsfun@proc(s')
\end{aligned}\\
\begin{aligned}
&\tf@bsfun@chan(s \ominus I \oplus O \oplus (m \circ \colmap^{-1})) \\
&\;\;\;\; = \tf@bsfun@chan(s \ominus \update{}{p}{\mset{m}} \ominus I \oplus \update{}{p'}{\mset{\emptyset}} \oplus O \oplus (m \circ \colmap^{-1})) \\
&\;\;\;\; = \tf@bsfun@chan(s')
\end{aligned}
\end{gather*}
from which we can deduce $\tf@bsfun(s) \to[label=\alpha,LTS=\calG,*] \tf@bsfun(s')$, $(s',\tf@bsfun(s')) \in \tf@bisim$ and clearly $\alpha = s'$
which is what we wanted to prove.
\end{asparaitem}
Hence we can conclude that $\tf@bisim$ is a weak simulation.

Let us now investigate $\tf@bisim^{-1}$.
Suppose
we have $(s,\tf@bsfun(s)) \in \tf@bisim$ and $F(s) \to[label=s',LTS=\calG,*] t$ then from the labelling we know we have
$\tf@bsfun(s) \to[label=s',LTS=\calG,*] \tf@bsfun(s') \to[label=\epsilon,LTS=\calG,*] t$.
Inspecting the rules of $\calG$ we can see that the only paths (modulo different interleavings)
to make the transitions $\tf@bsfun(s) \to[label=s',LTS=\calG,*] \tf@bsfun(s')$ are the three described in the proof that 
$\tf@bisim$ is a simulation and depends on which rule $r$ of $\calN$ is simulated.

Inspecting the transition paths we can see that if $r \in \SimpleRules$
then $s' = s \ominus I \oplus O$ and since $\tf@nonterm{r}{I}$ can be fully executed we can deduce that 
$|s(p)| \geq |I(p)|$ for all $p \in \Psimple$ and hence $s \ominus I \in \Config$.
Thus we can conclude that $s \to[rule=r,TS=\calN] s'$ and hence $s \to[label=s',LTS=\calN] s'$.

If $r \in \ComplexRules$ then we can see that $(s \ominus I \ominus \update{}{p}{\mset{m}}) \oplus O \oplus \update{}{p'}{\mset{m \oplus \calc}}$ for some $m \in s(p)$ and as above since $\tf@nonterm{r}{I}$ can be fully executed we can deduce that $s \ominus I \in \Config$ which implies $s \to[rule=r,TS=\calN] s'$ and hence $s \to[label=s',LTS=\calN] s'$.

For the third case if $r \in \ComplexRules$ inspecting the transition path above we can see that
$s' = s \ominus \update{}{p}{\mset{m}} \ominus I \oplus \update{}{p'}{\mset{\emptyset}} \oplus O \oplus (m \circ \colmap^{-1})$ for some $m \in s(p)$ as above since $\tf@nonterm{r}{I}$ can be fully executed we can deduce that $s \ominus I \in \Config$. Hence $s \to[rule=r,TS=\calN] s'$ and hence $s \to[label=s',LTS=\calN] s'$.

Thus we can conclude that $\tf@bisim^{-1}$ is a weak simulation and thus $\tf@bisim$ is a weak bisimulation.

In order to finish the proof we will now prove that a check of coverability of $s_{\text{cov}}$ for $\calN$
is equivalent to a program-point coverability check of $l_{\text{cov}}$ for $\calG$.

First suppose $(\calN, s_0, s_{\text{cov}})$ is a yes-instance of NNCT coverability with a simple query.
Hence there exists a path $s_0 \to[TS=\calN] s_1 \to[TS=\calN] \cdots \to[TS=\calN] s_n$ and $s_{\text{cov}} \leqconfig s_n$. Attaching the labels to the transition system as we have described above this path turns into
$s_0 \to[label=s_1,LTS=\calN] s_1 \to[label=s_2,LTS=\calN] \cdots \to[label=s_n,LTS=\calN] s_n$.
Since $\tf@bisim$ is a weak bisimulation we know that we can find a path
$S \to[label=s_0,LTS=\calG,*] \tf@bsfun(s_0) \to[label=s_1,LTS=\calG,*] \tf@bsfun(s_1) \to[label=s_2,LTS=\calG,*] \cdots \to[label=s_n,LTS=\calG,*] \tf@bsfun(s_n)$.
We know that for all $p \in \Psimple$ it is the case that $|s_{\text{cov}}(p)| \leq |s_n(p)|$ hence 
$s_n \ominus (s_{\text{cov}} \restriction \Psimple) \in \APCPSConfig$.
Thus
\begin{align*}
\tf@bsfun(s_n) & =  \tf@nonterm{}{\mathit{sim}} \parallel \tf@bsfun@proc(s_n) \ChanPar \tf@bsfun@chan(s_n)\\
			   &\to[label=\epsilon,LTS=\calG]
			   		\paren{\rec{\tf@chan{\enumelem{p}{1}}}{\bullet}}^{|s_{\text{cov}}(\enumelem{p}{1})|} \cdots 
		   				 \paren{\rec{\tf@chan{\enumelem{p}{\nsimple}}}{\bullet}}^{|s_{\text{cov}}(\enumelem{p}{\nsimple})|} l_{\text{cov}}\\
		   	   &\hspace{2.5em}
		   			\parallel \tf@bsfun@proc(s_n) \ChanPar \tf@bsfun@chan(s_n)\\
		   		&\to[label=\epsilon,LTS=\calG,*]
			   		 l_{\text{cov}}
		   			\parallel \tf@bsfun@proc(s_n) \ChanPar \tf@bsfun@chan(s_n)
		   				\ominus \cchan{\bullet^{|s_{\text{cov}}(\enumelem{p}{1})|}}{\tf@chan{\enumelem{p}{1}}} \\
		   		&\hspace{2.5em}
		   				\ominus \cchan{\bullet^{|s_{\text{cov}}(\enumelem{p}{2})|}}{\tf@chan{\enumelem{p}{2}}}
		   				\ominus
		   				\cdots
		   				\ominus \cchan{\bullet^{|s_{\text{cov}}(\enumelem{p}{\nsimple})|}}{\tf@chan{\enumelem{p}{\nsimple}}}\\
		   	   & = A_{\text{cov}} \parallel \tf@bsfun@proc(s_n) \ChanPar \tf@bsfun@chan(s_n \ominus (s_{\text{cov}} \restriction \Psimple))\\
\end{align*}
Hence $(\calG,S \ChanPar \emptyset,A_{\text{cov}} \ChanPar \emptyset)$ is a yes-instance of alternative simple coverability and
thus $(\calG,S \ChanPar \emptyset,A_{\text{cov}} \ChanPar \emptyset)$ is a yes-instance of standard simple coverability.

Suppose $(\calG,S \ChanPar \emptyset,A_{\text{cov}} \ChanPar \emptyset)$ is a yes-instance of standard simple coverability.
Hence we know that $(\calG,S \ChanPar \emptyset,A_{\text{cov}} \ChanPar \emptyset)$ is a yes-instance of alternative simple coverability.
Hence $S \to[TS=\calG,*] A_{\text{cov}}\alpha \parallel \Pi \ChanPar \Gamma$
for some $\alpha$, $\Pi$ and $\Gamma$.
Let us view this path in the labelled transition system we can then obtain:
$S \to[label=s_0,LTS=\calG,*] \tf@bsfun(s_0) \to[label=s_1,LTS=\calG,*] \tf@bsfun(s_1) \to[label=s_2,LTS=\calG,*] \cdots \to[label=s_n,LTS=\calG,*] \tf@bsfun(s_n) \to[label=\epsilon,LTS=\calG,*] A_{\text{cov}}\alpha \parallel \Pi \ChanPar \Gamma$ where it is clear that the first simulation state set up by $S$ can only by $s_0$.

We know that $\tf@bsfun(s_n) = \tf@nonterm{}{\mathit{sim}} \parallel \tf@bsfun@proc(s_n) \ChanPar \tf@bsfun@chan(s_n)$.
Since on the sequential level it must be the case that
$\tf@nonterm{}{\mathit{sim}} \to[*] \beta A_{\text{cov}}\alpha$, but inspecting the rules of $\calG$ we can see that only
$\tf@nonterm{}{\mathit{sim}} \to[*] \paren{\rec{\tf@chan{\enumelem{p}{1}}}{\bullet}}^{|s_{\text{cov}}(\enumelem{p}{1})|} \cdots 
 		   				 \paren{\rec{\tf@chan{\enumelem{p}{\nsimple}}}{\bullet}}^{|s_{\text{cov}}(\enumelem{p}{\nsimple})|} A_{\text{cov}}$ we can conclude that $\alpha = \epsilon$.
Further, along these transition no administrative messages to channels $\tf@chan{?}$, $\tf@chan{p}$ for $p \in \Pcomplex$
are sent and thus any process in $\tf@bsfun@proc(s_n)$ is blocked on its initial receive action and thus cannot receive or send any messages to a $\tf@chan{p}$ where $p \in \Psimple$. Hence it must be the case
that 
\begin{align*}
\Gamma = \tf@bsfun@chan(s_n) \ominus \cchan{\bullet^{|s_{\text{cov}}(\enumelem{p}{1})|}}{\tf@chan{\enumelem{p}{1}}}
		   				\ominus
		   				\cdots
		   				\ominus \cchan{\bullet^{|s_{\text{cov}}(\enumelem{p}{\nsimple})|}}{\tf@chan{\enumelem{p}{\nsimple}}}\\
		= \tf@bsfun@chan(s_n \ominus s_{\text{cov}})		   				
\end{align*}
since $s_{\text{cov}}(p) = \emptyset$ for all $p \in \Pcomplex$.
We can thus conclude that $s_{\text{cov}} \leqconfig s_n$.
Since $\tf@bisim$ is a weak bisimulation we know that there is a path
$s_0 \to[label=s_1,LTS=\calN,*] s_1 \to[label=s_2,LTS=\calN] \cdots \to[label=s_n,LTS=\calN,*] s_n$
and hence
$s_0 \to[TS=\calN,*] s_n$ and $s_{\text{cov}} \leqconfig s_n$ thus 
$(\calN, s_0, s_{\text{cov}})$ is a yes-instance of coverability with a simple query.

For boundedness, suppose the set $\varset{\Pi \ChanPar \Gamma : S \ChanPar \emptyset \to[LTS=\calG,*] \Pi \ChanPar \Gamma}$ is finite. This implies that the set
$\varset{\tf@bsfun(s) : \tf@bsfun(s_0) \to[LTS=\calG,*] \tf@bsfun(s)}$ is finite and since $\tf@bisim$ is a weak bisimulation this means that $\varset{s : s_0 \ChanPar \ChanPar \to[LTS=\calN,*] s}$ is finite.
Conversely, suppose $\varset{s : s_0 \ChanPar \ChanPar \to[LTS=\calN,*] s}$ is finite then clearly
$\varset{\tf@bsfun(s) : \tf@bsfun(s_0) \to[LTS=\calG,*] \tf@bsfun(s)}$ is finite. Since in $\calG$ there are only finitely many more steps along weak bisimulation steps we can also conclude that 
$\varset{\Pi \ChanPar \Gamma : S \ChanPar \emptyset \to[LTS=\calG,*] \Pi \ChanPar \Gamma}$ is finite.
Hence $\calG$ is bounded from from $S \ChanPar \emptyset$ if, and only if, 
$\calN$ is bounded from $s_0$.

For termination, suppose there is an infinite path from $S \ChanPar \emptyset$ in $\calG$, then since there are only finitely many $\epsilon$ steps along weak bisimulation steps and $\tf@bisim$ is a weak bisimulation we can conclude that 
$\calN$ has an infinite path from $s_0$. 
Conversely, if $\calN$ has an infinite path from $s_0$ then there is an infinite path from $S \ChanPar \emptyset$ in $\calG$ since $\tf@bisim$ is a weak bisimulation and every step .
$\calN$ is bounded from $s_0$.
Hence $\calG$ is terminating from from $S \ChanPar \emptyset$ if, and only if, 
$\calN$ is terminating from $s_0$.
\makeatother
\end{proof}        
\newpage
\subsection{Proofs Upper Bound}

\begin{customlemma}[\ref{lemma:pivot_strengthening:tools:i}]
Suppose $\vec{s}$ is a covering path for $s_{\text{cov}}$ in $\calN_{i+1}$ and $|\vec{s}(1)(\enumelem{p}{i+1})| \geq R' \cdot \length{\calN_i}{s_{\text{cov}}}$. Then $\exists \vec{s}'$ a covering path
	   for $s_{\text{cov}}$ in $\calN_{i+1}$ such that $\vec{s}'(1) = \vec{s}(1)$ and $|\vec{s}'| \leq  \length{\calN_i}{s_{\text{cov}}}$.
\end{customlemma}%
\begin{proof}
Let $m_\infty = \update{}{\bullet}{\infty}$, let $s = \vec{s}(1)$ and define 
$\vec{s}_\infty(i) = \update{\vec{s}(j)}{\enumelem{p}{i+1}}{m_\infty}$ for all $1 \leq j \leq |\vec{s}|$. 
Since $\vec{s}$ is covering path for $s_{\text{cov}}$ in $\calN_{i+1}$, 
we can easily see that $\vec{s}_\infty$ is a covering path for $s_{\text{cov}}$ in $\calN_i$.
 Hence there must exists a covering path $\vec{s}'_\infty$ for $s_{\text{cov}}$ in $\calN_i$
 such that $|\vec{s}'_\infty| \leq \length{\calN_i}{s_{\text{cov}}}$.
	We can ``replay'' the path $\vec{s}'_\infty$ from $s$ instead of 
	$\update{s}{\enumelem{p}{i+1}}{m_\infty}$ to obtain a configuration sequence $\vec{s}_0$ from $s$ so that 
	$\vec{s}_0(j)(k) = \vec{s}'_\infty(j)(k)$ for all $1 \leq j \leq |\vec{s}'_\infty|$ and $k \neq i+1$. An easy induction shows that for all $1 \leq j \leq |\vec{s}'_\infty|$ we have
	$|\vec{s}_0(j)(i+1)| \geq R' \times \length{\calN_i}{s_{\text{cov}}} - j \times R$ from which we can conclude
	\begin{align*}
	|\vec{s}_0(j)(i+1)| &\geq R' - R + (R'-R) \times (\length{\calN_i}{s_{\text{cov}}}-1) \\
						&\geq R' - R \geq s_{\text{cov}}(\enumelem{p}{i+1}).
	\end{align*}
	since $\length{\calN_i}{s_{\text{cov}}} \geq 1$ for all $i$.
	Hence we can conclude that
	$\vec{s}_0$ is a path in $\calN_{i+1}$, further $\vec{s}_0$ is a path from $s$ covering $s_{\text{cov}}$ in $\calN_{i+1}$ and by construction $|\vec{s}_0| \leq \length{\calN_i}{s_{\text{cov}}}$.
	which is what we wanted to prove.
\end{proof}

\begin{customlemma}[\ref{lem:rackoff_case_split}]
For all covering paths $\vec{s}$ for $s_{\text{cov}}$ one of two cases applies:
\begin{inparaenum}[(C-1)]
\item $\norm[\calN_{i+1}]{\vec{s}}^* < R' \cdot \length{\calN_i}{s_{\text{cov}}}$; or
\item $\vec{s} = \vec{s}_1 \cdot s_p \cdot \vec{s}_2$ such that
$\norm[\calN_{i+1}]{\vec{s}_1}^* < R' \cdot \length{\calN_i}{s_{\text{cov}}}$ and
$\norm[\calN_{i+1}]{s_p} \geq R' \cdot \length{\calN_i}{s_{\text{cov}}}$. 
\end{inparaenum}
\end{customlemma}
\begin{proof}
Let $\vec{s}$ be a covering path $\vec{s}$ for $s_{\text{cov}}$.
Let us do a case analysis:
\begin{enumerate}[(i)]
	\item \emph{Case 1: $\norm[\calN_{i+1}]{\vec{s}}^* < R' \cdot \length{\calN_i}{s_{\text{cov}}}$.}\label{case:bounded_path}
	\newline
	Case C-1 holds.
	\item \emph{Case 2: $\norm[\calN_{i+1}]{\vec{s}}^* \nleq R' \cdot \length{\calN_i}{s_{\text{cov}}}$.}
	\newline
	Hence we can split $\vec{s}$ in two paths $\vec{s} =\vec{s}_1\vec{s}_0$ 
	such that $\vec{s}_0$ is the longest prefix of $\vec{s}$ such that $\norm[\calN_{i+1}]{\vec{s}_1}^* < R' \cdot \length{\calN_i}{s_{\text{cov}}}$ and $\norm[\calN_{i+1}]{\vec{s}_0(1)} \nleq R' \cdot \length{\calN_i}{s_{\text{cov}}}$, i.e.~$\norm[\calN_{i+1}]{\vec{s}_0(1)} \geq R' \cdot \length{\calN_i}{s_{\text{cov}}}$.
	Letting $s_p = \vec{s}_0(1)$ and $\vec{s}_2 = \vec{s}_0(2) \cdots \vec{s}_0(|\vec{s}_0|)$ means C-2 applies.
\end{enumerate}
\end{proof}


\begin{customlemma}[\ref{lemma:pivot_strengthening:tools:ii}]
Suppose $\vec{s}$ is a covering path for $s_{\text{cov}}$ in $\calN_{i+1}$.
If $L \in \N$ such that $|\vec{s}| \leq L$ and $\norm[\calN_{i+1}]{\vec{s}(1)} < L \cdot R'$ then there exists a covering path $\vec{s}''$ for $s_{\text{cov}}$ in $\calN_{i+1}$ such that $\vec{s}''(1) \leqconfig \vec{s}(1)$, $\norm[\calN_{i+1};C]{\vec{s}''(1)} < L \cdot R'$, and $|\vec{s}''| \leq L$.
\end{customlemma}
\begin{proof}
We will construct a path removing all superfluous complex tokens first.
Suppose we label every complex token in $\vec{s}(1)$ as ``not-moved'' and along $\vec{s}$ if a complex token is moved we change the label to ``moved''. Since $|\vec{s}| \leq L$ and each transition can move at most one complex token, and remove $R$ complex empty token
it must be the case that for each $p \in \Pcomplex$ the size of the multiset 
$M_{\text{remove}} = |\mset{m\text{ ``not-moved'' } \mid m \in \vec{s}(|\vec{s}|)(p)}| \geq |\varmset{m\text{ ``not-moved'' } \mid m \in \vec{s}(1)(p)}| - R \times L$.
Hence the complex tokens in $M_{\text{remove}}$ play no r\^ole in $\vec{s}$, appear in $\vec{s}(j)(p)$ for all $1 \leq j \leq |\vec{s}|$ and could thus safely be remove without affecting the validity of the path.
However, in $\vec{s}(|\vec{s}|)(p)$ there are complex tokens, at most $\norm[\calN_{\nsimple};C]{s_{\text{cov}}}$ many, that are required to justify 
$s_{\text{cov}} \leqconfig \vec{s}(|\vec{s}|)$ and hence we cannot remove these tokens without losing the cover of $s_{\text{cov}}$. We can be on the safe side an assume the former tokens are all ``not-moved'' tokens and hence
we may only remove $|\vec{s}(1)(p)| - (R \times L + \norm[\calN_{\nsimple};C]{s_{\text{cov}}})$ tokens. We can do this for all $p \in \Pcomplex$ and hence we get a new path $\vec{s}''_0$ which is
a covering path for $s_{\text{cov}}$ in $\calN_{i+1}$, $|\vec{s}''_0| \leq L$ and
$|\vec{s}''_0(1)(p')| < R \times L + \norm[\calN_{\nsimple};C]{s_{\text{cov}}}$ for all $p' \in \Pcomplex$ and since $\vec{s}''_0(1)$ is obtained from 
$\vec{s}(1)$ by removing tokens we can see that $\vec{s}''_0(1) \leqconfig \vec{s}(1)$ and hence also
$\norm[\calN_{i+1}]{\vec{s}''_0(1)} < (L+1) \cdot R'$.

Let us now show that we can also reduce the number of coloured tokens. 
We will label coloured tokens in $\vec{s}''_0(1)$ with three statuses \emph{untouched}, \emph{ejected} and \emph{consumed}. Initially this record of $\vec{s}''_0(1)$'s coloured tokens marks all of them as \emph{untouched}.
Along each transition of $\vec{s}''_0$ we update this record as follows. If the transition 
$\vec{s}''_0(j) \to[TS=\calN_{i+1}] \vec{s}''_0(j+1)$ is a transfer transition then we look at all the coloured tokens that are ejected in this transition to simple places and label the coloured tokens whose origin lies in $\vec{s}''_0(1)$ as \emph{ejected}. If in the transition $\vec{s}''_0(j) \to[TS=\calN_{i+1}] \vec{s}''_0(j+1)$ a simple token is removed and its origin is as a coloured token in $\vec{s}''_0(1)$ then we mark that coloured token in $\vec{s}''_0(1)$ as \emph{consumed}.
Since $|\vec{s}| \leq L$ and each transition can remove at most $R$ simple tokens
it must be the case that at most $R \times L$ coloured tokens were marked as \emph{consumed} by the above process.

However, for each $p' \in \Pcomplex$ and $p^{\I} \in \Pinner$ there are coloured tokens in 
$\vec{s}''_0(|\vec{s}''_0|)(p')(p^{\I})$, at most $\norm[\calN_{\nsimple};C]{s_{\text{cov}}}$ many, 
that are required to justify 
$s_{\text{cov}} \leqconfig \vec{s}''_0(|\vec{s}''_0|)$ and hence we cannot remove these tokens without losing the cover of $s_{\text{cov}}$. As above it is safe to label these coloured tokens \emph{consumed} in $\vec{s}''_0(1)$.
Hence for each $p' \in \Pcomplex$ and $p^{\I} \in \Pinner$ there are at most $R \times L + \norm[\calN_{\nsimple};C]{s_{\text{cov}}}$ that are labelled \emph{consumed} in $\vec{s}''_0(1)$.

It should be clear that we can remove all coloured tokens labelled \emph{untouched} along $\vec{s}''_0$
without affecting the validity of the path. We can also remove coloured tokens that are labelled \emph{ejected}
along $\vec{s}''_0$ since the transfer transition may still fire, however, fewer tokens will be transferred.
Since these coloured tokens are not labelled as consumed it is clear that their absence as simple tokens cannot disable
a rule that is fired along $\vec{s}''_0$. For a rule $r$ that removes an empty complex token $m$ we can see that if $m$ is empty along $\vec{s}''_0$ then removing coloured tokens from $\vec{s}''_0(1)$ could only possibly lead to less tokens inside $m$, hence it is impossible to disable $r$.
Thus we can obtain a new covering path $\vec{s}''$ for $s_{\text{cov}}$ in $\calN_{i+1}$, such that
$|\vec{s}''| \leq L$ and
$|\vec{s}''(1)(p')| < R \times L + \norm[\calN_{\nsimple};C]{s_{\text{cov}}}$ for all $p' \in \Pcomplex$ and 
$|\vec{s}''(1)(p')(p^{\I})| < R \times L + \norm[\calN_{\nsimple};C]{s_{\text{cov}}}$ for all $p' \in \Pcomplex$, for all $p^{\I} \in \Pinner$
since $\vec{s}''(1)$ is obtained from 
$\vec{s}(1)$ by removing tokens we can see that $\vec{s}''(1) \leqconfig \vec{s}(1)$ and hence also
$\norm[\calN_{i+1}]{\vec{s}''_0(1)} < L \cdot R'$.

Since 
\begin{align*}
R \times L + \norm[\calN_{\nsimple};C]{s_{\text{cov}}} &\leq R + R \times (L-1) + \norm[\calN_{\nsimple};C]{s_{\text{cov}}} + 1 \\
&= R \times (L-1) + R' \\
&\leq R' \times (L-1) + R'\\
&= R' \times L 
\end{align*}
we can see that the above implies that
$\norm[\calN_{i+1};C]{\vec{s}''_0(1)} < L \cdot R'$ which is what we wanted to prove.
\end{proof}

\begin{customcorollary}[\ref{cor:rackoff_case_split}]
For all covering paths $\vec{s}$ for $s_{\text{cov}}$ in $\calN_{i+1}$ one of two cases applies:
\begin{asparaenum}[(C$'_1$)]
\item(app:cor:rackoff_case_split:c-1) $\norm[\calN_{i+1}]{\vec{s}}^* < R' \cdot \length{\calN_i}{s_{\text{cov}}}$; or
\item(app:cor:rackoff_case_split:c-2) there exist paths $\vec{s}_1$ and $s_{p'} \cdot \vec{s}_2$ such that 
$\vec{s}_1$ is a covering path for $s_{p'}$, $\vec{s}_1(1) = \vec{s}(1)$ and
$\norm[\calN_{i+1}]{\vec{s}_1}^* < R' \cdot \length{\calN_i}{s_{\text{cov}}}$;
$s_{p'} \cdot \vec{s}_2$ is a covering path for $s_{\text{cov}}$ and
$|\vec{s}'_2| \leq \length{\calN_i}{s_{\text{cov}}}$; and
$\norm[\calN_{i+1};C]{s_{p'}} \leq R' \cdot \length{\calN_i}{s_{\text{cov}}}$. 
\end{asparaenum}
\end{customcorollary}
\begin{proof}
Let $\vec{s}$ be a covering path $\vec{s}$ for $s_{\text{cov}}$ in $\calN_{i+1}$.
According to Lemma~\ref{lem:rackoff_case_split} we can consider the following two cases:
\begin{enumerate}[(i)]
\item(lem:rackoff_case_split_c-1) $\norm[\calN_{i+1}]{\vec{s}}^* < R' \cdot \length{\calN_i}{s_{\text{cov}}}$. \newline
Clearly case \ref{cor:rackoff_case_split:c-1} applies.
\item(lem:rackoff_case_split_c-2) $\vec{s} = \vec{s}_1 \cdot s_p \cdot \vec{s}_2$ such that
$\norm[\calN_{i+1}]{\vec{s}_1}^* < R' \cdot \length{\calN_i}{s_{\text{cov}}}$ and
$\norm[\calN_{i+1}]{s_p} \geq R' \cdot \length{\calN_i}{s_{\text{cov}}}$. \newline 
Since $\norm[\calN_{i+1}]{s_p} \geq R' \cdot \length{\calN_i}{s_{\text{cov}}}$ we can assume w.l.o.g. that 
$|s_p(\enumelem{p}{i+1})| \geq R' \cdot \length{\calN_i}{s_{\text{cov}}}$ (we can always change the enumeration $\Psimple$'s elements). Lemma~\ref{lemma:pivot_strengthening:tools:i} applies to $s_p \cdot \vec{s}_2$ and gives us a covering path $\vec{s}'_2$ for $s_{\text{cov}}$ from $s_p$ in $\calN_{i+1}$ such that $|\vec{s}'_2| \leq \length{\calN_i}{s_{\text{cov}}}$.

Let $s_0 = \vec{s}_1(|\vec{s}_1|)$. Clearly $s_0 \cdot \vec{s}'_2$ is a covering path for $s_{\text{cov}}$ in $\calN_{i+1}$,
with length $|s_0 \cdot \vec{s}'_2| \leq \length{\calN_i}{s_{\text{cov}}} + 1$, and
$\norm[\calN_{i+1}]{s_0} < R' \cdot \length{\calN_i}{s_{\text{cov}}}$ since $\norm[\calN_{i+1}]{\vec{s}_1}^* < R' \cdot \length{\calN_i}{s_{\text{cov}}}$. 

We are thus able to apply Lemma~\ref{lemma:pivot_strengthening:tools:ii} to 
 $s_0 \cdot \vec{s}'_2$ which yields a new covering path $s_{p'} \cdot \vec{s}''_2$ for $s_{\text{cov}}$ in $\calN_{i+1}$
 such that $s_{p'} \leqconfig s_0$, $\norm[\calN_{i+1};C]{s_{p'}} < R' \cdot (\length{\calN_i}{s_{\text{cov}}}+1)$, and $|s_{p'} \cdot \vec{s}''_2| \leq \length{\calN_i}{s_{\text{cov}}} + 1$.

We can then conclude that $\vec{s}_1$ is a covering path in $\calN_{i+1}$ for $s_{p'}$,
$\vec{s}_1(1) = \vec{s}(1)$, and $\norm[\calN_{i+1}]{\vec{s}_1}^* < R' \cdot \length{\calN_i}{s_{\text{cov}}}$.
Further $s_{p'} \cdot \vec{s}''_2$ is a covering path for $s_{\text{cov}}$ in $\calN_{i+1}$,
$|\vec{s}''_2| \leq \length{\calN_i}{s_{\text{cov}}}$, and
$\norm[\calN_{i+1};C]{s_{p'}} \leq R' \cdot \length{\calN_i}{s_{\text{cov}}}$.
\end{enumerate}
\end{proof}

\begin{customproposition}[\ref{prop:rackoff_recurrence}]
\begin{asparaenum}[(i)]
\item $\length{\calN_{0}}{s_{\text{cov}}} \leq \diameter[\calN_{0}]{S_{(0,R')}} $
\item $\length{\calN_{i+1}}{s_{\text{cov}}} \leq \diameter[\calN_{i+1}]{S_{(i+1,B_i)}} + 
\length{\calN_i}{s_{\text{cov}}}$.
\end{asparaenum}
\end{customproposition}
\begin{proof}
In order to prove (i) 
let $\vec{s}$ be a covering path for $s_{\text{cov}}$ in $\calN_0$. Since all simple places are ignored it is trivial 
to see that $\norm[\calN_0]{\vec{s}}^* < 1 \leq R'$. Further 
$\norm[\calN_0;C]{s_{\text{cov}}} \leq \norm[\calN_{\nsimple};C]{s_{\text{cov}}} \leq R'$. 
Hence we have $\vec{s}(1), s_{\text{cov}} \in S_{(0,R')}$ which implies we can find a path $\vec{s}'$ from $\vec{s}(1)$ covering $s_{\text{cov}}$ in $\calN_0$ with 
$|\vec{s}'| \leq \diameter[\calN_0]{S_{(0,R')}}$. 
Hence $\dist{\vec{s}(1)}{s_{\text{cov}}}{\calN_0} \leq \diameter[\calN_0]{S_{(0,R')}}$. Since $\vec{s}$ was arbitrary we can conclude that
$\length{\calN_{0}}{s_{\text{cov}}} \leq \diameter[\calN_{0}]{S_{(0,R')}}$.

For claim (ii)
let $s$ be an element of $\Configuration^\infty$ such that $\norm[\calN_{i+1}]{s} < \infty$. We want to show that
$$\dist{s}{s_\text{cov}}{\calN_{i+1}} \leq \diameter[\calN_{i+1}]{S_{(i+1,B_i)}} + \length{\calN_i}{s_{\text{cov}}}.$$
Let us do a case analysis:
\begin{enumerate}[(i)]
	\item \emph{Case 1: There is no covering path for $s_{\text{cov}}$ from $s$ in $\calN_{i+1}$.}
	\newline
	Then $\dist{s}{s_{\text{cov}}}{\calN_{i+1}} = 0$ by definition and the inequality holds
	trivially.
	\item \emph{Case 2: There is a covering path $\vec{s}$ for $s_{\text{cov}}$ from $s$ in $\calN_{i+1}$.}\label{case:bounded_path}
	\newline
	Corollary~\ref{cor:rackoff_case_split} allows us to consider two cases:
	\begin{asparaenum}[label={\emph{Case (\Alph*):}}]
	\item \emph{$\norm[\calN_{i+1}]{\vec{s}'}^* < B_i$}.\newline
		In this case it is easy to see that
		$\norm[\calN_i+1;C]{s_{\text{cov}}} \leq \norm[\calN_{\nsimple};C]{s_{\text{cov}}} \discn \leq R' \leq B_i$. 
		Hence we can see $s, s_{\text{cov}} \in S_{(i+1,B_i)}$ which implies we can find a path 
		$\vec{s}'$ from $s$ covering $s_{\text{cov}}$ in $\calN_{i+1}$ with 
		$|\vec{s}'| \leq \diameter[\calN_{i+1}]{S_{(i+1,B_i)}}$. 
		Hence $\dist{s}{s_{\text{cov}}}{\calN_{i+1}} \leq \diameter[\calN_{i+1}]{S_{(i+1,B_i)}}$.
	\item \emph{There exist paths $\vec{s}_1$ and $s_{p'} \cdot \vec{s}_2$ such that 
				$\vec{s}_1$ is a covering path for $s_{p'}$, $\vec{s}_1(1) = s$ and
				$\norm[\calN_{i+1}]{\vec{s}_1}^* < B_i$;
				$s_{p'} \cdot \vec{s}_2$ is a covering path for $s_{\text{cov}}$ and
				$|\vec{s}'_2| \leq \length{\calN_i}{s_{\text{cov}}}$; and
				$\norm[\calN_{i+1};C]{s_{p'}} \leq B_i$}.\newline
		We can see $s, s_{p'} \in S_{(i+1,B_i)}$ which implies we can find a path 
		$\vec{s}'$ from $s$ covering $s_{p'}$ in $\calN_{i+1}$ with 
		$|\vec{s}'| \leq \diameter[\calN_{i+1}]{S_{(i+1,B_i)}}$. 
		Since $\calN_{i+1}$ is a WSTS and $s_{p'} \leqconfig \vec{s}'(|\vec{s}'|)$, 
		we can replay $s_{p'} \cdot \vec{s}_2$ from $\vec{s}'(|\vec{s}'|)$ yielding a path
		$\vec{s}'(|\vec{s}'|) \cdot \vec{s}''$ in $\calN_{i+1}$ such that $|\vec{s}''| = |\vec{s}_2|$ and
		$\vec{s}''$ covers $s_{\text{cov}}$.
		Thus $\vec{s}'\vec{s}''$ is a covering path in $\calN_{i+1}$ for $s_{\text{cov}}$ with
		$|\vec{s}'\vec{s}''| \leq \diameter[\calN_{i+1}]{S_{(i+1,B_i)}}+\length{\calN_i}{s_{\text{cov}}}$.
		Hence we can conclude that
		$\dist{s}{s_{\text{cov}}}{\calN_{i+1}} \leq \diameter[\calN_{i+1}]{S_{(i+1,B_i)}}+\length{\calN_i}{s_{\text{cov}}}$.
\end{asparaenum}
\end{enumerate}
Since the inequality holds in all cases and $s$ was arbitrary we can conclude that
$$\length{\calN_{i+1}}{s_{\text{cov}}} \leq \diameter[\calN_{i+1}]{S_{(i+1,B_i)}}+\length{\calN_i}{s_{\text{cov}}}.$$
\end{proof}
\newpage
\subsubsection{Proof of Theorem~\ref{thm:existence_counter_abstraction_petri_net}}
Our analysis of $S_{(i,B)}$ is driven by two observations: 
\begin{inparaenum}[(O$'_1$)]
\item(obs:1) $S_{(i,B)}$ is the set of starting and covering configurations of paths in $\calN_i$ along which the $i$ not ignored places contain less than $B$ tokens;
\item along such paths a complex token cannot carry more than $B$ tokens of a particular colour if these coloured tokens are ejected on the path; and if they are not ejected then they play no r\^ole in enabling the path. 
\end{inparaenum}
In order to exploit these observations we derive a transition system from $\calN_i$ that generates exactly the paths of bounded norm. 
Suppose $(S,\to[TS=\calS],\leq_{\calS})$ is a PSTS and $\norm{-}$ is a norm for $S$
then we say $\calS = (S,\to[TS=\calS],\leq_{\calS},\discn\norm{-})$ is a \emph{normed PSTS}.
Given a normed PSTS $\calS = (S,\to[TS=\calS],\leq_{\calS},\norm{-})$ and a \emph{norm-radius} $\rho \in \N^{\infty}$ we define a normed sub-PSTS $\BTS{\calS}{}{\rho}$ --- the \emph{ball} of norm-radius $\rho$ in $\calS$ ---
by $\BTS{\calS}{}{\rho} = \paren{\BTS{S}{}{\rho},\to[TS={\BTS{\calS}{}{\rho}}],\leq_\calS,\norm{-}}$
where $\BTS{S}{}{\rho} \is \varset{s \in S \mid \norm{s} < B }$, and
${\to[TS={\BTS{\calS}{}{\rho}}]} := {{\to[TS={\calS}]} \intersect (\BTS{S}{}{\rho})^2}$.
Note that by construction all $\BTS{\calS}{}{\rho}$-paths $\vec{s}$ satisfy $\norm{\vec{s}}^{*} < \rho$.

Attaching the norm $\norm[\calN_i]{-}$ to $\calN_i$ we then obtain a normed PSTS $(\Config^{\infty},\to[TS=\calN_i],\leqconfig,\norm[\calN_i]{-})$ to which we will also refer to as $\calN_i$.
The transition system ${\BTS{\calN_i}{}{B}}$ produces the paths in $P_{i,B}$ which generate the set $S_{(i,B)}$.

Let $B \in \N$ and $i \leq \nsimple$, we will now define the Petri net $\cabspetrinet$.
It is our intention to show there is a tight (length preserving) correspondence between ${\BTS{\calN_i}{}{B}}$-paths and paths in $\cabspetrinet$.
Set the dimension 
$\cabspnetdim := i + \paren{B+1}\paren{\paren{\ncomplex + 1}\paren{B+1}^{\ninner-1}-1}$.

\noindent We define the counter abstraction function $\cabsfun[i,B] : \Configuration^{\infty} \to \N^{\cabspnetdim}$ as
$\cabsfun[i,B](s)(j) := |s(\enumelem{p}{j})|$ for $1 \leq j \leq i$ and
$$\cabsfun[i,B]\paren{s}\paren{{i+\theta}} := \sum_{m^{\I} \in \cconfunI(j_0,\ldots,j_{\ninner - 1})} s\paren{\enumelem{p'}{j_{\ninner}}}(m^{\I})$$ 
where {$\theta = 
\sum_{k = 0}^{\ninner}j_k \cdot \paren{B+1}^k$},
for all $1 \leq  j_{\ninner} \leq \ncomplex$ and $0 \leq j_k \leq B$, $k = 0,\ldots, \ninner-1$,
and we write
\[\cconfunI(j_0,\ldots,j_{\ninner - 1}) = \set{m \in \Minner \mid \min(m(\changed[jk]{\enumelem{p^{\I}}{k}}),B) = j_{k-1}}.\]
Let us establish a \changed[jk]{``linearity''} property of $\cabsfun[i,B]$ to simplify proofs.
\begin{lemma}\label{lemma:counter_abstraction_homomorphism}
Let $i \leq \nsimple$, $B \in \N$. 
Suppose $s,s' \in \Configuration^{\infty}$ such that both $\norm[\calN_i]{s},\norm[\calN_i]{s'} < \infty$, then
$\cabsfun[i,B]\paren{s \oplus s'} = \cabsfun[i,B](s) + \cabsfun[i,B](s')$
and if $s \ominus s' \in \Configuration^{\infty}$ then
$\cabsfun[i,B]\paren{s \ominus s'} = \cabsfun[i,B](s) - \cabsfun[i,B](s')$.
\end{lemma}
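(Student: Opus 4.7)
The plan is to verify both equations component by component, using the fact that $\cabsfun[i,B]$ is built entirely out of pointwise multiset cardinalities. Recall that $\oplus$ on $\Configuration^\infty$ is pointwise multiset union on each place, and that $\ominus$, when it yields an element of $\Configuration^\infty$, is pointwise multiset subtraction. Both operations therefore commute with the cardinality function $|{-}|$ on every component, provided the arities in question are all finite.

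First, I would treat the first $i$ components, where $\cabsfun[i,B](s)(j) = |s(\enumelem{p}{j})|$ for $1 \leq j \leq i$. The hypothesis $\norm[\calN_i]{s}, \norm[\calN_i]{s'} < \infty$ forces $|s(\enumelem{p}{j})|$ and $|s'(\enumelem{p}{j})|$ to be finite natural numbers, so the usual identity $|s(\enumelem{p}{j}) \oplus s'(\enumelem{p}{j})| = |s(\enumelem{p}{j})| + |s'(\enumelem{p}{j})|$ carries over, and likewise for $\ominus$ whenever $s \ominus s' \in \Configuration^\infty$ (which, by definition of $\ominus$ on configurations, requires $s'(\enumelem{p}{j}) \leq_{\M[\{\bullet\}]} s(\enumelem{p}{j})$ at each place).

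Next, for each component indexed by $i + \theta$ with $\theta$ determined by a tuple $(j_0,\ldots,j_{\ninner})$, note that the indexing set $\cconfunI(j_0,\ldots,j_{\ninner-1})$ depends only on $B$ and the $j_k$'s and is \emph{independent} of $s$. Hence
\begin{align*}
\cabsfun[i,B]&(s \oplus s')(i+\theta) \\
&= \sum_{m^{\I} \in \cconfunI(j_0,\ldots,j_{\ninner-1})} (s \oplus s')(\enumelem{p'}{j_\ninner})(m^{\I}) \\
&= \sum_{m^{\I} \in \cconfunI(j_0,\ldots,j_{\ninner-1})} \bigl(s(\enumelem{p'}{j_\ninner})(m^{\I}) + s'(\enumelem{p'}{j_\ninner})(m^{\I})\bigr),
\end{align*}
which splits into $\cabsfun[i,B](s)(i+\theta) + \cabsfun[i,B](s')(i+\theta)$. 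The analogous calculation for $\ominus$ works identically: the assumption $s \ominus s' \in \Configuration^\infty$ implies $s'(\enumelem{p'}{j_\ninner}) \leq_{\M[\Minner]} s(\enumelem{p'}{j_\ninner})$, which gives the pointwise bound $s'(\enumelem{p'}{j_\ninner})(m^{\I}) \leq s(\enumelem{p'}{j_\ninner})(m^{\I})$ needed to legitimize the termwise subtraction before summing.

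Because these two cases exhaust all components of $\cabsfun[i,B]$, the two claimed equalities follow. There is no real obstacle: the lemma is essentially the observation that $\cabsfun[i,B]$ is a $\mathbb{Z}$-linear combination of the coordinate functionals $s \mapsto s(\enumelem{p}{j})(\bullet)$ and $s \mapsto s(\enumelem{p'}{j_\ninner})(m^{\I})$, each of which is additive under $\oplus$ and under $\ominus$ when the latter is well-defined in $\Configuration^\infty$. The only mildly subtle point is checking that the finiteness hypothesis rules out the degenerate case $\infty + \infty$ in the simple-place components, which is precisely what $\norm[\calN_i]{s}, \norm[\calN_i]{s'} < \infty$ delivers.
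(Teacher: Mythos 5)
Your proposal is correct and follows essentially the same route as the paper's proof: a component-by-component check, handling the first $i$ simple-place coordinates via additivity of cardinalities and the remaining coordinates by splitting the sum over $\cconfunI(j_0,\ldots,j_{\ninner-1})$ (which, as you note, does not depend on the configuration), with the condition $s \ominus s' \in \Configuration^{\infty}$ supplying the pointwise nonnegativity needed for the subtraction case. No gaps.
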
 
\begin{proof}
It is clear that for all $1 \leq j \leq i$
\begin{align*}
\cabsfun[i,B]\paren{s \oplus s'}(j) 
&= |\paren{s \oplus s'}(\enumelem{p}{j})| 
= \paren{s \oplus s'}(\enumelem{p}{j})(\bullet) \\
&= s(\enumelem{p}{j})(\bullet) + s'(\enumelem{p}{j})(\bullet) \\
&= |s(\enumelem{p}{j})| + |s'(\enumelem{p}{j})| \\
&= \cabsfun[i,B](s)(j) + \cabsfun[i,B](s')(j)
\shortintertext{and since $s \ominus s' \in \Configuration^{\infty}$ we know $|s(\enumelem{p}{j})| - |s'(\enumelem{p}{j})| \geq 0$ and hence }
\cabsfun[i,B]\paren{s \ominus s'}(j) 
&= |\paren{s \ominus s'}(\enumelem{p}{j})| 
= \paren{s \ominus s'}(\enumelem{p}{j})(\bullet) \\
&= s(\enumelem{p}{j})(\bullet) - s'(\enumelem{p}{j})(\bullet) \\
&= |s(\enumelem{p}{j})| - |s'(\enumelem{p}{j})|\\ 
&= \cabsfun[i,B](s)(j) - \cabsfun[i,B](s')(j)
\end{align*}
Further let $1 \leq j_{\ninner} \leq \ncomplex$ and $(j_0,\ldots,j_{\ninner-1}) \in \N^{\leq B}$. Then
\begin{align*}
&\cabsfun[i,B]\paren{s \oplus s'}\paren{i+\sum_{k = 0}^{\ninner}j_k \cdot \paren{B+1}^k} \\
&\;\;\;\; = \sum_{m^{\I} \in \cconfunI(j_0,\ldots,j_{\ninner - 1})} \paren{s \oplus s'}(\enumelem{p'}{j_{\ninner}})(m^{\I})\\
&\;\;\;\; = \sum_{m^{\I} \in \cconfunI(j_0,\ldots,j_{\ninner - 1})} \paren{s(\enumelem{p'}{j_{\ninner}})(m^{\I}) + s'(\enumelem{p'}{j_{\ninner}})(m^{\I})}\\
&\;\;\;\; = \paren{\sum_{m^{\I} \in \cconfunI(j_0,\ldots,j_{\ninner - 1})} s(\enumelem{p'}{j_{\ninner}})(m^{\I})} \\
&\;\;\;\;\;\; + \paren{\sum_{m^{\I} \in \cconfunI(j_0,\ldots,j_{\ninner - 1})} s'(\enumelem{p'}{j_{\ninner}})(m^{\I})}\\
&\;\;\;\; = \cabsfun[i,B]\paren{s}\paren{i+\sum_{k = 0}^{\ninner}j_k \cdot \paren{B+1}^k}  \\
&\;\;\;\;\;\; + \cabsfun[i,B]\paren{s'}\paren{i+\sum_{k = 0}^{\ninner}j_k \cdot \paren{B+1}^k}
\shortintertext{ and since $s \ominus s' \in \Configuration^{\infty}$ we know $s(\enumelem{p'}{j_{\ninner}})(m^{\I}) - s'(\enumelem{p'}{j_{\ninner}})(m^{\I}) \geq 0$ and hence }
&\cabsfun[i,B]\paren{s \ominus s'}\paren{i+\sum_{k = 0}^{\ninner}j_k \cdot \paren{B+1}^k} \\
&\;\;\;\; = \sum_{m^{\I} \in \cconfunI(j_0,\ldots,j_{\ninner - 1})} \paren{s \ominus s'}(\enumelem{p'}{j_{\ninner}})(m^{\I})\\
&\;\;\;\; = \sum_{m^{\I} \in \cconfunI(j_0,\ldots,j_{\ninner - 1})} \paren{s(\enumelem{p'}{j_{\ninner}})(m^{\I}) - s'(\enumelem{p'}{j_{\ninner}})(m^{\I})}\\
&\;\;\;\; = \paren{\sum_{m^{\I} \in \cconfunI(j_0,\ldots,j_{\ninner - 1})} s(\enumelem{p'}{j_{\ninner}})(m^{\I})} \\
&\;\;\;\;\;\; - \paren{\sum_{m^{\I} \in \cconfunI(j_0,\ldots,j_{\ninner - 1})} s'(\enumelem{p'}{j_{\ninner}})(m^{\I})}\\
&\;\;\;\; = \cabsfun[i,B]\paren{s}\paren{i+\sum_{k = 0}^{\ninner}j_k \cdot \paren{B+1}^k}  \\
&\;\;\;\;\;\; - \cabsfun[i,B]\paren{s'}\paren{i+\sum_{k = 0}^{\ninner}j_k \cdot \paren{B+1}^k}
\end{align*}

Hence we know that for all $1 \leq j \leq \cabspnetdim$
$\cabsfun[i,B]\paren{s \oplus s'}(j) = \cabsfun[i,B](s)(j) + \cabsfun[i,B](s')(j)$ and thus
$\cabsfun[i,B]\paren{s \oplus s'} = \cabsfun[i,B](s) + \cabsfun[i,B](s')$,
and if $s \ominus s' \in \Configuration^{\infty}$ then for $1 \leq j \leq \cabspnetdim$
$\cabsfun[i,B]\paren{s \ominus s'}(j) = \cabsfun[i,B](s)(j) - \cabsfun[i,B](s')(j)$
and thus
$$\cabsfun[i,B]\paren{s \ominus s'} = \cabsfun[i,B](s) - \cabsfun[i,B](s').$$
\end{proof}%
\noindent
We define the counter abstraction Petri net
$\commabr{\cabspetrinet = \varparen{\cabspnetdim,\cabspnetrules}}$.
The set of rules $\cabspnetrules$ is derived from $\Rules$ by case analysis:
if $r = (I,O) \in \SimpleRules$ then add $\ind{r} := \varparen{\cabsfun[i,B](O \ominus I),\cabsfun[i,B](I)}$ to $\cabspnetrules$.

\noindent For $r = ((p,I),(p',\calc,O)) \in \ComplexRules$, add for all $m \in \Minner$ such that $\norm[\I]{m} \leq B$
the rule $\ind[m]{r} := (\ind{r}, d)$ to $\cabspnetrules$ where 
\begin{align*}
\ind{r} & = \cabsfun[i,B]\paren{O\oplus\update{}{p'}{\mset{m \oplus \calc}} \ominus \paren{I\oplus\update{}{p}{\mset{m}}}}\\
d & = \cabsfun[i,B]\paren{I\oplus\update{}{p}{\mset{m}}}.
\end{align*}
For $r = ((p,I),(p',P, O)) \in \TransferRules$, define a collection of rules parametrised for all 
$m \in \Minner$ such that $\norm[\I]{m} \leq B$ and $\max_{p \in P}(|m(p)|) < B$ as
$\ind[m]{r} := (\ind{r},d)$ and add them to $\cabspnetrules$ where $m_{P} = m\restriction P$ and $m_{\overline{P}} = m \restriction (\Pinner \setminus P)$:
\begin{align*}
\ind{r} & = \cabsfun[i,B]\varparen{O\oplus\paren{m_{P} \circ \colmap^{-1}}\oplus\update{}{p'}{\mset{m_{\overline{P}}}} \ominus \varparen{I\oplus\update{}{p}{\mset{m}}}}\\
d & = \cabsfun[i,B]\paren{I\oplus\update{}{p}{\mset{m}}}.
\end{align*}
\noindent Coverability instances for $\cabspetrinet$ and ${\BTS{\calN_i}{}{B}}$ are unfortunately not isomorphic  since the order on the former is ignorant of complex tokens. To remedy this we add rules to $\cabspetrinet$ which yields the Petri net $\cabspnetord = \varparen{\cabspnetdim,\cabspnetrules \union \cabspnetordrules}$.
Let us write $s_{p,m} \is \update{}{p}{\mset{m}}$ for $p \in \Pcomplex$, $m \in \Minner$
and define 
$$\cabspnetordrules = \set{\paren{\cabsfun[i,B](s_{p,m'})-\cabsfun[i,B](s_{p,m})),\cabsfun[i,B](s_{p,m'}} \mid \Xi}$$
where $\Xi = m,m' \in \Minner_{\ninner,B+1}, m' >_{\Minner} m, p \in \Pcomplex$.


\begin{remark*}
Even though the Petri nets $\cabspetrinet$ and $\cabspnetord$ have a vast number of rules, the maximal entry 
in any rule is tightly controlled: 
$R \geq \max\varset{\ind{r}(i) \mid \ind{r} \in \cabspnetrules \union \cabspnetordrules}$. 
We aim to exploit a result by \citeauthor{Bonnet:12} that shows that covering radii in SIAN are sensitive to the maximal entry in any rule rather than their number.
\end{remark*}%
\noindent
We will now set out to show the connection between $\cabspetrinet$ and $\calN_i$.
First let us add a norm to $\cabspetrinet$: define $\norm[\cabspetrinet]{v} \!= \max_{j \in \range{i}}(|v(j)|)$.
In the following we prove two lemmas to help us prove a lockstep property (\Cref{lemma:bisimulation_lemma_lockstep}) 
between $\cabspetrinet$ and $\calN_i$.

\begin{lemma}\label{lemma:counter_abstraction_Mabs}
Let $1 \leq i \leq \nsimple$, $B \in \N$. 
Suppose $p = \enumelem{p'}{j_{\ninner}} \in \Pcomplex$, 
and $j_0,\ldots,j_{\ninner-1} \in \N^{\leq B}$ such that $m,m' \in \cconfunI(j_0,\ldots,j_{\ninner-1})\discn \subseteq \Minner$
and $m'' \in \Minner$ then
$$\cabsfun[i,B]\paren{\update{}{p}{\mset{m \oplus m''}}} = \cabsfun[i,B]\paren{\update{}{p}{\mset{m' \oplus m''}}}$$
\end{lemma}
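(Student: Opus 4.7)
The plan is to reduce the claim to showing that $m \oplus m''$ and $m' \oplus m''$ lie in the same equivalence class $\cconfunI(k_0,\ldots,k_{\ninner-1})$ for some tuple $(k_0,\ldots,k_{\ninner-1})$, after which both sides of the equation place a single complex token into exactly the same bucket-coordinate of $\cabsfun[i,B]$ and are zero on every other coordinate.

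First I would unpack the definition of $\cabsfun[i,B]$ applied to $\update{}{p}{\mset{m \oplus m''}}$. On coordinates $1 \le j \le i$ the update touches only the complex place $p$, so both sides evaluate to $0$ there. On a coordinate indexed by $(j'_0,\ldots,j'_{\ninner-1}, j'_\ninner)$ with $j'_\ninner \ne j_\ninner$, both sides again evaluate to $0$ because the only non-empty complex place is $p = \enumelem{p'}{j_\ninner}$. For $j'_\ninner = j_\ninner$, the sum defining $\cabsfun[i,B](\update{}{p}{\mset{m \oplus m''}})$ collapses to $1$ on the unique bucket $(k_0,\ldots,k_{\ninner-1})$ with $m \oplus m'' \in \cconfunI(k_0,\ldots,k_{\ninner-1})$ and $0$ elsewhere; and analogously for the RHS with $m' \oplus m''$.

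It therefore suffices to check that $m \oplus m''$ and $m' \oplus m''$ determine the same bucket, i.e.\ that
\[
\min\bigl((m \oplus m'')(\enumelem{p^\I}{k}), B\bigr) \;=\; \min\bigl((m' \oplus m'')(\enumelem{p^\I}{k}), B\bigr)
\]
for every $k \in \range{\ninner}$. The hypothesis $m,m' \in \cconfunI(j_0,\ldots,j_{\ninner-1})$ gives $\min(m(\enumelem{p^\I}{k}),B) = \min(m'(\enumelem{p^\I}{k}),B) = j_{k-1}$, and I would split on $j_{k-1}$. If $j_{k-1} < B$ then $m(\enumelem{p^\I}{k}) = m'(\enumelem{p^\I}{k}) = j_{k-1}$, so the two sums $j_{k-1} + m''(\enumelem{p^\I}{k})$ are literally equal. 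If $j_{k-1} = B$ then both $m(\enumelem{p^\I}{k}) \ge B$ and $m'(\enumelem{p^\I}{k}) \ge B$, hence after adding $m''(\enumelem{p^\I}{k}) \ge 0$ both sides remain $\ge B$ and their $\min$ with $B$ equals $B$.

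I do not anticipate a real obstacle: the statement is a routine consistency check for the saturation-at-$B$ abstraction, and the whole argument distils to the monoid fact that $\min(x+c,B) = \min(y+c,B)$ whenever $\min(x,B) = \min(y,B)$ and $c \ge 0$. The only mildly finicky part is keeping the multi-index bookkeeping straight when unfolding the definition of $\cabsfun[i,B]$.
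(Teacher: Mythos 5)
Your proposal is correct and follows essentially the same route as the paper's proof: both reduce the claim to showing that $m \oplus m''$ and $m' \oplus m''$ fall into the same bucket $\cconfunI(\cdot)$ and then unwind $\cabsfun[i,B]$ coordinate-by-coordinate, with all other coordinates vanishing. The only cosmetic difference is that the paper invokes the saturation identity $\min(x+c,B)=\min(\min(x,B)+c,B)$ directly where you argue it by a case split on $j_{k-1}<B$ versus $j_{k-1}=B$, which is the same fact.
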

\begin{proof}
Let $j'_0,\ldots,j'_{\ninner-1}$ such that $m \oplus m'' \in \cconfunI(j'_0,\ldots,j'_{\ninner-1})$ and write $j'_{\ninner} = j_{\ninner}$.
Let us consider $m_0 \oplus \calc$. Let $1 \leq k \leq \ninner$ and 
note that since $m,m' \in \cconfunI(j_0,\ldots,j_{\ninner-1})$ we have 
$\min(m(\enumelem{p^{\I}}{k}),B) = j_{k-1} = \min(m(\enumelem{p^{\I}}{k}),B)$; 
and thus
\begin{align*}
j_{k-1} &= \min((m \oplus m'')(\enumelem{p^{\I}}{k}),B) \\
	&= \min(m(\enumelem{p^{\I}}{k}) + m''(\enumelem{p^{\I}}{k}),B) \\
	&= \min(\min(m(\enumelem{p^{\I}}{k}),B) + m''(\enumelem{p^{\I}}{k}),B)\\
	&= \min(\min(m'(\enumelem{p^{\I}}{k}),B) + m''(\enumelem{p^{\I}}{k}),B) \\
	&= \min(m'(\enumelem{p^{\I}}{k}) + m''(\enumelem{p^{\I}}{k}),B)
\end{align*}
from which we can conclude that $m' \oplus m'' \in \cconfunI(j'_0,\ldots,j'_{\ninner-1})$.
Further 

\begin{align*}
&\cabsfun[i,B]\paren{\update{}{p}{\mset{m \oplus m''}}}\paren{i+\sum_{k = 0}^{\ninner}j'_k \cdot \paren{B+1}^k} \\
&\;\;\;\;= \sum_{m^{\I}\in \cconfunI(j'_0,\ldots,j'_{\ninner-1})} \paren{\update{}{p}{\mset{m \oplus m''}}}(p)(m^{\I})\\
&\;\;\;\;= \paren{\update{}{p}{\mset{m \oplus m''}}}(p)(m \oplus m'') \\
&\;\;\;\;= 1
= \paren{\update{}{p}{\mset{m' \oplus m''}}}(p)(m \oplus m'') \\
&\;\;\;\;= \sum_{m^{\I}\in \cconfunI(j'_0,\ldots,j'_{\ninner-1})} \paren{\update{}{p}{\mset{m' \oplus m''}}}(p)(m^{\I})\\
&\;\;\;\;= \cabsfun[i,B]\paren{\update{}{p}{\mset{m' \oplus m''}}}\paren{i+\sum_{k = 0}^{\ninner}j'_k \cdot \paren{B+1}^k} 
\end{align*}
For any $\enumelem{p'}{j''_{\ninner}} \in \Pcomplex$, 
and $j''_0,\ldots,j''_{\ninner-1} \in \N^{\leq B}$ such that 
$i+\sum_{k = 0}^{\ninner}j'_k \cdot \paren{B+1}^k \neq i+\sum_{k = 0}^{\ninner}j''_k \cdot \paren{B+1}^k$
it is clear that $m \oplus m'', m' \oplus m'' \notin \cconfunI(j''_0,\ldots,j''_{\ninner-1})$ and so we have
\begin{align*}
&\cabsfun[i,B]\paren{\update{}{p}{\mset{m \oplus m''}}}\paren{i+\sum_{k = 0}^{\ninner}j''_k \cdot \paren{B+1}^k} \\
&\;\;\;\;= \sum_{m^{\I}\in \cconfunI(j''_0,\ldots,j''_{\ninner-1})} \paren{\update{}{p}{\mset{m \oplus m''}}}(\enumelem{p'}{j''_{\ninner}})(m^{\I})\\
&\;\;\;\;= \sum_{m^{\I}\in \cconfunI(j''_0,\ldots,j''_{\ninner-1})} 0 = 0 \\
&\;\;\;\;= \sum_{m^{\I}\in \cconfunI(j''_0,\ldots,j''_{\ninner-1})} \paren{\update{}{p}{\mset{m' \oplus m''}}}(\enumelem{p'}{j''_{\ninner}})(m^{\I})\\
&\;\;\;\;= \cabsfun[i,B]\paren{\update{}{p}{\mset{m' \oplus m''}}}\paren{i+\sum_{k = 0}^{\ninner}j''_k \cdot \paren{B+1}^k} 
\end{align*}
and also for all $1 \leq j \leq \nsimple$
\begin{align*}
\cabsfun[i,B]\paren{\update{}{p}{\mset{m \oplus m''}}}\paren{j} &=
|\paren{\update{}{p}{\mset{m \oplus m''}}}\paren{\enumelem{p}{j}}| \\
&= 0 \\
&= |\paren{\update{}{p}{\mset{m' \oplus m''}}}\paren{\enumelem{p}{j}}| \\
&= \cabsfun[i,B]\paren{\update{}{p}{\mset{m' \oplus m''}}}\paren{j}
\end{align*}
Hence we can conclude that
$$\cabsfun[i,B]\paren{\update{}{p}{\mset{m \oplus m''}}} = \cabsfun[i,B]\paren{\update{}{p}{\mset{m' \oplus m''}}}$$
\end{proof}

\begin{lemma}\label{lemma:counter_abstraction_Meject}
Let $1 \leq i \leq \nsimple$, $B \in \N$. 
Suppose $p = \enumelem{p'}{j_{\ninner}} \in \Pcomplex$, 
and $j_0,\ldots,j_{\ninner-1} \in \N^{\leq B}$ such that $m,m' \in \cconfunI(j_0,\ldots,j_{\ninner-1})\discn \subseteq \Minner$
and $P \subseteq \Pinner$ then
$$\cabsfun[i,B]\paren{\update{}{p}{\mset{m \restriction \overline{P}}}} = \cabsfun[i,B]\paren{\update{}{p}{\mset{m' \restriction \overline{P}}}}$$
where we write $\overline{P} = \Pinner \setminus P$.
\end{lemma}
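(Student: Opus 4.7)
The plan is to mirror the argument of Lemma~\ref{lemma:counter_abstraction_Mabs} almost verbatim, the only difference being the operation applied to $m$ and $m'$: instead of adding the same $m''$ to both, we restrict both to the complement $\overline{P}$. The key semantic observation is that restriction to $\overline{P}$ acts uniformly on the components indexed by colours, zeroing out those in $P$ and preserving those in $\overline{P}$; hence if $m$ and $m'$ are indistinguishable under the bucketing $\cconfunI(j_0,\ldots,j_{\ninner-1})$, so are $m\restriction\overline{P}$ and $m'\restriction\overline{P}$.

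Concretely, I would first compute $j'_0,\ldots,j'_{\ninner-1}$ such that both $m\restriction\overline{P}$ and $m'\restriction\overline{P}$ fall into $\cconfunI(j'_0,\ldots,j'_{\ninner-1})$. For each colour $\enumelem{p^{\I}}{k}$, split on whether $\enumelem{p^{\I}}{k}\in P$: if yes, then $(m\restriction\overline{P})(\enumelem{p^{\I}}{k})=0=(m'\restriction\overline{P})(\enumelem{p^{\I}}{k})$, so set $j'_{k-1}=0$; if no, then $(m\restriction\overline{P})(\enumelem{p^{\I}}{k})=m(\enumelem{p^{\I}}{k})$ and similarly for $m'$, and the hypothesis $m,m'\in\cconfunI(j_0,\ldots,j_{\ninner-1})$ gives $\min(m(\enumelem{p^{\I}}{k}),B)=j_{k-1}=\min(m'(\enumelem{p^{\I}}{k}),B)$, so set $j'_{k-1}=j_{k-1}$. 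In both subcases the two restricted tokens agree on the bucket coordinate, so they lie in the same $\cconfunI(j'_0,\ldots,j'_{\ninner-1})$.

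Having established this, set $j'_{\ninner}=j_{\ninner}$ and evaluate $\cabsfun[i,B]\paren{\update{}{p}{\mset{m\restriction\overline{P}}}}$ and $\cabsfun[i,B]\paren{\update{}{p}{\mset{m'\restriction\overline{P}}}}$ coordinate by coordinate exactly as in Lemma~\ref{lemma:counter_abstraction_Mabs}: at the coordinate $i+\sum_{k=0}^{\ninner}j'_k(B+1)^k$ both sums collapse to $1$ (each singleton multiset puts its token into the bucket $\cconfunI(j'_0,\ldots,j'_{\ninner-1})$ of place $p$); at every other complex coordinate both sums are $0$; and at simple coordinates $1\leq j\leq i$ both values are $0$ since neither configuration places simple tokens. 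Hence the two vectors agree in every coordinate, yielding the desired equality.

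There is essentially no obstacle — the only subtlety is the case split on colour membership in $P$ when building the witness bucket $(j'_0,\ldots,j'_{\ninner-1})$, which requires noting that $\min$-truncation at $B$ is monotone under the pointwise-zeroing operation $(\cdot)\restriction\overline{P}$. Once the bucket witness is in hand, the coordinate-by-coordinate verification is purely mechanical and structurally identical to Lemma~\ref{lemma:counter_abstraction_Mabs}, so the proof should be stated concisely by appealing back to that pattern.
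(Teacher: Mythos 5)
Your proposal is correct and follows essentially the same route as the paper: the same case split on whether a colour lies in $P$ shows that $m\restriction\overline{P}$ and $m'\restriction\overline{P}$ land in the same bucket $\cconfunI(j'_0,\ldots,j'_{\ninner-1})$, which is the whole content of the argument. The only cosmetic difference is that the paper then finishes by invoking Lemma~\ref{lemma:counter_abstraction_Mabs} (applied to the two restricted tokens), whereas you redo the coordinate-by-coordinate evaluation that constitutes that lemma's proof — the two are interchangeable.
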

\begin{proof}
Let $j'_0,\ldots,j'_{\ninner-1}$ such that $m \restriction \overline{P} \in \cconfunI(j'_0,\ldots,j'_{\ninner-1})$ and write $j'_{\ninner} = j_{\ninner}$.
Let us consider $m' \restriction \overline{P}$. Let $1 \leq k \leq \ninner$ and 
note that since $m,m' \in \cconfunI(j_0,\ldots,j_{\ninner-1})$ we have 
$\min(m(\enumelem{p^{\I}}{k}),B) = j_{k-1} = \min(m(\enumelem{p^{\I}}{k}),B)$; 
and thus if $\enumelem{p^{\I}}{k} \notin P$
\begin{align*}
j_{k-1} &= \min(m \restriction \overline{P}(\enumelem{p^{\I}}{k}),B) 
	= \min(m(\enumelem{p^{\I}}{k}),B) \\
	&= \min(m'(\enumelem{p^{\I}}{k}),B) \\
	&\min(m' \restriction \overline{P}(\enumelem{p^{\I}}{k}),B) 
\end{align*}
and for $\enumelem{p^{\I}}{k} \in P$
\begin{align*}
j_{k-1} &= \min(m \restriction \overline{P}(\enumelem{p^{\I}}{k}),B) 
	= \min(0,B) = 0 = \\
	&\min(m' \restriction \overline{P}(\enumelem{p^{\I}}{k}),B) 
\end{align*}
from which we can conclude that $m \restriction \overline{P},m' \restriction \overline{P} \in 
\commabr{\cconfunI(j'_0,\ldots,j'_{\ninner-1})}$.
Lemma~\ref{lemma:counter_abstraction_Mabs} then gives
$$\cabsfun[i,B]\paren{\update{}{p}{\mset{m \restriction \overline{P}}}} = \cabsfun[i,B]\paren{\update{}{p}{\mset{m' \restriction \overline{P}}}}$$
\end{proof}

\begin{lemma}\label{lemma:bisimulation_lemma_lockstep}
Suppose $s \in \Configuration^\infty$, $i \leq \nsimple$, $B \in \N$ such that 
$|s(\enumelem{p}{j})| = \infty$ for all $i < j \leq \nsimple$. 
\begin{asparaenum}[(i)]
\item If $s \to[TS=\BTS{\calN_i}{}{B}] s'$ then we have $\cabsfun[i,B](s) \to[TS=\cabspetrinet] \cabsfun[i,B](s')$.
\item If $\cabsfun[i,B](s) \to[TS=\cabspetrinet] \cabsfun[i,B](s) + r$ then there exists an $s' \in \Configuration^\infty$ such that $\cabsfun[i,B](s') = \cabsfun[i,B](s) + r$ and $s \to[TS=\calN_i] s'$.
\end{asparaenum}
\end{lemma}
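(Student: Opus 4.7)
The plan is to treat parts (i) and (ii) separately, in each case doing a case analysis on the type of NNCT rule (simple, complex, transfer) that either witnesses the concrete transition (for (i)) or generates the abstract rule under consideration (for (ii)). Throughout, the workhorse will be the linearity property of $\cabsfun[i,B]$ from Lemma~\ref{lemma:counter_abstraction_homomorphism}, together with the representative-independence results of Lemma~\ref{lemma:counter_abstraction_Mabs} (for complex rules) and Lemma~\ref{lemma:counter_abstraction_Meject} (for transfer rules), which say that the abstract image depends only on the equivalence class $\cconfunI(j_0,\ldots,j_{\ninner-1})$ of a complex token, not on the token itself.

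For part (i), fix $s \to[TS=\BTS{\calN_i}{}{B}] s'$ via some rule $r$. In the simple case $r=(I,O)$, direct substitution using Lemma~\ref{lemma:counter_abstraction_homomorphism} gives $\cabsfun[i,B](s') = \cabsfun[i,B](s) + \cabsfun[i,B](O\ominus I)$, which matches $\ind{r}\in\cabspnetrules$; abstract enabledness follows from $\cabsfun[i,B](s)\geq\cabsfun[i,B](I)$, using $|s(\enumelem{p}{j})|=\infty$ for $j>i$. For a complex rule $r=((p,I),(p',\calc,O))$ moving the token $m\in s(p)$, I pick the canonical representative $m_0$ of $m$'s equivalence class obtained by capping each colour count at $B$, so $\norm[\I]{m_0}\leq B$ and $\ind[m_0]{r}\in\cabspnetrules$; by Lemma~\ref{lemma:counter_abstraction_Mabs} the abstract contributions of $\update{}{p}{\mset{m}}$ and $\update{}{p}{\mset{m_0}}$ agree, and likewise for $m\oplus\calc$ and $m_0\oplus\calc$ (the $\oplus\calc$ operation preserves the equivalence class, since capped colours remain capped and uncapped ones update deterministically). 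The transfer case is analogous, with Lemma~\ref{lemma:counter_abstraction_Meject} handling the restriction $m\restriction\overline{P}$; here the side condition $\max_{p'\in P}|m_0(p')|<B$ required by the definition of $\ind[m_0]{r}$ is forced by the norm bound, since the ejected coloured tokens land in simple places whose counts in $s'$ remain $<B$.

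For part (ii), the abstract rule fired was introduced as $\ind{r}$ for some simple $r$, or $\ind[m_0]{r}$ for some complex/transfer $r$ with a specified $m_0$ satisfying $\norm[\I]{m_0}\leq B$. The simple case reverses the argument above: abstract enabledness plus $|s(\enumelem{p}{j})|=\infty$ for $j>i$ yields $I\leq_{\mathit{Config}} s$, so $s' = s\ominus I\oplus O$ is a legitimate $\calN_i$-successor and linearity gives the counter match. For complex $\ind[m_0]{r}$, abstract enabledness $\cabsfun[i,B](s)\geq d = \cabsfun[i,B](I\oplus\update{}{p}{\mset{m_0}})$ implies in particular that the counter for $(p,\text{class of }m_0)$ is $\geq 1$, witnessing the existence of some $m\in s(p)$ with $m\in\cconfunI(j_0,\ldots,j_{\ninner-1})$, where $(j_0,\ldots,j_{\ninner-1})$ is the profile of $m_0$. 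Apply $r$ concretely with this $m$ to produce $s'$; Lemma~\ref{lemma:counter_abstraction_Mabs} together with linearity then shows $\cabsfun[i,B](s') = \cabsfun[i,B](s)+\ind{r}$. The transfer case proceeds identically using Lemma~\ref{lemma:counter_abstraction_Meject}.

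The main technical obstacle is the bookkeeping around representatives for complex and transfer rules: one must justify that swapping a concrete $m$ for the canonical $m_0$ (and vice versa) leaves the abstract effect invariant even though $m$ and $m_0$ differ on the ``capped'' colours. This is exactly what Lemmas~\ref{lemma:counter_abstraction_Mabs} and \ref{lemma:counter_abstraction_Meject} provide, so the bulk of the proof is careful but mechanical case analysis combined with invocations of these lemmas and the homomorphism property; no new ideas are needed beyond tracking equivalence classes through the operations $\oplus\calc$ and $\restriction\overline{P}$.
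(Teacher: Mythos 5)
Your proposal is correct and follows essentially the same route as the paper's proof: a case analysis on the rule type in each direction, with enabledness transferred via the norm bound and the $\infty$-marked ignored places, the homomorphism property of $\cabsfun[i,B]$ (Lemma~\ref{lemma:counter_abstraction_homomorphism}) for the arithmetic, and Lemmas~\ref{lemma:counter_abstraction_Mabs} and~\ref{lemma:counter_abstraction_Meject} to swap the concrete complex token for a norm-bounded representative of its class (the paper picks an arbitrary representative where you pick the capped one, and it spells out the mechanical check that the ejected amounts of $m$ and $m_0$ agree on active colours in the transfer cases, which your sketch leaves implicit).
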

\begin{proof}
We will first prove (i). Suppose $s \to[rule=r,TS={\BTS{\calN_i}{}{B}}] s'$ for some $r \in \Rules$ and let us write
\begin{inparaenum}[(a)]
\item $r = (I,O)$ if $r \in \SimpleRules$;
\item $r = ((p,I),(p',\calc,O))$ if $r \in \ComplexRules$; and
\item $r = ((p,I),(p',P,O))$ if $r \in \TransferRules$.
\end{inparaenum}

Since $s \to[rule=r,TS={\calN_i}] s'$ we know for all $\enumelem{p}{j} \in \Psimple$ we have
$|s(\enumelem{p}{j})| \geq |I(\enumelem{p}{j})|$ and for all $p' \in \Pcomplex$
$s(p') \supseteq I(p')$. Hence $s \ominus I \in \Configuration^{\infty}$
and $\cabsfun(s)(i) = |s(\enumelem{p}{j})| \geq |I(\enumelem{p}{j})| = \cabsfun(I)(j)$.
\begin{itemize}
\item \emph{Case: } $s \to[rule=r,TS={\calN_i}] s'$, $r = (I,O) \in \SimpleRules$. \newline
In this case we can see that for all $\enumelem{p'}{j} \in \Pcomplex$ 
\begin{align*}
&\cabsfun[i,B](s)\paren{i+j\paren{B+1}^{\ninner}} = s(\enumelem{p'}{j})(\vec{0}) \\
&\;\;\;\; \geq I(\enumelem{p'}{j})(\vec{0}) = \cabsfun[i,B](I)\paren{i+j\paren{B+1}^{\ninner}}.
\end{align*}
We see that $\cabsfun[i,B](r)$ is enabled at $\cabsfun[i,B](s)$ and hence
there is a transition $\cabsfun[i,B](s) \to_{\cabspetrinet} \cabsfun[i,B](s) + \ind{r}$.
We know that $s' = \paren{s \ominus I} \oplus O$ so since $s \ominus I \in \Configuration^{\infty}$
Lemma~\ref{lemma:counter_abstraction_homomorphism} gives us 
$\cabsfun[i,B](s') = \paren{\cabsfun[i,B](s) - \cabsfun[i,B](I)} + \cabsfun[i,B](O)$
and clearly $\ind{r} = (-\cabsfun[i,B](I)) + \cabsfun[i,B](O)$ so
$\cabsfun[i,B](s') = \cabsfun[i,B](s) + \ind{r}$ and thus
$\cabsfun[i,B](s) \to_{\cabspetrinet} \cabsfun[i,B](s')$ which is what we wanted to prove.
\item \emph{Case: } $s \to[rule=r,TS={\calN_i}] s'$, $r = ((p,I),(p',\calc,O)) \in \ComplexRules$. \newline
In this case we know that for some $m \in s(p)$ we have $s \ominus \update{}{p}{\mset{m}} \ominus I \in \Configuration^{\infty}$ and
$$s' = \paren{s \ominus \update{}{p}{\mset{m}} \ominus I} \oplus \update{}{p'}{\mset{m \oplus \calc}} \oplus O$$
Suppose $p = \enumelem{p'}{j_{\ninner}}$, 
and $j_0,\ldots,j_{\ninner-1} \in \N^{\leq B}$ such that $m \in \cconfunI(j_0,\ldots,j_{\ninner-1})$ then we know that
\begin{align*}
&\cabsfun[i,B](s)\paren{i+\sum_{k=0}^{\ninner}j_k\paren{B+1}^k} \\
&\;\;\;\; = \sum_{m^{\I}\in \cconfunI(j_0,\ldots,j_{\ninner})} s(p)(m^{\I})| \geq s(p)(m) \geq 1.
\end{align*}
We can thus see that for some $m_0 \in \cconfunI(j_0,\ldots,j_{\ninner-1})$ 
such that $\norm[I]{m_0} \leq B$ the rule
   $\ind[m_0]{r}$ is enabled at $\cabsfun[i,B](s)$ and thus $\cabsfun[i,B](s) \to[rule={\ind[m_0]{r}},TS=\cabspetrinet] \cabsfun[i,B](s)+\ind[m_0]{r}$.
Since $s \ominus \update{}{p}{\mset{m}} \ominus I \in \Configuration^{\infty}$ Lemma~\ref{lemma:counter_abstraction_homomorphism} gives us
\begin{align*}
\cabsfun[i,B](s') =&\, \cabsfun[i,B](s) - \cabsfun[i,B]\paren{\update{}{p}{\mset{m}}} \\
& - \cabsfun[i,B]\paren{I} + \cabsfun[i,B]\paren{\update{}{p'}{\mset{m \oplus \calc}}} + \cabsfun[i,B](O)
\end{align*}
Since $m,m_0 \in \cconfunI(j_0,\ldots,j_{\ninner-1})$ Lemma~\ref{lemma:counter_abstraction_Mabs} applies to give both
\begin{gather*}
\cabsfun[i,B]\paren{\update{}{p}{\mset{m}}} = \cabsfun[i,B]\paren{\update{}{p}{\mset{m_0}}}\\
\cabsfun[i,B]\paren{\update{}{p'}{\mset{m \oplus \calc}}} = \cabsfun[i,B]\paren{\update{}{p'}{\mset{m_0 \oplus \calc}}}
\end{gather*}
and hence
\begin{align*}
\cabsfun[i,B](s') = &\, \cabsfun[i,B](s) - \cabsfun[i,B]\paren{\update{}{p}{\mset{m_0}}} \\
& - \cabsfun[i,B]\paren{I} + \cabsfun[i,B]\paren{\update{}{p'}{\mset{m_0 \oplus \calc}}} + \cabsfun[i,B](O)\\
= &\, \cabsfun[i,B](s) + \ind[m_0]{r}
\end{align*}
and thus $\cabsfun[i,B](s) \to[rule={\ind[m_0]{r}},TS=\cabspetrinet] \cabsfun[i,B](s')$ which is what we wanted to prove.
\item \emph{Case: } $s \to[rule=r,TS={\calN_i}] s'$, $r = ((p,I),(p',P,O)) \in \TransferRules$. \newline
In this case we know that for some $m \in s(p)$ we have $s \ominus \update{}{p}{\mset{m}} \ominus I \in \Configuration^{\infty}$ and
\begin{align*}
s' &= \paren{s \ominus \update{}{p}{\mset{m}} \ominus I}  \oplus \update{}{p'}{\mset{m_{\overline{P}}}} 
\oplus O  \oplus \paren{m_{P} \circ \colmap^{-1}} 
\end{align*}
where $m_{P} = m \restriction P$ and $m_{\overline{P}} = m \restriction \varset{\Pinner \setminus P}$.
Since $\norm[\calN_i]{s'} < B$ we can conclude that for all $p \in \colmap(P)$ we can bound
$|\paren{m_{P} \circ \colmap^{-1}}(p)| < B$ which implies that 
$$\max_{p \in P}(|m(p)|) < B.$$
Suppose $p = \enumelem{p'}{j_{\ninner}}$, 
and $j_0,\ldots,j_{\ninner-1} \in \N^{\leq B}$ such that $m \in \cconfunI(j_0,\ldots,j_{\ninner-1})$ then we know that
\begin{align*}
\cabsfun[i,B](s)&\paren{i+\sum_{k=0}^{\ninner}j_k\paren{B+1}^k} 
= \sum_{m^{\I}\in \cconfunI(j_0,\ldots,j_{\ninner})} s(p)(m^{\I}) \\
&\geq s(p)(m) \geq 1.
\end{align*}
Further for all $k$ such that $\enumelem{p^{\I}}{k} \in P$ it is the case that $j_k < B$.
Hence for some $m_0 \in \cconfunI(j_0,\ldots,j_{\ninner-1})$ 
such that $\norm[\I]{m_0} \leq B$ and $\max_{p \in P} (|m_0(p)|) < B$ the rule
$\ind[m_0]{r}$ is enabled at $\cabsfun[i,B](s)$ and so $\cabsfun[i,B](s) \to[rule={\ind[m_0]{r}},TS=\cabspetrinet] \cabsfun[i,B](s) + \ind[m_0]{r}$.
Since $s \ominus \update{}{p}{\mset{m}} \ominus I \in \Configuration^{\infty}$
Lemma~\ref{lemma:counter_abstraction_homomorphism} yields
\begin{align*}
\cabsfun[i,B](s') &= \cabsfun[i,B](s) - \cabsfun[i,B]\paren{\update{}{p}{\mset{m}}} - \cabsfun[i,B](I) \\
				  &\;\;\;\;+\cabsfun[i,B]\paren{\update{}{p'}{\mset{m_{\overline{P}}}}} \\
				  &\;\;\;\;+\cabsfun[i,B](O) + \cabsfun[i,B](\paren{m_{P} \circ \colmap^{-1}}).
\end{align*}
Since $m,m_0 \in \cconfunI(j_0,\ldots,j_{\ninner-1})$ we can apply Lemma~\ref{lemma:counter_abstraction_Mabs} and Lemma~\ref{lemma:counter_abstraction_Meject} to get
\begin{gather*}
\cabsfun[i,B]\paren{\update{}{p}{\mset{m}}} = \cabsfun[i,B]\paren{\update{}{p}{\mset{m_0}}}\\
\cabsfun[i,B]\paren{\update{}{p'}{\mset{m_{\overline{P}}}}} = \cabsfun[i,B]\paren{\update{}{p'}{\mset{{m_0}_{\overline{P}}}}}
\end{gather*}
where ${m_0}_{\overline{P}} = m_0 \restriction \varparen{\Pinner \setminus P}$.
Since $\max_{p \in P}(|m(p)|) < B$ and $\max_{p \in P}(|m_0(p)|) < B$ we can see that
\begin{align*}
m(\enumelem{p}{k}) 	&= \min(m(\enumelem{p}{k}),B) = j_{k-1} \\
					&= \min(m_0(\enumelem{p}{k}),B) = m_0(\enumelem{p}{k})
\end{align*}
for all $1 \leq k \leq \Pinner$ such that $\enumelem{p^{\I}}{k} \in P$. 
Hence we can see that
$$m_{P} \circ \colmap^{-1} = {m_0}_{P} \circ \colmap^{-1}$$
where ${m_0}_{P} = {m_0} \restriction P$ which implies
\begin{align*}
\cabsfun[i,B](s') &= \cabsfun[i,B](s) - \cabsfun[i,B]\paren{\update{}{p}{\mset{m_0}}} - \cabsfun[i,B](I) \\
				  &\;\;\;\;+\cabsfun[i,B]\paren{\update{}{p'}{\mset{{m_0}_{\overline{P}}}}} \\
				  &\;\;\;\;+\cabsfun[i,B](O) + \paren{{m_0}_{P} \circ \colmap^{-1}} \\
				  &= \cabsfun[i,B](s) + \ind[m_0]{r}
\end{align*}
and hence
$\cabsfun[i,B](s) \to[rule={\ind[m_0]{r}},TS=\cabspetrinet] \cabsfun[i,B](s')$ which is what we wanted to prove.
\end{itemize}
Hence in all cases $\cabsfun[i,B](s) \to_{\cabspetrinet} \cabsfun[i,B](s')$ thus (i) holds.

For (ii) suppose $\cabsfun[i,B](s) \to_{\cabspetrinet} \cabsfun[i,B](s)+r$ for some 
$r \in \cabspnetrules$. We know that either 
\begin{inparaenum}[(a)]
\item for some $r_0=(I,O) \in \SimpleRules$ we have $r = \ind{r_0}$; or
\item for some $m \in \Minner$ such that $\norm[\I]{m} \leq B$, $r = \ind[m]{r_0}$ for some 
$r_0=((p,I),(p',\calc,O)) \in \ComplexRules$; or
\item for some $m \in \Minner$ such that $\norm[\I]{m} \leq B$ and for all $\max_{p \in P}(|m(p)|) < B$, $r = \ind[m]{r_0}$ for some $r_0=((p,I),(p',P,O))\discn \in \TransferRules$.
\end{inparaenum}

Before we do a case analysis on $r_0$ let us have a look at the simple places.
First of all $1 \leq j \leq i$,  $\enumelem{p}{j} \in \Psimple$ we know that $\cabsfun[i,B](s)(j) + r(j) \geq 0$, 
i.e.~$|s(\enumelem{p}{j})| \geq |I(\enumelem{p}{j})|$ and we know that
$|s(\enumelem{p}{j})| = \infty$ for $i < j \leq \nsimple$.
Hence clearly $s \ominus (I \restriction \Psimple )\in \Configuration^{\infty}$.

We proceed the proof by a case analysis on $r$.
\begin{itemize}
\item \emph{Case: } $r = \ind{r_0}$ for $r_0 \in \SimpleRules$. \newline
In this case we can see that for all $\enumelem{p'}{j} \in \Pcomplex$ 
\begin{align*}
&s(\enumelem{p'}{j})(\vec{0}) = \cabsfun[i,B](s)\paren{i+j\paren{B+1}^{\ninner}}  \\
&\;\;\;\; \geq \cabsfun[i,B](I)\paren{i+j\paren{B+1}^{\ninner}} = I(\enumelem{p'}{j})(\vec{0}).
\end{align*}
Hence $s \ominus I \in \Configuration^{\infty}$.
Further we know that 
$$\cabsfun[i,B](s)+r = \cabsfun[i,B](s)-\cabsfun[i,B](I)+\cabsfun[i,B](O)$$
and since by above $s \ominus I \in \Configuration^{\infty}$ we get
$$\cabsfun[i,B](s)+r = \cabsfun[i,B](s \ominus I \oplus O)$$
if we define $s' = s \ominus I \oplus O$ then clearly $s \to[TS={\calN_i}] s'$ which is what we wanted to prove.

\item \emph{Case: } $r = \ind[m]{r_0}$ for $r_0 \in \ComplexRules$. \newline
Suppose $p = \enumelem{p'}{j_{\ninner}}$ and $j_0,\ldots,j_{\ninner-1} \in \N^{\leq B}$ such that $m \in \cconfunI(j_0,\ldots,j_{\ninner-1})$
then by definition $\ind[m]{r_0}\varparen{i+\sum_{k=0}^{\ninner}j_k\varparen{B+1}^k} = -1$, 
since $\cabsfun[i,B](s) \to_{\cabspetrinet} \cabsfun[i,B](s) + r$ 
we must have that 
\begin{align*}
&\cabsfun(s)\paren{\theta} + \ind[m]{r_0}\paren{\theta} \geq 0
\end{align*}
where $\theta = i+\sum_{k=0}^{\ninner}j_k\paren{B+1}^k$.
Hence we can deduce that
$$\sum_{m^{\I}\in \cconfunI(j_0,\ldots,j_{\ninner-1})} |s(\enumelem{p'}{j_{\ninner}})(m^{\I})| \geq 1$$
Hence there exists a $m_0 \in \cconfunI(j_0,\ldots,j_{\ninner-1})$ such that $|s(\enumelem{p'}{j_{\ninner}})(m_0)| \geq 1$.

It is easy to see that $s \ominus I \ominus \update{}{p}{\mset{m_0}} \in \Configuration^\infty$
$$s \to[TS={\calN_i}] \paren{s \ominus I \ominus \update{}{p}{\mset{m_0}}}\oplus O \oplus \update{}{p'}{\mset{m_0 \oplus \calc}} =: s'$$
We can then use Lemma~\ref{lemma:counter_abstraction_homomorphism} to see that
\begin{align*}
\cabsfun[i,B](s')   =&\, \cabsfun[i,B](s) - \cabsfun[i,B](I)
					-\cabsfun[i,B]\paren{\update{}{p}{\mset{m_0}}}\\
					&+\cabsfun[i,B]\paren{O}
					+\cabsfun[i,B]\paren{\update{}{p'}{\mset{m_0 \oplus \calc}}}
\end{align*}
Since $m,m_0 \in \cconfunI(j_0,\ldots,j_{\ninner-1})$ and Lemma~\ref{lemma:counter_abstraction_Mabs} yields
\begin{gather*}
\cabsfun[i,B]\paren{\update{}{p}{\mset{m_0}}} = \cabsfun[i,B]\paren{\update{}{p}{\mset{m}}}\\
\cabsfun[i,B]\paren{\update{}{p'}{\mset{m_0 \oplus \calc}}} = \cabsfun[i,B]\paren{\update{}{p'}{\mset{m \oplus \calc}}}
\end{gather*}
Hence we can see that 
\begin{align*}
\cabsfun[i,B](s')   =&\, \cabsfun[i,B](s) - \cabsfun[i,B](I)
					-\cabsfun[i,B]\paren{\update{}{p}{\mset{m}}}\\
					&+\cabsfun[i,B]\paren{O}
					+\cabsfun[i,B]\paren{\update{}{p'}{\mset{m \oplus \calc}}}\\
					=&\, \cabsfun[i,B](s) + r
\end{align*}
which is what we wanted to prove.

\item \emph{Case: } $r = \ind[m]{r_0}$ for $r_0 \in \TransferRules$. \newline
Suppose $p = \enumelem{p'}{j_{\ninner}}$ and $j_0,\ldots,j_{\ninner-1} \in \N^{\leq B}$ such that $m \in \cconfunI(j_0,\ldots,j_{\ninner-1})$
then by definition $\ind[m]{r_0}\varparen{i+\sum_{k=0}^{\ninner}j_k\varparen{B+1}^k} = -1$, 
since $\cabsfun[i,B](s) \to_{\cabspetrinet} \cabsfun[i,B](s) + r$ 
we must have that 
$$\cabsfun(s)\paren{\theta} + \ind[m]{r_0}\paren{\theta} \geq 0.$$
where $\theta = i+\sum_{k=0}^{\ninner}j_k\paren{B+1}^k$.
Hence we can deduce that
$$\sum_{m^{\I}\in \cconfunI(j_0,\ldots,j_{\ninner-1})} |s(\enumelem{p'}{j_{\ninner}})(m^{\I})| \geq 1$$
Hence there exists a $m_0 \in \cconfunI(j_0,\ldots,j_{\ninner-1})$ such that $|s(\enumelem{p'}{j_{\ninner}})(m_0)| \geq 1$.

Hence it is easy to see that $s \ominus I \ominus \update{}{p}{\mset{m_0}} \in \Configuration^\infty$ and
$$s \to[TS={\calN_i}] s'$$
where
\begin{align*}
s' &= s \ominus \update{}{p}{\mset{m_0}} \ominus I
	\oplus \update{}{p'}{\mset{{m_0}_{\overline{P}}}} \oplus O \oplus \paren{{m_0}_{P} \circ \colmap^{-1}} 
\end{align*}
where ${m_0}_{P} = m_0 \restriction P$ and ${m_0}_{\overline{P}} \restriction \varparen{\Pinner \setminus P}$.
Since $s \ominus \update{}{p}{\mset{m_0}} \ominus I \in \Configuration^{\infty}$
Lemma~\ref{lemma:counter_abstraction_homomorphism} yields
\begin{align*}
\cabsfun[i,B](s') &= \cabsfun[i,B](s) - \cabsfun[i,B]\paren{\update{}{p}{\mset{m_0}}} - \cabsfun[i,B](I) \\
				  &\;\;\;\;+\cabsfun[i,B]\paren{\update{}{p'}{\mset{\update[\mid p_0 \in P]{m_0}{p_0}{0}}}} \\
				  &\;\;\;\;+\cabsfun[i,B](O) + \cabsfun[i,B](\paren{(m_0\restriction P) \circ \colmap^{-1}+\vec{0}}).
\end{align*}
Since $m,m_0 \in \cconfunI(j_0,\ldots,j_{\ninner-1})$ we can apply Lemma~\ref{lemma:counter_abstraction_Mabs} and Lemma~\ref{lemma:counter_abstraction_Meject} to get
\begin{gather*}
\cabsfun[i,B]\paren{\update{}{p}{\mset{m}}} = \cabsfun[i,B]\paren{\update{}{p}{\mset{m_0}}}\\
\cabsfun[i,B]\paren{\update{}{p'}{\mset{m_{\overline{P}}}}} = \cabsfun[i,B]\paren{\update{}{p'}{\mset{{m_0}_{\overline{P}}}}}
\end{gather*}
where $m_{\overline{P}} = m \restriction (\Pinner \setminus P)$.
Further we notice by assumption $\max{p \in P} |m(p)| < B$  
and thus for all $1 \leq k \leq \Pinner$ such that $\enumelem{p^{\I}}{k} \in P$ we have
\begin{align*}
m(\enumelem{p}{k}) &= \min(m(\enumelem{p}{k}),B) = j_{k-1} \\
				   &= \min(m_0(\enumelem{p}{k}),B) = m_0(\enumelem{p}{k})
\end{align*}
and hence we can see that
$$\paren{m_P \circ \colmap^{-1}} = \paren{{m_0}_{P} \circ \colmap^{-1} + \vec{0}}$$
where $m_P = m\restriction P$ which implies
\begin{align*}
\cabsfun[i,B](s') &= \cabsfun[i,B](s) - \cabsfun[i,B]\paren{\update{}{p}{\mset{m}}} - \cabsfun[i,B](I) \\
				  &\;\;\;\;+\cabsfun[i,B]\paren{\update{}{p'}{\mset{m_{\overline{P}}}}} \\
				  &\;\;\;\;+\cabsfun[i,B](O) + \cabsfun[i,B](m_P \circ \colmap^{-1})\\
				  &= \cabsfun[i,B](s') + r
\end{align*}
which is what we wanted to prove.
\end{itemize}
Thus in all cases also (ii) holds and hence we can conclude the proof.
\end{proof}

\begin{definition}[Simulation, bisimulation]
Suppose $(S,\to[label=u,LTS=S])$ and $(S',\to[label=u,LTS={S'}])$ are labelled transition systems we say a relation $\calB \subseteq S \times S'$ is a 
\emph{(weak) simulation} if for all $(s,s') \in \calB$, if for some $t \in S$ we have $s \to[label=u,LTS=S] t$ ($s \to[label=u,LTS=S,*] t$) then there exists 
$t' \in S'$ such that $s' \to[label=u,LTS=S'] t'$ ($s' \to[label=u,LTS=S',*] t'$) and $(t,t') \in \calB$. 
We say $\calB$ is a \emph{bisimulation} relation just if both $\calB$ and $\calB^{-1}$ are simulation relations.
\end{definition}\noindent
%
Let us temporarily label the transition systems relations of $\calN_i$, $\cabspetrinet$ in the following way:
$s \to[label=s',LTS=\calN_i] s'$ if $s \to[TS=\calN_i] s'$;
$v \to[label=u,LTS=\cabspetrinet] v'$ if $v \to[TS=\cabspetrinet] v'$ where  $u = s'$ if $\cabsfun[i,B](s') = v'$ and $u = \epsilon$ otherwise. 

\begin{proposition}\label{prop:bisimulation_counter_abstraction}
The relation 
$\commabr{\{\paren{s,\cabsfun[i,B](s)} \mid s \in \BTS{\Configuration^\infty}{}{B}, |s(\enumelem{p}{j})| = \infty, i < j\}}$ is a bisimulation between the labelled transition systems $\BTS{\calN_i}{}{B}$ and $\BTS{\cabspetrinet}{}{B}$. 
\end{proposition}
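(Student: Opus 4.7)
The plan is to reduce the claim to the lockstep Lemma~\ref{lemma:bisimulation_lemma_lockstep}, which already gives the step-by-step correspondence between $\calN_i$-transitions and $\cabspetrinet$-transitions; what remains is to verify that the ball restrictions on both sides are respected, and that the sanitary side-condition ``$|s(\enumelem{p}{j})| = \infty$ for $i < j$'' is preserved by transitions on both sides. Let us denote the candidate relation by $\calB$.

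First I would prove that $\calB$ is a (strong, not merely weak) simulation from $\BTS{\calN_i}{}{B}$ to $\BTS{\cabspetrinet}{}{B}$. Take $(s,\cabsfun[i,B](s)) \in \calB$ and suppose $s \to[TS={\BTS{\calN_i}{}{B}}] s'$. By definition of the ball this is an $\calN_i$-transition with $s,s' \in \BTS{\Configuration^\infty}{}{B}$, so Lemma~\ref{lemma:bisimulation_lemma_lockstep}(i) yields $\cabsfun[i,B](s) \to[TS=\cabspetrinet] \cabsfun[i,B](s')$. For the target to lie in the ball on the right, note that $\norm[\cabspetrinet]{\cabsfun[i,B](s')} = \max_{j \in \range{i}} |s'(\enumelem{p}{j})| = \norm[\calN_i]{s'} < B$, using the definition of $\cabsfun[i,B]$ on the first $i$ coordinates. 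Moreover, by the definition of $\to[TS=\calN_i]$ the places $\enumelem{p}{j}$ for $i < j \leq \nsimple$ remain $\infty$, so the side-condition is preserved and $(s',\cabsfun[i,B](s')) \in \calB$, as required. Under the chosen labelling this transition is labelled by $s'$ on the right, matching the label on the left, so we in fact obtain a labelled simulation step.

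Next I would prove that $\calB^{-1}$ is a simulation from $\BTS{\cabspetrinet}{}{B}$ to $\BTS{\calN_i}{}{B}$. Take $(s,\cabsfun[i,B](s)) \in \calB$ and assume $\cabsfun[i,B](s) \to[TS={\BTS{\cabspetrinet}{}{B}}] v'$ via some rule $r \in \cabspnetrules$, so $v' = \cabsfun[i,B](s) + r$ and $\norm[\cabspetrinet]{v'} < B$. Lemma~\ref{lemma:bisimulation_lemma_lockstep}(ii) produces an $s' \in \Configuration^\infty$ with $\cabsfun[i,B](s') = v'$ and $s \to[TS=\calN_i] s'$. Two ball-type checks are then needed: (a) $\norm[\calN_i]{s'} < B$ because $|s'(\enumelem{p}{j})| = \cabsfun[i,B](s')(j) = v'(j)$ for $j \in \range{i}$ and $\norm[\cabspetrinet]{v'} < B$; and (b) $|s'(\enumelem{p}{j})| = \infty$ for $i < j \leq \nsimple$, which follows from $s \to[TS=\calN_i] s'$ and the fact that $|s(\enumelem{p}{j})| = \infty$ there. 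Hence the transition is actually in $\BTS{\calN_i}{}{B}$ and $(s',\cabsfun[i,B](s')) \in \calB$; the labels agree for the same reason as before.

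The only genuine substance has already been absorbed into Lemma~\ref{lemma:bisimulation_lemma_lockstep}, so the main obstacle here is bookkeeping rather than mathematics: one must confirm that the two extra conditions carving $\BTS{\calN_i}{}{B}$ and $\BTS{\cabspetrinet}{}{B}$ out of the larger systems are each preserved in both directions, and that these conditions coincide under $\cabsfun[i,B]$ on the first $i$ coordinates, so that the ball on the NNCT side pulls back exactly to the ball on the Petri-net side. The remainder is then a direct appeal to parts (i) and (ii) of the lockstep lemma.
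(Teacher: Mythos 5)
Your proposal is correct and matches the paper's own argument: the paper likewise proves both directions by direct appeal to Lemma~\ref{lemma:bisimulation_lemma_lockstep}, then checks that the ball bound transfers between $\norm[\calN_i]{-}$ and $\norm[\cabspetrinet]{-}$ via the first $i$ coordinates of $\cabsfun[i,B]$ (with the $\infty$-condition on the ignored places preserved by the definition of $\to[TS=\calN_i]$), so the restricted transitions and labels line up on both sides. No gap; your explicit verification of condition (b) in the backward direction is merely left implicit in the paper.
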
 
\begin{proof}
Let $\calB = \commabr{\{\paren{s,\cabsfun[i,B](s)} \mid s \in \BTS{\Configuration^\infty}{}{B}, |s(\enumelem{p}{j})| = \infty, i < j\}}$ and suppose $(s,\ind{s}) \in \calB$. 
By definition we know that $\ind{s} = \cabsfun[i,B](s)$.
Suppose $s \to[label=s',LTS=\BTS{\calN_i}{}{B}] s'$ is a transition then Lemma~\ref{lemma:bisimulation_lemma_lockstep} yields that
$\cabsfun[i,B](s) \to[label=s',LTS=\cabspetrinet] \cabsfun[i,B](s')$.
Clearly $(s,\cabsfun(s')) \in \calB$ since $s' \in \BTS{\Configuration^\infty}{}{B}$ which also implies that for all
$1 \leq j \leq i$ we have $|\cabsfun(s')(j)| = |s'(\enumelem{p}{j})| < B$ and thus 
$\norm[\cabspetrinet]{\cabsfun(s')} < B$.
Hence $\cabsfun[i,B](s) \to[label=s',LTS=\BTS{\cabspetrinet}{}{B}]{\cabspetrinet} \cabsfun[i,B](s')$ is transition and so we can conclude that $\calB$ is a simulation relation.

Let us now consider $\calB^{-1}$. Suppose $(s,\ind{s}) \in \calB$, again we know that $\ind{s} = \cabsfun[i,B](s)$. Suppose $\cabsfun[i,B](s) \to[label=u,LTS=\BTS{\cabspetrinet}{}{B}] \cabsfun[i,B](s) + r$ is a transition for some $r \in R_{i,B}$. Lemma~\ref{lemma:bisimulation_lemma_lockstep} then yields that there exists 
$s' \in \Configuration^\infty$ such that $\cabsfun[i,B](s') = \cabsfun[i,B](s) + r$ and $s \to[label=s',LTS={\calN_i}] s'$. Hence we can conclude $ u = s'$.
Further since $\norm[\cabspetrinet]{\cabsfun[i,B](s) + r} < B$ we know that $\norm[\cabspetrinet]{\cabsfun[i,B](s')} < B$ which implies that that for all $1 \leq i \leq j$ we have $|s'(\enumelem{p}{j})| = \cabsfun(s')(i) < B$ and thus $\norm{\calN_i}{s'} < B$.
Hence $(s,\cabsfun(s')) \in \calB$ and so we know that $\calB^{-1}$ is a simulation.

We can thus conclude that $\calB$ is a bisimulation.
\end{proof}
To establish the relationship between $\cabspnetord$, $\cabspetrinet$ and $\calN_i$ we relabel their transition relations 
temporarily with a label indicating which rule set was used: $\vsimrule$ for a rule from $\cabspnetrules$ and $\epsilon$ otherwise:\newline
$v \to[label=\vsimrule,LTS=\cabspnetord] v'$ if $v \to[rule=r,TS=\cabspetrinet] v'$, and
$v \to[label=\epsilon,LTS=\cabspnetord] v'$ if $v \to[rule=r,TS=\cabspnetord] v'$ and $r \in \cabspnetordrules$;
and for $\cabspetrinet$ and $\calN_i$ we label all transitions with $\vsimrule$.

\begin{proposition}\label{prop:weaksimulation_counter_abstraction}
The relation $\{\paren{\cabsfun[i,B](s),\cabsfun[i,B](s)} \mid s \in \Configuration^\infty\}$ is a simulation relation for $\cabspetrinet$ and $\cabspnetord$. 
The relation $\{\paren{\cabsfun[i,B](s),s'} \mid s \leqconfig s', s,s' \in \Configuration^\infty,|s(\enumelem{p}{j})| = \infty, i < j\}$ is a weak simulation for $\cabspnetord$ and $\calN_{i}$. 
\end{proposition}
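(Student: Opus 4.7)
The plan is to establish the two claims separately, with the second requiring substantially more care.

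First claim. Since $\cabspnetrules \subseteq \cabspnetrules \cup \cabspnetordrules$, every $\cabspetrinet$-transition $v \to v'$ is verbatim a $\cabspnetord$-transition, so it suffices to verify that the image of $\cabsfun[i,B]$ is closed under $\cabspetrinet$-transitions. Given $v = \cabsfun[i,B](s)$, extend $s$ to $s_\infty \in \Config^\infty$ by setting $|s_\infty(\enumelem{p}{j})| = \infty$ for all $j > i$ and leaving everything else unchanged. Because $\cabsfun[i,B]$ ignores these ``high'' simple places, $\cabsfun[i,B](s_\infty) = v$, and the hypothesis of Lemma~\ref{lemma:bisimulation_lemma_lockstep}(ii) is met; that lemma then supplies a concrete $s^\ast$ with $\cabsfun[i,B](s^\ast) = v'$, so $(v', v')$ lies in the relation.

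Second claim. Fix a related pair $(\cabsfun[i,B](s), s')$ and a $\cabspnetord$-transition $\cabsfun[i,B](s) \to v'$. First observe that $|s(\enumelem{p}{j})| = \infty$ together with $s \leqconfig s'$ forces $|s'(\enumelem{p}{j})| = \infty$ for $j > i$. Split on the rule type. If the rule lies in $\cabspnetrules$ (labelled $\vsimrule$), Lemma~\ref{lemma:bisimulation_lemma_lockstep}(ii) gives $s \to_{\calN_i} s''$ with $\cabsfun[i,B](s'') = v'$; strict monotonicity of $\calN$ from Proposition~\ref{prop:nnct:wsts}, which transfers to $\calN_i$ because the $\infty$-markings at the high simple places are automatically preserved along the lifted transition, then yields $s' \to_{\calN_i} t'$ with $s'' \leqconfig t'$, placing $(v', t')$ in the relation. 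If the rule lies in $\cabspnetordrules$ (labelled $\epsilon$), we match by staying put at $s'$ and must produce $s''' \leqconfig s'$ with $\cabsfun[i,B](s''') = v'$. Such a rule converts one abstract token at a complex place $p$ from a class $[m]$ to a class $[m']$ with $m >_\Minner m'$. Pick a concrete representative $m_0^\ast \in s(p)$ of $[m]$ (which exists because the rule is enabled at $\cabsfun[i,B](s)$), then pick a concrete representative $m^\ast$ of $[m']$ with $m^\ast \leqMinner m_0^\ast$, and set $s''' \is s \ominus \update{}{p}{\mset{m_0^\ast}} \oplus \update{}{p}{\mset{m^\ast}}$. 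Then $s''' \leqconfig s$ by the multiset extension of $\leqMinner$ together with Lemma~\ref{lem:oplus:monotone}, hence $s''' \leqconfig s'$ by transitivity; and $\cabsfun[i,B](s''') = v'$ follows from Lemma~\ref{lemma:counter_abstraction_Mabs} since $m^\ast$ and $m'$ lie in the same $\cconfunI$-class.

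The principal obstacle is the $\epsilon$-case, specifically the existence of a concrete representative $m^\ast$ of the smaller class $[m']$ with $m^\ast \leqMinner m_0^\ast$. This is a purely combinatorial statement about the interaction between $\cconfunI$-classes and the refined order $\leqMinner$, and hinges on the design of the indexing set $\Minner_{\ninner,B+1}$ over which $\cabspnetordrules$ is defined: for every pair $m >_\Minner m'$ in this set, the pointwise ordering of the class indices respects $\leqMinner$'s ``zero-or-positive'' dichotomy, so one can always truncate or zero out coordinates of $m_0^\ast$ to land inside $[m']$. Once this representative has been produced, the remaining checks — that $|s'''(\enumelem{p}{j})| = \infty$ is preserved (immediate, since the construction touches only complex places) and that the relevant invariants of Lemmas~\ref{lemma:counter_abstraction_Mabs} and~\ref{lemma:counter_abstraction_homomorphism} hold — are routine.
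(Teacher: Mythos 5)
Your proof is correct and follows essentially the same route as the paper: the first claim via inclusion of the rule sets, and the second by combining Lemma~\ref{lemma:bisimulation_lemma_lockstep} with monotonicity of $\calN_i$ for the $\vsimrule$-labelled rules, and a downward (in $\leqconfig$) concretization of the ordering rules for the $\epsilon$-labelled steps, which is exactly the paper's inline lemma. The combinatorial step you single out as the principal obstacle is settled in the paper by the simplest possible witness, namely $m^\ast := m'$ itself: since $\norm[\I]{m'} \leq B$, the token $m'$ represents its own class, and $m' <_{\Minner} m \leqMinner m_0^\ast$ because the class indices of $m_0^\ast$ coincide with the (capped) values of $m$, so your ``zero-or-positive dichotomy'' observation is precisely what makes this go through.
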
 
\begin{proof}
That $\{\paren{\cabsfun[i,B](s),\cabsfun[i,B](s)} \mid s \in \Configuration^\infty\}$ is a simulation relation for $\cabspetrinet$ and $\cabspnetord$ is obvious, since if $v \to[rule=\vsimrule,TS=\cabspetrinet] v'$ then $v \to[rule=\vsimrule,TS=\cabspnetord] v'$.

For the other direction let $\calW = \{\paren{\cabsfun[i,B](s),s'} \mid s \leqconfig s', s,s' \in \Configuration^\infty,|s(\enumelem{p}{j})| = \infty, i < j\}$.
Suppose $(\cabsfun[i,B](s), \discn \cabsfun[i,B](s')) \in \calW$ then we know that $s \leqconfig s'$.
Suppose that $\cabsfun[i,B](s) \to[label=\vsimrule,LTS=\cabspnetord,*] v$ then we may split this transition up
$\cabsfun[i,B](s) \to[label=\epsilon,LTS=\cabspnetord] v_1 \to[label=\epsilon,LTS=\cabspnetord] \cdots \to[label=\epsilon,LTS=\cabspnetord] v_{n-1} \to[label=\vsimrule,LTS=\cabspnetord] v_n \to[label=\epsilon,LTS=\cabspnetord] \cdots \to[label=\epsilon,LTS=\cabspnetord] v$.

Let us prove a Lemma:
\begin{lemma*}
If $\cabsfun[i,B](s_0) \to[rule=r,TS={\cabspetrinet}] \cabsfun[i,B](s_0) + r$ with $r \in \cabspnetordrules$
then there exists $s_1 \in \Config^\infty$ such that $\cabsfun[i,B](s_1) = \cabsfun[i,B](s_0) + r$ and $s_1 \leqconfig s_0$.
\end{lemma*}
\begin{proof}
Since $r \in \cabspnetordrules$ we know $r = \cabsfun[i,B](M_{p,m'})-\cabsfun[i,B](M_{p,m})$ for some $m,m' \in \Minner$, 
such that both $\norm[\I]{m},\norm[\I]{m'} \leq B$ and $m >_{\Minner} m'$.
Suppose $p = \enumelem{p'}{j_{\ninner}}$ and $j_0,\ldots,j_{\ninner-1} \in \N^{\leq B}$ such that $m \in \cconfunI(j_0,\ldots,j_{\ninner-1})$
then by definition 
$$r\paren{i+\sum_{k=0}^{\ninner}j_k\paren{B+1}^k} = -1,$$ 
since $\cabsfun[i,B](s_0) \to_{\cabspnetord} \cabsfun[i,B](s_0) + r$. 
we must have that 
\begin{align*}
\cabsfun(s_0)&\paren{\theta} + r\paren{\theta} \geq 0
\end{align*}
where $\theta= i+\sum_{k=0}^{\ninner}j_k\paren{B+1}^k$.
Hence we can deduce that
$$\sum_{m^{\I}\in \cconfunI(j_0,\ldots,j_{\ninner-1})} |s_0(\enumelem{p'}{j_{\ninner}})(m^{\I})| \geq 1$$
Thus there exists a complex token $m_0 \in \cconfunI(j_0,\ldots,j_{\ninner-1})$ such that $|s_0(\enumelem{p'}{j_{\ninner}})(m_0)| \geq 1$.

Since $\norm[\I]{m} \leq B$ we know that 
\begin{align*}
m(\enumelem{p^{\I}}{k}) &= \min(m(\enumelem{p^{\I}}{k}),B) = j_{k-1} \\
	&= \min(m_0(\enumelem{p^{\I}}{k}),B) \leq m_0(\enumelem{p^{\I}}{k})
\end{align*}
for all $1 \leq k \leq \ninner$
and hence $m' <_{\Minner} m \leq_{\Minner} m_0$.

Clearly $s_0 \ominus s_{p,m_0} \in \Configuration^{\infty}$ and hence Lemma~\ref{lemma:counter_abstraction_homomorphism} gives
\begin{align*}
\cabsfun[i,B](s_0 \oplus s_{p,m'} \ominus s_{p,m_0}) &= \cabsfun[i,B](s_0) + \cabsfun[i,B](s_{p,m'})\\
&\;\; -\cabsfun[i,B](s_{p,m_0})
\end{align*}
since $m,m_0 \in \cconfunI(j_0,\ldots,j_{\ninner-1})$ Lemma~\ref{lemma:counter_abstraction_Mabs} gives
\begin{align*}
\cabsfun[i,B](s_0 \oplus s_{p,m'} \ominus s_{p,m_0}) &= \cabsfun[i,B](s_0) + \cabsfun[i,B](s_{p,m'})\\
&\;\; -\cabsfun[i,B](s_{p,m})\\
&= \cabsfun[i,B](s_0) + r 
\end{align*}
It is easy to see that $s_1 \is s_0 \oplus s_{p,m'} \ominus s_{p,m_0} \leqconfig s$ 
which concludes the proof of the Lemma.
\end{proof}

The Lemma allows us to conclude that there are $s_j \in \Config^\infty$ such that $v_j = \cabsfun[i,B](s_j)$ 
for all $j \in \range{n-1}$, $|s_j(\enumelem{p}{j'})| = \infty$ for all $i < j'$ and $s_j \leqconfig s_{j-1} \leqconfig s$ for $1 < j < n$.
Inspecting the transition $\cabsfun[i,B](s_{n-1}) \to[label=\vsimrule,LTS=\cabspnetord] v_n$
then due to the labelling we know that $\cabsfun[i,B](s_{n-1}) \to[TS=\cabspetrinet] v_n$.
Lemma~\ref{lemma:bisimulation_lemma_lockstep} then yields that there is a $s'' \in \Configuration^\infty$
such that $s_{n-1} \to[TS={\calN_i}] s''$ and $v_n = \cabsfun[i,B](s'')$.
Since $s \leqconfig s'$ and $\calN_i$ is a WSTS we obtain $\exists s''' \in \Configuration^\infty$
such that $s' \to[label=\vsimrule,LTS=\calN_i] s'''$ and $s'' \leqconfig s'''$.
Applying the above lemma further we can see that $v = \cabsfun[i,B](s'''')$ for some $s''' \leqconfig s''$
which implies $s'''' \leqconfig s'''$ and hence that
$(\cabsfun[i,B](s'''),s''') \in \calW$ which is what we wanted to prove.

Thus $\calW$ is a weak simulation.
\end{proof}
\begin{corollary}\label{cor:bounded_path_implies_bounded_counter_abstraction_path}
Let $1 \leq i \leq \nsimple$, $B \in \N$.
Suppose $s,s' \in \Configuration^\infty$ such that
there exists a covering path from $s$ for $s'$ in $\BTS{\calN_i}{}{B}$ then
there exists a covering path from $\cabsfun[i,B](s)$ for $\cabsfun[i,B](s')$ in $\BTS{\cabspnetord}{}{B}$.
\end{corollary}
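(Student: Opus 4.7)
The plan is to lift the covering path from $\calN_i$ to $\cabspetrinet$ via the bisimulation of \Cref{prop:bisimulation_counter_abstraction}, and then use the extra rules of $\cabspnetordrules$ to close the gap between $\leqconfig$ on NNCT configurations and the componentwise order on Petri-net markings. Concretely, let $\vec{s}$ be a covering path in $\BTS{\calN_i}{}{B}$ from $s$ for $s'$, so $\vec{s}(1) = s$ and $s' \leqconfig \bar{s} \is \vec{s}(|\vec{s}|)$. Because $\vec{s}$ lives in $\calN_i$, every $\vec{s}(j)$ has $|\vec{s}(j)(\enumelem{p}{k})| = \infty$ for $i < k \leq \nsimple$ and $\norm[\calN_i]{\vec{s}(j)} < B$. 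Applying \Cref{prop:bisimulation_counter_abstraction} step-by-step then lifts $\vec{s}$ to a $\BTS{\cabspetrinet}{}{B}$-path $\cabsfun[i,B](s) \to \cabsfun[i,B](\vec{s}(2)) \to \cdots \to \cabsfun[i,B](\bar{s})$; since $\cabspnetord$ extends $\cabspetrinet$ by the additional rules of $\cabspnetordrules$, the same transition sequence is also a path in $\BTS{\cabspnetord}{}{B}$.

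It remains to extend this path to a marking $v$ with $\cabsfun[i,B](s') \leq v$ componentwise. The counter abstraction is not $\leqconfig$-monotone on complex places: two tokens related by $\leqMinner$ may lie in different $\cconfunI$-classes, so $s' \leqconfig \bar{s}$ need not imply $\cabsfun[i,B](s') \leq \cabsfun[i,B](\bar{s})$. For each complex place $p' \in \Pcomplex$ I would use the injection $h_{p'}\colon s'(p') \to \bar{s}(p')$ witnessing $s'(p') \leq_{\M[\Minner]} \bar{s}(p')$ (so that $m \leqMinner h_{p'}(m)$ for every $m \in s'(p')$), and, for each pair $(m, h_{p'}(m))$, fire a rule from $\cabspnetordrules$ that converts one abstract token of $h_{p'}(m)$'s class at $p'$ into one of $m$'s class. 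Such a rule exists because the canonical cap-at-$B$ representatives $\tilde m(p^{\I}_k) \is \min(m(p^{\I}_k), B)$ and $\tilde h(p^{\I}_k) \is \min(h_{p'}(m)(p^{\I}_k), B)$ lie in the same $\cconfunI$-classes as $m$ and $h_{p'}(m)$, have inner-norm at most $B$, and inherit $\tilde m \leqMinner \tilde h$ cap-wise from $m \leqMinner h_{p'}(m)$. Injectivity of $h_{p'}$ guarantees that, in any firing order, every source class encountered has a matched $h_{p'}(m)$-token to consume, so the firings can be scheduled without exhausting a source class; and since the rules of $\cabspnetordrules$ touch only complex-place coordinates, the first $i$ coordinates (and hence the ball norm) are preserved, so the extended path stays inside $\BTS{\cabspnetord}{}{B}$.

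The hard part will be the second paragraph: producing bounded representatives $\tilde m, \tilde h$ that witness the required $\cabspnetordrules$-rule for each matched pair, and scheduling the firings without prematurely depleting any source class. Capping coloured-token counts at $B$ disposes of the former (one must verify both that the caps land in the prescribed $\cconfunI$-classes and that $\tilde m \leqMinner \tilde h$ is preserved, which hinges on the ``both zero or both positive'' clause in the definition of $\leqMinner$); the standard injective witness for the multiset extension of $\leqMinner$ handles the latter, since the count of any source class in $\cabsfun[i,B](\bar{s})$ dominates the number of pairs requiring that class.
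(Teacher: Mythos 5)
Your proposal is correct and follows essentially the same route as the paper's own proof: lift the path through the bisimulation of \Cref{prop:bisimulation_counter_abstraction} into $\BTS{\cabspetrinet}{}{B}$ (hence into $\BTS{\cabspnetord}{}{B}$), then append firings of $\cabspnetordrules$-rules, guided by the injection witnessing $s' \leqconfig \vec{s}(|\vec{s}|)$ on each complex place, to convert the larger abstract tokens into the matched smaller ones while leaving the first $i$ coordinates (and thus the norm bound) untouched. Your cap-at-$B$ justification for why the needed ordering rule exists for each matched pair is a detail the paper leaves implicit, but it is not a different argument.
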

\begin{proof}
Let $\vec{s}$ be a covering path for $s'$ from $s$ in $\BTS{\calN_i}{}{B}$.
Then Proposition~\ref{prop:bisimulation_counter_abstraction} yields a bisimulation for $\BTS{\calN_i}{}{B}$ and $\BTS{\cabspetrinet}{}{B}$. Hence
$\cabsfun[i,B](\vec{s}) = \cabsfun[i,B](\vec{s}(1)) \cdots \cabsfun[i,B](\vec{s}(|\vec{s}|))$ is a path in
$\BTS{\cabspetrinet}{i}{B}$ from $\cabsfun[i,B](s)$ to $\cabsfun[i,B](\vec{s}(|\vec{s}|))$.

It is immediate that $\cabsfun[i,B](\vec{s})$ is also a path in $\BTS{\cabspnetord}{}{B}$. 

Since $\vec{s}$ is a covering path for $s'$ in $\BTS{\calN}{i}{B}$ we have $s' \leqconfig \vec{s}(|\vec{s}|)$
and hence for all $1 \leq j \leq i$ we have $|\vec{s}(|\vec{s}|)(\enumelem{p}{j})| \geq |s'(\enumelem{p}{j})|$,
further for all $1 \leq j \leq \ncomplex$ we have 
$\vec{s}(\vec{s})(\enumelem{p'}{j}) \leq_{\M[\Minner]} s'(\enumelem{p'}{j})$.
Hence we know that for all $1 \leq j \leq \ncomplex$
\begin{align*}
s'(\enumelem{p'}{j}) = \mset{m^{j}_1,\ldots,m^{j}_{k_{j}}} &&\text{and}&&
\vec{s}(\vec{s})(\enumelem{p'}{j}) = \mset{{m'}^{j}_1,\ldots,{m'}^{j}_{k'_{j}}}
\end{align*}
such that for all $1 \leq j \leq \ncomplex$ and $1 \leq j' \leq k_j$ we have $m^j_{j'} \leq_{\Minner} {m'}^j_{j'}$ (where we wlog assume the injection $h(j') = j'$).

We will now extend $\cabsfun[i,B](\vec{s})$ to a new path $\cabsfun[i,B](\vec{s})\vec{s}'$ so that 
$\vec{s}'(|\vec{s}'|)$ covers $\cabsfun[i,B](s')$ in $\cabspetrinet$. This will be achieved by using rules in $\cabspnetordrules$
that will allow us to swap the counter abstraction representation of the complex tokens ${m'}^j_{j'}$ for $m^j_{j'}$.

Let us write $t_0 = \vec{s}(\vec{s})$ using the rules in $\cabspnetordrules$
$$\paren{\cabsfun[i,B](M_{\enumelem{p}{1},m^1_{j'}})-\cabsfun[i,B](M_{\enumelem{p}{1},{m'}^1_{j'}})),\cabsfun[i,B](M_{\enumelem{p}{1},{m'}^1_{j'}}))}$$ 
for $1 \leq j \leq k_1$
we can see using Lemma~\ref{lemma:counter_abstraction_homomorphism} that
\begin{align*}
 \cabsfun[i,B]\paren{t_0} &\to_{\cabspnetord} 
 	\cabsfun[i,B]\varparen{\varupdate{t_0}{\enumelem{p}{1}}{\paren{s(\enumelem{p}{1}) \ominus \varmset{{m'}^1_{1}} \oplus \varmset{{m}^1_{1}}}} }\\
 	&\to_{\cabspnetord} \cdots \\
 	&\begin{aligned}
 	\to_{\cabspnetord} \cabsfun[i,B]\varparen{\varupdate{t_0}{\enumelem{p}{1}}{\varparen{s(\enumelem{p}{1}) 
 		&\ominus \varmset{{m'}^1_{1},\ldots,{m'}^1_{k_1}} \\
 	 	&\oplus \varmset{{m}^1_{j'},\ldots,m^1_{k_1}}}}}
 	\end{aligned}
 \end{align*} 
Clearly 
\begin{align*}
&\update{t_0}{\enumelem{p}{1}}{\paren{s(\enumelem{p}{1}) \ominus \mset{{m'}^1_{1},\ldots,{m'}^1_{k_1}} \oplus \mset{{m}^1_{j'},\ldots,m^1_{k_1}}}} \\
&\;\;\;\;\;\;\;= \update{t_0}{\enumelem{p}{1}}{\mset{{m}^1_{1},\ldots,m^1_{k_1},{m'}^1_{k_1+1},\ldots,m{'}^1_{k'_1}}} =: t_1
\end{align*}
and we note that $s' \leq_{\BTS{\calN}{i}{B}} t_1$.
We can simply carry on in this fashion and obtain
\begin{align*}
 \cabsfun[i,B]\paren{t_1} &\to_{\cabspnetord}^* \cabsfun[i,B]\paren{t_2} 
 \to_{\cabspnetord}^* \cdots \to_{\cabspnetord}^* \cabsfun[i,B]\paren{t_{\ncomplex}}
 \end{align*} 
and for all $1 \leq j \leq \ncomplex$ we have
\begin{align*}
t_j = \update{t_{j-1}}{\enumelem{p}{j}}{\mset{{m}^{j}_{1},\ldots,m^j_{k_j},{m'}^j_{k_j+1},\ldots,m{'}^j_{k'_j}}}
\end{align*}

By construction we know that $\cabsfun[i,B](\vec{s}(|\vec{s}|)) \to_{\cabspnetord}^* \cabsfun[i,B]\paren{t_{\ncomplex}}$ and so there is a path $\vec{s}'$ from $\cabsfun[i,B](\vec{s}(|\vec{s}|))$ to $\cabsfun[i,B]\paren{t_{\ncomplex}}$ and we further have for all $p \in \Pcomplex$, $m \in \Minner$, $|s'(p)(m)| \leq |t_{\ncomplex}(p)(m)|$,
for all $p \in \Psimple$, $|s'(p)| \leq |t_{\ncomplex}(p)|$.

Thus it is routine to check that $\cabsfun[i,B](s') \leq_{\N^{\cabspnetdim}} \cabsfun[i,B]\paren{t_{\ncomplex}}$
and thus $\cabsfun[i,B](\vec{s})\vec{s}'$ is a covering path for $\cabsfun[i,B](s')$ from $\cabsfun[i,B](s)$ 
and we note that clearly $\norm[\cabspetrinet]{\cabsfun[i,B](\vec{s})\vec{s}'}^* < B$.

Hence we can conclude that $\dist{s}{s'}{\BTS{\calN}{i}{B}} > 0 \implies \dist{\cabsfun[i,B](s)}{\cabsfun[i,B](s')}{\BTS{\cabspnetord}{i}{B}} > 0$.
\end{proof} 
As a consequence we can reason about covering distance on $\cabspnetord$ rather than $\calN_i$ for configuration pairs in $S_{i,B}$.

\begin{corollary}\label{cor:counter_abstraction_yields_quasi_path}
Let $i \leq \nsimple$, $B \in \N$.
For all $(s,s') \in S_{i,B}$ we have
$\dist{s}{s'}{\calN_i} \leq \dist{\cabsfun[i,B](s)}{\cabsfun[i,B](s')}{\cabspnetord}$.
\end{corollary}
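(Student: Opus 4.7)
The plan is to lift a shortest covering path in $\cabspnetord$ to a covering path in $\calN_i$ of no greater length, by iterating the weak simulation of Proposition~\ref{prop:weaksimulation_counter_abstraction} along the abstract path and then upgrading the abstract cover to the concrete cover through a multiset bijection argument that crucially exploits the hypothesis $\norm[\calN_i;C]{s'} \leq B$ (which is built into $(s,s') \in S_{(i,B)}$). Fix $(s,s') \in S_{(i,B)}$. By the definition of $S_{(i,B)}$ there is a covering path for $s'$ from $s$ in $\BTS{\calN_i}{}{B}$, so by Corollary~\ref{cor:bounded_path_implies_bounded_counter_abstraction_path} there is also a covering path for $\cabsfun[i,B](s')$ from $\cabsfun[i,B](s)$ in $\cabspnetord$; in particular the right-hand distance is positive and realised by a shortest path $\vec{v} = v_1 \cdots v_\ell$ with $v_\ell \geq \cabsfun[i,B](s')$ in $\N^{\cabspnetdim}$.

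Next I would thread $\vec{v}$ through the weak simulation $\calW$ of Proposition~\ref{prop:weaksimulation_counter_abstraction}. Since $s \leqconfig s$ and $s$ inherits the $\infty$-entries at simple places $j > i$ from being a $\calN_i$-configuration, the starting pair $(\cabsfun[i,B](s), s)$ lies in $\calW$. Processing the $\ell - 1$ transitions one by one, each $\vsimrule$-labelled step in $\cabspnetord$ is matched in $\calN_i$ by a weak $\vsimrule$-transition, which---since $\calN_i$ has no $\epsilon$-transitions---is exactly one $\vsimrule$-step; each $\epsilon$-labelled step in $\cabspnetord$ is matched by zero steps in $\calN_i$. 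Collapsing the resulting stutters yields a $\calN_i$-path $t_1 = s, t_2, \ldots, t_k$ of length $k \leq \ell$, ending with $(v_\ell, t_k) \in \calW$ and hence $v_\ell = \cabsfun[i,B](s_0)$ for some $s_0 \leqconfig t_k$.

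The main obstacle is then to deduce $s' \leqconfig s_0$ from the purely numerical inequality $\cabsfun[i,B](s_0) \geq \cabsfun[i,B](s')$, because the Petri-net order on $\N^{\cabspnetdim}$ is insensitive to the refined order $\leqMinner$ packaged inside $\leqconfig$. For simple places this is immediate: coordinates $j \leq i$ are tracked faithfully by $\cabsfun[i,B]$, and both sides are $\infty$ at coordinates $j > i$. For each complex place $\enumelem{p'}{j}$ I would partition its multiset of tokens according to the equivalence classes $\cconfunI(j_0,\ldots,j_{\ninner-1})$ and, within each class, injectively biject the $s'$-tokens into the $s_0$-tokens, which is possible by the abstract inequality applied class-by-class. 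The key point is that any two complex tokens $m \in s'(\enumelem{p'}{j})$ and $m' \in s_0(\enumelem{p'}{j})$ sharing a class automatically satisfy $m \leqMinner m'$: they agree on coordinates with class index strictly less than $B$, and on coordinates with class index equal to $B$ the bound $\norm[\I]{m} \leq B$ (from $\norm[\calN_i;C]{s'} \leq B$) forces $m(\enumelem{p^{\I}}{k}) = B \leq m'(\enumelem{p^{\I}}{k})$. Assembling these bijections gives $s'(\enumelem{p'}{j}) \leq_{\M[\Minner]} s_0(\enumelem{p'}{j})$ for every complex place, hence $s' \leqconfig s_0$.

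Chaining yields $s' \leqconfig s_0 \leqconfig t_k$, so $t_1, \ldots, t_k$ is a covering path for $s'$ from $s$ in $\calN_i$ of length at most $\ell$, establishing $\dist{s}{s'}{\calN_i} \leq \ell = \dist{\cabsfun[i,B](s)}{\cabsfun[i,B](s')}{\cabspnetord}$ as required.
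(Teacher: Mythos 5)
Your proposal is correct and follows essentially the same route as the paper's proof: obtain an abstract covering path via Corollary~\ref{cor:bounded_path_implies_bounded_counter_abstraction_path}, lift a shortest such path back through the weak simulation of Proposition~\ref{prop:weaksimulation_counter_abstraction}, and upgrade the numerical cover to $\leqconfig$ by a class-wise multiset pairing that uses $\norm[\calN_i;C]{s'} \leq B$ to force $\leqMinner$ within each class $\cconfunI(\cdot)$. The only points you gloss over (the $\infty$-entries at ignored places and the trivial edge case where the class index equals $B$) are handled at exactly the same level of detail in the paper, so no gap relative to it.
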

\begin{proof}
Since $(s,s') \in S_{i,B}$ we know that there exists a covering path $\vec{s}$ in $\calN_i$ from $s$ for $s'$
with norm $\norm[\calN_i]{\vec{s}}^* < B$, i.e.~$\vec{s}$ is a path in $\BTS{\calN_i}{}{B}$.

Corollary~\ref{cor:bounded_path_implies_bounded_counter_abstraction_path} then yields that there exists
a covering path $\vec{s}_0$ from $\cabsfun[i,B](s)$ for $\cabsfun[i,B](s')$ in $\BTS{\cabspnetord}{}{B}$

In particular
$\vec{s}_0$ is a path in $\cabspnetord$ from $\cabsfun[i,B](s)$ that covers $\cabsfun[i,B](s')$.
Hence we can find a path $\vec{s}_1$ in $\cabspnetord$ from $\cabsfun[i,B](s)$ that covers $\cabsfun[i,B](s')$
such that $|\vec{s}_1| = \dist{\cabsfun[i,B](s)}{\cabsfun[i,B](s')}{\cabspnetord}$.

Proposition~\ref{prop:weaksimulation_counter_abstraction} then gives us that $\calW$ is a weak simulation for $\cabspnetord$ and $\calN_i$,
and since clearly
$(\cabsfun[i,B](s),s) \in \calW$ an easy induction on the length of $\vec{s}_1$ gives us a path $\vec{s}'$ in $\calN_i$
such that $|\vec{s}'| \leq |\vec{s}_1| = \dist{\cabsfun[i,B](s)}{\cabsfun[i,B](s')}{\cabspnetord}$ and 
$(\cabsfun[i,B](s''),\vec{s}'(|\vec{s}'|)) \in \calW$
where $\vec{s}_1(|\vec{s}|_1) = \cabsfun[i,B](s'')$
which by definition means $s'' \leqconfig \vec{s}(|\vec{s}|)$

Since $\cabsfun[i,B](s'') \geq_{\N^{\cabspnetdim}} \cabsfun[i,B](s')$ we know that
for all $1 \leq j \leq \nsimple$, $|s''(\enumelem{p}{j})| \geq |s'(\enumelem{p}{j})|$.
In order to compare the complex places let us fix $p \in \Pcomplex$
and enumerate all elements of $\paren{\N^{\leq B}}^{\ninner}$ by $\vec{j}_1,\ldots,\vec{j}_{N}$
where $N = |\paren{\N^{\leq B}}^{\ninner}|$.
Since $\cabsfun[i,B](s'') \geq_{\N^{\cabspnetdim}} \cabsfun[i,B](s')$
we know that for all $1 \leq k \leq N$
$$\sum_{m^{\vec{j}_k} \in \Xi} s'(p)(m^{\vec{j}_k}) \leq 
\sum_{m^{\vec{j}_k} \in \Xi} s''(p)(m^{\vec{j}_k}) $$
where we abbreviate $\Xi = \cconfunI(\vec{j}_k(1),\ldots,\vec{j}_k(\ninner))$.
Hence we can conclude
that 
\begin{gather*}
s'(p) = \mset{m^{\vec{j}_1}_1,\ldots,m^{\vec{j}_1}_{n_1},\ldots,m^{\vec{j}_N}_1,\ldots,m^{\vec{j}_N}_{n_N}}\\
s''(p) = \mset{{m'}^{\vec{j}_1}_1,\ldots,{m'}^{\vec{j}_1}_{n'_1},\ldots,m^{\vec{j}_N}_1,\ldots,m^{\vec{j}_N}_{n'_N}}
\end{gather*}
where for all $1 \leq k \leq N$ and $1 \leq k' \leq n_k$, $m^{\vec{j}_k}_{k'},{m'}^{\vec{j}_k}_{k'} \in \cconfunI(\vec{j}_k(1),\ldots,\vec{j}_k(\ninner))$
and
\begin{align*}
n_k = \sum_{m^{\vec{j}_k} \in \Xi} s'(p)(m^{\vec{j}_k}),  &&
n'_k = \sum_{m^{\vec{j}_k} \in \Xi} s''(p)(m^{\vec{j}_k}) 
\end{align*}
and hence $n_k \leq n'_k$.
Let us pair up $m^{\vec{j}_k}_{k'}$ and ${m'}^{\vec{j}_k}_{k'}$ for all $1 \leq k \leq N$ and $1 \leq k' \leq n_k$. 
Since $\norm[\calN_i;C]{s'} \leq B$ we know that $\norm[\I]{m^{\vec{j}_k}_{k'}} \leq B$ and hence since
$m^{\vec{j}_k}_{k'},{m'}^{\vec{j}_k}_{k'} \in \cconfunI(\vec{j}_k(1),\ldots,\vec{j}_k(\ninner))$
we can deduce that $m^{\vec{j}_k}_{k'} \leqMinner {m'}^{\vec{j}_k}_{k'}$.
Hence we can deduce that
$s'(p) \leq_{\M[\Minner]} s''(p)$ for all $p \in \Pcomplex$ and thus $s' \leqconfig s''$.
Since $s'' \leqconfig \vec{s}'(|\vec{s}'|)$ we can conclude that $s' \leqconfig \vec{s}'(|\vec{s}'|)$.

Hence $\vec{s}'$ is a path in $\calN_i$ from $s$ that covers $s'$ and $|\vec{s}'| \leq \dist{\cabsfun[i,B](s)}{\cabsfun[i,B](s')}{\cabspnetord}$ from which we can conclude
that $\dist{s}{s'}{\calN_i} \leq \dist{\cabsfun[i,B](s)}{\cabsfun[i,B](s')}{\cabspnetord}$.
\end{proof}

\begin{customtheorem}[\ref{thm:existence_counter_abstraction_petri_net}]
For all $i \leq \nsimple$, $B \in \N$ there exists a Petri net $\cabspnetGen = (\cabspnetdimGen, \cabspnetrulesGen)$ and a function $\cabsfun[i,B]$ such that
\begin{enumerate}[({A}1)]
\item $\cabspnetdimGen \leq i + \ncomplex \times \paren{B+1}^{\ninner}$, 
\item $R \geq \max\varset{r(i) \mid r \in \cabspnetrulesGen}$, and 
\item for all $s, s' \in S_{i,B}$:  $R' \geq \max_{j \in \range{\cabspnetdimGen}}(\cabsfun[i,B](s)(j))$,
\item $\dist{s}{s'}{\calN_i} \leq \dist{\cabsfun[i,B](s)}{\cabsfun[i,B](s')}{\cabspnetGen}$.
\end{enumerate}
\end{customtheorem}
\begin{proof}
Taking $\cabspnetGen = \cabspnetord$ it is easy to see that \ref{thm:existence_counter_abstraction_petri_net:i}--\ref{thm:existence_counter_abstraction_petri_net:iii} hold. \ref{thm:existence_counter_abstraction_petri_net:i} follows from \Cref{cor:counter_abstraction_yields_quasi_path}.
\end{proof}

\subsubsection{Proofs of Section~\ref{sec:covdiameter}}

\begin{corollary}\label{cor:pre_length_bound_quasi_path}
Let $1 \leq i \leq \nsimple$, $B \in \N$. Then
$$\diameter[\calN_{i}]{S_{(i,B)}} \leq \max\set{\length{\cabspnetord}{\cabsfun[i,B](s')} \mid \norm[\calN_i;C]{s'} \leq B}$$
\end{corollary}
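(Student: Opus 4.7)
The plan is to combine Corollary \ref{cor:counter_abstraction_yields_quasi_path} with the definitions of covering diameter and covering radius in a direct way. Concretely, I would fix an arbitrary pair $(s, s') \in S_{(i,B)}$ and aim to bound $\dist{s}{s'}{\calN_i}$ by the right-hand side; taking the supremum over such pairs then yields the claim.

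First, I would recall that by definition of $S_{(i,B)}$, every pair $(s,s') \in S_{(i,B)}$ satisfies $\norm[\calN_i;C]{s'} \leq B$, so that the index set on the right-hand side contains $\cabsfun[i,B](s')$. Next I would invoke Corollary \ref{cor:counter_abstraction_yields_quasi_path}, which directly gives $\dist{s}{s'}{\calN_i} \leq \dist{\cabsfun[i,B](s)}{\cabsfun[i,B](s')}{\cabspnetord}$. Then by the definition of covering radius (Definition \ref{def:covering-radius}) applied to $\cabspnetord$ with target $\cabsfun[i,B](s')$, the quantity $\dist{\cabsfun[i,B](s)}{\cabsfun[i,B](s')}{\cabspnetord}$ is bounded above by $\length{\cabspnetord}{\cabsfun[i,B](s')}$, since $\cabsfun[i,B](s)$ is a particular starting configuration.

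Chaining the two inequalities gives $\dist{s}{s'}{\calN_i} \leq \length{\cabspnetord}{\cabsfun[i,B](s')}$, and since $s'$ satisfies $\norm[\calN_i;C]{s'} \leq B$, the right-hand side is at most $\max\set{\length{\cabspnetord}{\cabsfun[i,B](s'')} \mid \norm[\calN_i;C]{s''} \leq B}$. Taking the supremum over all $(s,s') \in S_{(i,B)}$ on the left yields the desired bound on $\diameter[\calN_i]{S_{(i,B)}}$.

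There is no real obstacle here: all the work has already been done in establishing the bisimulation/simulation between $\calN_i$ and $\cabspnetord$ (Propositions \ref{prop:bisimulation_counter_abstraction} and \ref{prop:weaksimulation_counter_abstraction}) and packaged into Corollary \ref{cor:counter_abstraction_yields_quasi_path}. The only point to be slightly careful about is that $\length{\cabspnetord}{-}$ is defined as a supremum over \emph{all} starting configurations in $\cabspnetord$, which is a superset of the image of $\cabsfun[i,B]$, so the bound goes through without needing to argue anything about which abstract configurations are realisable.
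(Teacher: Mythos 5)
Your proposal is correct and follows essentially the same route as the paper's proof: fix $(s,s')\in S_{(i,B)}$, apply Corollary~\ref{cor:counter_abstraction_yields_quasi_path}, bound the abstract covering distance by the covering radius $\length{\cabspnetord}{\cabsfun[i,B](s')}$ (the paper justifies this exactly as you do, noting $\cabsfun[i,B](s)\in\N^{\cabspnetdim}$), and then take the maximum over $S_{(i,B)}$ using $\norm[\calN_i;C]{s'}\leq B$. No differences worth noting.
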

\begin{proof}
Let $(s,s') \in S_{(i,B)}$ then Corollary~\ref{cor:counter_abstraction_yields_quasi_path} gives that 
\begin{align*}
\dist{s}{s'}{\calN_i} &\leq \dist{\cabsfun[i,B](s)}{\cabsfun[i,B](s')}{\cabspnetord}\\
					  &\leq \length{\cabspnetord}{\cabsfun[i,B](s')}
\end{align*}
because $\cabsfun[i,B](s) \in \N^{\cabspnetdim}$.
Taking $\max$ over $S_{(i,B)}$:
\begin{align*}
\diameter[\calN_{i}]{S_{(i,B)}} &=
\max\set{\dist{s}{s'}{\calN_i} \mid (s,s') \in S_{(i,B)}} \\
&\leq \max\set{\length{\cabspnetord}{\cabsfun[i,B](s') \mid (s,s') \in S_{i,B}}}\\
&\leq \max\set{\length{\cabspnetord}{\cabsfun[i,B](s')} : \norm[\calN_i;C]{s'} < B}
\end{align*}
which concludes the proof.
\end{proof}

\begin{customcorollary}[\ref{cor:length_bound_quasi_path}]
Let $i \leq \nsimple$, $B \in \N$. Then
\begin{align*}
\diameter[\calN_{i}]{S_{(i,B)}} &\leq \max\set{\length{\cabspnetGen}{\cabsfun[i,B](s')} : \norm[\calN_i;C]{s'} \leq B}\\
&\leq (6\max\set{R,B,1}\max\set{R',B})^{\paren{\cabspnetdim+1}!}. 
\end{align*}
\end{customcorollary}
\begin{proof}
Let $s$ be such that $\norm[\calN_i;C]{s} \leq B$, then
Lemma~\ref{lemma:bound_lemma_bonnet} \cite[Lemma 12]{Bonnet:12} applies to give us 
$\length{\cabspnetord}{\cabsfun[i,B](s)} \leq (6R_{\cabspnetord}R'_{\cabspnetord})^{(\cabspnetdim+1)!}$.

Further it is easy to see that $R_{\cabspnetord} \leq \max\set{R,B,1}$ and $R'_{\cabspnetord} \leq \max\set{R',B}$ and hence
we obtain
$\length{\cabspnetord}{\cabsfun[i,B](s)} \leq (6\max\set{R,B,1}\max\set{R',B})^{(\cabspnetdim+1)!}$.
Invoking Corollary~\ref{cor:pre_length_bound_quasi_path} yields:
\begin{align*}
\diameter[\calN_{i}]{S_{(i,B)}} 
	&\leq \max\set{\length{\cabspnetord}{\cabsfun[i,B](s')} \mid \norm[\calN_i;C]{s'} \leq B}\\
	&\leq \max_{\norm[\calN_i;C]{s'} \leq B}\varparen{(6\max\varset{R,B,1}\max\varset{R',B})^{(\cabspnetdim+1)!}}\\
	&\leq (6\max\set{R,B,1}\max\set{R',B})^{(\cabspnetdim+1)!}
\end{align*}
where the last inequality is justified since the argument of $\max$ does not depend on $s'$ only on $B$.
\end{proof}

\begin{customtheorem}[\ref{thm:covering_radius:bound}]
Let us write $\slog$, super-logarithm, for the inverse of $\tetration[2]{(-)}$, i.e.,~$n=\tetration[2]{\slog(n)}$. Then for all $1 \leq i \leq \nsimple$:
\begin{asparaenum}[(i)]
\item $\length{\calN_{0}}{s_{\text{cov}}} \leq \tetration[2]{(2\slog(48\ncomplex \nsimple \ninner R'))}$
\item $\length{\calN_{i+1}}{s_{\text{cov}}} \leq 2^{2^{\varparen{\varparen{\max\varset{\length{\calN_i}{s_{\text{cov}}},2}}^{48 \ninner \nsimple {\ncomplex}{R'}}}}}$ and
\item $\length{\calN}{s_{\text{cov}}} \leq \tetration[2]{(2\nsimple +2 \slog(48(\nsimple+1) \ninner \nsimple {\ncomplex}{R'}))}$. 
\end{asparaenum}
\end{customtheorem}
\begin{proof}
For claim (i) applying Proposition~\ref{prop:rackoff_recurrence} yields that
$$\length{\calN_{0}}{s_{\text{cov}}} \leq \diameter[\calN_{0}]{S_{(0,R')}}$$
Applying Corollary~\ref{cor:length_bound_quasi_path} then gives us:
\begin{align*}
\length{\calN_{0}}{s_{\text{cov}}} \leq&\, (6\max\set{R,R',1}R')^{(\cabspnetdim[i,R']+1)!} \\
&+ \length{\calN_i}{s_{\text{cov}}}
\end{align*}
Clearly $R' = \max\set{R,R',1}$. Hence
\begin{align*}
\length{\calN_{0}}{s_{\text{cov}}} \leq&\, (6{R'}^2)^{(\cabspnetdim[0,R']+1)!} \\
								    \leq&\, 2^{\log_2(6{R'}^2)(\cabspnetdim[0,R']+1)!}
\end{align*}
Further
\begin{align*}
\log_2(6&R'^2){(\cabspnetdim[0,R']+1)!}+1 \\
	&\leq (4 + 2\log_2(R'))(\cabspnetdim[0,R']+1)! \\
	&\leq (\cabspnetdim[0,R']+6+R')!
\shortintertext{since $4 + 2\log_2(R') \leq \cabspnetdim[0,R']+6+R'$, and hence}
\log_2(6&R'^2){(\cabspnetdim[0,R']+1)!}+1 \\
&\leq 2^{\sum^{\cabspnetdim[0,R']+6+R'}_{k=1} \log_2(k)}\\
&\leq 2^{(\cabspnetdim[0,R']+6+R')^2}\\
&= 2^{\varparen{\varparen{R'+1}\varparen{\varparen{\ncomplex + 1}\varparen{R'+1}^{\ninner-1}-1}+6+R'}^2}\\
&= 2^{\varparen{\varparen{\varparen{\ncomplex + 1}\varparen{R'+1}^{\ninner}}+5}^2}
\end{align*}
And 
\begin{align*}
2\log_2&\varparen{\varparen{\varparen{\ncomplex + 1}\varparen{R'+1}^{\ninner}}+5} \\
&\leq
2\varparen{\log_2\varparen{\ncomplex + 1}+ \ninner\log_2\varparen{R'+1}+\log_2(5)}\\
&\leq 2\ncomplex + 2\ninner R' + 6
\leq 48\ncomplex \nsimple \ninner R'
\end{align*}
Putting it all together gives
$$\length{\calN_{0}}{s_{\text{cov}}} \leq \tetration[2]{(2\slog(48\ncomplex \nsimple \ninner R'))}$$

For the second claim we know from
Proposition~\ref{prop:rackoff_recurrence} that
$$\length{\calN_{i+1}}{s_{\text{cov}}} \leq \diameter[\calN_{i+1}]{S_{(i+1,B_i)}} + \length{\calN_i}{s_{\text{cov}}},$$
where $B_i = R' \times \length{\calN_i}{s_{\text{cov}}}$.
Corollary~\ref{cor:length_bound_quasi_path} then tells us that
\begin{align*}
\length{\calN_{i+1}}{s_{\text{cov}}} \leq&\, (6\max\set{R,B,1}\max\set{R',B})^{(\cabspnetdim+1)!} \\
&+ \length{\calN_i}{s_{\text{cov}}}
\end{align*}
Then since we know $B_i + 1 \geq \max\set{R,R',1}$ the above implies
\begin{align*}
\length{\calN_{i+1}}{s_{\text{cov}}} \leq&\, (6(B_i + 1)^2)^{(\cabspnetdim[i+1,B_i]+1)!} + \length{\calN_i}{s_{\text{cov}}}\\
\leq&\, 2^{\log_2(6(B_i + 1)^2){(\cabspnetdim[i+1,B_i]+1)!}} + 2^{\log_2\paren{\length{\calN_i}{s_{\text{cov}}}}}\\
\leq&\, 2^{\log_2(6(B_i + 1)^2){(\cabspnetdim[i+1,B_i]+1)!}+1}
\end{align*}
where the last line is justified since the inequality $\log_2(6(B_i + 1)^2) \geq \discn\log_2\paren{\length{\calN_i}{s_{\text{cov}}}}$ holds.
Further 
\begin{align*}
\log_2(6(B_i &+ 1)^2){(\cabspnetdim[i+1,B_i]+1)!}+1 \\
	&\leq (4 + 2\log_2(B_i + 1))(\cabspnetdim[i+1,B_i]+1)! \\
	&\leq (\cabspnetdim[i+1,B_i]+6+B_i)!
\shortintertext{since $4 + 2\log_2(B_i + 1) \leq \cabspnetdim[i+1,B_i]+6+B_i$, and hence}
\log_2(6(B_i &+ 1)^2){(\cabspnetdim[i+1,B_i]+1)!}+1 \\
&\leq 2^{\sum^{\cabspnetdim[i+1,B_i]+6+B_i}_{k=1} \log_2(k)}\\
&\leq 2^{(\cabspnetdim[i+1,B_i]+6+B_i)^2}
\end{align*}
Expanding $\cabspnetdim[i+1,B_i]$ then gives
\begin{align*}
\cabspnetdim[i+1,B_i]+6+B_i =&\, i + \paren{B_i+1}\paren{\paren{\ncomplex + 1}\paren{B_i+1}^{\ninner-1}-1}\\
						&+6+B_i + 1 \\
					  	=&\, i + \paren{\ncomplex + 1}\paren{B_i+1}^{\ninner}+6\\
					  	\leq&\, 2^{\log_2(\nsimple)} +  2^{\log_2(\ncomplex + 1)}\paren{B_i+2}^{\ninner}+2^3\\
					  	\leq&\, 2^{3+\log_2(\nsimple)+\log_2(\ncomplex + 1)}\paren{B_i+2}^{\ninner}\\
					  	\leq&\, \paren{B_i+2}^{\ninner+3+\log_2(\nsimple)+\log_2(\ncomplex + 1)}
\end{align*}
Expanding $B_i$
\begin{align*}
B_i+2 &\leq R' \times \length{\calN_i}{s_{\text{cov}}} + 2 \\
	  &\leq (R' + 1) \max\set{\length{\calN_i}{s_{\text{cov}}},2}\\
	  &\leq 2^{\log_2(R' + 1)} \max\set{\length{\calN_i}{s_{\text{cov}}},2}\\
	  &\leq \paren{\max\set{\length{\calN_i}{s_{\text{cov}}},2}}^{\log_2(R' + 1)+1}
\end{align*}
Putting it all together
\begin{align*}
\length{\calN_{i+1}}{s_{\text{cov}}} &\leq 2^{2^{\paren{\paren{\paren{\max\set{\length{\calN_i}{s_{\text{cov}}},2}}^{\log_2(R' + 1)+1}}^{\ninner+3+\log_2(\nsimple)+\log_2(\ncomplex + 1)}}^2}}.
\end{align*}

And we can see
\begin{align*}
&2\paren{\log_2(R' + 1)+1}\paren{\ninner+3+\log_2(\nsimple)+\log_2(\ncomplex + 1)} \\
&\;\;\;\;\leq 2\paren{R'+1}\paren{3+\ninner+\nsimple+\ncomplex}\\
&\;\;\;\;\leq 48 \ninner \nsimple \ncomplex R'
\end{align*}
since $\ninner, \nsimple, \ncomplex, R' \geq 1$ and hence we can conclude
\begin{align*}
\length{\calN_{i+1}}{s_{\text{cov}}} 
&\leq 2^{2^{\paren{\paren{\max\set{\length{\calN_i}{s_{\text{cov}}},2}}^{48 \ninner \nsimple {\ncomplex}{R'}}}}}.
\end{align*}

We can prove claim (iii) by a simple induction.
The base case is given by (i).
The inductive step is:
\begin{align*}
\length{\calN_{i+1}}{s_{\text{cov}}} &\leq 2^{2^{\paren{\max\set{\length{\BTS{\calN_i}{s_{\text{cov}}},2}}^{48 \ninner \nsimple \ncomplex R'}}}}\\
&\leq 2^{2^{\paren{\tetration[2]{(2\cdot i +2 \slog(48(i+1) \ninner \nsimple \ncomplex {R'}))}}^{48 \ninner \nsimple {\ncomplex}{R'}}}}\\ 
&\leq \tetration[2]{(2\cdot (i+1) +2 \slog(48(i+2) \ninner \nsimple {\ncomplex}{R'}))}
\end{align*}
which concludes the proof.
\end{proof}

\begin{customcorollary}[\ref{cor:cov_nnct_tower}]
Coverability for NNCTs is decidable and in \Tower.
\end{customcorollary}
\begin{proof}
Let $\calN$ be an NNCT with $\nsimple$ places, $\ncomplex$ complex places, $\ninner$ colours and rules $\calR$ and a coverability query $\mathpzc{Q}$ giving rise to a bound $R'$ and
$$B(\nsimple,\ncomplex,\ninner,R') = \tetration[2]{(2\nsimple +2 \slog(48(\nsimple+1) \discn \ninner \nsimple {\ncomplex}{R'}))}.$$
Then the theorem above tells
us that along a covering path a simple place cannot contain more than $R \cdot B(\nsimple,\ncomplex,\ninner,R')$ tokens,
a complex tokens cannot have more that $R \cdot B(\nsimple,\ncomplex,\ninner,R')$ tokens of a particular colour and
there are at most $R \cdot B(\nsimple,\ncomplex,\ninner,R')$ complex tokens. Hence along a covering path simple place 
can be represented using $\log_2(R \cdot B(\nsimple,\ncomplex,\ninner,R'))$ bits, the state of complex token can be represented using $\ninner \log_2(R \cdot B(\nsimple,\ncomplex,\ninner,R'))$ bits and hence a complex place may be represented by $\ncomplex \ninner R \cdot B(\nsimple,\ncomplex,\ninner,R') \log_2(R \cdot B(\nsimple,\ncomplex,\ninner,R'))$ bits. Hence a non-deterministic Turing machine requiring at most $O(B(\nsimple,\ncomplex,\ninner,R'))$ space can decide the coverability problem. Using Savitch's theorem we know there is a deterministic turing machine deciding coverability in $O(B(\nsimple,\ncomplex,\ninner,R')^2)$ space and using an exponential to obtain a time bounded turing machine we find that the coverability problem can be decided in time $O(2^{B(\nsimple,\ncomplex,\ninner,R')^2})$ and is thus clearly in \Tower.

\end{proof}        
\newpage
\subsection{Proof of Theorem \ref{thm:nnct:coverability:tower_hard}}\label{sec:lowerbound:full}

\begin{customtheorem}[\ref{thm:nnct:coverability:tower_hard}]
Coverability for a simple query for total-transfer NNCT is \Tower-hard.
\end{customtheorem}
\begin{proof}
We can deduce \Tower-hardness by showing that given a deterministic \changed[jk]{bounded} two-counter machine,
 \changed[jk]{$\calM$,
 of size $n$ with counters that are $(\tetration[2]{n})$-bounded }
\jk{I think the size of a Minsky machine is defined to be cardinality of the set of instructions or equivalently the number of control states.}
 we can construct an NNCT $\calN_{\calM}$ in polynomial-time that weakly bisimulates $\calM$ in such a way that we can reduce the halting problem for $\calM$ to coverability for a simple query for $\calN_{\calM}$. The machine $\calM$ can use the following operations: $\inc{x}$, $\dec{x}$, $\reset{x}$, $\iszero{x}$, $\ismax{x}$ for each counter $x$.

Each simulation state of $\calN_{\calM}$ will represent a valuation $v$ of $6n+2$ \emph{active} and \emph{inactive} counters, and $n$ arrays. In addition to the counters $x$, $y$ of $\calM$ the NNCT $\calN_{\calM}$ will simulate the auxiliary counters $s_i$, $p_i$, $p'_i$,  $c_i$, $c'_i$ and an auxiliary array $a_i$ for each $i \leq n$. 
Each active counter $d \in \set{s_i, p_i, p'_i, c_i, c'_i} \union \set{x,y \mid i = n}$ is $(\tetration[2]{i})$-bounded, each inactive counter has an undefined value. 
For each $i$ the array has length exactly $(\tetration[2]{i})+1$ and carries values $a_i(j) \in \set{0,1,2}$ for $0 \leq j \leq \tetration[2]{i}$. The NNCT $\calN_{\calM}$ will have two simple places $\encp{d}$ and $\encp[\overline]{d}$ for each counter $d \in \set{s_i, p_i, c_i, c'_i} \union \set{x,y \mid i = n}$, three complex places $\encp{a_i}$ and $\encp[\overline]{a_i}$, $\encp{\mathit{aux}_i}$ and two colours $\encp{j_i}$ and $\encp[\overline]{j_i}$ for each array $a_i$, and a complex ``sink'' place $\encp{\mathit{disc}}$; $\colmap$
maps $\encp{j_i}$ to $\encp{p'_i}$ and $\encp[\overline]{j_i}$ to $\encp[\overline]{p'_i}$,
in addition to a (polynomial in $n$) number of simple places encoding the control of $\calM$ and the ``internal'' control of $\calN_{\calM}$. \changed[jk]{Further $\calN_{\calM}$'s transfer rules will all be total, hence $\calN_{\calM}$ will be a total-transfer NNCT.}
A valuation $v$ is represented by a configuration $s$ as follows:
\begin{itemize}
\item For each $i$ and $d \in \set{s_i, p_i, p'_i, c_i, c'_i} \union \set{x,y \mid i = n}$ if $d$ is active then there are exactly $v(d)$ $\bullet$-tokens in $\encp{d}$ and $\tetration[2]{i} - v(d)$ $\bullet$-tokens in $\encp[\overline]{d}$.
\item For each $i$ and $d \in \set{s_i, p_i, p'_i, c_i, c'_i} \union \set{x,y \mid i = n}$ if $d$ is inactive then both places $\encp{d}$, $\encp[\overline]{d}$ are empty.
\item For each $i$ the array $a_i$ is represented as follows: for each $0 \leq k \leq \tetration{i}$
let $m_{(i,k)}$ be a complex token such that $m_{(i,k)}$ contains \changed[jk]{exclusively tokens of colours $\encp{j_i}$ and $\encp[\overline]{j_i}$: $k$ tokens of colour $\encp{j_i}$ and $\tetration{i} - k$ tokens of colour  $\encp[\overline]{j_i}$};
then there are exactly $v(a_i(k))$ complex tokens $m_{(i,k)}$ in $\encp{a_i}$ and
$2-v(a_i(k))$ complex tokens $m_{(i,k)}$ in $\encp[\overline]{a_i}$.
\item For each $i$ the place $\encp{\mathit{aux}_i}$ is empty.
\end{itemize}

The question whether $\calM$ halts, i.e.~whether $\calM$ reaches a halting control state from its initial state, can then be answered by performing a coverability query on $\calN_{\calM}$ where the query marking involves only the simple places of $\calN_{\calM}$ encoding $\calM$'s finite control. Assuming that only $\calM$'s halting control states\jk{abnormal termination?} have no successors, $\calM$'s halting problem also reduces to 
the termination problem for $\calN_{\calM}$. And we can augment $\calN_{\calM}$ with an additional simple place that is incremented with every transition so that the halting problem for $\calM$ reduces to deciding boundedness of $\calN_{\calM}$. \jk{It may make sense to introduce complex token removal as otherwise the number of complex tokens always increases.}

Let $d \in \set{s_i, p_i, p'_i, c_i, c'_i, x,y \mid 1 \leq i \leq n}$ we will implement operation
$\inc{d}$ as follows
\begin{quote}
\begin{tabular}{lp{0.5eM}p{0.5eM}l}
	\mathline[{&}]{3}{\text{\em Add a $\bullet$-token to $\encp{d}$}}\\
	\mathline[{&}]{3}{\text{\em Remove a $\bullet$-token from $\encp[\overline]{d}$}}
\end{tabular}
\end{quote}
Suppose $d$ is $\tetration[2]{i}$-bounded. Clearly if configuration $s$ represents valuation $v$, $d$ is active and $v(d) < \tetration[2]{i}$ then 
there are $v(d)$ $\bullet$-tokens in $\encp{d}$ and $\tetration[2]{i} - v(d)$ $\bullet$-tokens in $\encp[\overline]{d}$.
Hence there is at least one $\bullet$-token in $\encp[\overline]{d}$.
Performing the operation $\inc{d}$ yields a new configuration $s'$ where there are $v(d)+1$ $\bullet$-tokens in $\encp{d}$ and $\tetration[2]{i} - (v(d)+1)$ $\bullet$-tokens in $\encp[\overline]{d}$ and all other (non-control) places are unchanged. The configuration $s'$ then represents the valuation $\update{v}{d}{v(d)+1}$. Further if 
$v(d) = \tetration[2]{i}$ we note $\encp[\overline]{d}$ is empty and thus the simulation of $\inc{d}$ blocks at the attempt to remove a token from $\encp[\overline]{d}$.

We note that we can implement $\dec{d}$ by
\begin{quote}
\begin{tabular}{lp{0.5eM}p{0.5eM}l}
	\mathline[{&}]{3}{\text{\em Remove a $\bullet$-token from $\encp{d}$}}\\
	\mathline[{&}]{3}{\text{\em Add a $\bullet$-token to $\encp[\overline]{d}$}}
\end{tabular}
\end{quote}
Similarly to the reasoning on $\inc{d}$ we can see that if configuration $s$ represents valuation $v$, $d$ is active and $v(d) > 0$ we can successfully simulate $\dec{d}$ and obtain a configuration $s'$ that represents the valuation $\update{v}{d}{v(d)-1}$. However if $v(d) = 0$ simulating $\dec{d}$ will get stuck.

In addition to the operations $\calM$ supports, we will implement further instructions to simplify and improve readability.
The NNCT $\calN_{\calM}$ will simulate $\activate{d}$, $\deactivate{d}$ for $d \in \set{s_i, p_i, p'_i, c_i, c'_i, x,y \mid 1 \leq i \leq n}$, and operations
$\reset{d}$, $\iszero{d}$, $\ismax{d}$ for $d \in \set{ p_i, p'_i, c_i, c'_i, x,y \mid 1 \leq i \leq n}$.
Further we implement \changed[jk]{the operation $\ismaxandreset{s_i}$} and the counter-specialised operations:
\begin{quote} 
 $\isequal{p_i}{p'_i}$,  $\inc{a_i(p_i)}$, $\dec{a_i(p_i)}$, $\reset{a_i(p_i)}$, \newline $\iszero{a_i(p_i)}$, $\ismax{a_i(p_i)}$, $\activate{a_i}$
\end{quote} 
for each array $a_i$ and counter $p_i$. The latter counter operations will be used to implement $\ismaxandreset{s_i}$. All above operations are only guaranteed to succeed if the counter in question is active at the start of the operation. 

Counters $s_1$, $p_1$, $p'_1$, $c_1$, $c'_1$ are $2$-bounded so implementing operations on them is trivial.
For $i < n$, operations on $a_i$ are simulated using $s_{i}$, $p_{i}$, $p'_{i}$, $c_{i}$, $c'_{i}$ and
operations on $s_{i+1}$, $p_{i+1}$, $p'_{i+1}$, $c_{i+1}$, $c'_{i+1}$ are simulated using operations on 
$p_{i}$, $p'_{i}$, $c_{i}$, $c'_{i}$ and $a_i$. 

\lo{I would propose that we aim to devote about a page to the lower bound proof. We can relegate the simulation of at least half of the 8 operations in the following to the appendix. I have no doubt that instead of 2.5 pages of lower-bound proof (with as many as 8 cases), POPL reviewers would much prefer to see a partial (with the rest in the appendix) but ``representative'' proof plus additional analyses (such as the connections between the three models we have discussed) and examples.}
\jk{I agree. Some implementations are more obvious than others. I placed everything here so it would be easier for you to read the entire argument. As it stands do you think the argument is sufficient for a proof? Which implementations do you think can be relegated to the appendix? 
In my opinion the following are obvious:
\begin{itemize}
\item $\iszero{a_i(p_i)}, \ismax{a_i(p_i)}$,
\item $\reset{a_i(p_i)}$,
\item $\reset{d}$,
\item $\isequal{p_i,p'_i}$,
\item the details of the initialisation and
\item possibly $\inc{d}$, $\dec{d}$?
\end{itemize}
I think this would already take us to about 1.5 pages. It may be that the remainder of the argument can be trimmed.
}

\begin{asparaenum}[(i)]
\item The following shows how to implement $\inc{a_i(p_i)}$ and can only succeed if $p_i, p'_i$ are active.

\noindent
\begin{tabular}{lp{1eM}p{1eM}l}
	\tline[{&}]{3}{Move a complex-token from $\encp[\overline]{a_i}$ to $\encp{\mathit{aux}_i}$;}{7cm}\\
	\mathline[{&}]{3}{\deactivate{p'_i};}\\
	\tline[{&}]{3}{Eject the contents of a complex-token $m$ in $\encp{\mathit{aux}_i}$ and its remains into $\encp{\mathit{disc}}$;}{7cm}\\
	\mathline[{&}]{3}{\isequal{p_i}{p'_i};\; \reset{p'_i};}\\
	\mathline[{&}]{3}{\text{\em Create an empty complex-token in $\encp{\mathit{aux}_i}$;}}\\
	\mathline[{&}]{3}{\textbf{while }\paren{p_i \neq p'_i} \textbf{ do}}\\
		\tline[{&&}]{2}{Inject a $\encp{j_i}$-coloured $\bullet$-token into a complex token in $\encp{\mathit{aux}_i}$;}{6cm}\\
		\mathline[{&&}]{2}{\inc{p'_i};}\\
	\mathline[{&}]{3}{\textbf{while }\paren{\neg(\ismax{p'_i})} \textbf{ do}}\\
		\tline[{&&}]{2}{Inject a $\encp[\overline]{j_i}$-coloured $\bullet$-token into a complex token in $\encp{\mathit{aux}_i}$;}{6cm}\\
		\mathline[{&&}]{2}{\inc{p'_i};}\\
	\mathline[{&}]{3}{\text{\em Move a complex-token from $\encp{\mathit{aux}_i}$ to $\encp{a_i}$;}}\\
	\mathline[{&}]{3}{\reset{p'_i};}\\
\end{tabular}

Suppose $\inc{a_i(p_i)}$ is executed in a configuration $s$ that represents valuation $v$ and $p_i, p'_i$ are active.
If $v(a_i(v(p_i))) < 2$ then we know there exists a complex token $m_{(i,v(p_i))}$ in $\encp[\overline]{a_i}$ and all complex tokens in $\encp[\overline]{a_i}$ are of the form $m_{(i,k)}$ for some $0 \leq k \leq \tetration[2]{i}$.
The move of one $m_{(i,k)}$ complex token from $\encp[\overline]{a_i}$ to $\encp{\mathit{aux}_i}$ results in
$\encp{\mathit{aux}_i}$ containing just $m_{(i,k)}$, since by assumption $\encp{\mathit{aux}_i}$ was empty before.
After deactivating $p'_i$ we know that both $\encp{p'_i}$  and $\encp[\overline]{p'_i}$ are empty. Ejecting
the contents of $m_{(i,k)}$ removes $m_{(i,k)}$ from $\encp{\mathit{aux}_i}$, inserts $k$ $\bullet$-tokens into $\encp{p'_i}$ and $(\tetration[2]{i}) - k$ $\bullet$-tokens into $\encp[\overline]{p'_i}$ and 
places the remaining (now empty) complex token into $\encp{\mathit{disc}}$. We can see that, disregarding
$a_i$, the configuration we have reached represents a partial valuation $v'$ that sets $v'(p'_i) = k$ and $v'(p_i) = v(p_i)$. After executing $\isequal{p_i}{p'_i}$ the simulation only succeeds if $k = v(p_i)$.
Hence $\encp[\overline]{a_i}$ now contains $2-(v(a_i(v(p_i)))+1)$ complex tokens $m_{(i,v(p_i))}$ and the same number of other complex tokens $m_{(i,k)}$.
Clearly the two following while loops carefully inject $\encp{j_i}$-coloured $\bullet$-tokens and 
$\encp[\overline]{j_i}$-coloured $\bullet$-tokens into the newly created
token at $\encp{\mathit{aux}_i}$ to yield a new $m_{(i,v(p_i))}$ located in $\encp{\mathit{aux}_i}$
which we move to $\encp{a_i}$. 
Thus $\encp{a_i}$ now contains $v(a_i(v(p_i)))+1$ complex tokens $m_{(i,v(p_i))}$ and the same number of other complex tokens $m_{(i,k)}$ as before. The configuration $s'$ we have reached thus represents a valuation $v''$ such that
$v''(a_i(j)) = v(a_i(j))$ for all $0 \leq j \leq \tetration[2]{i}$ and $j \neq v(p_i)$ and $v''(a_i(v(p_i))) = v(a_i(v(p_i)))+1$.
Otherwise if $v(a_i(v(p_i))) = 2$ then the simulation either blocks while attempting to move a complex token
$m_{(i,k)}$ from $\encp[\overline]{a_i}$ to $\encp{\mathit{aux}_i}$ or on the execution of $\isequal{p_i}{p'_i}$
since it is impossible to obtain $k = v(p_i)$.

By swapping $\encp{a_i}$ and $\encp[\overline]{a_i}$ in the above we obtain an implementation of $\dec{a_i(p_i)}$.

\item The following shows how to implement $\activate{a_i}$ and can only succeed if $p_i$ and $p'_i$ are active.

\noindent
\begin{tabular}{lp{1eM}p{1eM}p{1eM}l}
	\mathline[{&}]{4}{\textbf{for } p_i \is 0 \textbf{ to } \tetration[2]{i} \textbf{ do }}\\
	\mathline[{&&}]{3}{\textbf{for } z \is 0 \textbf{ to } 1 \textbf{ do }}\\
	\mathline[{&&&}]{2}{\text{\em Create an empty complex-token in $\encp{\mathit{aux}_i}$;}}\\
	\mathline[{&&&}]{2}{\textbf{while }\paren{p_i \neq p'_i} \textbf{ do}}\\
		\tline[{&&&&}]{1}{Inject a $\encp{j_i}$-coloured $\bullet$-token into a complex token in $\encp{\mathit{aux}_i}$;}{5.25cm}\\
		\mathline[{&&&&}]{1}{\inc{p'_i};}\\
	\mathline[{&&&}]{2}{\textbf{while }\paren{\neg(\ismax{p'_i})} \textbf{ do}}\\
		\tline[{&&&&}]{1}{Inject a $\encp[\overline]{j_i}$-coloured $\bullet$-token into a complex token in $\encp{\mathit{aux}_i}$;}{5.25cm}\\
		\mathline[{&&&&}]{1}{\inc{p'_i};}\\
	\mathline[{&&&}]{2}{\text{\em Move a complex-token from $\encp{\mathit{aux}_i}$ to $\encp[\overline]{a_i}$;}}\\
	\mathline[{&&&}]{2}{\reset{p'_i};}\\
\end{tabular}

We first note that the interior for-loop simply repeats its body twice and can thus be considered syntactic sugar.
Suppose $\activate{a_i}$ is invoked in a configuration $s$ such that in $s$ the places $\encp{a_i}$ and $\encp[\overline]{a_i}$ are empty. We can then follow our analysis in the second part of $\inc{a_i(p_i)}$'s implementation
to see that the interior for-loop places two complex tokens $m_{i,k}$ into $\encp[\overline]{a_i}$ for all $k \in 0,\ldots, \tetration[2]{i}$. Hence when the outer for-loop terminates we have reached a configuration representing a valuation $v$ such that $v(a_i(k)) = 0$ for all $0 \leq k \leq \tetration[2]{i}$.

\item The following shows how to implement $\iszero{a_i(p_i)}$.

\begin{tabular}{lp{1eM}p{1eM}l}
	\mathline[{&}]{3}{\inc{a_i(p_i)};\; \inc{a_i(p_i)};}\\
	\mathline[{&}]{3}{\dec{a_i(p_i)};\; \dec{a_i(p_i)};}
\end{tabular}

Suppose $\iszero{a_i(p_i)}$ is executed in a configuration $s$ that represents valuation $v$ and $p_i, p'_i$ are active. If $v(a_i(v(p_i))) = 0$ we can clearly execute $\inc{a_i(p_i)}$ twice and undo the former by executing $\dec{a_i(p_i)}$ twice.
We thus end up at a configuration representing $v$. If however $v(a_i(v(p_i))) > 0$ then 
either $v(a_i(v(p_i))) = 2$ initially or after the first invocation of $\inc{a_i(p_i)}$ and
our analysis above assures us
that the subsequent invocation of $\inc{a_i(p_i)}$ blocks. Hence $\iszero{a_i(p_i)}$ only succeeds if $v(a_i(v(p_i))) = 0$.
We obtain an implementation of $\ismax{a_i(p_i)}$ by swapping $\inc{a_i(p_i)}$ for $\dec{a_i(p_i)}$ and vice versa.
Analogous reasoning to above then yields that $\iszero{a_i(p_i)}$ only succeeds if $v(a_i(v(p_i))) = 2$.

\item The following shows how to implement $\reset{a_i(p_i)}$.

\noindent
\begin{tabular}{lp{1eM}p{1eM}l}
	\mathline[{&}]{3}{\textbf{while }\paren{a_i(p_i) \neq 0} \textbf{ do}}\\	
		\mathline[{&&}]{2}{\dec{a_i(p_i)}}
\end{tabular}

It is trivial to see that the while loop only terminates once a configuration is reached representing a valuation $v$
that sets $v(a_i(v(p_i))) = 0$. In every iteration $v(a_i(v(p_i)))$ is decremented by $1$, so termination is ensured.

\item The following shows how to implement $\ismaxandreset{s_{i+1}}$.

\noindent 
\begin{tabular}{lp{0.5eM}p{0.5eM}l}
	\mathline[{&}]{3}{\textbf{for } p_i \is 0 \textbf{ to } \tetration[2]{i} \textbf{ do }}\\
		\mathline[{&&}]{2}{\reset{a_i(p_i)}}\\
	\mathline[{&}]{3}{\textbf{while }(\iszero{a_i(\tetration[2]{i})}) \textbf{ do}}\\	
		\mathline[{&&}]{2}{\dec{s_{i+1}};\;\reset{p_i};\;\inc{a_i(p_i)};}\\
		\mathline[{&&}]{2}{\textbf{while } \ismax{a_i(p_i)} \textbf{ do }}\\
			\mathline[{&&&}]{1}{\dec{a_i(p_i)};\;\inc{p_i};\;\inc{a_i(p_i)};}
\end{tabular}

We know that after the for-loop $a_i$ is the array such that $a_i(x) = 0$ for all $0 \leq x \leq \tetration[2]{i}$.
The array $a_i$ is meant to be binary representation of a number between $0$ and $\tetration[2]{(i+1)}$. This number is initially $0$ and we can see that the outer while loop performs a long addition of $1$ for each iteration. If $v(a_i(v(p_i))) = 2$ then $v(p_i)$ is an index representing a carry bit in the long addition computation. Clearly for each
number represented by $a_i$ we have perform $\dec{s_{i+1}}$. Hence if initially $v(s_{i+1}) = \tetration[2]{(i+1)}$
then after performing $\ismaxandreset{s_{i+1}}$ it is the case that the resulting valuation $v'(s_{i+1}) = 0$, and
if $v(s_{i+1}) < \tetration[2]{(i+1)}$ then after $v(s_{i+1})$ iterations the resulting valuation $v'$ would set $v'(s_{i+1}) = 0$ and $a_i$ would represent the number $v(s_{i+1})$. Since $v(s_{i+1}) < \tetration[2]{(i+1)}$ this implies that
$a_i(\tetration[2]{i}) = 0$ and hence body of the outer while loop is executed again leading to an invocation of
$\dec{s_{i+1}}$ which will block. Hence $\ismaxandreset{s_{i+1}}$ will block when executed in a configuration representing $v$ such that $v(s_{i+1}) < \tetration[2]{(i+1)}$.

We modify the implementation of $\ismaxandreset{s_{i+1}}$ by replacing
$\dec{s_{i+1}}$ by {\em Add a $\bullet$-token to $\encp[\overline]{d}$} we obtain an implementation of 
$\activate{d}$.
If $d$ is inactive then $\encp{d}$ and $\encp[\overline]{d}$ are empty. By an analogous argument to above we can see that invoking $\activate{d}$ will add $\tetration[2]{(i+1)}$ $\bullet$-token to $\encp[\overline]{d}$ yielding a configuration where $d$ is active and is valued $0$.
We can similarly obtain an implementation for $\deactivate{d}$.
First modify $\ismaxandreset{s_{i+1}}$ by replacing $\dec{s_{i+1}}$ by
{\em Remove a $\bullet$-token from $\encp[\overline]{d}$} which gives us an implementation of an intermediate operation
$\deactivate[']{d}$. If $d$ is active and valued $0$ then $\deactivate[']{d}$ clearly succeeds and removes all tokens from $\encp[\overline]{d}$. Hence we can implement $\deactivate{d}$ by $\reset{d};\; \deactivate[']{d};$ which  succeeds if $d$ is active.

\item The following shows how to implement $\iszero{d}$ for $d \in \set{p_{i+1}, p'_{i+1}, c_{i+1}, c'_{i+1}} \union \set{x,y \mid i+1 = n}$
if $s_i$ is $0$. 

\noindent
\begin{tabular}{lp{1eM}p{1eM}l}
	\mathline[{&}]{3}{\textbf{while }(*) \textbf{ do}}\\	
		\mathline[{&&}]{2}{\inc{d}; \inc{s_{i+1}};}\\
	\mathline[{&}]{3}{\ismaxandreset{s_{i+1}}}\\	
	\mathline[{&}]{3}{\textbf{while }(*) \textbf{ do}}\\	
		\mathline[{&&}]{2}{\dec{d}; \inc{s_{i+1}};}\\
	\mathline[{&}]{3}{\ismaxandreset{s_{i+1}}}
\end{tabular}

Suppose $\iszero{d}$ is started in a configuration $s_0$ that represents valuation $v_0$ such that $v_0(s_i) = 0$.
After the first while-loop non-deterministically terminates, just before invoking $\ismaxandreset{s_{i+1}}$ we reach a configuration that represents a valuation $v$ such that $v(s_{i+1}) = k$ and $v(d) = v_0(d) + k$. Since
neither $\inc{d}$ nor $\inc{s_{i+1}}$ blocked we know that $0 \leq k \leq \tetration[2]{(i+1)}-v_0(d)$. In fact for all
$0 \leq k \leq \tetration[2]{(i+1)}-v_0(d)$ it is possible to reach such a configuration where $v(s_{i+1}) = k$ and $v(d) = v_0(d) + k$.
Invoking $\ismaxandreset{s_{i+1}}$ only succeeds if $k = \tetration[2]{(i+1)}$. 
Thus all other configurations where $k < \tetration[2]{(i+1)}$ $\ismaxandreset{s_{i+1}}$ blocks. Hence if
$v_0(d) > 0$ it will be the case that $k < \tetration[2]{(i+1)}$ and thus $\ismaxandreset{s_{i+1}}$ blocks.
If $v_0(d) = 0$ then there is one configuration that we can reach which represents a valuation $v_1$
such that $v_1(s_{i+1}) = \tetration[2]{(i+1)}$, $v_1(d) = v_0(d) + \tetration[2]{(i+1)} = \tetration[2]{(i+1)}$
and we can successfully invoke $\ismaxandreset{s_{i+1}}$ to yield a new configuration representing a valuation $v_2$
$v_2(s_{i+1}) = 0$, $v_2(d) = \tetration[2]{(i+1)}$.
The second while loop then again non-deterministically decrements $d$ and increments $s_{i+1}$ in the same fashion as above. Again similar reasoning to above yields that there is only one configuration that we can reach that represents a valuation $v_3$ which values
$v_3(s_{i+1}) = \tetration[2]{(i+1)}$ and consequently $v_3(d) = 0$ and which guarantees $\ismaxandreset{s_{i+1}}$ succeeds to yield a configuration that represents a valuation $v_4$ such that
$v_4(s_{i+1}) = v_4(d) = 0$. Hence $\iszero{d}$ succeeds only if started in a configuration that values $d$ as $0$.

We can modify the above to obtain an implementation $\ismax{d}$. We simply swap the first for the second while loop
and analogous reasoning to $\iszero{d}$ then yields that $\ismax{d}$ succeeds only when $d$ is valued at 
$\tetration[2]{(i+1)}$ and $s_{i+1}$ at $0$.

\item The following shows how to implement $\isequal{p_i}{p'_i}$ for $d \in \set{p_i, p'_i, c_i, c'_i} \union \set{x,y \mid i = n}$
if both $p_i$ and $p'_i$ are active and $c_i$ is zero.

\noindent
\begin{tabular}{lp{1eM}p{1eM}l}
	\mathline[{&}]{3}{\textbf{while }\paren{p_i \neq 0 \textbf{ or } p'_i \neq 0} \textbf{ do}}\\	
		\mathline[{&&}]{2}{\dec{p_i};\;\dec{p'_i};\;\inc{c_i};}\\
	\mathline[{&}]{3}{\textbf{while }\paren{c_i \neq 0} \textbf{ do}}\\	
		\mathline[{&&}]{2}{\inc{p_i};\;\inc{p'_i};\;\dec{c_i};}
\end{tabular}

If we invoke $\isequal{p_i}{p'_i}$ in a configuration that represents a valuation $v$ such that $v(p_i) = k$, 
$v(p'_i) = k'$ and $v(c_i) = 0$ then for each iteration of the first while loop
we obtain a configuration representing a valuation $v'$ such that 
$v'(p_i) = k - v'(c_i)$, 
$v'(p'_i) = k' - v'(c_i)$. The inequality $v'(c_i) \leq \min(k,k')$ must hold at the end of each iteration as otherwise either $\dec{p_i}$ or $\dec{p'_i}$ would have blocked. Since the loop-guard is false only if
both $v'(p_i) = 0$ and $v'(p'_i) =  0$ we can see that the while loop only terminates successfully
once $v'(c_i) = k = k'$ i.e.~if initially $v(p_i) = v(p'_i)$. Otherwise the loop guard would never be false and
thus at some point the invocation of $\dec{p_i}$ or $\dec{p'_i}$ blocks.
Thus after the successful breaking out of the while loop $v'(c_i) = v(p_i) = v(p'_i)$ and it is easy to see
that the second while loop simply copies the contents of $c_i$ back to $p_i$ and $p'_i$.
Hence invoking $\isequal{p_i}{p'_i}$ only succeeds if started in a configuration where
$v(p_i) = v(p'_i)$ and $v(c_i) = 0$.

\item The following shows how to implement $\reset{d}$ for $d \in \set{p_i, p'_i, c_i, c'_i} \union \set{x,y \mid i = n}$.

\noindent
\begin{tabular}{lp{1eM}p{1eM}l}
	\mathline[{&}]{3}{\textbf{while }\paren{d \neq 0} \textbf{ do}}\\	
		\mathline[{&&}]{2}{\dec{d}}
\end{tabular}

As in the implementation of  $\reset{a_i(p_i)}$ the while-loop simply decrements $d$ until $d$ has value 0.
Hence $\reset{d}$ works as expected.
\end{asparaenum}
In order to set up a configuration that simulates the initial configuration of $\calM$ we set up $\calN_{\calM}$ where all non-control places are empty and $\calN_{\calM}$'s finite control initiates the execution of

\noindent\begin{tabular}{lp{1eM}p{1eM}l}
	\mathline[{&}]{3}{\activate{s_1};\; \activate{p_1};\; \activate{p'_1};\; }\\
	\mathline[{&}]{3}{\activate{c_1};\; \activate{c'_1};\; \activate{a_1};}\\	
	\mathline[{&}]{3}{\cdots}\\	
	\mathline[{&}]{3}{\activate{s_{n-1}};\; \activate{p_{n-1}};\; \activate{p'_{n-1}};\; }\\
	\mathline[{&}]{3}{\activate{c_{n-1}};\; \activate{c'_{n-1}};\; \activate{a_{n-1}};}\\	
	\mathline[{&}]{3}{\activate{s_{n}};\; \activate{p_{n}};\; \activate{p'_{n}};\; \activate{c_{n}};\;}\\
	\mathline[{&}]{3}{\activate{c'_{n}};\; \activate{x};\; \activate{y};\;}\\	
\end{tabular}
After executing these operations clearly we reach a configuration $s$ in which all counters are active and which represents a valuation $v$ such that $v(d) = 0$ for all counters $d$ and $v(a_i(k)) = 0$ for all $0 \leq k \leq \tetration[2]{i}$.

\todo{
	\begin{itemize}
		\item explain reduction relation ?
	\end{itemize}
}

\end{proof}          
\fi
\end{document}